\providecommand{\tabularnewline}{\\}
\providecommand{\algorithmname}{Algorithm}
\setlist[itemize]{leftmargin=1.5em}
\setlist[enumerate]{leftmargin=1.5em}
\DeclareMathOperator{\Var}{{\rm Var}}
\numberwithin{equation}{section}
\definecolor{yly}{RGB}{0,150,0}
\definecolor{yhz}{RGB}{0,0,90}
\begin{document}
\theoremstyle{plain} \newtheorem{lemma}{\textbf{Lemma}} \newtheorem{prop}{\textbf{Proposition}}\newtheorem{theorem}{\textbf{Theorem}}\setcounter{theorem}{0}
\newtheorem{corollary}{\textbf{Corollary}} \newtheorem{assumption}{\textbf{Assumption}}
\newtheorem{example}{\textbf{Example}} \newtheorem{definition}{\textbf{Definition}}
\newtheorem{fact}{\textbf{Fact}} \newtheorem{condition}{\textbf{Condition}}\theoremstyle{definition}\theoremstyle{remark}\newtheorem{remark}{\textbf{Remark}}\newtheorem{claim}{\textbf{Claim}}\newtheorem{conjecture}{\textbf{Conjecture}} 
\title{\textbf{When can weak latent factors be statistically inferred?}}
\author{Jianqing Fan\thanks{Department of Operations Research and Financial Engineering, Princeton
University, Princeton, NJ 08544, USA; Email: \texttt{\{jqfan,yuheng\}@princeton.edu}.} \and Yuling Yan\thanks{Institute for Data, Systems, and Society, Massachusetts Institute
of Technology, Cambridge, MA 02142, USA; Email: \texttt{yulingy@mit.edu}.} \and Yuheng Zheng\footnotemark[1]}

\maketitle
\begin{abstract}
This article establishes a new and comprehensive estimation and inference theory for principal component analysis (PCA) under the weak factor model that allow for cross-sectional dependent idiosyncratic components under the nearly minimal factor strength relative to the noise level or signal-to-noise ratio. Our theory is applicable regardless of the relative growth rate between the cross-sectional dimension $N$ and temporal dimension $T$. This more realistic assumption and noticeable result require completely new technical device, as the commonly-used leave-one-out trick is no longer applicable to the case with cross-sectional dependence. Another notable advancement of our theory is on PCA inference --- for example, under the regime where $N\asymp T$, we show that the asymptotic normality for the PCA-based estimator holds as long as the signal-to-noise ratio (SNR) grows faster than a polynomial rate of $\log N$. This finding significantly surpasses prior work that required a polynomial rate of $N$. Our theory is entirely non-asymptotic, offering finite-sample characterizations for both the estimation error and the uncertainty level of statistical inference. A notable technical innovation is our closed-form first-order approximation of PCA-based estimator, which paves the way for various statistical tests. Furthermore, we apply our theories to design easy-to-implement statistics for validating whether given factors fall in the linear spans of unknown latent factors, testing structural breaks in the factor loadings for an individual unit, checking whether two units have the same risk exposures, and constructing confidence intervals for systematic risks. Our empirical studies uncover insightful correlations between our test results and economic cycles.
\end{abstract}

\noindent \textbf{Keywords: }factor model, principal component analysis,
weak factors, cross-sectional correlation, inference, signal-to-noise
ratio.

\setcounter{tocdepth}{2}

\tableofcontents{}

\section{Introduction }

The factor model, a pivotal tool for analyzing large panel data, has
become a significant topic in finance and economics research \citep[e.g.,][]{cham1983factor,FamaFrench1993factors,StockWatson2002PCA,BaiNg2002ECTA,GigXiu2021JPE,fan2021recent}.
The estimation and inference for factor models are crucial in economic
studies, particularly in areas like asset pricing and return forecasting.
In the era of big data, the factor model has gained increased prominence
in capturing the latent common structure for large panel data, where
both the cross-sectional and temporal dimensions are ultra-high \citep[see, e.g., recent surveys][]{BaiPeng2016ARFEfactor,FanLiLiao2021ARFEfactor}.
Principal component analysis (PCA), known for its simplicity and effectiveness,
is closely connected with the factor model and has long been a key
research topic of interest in the econometric community \citep[e.g.,][]{StockWatson2002PCA,BaiNg2002ECTA,Bai2003ECTA,Onatski2012PCAweak,POET2013,BaiNg2013identifiPCA,BaiNg2023PCA}.

As pointed out by \citet{GiglioXiu2023weakFactorPredict}, most theoretical
guarantees for the PCA approach to factor analysis rely on the pervasiveness
assumption \citep[e.g.,][]{BaiNg2002ECTA,Bai2003ECTA}. This assumption
requires the signal-to-noise ratio (SNR), which measures the factor
strength relative to the noise level, to grow with the rate of $\sqrt{N}$
-- the square root of the cross-sectional dimension. However, many
real datasets in economics do not exhibit sufficiently strong factors
to meet this pervasiveness assumption. When the SNR grows slower than
$\sqrt{N}$, the resulting model is often called the weak factor model
\citep[e.g.,][]{Onatski2009number,Onatski2010number}. Extensive research
has been dedicated to the weak factor model \citep[e.g.,][]{Onatski2012PCAweak,BKP2021weakFactor,frey2022weakFactorNumber,YoTa2022sWF_estimate,YoTa2022sWF_infer,BaiNg2023PCA,Jiang2023PCA,ChoiMing2024PCA},
among which the PC estimators (the estimators via the PCA approach)
have been a primary subject. Recently, \citet{BaiNg2023PCA,Jiang2023PCA,ChoiMing2024PCA}
studied the consistency and asymptotic normality of the PC estimators
for factors and factor loadings in the weak factor model. In the extreme
case (also called the super-weak factor model) where the SNR is $O(1)$,
\citet{Onatski2012PCAweak} showed that the PC estimators are inconsistent.

This paper establishes a novel and comprehensive theory for PCA in
the weak factor model. Our theory is non-asymptotic and can be easily
translated to asymptotic results. The conditions we propose for the
asymptotic normality of PC estimators are optimal in the sense that,
surprisingly different from the existing literature, the required
growth rate of the SNR for consistency aligns with that for asymptotic
normality, differing only by a logarithmic factor. In particular,
in the regime $N\asymp T$, where $T$ is the temporal dimension,
we prove that asymptotic normality holds as long as the SNR grows
faster than a polynomial rate of $\log N$, and this result is a substantial
advance compared with the existing results that require the SNR to
grow with a polynomial rate of $N$ \citep[e.g.,][]{BaiNg2023PCA,Jiang2023PCA,ChoiMing2024PCA}.

The most innovative part of our theory lies in establishing a closed-form,
first-order approximation of the PC estimator, that is, we decompose
the PC estimator into three components --- the ground truth, a first-order
term, and the higher-order negligible term. We express the first-order
term explicitly using the parameters in the factor model. This closed-form
characterization paves the way for us to establish the asymptotic
normality and design various test statistics for practical applications.
Our theory is based on novel applications of the leave-one-out analysis
and matrix concentration inequalities. Moreover, our findings provide
valuable insights and practical implications for a range of econometric
problems related with PCA, e.g., macroeconomic forecasting based on
factor-augmented regressions \citep[e.g.,][]{BaiNg2006factorAugmented}.

We demonstrate the practical applications of our theories using both
synthetic and real datasets. First, we design an easy-to-implement
statistic for factor specification test \citep[e.g.,][]{BaiNg2006factor}
--- whether an observed factor is in the linear space spanned by
the latent common factors or not. A key innovation of our approach
is the ability to conduct this test in any flexible subset of the
whole period, owing to the row-wise error bound in our theory. We
utilize the monthly return data of the S\&P 500 constituents from
1995 to 2024, along with the data of the Fama-French three factors,
then run the factor test in a rolling window manner \citep[e.g.,][]{FanLiaoYao2015power}.
Our results uncover a notable decline in the importance and explanatory
power of the size factor during the 2008 financial crisis, and of
both the size and value factors during the COVID-19 pandemic around
2019. These findings are supported by our test results, which reject
the null hypothesis that these factors are in the linear space of
latent factors.

Then, we design a test statistic for the structural break of betas
\citep[e.g.,][]{StockWatson2009beta_break,breitung2011beta_break},
and apply it to the aforementioned S\&P 500 constituents data. Our
novelty is that our test statistic works in the weak factor model
without the pervasiveness assumption, which is required in prior work.
We test for each stock to determine if the beta has changed before
and after the three recessions covered by our data --- the Early
2000s Recession, the 2008 Great Recession, and the COVID-19 Recession.
We find that in each recession, the sectors most affected by the shocks,
where many stocks exhibited structural breaks in betas, correspond
reasonably to the causes of these economic recessions. For example,
during the 2008 Great Recession, marked by the subprime mortgage crisis,
the financial sector experienced a strong impact, which aligns with
its exposure to mortgage-backed securities and other related financial
instruments. During the COVID-19 Recession, the Health Care and Real
Estate sectors were significantly impacted, reflecting the uncertainties
brought about by lockdowns and health crises due to the pandemic.
Additionally, we develop statistical tests for the betas to evaluate
the similarity in risk exposure between two stocks and construct valid
confidence interval for the systematic risk of each stock.

The rest of the paper is organized as follows. Section \ref{sec:model_assump_notation}
introduces the model setup, basic assumptions, and notation. In Section
\ref{sec:main_1st_approx}, we show our main results on the first-order
approximations for the PC estimators, and present the asymptotic normality
results as corollaries. Section \ref{sec:compare_SNR} provides a
detailed comparison of our results with related work. In Section \ref{sec:3_econ_applications},
we showcase four applications in econometrics based on our main results.
Section \ref{sec:Numerical-experiments} collects the numerical results
in both simulated data and real data. More related works are discussed
in Section \ref{sec:related work}, and the paper concludes with a
discussion on future directions in Section \ref{sec:Discussion}. 

\section{Model, assumptions, and notation\label{sec:model_assump_notation}}

\subsection{Model setup}

Let $N$ be the number of cross-sectional units and $T$ be the number
of observations. Consider the factor model for a panel of data $\{x_{i,t}\}:$
\begin{equation}
x_{i,t}=\bm{b}_{i}^{\top}\bm{f}_{t}+\varepsilon_{i,t},\qquad1\leq i\leq N,1\leq t\leq T,\label{factor model entrywise}
\end{equation}
where $\bm{f}_{t}=(f_{1,t},f_{2,t},\ldots,f_{r,t})^{\top}$ is the
latent factor, $r$ is the number of factors, $\bm{b}_{i}=(b_{i,1},b_{i,2},\ldots,b_{i,r})^{\top}$
is a vector of factor loadings, and $\varepsilon_{i,t}$ represents
the idiosyncratic noise. Viewing $x_{i,t}$ as the excess return of
the $i$-th asset at time $t$, the model (\ref{factor model entrywise})
is intimately linked with the multi-factor pricing model. This model
originates from the Arbitrage Pricing Theory (APT) developed by \citet{Ross1976APT}
and finds extensive applications in finance.

To compact the notation, we denote by $\bm{B}=(\bm{b}_{1},\bm{b}_{2},\ldots,\bm{b}_{N})^{\top}$
the $N\times r$ factor loading matrix. Let $\bm{x}_{t}=(x_{1,t},x_{2,t},\ldots,x_{N,t})^{\top}$
and $\bm{e}_{t}=(\varepsilon_{1,t},\varepsilon_{2,t},\ldots,\varepsilon_{N,t})^{\top}$.
Then, the factor model (\ref{factor model entrywise}) can be expressed
as $\bm{x}_{t}=\bm{Bf}_{t}+\bm{e}_{t}$, or written in a matrix form
as follows 
\begin{equation}
\bm{X}=\bm{BF}^{\top}+\bm{E},\label{factor model matrix form}
\end{equation}
where $\bm{X}=(\bm{x}_{1},\bm{x}_{2},\ldots,\bm{x}_{T})$, $\bm{F}=(\bm{f}_{1},\bm{f}_{2},\ldots,\bm{f}_{T})^{\top}$,
and $\bm{E}=(\bm{e}_{1},\bm{e}_{2},\ldots,\bm{e}_{T})$ are the $N\times T$
panel data, the $T\times r$ factor realizations, and the $N\times T$
idiosyncratic noise matrix, respectively.

In our setup, the only observable part is the panel data $\bm{X}$.
We are interested in the estimation and inference for both the latent
factors and factor loadings via the PCA approach.

\subsection{Basic assumptions}

We present some basic assumptions as follows. Note that the factor
model has the rotation (indeed affine transform) ambiguity \citep[e.g.,][]{BaiNg2013identifiPCA},
that is, the factor loadings and latent factors are not identifiable
since $\bm{BF}^{\top}=(\bm{B}\bm{H}^{-1})(\bm{F\bm{H}^{\top}})^{\top}$
holds for any invertible matrix $\bm{H}$. 
Without loss of generality, we assume that the columns of $\bm{B}$
are orthogonal and the covariance of $\bm{f}_{t}$ is the identity
matrix, as stated in Assumption \ref{Assump_Bf_identification} below.

\begin{assumption} \label{Assump_Bf_identification}For $t=1,2,\ldots,T$,
the factor $\bm{f}_{t}$ has mean zero and the identity covariance
matrix. The factor loading matrix $\bm{B}$ has orthogonal columns:
\begin{equation}
\bm{B}^{\top}\bm{B}=\bm{\Sigma}^{2}\text{\qquad with\qquad}\bm{\Sigma}=\text{\ensuremath{\mathsf{diag}}}(\sigma_{1},\sigma_{2},\bm{\ldots},\sigma_{r}),\label{factor loading diagonal}
\end{equation}
where $\sigma_{1}\geq\sigma_{2}\geq\cdots\geq\sigma_{r}>0$. \end{assumption}

Assumption \ref{Assump_Bf_identification} is a standard identifiability
condition for the factor model \citep[e.g.,][]{POET2013}. The singular
values $\{\sigma_{i}\}$ of the factor loading matrix $\bm{B}$ characterize
the strengths of the latent factors.

Next, to accommodate the cross-sectional correlation in the noise
and to facilitate leave-one-out analysis in our technical proof, we
propose Assumption \ref{Assump_noise_Z_entries} for the noise matrix
$\bm{E}$, specifying its structure and the distribution of its entries.

\begin{assumption} \label{Assump_noise_Z_entries}The idiosyncratic
noise matrix $\bm{E}$ is given by 
\begin{equation}
\bm{E}=\bm{\Sigma}_{\varepsilon}^{1/2}\bm{Z}.\label{noise matrix formula}
\end{equation}
Here, $\bm{\Sigma}_{\varepsilon}$ is a $N\times N$ positive definite
matrix and $\bm{\Sigma}_{\varepsilon}^{1/2}$ is the symmetric square
root of $\bm{\Sigma}_{\varepsilon}$; $\bm{Z}=(Z_{i,t})_{i=1}^{N}{}_{t=1}^{T}$
is a $N\times T$ matrix; The entries of $\bm{Z}$ are independent
sub-Gaussian random variables that satisfy 
\[
\mathbb{E}[Z_{i,t}]=0,\text{ }\mathbb{E}[Z_{i,t}^{2}]=1,\text{ }\left\Vert Z_{i,t}\right\Vert _{\psi_{2}}=O(1),
\]
for $1\leq i\leq N,1\leq t\leq T$, where $\left\Vert \cdot\right\Vert _{\psi_{2}}$
is the sub-Gaussian norm \citep[see Definition 2.5.6 in][]{vershynin2016high}.\end{assumption}

The nonzero off-diagonal entries of $\bm{\Sigma}_{\varepsilon}$ characterize
the cross-sectional correlations in the idiosyncratic noise $\bm{e}_{t}=(\varepsilon_{1,t},\varepsilon_{2,t},\ldots,\varepsilon_{N,t})^{\top}$.
Though the noise terms $\bm{e}_{1},\bm{e}_{2},\ldots,\bm{e}_{T}$
are independent under Assumption \ref{Assump_noise_Z_entries}, our
theory could potentially be generalized to the cases where the temporal
correlations are present in the noise matrix $\bm{E}$, and this generalization
is an interesting future direction. The formulation (\ref{noise matrix formula})
imposes additional structural constraints on the noise matrix. However,
such assumptions are commonplace in the study of the weak factor model;
see similar assumptions in \citet{Onatski2010number,Onatski2012PCAweak}.

Then, to establish the non-asymptotic results via the concentration
inequalities, we propose the following assumption on the distribution
of the factors.

\begin{assumption}\label{Assump_factor_f}The factor
$\bm{F}=(\bm{f}_{1},\bm{f}_{2},\ldots,\bm{f}_{T})^{\top}$ is independent
with the noise matrix $\bm{E}$; The factors $\bm{f}_{1},\bm{f}_{2},\ldots,\bm{f}_{T}$
are independent and sub-Gaussian random vectors in $\mathbb{R}^{r}$
satisfying that 
\[
\left\Vert \bm{f}_{t}\right\Vert _{\psi_{2}}=O(1)\text{\qquad for\qquad}1\leq t\leq T.
\]
\end{assumption}

The conditions in Assumption~\ref{Assump_factor_f} are slightly
stronger than those in the literature, e.g., \citet{BaiNg2023PCA},
but remain standard and simplify our technical proofs.

\subsection{Notation\label{subsec:Paper-organization-notation}}

We introduce some notation that will be used throughout the paper.
For two sequences $a_{n}$ and $b_{n}$, we write $a_{n}\lesssim b_{n}$
(or equivalently, $b_{n}\gtrsim a_{n}$) if $a_{n}=O(b_{n})$, i.e.,
there exist a constant $C>0$ and an integer $N>0$ such that, $a_{n}\leq Cb_{n}$
holds for any $n>N$; write $a_{n}\asymp b_{n}$ if both $a_{n}\lesssim b_{n}$
and $a_{n}\gtrsim b_{n}$ hold and $a_{n}\ll b_{n}$ (or equivalently,
$b_{n}\gg a_{n}$) if $a_{n}=o(b_{n})$.

For a symmetric matrix $\bm{M}$, we denote by $\lambda_{\min}(\bm{M})$
and $\lambda_{\max}(\bm{M})$ its minimum and maximum eigenvalues.
For a vector $\bm{v}$, we denote $\Vert\bm{v}\Vert_{2}$, $\Vert\bm{v}\Vert_{1}$,
and $\Vert\bm{v}\Vert_{\infty}$ as the $\ell_{2}$-norm, $\ell_{1}$-norm,
and supremum norm, respectively. Consider any matrix $\bm{A}\in\mathbb{R}^{m\times n}$.
We denote by $\bm{A}_{i,\cdot}$ and $\bm{A}_{\cdot,j}$ the $i$-th
row and the $j$-th column of $\bm{A}$. We let $\Vert\bm{A}\Vert_{2},$
$\Vert\bm{A}\Vert_{\mathrm{F}},$ and $\Vert\bm{A}\Vert_{2,\infty}$
denote the spectral norm, the Frobenius norm, and the $\ell_{2,\infty}$-norm
(i.e., $\Vert\bm{A}\Vert_{2,\infty}:=\sup_{1\leq i\leq m}\Vert\bm{A}_{i,\cdot}\Vert_{2}$),
respectively. For an index set $S\subseteq\{1,2,\ldots,m\}$ (resp.
$S\subseteq\{1,2,\ldots,n\}$), we use $|S|$ to denote its cardinality,
and use $\bm{A}_{S,\cdot}$ (resp. $\bm{A}_{\cdot,S}$) to denote
a submatrix of $\bm{A}$ whose rows (resp. columns) are indexed by
$S$. We let $col(\bm{A})$, $\bm{A}^{+}$, and $\bm{P}_{\bm{A}}:=\bm{A}\bm{A}^{+}$
denote the column subspace of $\bm{A}$, the generalized inverse of
$\bm{A}$, 
and the projection matrix onto the column space of $\bm{A}$.

We denote by $\mathsf{diag}(a_{1},a_{2},\ldots,a_{r})$ the $r\times r$
diagonal matrix whose diagonal entries are given by $a_{1},a_{2},\ldots,a_{r}$.
Let $\bm{I}_{r}$ be the $r\times r$ identity matrix. We denote by
$\mathcal{O}^{r\times r}$ the set of all $r\times r$ orthonormal
(or rotation) matrices. For a non-singular $n\times n$ matrix $\bm{H}$
with SVD $\bm{U}_{H}\bm{\Sigma}_{H}\bm{V}_{H}^{\top}$, we denote
by $\mathsf{sgn}(\bm{H})$ the following orthogonal matrix $\mathsf{sgn}(\bm{H})\coloneqq\bm{U}_{H}\bm{V}_{H}^{\top}$.
Then we have that, for any two matrices $\widehat{\bm{U}},\bm{U}\in\mathbb{R}^{n\times r}$
with $r\leq n$, among all rotation matrices, the one that best aligns
$\widehat{\bm{U}}$ and $\bm{U}$ is precisely $\mathsf{sgn}(\widehat{\bm{U}}^{\top}\bm{U})$
\citep[see, e.g., Appendix D.2.1 in][]{ma2017implicit}, namely, 
\[
\mathsf{sgn}(\widehat{\bm{U}}^{\top}\bm{U})=\arg\min_{\bm{O}\in\mathcal{O}^{r\times r}}\Vert\widehat{\bm{U}}\bm{O}-\bm{U}\Vert_{\mathrm{F}}.
\]

\section{Main results\label{sec:main_1st_approx}}

In this section, we demonstrate the desirable statistical performance
of PCA in weak factor models. We will first present a master result
(cf.~Theorem~\ref{Thm UV 1st approx row-wise error}) on subspace
error decomposition. Based on this key result, we derive non-asymptotic
distributional characterization for our estimators for the factors
and factor loadings, paving the way to data-driven statistical inference.

First, we formally introduce the PC estimators of the factor and factor
loadings in Algorithm \ref{alg:PCA-SVD}.

\begin{algorithm}[h]
\caption{The PCA-based method.}

\label{alg:PCA-SVD}\begin{algorithmic}

\STATE \textbf{{Input}}: panel data $\bm{X}$, rank $r$.

\STATE \textbf{{Compute} }the truncated rank-$r$ SVD $\widehat{\bm{U}}\widehat{\bm{\Sigma}}\widehat{\bm{V}}^{\top}$
of $T^{-1/2}\bm{X}$, where $\widehat{\bm{U}}\in\mathbb{R}^{N\times r}$
and $\widehat{\bm{V}}\in\mathbb{R}^{T\times r}$ have orthonormal
columns, and $\widehat{\bm{\Sigma}}=\mathsf{diag}(\widehat{\sigma}_{1},\widehat{\sigma}_{2},\ldots,\widehat{\sigma}_{r})\in\mathbb{R}^{r\times r}$
satisfies that $\widehat{\sigma}_{1}\geq\widehat{\sigma}_{2}\geq\cdots\geq\widehat{\sigma}_{r}$.

\STATE \textbf{{Output}} $\widehat{\bm{F}}:=T^{1/2}\widehat{\bm{V}}$
as the estimator of factors $\bm{F}$, and $\widehat{\bm{B}}:=\widehat{\bm{U}}\widehat{\bm{\bm{\Sigma}}}$
as the estimator of factor loadings $\bm{B}$.

\end{algorithmic} 
\end{algorithm}

Our master result focuses on the subspace estimates $\widehat{\bm{U}}$
and $\widehat{\bm{V}}$. We define some relevant quantities first.
Denote by $\bm{U}\bm{\Lambda}\bm{V}^{\top}$ the SVD of $T^{-1/2}\bm{BF}^{\top}$,
where both $\bm{U}\in\mathbb{R}^{N\times r}$ and $\bm{V}\in\mathbb{R}^{T\times r}$
have orthonormal columns, and $\bm{\Lambda}=\mathsf{diag}(\lambda_{1},\lambda_{2},\ldots,\lambda_{r})\in\mathbb{R}^{r\times r}$
satisfies that $\lambda_{1}\geq\lambda_{2}\geq\cdots\geq\lambda_{r}\geq0$.
We define $n:=\max(N,T)$.

\subsection{A first-order characterization of subspace perturbation errors}

Before presenting our master results, we list two key assumptions
on the SNR. To characterize the SNR, we define 
\[
\theta:=\sigma_{r}/\sqrt{\Vert\bm{\Sigma}_{\varepsilon}\Vert_{2}},\qquad\vartheta_{k}:=\sigma_{r}/\Vert(\bm{\Sigma}_{\varepsilon}^{1/2})_{k,\cdot}\Vert_{1},\text{\qquad and\qquad}\vartheta:=\sigma_{r}/\Vert\bm{\Sigma}_{\varepsilon}^{1/2}\Vert_{1}.
\]
Here, the signal strength, which is the factor strength in our case,
is effectively represented by $\sigma_{r}$, the smallest singular
value of the factor loading matrix $\bm{B}$. The noise level is captured
by the norm of the covariance matrix $\bm{\Sigma}_{\varepsilon}$
for the idiosyncratic noise $\bm{E}=\bm{\Sigma}_{\varepsilon}^{1/2}\bm{Z}$
(cf.~(\ref{noise matrix formula})). The SNRs $\theta$, $\vartheta_{k}$,
and $\vartheta$ are similar, though their denominators adopt different
norms of $\bm{\Sigma}_{\varepsilon}$. The inclusion of $\ell_{1}$-norms
$\Vert(\bm{\Sigma}_{\varepsilon}^{1/2})_{k,\cdot}\Vert_{1}$ and $\Vert\bm{\Sigma}_{\varepsilon}^{1/2}\Vert_{1}=\max_{1\leq k\leq N}\Vert(\bm{\Sigma}_{\varepsilon}^{1/2})_{k,\cdot}\Vert_{1}$,
though less common than the spectral norm $\sqrt{\Vert\bm{\Sigma}_{\varepsilon}\Vert_{2}}=\Vert\bm{\Sigma}_{\varepsilon}^{1/2}\Vert_{2}$,
is that our main results are on the row-wise error bound and we need
these norms in the matrix inequalities for technical reasons. Then,
two crucial assumptions on SNR are as follows.

\begin{assumption} \label{Assump_SNR 1 norm}There exists a sufficiently
large constant $C_{1}>0$ such that
\begin{equation}
\vartheta\equiv\frac{\sigma_{r}}{\Vert\bm{\Sigma}_{\varepsilon}^{1/2}\Vert_{1}}\geq C_{1}\sqrt{\frac{n}{T}\log n}.\label{SNR 1-norm logn}
\end{equation}
\end{assumption}

\begin{assumption} \label{Assump_SNR 2 norm}There exists a sufficiently
large constant $C_{2}>0$ such that 
\begin{equation}
\theta\equiv\frac{\sigma_{r}}{\Vert\bm{\Sigma}_{\varepsilon}^{1/2}\Vert_{2}}\geq C_{2}\sqrt{\frac{n}{T}\log n}.\label{SNR 2-norm logn}
\end{equation}
\end{assumption}

Assumption \ref{Assump_SNR 1 norm} implies Assumption \ref{Assump_SNR 2 norm}
due to the elementary inequality $\Vert\bm{\Sigma}_{\varepsilon}^{1/2}\Vert_{2}\leq\Vert\bm{\Sigma}_{\varepsilon}^{1/2}\Vert_{1}$.
On the other hand, $\Vert\bm{\Sigma}_{\varepsilon}^{1/2}\Vert_{1}\leq s\Vert\bm{\Sigma}_{\varepsilon}^{1/2}\Vert_{2}$,
where $s$ is the maximum number of nonzero elements of $\bm{\Sigma}_{\varepsilon}^{1/2}$
in each row, which is small when $\bm{\Sigma}_{\varepsilon}$ is sparse.
In particular, when the noise covariance matrix $\bm{\Sigma}_{\varepsilon}$
is diagonal (i.e.,~no cross-sectional correlations, also known as
the strict factor model \citep{Ross1976APT,fanfanlv2008covFactor,baishi2011Cov}),
we have that $\Vert\bm{\Sigma}_{\varepsilon}^{1/2}\Vert_{2}=\Vert\bm{\Sigma}_{\varepsilon}^{1/2}\Vert_{1}$.
In this case, Assumption \ref{Assump_SNR 1 norm} is equivalent with
Assumption \ref{Assump_SNR 2 norm}. Throughout the paper, we denote
by $C_{0}$, $C_{1}$, $C_{2}$, $C_{U}$, $C_{V}$, $c_{0}$, etc.~the
generic constants that may vary from place to place. We are now ready
to present our main results as follows.

\begin{theorem}\label{Thm UV 1st approx row-wise error} Assume that
$T\geq C_{0}(r+\log n)$ and $n\geq C_{0}r\log n$ for some sufficiently
large constant $C_{0}>0$. Consider the first-order expansions 
\begin{subequations}
\begin{align}
\widehat{\bm{U}}\bm{R}_{U}-\bm{U} & =\bm{G}_{U}+\bm{\Psi}_{U}\text{\qquad with\qquad}\bm{G}_{U}:=T^{-1/2}\bm{E}\bm{V}\bm{\Lambda}^{-1},\label{U 1st-order approx hat til}\\
\widehat{\bm{V}}\bm{R}_{V}-\bm{V} & =\bm{G}_{V}+\bm{\Psi}_{V}\text{\qquad with\qquad}\bm{G}_{V}:=T^{-1/2}\bm{E}^{\top}\bm{U}\bm{\Lambda}^{-1},\label{V 1st-order approx hat til}
\end{align}
\end{subequations}
 where $\bm{R}_{U}:=\mathsf{sgn}(\widehat{\bm{U}}^{\top}\bm{U})$
and $\bm{R}_{V}:=\mathsf{sgn}(\widehat{\bm{V}}^{\top}\bm{V})$ are
two global rotation matrices. Under Assumptions \ref{Assump_Bf_identification},
\ref{Assump_noise_Z_entries}, and \ref{Assump_factor_f}, we have,
with probability at least $1-O(n^{-2})$, the remainder terms $\bm{\Psi}_{U}$
and $\bm{\Psi}_{V}$ are higher-order negligible terms that satisfy
the following bounds:

(i) Under Assumption \ref{Assump_SNR 1 norm}, there exists some universal
constant $C_{U}>0$ such that, uniformly for all $k=1,2,\ldots,N$,
\[
\Vert(\bm{\Psi}_{U})_{k,\cdot}\Vert_{2}\leq C_{U}\frac{\sqrt{n}}{\vartheta_{k}\vartheta T}\sqrt{r}\log^{3/2}n+C_{U}(\frac{n}{\theta^{2}T}+\frac{1}{\theta\sqrt{T}}\sqrt{r}\log n)\Vert\bm{U}_{k,\cdot}\Vert_{2}+C_{U}\frac{n}{\vartheta_{k}\vartheta T}\log n\Vert\bm{U}\Vert_{2,\infty}.
\]

(ii) Under Assumption \ref{Assump_SNR 2 norm}, there exists some
universal constant $C_{V}>0$ such that, uniformly for all $l=1,2,\ldots,T$,
\[
\Vert(\bm{\Psi}_{V})_{l,\cdot}\Vert_{2}\leq C_{V}\frac{\sqrt{n}}{\theta^{2}T}\sqrt{r}\log^{3/2}n+C_{V}(\frac{n}{\theta^{2}T}+\frac{1}{\theta\sqrt{T}})\sqrt{r}\log n\Vert\bm{V}_{l,\cdot}\Vert_{2}.
\]

\end{theorem}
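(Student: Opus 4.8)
The plan is to obtain the first-order expansions and the bounds on the remainders via a combination of the classical Davis--Kahan / Wedin sin-theta analysis (to get a crude spectral-norm control and eigenvalue separation) with a bootstrapped leave-one-out argument to upgrade this to a row-wise bound. Write $\bm{M}:=T^{-1/2}\bm{BF}^{\top}$ and $\bm{N}:=T^{-1/2}\bm{X}=\bm{M}+T^{-1/2}\bm{E}$, so that $\widehat{\bm{U}},\widehat{\bm{V}},\widehat{\bm{\Sigma}}$ are the top-$r$ singular subspaces/values of $\bm{N}$ and $\bm{U},\bm{\Lambda},\bm{V}$ those of $\bm{M}$. First I would record the preliminary events: with probability $1-O(n^{-2})$ one has $\Vert T^{-1/2}\bm{E}\Vert_2 \lesssim \Vert\bm\Sigma_\varepsilon^{1/2}\Vert_2\sqrt{n/T}$ (matrix Bernstein / sub-Gaussian operator-norm bounds, using Assumption \ref{Assump_noise_Z_entries}), and, via Assumption \ref{Assump_factor_f}, $\lambda_r \asymp \sigma_r$ and $\lambda_1\asymp\sigma_1$ together with $\Vert\bm{F}/\sqrt{T}\Vert_2\asymp 1$, $\Vert\bm{U}\Vert_{2,\infty}\lesssim$ (incoherence coming from $\bm{U}=\bm{B}\bm\Sigma^{-1}\cdot(\text{rotation})$ hence $\Vert\bm{U}_{k,\cdot}\Vert_2 = \Vert\bm\Sigma^{-1}\bm{b}_k\Vert_2$, and $\Vert\bm{V}\Vert_{2,\infty}\lesssim\sqrt{(r+\log n)/T}$). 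Under Assumptions \ref{Assump_SNR 1 norm}--\ref{Assump_SNR 2 norm} the signal $\lambda_r$ dominates $\Vert T^{-1/2}\bm{E}\Vert_2$ by a factor $\gtrsim\sqrt{\log n}$, so Wedin's theorem gives $\Vert\widehat{\bm{U}}\bm{R}_U-\bm{U}\Vert_2 + \Vert\widehat{\bm{V}}\bm{R}_V-\bm{V}\Vert_2 \lesssim 1/(\theta)\cdot\sqrt{n/T}$, which is the launching point.

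Next I would derive the exact identity behind \eqref{U 1st-order approx hat til}: starting from $\bm{N}\widehat{\bm{V}} = \widehat{\bm{U}}\widehat{\bm\Sigma}$ and $\bm{M}\bm{V}=\bm{U}\bm\Lambda$, a standard manipulation yields
\[
\widehat{\bm{U}}\bm{R}_U-\bm{U} = T^{-1/2}\bm{E}\bm{V}\bm{\Lambda}^{-1} + \underbrace{\big(\text{terms quadratic in the perturbation}\big)}_{=:\bm{\Psi}_U},
\]
where $\bm{\Psi}_U$ is a sum of pieces such as $T^{-1/2}\bm{E}(\widehat{\bm{V}}\bm{R}_V-\bm{V})\bm\Lambda^{-1}$, $\bm{M}(\widehat{\bm{V}}\bm{R}_V-\bm{V})\widehat{\bm\Sigma}^{-1}$-type terms, $\bm{U}(\bm{R}_U^{-1}\widehat{\bm\Sigma}\bm{R}_U-\bm\Lambda)\widehat{\bm\Sigma}^{-1}$, and the rotation mismatch $\widehat{\bm{U}}(\bm{R}_U\widehat{\bm\Sigma}^{-1}\bm{R}_U^{-1}-\widehat{\bm\Sigma}^{-1})$-type corrections; the analogous identity holds for $\widehat{\bm{V}}$. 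The entire content of the theorem is the row-wise estimate of these pieces, and for that the spectral-norm bound above is not enough — I would introduce, for each $k$, the leave-one-out matrix $\bm{X}^{(k)}$ obtained by zeroing out the $k$-th row of $\bm{E}$ (keeping $\bm{B}\bm{F}^\top$), with top singular subspaces $\widehat{\bm{U}}^{(k)},\widehat{\bm{V}}^{(k)}$. Because $\bm{X}^{(k)}$ is independent of the $k$-th row of $\bm{Z}$, the quantity $(T^{-1/2}\bm{E})_{k,\cdot}\widehat{\bm{V}}^{(k)}$ concentrates (conditionally sub-Gaussian with variance proxy governed by $\Vert(\bm\Sigma_\varepsilon^{1/2})_{k,\cdot}\Vert_1^2\cdot\|\widehat{\bm{V}}^{(k)}\|_{2,\infty}^2$ and $\Vert(\bm\Sigma_\varepsilon^{1/2})_{k,\cdot}\Vert_2^2$), which is exactly where the SNRs $\vartheta_k,\vartheta,\theta$ and the $\ell_1$-norms of $\bm\Sigma_\varepsilon^{1/2}$ enter. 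The proof then alternates: (a) a perturbation bound $\Vert\widehat{\bm{V}}-\widehat{\bm{V}}^{(k)}(\text{rotation})\Vert_2$ controlled by the $k$-th row contribution, (b) this feeds a bound on $\|(\bm{\Psi}_U)_{k,\cdot}\|_2$, (c) improving $\|\widehat{\bm{U}}-\bm{U}\|_{2,\infty}$ and $\|\widehat{\bm{V}}-\bm{V}\|_{2,\infty}$, which closes the loop. I would run this as a bootstrapping induction (as in Abbe--Fan--Wang--Zhong / Chen--Chi--Fan--Ma-style arguments), tracking the three error sources — the "stochastic" term $\tfrac{\sqrt n}{\vartheta_k\vartheta T}\sqrt r\log^{3/2}n$, the "self" term multiplying $\|\bm{U}_{k,\cdot}\|_2$, and the "leakage" term multiplying $\|\bm{U}\|_{2,\infty}$ — and verifying each pickup of a factor $\sqrt{n/T}/\theta$ or $\sqrt{n/T}/\vartheta \lesssim 1/(C_1\sqrt{\log n})$ contracts the relevant quantity.

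The main obstacle is precisely that the usual leave-one-out device is being applied in a setting with \emph{cross-sectional} dependence: zeroing out one row of $\bm{E}$ does not make the remaining matrix independent of that row because $\bm{E}=\bm\Sigma_\varepsilon^{1/2}\bm{Z}$ mixes rows of $\bm{Z}$. The fix — and the reason the $\ell_1$-norms $\Vert(\bm\Sigma_\varepsilon^{1/2})_{k,\cdot}\Vert_1$ appear rather than $\ell_2$-norms — is to leave out a row of the \emph{underlying independent matrix} $\bm{Z}$ (equivalently, to condition on $\bm{Z}_{-k,\cdot}$ and treat $\bm{Z}_{k,\cdot}$ as fresh randomness), and then to pay a price, when transferring from $\bm{Z}$-rows back to $\bm{E}=\bm\Sigma_\varepsilon^{1/2}\bm{Z}$, that is measured in the $\ell_1$-norm of the corresponding row of $\bm\Sigma_\varepsilon^{1/2}$ (since a single entry $\varepsilon_{i,t}$ is a linear combination $\sum_j (\bm\Sigma_\varepsilon^{1/2})_{i,j}Z_{j,t}$, and bounding $\sum_j|(\bm\Sigma_\varepsilon^{1/2})_{i,j}|\,|a_j|$ naturally invokes $\|(\bm\Sigma_\varepsilon^{1/2})_{i,\cdot}\|_1\|a\|_\infty$ or, after Cauchy--Schwarz on the relevant support, $\|(\bm\Sigma_\varepsilon^{1/2})_{i,\cdot}\|_1$ times a concentration factor). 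Making this bookkeeping tight enough to get the stated exponents — in particular getting $\log^{3/2}n$ and not a higher power, and getting the clean separation into the three displayed terms — is the delicate part; the matrix-concentration inputs (operator norm of $T^{-1/2}\bm{E}$, and $\|\bm{E}^\top\bm{U}\|_{2,\infty}$-type bounds) and the Wedin/Davis--Kahan bookkeeping are routine by comparison. For part (ii), the argument is structurally the same but simpler: the relevant one-sided perturbation is a row of $\bm{E}^\top$, which is a genuinely independent column of $\bm{E}$ by Assumption \ref{Assump_noise_Z_entries}, so no $\ell_1$-to-$\ell_2$ loss is incurred and only $\theta$ (the spectral-norm SNR) shows up, together with $\|\bm{V}_{l,\cdot}\|_2 \lesssim \sqrt{(r+\log n)/T}$.
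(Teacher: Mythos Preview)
Your plan is correct and tracks the paper's proof closely: crude Wedin bounds, an algebraic decomposition of $\bm\Psi_U,\bm\Psi_V$ into second-order pieces, a leave-one-out argument on the rows/columns of $\bm Z$, and a self-bounding iteration to close the loop; the observation that part~(ii) is easier because columns of $\bm E$ are genuinely independent is also exactly right.

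One point deserves sharpening, since it is the crux of the $\bm U$-side argument. In your middle paragraph you write that after forming $\bm X^{(k)}$ one has ``$(T^{-1/2}\bm E)_{k,\cdot}\widehat{\bm V}^{(k)}$ concentrates''; this does not hold, because $\bm E_{k,\cdot}=(\bm\Sigma_\varepsilon^{1/2})_{k,\cdot}\bm Z$ depends on \emph{all} rows of $\bm Z$, so leaving out only the $k$-th row of $\bm Z$ still leaves $\bm E_{k,\cdot}$ correlated with $\widehat{\bm V}^{(k)}$. The paper's fix---which your $\ell_1$--$\ell_\infty$ remark anticipates but your ``condition on $\bm Z_{-k,\cdot}$'' phrasing obscures---is to first apply H\"older row-wise,
\[
\bigl\|(\bm\Sigma_\varepsilon^{1/2})_{k,\cdot}\bm Z(\widehat{\bm V}\bm H_V-\bm V)\bigr\|_2 \le \|(\bm\Sigma_\varepsilon^{1/2})_{k,\cdot}\|_1\,\|\bm Z(\widehat{\bm V}\bm H_V-\bm V)\|_{2,\infty},
\]
and \emph{then} run the leave-one-out on every row $m$ of $\bm Z$ (not just $m=k$) to bound $\max_m\|\bm Z_{m,\cdot}(\widehat{\bm V}\bm H_V-\bm V)\|_2$. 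This is what produces the product $\vartheta_k\vartheta$ in the denominator (one factor from the $\ell_1$ H\"older step on row $k$, one from the uniform leave-one-out over all $m$) and is the mechanism by which the $\|\bm U\|_{2,\infty}$ ``leakage'' term enters; keep this two-stage structure straight and the rest of your outline goes through.
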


The perturbation bounds in Theorem \ref{Thm UV 1st approx row-wise error}
pave the way for the statistical inference of the factor loadings
and factors. The reasons why the quantities presented in Theorem \ref{Thm UV 1st approx row-wise error}
relate to the factor loadings $\bm{B}$ and factors $\bm{F}$ are
that, as demonstrated in Lemma \ref{Lemma SVD BF good event}, the
column subspaces of $\bm{U}$ and $\bm{B}$ are identical, as are
those of $\bm{V}$ and $\bm{F}$.

To show how strong our SNR condition is, we examine scenarios where
there is no correlation in the noise matrix $\bm{E}$, i.e., $\bm{\Sigma}_{\varepsilon}$
is diagonal as in the strict factor model. In this case, the SNRs
satisfy $\vartheta=\theta=\sigma_{r}/\sigma_{\varepsilon}$, where
$\sigma_{\varepsilon}=\Vert\bm{\Sigma}_{\varepsilon}^{1/2}\Vert_{2}=\Vert\bm{\Sigma}_{\varepsilon}^{1/2}\Vert_{1}$.
Then Assumptions \ref{Assump_SNR 1 norm} and \ref{Assump_SNR 2 norm}
match the minimal condition $\theta\gg\sqrt{n/T}$ required for the
consistent estimations of the left and right singular subspaces \citep[e.g.,][]{yan2021inference},
up to log factors.

To explain that why $\bm{\Psi}_{U}$ and $\bm{\Psi}_{V}$ are higher-order
negligible terms, we compare their bounds with the magnitude of the
first-order terms $\bm{G}_{U}$ and $\bm{G}_{V}$. For simplicity,
we consider the setting where $r\asymp1$, $N\asymp T$, $\bm{\Sigma}_{\varepsilon}$
is well-conditioned (i.e., $\kappa_{\varepsilon}:=\lambda_{\text{max}}(\bm{\Sigma}_{\varepsilon})/\lambda_{\text{min}}(\bm{\Sigma}_{\varepsilon})\asymp1$),
and $\bm{U}$ is incoherent \citep[see, e.g., Definition 3.1 in][]{chen2021Monograph};
the matrix $\bm{V}$ can be proven to be incoherent since $\Vert\bm{V}\Vert_{2,\infty}\lesssim T^{-1/2}\log^{1/2}n$.
Then, the typical size of the $k$-th row of the first-order term
$\bm{G}_{U}$ can be measured by $\mathbb{E}^{1/2}[\Vert(\bm{G}_{U})_{k,\cdot}\Vert_{2}^{2}]$.
Standard computations yield $\mathbb{E}^{1/2}[\Vert(\bm{G}_{U})_{k,\cdot}\Vert_{2}^{2}]\geq T^{-1/2}\sigma_{r}^{-1}\sqrt{\lambda_{\text{min}}(\bm{\Sigma}_{\varepsilon})}$.
Similarly, for the first-order term $\bm{G}_{V}$, we have that $\mathbb{E}^{1/2}[\Vert(\bm{G}_{V})_{l,\cdot}\Vert_{2}^{2}]\geq T^{-1/2}\sigma_{r}^{-1}\sqrt{\lambda_{\text{min}}(\bm{\Sigma}_{\varepsilon})}$.
Then, using the row-wise bounds for $\bm{\Psi}_{U}$ and $\bm{\Psi}_{V}$
in Theorem \ref{Thm UV 1st approx row-wise error}, we obtain that
\begin{subequations}
\begin{equation}
(\mathbb{E}^{1/2}[\left\Vert (\bm{G}_{U})_{k,\cdot}\right\Vert _{2}^{2}])^{-1}\left\Vert (\bm{\Psi}_{U})_{k,\cdot}\right\Vert _{2}\lesssim(\frac{\theta}{\vartheta_{k}\vartheta}+\frac{1}{\theta})\log^{3/2}n+\frac{1}{\sqrt{N}}\log n\ll1,\label{U first-order term size ratio}
\end{equation}
and 
\begin{equation}
(\mathbb{E}^{1/2}[\left\Vert (\bm{G}_{V})_{l,\cdot}\right\Vert _{2}^{2}])^{-1}\left\Vert (\bm{\Psi}_{V})_{l,\cdot}\right\Vert _{2}\lesssim(\frac{1}{\theta}+\frac{1}{\sqrt{T}})\log^{3/2}n\ll1.\label{V first-order term size ratio}
\end{equation}
\end{subequations}
 The last inequality in (\ref{U first-order term size ratio}) (resp.
(\ref{V first-order term size ratio})) implies that the first-order
term $(\bm{G}_{U})_{k,\cdot}$ (resp. $(\bm{G}_{V})_{l,\cdot}$) dominates
the higher-order term $(\bm{\Psi}_{U})_{k,\cdot}$ (resp. $(\bm{\Psi}_{V})_{l,\cdot}$),
provided that the SNRs grow at polynomial rate of $\log n$: $\theta^{-1}\vartheta_{k}\vartheta\gg\log^{3/2}n$
and $\theta\gg\log^{3/2}n$ (resp. $\theta\gg\log^{3/2}n$). These
conditions are less stringent than the assumptions in the prior work
\citep[e.g.,][]{BaiNg2023PCA,Jiang2023PCA,ChoiMing2024PCA} that required
the SNR to grow with a polynomial rate of $n$.

As we will show later, Theorem \ref{Thm UV 1st approx row-wise error}
is the foundation to design test statistics and conduct statistical
inference for the factors, factor loadings, and other parameters of
interest. To the best of our knowledge, the non-asymptotic first-order
approximations (\ref{U 1st-order approx hat til})--(\ref{V 1st-order approx hat til})
are new in the statistics and econometrics literature. While similar
findings exist in the fields of low-rank matrix completion and spectral
methods, the closest result to ours is Theorem 9 of \citet{yan2021inference}.
However, they assumed that the entries of the noise matrix $\bm{E}$
are independent, while we allow the presence of cross-sectional correlations
in the noise matrix $\bm{E}$.

Our results in Theorem \ref{Thm UV 1st approx row-wise error} are
nontrivial generalizations of those in \citet{yan2021inference}.
We note that their proof of row-wise error bounds via leave-one-out
(LOO) technique requires constructing an auxiliary matrix that is
independent with the target row. This construction is obtained by zeroing
the row with the same position in noise matrix. However, this approach
does not work in our case because it is still correlated with the
target row due to the cross-sectional correlation among all rows.
We overcome this challenge through applications of matrix concentration
inequalities, and our results do not need any additional structural
assumption (e.g., sparsity) on the noise covariance matrix $\bm{\Sigma}_{\varepsilon}$.
An additional advance of our theory compared with \citet{yan2021inference}
lies in that, our bounds for $\Vert(\bm{\Psi}_{U})_{k,\cdot}\Vert_{2}$
and $\Vert(\bm{\Psi}_{V})_{l,\cdot}\Vert_{2}$ are free of condition
number $\kappa:=\sigma_{1}/\sigma_{r}$, making our theory work even
when the factor loading matrix $\bm{B}$ is near-singular and ill-conditioned,
i.e., $\kappa$ is very large. In particular, our result accomodates
the heterogeneous case of \citet{BaiNg2023PCA}.

\subsection{Implications: estimation guarantees and distributional characterizations
\label{sec:F B inference estimate}}

In this section, we present the immediate consequences of Theorem
\ref{Thm UV 1st approx row-wise error} --- estimation error bounds
and distributional characterization for the PC estimators. Later in
Section \ref{sec:compare_SNR}, we will compare these results with
the recent work (\citet{BaiNg2023PCA,Jiang2023PCA,ChoiMing2024PCA}),
and highlight the advantages of our theory.

Recall that our estimators for the factors $\bm{F}$ and factor loadings
$\bm{B}$ in Algorithm \ref{alg:PCA-SVD} are given by $\widehat{\bm{F}}=\sqrt{T}\widehat{\bm{V}}$
and $\widehat{\bm{B}}=\widehat{\bm{U}}\widehat{\bm{\bm{\Sigma}}}$
respectively. Let us take a moment to see how to quantify the estimation
error for $\widehat{\bm{F}}$ in the face of rotational ambiguity.
In view of Theorem~\ref{Thm UV 1st approx row-wise error}, we know
that $\widehat{\bm{F}}\bm{R}_{V}$ should be close to $\sqrt{T}\bm{V}$
in Euclidean distance. In addition, recall that $\bm{U}\bm{\Lambda}\bm{V}^{\top}$
is the SVD of the rank-$r$ matrix $T^{-1/2}\bm{BF}^{\top}$, which
suggests the existence of an invertible, $\sigma(\bm{F})$-measurable
matrix $\bm{J}\in\mathbb{R}^{r\times r}$ such that $\bm{V}=T^{-1/2}\bm{FJ}$.
Hence, by defining $\bm{R}_{F}\coloneqq\bm{J}\bm{R}_{V}^{\top}$,
we may use $\widehat{\bm{F}}-\bm{F}\bm{R}_{F}$ to evaluate the estimation
error of $\widehat{\bm{F}}$. Similarly, we define $\bm{R}_{B}\coloneqq(\bm{R}_{F}^{-1})^{\top}$
and use $\widehat{\bm{B}}-\bm{B}\bm{R}_{B}$ to measure the estimation
error of $\widehat{\bm{B}}$.

We will show in Lemma \ref{Lemma SVD BF good event} that $\bm{R}_{F}$
and $\bm{R}_{B}$ are close to rotation matrices. Indeed, in the study
of PC estimators for factor model (e.g., \citet{BaiNg2023PCA,Jiang2023PCA,ChoiMing2024PCA}),
it is quite common to take $\bm{F}\bm{R}_{F}$ and $\bm{B}\bm{R}_{B}$,
rather than $\bm{F}$ and $\bm{B}$, as the groundtruth, though the
specific choices of $\bm{R}_{F}$ and $\bm{R}_{B}$ vary across different
works. Both the averaged and the row-wise estimation error bound can
be deduced from Theorem \ref{Thm UV 1st approx row-wise error}. We
record the results in the following corollary.

\begin{corollary}[\textsf{Estimation guarantees for factors and
factor loadings}]\label{corollary:error bound B F}Suppose that the
assumptions in Theorem \ref{Thm UV 1st approx row-wise error} hold.
Then there exists a universal constant $C_{0}>0$ such that: with
probability at least $1-O(n^{-2})$, for factors, the averaged and
the row-wise estimation error are given by 
\begin{equation}
\frac{1}{T}\big\Vert\widehat{\bm{F}}-\bm{F}\bm{R}_{F}\big\Vert_{\mathrm{F}}^{2}\leq C_{0}\frac{1}{\theta^{2}}\frac{n}{T}r\text{\qquad and\qquad}\big\Vert(\widehat{\bm{F}}-\bm{F}\bm{R}_{F})_{t,\cdot}\big\Vert_{2}\leq C_{0}\frac{1}{\theta}\sqrt{r}\log n,\forall t,\label{averaged and rowwise error F}
\end{equation}
respectively; for factor loadings, we define $\bar{\bm{U}}:=\bm{B\Sigma}^{-1}\in\mathbb{R}^{N\times r}$,
if there exists some universal constant $C_{U}>0$ such that $\Vert\bar{\bm{U}}\Vert_{2,\infty}\leq C_{U}\theta/\vartheta$,
$\Vert\bar{\bm{U}}\Vert_{2,\infty}\leq C_{U}\sqrt{T}\vartheta/n$
and $\Vert\bar{\bm{U}}\Vert_{2,\infty}\leq C_{U}\sqrt{T}\theta^{2}/(\vartheta n)$,
then the averaged estimation error bound is given by 
\begin{subequations}
\begin{equation}
\frac{1}{N}\big\Vert\widehat{\bm{B}}-\bm{B}\bm{R}_{B}\big\Vert_{\mathrm{F}}^{2}\leq C_{0}\bigl(\frac{1}{T}\Vert\bm{\Sigma}_{\varepsilon}^{1/2}\Vert_{1}^{2}+(\frac{n^{2}}{\theta^{4}T^{2}}+\frac{1}{\theta^{2}T})\sigma_{r}^{2}\bigr)r^{3}\log^{4}n,\label{averaged error B}
\end{equation}
and the row-wise estimation error bound is given by 
\begin{equation}
\big\Vert(\widehat{\bm{B}}-\bm{B}\bm{R}_{B})_{i,\cdot}\big\Vert_{2}\leq C_{0}\bigl(\frac{1}{\sqrt{T}}\Vert(\bm{\Sigma}_{\varepsilon}^{1/2})_{i,\cdot}\Vert_{1}+(\frac{n}{\theta^{2}T}+\frac{1}{\theta\sqrt{T}})\sigma_{r}\Vert\bar{\bm{U}}_{i,\cdot}\Vert_{2}\bigr)r\log^{2}n.\label{rowwise error B}
\end{equation}
\end{subequations}

\end{corollary}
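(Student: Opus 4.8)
The plan is to translate the row-wise and aggregate control on the subspace perturbation errors $\bm{\Psi}_U,\bm{\Psi}_V$ from Theorem~\ref{Thm UV 1st approx row-wise error} into error bounds for $\widehat{\bm F}=\sqrt T\,\widehat{\bm V}$ and $\widehat{\bm B}=\widehat{\bm U}\widehat{\bm\Sigma}$. First I would handle the factors, since they are cleaner. Writing $\widehat{\bm F}-\bm F\bm R_F = \sqrt T(\widehat{\bm V}\bm R_V - \bm V)\bm J^{-\top}\cdot(\text{something})$—more precisely, using $\bm V = T^{-1/2}\bm F\bm J$ and $\bm R_F = \bm J\bm R_V^\top$ one gets $\widehat{\bm F}-\bm F\bm R_F = \sqrt T(\widehat{\bm V}\bm R_V-\bm V)\bm J^{-\top}$ after rearranging, so everything reduces to bounding $\widehat{\bm V}\bm R_V-\bm V=\bm G_V+\bm\Psi_V$ and controlling $\|\bm J^{-1}\|$. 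By Lemma~\ref{Lemma SVD BF good event}, $\bm J$ (equivalently $\bm R_F$) is close to a rotation, so $\|\bm J^{-1}\|\lesssim 1$; then $\|\bm G_V\|_{\mathrm F}\le \|T^{-1/2}\bm E^\top\bm U\bm\Lambda^{-1}\|_{\mathrm F}\le \lambda_r^{-1}\|T^{-1/2}\bm E^\top\bm U\|_{\mathrm F}$, and a standard concentration bound (using Assumption~\ref{Assump_noise_Z_entries} and $\lambda_r\asymp\sigma_r$, $\|\bm\Sigma_\varepsilon\|_2\asymp$ noise level) gives $\|\bm G_V\|_{\mathrm F}^2\lesssim \theta^{-2}(n/T)r$; the $\bm\Psi_V$ part is strictly smaller by the discussion following the theorem. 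Summing the row-wise bound on $(\bm\Psi_V)_{l,\cdot}$ over $l$ together with $\|\bm V\|_{2,\infty}^2\lesssim T^{-1}\log n$ yields the averaged bound in \eqref{averaged and rowwise error F}; the row-wise bound follows directly from $\|(\bm G_V)_{l,\cdot}\|_2\lesssim \theta^{-1}\sqrt r\log n$ (a sub-Gaussian row bound) plus the $(\bm\Psi_V)_{l,\cdot}$ estimate.

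For the factor loadings the extra ingredient is the singular value estimate: $\widehat{\bm B}=\widehat{\bm U}\widehat{\bm\Sigma}$, so one writes $\widehat{\bm B}-\bm B\bm R_B = (\widehat{\bm U}\bm R_U-\bm U)\bm R_U^{-1}\widehat{\bm\Sigma} + \bm U(\bm R_U^{-1}\widehat{\bm\Sigma}-\bm\Lambda\bm R_B') + (\text{correction from }\bm R_B\text{ vs.\ }\bm\Lambda)$, where I would use $\bm B = \sqrt T\,\bar{\bm U}\,?$—actually $\bm B=\bar{\bm U}\bm\Sigma$ and $\bm U$, $\bar{\bm U}$ share a column space so $\bm U=\bar{\bm U}\bm O$ for some near-rotation $\bm O$ (from Lemma~\ref{Lemma SVD BF good event}). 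The decomposition I would actually use is $\widehat{\bm B}-\bm B\bm R_B = (\widehat{\bm U}\bm R_U - \bm U)(\bm R_U^\top\widehat{\bm\Sigma}\bm R_?) + \bm U(\text{discrepancy in singular values}) $; combining Weyl's inequality for $|\widehat\sigma_j-\lambda_j|\lesssim \|T^{-1/2}\bm E\|_2\lesssim \sqrt{\|\bm\Sigma_\varepsilon\|_2}\sqrt{n/T}$ with $\|\widehat{\bm\Sigma}\|_2\asymp\sigma_1$, one multiplies the row-wise $\bm\Psi_U$ bound by $\|\widehat{\bm\Sigma}\|_2$ and the row-wise $\bm U$ bound by the singular-value discrepancy. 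The three side conditions $\|\bar{\bm U}\|_{2,\infty}\le C_U\theta/\vartheta$, etc., are exactly what is needed so that the terms $\|\bm U\|_{2,\infty}$-proportional and $\|\bm U_{k,\cdot}\|_2$-proportional pieces in Theorem~\ref{Thm UV 1st approx row-wise error}(i), once multiplied by $\sigma_1$ (or by $\sqrt{\|\bm\Sigma_\varepsilon\|_2}\sqrt{n/T}$), get absorbed into the stated right-hand side; tracking $\kappa=\sigma_1/\sigma_r$ here is where one must be careful, but since the theorem's $\bm\Psi_U$ bound is $\kappa$-free, multiplying by $\widehat{\bm\Sigma}$ produces only the $\sigma_r$-weighted terms displayed in \eqref{averaged error B}–\eqref{rowwise error B} after using $\|\bar{\bm U}_{i,\cdot}\|_2=\|\bm U_{i,\cdot}\|_2$ and $\|(\bm\Sigma_\varepsilon^{1/2})_{i,\cdot}\|_1=\sigma_r/\vartheta_i$. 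The averaged bound then follows by squaring and summing the row-wise bound over $i$, using $\|\bm\Sigma_\varepsilon^{1/2}\|_1=\max_i\|(\bm\Sigma_\varepsilon^{1/2})_{i,\cdot}\|_1$ and $\|\bar{\bm U}\|_{\mathrm F}^2=r$.

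The main obstacle is the bookkeeping around the rotation/near-rotation matrices $\bm R_U,\bm R_V,\bm R_F,\bm R_B,\bm J,\bm O$: one has to invoke Lemma~\ref{Lemma SVD BF good event} to guarantee all of them have singular values bounded above and below by absolute constants, and then verify that replacing one by another (e.g.\ $\bm R_U^\top\widehat{\bm\Sigma}$ vs.\ $\bm\Lambda\bm R_B^{-\top}$) contributes only higher-order error via the singular-value perturbation $|\widehat\sigma_j-\lambda_j|$. A secondary technical point is justifying the ``standard computations'' for the Frobenius and row-wise sizes of $\bm G_U,\bm G_V$ under cross-sectional correlation: one needs $\|T^{-1/2}\bm E^\top\bm U\|_{\mathrm F}^2\lesssim \|\bm\Sigma_\varepsilon\|_2\, r\,(n/T)$ up to log factors and the row-wise analog $\|(T^{-1/2}\bm E^\top\bm U)_{l,\cdot}\|_2\lesssim \sqrt{\|\bm\Sigma_\varepsilon\|_2/T}\,\sqrt r\log n$, both of which follow from Hanson–Wright / sub-Gaussian concentration on $\bm E=\bm\Sigma_\varepsilon^{1/2}\bm Z$ together with the deterministic bound $\|\bm\Lambda^{-1}\|\asymp\sigma_r^{-1}$ supplied by Lemma~\ref{Lemma SVD BF good event}. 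Everything else is routine algebra, and the constants can be collected into a single $C_0$.
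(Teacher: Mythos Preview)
Your plan for the factor part is essentially the paper's: reduce to $\widehat{\bm V}\bm R_V-\bm V=\bm G_V+\bm\Psi_V$, bound $\bm G_V$ by sub-Gaussian concentration, and absorb $\bm\Psi_V$. One small slip: since $\bm F\bm R_F=\bm F\bm J\bm R_V^\top=\sqrt T\,\bm V\bm R_V^\top$, the identity is $\widehat{\bm F}-\bm F\bm R_F=\sqrt T(\widehat{\bm V}\bm R_V-\bm V)\bm R_V^\top$, not $\cdots\bm J^{-\top}$; this is harmless because $\bm R_V$ is orthogonal. For the averaged bound the paper does not sum row-wise but simply quotes the Wedin Frobenius estimate $\|\widehat{\bm V}\bm R_V-\bm V\|_{\mathrm F}\lesssim\rho\sqrt r$ (Lemma~\ref{Lemma R H for U V}); your summation approach also works but is slightly lossier.

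For the factor loadings there is a genuine gap. You write that ``since the theorem's $\bm\Psi_U$ bound is $\kappa$-free, multiplying by $\widehat{\bm\Sigma}$ produces only the $\sigma_r$-weighted terms.'' This is a non-sequitur: a row vector $v\in\mathbb R^r$ with a $\kappa$-free bound on $\|v\|_2$, once multiplied on the right by a matrix of operator norm $\asymp\sigma_1$, yields $\|v\widehat{\bm\Sigma}\|_2\le\sigma_1\|v\|_2$, not $\sigma_r\|v\|_2$. If you argued this way the final row-wise bound would carry an extra factor $\kappa=\sigma_1/\sigma_r$, which the corollary does not. The correct route (and the one the paper uses, more explicitly in the later proofs of Lemma~\ref{Lemma B two sample prepare} and Theorem~\ref{Thm B row norm}) is to write
\[
\widehat{\bm B}\bm R_V-\bm U\bm\Lambda \;=\; \bm G_U\bm\Lambda \;+\; \bm\Psi_U\bm\Lambda \;+\; \widehat{\bm U}\bm R_U\bigl((\bm R_U)^\top\widehat{\bm\Sigma}\bm R_V-\bm\Lambda\bigr),
\]
and then bound $(\bm G_U\bm\Lambda)_{i,\cdot}=T^{-1/2}(\bm\Sigma_\varepsilon^{1/2})_{i,\cdot}\bm Z\bm V$ and $(\bm\Psi_U\bm\Lambda)_{i,\cdot}$ \emph{directly}. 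The latter is possible because in the proof of Theorem~\ref{Thm UV 1st approx row-wise error} every piece of the decomposition of $\bm\Psi_U$ ends in $\bm\Lambda^{-1}$, so $\bm\Psi_U\bm\Lambda$ simply removes that final factor and one reads off the $\sigma_r$-scaled bounds already established in Lemmas~\ref{Lemma R H for U V}--\ref{Lemma LOO UV Sig H}. The three side conditions on $\|\bar{\bm U}\|_{2,\infty}$ are then exactly what force the second and third terms of the $\bm\Psi_U$ bound (now multiplied by $\sigma_r$, not $\sigma_1$) to be dominated by the $(\bm G_U\bm\Lambda)_{i,\cdot}$ term; the last piece $\widehat{\bm U}\bm R_U\,\varepsilon_{\bm\Sigma}$ contributes the $\sigma_r(\rho^2+\rho\sqrt{r/n}\log n)\|\bar{\bm U}_{i,\cdot}\|_2$ portion via Lemma~\ref{Lemma Sigma hat tilde H R}. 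With this fix the rest of your outline (squaring and summing for the averaged bound) goes through.
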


Let us interpret the above bounds under some specific settings. For
simplicity, we ignore the log factors $\log n$ and consider the setting
discussed after Theorem \ref{Thm UV 1st approx row-wise error}, with
the assumption that $\Vert\bm{\Sigma}_{\varepsilon}^{1/2}\Vert_{1}\asymp\Vert\bm{\Sigma}_{\varepsilon}^{1/2}\Vert_{2}\asymp1$.
Under this setting, to make the assumptions on $\Vert\bar{\bm{U}}\Vert_{2,\infty}$
hold, it suffices to assume that $\theta\gtrsim1$, where $\theta\equiv\sigma_{r}/\Vert\bm{\Sigma}_{\varepsilon}^{1/2}\Vert_{2}$
is the SNR. For factors, the upper bounds for the averaged and row-wise
estimation error rates in (\ref{averaged and rowwise error F}) are
given by $\theta^{-2}$ and $\theta^{-1}$, respectively; for factor
loadings, the two bounds in (\ref{averaged error B})--(\ref{rowwise error B})
are given by $T^{-1}+N^{-1}\theta^{-2}$ and $T^{-1/2}+N^{-1/2}\theta^{-1}$,
respectively. All these bounds go to zero under the condition that
$\theta\gg1$, which match the minimal condition required for the
consistent estimations of the left and right singular subspaces if
there is no correlation in the noise matrix \citep[e.g.,][]{yan2021inference}.

Next, we present our results on the inference for factors and factor
loadings in Corollaries \ref{Thm F inference} and \ref{Thm B inference},
respectively.

\begin{corollary}[\textsf{Distributional theory for factors}]\label{Thm F inference}Suppose
that Assumptions \ref{Assump_Bf_identification}, \ref{Assump_noise_Z_entries},
and \ref{Assump_factor_f} hold. For any given target error level
$\delta>0$, assume that $T\geq C_{1}(r+\log n)$, $n\geq C_{1}r\log n$,
$T\geq C_{0}\kappa_{\varepsilon}\delta^{-2}r^{2}\log^{4}n$, $n\geq C_{0}\delta^{-1/2}$,
\[
\theta\geq C_{0}\delta^{-1}\sqrt{\kappa_{\varepsilon}}\frac{n}{T}r\log^{3/2}n,\qquad\text{and}\qquad\frac{\Vert\bm{\Sigma}_{\varepsilon}^{1/2}\bar{\bm{U}}\Vert_{2,\infty}}{\Vert\bm{\Sigma}_{\varepsilon}^{1/2}\Vert_{2}}\leq c_{0}\delta\kappa_{\varepsilon}^{-1/2}r^{-5/4},
\]
hold for some universal constant $c_{0},C_{0}>0$ and sufficiently
large constant $C_{1}>0$. Then, for any $t=1,2,\ldots,T$, it holds
that 
\begin{equation}
\sup_{\mathcal{C}\in\mathscr{C}^{r}}\left\vert \mathbb{P}((\widehat{\bm{F}}-\bm{F}\bm{R}_{F})_{t,\cdot}\in\mathcal{C})-\mathbb{P}(\mathcal{N}(0,\bm{\Sigma}_{F,t})\in\mathcal{C})\right\vert \leq\delta,\label{F Gaussian law result}
\end{equation}
where $\mathscr{C}^{r}$ is the collection of all convex sets in $\mathbb{R}^{r}$,
and the covariance matrix $\bm{\Sigma}_{F,t}$ is given by $\bm{\Sigma}_{F,t}=\bm{R}_{V}\bm{\Lambda}^{-1}\bm{U}^{\top}\bm{\Sigma}_{\varepsilon}\bm{U}\bm{\Lambda}^{-1}\bm{R}_{V}^{\top}$.

\end{corollary}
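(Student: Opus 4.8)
The plan is to derive Corollary~\ref{Thm F inference} from the first-order expansion in Theorem~\ref{Thm UV 1st approx row-wise error} by showing that (a) the leading term $(\bm{G}_V)_{t,\cdot}$, after the appropriate rotation, is asymptotically Gaussian, and (b) the remainder $(\bm{\Psi}_V)_{t,\cdot}$ together with the error incurred by replacing $\bm{R}_F$-rotated quantities with $\bm{R}_V$-rotated ones is negligible at the scale $\delta$. Recall from the discussion after Algorithm~\ref{alg:PCA-SVD} that $\bm{V}=T^{-1/2}\bm{F}\bm{J}$ and $\bm{R}_F=\bm{J}\bm{R}_V^\top$, so that $(\widehat{\bm{F}}-\bm{F}\bm{R}_F)_{t,\cdot}=\sqrt{T}\,(\widehat{\bm{V}}\bm{R}_V-\bm{V})_{t,\cdot}\bm{R}_V^\top$. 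Plugging in \eqref{V 1st-order approx hat til} gives
\begin{equation}
(\widehat{\bm{F}}-\bm{F}\bm{R}_F)_{t,\cdot}=\sqrt{T}\,(\bm{G}_V)_{t,\cdot}\bm{R}_V^\top+\sqrt{T}\,(\bm{\Psi}_V)_{t,\cdot}\bm{R}_V^\top,
\end{equation}
with $\sqrt{T}\,(\bm{G}_V)_{t,\cdot}\bm{R}_V^\top = T^{-1/2}\,\bm{e}_t^\top\bm{U}\bm{\Lambda}^{-1}\bm{R}_V^\top$ a linear functional of the single noise vector $\bm{e}_t=\bm{\Sigma}_\varepsilon^{1/2}\bm{z}_t$ (here $\bm{z}_t$ is the $t$-th column of $\bm{Z}$).

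First I would establish the Gaussian approximation for the leading term. Write $T^{-1/2}\bm{e}_t^\top\bm{U}\bm{\Lambda}^{-1}\bm{R}_V^\top = T^{-1/2}\sum_{i=1}^N Z_{i,t}\,\bm{a}_i^\top$ where $\bm{a}_i^\top$ is the $i$-th row of $\bm{\Sigma}_\varepsilon^{1/2}\bm{U}\bm{\Lambda}^{-1}\bm{R}_V^\top$. This is a sum of independent mean-zero $\R^r$-valued random vectors with total covariance $T^{-1}\bm{R}_V\bm{\Lambda}^{-1}\bm{U}^\top\bm{\Sigma}_\varepsilon\bm{U}\bm{\Lambda}^{-1}\bm{R}_V^\top$, but to match the stated $\bm{\Sigma}_{F,t}$ (which has no $T^{-1}$ factor) one uses that $\bm{\Lambda}=\bm{\Sigma}_{\bm{F}}^{1/2}\cdot(\text{rotation})/\sqrt{T}$-type scaling coming from $\bm{V}=T^{-1/2}\bm{F}\bm{J}$, so the $\sqrt{T}$ is absorbed; I would carry this bookkeeping carefully using Lemma~\ref{Lemma SVD BF good event}. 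Then I would invoke a multivariate Berry--Esseen bound over convex sets (e.g., the Bentkus-type bound) conditionally on $\bm{F}$ and $\bm{E}$'s remaining columns, controlling the Lyapunov-type ratio $\sum_i \EE\|\bm{a}_i\|_2^3 / \lambda_{\min}(\text{cov})^{3/2}$; the sub-Gaussian assumption on $Z_{i,t}$ plus the incoherence-type control $\|\bm{\Sigma}_\varepsilon^{1/2}\bar{\bm{U}}\|_{2,\infty}$ (which bounds $\max_i\|\bm{a}_i\|_2$) and the lower bound $\lambda_{\min}(\bm{\Sigma}_{F,t})\gtrsim \sigma_r^{-2}\lambda_{\min}(\bm{\Sigma}_\varepsilon)$ are exactly what make this ratio $\lesssim \delta$. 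The side conditions $T\geq C_0\kappa_\varepsilon\delta^{-2}r^2\log^4 n$ and the bound on $\|\bm{\Sigma}_\varepsilon^{1/2}\bar{\bm{U}}\|_{2,\infty}/\|\bm{\Sigma}_\varepsilon^{1/2}\|_2$ should fall out of this computation, the former because $\lambda_{\min}$ involves $N$ terms each of typical size $1/N$ and one needs enough of them.

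Second I would bound the remainder. From Theorem~\ref{Thm UV 1st approx row-wise error}(ii), $\|\sqrt{T}(\bm{\Psi}_V)_{t,\cdot}\|_2 \lesssim \frac{\sqrt{n}}{\theta^2\sqrt T}\sqrt r\log^{3/2}n + (\frac{n}{\theta^2 T}+\frac{1}{\theta\sqrt T})\sqrt r\log n\,\sqrt T\,\|\bm{V}_{t,\cdot}\|_2$, and using $\|\bm{V}_{t,\cdot}\|_2\lesssim T^{-1/2}\log^{1/2}n$ this is $\lesssim \frac{\sqrt n}{\theta^2\sqrt T}\sqrt r\log^{3/2}n + (\frac{n}{\theta^2 T}+\frac{1}{\theta\sqrt T})\sqrt r\log^{3/2}n$. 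Comparing against the fluctuation scale $\lambda_{\min}(\bm{\Sigma}_{F,t})^{1/2}\gtrsim \sigma_r^{-1}\sqrt{\lambda_{\min}(\bm{\Sigma}_\varepsilon)}$, the hypothesis $\theta\geq C_0\delta^{-1}\sqrt{\kappa_\varepsilon}\frac{n}{T}r\log^{3/2}n$ forces the ratio below $\delta$. To convert "remainder small with high probability'' into a bound on the convex-set probability discrepancy, I would use the standard anti-concentration argument: for any convex $\mathcal{C}$, $\PP(\xi+\eta\in\mathcal{C})\leq \PP(\xi\in\mathcal{C}^{\epsilon})+\PP(\|\eta\|>\epsilon)$ and a Gaussian anti-concentration (small-ball) inequality $\PP(\mathcal{N}(0,\bm{\Sigma}_{F,t})\in\mathcal{C}^{\epsilon}\setminus\mathcal{C})\lesssim \epsilon\,r^{1/2}/\lambda_{\min}(\bm{\Sigma}_{F,t})^{1/2}$, with $\epsilon$ chosen as the high-probability bound on the remainder; the failure probability $O(n^{-2})$ is absorbed into $\delta$ via the condition $n\geq C_0\delta^{-1/2}$.

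The main obstacle I anticipate is not any single estimate but the interplay between the two rotations $\bm{R}_F$ and $\bm{R}_V$ and the $\sigma(\bm{F})$-measurable matrix $\bm{J}$: one must verify that the covariance of the leading term, computed conditionally on $\bm{F}$, equals $\bm{\Sigma}_{F,t}=\bm{R}_V\bm{\Lambda}^{-1}\bm{U}^\top\bm{\Sigma}_\varepsilon\bm{U}\bm{\Lambda}^{-1}\bm{R}_V^\top$ \emph{exactly} (not just up to rotation), and that $\bm{U}$, $\bm{\Lambda}$, $\bm{R}_V$ can all be treated as fixed (i.e.\ $\sigma(\bm{F})$-measurable or deterministic) when conditioning — this requires that $\bm{U}$ and $\bm{\Lambda}$ depend only on $\bm{B}$ and $\bm{F}$, not on $\bm{E}$, which is precisely the content of Lemma~\ref{Lemma SVD BF good event}. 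A secondary technical point is that the multivariate Berry--Esseen constant must be dimension-robust in $r$; since $r\asymp 1$ is the regime of interest this is harmless, but to state the corollary with explicit $r$-dependence one tracks the $r^{5/4}$ and $r^2$ powers through Bentkus's bound, which accounts for the exponents appearing in the hypotheses.
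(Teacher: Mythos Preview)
Your overall strategy---decompose via Theorem~\ref{Thm UV 1st approx row-wise error}, apply a Bentkus-type Berry--Esseen bound to the leading term as a sum $\sum_k Z_{k,t}\,\bm{a}_k$, and absorb the remainder by Gaussian anti-concentration over convex sets---is exactly the paper's approach. Two points, however, need correction.

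First, the $\sqrt{T}$ bookkeeping: since $\bm{G}_V=T^{-1/2}\bm{E}^\top\bm{U}\bm{\Lambda}^{-1}$, one has $\sqrt{T}\,(\bm{G}_V)_{t,\cdot}\bm{R}_V^\top=\bm{e}_t^\top\bm{U}\bm{\Lambda}^{-1}\bm{R}_V^\top$ with \emph{no} residual $T^{-1/2}$, so its conditional covariance is already exactly $\bm{\Sigma}_{F,t}$. There is no need for (and no truth to) any ``$\bm{\Lambda}\sim 1/\sqrt{T}$'' rescaling; Lemma~\ref{Lemma SVD BF good event} gives $\lambda_i\asymp\sigma_i$. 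Second, and more substantively, your proposed resolution of the rotation issue is incomplete: $\bm{R}_V=\mathsf{sgn}(\widehat{\bm{V}}^\top\bm{V})$ depends on $\widehat{\bm{V}}$ and hence on \emph{all} of $\bm{E}$, so it is \emph{not} $\sigma(\bm{F})$-measurable and Lemma~\ref{Lemma SVD BF good event} does not make it ``fixed'' for the Berry--Esseen conditioning. The paper handles this by using that $\mathscr{C}^r$ is invariant under nondegenerate linear maps: since $\bm{\Sigma}_{F,t}=\bm{R}_F^\top(\bm{\Sigma}^{-1}\bar{\bm{U}}^\top\bm{\Sigma}_\varepsilon\bar{\bm{U}}\bm{\Sigma}^{-1})\bm{R}_F$ and $(\widehat{\bm{F}}-\bm{F}\bm{R}_F)_{t,\cdot}=(\sqrt{T}\bm{G}_V\bm{J}^{-1}+\sqrt{T}\bm{\Psi}_V\bm{J}^{-1})_{t,\cdot}\bm{R}_F$, the random rotation $\bm{R}_F$ can be stripped from both sides simultaneously, reducing the comparison to one between $(\sqrt{T}\bm{G}_V\bm{J}^{-1})_{t,\cdot}=\bm{z}_t^\top\bm{\Sigma}_\varepsilon^{1/2}\bar{\bm{U}}\bm{\Sigma}^{-1}$ and $\mathcal{N}(0,\bm{\Sigma}^{-1}\bar{\bm{U}}^\top\bm{\Sigma}_\varepsilon\bar{\bm{U}}\bm{\Sigma}^{-1})$. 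Now the coefficients $(\bm{\Sigma}_\varepsilon^{1/2}\bar{\bm{U}}\bm{\Sigma}^{-1})_{k,\cdot}$ are genuinely deterministic, and Berry--Esseen applies cleanly with the Lyapunov ratio controlled by $\sqrt{\kappa_\varepsilon}\,r^{5/4}\,\|\bm{\Sigma}_\varepsilon^{1/2}\bar{\bm{U}}\|_{2,\infty}/\|\bm{\Sigma}_\varepsilon^{1/2}\|_2$, exactly matching your intuition about which quantities govern the bound.
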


\begin{remark}If all the entries of the matrix $\bm{Z}$ are Gaussian,
i.e., the noise is Gaussian, then the result (\ref{F Gaussian law result})
holds without the assumption that $\Vert\bm{\Sigma}_{\varepsilon}^{1/2}\bar{\bm{U}}\Vert_{2,\infty}/\Vert\bm{\Sigma}_{\varepsilon}^{1/2}\Vert_{2}\leq c_{0}\kappa_{\varepsilon}^{-1/2}r^{-5/4}\delta$.
Indeed, in the scenario where $\Vert\bm{\Sigma}_{\varepsilon}^{1/2}\Vert_{1}/\Vert\bm{\Sigma}_{\varepsilon}^{1/2}\Vert_{2}\asymp1$
and $\bar{\bm{U}}$ is $\mu$-incoherent, 
fulfilling our assumption on $\Vert\bm{\Sigma}_{\varepsilon}^{1/2}\bar{\bm{U}}\Vert_{2,\infty}/\Vert\bm{\Sigma}_{\varepsilon}^{1/2}\Vert_{2}$
merely requires that $N\geq\mu\kappa_{\varepsilon}r^{5/2}c_{0}^{-2}\delta^{-2}$
on the growth rate of cross-sectional dimension $N$.\end{remark}

\begin{corollary}[\textsf{Distributional theory for factor loadings}]\label{Thm B inference}Suppose
that Assumptions \ref{Assump_Bf_identification}, \ref{Assump_noise_Z_entries},
and \ref{Assump_factor_f} hold. Assume that there exists a constant
$c_{\varepsilon}>0$ such that $\lambda_{\min}(\bm{\Sigma}_{\varepsilon})>c_{\varepsilon}$.
For any given target error level $\delta>0$, assume that $T\geq C_{1}(r+\log n)$,
$n\geq C_{1}r\log n$, $T\geq C_{0}\delta^{-2}(r^{2}\log n+\kappa_{\varepsilon}r\log^{4}n)$,
\[
\vartheta_{i}\geq C_{0}\delta^{-1}\frac{1}{\sqrt{c_{\varepsilon}}}\Vert\bm{\Sigma}_{\varepsilon}^{1/2}\Vert_{1}\frac{n}{T}\sqrt{r}\log^{3/2}n,\qquad\text{and}\qquad\theta\geq C_{0}\delta^{-1}\sqrt{\kappa_{\varepsilon}}\frac{n}{T}\sqrt{r}\log n,
\]
for some universal constant $C_{0}>0$ and sufficiently large constant
$C_{1}>0$. Then, for any $i=1,2,\ldots,N$, it holds that 
\[
\sup_{\mathcal{C}\in\mathscr{C}^{r}}\left\vert \mathbb{P}((\widehat{\bm{B}}-\bm{B}\bm{R}_{B})_{i,\cdot}\in\mathcal{C})-\mathbb{P}(\mathcal{N}(0,\bm{\Sigma}_{B,i})\in\mathcal{C})\right\vert \leq\delta,
\]
where $\mathscr{C}^{r}$ is the collection of all convex sets in $\mathbb{R}^{r}$,
and the covariance matrix $\bm{\Sigma}_{B,i}$ is given by $\bm{\Sigma}_{B,i}=T^{-1}(\bm{\Sigma}_{\varepsilon})_{i,i}\bm{I}_{r}$.

\end{corollary}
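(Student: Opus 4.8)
The plan is to feed the first‑order expansion $\widehat{\bm{U}}\bm{R}_{U}=\bm{U}+\bm{G}_{U}+\bm{\Psi}_{U}$ of Theorem~\ref{Thm UV 1st approx row-wise error} into $\widehat{\bm{B}}=\widehat{\bm{U}}\widehat{\bm{\Sigma}}$, peel off a single Gaussian‑looking term, and absorb the rest as a negligible perturbation. The regularity bounds $T\gtrsim r+\log n$ and $n\gtrsim r\log n$ make Theorem~\ref{Thm UV 1st approx row-wise error} and the SVD‑alignment facts of Lemma~\ref{Lemma SVD BF good event} available: since $\bm{V}$ and $\bm{F}$ span the same subspace one has $\bm{V}=T^{-1/2}\bm{F}\bm{J}$ for an invertible $\sigma(\bm{F})$‑measurable $\bm{J}$, hence $\bm{B}=\bm{U}\bm{\Lambda}\bm{J}^{\top}$ and therefore $\bm{B}\bm{R}_{B}=\bm{U}\bm{\Lambda}\bm{R}_{V}^{\top}$, while sandwiching $\widehat{\bm{U}}\widehat{\bm{\Sigma}}\widehat{\bm{V}}^{\top}=T^{-1/2}\bm{X}$ between $\widehat{\bm{U}}$ and $\widehat{\bm{V}}$ and invoking the subspace‑angle estimates gives $\widehat{\bm{\Sigma}}=\bm{R}_{U}\bm{\Lambda}\bm{R}_{V}^{\top}+\bm{\Xi}$ with $\bm{\Xi}$ of smaller order. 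Writing $\widehat{\bm{B}}_{i,\cdot}=(\widehat{\bm{U}}\bm{R}_{U})_{i,\cdot}\bm{R}_{U}^{\top}\widehat{\bm{\Sigma}}$ and subtracting $(\bm{B}\bm{R}_{B})_{i,\cdot}=\bm{U}_{i,\cdot}\bm{\Lambda}\bm{R}_{V}^{\top}$ yields
\[
(\widehat{\bm{B}}-\bm{B}\bm{R}_{B})_{i,\cdot}=(\bm{G}_{U})_{i,\cdot}\bm{R}_{U}^{\top}\widehat{\bm{\Sigma}}\;+\;\bm{U}_{i,\cdot}\,\bm{R}_{U}^{\top}\bm{\Xi}\;+\;(\bm{\Psi}_{U})_{i,\cdot}\bm{R}_{U}^{\top}\widehat{\bm{\Sigma}},
\]
and, using $\bm{G}_{U}=T^{-1/2}\bm{E}\bm{V}\bm{\Lambda}^{-1}$, the first term equals $T^{-1/2}\bm{E}_{i,\cdot}\bm{V}$ right‑multiplied by the (nearly orthogonal, see below) matrix $\bm{\Lambda}^{-1}\bm{R}_{U}^{\top}\widehat{\bm{\Sigma}}$.

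The second step is to show the last two terms are of smaller order than the target scale, which throughout is $\asymp\sqrt{(\bm{\Sigma}_{\varepsilon})_{i,i}/T}\gtrsim\sqrt{c_{\varepsilon}/T}$ --- here the hypothesis $\lambda_{\min}(\bm{\Sigma}_{\varepsilon})>c_{\varepsilon}$ enters, guaranteeing the fluctuation is not too small to dominate the remainders. The term $(\bm{\Psi}_{U})_{i,\cdot}\bm{R}_{U}^{\top}\widehat{\bm{\Sigma}}$ is handled by the row‑wise bound of Theorem~\ref{Thm UV 1st approx row-wise error}(i), whose three pieces --- carrying $(\vartheta_{i}\vartheta)^{-1}$, $\theta^{-2}$ together with $\theta^{-1}\|\bm{U}_{i,\cdot}\|_{2}$, and $(\vartheta_{i}\vartheta)^{-1}\|\bm{U}\|_{2,\infty}$ --- are exactly what the SNR hypotheses on $\vartheta_{i}$ and $\theta$ (and the lower bound on $T$) are designed to kill; the term $\bm{U}_{i,\cdot}\bm{R}_{U}^{\top}\bm{\Xi}$ is handled by an operator‑norm bound on $\bm{\Xi}=\widehat{\bm{\Sigma}}-\bm{R}_{U}\bm{\Lambda}\bm{R}_{V}^{\top}$ obtained from the same expansion plus matrix concentration (notably for $\bm{U}^{\top}\bm{E}\bm{V}$ and the eigenvalue gap $\widehat{\bm{\Sigma}}^{2}-\bm{R}_{U}\bm{\Lambda}^{2}\bm{R}_{U}^{\top}$), the analysis being organized --- as in Theorem~\ref{Thm UV 1st approx row-wise error} --- so as to keep all constants free of the condition number $\kappa=\sigma_{1}/\sigma_{r}$, and then multiplied by $\|\bm{U}_{i,\cdot}\|_{2}=\|\bar{\bm{U}}_{i,\cdot}\|_{2}$ (of the benign order $\sqrt{r/N}$ for a non‑degenerate loading matrix). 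On an event of probability $1-O(n^{-2})$ this yields $(\widehat{\bm{B}}-\bm{B}\bm{R}_{B})_{i,\cdot}=(\bm{G}_{U})_{i,\cdot}\bm{R}_{U}^{\top}\widehat{\bm{\Sigma}}+\bm{\Delta}_{i}$ with $\|\bm{\Delta}_{i}\|_{2}\le\varepsilon_{1}=o\bigl(\delta\sqrt{(\bm{\Sigma}_{\varepsilon})_{i,i}/T}\bigr)$.

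The third step is a conditional central limit theorem for the leading term, followed by the usual stitching. Conditionally on $\bm{F}$ (so that $\bm{V},\bm{\Lambda}$ are fixed),
\[
(\bm{G}_{U})_{i,\cdot}\bm{\Lambda}=T^{-1/2}\bm{E}_{i,\cdot}\bm{V}=T^{-1/2}\sum_{k=1}^{N}(\bm{\Sigma}_{\varepsilon}^{1/2})_{i,k}\,(\bm{Z}\bm{V})_{k,\cdot}
\]
is a sum of $N$ independent, mean‑zero, sub‑Gaussian vectors in $\mathbb{R}^{r}$ whose total covariance is \emph{exactly} $T^{-1}(\bm{\Sigma}_{\varepsilon})_{i,i}\bm{I}_{r}=\bm{\Sigma}_{B,i}$: cross terms vanish by independence of the rows of $\bm{Z}$, each summand is isotropic because $\bm{V}^{\top}\bm{V}=\bm{I}_{r}$, and $\sum_{k}(\bm{\Sigma}_{\varepsilon}^{1/2})_{i,k}^{2}=(\bm{\Sigma}_{\varepsilon})_{i,i}$. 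A Bentkus‑type multivariate Berry--Esseen inequality over the convex‑set class $\mathscr{C}^{r}$ then gives $\sup_{\mathcal{C}\in\mathscr{C}^{r}}\bigl|\mathbb{P}(T^{-1/2}\bm{E}_{i,\cdot}\bm{V}\in\mathcal{C}\mid\bm{F})-\mathbb{P}(\mathcal{N}(0,\bm{\Sigma}_{B,i})\in\mathcal{C})\bigr|\le\varepsilon_{2}$, with $\varepsilon_{2}$ controlled by the third‑moment ratio (non‑degeneracy via $(\bm{\Sigma}_{\varepsilon})_{i,i}>c_{\varepsilon}$), hence $\varepsilon_{2}\ll\delta$ under the stated lower bound on $T$. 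The leading term is this vector post‑multiplied by $\bm{\Lambda}^{-1}\bm{R}_{U}^{\top}\widehat{\bm{\Sigma}}$, which is close to an orthogonal matrix (the discrepancy being driven by $\widehat{\bm{\Sigma}}^{2}-\bm{R}_{U}\bm{\Lambda}^{2}\bm{R}_{U}^{\top}$, of smaller order by top‑$r$ eigenvalue perturbation); since $\bm{\Sigma}_{B,i}$ is isotropic this (approximately) orthogonal multiplication leaves the Gaussian limit unchanged. Combining $\varepsilon_{2}$, a Gaussian anti‑concentration bound over convex boundaries to absorb the $\varepsilon_{1}$‑perturbation (using $\lambda_{\min}(\bm{\Sigma}_{B,i})\ge c_{\varepsilon}/T$), and the $O(n^{-2})$ failure probability yields $\sup_{\mathcal{C}\in\mathscr{C}^{r}}|\mathbb{P}((\widehat{\bm{B}}-\bm{B}\bm{R}_{B})_{i,\cdot}\in\mathcal{C})-\mathbb{P}(\mathcal{N}(0,\bm{\Sigma}_{B,i})\in\mathcal{C})|\le\delta$.

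I expect the main obstacle to be the data‑dependent, nearly orthogonal factor $\bm{\Lambda}^{-1}\bm{R}_{U}^{\top}\widehat{\bm{\Sigma}}$ (equivalently, the rotation $\bm{R}_{V}$ hiding in $\bm{B}\bm{R}_{B}=\bm{U}\bm{\Lambda}\bm{R}_{V}^{\top}$): it is correlated with the very noise row $\bm{E}_{i,\cdot}$ that drives the fluctuation, and one cannot decouple it by the usual leave‑one‑out device (zeroing out the $i$‑th row of $\bm{Z}$), since under cross‑sectional correlation that surrogate still depends on $\bm{E}_{i,\cdot}$. The decoupling must instead be effected through matrix‑concentration control of quantities such as $\bm{U}^{\top}\bm{E}\bm{V}$, $\bm{E}_{i,\cdot}\bm{E}^{\top}\bm{U}$, and $\widehat{\bm{\Sigma}}^{2}-\bm{R}_{U}\bm{\Lambda}^{2}\bm{R}_{U}^{\top}$ --- the very mechanism behind Theorem~\ref{Thm UV 1st approx row-wise error}. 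A secondary, purely quantitative difficulty is that every remainder has to be pushed below the small scale $\sqrt{(\bm{\Sigma}_{\varepsilon})_{i,i}/T}$ rather than merely $o(1)$, which is exactly what forces the somewhat intricate calibration of the SNR requirements (the row‑specific $\vartheta_{i}$, the global $\theta$, and the lower bound on $T$) and the role of the hypothesis $\lambda_{\min}(\bm{\Sigma}_{\varepsilon})>c_{\varepsilon}$.
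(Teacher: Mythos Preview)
Your proposal is correct and follows essentially the same route as the paper, which omits the proof of this corollary and simply says it is ``similar to that for the factors'' (Corollary~\ref{Thm F inference}). Your decomposition $(\widehat{\bm{B}}-\bm{B}\bm{R}_{B})_{i,\cdot}=(\bm{G}_{U})_{i,\cdot}\bm{R}_{U}^{\top}\widehat{\bm{\Sigma}}+\bm{U}_{i,\cdot}\bm{R}_{U}^{\top}\bm{\Xi}+(\bm{\Psi}_{U})_{i,\cdot}\bm{R}_{U}^{\top}\widehat{\bm{\Sigma}}$ is exactly what the paper uses (in the equivalent form obtained by right-multiplying by the orthogonal $\bm{R}_{V}$, so that the leading term becomes $(\bm{G}_{U})_{i,\cdot}\bm{\Lambda}=T^{-1/2}\bm{E}_{i,\cdot}\bm{V}$ and the remainder carries $(\widehat{\bm{U}}\bm{R}_{U})_{i,\cdot}\varepsilon_{\bm{\Sigma}}$ with $\varepsilon_{\bm{\Sigma}}=\bm{R}_{U}^{\top}\widehat{\bm{\Sigma}}\bm{R}_{V}-\bm{\Lambda}$; this is the form used in the proofs of Lemma~\ref{Lemma B two sample prepare} and Theorem~\ref{Thm B row norm}). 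One small difference worth noting: in the Berry--Esseen step you write $T^{-1/2}\bm{E}_{i,\cdot}\bm{V}$ as a sum over $k=1,\ldots,N$ of the rows $(\bm{\Sigma}_{\varepsilon}^{1/2})_{i,k}(\bm{Z}\bm{V})_{k,\cdot}$, whereas the paper's analogous arguments (e.g.\ Step~2 of Lemma~\ref{Lemma B two sample prepare}) sum over $t=1,\ldots,T$ via $T^{-1/2}\sum_{t}(\bm{a}\bm{Z}_{\cdot,t})\bm{V}_{t,\cdot}$; the latter yields the third-moment bound $\lesssim r\|\bm{V}\|_{2,\infty}\lesssim r\sqrt{\log n/T}$, which is what produces the condition $T\geq C_{0}\delta^{-2}r^{2}\log n$ in the statement. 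Your sum over $k$ is equally valid but gives a bound in terms of $\max_{k}|(\bm{\Sigma}_{\varepsilon}^{1/2})_{i,k}|/\sqrt{(\bm{\Sigma}_{\varepsilon})_{i,i}}$ instead. Finally, the ``main obstacle'' you flag---the data-dependent factor $\bm{\Lambda}^{-1}\bm{R}_{U}^{\top}\widehat{\bm{\Sigma}}$---is real, and the paper handles it in the same way you suggest: push the deviation $\bm{\Lambda}^{-1}\bm{R}_{U}^{\top}\widehat{\bm{\Sigma}}\bm{R}_{V}-\bm{I}_{r}=\bm{\Lambda}^{-1}\varepsilon_{\bm{\Sigma}}$ into the remainder via Lemma~\ref{Lemma Sigma hat tilde H R}, then exploit that the limiting covariance $\bm{\Sigma}_{B,i}=T^{-1}(\bm{\Sigma}_{\varepsilon})_{i,i}\bm{I}_{r}$ is isotropic so the residual orthogonal factor $\bm{R}_{V}^{\top}$ leaves both the target law and the class $\mathscr{C}^{r}$ invariant.
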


Let us look at the assumptions about the SNRs $\theta$ and $\vartheta_{i}$
in Corollaries \ref{Thm F inference} and \ref{Thm B inference}.
Consider the setting where $r\asymp1$, $N\asymp T$, $c_{\varepsilon}\asymp1$,
$\Vert\bm{\Sigma}_{\varepsilon}^{1/2}\Vert_{1}\asymp\Vert\bm{\Sigma}_{\varepsilon}^{1/2}\Vert_{2}\asymp1$,
$\Vert(\bm{\Sigma}_{\varepsilon}^{1/2})_{i,\cdot}\Vert_{1}\asymp1$,
and $\bm{\Sigma}_{\varepsilon}$ is well-conditioned. Then, the SNR
condition in Corollary \ref{Thm F inference} (resp. Corollary \ref{Thm B inference})
is equivalent to $\theta\gg\log^{3/2}N$ (resp. $\theta\gg\log N$
and $\vartheta_{i}\gg\log^{3/2}N$), indicating that inference of
factors and factor loadings is achievable as long as the SNR grows
faster than a polynomial rate of $\log N$. Our SNR condition is less
restrictive than the prior work \citep[e.g.,][]{BaiNg2023PCA,Jiang2023PCA,ChoiMing2024PCA}
that required the SNR to grow with a polynomial rate of $N$. Regardless
of log factors, our SNR condition for inference is equivalent to $\theta\gg1$,
which is optimal since it matches the same condition required for
consistency as commented after Corollary \ref{corollary:error bound B F}.
Also, in the special case where $\bm{\Sigma}_{\varepsilon}$ is diagonal,
i.e., there is no correlation in the noise matrix, our SNR condition
for inference matches the minimal condition required for the subspace
inference \citep[e.g.,][]{yan2021inference}.

Both Corollaries \ref{Thm F inference} and \ref{Thm B inference}
are stated in a non-asymptotic sense, and can be easily translated
to asymptotic normality results. In practice, the confidence regions
for the factors and the factor loadings can be constructed by replacing
the asymptotic covariance matrices $\bm{\Sigma}_{F,t}$ and $\bm{\Sigma}_{B,i}$
with their consistent estimators: 
\[
\widehat{\bm{\Sigma}}_{F,t}=\widehat{\bm{\Sigma}}^{-1}\widehat{\bm{U}}^{\top}\widehat{\bm{\bm{\Sigma}}}_{\varepsilon}^{\tau}\widehat{\bm{U}}\widehat{\bm{\Sigma}}^{-1}\qquad\text{and}\qquad\widehat{\bm{\Sigma}}_{B,i}=\frac{1}{T}(\widehat{\bm{\bm{\Sigma}}}_{\varepsilon}^{\tau})_{i,i}\bm{I}_{r},
\]
respectively. Here, $\widehat{\bm{\bm{\Sigma}}}_{\varepsilon}^{\tau}$
is the estimator of the noise covariance matrix which we will introduce
in detail in Section \ref{sec:noise cov estimation}. The intuition
behind the above consistent estimators is that, $\widehat{\bm{U}}$
and $\widehat{\bm{\Sigma}}$ are the consistent estimators of $\bm{U}\bm{R}_{U}$
and $\bm{R}_{U}^{\top}\bm{\Lambda}\bm{R}_{V}$, respectively, and
we will show this fact in the proof of Theorem \ref{Thm UV 1st approx row-wise error}. 

\section{Comparison with previous work\label{sec:compare_SNR}}

In this section, we compare our main results with related ones established
in prior works \citet{BaiNg2023PCA,Jiang2023PCA,ChoiMing2024PCA}.
All of them studied the PC estimators in weak factor models, and established
estimation error bounds and asymptotic normality under different assumptions.
Distinct from these three papers, all our results are entirely non-asymptotic,
providing finite-sample characterizations for both the estimation
error and the uncertainty level of statistical inference. From a technical
viewpoint, in the regime $N\asymp T$, a notable advancement of our
theory is that, our assumptions for inference require the growth rate
of SNR to be faster than a polynomial rate of $\log N$, unlike the
polynomial rate of $N$ required in these three papers. Also, all
these three papers assume that the $r$ singular values $\{\sigma_{i}\}_{i=1}^{r}$
of factor loading $\bm{B}$ are distinct, while our theory does not
require such eigengap condition.

Having provided an overview of the advantages of our theory, we now
make the comparisons in detail. 
\begin{itemize}
\item \citet{BaiNg2023PCA} studied both the homogeneous case where $\sigma_{i}^{2}\asymp N^{\alpha}$
for $i=1,2,\ldots,r$, and the heterogeneous case where $\sigma_{i}^{2}\asymp N^{\alpha_{i}}$
with $1\geq\alpha_{1}\geq\alpha_{2}\geq\ldots\geq\alpha_{r}>0$. They
required $\alpha_{r}>1/2$ to prove the asymptotic normality for inference.
However, the case when $0<\alpha_{r}\leq1/2$ has not been covered
by their inferential theory. Our theory fills this gap and establishes
the inference results even when $\sigma_{r}$ do not grow with a polynomial
rate of $N$. Besides the inference, our theory provides row-wise
estimation error bounds for factors and factor loadings as detailed
in Section \ref{sec:F B inference estimate}, while the bounds in
\citet{BaiNg2023PCA} are Frobenius norm bounds measuring the average
error over all rows. The consistency and asymptotic normality results
in \citet{Jiang2023PCA} are similar to those in \citet{BaiNg2023PCA},
while the focus of \citet{Jiang2023PCA} is to identify the so-called
pseudo-true parameter that is consistently estimated by the PCA estimator.
We note that both their frameworks accommodate temporal correlation,
an aspect not covered by our theory. 
\item \citet{ChoiMing2024PCA} adopted the leave-one-out
analysis similar in spirit to those used in matrix completion to investigate
the PCA estimators. The setup they studied is the homogeneous case
in \citet{BaiNg2023PCA} because they assumed that $N^{-\alpha}\bm{B}^{\top}\bm{B}\rightarrow\bm{\Sigma}_{\bm{B}}$,
which requires $\sigma_{i}^{2}\asymp N^{\alpha}$ for all the singular
values $\{\sigma_{i}\}_{i=1,2,\ldots,r}$ of factor loading $\bm{B}$.
When temporal correlation does not exist, similar to our results,
they also filled the gap to establish the inference results for $0<\alpha\leq1/2$.
In comparison, our theory is fully non-asymptotic and does not assume
any asymptotic growth assumptions on the singular values. Also, the
homogeneous case they studied assumes that the condition number $\kappa=\sigma_{1}/\sigma_{r}$
satisfies that $\kappa\asymp1$, while our theory does not need this
assumption and allows any growth rate of the condition number $\kappa$. 
\end{itemize}
To enhance the clarity of the comparison of the SNR assumption for
inference, we detailed the results in Table \ref{table:SNR normality}
under a specific setup: Assumptions \ref{Assump_Bf_identification},
\ref{Assump_noise_Z_entries}, \ref{Assump_factor_f} hold; The cross-sectional
and temporal dimensions satisfy that $N\asymp T\asymp n$; the number
of factors satisfies $r\asymp1$; the noise covariance matrix satisfies
$\kappa_{\varepsilon}\asymp1$, $\left\Vert \bm{\Sigma}_{\varepsilon}\right\Vert _{2}\asymp1$,
and $\Vert\bm{\Sigma}_{\varepsilon}^{1/2}\Vert_{1}\asymp1$. Since
the scale of the noise is reflected by $\bm{\Sigma}_{\varepsilon}$
and in our current setup we have that $\left\Vert \bm{\Sigma}_{\varepsilon}\right\Vert _{2}\asymp1$
and $\Vert\bm{\Sigma}_{\varepsilon}^{1/2}\Vert_{1}\asymp1$, the assumptions
on the SNRs $\theta=\sigma_{r}/\sqrt{\left\Vert \bm{\Sigma}_{\varepsilon}\right\Vert _{2}}$
and $\vartheta=\sigma_{r}/\Vert\bm{\Sigma}_{\varepsilon}^{1/2}\Vert_{1}$
defined in Section \ref{sec:main_1st_approx} are fully represented
by the growth rate of the smallest singular value $\sigma_{r}$ of
factor loading matrix $\bm{B}$.

\begin{table}[h]
\caption{SNR assumptions for inference of factor and factor loading\label{table:SNR normality}}

\centering

\begin{tabular}{c|c|c}
\hline 
 & \multirow{2}{*}{Factor} & \multirow{2}{*}{Factor loading}\tabularnewline
 &  & \tabularnewline
\hline 
\multirow{2}{*}{\citet{BaiNg2023PCA}} & \multirow{2}{*}{$\sigma_{r}\gg N^{1/4}$} & \multirow{2}{*}{$\sigma_{r}\gg N^{1/4}$}\tabularnewline
 &  & \tabularnewline
\hline 
\multirow{2}{*}{\citet{Jiang2023PCA}} & \multirow{2}{*}{$\sigma_{r}\gg\max(N^{1/4},\kappa^{2})$} & \multirow{2}{*}{$\sigma_{r}\gg\kappa^{1/2}N^{1/4}$}\tabularnewline
 &  & \tabularnewline
\hline 
\multirow{2}{*}{\citet{ChoiMing2024PCA}} & \multirow{2}{*}{$\sigma_{r}\asymp N^{\alpha/2}\text{ with }\alpha>0$} & \multirow{2}{*}{$\sigma_{r}\gg N^{1/6}\log^{\omega/6}N$}\tabularnewline
 &  & \tabularnewline
\hline 
\multirow{2}{*}{Our Theory} & \multirow{2}{*}{$\sigma_{r}\gg\log^{3/2}N$} & \multirow{2}{*}{$\sigma_{r}\gg\log^{3/2}N$}\tabularnewline
 &  & \tabularnewline
\hline 
\end{tabular}
\end{table}

Table \ref{table:SNR normality} demonstrates the advancement of our
theory. For the conditions of \citet{ChoiMing2024PCA} in Table \ref{table:SNR normality},
the parameter $\omega>0$ is defined in Assumption B'' (iv) therein
on the noise structure and they need two additional growth rate assumptions
on $\sigma_{r}$ (see Assumption D'' (i) therein). For other regimes
where $T\ll N$ or $N\ll T$, similar comparisons can be made, so
we omit the details for the limit of space. 

\section{Applications in econometrics\label{sec:3_econ_applications}}

Our first-order approximations do not only advance the existing theories
on the PCA under the weak factor model, but also pave the path for
various statistical tests that are useful in economics and finance.
In this section, we show four applications of our results.

\subsection{The factor specification tests}

Recall that the factor model is given by $\bm{X}=\bm{BF}^{\top}+\bm{E}$,
where the matrix $\bm{F}$ is the realization of latent factors. Suppose
we have time series data of some observed factors, e.g., the Fama-French
factors. Our focus is on testing if the observed factor is in the
linear space spanned by the latent factors $\bm{F}$. In particular,
we examine this linear dependence in a flexible range of the whole
period $[1,T]$. Formally, we consider an index set $S=\{t_{1},t_{2},\ldots,t_{|S|}\}\subseteq\{1,2,\ldots,T\}$
of interest, and we have the data $\bm{v}=(v_{t_{1}},v_{t_{2}},\ldots,v_{t_{|S|}})^{\top}\in\mathbb{R}^{|S|}$
of an observed factor recorded at the time index set $S$. We test
the hypothesis as follows, 
\begin{equation}
H_{0}:\qquad\text{There exists }\bm{w}\in\mathbb{R}^{r}\text{ such that }\bm{v}=\bm{F}_{S,\cdot}\bm{w}.\label{Null H0 factor test subset}
\end{equation}
Under the null hypothesis $H_{0}$, we have that $v_{t}=\bm{F}_{t,\cdot}\bm{w}=\bm{f}_{t}^{\top}\bm{w}$
for any $t\in S$.

Under the strong factor model where the pervasiveness assumption holds,
\citet{BaiNg2006factor} studied this problem and designed test statistics
for the whole set, i.e., $S=\{1,2,\ldots,T\}$. Our study extends
to the case when $S$ is a subset of the whole time span $\{1,2,\ldots,T\}$
under the weak factor model. The subset $S$ can be any specific time
window of interest. This scenario is economically meaningful because
the relationship $v_{t}=\bm{f}_{t}^{\top}\bm{w}$ may only be valid
for relatively short periods, not necessarily across the entire span;
see \citet{BaiNg2006factor} for the connections between the CAPM
analysis and this problem. In Section \ref{sec:Numerical-experiments},
we show that our factor specification test results strikingly reconcile
with the economic cycles and financial crisis.

Our test statistic relies on the estimation of the noise covariance
matrix $\bm{\Sigma}_{\varepsilon}$. We denote by $\widehat{\bm{\bm{\Sigma}}}_{\varepsilon}^{\tau}$
the estimator of $\bm{\Sigma}_{\varepsilon}$ and we will elaborate
its construction in the next section. In Theorem \ref{Thm factor test plug-in Chi-sq}
below, we keep the error bound $\Vert\widehat{\bm{\bm{\Sigma}}}_{\varepsilon}^{\tau}-\bm{\Sigma}_{\varepsilon}\Vert_{2}$
in the final result.

\begin{theorem}\label{Thm factor test plug-in Chi-sq} For any given
target error level $\delta>0$, assume there exist some universal
constants $c_{0},C_{0}>0$ and sufficiently large constant $C_{1}>0$
such that, $n\geq C_{1}r\log n$, $|S|\geq C_{1}(r+\log n)$, $T\geq C_{0}|S|\delta^{-2}\kappa_{\varepsilon}^{2}r^{2}\log^{4}n$,
\begin{equation}
\theta\geq C_{0}\delta^{-1}\frac{n}{T}\sqrt{|S|}\kappa_{\varepsilon}^{4}r^{2}\log^{4}n,\qquad\text{and}\qquad\frac{\Vert\bm{\Sigma}_{\varepsilon}^{1/2}\bar{\bm{U}}\Vert_{2,\infty}}{\Vert\bm{\Sigma}_{\varepsilon}^{1/2}\Vert_{2}}\leq c_{0}\delta\kappa_{\varepsilon}^{-1}\sqrt{\frac{T}{n}}|S|^{-3/2};\label{assemble condition SNR and data size}
\end{equation}
and that for the estimator $\widehat{\bm{\bm{\Sigma}}}_{\varepsilon}^{\tau}$
of the noise covariance matrix $\bm{\Sigma}_{\varepsilon}$, it holds
\[
\frac{1}{\left\Vert \bm{\Sigma}_{\varepsilon}\right\Vert _{2}}\big\Vert\widehat{\bm{\Sigma}}_{\varepsilon}^{\tau}-\bm{\Sigma}_{\varepsilon}\big\Vert_{2}\leq c_{0}\delta|S|^{-1/2}\kappa_{\varepsilon}^{-4}r^{-1}\log^{-2}n.
\]
Then, under Assumptions \ref{Assump_Bf_identification}, \ref{Assump_noise_Z_entries},
\ref{Assump_factor_f}, and the null hypothesis $H_{0}$ in (\ref{Null H0 factor test subset}),
we have that 
\begin{equation}
\left|\mathbb{P}\left(\mathfrak{\widehat{\mathcal{T}}}(S,\bm{v})\leq\chi_{1-\alpha}^{2}(|S|-r)\right)-(1-\alpha)\right|\leq\delta,\label{Chi-square stat final inequality}
\end{equation}
where the test statistic $\mathfrak{\widehat{\mathcal{T}}}(S,\bm{v})$
is defined by 
\[
\mathfrak{\widehat{\mathcal{T}}}(S,\bm{v}):=\frac{1}{\widehat{\phi}}\bm{v}^{\top}(\bm{I}_{|S|}-\bm{P}_{\widehat{\bm{V}}_{S,\cdot}})\bm{v}\qquad\text{with}\qquad\widehat{\phi}:=\frac{1}{T}((\widehat{\bm{V}}_{S,\cdot})^{+}\bm{v})^{\top}\widehat{\bm{\Sigma}}^{-1}\widehat{\bm{U}}^{\top}\widehat{\bm{\bm{\Sigma}}}_{\varepsilon}^{\tau}\widehat{\bm{U}}\widehat{\bm{\Sigma}}^{-1}(\widehat{\bm{V}}_{S,\cdot})^{+}\bm{v}.
\]
\end{theorem}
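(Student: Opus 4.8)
The plan is to establish, under $H_0$, three approximations: (a) $\bm v^\top(\bm I_{|S|}-\bm P_{\widehat{\bm V}_{S,\cdot}})\bm v$ equals, up to a multiplicative $1+o(\delta/\sqrt{|S|})$ factor, the quadratic form $\|(\bm I_{|S|}-\bm P_{\bm V_{S,\cdot}})(\bm G_V)_{S,\cdot}\bm a\|_2^2$ for a suitable $\bm a\in\RR^r$; (b) once normalized by $\phi_0/T$ with $\phi_0:=\bm a^\top\bm\Lambda^{-1}\bm U^\top\bm\Sigma_\varepsilon\bm U\bm\Lambda^{-1}\bm a$, this quadratic form is within Kolmogorov distance $O(\delta)$ of a $\chi^2_{|S|-r}$ law; and (c) $\widehat\phi=(1+o(\delta/\sqrt{|S|}))\,\phi_0/T$. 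Combining (a)--(c) yields $d_{\mathrm K}(\widehat{\mathcal T}(S,\bm v),\chi^2_{|S|-r})\lesssim\delta$, and evaluating the CDF at $q=\chi^2_{1-\alpha}(|S|-r)$ gives \eqref{Chi-square stat final inequality}. Throughout I would condition on $\bm F$ and work on a high-probability event (using $|S|\gtrsim r+\log n$, $T\gtrsim r+\log n$, sub-Gaussianity of $\{\bm f_t\}$) on which $\bm V=T^{-1/2}\bm F\bm J$ for an invertible $\bm J$, $\sigma_{\min}(\bm V_{S,\cdot})\asymp\sqrt{|S|/T}$ and $\|\bm V\|_{2,\infty}\lesssim\sqrt{r\log n/T}$; the remaining randomness is that of $\bm E=\bm\Sigma_\varepsilon^{1/2}\bm Z$, and we also work on the good event of Theorem~\ref{Thm UV 1st approx row-wise error} and on one where the assumed bound on $\|\widehat{\bm\Sigma}_\varepsilon^\tau-\bm\Sigma_\varepsilon\|_2$ holds.

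\textbf{(a) Algebraic reduction.} Under $H_0$, $\bm v=\bm F_{S,\cdot}\bm w=\bm V_{S,\cdot}\bm a$ with $\bm a:=\sqrt T\,\bm J^{-1}\bm w$, so $(\bm I_{|S|}-\bm P_{\bm V_{S,\cdot}})\bm v=\zero$ and $\bm v^\top(\bm I_{|S|}-\bm P_{\widehat{\bm V}_{S,\cdot}})\bm v=\|(\bm I_{|S|}-\bm P_{\widehat{\bm V}_{S,\cdot}})\bm v\|_2^2$. On the good event $\widehat{\bm V}_{S,\cdot}$ has full column rank, hence $(\bm I_{|S|}-\bm P_{\widehat{\bm V}_{S,\cdot}})\widehat{\bm V}_{S,\cdot}=\zero$, and restricting \eqref{V 1st-order approx hat til} to the rows in $S$ gives
\[
(\bm I_{|S|}-\bm P_{\widehat{\bm V}_{S,\cdot}})\bm v=-(\bm I_{|S|}-\bm P_{\widehat{\bm V}_{S,\cdot}})\big[(\bm G_V)_{S,\cdot}+(\bm\Psi_V)_{S,\cdot}\big]\bm a .
\]
I would then control the spectral norm $\|(\bm G_V)_{S,\cdot}\|_2\lesssim\sqrt{|S|/T}\,\theta^{-1}$ (sub-Gaussian matrix bound) and the two-sided estimate $\|(\bm G_V)_{S,\cdot}\bm a\|_2^2\asymp|S|\phi_0/T$ (concentration of a sum of $|S|$ i.i.d.\ nonnegative terms), use a $\sin\Theta$-type bound $\|\bm P_{\widehat{\bm V}_{S,\cdot}}-\bm P_{\bm V_{S,\cdot}}\|_2\lesssim\theta^{-1}+(\text{negligible})$ to replace $\bm P_{\widehat{\bm V}_{S,\cdot}}$ by $\bm P_{\bm V_{S,\cdot}}$, and discard $(\bm\Psi_V)_{S,\cdot}\bm a$ via the row-wise bound of Theorem~\ref{Thm UV 1st approx row-wise error}. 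The SNR hypothesis on $\theta$ --- which carries the extra $\sqrt{|S|}$ factor precisely so that these errors are $o(\delta/\sqrt{|S|})$ relative to the numerator --- makes this go through, leaving the claimed approximation.

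\textbf{(b) $\chi^2$ limit and (c) studentization.} Write $(\bm G_V)_{S,\cdot}\bm a=T^{-1/2}(\bm E_{\cdot,S})^\top\bm\xi$ with $\bm\xi:=\bm U\bm\Lambda^{-1}\bm a$; its coordinates $t\in S$ are independent (columns of $\bm E$ are independent), each equal to $T^{-1/2}(\bm\Sigma_\varepsilon^{1/2}\bm\xi)^\top\bm z_t$, a normalized sum of $N$ independent sub-Gaussians with common variance $\phi_0/T$. Since $\mathrm{col}(\bm U)=\mathrm{col}(\bar{\bm U})$ (Lemma~\ref{Lemma SVD BF good event}), writing $\bm\xi=\bar{\bm U}\bm c$ gives the pivotal bound $\epsilon_0:=\|\bm\Sigma_\varepsilon^{1/2}\bm\xi\|_\infty/\|\bm\Sigma_\varepsilon^{1/2}\bm\xi\|_2\le\sqrt{\kappa_\varepsilon}\,\|\bm\Sigma_\varepsilon^{1/2}\bar{\bm U}\|_{2,\infty}/\|\bm\Sigma_\varepsilon^{1/2}\|_2\lesssim\delta\kappa_\varepsilon^{-1/2}\sqrt{T/n}\,|S|^{-3/2}$ by the incoherence-type hypothesis. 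A Komlós--Major--Tusnády/Sakhanenko-type strong approximation applied coordinatewise yields a coupling of $\bm\zeta:=\sqrt{T/\phi_0}\,(\bm G_V)_{S,\cdot}\bm a$ with $\bm g\sim\mathcal N(0,\bm I_{|S|})$ for which $\|\bm\zeta-\bm g\|_2\lesssim\epsilon_0\sqrt{|S|}\log(1/\epsilon_0)$ with high probability; since $\bm P^\perp:=\bm I_{|S|}-\bm P_{\bm V_{S,\cdot}}$ is a fixed rank-$(|S|-r)$ projection given $\bm F$ and $\bm g^\top\bm P^\perp\bm g\sim\chi^2_{|S|-r}$, expanding $\bm\zeta^\top\bm P^\perp\bm\zeta-\bm g^\top\bm P^\perp\bm g$ and invoking Gaussian anti-concentration to pass to CDFs (the $\chi^2$ density being $\lesssim|S|^{-1/2}$ on its bulk) gives $d_{\mathrm K}((T/\phi_0)\|\bm P^\perp(\bm G_V)_{S,\cdot}\bm a\|_2^2,\chi^2_{|S|-r})\lesssim\epsilon_0\sqrt{|S|}\log(1/\epsilon_0)\lesssim\delta$. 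For (c), from $\bm v=\bm V_{S,\cdot}\bm a$ and \eqref{V 1st-order approx hat til} one gets $(\widehat{\bm V}_{S,\cdot})^+\bm v=\bm R_V\bm a+\bm\epsilon$ with $\|\bm\epsilon\|$ controlled relative to $\sqrt{\phi_0}$ (using $(\widehat{\bm V}_{S,\cdot})^+(\bm G_V)_{S,\cdot}\bm a=(\widehat{\bm V}_{S,\cdot})^+\bm P_{\widehat{\bm V}_{S,\cdot}}(\bm G_V)_{S,\cdot}\bm a$ and $\|\bm P_{\bm V_{S,\cdot}}(\bm G_V)_{S,\cdot}\bm a\|_2\lesssim\sqrt{r\phi_0/T}$), together with $\widehat{\bm\Sigma}^{-1}\widehat{\bm U}^\top\widehat{\bm\Sigma}_\varepsilon^\tau\widehat{\bm U}\widehat{\bm\Sigma}^{-1}\to\bm\Sigma_{F,t}=\bm R_V\bm\Lambda^{-1}\bm U^\top\bm\Sigma_\varepsilon\bm U\bm\Lambda^{-1}\bm R_V^\top$ (consistency of $\widehat{\bm U},\widehat{\bm\Sigma}$ from Theorem~\ref{Thm UV 1st approx row-wise error}/Corollary~\ref{Thm F inference}, and the accuracy of $\widehat{\bm\Sigma}_\varepsilon^\tau$ entering through $\|\widehat{\bm\Sigma}_\varepsilon^\tau-\bm\Sigma_\varepsilon\|_2/\lambda_{\min}(\bm\Sigma_\varepsilon)=\kappa_\varepsilon\|\widehat{\bm\Sigma}_\varepsilon^\tau-\bm\Sigma_\varepsilon\|_2/\|\bm\Sigma_\varepsilon\|_2$); since $(\bm R_V\bm a)^\top\bm\Sigma_{F,t}(\bm R_V\bm a)=\phi_0$ because $\bm R_V^\top\bm R_V=\bm I_r$, this yields $\widehat\phi=(1+o(\delta/\sqrt{|S|}))\phi_0/T$. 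Chaining (a)--(c) (the $O(n^{-2})$ exceptional probabilities being $\lesssim\delta$ since $T\gtrsim|S|\delta^{-2}\log^4 n$ and $n\ge T$) gives the theorem.

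\textbf{Main obstacle.} The crux is (b) together with the precision bookkeeping threading through all three parts: because the target is a $\chi^2_{|S|-r}$ whose bulk fluctuates on scale $\sqrt{|S|}$, each approximation must hold to relative order $\delta/\sqrt{|S|}$ on both $\bm v^\top(\bm I_{|S|}-\bm P_{\widehat{\bm V}_{S,\cdot}})\bm v$ and $\widehat\phi$ --- this is exactly why the hypotheses carry the unusual $\sqrt{|S|}$, $|S|^{-3/2}$ and $\kappa_\varepsilon$ powers. Two points need real care. First, the coordinatewise coupling of the merely sub-Gaussian entries of $(\bm G_V)_{S,\cdot}\bm a$ to Gaussians with error linear in $\epsilon_0$: a crude quantile coupling is insufficient since a normalized sum has density decaying like $e^{-y^2/2}$, so one genuinely needs the strong-approximation argument that exploits the independence across the $N$ coordinates (a multivariate Berry--Esseen of Bentkus/Zaitsev type would cost too large a power of $|S|$). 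Second, keeping the ratio $\widehat{\mathcal T}(S,\bm v)$ accurate to $\delta/\sqrt{|S|}$ while avoiding condition-number ($\kappa=\sigma_1/\sigma_r$) dependence: one must exploit the (approximately) multiplicative nature of the perturbation of $\widehat{\bm V}_{S,\cdot}$ along $\mathrm{col}(\bm V_{S,\cdot})$ and, where needed, analyse the numerator and $\widehat\phi$ jointly so that $\kappa$-dependences cancel rather than bounding them separately --- this is where the $\kappa$-free bounds of Theorem~\ref{Thm UV 1st approx row-wise error} are essential.
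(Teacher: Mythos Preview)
Your overall three-part architecture (algebraic reduction, $\chi^2$ approximation, studentization) matches the paper's, and parts (a) and (c) are essentially what the paper does: under $H_0$ one has $(\bm I_{|S|}-\bm P_{\widehat{\bm V}_{S,\cdot}})\bm v=(\bm I_{|S|}-\bm P_{\widehat{\bm V}_{S,\cdot}})(\bm G_V+\bm\Psi_V)_{S,\cdot}(\bm V_{S,\cdot})^+\bm v$, the projection replacement $\bm P_{\widehat{\bm V}_{S,\cdot}}\to\bm P_{\bm V_{S,\cdot}}$ and the $\bm\Psi_V$ term are handled exactly as you describe, and $|\widehat\phi-\phi|$ is bounded by a straightforward $A_1B_1C_1D_1-A_0B_0C_0D_0$ perturbation argument using the consistency of $\widehat{\bm U},\widehat{\bm\Sigma},(\widehat{\bm V}_{S,\cdot})^+$ and the assumed accuracy of $\widehat{\bm\Sigma}_\varepsilon^\tau$.

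The genuine divergence is in (b). The paper does \emph{not} couple coordinatewise via KMT/Sakhanenko; instead it writes the first-order vector as a sum over the \emph{cross-sectional} index, $\bm H^\top\bm u=\sum_{k=1}^N m_k\,\bm H^\top(\bm Z_{k,S})^\top$ (with $\bm H\bm H^\top=\bm I_{|S|}-\bm P_{\bm V_{S,\cdot}}$), and applies a multivariate Berry--Esseen bound restricted to the class of Euclidean balls. Because Gaussian surface-area bounds for balls are dimension-free (Sazonov), the resulting constant does not pick up a power of $|S|$; the $|S|^{3/2}$ that appears in the final $s_3\lesssim\kappa_\varepsilon\sqrt{n/T}\,|S|^{3/2}\,\|\bm\Sigma_\varepsilon^{1/2}\bar{\bm U}\|_{2,\infty}/\|\bm\Sigma_\varepsilon^{1/2}\|_2$ comes only from $\mathbb E\|\bm Z_{k,S}\|_2^3\lesssim|S|^{3/2}$. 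Your assertion that ``a multivariate Berry--Esseen \ldots would cost too large a power of $|S|$'' is therefore a misconception: it is precisely what the paper uses, and it is the simpler route. Your KMT approach is correct in principle---the $|S|$ coordinates are indeed independent across $t\in S$, so independent couplings glue---and even yields a nominally sharper rate ($\sqrt{|S|}\,\epsilon_0$ versus $|S|^{3/2}\epsilon_0$ up to logs), but it is technically heavier and unnecessary under the stated hypotheses.

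One minor point: your closing remark that one must ``analyse the numerator and $\widehat\phi$ jointly so that $\kappa$-dependences cancel'' is over-cautious. Because Theorem~\ref{Thm UV 1st approx row-wise error} is already $\kappa$-free, bounding numerator and $\widehat\phi$ separately (as the paper does) introduces no $\sigma_1/\sigma_r$ dependence; no cancellation argument is needed.
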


\begin{remark}If all the entries of the noise matrix $\bm{Z}$ in
(\ref{noise matrix formula}) are Gaussian, then we do not need the
second assumption in \eqref{assemble condition SNR and data size}.
\end{remark}

The idea of the test statistic $\mathfrak{\widehat{\mathcal{T}}}(S,\bm{v})$
is to utilize the fact that, under the null hypothesis $H_{0}$, the
residual vector of $\bm{v}$ is zero after projection onto the column
space of $\bm{F}_{S,\cdot}$. Note that $\widehat{\bm{V}}_{S,\cdot}$
estimates the column space of $\bm{F}_{S,\cdot}$. So, we construct
$\mathfrak{\widehat{\mathcal{T}}}(S,\bm{v})$ by computing the $\ell_{2}$-norm
of the projection residual vector $\Vert(\bm{I}_{|S|}-\bm{P}_{\widehat{\bm{V}}_{S,\cdot}})\bm{v}\Vert_{2}^{2}=\bm{v}^{\top}(\bm{I}_{|S|}-\bm{P}_{\widehat{\bm{V}}_{S,\cdot}})\bm{v}$.
The component $\widehat{\phi}$ estimates the variance of $(\bm{I}_{|S|}-\bm{P}_{\widehat{\bm{V}}_{S,\cdot}})\bm{v}$,
and needs the estimator $\widehat{\bm{\bm{\Sigma}}}_{\varepsilon}^{\tau}$
of noise covariance matrix as input.

Our focus is on the regime where the size of subset $S$ is small. This is economically meaningful because the linear relationship between observed factors and latent factors usually holds only for a short horizon. Based on our distributional results in \eqref{Chi-square stat final inequality}, the case when $S$ is the whole period, i.e., $|S|=T$, can be handled straightforwardly by applying the Gaussian approximation to the Chi-square distribution. Our assumption on the size $|S|$ can be summarized as
$r+\log n\ll|S|\ll T/(\kappa_{\varepsilon}^{2}r^{2}\log^{4}n)$. On
the one hand, we require that $|S|$ cannot be too small so that $\bm{F}_{S,\cdot}$
is of full column rank and its singular value is close to $\sqrt{|S|}$,
which is crucial for our proof. On other hand, we require that $|S|$
cannot be too large, otherwise the null distribution $\chi^{2}(|S|-r)$
with mean $|S|-r$ would diverge in such a case, complicating the
proof of proximity between $\chi^{2}(|S|-r)$ and the law of test
statistic $\mathfrak{\widehat{\mathcal{T}}}(S,\bm{v})$.

The assumption on SNR is that $\theta\gg\frac{n}{T}\sqrt{|S|}\kappa_{\varepsilon}^{4}r^{2}\log^{4}n$.
Consider the setting where $r\asymp1$, $N\asymp T$, and $\bm{\Sigma}_{\varepsilon}$
is well-conditioned (i.e., $\kappa_{\varepsilon}\asymp1$). Then,
the SNR condition becomes $\theta\gg\sqrt{|S|}\log^{4}n$, illustrating
the effect of the subset size $|S|$ on the growth rate required for
SNR $\theta$. The $\sqrt{|S|}$ factor appears in the SNR because we use the Chi-square distribution as the null distribution to accommodate the general case when $|S|$ is small. When $|S|\asymp1$, regardless of the log factors,
the SNR condition is equivalent to $\theta\gg1$, which match the
minimal condition required for the consistent estimations of the left
and right singular subspaces if there is no correlation in the noise
matrix \citep[e.g.,][]{yan2021inference}.

\citet{BaiNg2006factor} studied the factor specification test under
the whole period case, i.e., $S=\{1,2,\ldots,T\}$. They derived the
approximate cumulative distribution function (CDF) for their test
statistic, but they did not provide a finite sample error bound for
it and their results were under the strong factor model. In comparison,
our results give a precise characterization of the error and adapt
to the weak factor model. In our case, both the low SNR in the weak
factor model and the subset $S$ in the hypothesis $H_{0}$ make it
challenging to design the test statistics. The subspace perturbation
bounds in Theorem \ref{Thm UV 1st approx row-wise error} enable us
to conquer these difficulties and establish non-asymptotic analysis
of our test statistic for any arbitrary subset $S$.

\subsubsection{Estimation of the noise covariance matrix $\bm{\Sigma}_{\varepsilon}$\label{sec:noise cov estimation}}

We accommodate the case where the cross-sectional dimension $N$ is
much larger than the temporal dimension $T$. In such case, estimating
the high-dimensional covariance matrix $\bm{\Sigma}_{\varepsilon}$
becomes challenging yet is vital for numerous statistical tests, including
the factor specification test and the two-sample test for betas discussed
in the next section. Following the approaches in, e.g., \citet{bickel2008covariance,fan2011covFsee,POET2013},
we assume that the error covariance matrix $\bm{\Sigma}_{\varepsilon}$
is sparse in a suitable sense and use the adaptive thresholding method
to estimate $\bm{\Sigma}_{\varepsilon}$. It's important to note that
the assumed sparsity of $\bm{\Sigma}_{\varepsilon}$ is solely for
our statistical tests and is not a required condition to establish
our main theories on the subspace perturbation bounds in Theorem \ref{Thm UV 1st approx row-wise error}.

\begin{assumption} \label{Assump_noise_cov_sparse}For some $q\in[0,1)$,
there exists a constant $s(\bm{\Sigma}_{\varepsilon})>0$ such that
\[
\max_{1\leq i\leq N}\sum_{j=1}^{N}((\bm{\Sigma}_{\varepsilon})_{i,i}(\bm{\Sigma}_{\varepsilon})_{j,j})^{(1-q)/2}|(\bm{\Sigma}_{\varepsilon})_{i,j}|^{q}\leq s(\bm{\Sigma}_{\varepsilon}).
\]
\end{assumption}

The assumption is slightly weaker than those in, e.g., \citet{bickel2008covariance,POET2013},
where they assumed that $\lambda_{\min}(\bm{\Sigma}_{\varepsilon})\asymp1$,
$\lambda_{\max}(\bm{\Sigma}_{\varepsilon})\asymp1$, and $\max_{1\leq i\leq N}\sum_{j=1}^{N}|(\bm{\Sigma}_{\varepsilon})_{i,j}|^{q}\leq s_{0}$
for some sparsity parameter $s_{0}$. In particular, for $q=0$, it
constrains on the maximum number of nonzero elements of $\bm{\Sigma}_{\varepsilon}$.
This sparsity assumption for the noise covariance $\bm{\Sigma}_{\varepsilon}$
is also natural in economics and finance. As the latent common factors
largely explain the co-movements in the panel data, the correlation
among individual asset's idiosyncratic noises should be close to zero.
A specific example of the sparse structure of $\bm{\Sigma}_{\varepsilon}$
arises from the remaining sector effects \citep{gaglia2016panelData}.

We estimate the idiosyncratic noise matrix using $\widehat{\bm{E}}=\bm{X}-\widehat{\bm{B}}\widehat{\bm{F}}^{\top}$,
where $\widehat{\bm{B}}$ and $\widehat{\bm{F}}$ are the PCA estimators
of factor loadings and factors defined in Section \ref{sec:F B inference estimate}.
Then, the pilot estimator of the covariance matrix $\bm{\Sigma}_{\varepsilon}$
is given by $\widehat{\bm{\bm{\Sigma}}}_{\varepsilon}=T^{-1}\widehat{\bm{E}}\widehat{\bm{E}}^{\top}$.
We will show in Lemma \ref{Lemma noise cov matrix estimate} later
that $\max_{1\leq i,j\leq N}\bigl|(\widehat{\bm{\bm{\Sigma}}}_{\varepsilon}-\bm{\Sigma}_{\varepsilon})_{i,j}\bigl|\lesssim\epsilon_{N,T}\sqrt{(\bm{\Sigma}_{\varepsilon})_{i,i}(\bm{\Sigma}_{\varepsilon})_{j,j}}$,
where $\epsilon_{N,T}>0$ is negligible as $N,T\rightarrow\infty$.
Then we apply an adaptive thresholding method \citep{bickel2008covariance}
to obtain the sparse covariance matrix estimator $\widehat{\bm{\bm{\Sigma}}}_{\varepsilon}^{\tau}$
as follows, 
\[
(\widehat{\bm{\bm{\Sigma}}}_{\varepsilon}^{\tau})_{i,j}:=\Biggl\{\begin{array}{cc}
(\widehat{\bm{\bm{\Sigma}}}_{\varepsilon})_{i,j}, & \qquad\text{if}\text{ }i=j,\\
h((\widehat{\bm{\bm{\Sigma}}}_{\varepsilon})_{i,j},\tau_{i,j}), & \qquad\text{if}\text{ }i\neq j,
\end{array}
\]
where $h(z,\tau)$ is a thresholding function with the threshold value
$\tau$. Here, the threshold value $\tau_{i,j}$ is set adaptively
to $\tau_{i,j}=C\epsilon_{N,T}\sqrt{(\widehat{\bm{\bm{\Sigma}}}_{\varepsilon})_{i,i}(\widehat{\bm{\bm{\Sigma}}}_{\varepsilon})_{j,j}}$
with some large constant $C>0$. The idea behind the adaptive thresholding
is examining the sample correlation matrix and retaining entries exceeding
$C\epsilon_{N,T}$ in magnitude. The specific examples of the function
$h$ include the hard-thresholding function $h(z,\tau)=z1_{\{|z|<\tau\}}$,
among other common thresholding functions like the soft-thresholding
function and SCAD \citep{FanLi2001SCAD}. In general, we require that
the thresholding function $h(z,\tau)$ to satisfy: (i) $h(z,\tau)=0\text{ if }|z|<\tau$;
(ii) $|h(z,\tau)-z|<\tau$. The estimation error is given as follows.

\begin{lemma}\label{Lemma noise cov matrix estimate}Suppose that
all the assumptions in Corollary \ref{corollary:error bound B F}
hold, and that there exists a sufficiently small constant $\epsilon_{N,T}>0$
such that 
\begin{equation}
\max_{1\leq i,j\leq N}\bigl(\frac{1}{\sqrt{T}}(1+(1+\gamma_{\varepsilon,j})\gamma_{i}+(1+\gamma_{\varepsilon,i})\gamma_{j})+(\beta_{i}\gamma_{j}+\beta_{j}\gamma_{i})(\frac{n}{\theta^{2}T}+\frac{1}{\theta\sqrt{T}})\bigr)\cdot r\log^{2}n\leq\epsilon_{N,T},\label{noise cov error condition}
\end{equation}
where 
\[
\beta_{i}:=\frac{\bigl\Vert\bm{b}_{i}\bigl\Vert_{2}}{\sqrt{(\bm{\Sigma}_{\varepsilon})_{i,i}}},\qquad\gamma_{i}:=\frac{\sigma_{r}\bigl\Vert\bar{\bm{U}}_{i,\cdot}\bigl\Vert_{2}}{\sqrt{(\bm{\Sigma}_{\varepsilon})_{i,i}}}\qquad\text{and}\qquad\gamma_{\varepsilon,i}:=\frac{\bigl\Vert(\bm{\Sigma}_{\varepsilon}^{1/2})_{i,\cdot}\bigl\Vert_{1}}{\bigl\Vert(\bm{\Sigma}_{\varepsilon}^{1/2})_{i,\cdot}\bigl\Vert_{2}}.
\]
Then, 
we have that, with probability at least $1-O(n^{-10})$, it holds
$\max_{1\leq i,j\leq N}((\bm{\Sigma}_{\varepsilon})_{i,i}(\bm{\Sigma}_{\varepsilon})_{j,j})^{-1/2}\bigl|(\widehat{\bm{\bm{\Sigma}}}_{\varepsilon}-\bm{\Sigma}_{\varepsilon})_{i,j}\bigl|\leq C_{0}\epsilon_{N,T}$,
for a universal constant $C_{0}>0$. Further, under Assumption \ref{Assump_noise_cov_sparse},
the generalized thresholding estimator $\widehat{\bm{\bm{\Sigma}}}_{\varepsilon}^{\tau}$
satisfies that, there exists some universal constant $C_{0}>0$ such
that, with probability at least $1-O(n^{-10})$, 
\[
\big\Vert\widehat{\bm{\bm{\Sigma}}}_{\varepsilon}^{\tau}-\bm{\Sigma}_{\varepsilon}\big\Vert_{2}\leq C_{0}(\epsilon_{N,T})^{1-q}s(\bm{\Sigma}_{\varepsilon}).
\]

\end{lemma}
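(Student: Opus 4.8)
The plan is to first establish the entrywise control of the pilot estimator $\widehat{\bm{\bm{\Sigma}}}_{\varepsilon}=T^{-1}\widehat{\bm{E}}\widehat{\bm{E}}^{\top}$ and then to deduce the spectral bound for the thresholded estimator by a standard argument. The starting point is the identity $\widehat{\bm{E}}=\bm{X}(\bm{I}_{T}-\widehat{\bm{V}}\widehat{\bm{V}}^{\top})$, which holds because $\widehat{\bm{B}}\widehat{\bm{F}}^{\top}=\sqrt{T}\,\widehat{\bm{U}}\widehat{\bm{\Sigma}}\widehat{\bm{V}}^{\top}$ is the best rank-$r$ approximation of $\bm{X}$. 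On the good event of Lemma~\ref{Lemma SVD BF good event} the column space of $\bm{F}$ coincides with that of $\bm{V}$, so $\bm{F}^{\top}(\bm{I}_{T}-\bm{V}\bm{V}^{\top})=\bm{0}$, and with $\bm{X}=\bm{B}\bm{F}^{\top}+\bm{E}$ we obtain
\begin{equation}
\widehat{\bm{E}}=\bm{E}-\bm{E}\bm{V}\bm{V}^{\top}-\bm{X}\bm{\Delta}_{V},\qquad\bm{\Delta}_{V}:=\widehat{\bm{V}}\widehat{\bm{V}}^{\top}-\bm{V}\bm{V}^{\top}.\label{eq:Ehat-decomp}
\end{equation}
Hence $T^{-1}\widehat{\bm{E}}\widehat{\bm{E}}^{\top}=T^{-1}\bm{E}\bm{E}^{\top}+\mathcal{R}$, where $\mathcal{R}$ collects all cross products of $\bm{E}\bm{V}\bm{V}^{\top}$ and $\bm{X}\bm{\Delta}_{V}$ with one another and with $\bm{E}$. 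The $(i,j)$ entry of $T^{-1}\bm{E}\bm{E}^{\top}$ is $T^{-1}\sum_{t}\varepsilon_{i,t}\varepsilon_{j,t}$, an average of independent sub-exponential variables with mean $(\bm{\Sigma}_{\varepsilon})_{i,j}$; a Bernstein inequality plus a union bound over the $\le n^{2}$ pairs $(i,j)$ gives, with probability $1-O(n^{-10})$, $\bigl|T^{-1}\bm{E}_{i,\cdot}\bm{E}_{j,\cdot}^{\top}-(\bm{\Sigma}_{\varepsilon})_{i,j}\bigr|\lesssim\sqrt{(\bm{\Sigma}_{\varepsilon})_{i,i}(\bm{\Sigma}_{\varepsilon})_{j,j}}\sqrt{(\log n)/T}$, which is absorbed into the $\tfrac{1}{\sqrt{T}}$-group of \eqref{noise cov error condition}.

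It remains to bound $\mathcal{R}$ entrywise by $\epsilon_{N,T}\sqrt{(\bm{\Sigma}_{\varepsilon})_{i,i}(\bm{\Sigma}_{\varepsilon})_{j,j}}$. The term $T^{-1}\bm{E}\bm{V}\bm{V}^{\top}\bm{E}^{\top}$ has $(i,j)$ entry $T^{-1}(\bm{E}_{i,\cdot}\bm{V})(\bm{E}_{j,\cdot}\bm{V})^{\top}$; since $\bm{E}$ and $\bm{V}$ are independent (Assumption~\ref{Assump_factor_f}), $\bm{V}$ has orthonormal columns, and $\|\bm{E}_{i,\cdot}\bm{V}\|_{2}\lesssim\sqrt{(\bm{\Sigma}_{\varepsilon})_{i,i}}\,(\sqrt{r}+\sqrt{\log n})$ with high probability conditionally on $\bm{V}$, this is $\lesssim\sqrt{(\bm{\Sigma}_{\varepsilon})_{i,i}(\bm{\Sigma}_{\varepsilon})_{j,j}}\,r(\log n)/T$, again absorbed. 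For the remaining terms, all of which contain $\bm{\Delta}_{V}$, I would plug in the first-order expansion implied by \eqref{V 1st-order approx hat til}: since $\widehat{\bm{V}}\bm{R}_{V}-\bm{V}=\bm{G}_{V}+\bm{\Psi}_{V}$ with $\bm{G}_{V}=T^{-1/2}\bm{E}^{\top}\bm{U}\bm{\Lambda}^{-1}$ and $\bm{R}_{V}$ a rotation,
\begin{equation}
\bm{\Delta}_{V}=\bm{V}\bm{G}_{V}^{\top}+\bm{G}_{V}\bm{V}^{\top}+\underbrace{\bm{V}\bm{\Psi}_{V}^{\top}+\bm{\Psi}_{V}\bm{V}^{\top}+\bm{G}_{V}\bm{G}_{V}^{\top}+\bm{G}_{V}\bm{\Psi}_{V}^{\top}+\bm{\Psi}_{V}\bm{G}_{V}^{\top}+\bm{\Psi}_{V}\bm{\Psi}_{V}^{\top}}_{\text{higher order}}.\label{eq:DeltaV-expand}
\end{equation}
Substituting \eqref{eq:DeltaV-expand} and $\bm{X}=\bm{B}\bm{F}^{\top}+\bm{E}$ into the $\bm{\Delta}_{V}$-terms of $\mathcal{R}$ turns them into (i) explicit quadratic and bilinear forms in the entries of $\bm{E}$ tested against the $\bm{E}$-independent objects $\bm{B},\bm{F},\bm{U},\bm{V},\bm{\Lambda}$ --- controlled by the Hanson--Wright inequality and Bernstein bounds, together with the concentration of $\|\bm{E}_{i,\cdot}\bm{V}\|_{2}$, $\|\bm{E}_{i,\cdot}\bm{F}\|_{2}$, $\|\bm{U}^{\top}\bm{E}\|_{2}$ and $\|\bm{E}\bm{E}_{j,\cdot}^{\top}-T(\bm{\Sigma}_{\varepsilon})_{\cdot,j}\|_{2}$, where crucially an $\ell_{1}$--$\ell_{\infty}$ H\"older step (rather than Cauchy--Schwarz) is used whenever a row $\bm{E}_{i,\cdot}$ is paired with a fixed vector, producing the ratio $\gamma_{\varepsilon,i}=\|(\bm{\Sigma}_{\varepsilon}^{1/2})_{i,\cdot}\|_{1}/\sqrt{(\bm{\Sigma}_{\varepsilon})_{i,i}}$ --- and (ii) genuinely higher-order pieces, bounded crudely via the row-wise and averaged error bounds of Corollary~\ref{corollary:error bound B F}, the operator-norm bound $\|\bm{\Delta}_{V}\|_{2}\lesssim\theta^{-1}\sqrt{n/T}$ (from $\|\bm{G}_{V}\|_{2}\le T^{-1/2}\|\bm{E}\|_{2}/\lambda_{r}$, $\|\bm{E}\|_{2}\lesssim\sqrt{n\|\bm{\Sigma}_{\varepsilon}\|_{2}}$ and $\lambda_{r}\asymp\sigma_{r}$), and $T^{-1}\bm{F}^{\top}\bm{F}\approx\bm{I}_{r}$. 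Tracking the scale of each resulting bound and normalizing by $\sqrt{(\bm{\Sigma}_{\varepsilon})_{i,i}(\bm{\Sigma}_{\varepsilon})_{j,j}}$ reproduces precisely the combination $\tfrac{1}{\sqrt{T}}(1+(1+\gamma_{\varepsilon,j})\gamma_{i}+(1+\gamma_{\varepsilon,i})\gamma_{j})+(\beta_{i}\gamma_{j}+\beta_{j}\gamma_{i})(\tfrac{n}{\theta^{2}T}+\tfrac{1}{\theta\sqrt{T}})$ appearing in \eqref{noise cov error condition}, up to the $r\log^{2}n$ factor; a union bound over $(i,j)$ gives the first claim with probability $1-O(n^{-10})$. (Every high-probability estimate borrowed from Theorem~\ref{Thm UV 1st approx row-wise error} and Corollary~\ref{corollary:error bound B F} holds with failure probability $O(n^{-c})$ for any fixed $c$ once the generic constants are enlarged, since they come from sub-Gaussian concentration; we take $c=10$.)

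For the second claim, the case $i=j$ of the entrywise bound, together with smallness of $\epsilon_{N,T}$, gives $(\widehat{\bm{\bm{\Sigma}}}_{\varepsilon})_{i,i}\asymp(\bm{\Sigma}_{\varepsilon})_{i,i}$, so the data-driven threshold obeys $\tau_{i,j}\asymp\epsilon_{N,T}\sqrt{(\bm{\Sigma}_{\varepsilon})_{i,i}(\bm{\Sigma}_{\varepsilon})_{j,j}}$ with a constant we may take as large as needed by choosing $C$ large. Since $\widehat{\bm{\bm{\Sigma}}}_{\varepsilon}^{\tau}-\bm{\Sigma}_{\varepsilon}$ is symmetric, $\|\widehat{\bm{\bm{\Sigma}}}_{\varepsilon}^{\tau}-\bm{\Sigma}_{\varepsilon}\|_{2}\le\max_{i}\sum_{j}|(\widehat{\bm{\bm{\Sigma}}}_{\varepsilon}^{\tau}-\bm{\Sigma}_{\varepsilon})_{i,j}|$. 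Splitting each row sum into the diagonal term, the ``large'' entries with $|(\bm{\Sigma}_{\varepsilon})_{i,j}|\ge\tau_{i,j}$, and the ``small'' ones, and using the thresholding properties $h(z,\tau)=0$ for $|z|<\tau$ and $|h(z,\tau)-z|\le\tau$ together with the entrywise bound from the first claim, one gets $|(\widehat{\bm{\bm{\Sigma}}}_{\varepsilon}^{\tau}-\bm{\Sigma}_{\varepsilon})_{i,j}|\lesssim\epsilon_{N,T}\sqrt{(\bm{\Sigma}_{\varepsilon})_{i,i}(\bm{\Sigma}_{\varepsilon})_{j,j}}$ on the large/small sets and $|(\widehat{\bm{\bm{\Sigma}}}_{\varepsilon}^{\tau}-\bm{\Sigma}_{\varepsilon})_{i,j}|=|(\bm{\Sigma}_{\varepsilon})_{i,j}|$ where the entry is thresholded to zero; bounding $\mathbf{1}\{|(\bm{\Sigma}_{\varepsilon})_{i,j}|\ge\tau_{i,j}\}$ and the tail on the small set by $\bigl(\epsilon_{N,T}\sqrt{(\bm{\Sigma}_{\varepsilon})_{i,i}(\bm{\Sigma}_{\varepsilon})_{j,j}}\bigr)^{-q}|(\bm{\Sigma}_{\varepsilon})_{i,j}|^{q}$ and invoking Assumption~\ref{Assump_noise_cov_sparse} yields $\sum_{j}|(\widehat{\bm{\bm{\Sigma}}}_{\varepsilon}^{\tau}-\bm{\Sigma}_{\varepsilon})_{i,j}|\lesssim\epsilon_{N,T}^{1-q}s(\bm{\Sigma}_{\varepsilon})$ uniformly in $i$ --- the adaptive-thresholding estimate of \citet{bickel2008covariance} --- which is the second claim.

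The main obstacle is the second paragraph: the noise matrix $\bm{E}$ enters $\mathcal{R}$ both ``raw'' and through $\widehat{\bm{V}}$, so the cross terms are not sums of independent pieces. A naive Cauchy--Schwarz bound is off by polynomial factors --- it would produce $\theta^{-1}\sqrt{n/T}$ in place of $\tfrac{n}{\theta^{2}T}+\tfrac{1}{\theta\sqrt{T}}$ --- so one genuinely needs the first-order decomposition \eqref{eq:DeltaV-expand}, the independence of $\bm{E}$ from $(\bm{B},\bm{F},\bm{U},\bm{V})$, and, for each appearance of a row of $\bm{E}$ against a deterministic vector, the $\ell_{1}$-norm of the corresponding row of $\bm{\Sigma}_{\varepsilon}^{1/2}$; this is exactly what forces $\gamma_{\varepsilon,i}$ (and the precise $\beta_{i},\gamma_{i}$ pattern) into the hypothesis \eqref{noise cov error condition}. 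Verifying that all $\mathcal{O}(1)$-many cross terms collapse to the stated combination, with the right normalization uniformly over $(i,j)$, is the bulk of the work.
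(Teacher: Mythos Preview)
Your approach is sound but takes a considerably more laborious route than the paper's. The paper exploits the algebraic identity $\widehat{\bm{\Sigma}}_{\varepsilon}=T^{-1}\bm{X}\bm{X}^{\top}-\widehat{\bm{B}}\widehat{\bm{B}}^{\top}$ (which follows from $T^{-1}\bm{X}\widehat{\bm{F}}=\widehat{\bm{B}}$), and then expands $\bm{X}=\bm{B}\bm{F}^{\top}+\bm{E}$ to obtain the three-term decomposition
\[
\widehat{\bm{\Sigma}}_{\varepsilon}-\bm{\Sigma}_{\varepsilon}=\bigl(\tfrac{1}{T}\bm{E}\bm{E}^{\top}-\bm{\Sigma}_{\varepsilon}\bigr)+\bigl(\tfrac{1}{T}\bm{E}\bm{F}\bm{B}^{\top}+\tfrac{1}{T}\bm{B}\bm{F}^{\top}\bm{E}^{\top}\bigr)+\bigl(\bm{U}\bm{\Lambda}^{2}\bm{U}^{\top}-\widehat{\bm{U}}\widehat{\bm{\Sigma}}^{2}\widehat{\bm{U}}^{\top}\bigr).
\]
The first two pieces are handled by the same sub-exponential and sub-Gaussian concentration you describe. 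The key simplification is in the third piece: since $\bm{U}\bm{\Lambda}^{2}\bm{U}^{\top}=(\bm{B}\bm{R}_{B})(\bm{B}\bm{R}_{B})^{\top}$ and $\widehat{\bm{U}}\widehat{\bm{\Sigma}}^{2}\widehat{\bm{U}}^{\top}=\widehat{\bm{B}}\widehat{\bm{B}}^{\top}$, its $(i,j)$ entry is bounded by $\|(\widehat{\bm{B}}-\bm{B}\bm{R}_{B})_{i,\cdot}\|_{2}\|\bm{b}_{j}\|_{2}+\|(\widehat{\bm{B}}-\bm{B}\bm{R}_{B})_{j,\cdot}\|_{2}\|\bm{b}_{i}\|_{2}$ and dispatched directly by the row-wise factor-loading bound \eqref{rowwise error B} of Corollary~\ref{corollary:error bound B F}. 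Thus the first-order expansion of $\widehat{\bm{V}}$---your $\bm{\Delta}_{V}$ machinery, the Hanson--Wright steps, and the many cross terms you flag as ``the bulk of the work''---is already packaged inside that corollary; there is no need to re-derive it. In particular, the $\ell_{1}$ quantity $\|(\bm{\Sigma}_{\varepsilon}^{1/2})_{i,\cdot}\|_{1}$ that produces $\gamma_{\varepsilon,i}$ enters through \eqref{rowwise error B}, not through a fresh $\ell_{1}$--$\ell_{\infty}$ H\"older argument on the $\mathcal{R}$-terms. Your thresholding argument for the second claim matches the paper's.
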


The above error bounds for the estimators $\widehat{\bm{\bm{\Sigma}}}_{\varepsilon}$
and $\widehat{\bm{\bm{\Sigma}}}_{\varepsilon}^{\tau}$ differ from
the existing literature owing to our row-wise subspace perturbation
bounds in Theorem \ref{Thm UV 1st approx row-wise error}. Let us
interpret the condition (\ref{noise cov error condition}) under some
specific settings. Consider the setting where $r\asymp1$, $N\asymp T$,
$\bar{\bm{U}}$ is $\mu$-incoherent, 
$\Vert\bm{B}\Vert_{2,\infty}\lesssim1$ 
and $\bm{\Sigma}_{\varepsilon}$ is well-behaved such that $\gamma_{\varepsilon,i}\asymp1$,
$(\bm{\Sigma}_{\varepsilon})_{i,i}\asymp1$, and $\Vert\bm{\Sigma}_{\varepsilon}\Vert_{2}\asymp1$.
Then, the condition (\ref{noise cov error condition}) is equivalent
to $T\geq(\sigma_{r}\log^{2}n)/\epsilon_{N,T}$, which requires the
sample size $T$ is sufficiently large. According to Lemma \ref{Lemma noise cov matrix estimate},
fulfilling our assumption on the estimation error $\Vert\widehat{\bm{\Sigma}}_{\varepsilon}^{\tau}-\bm{\Sigma}_{\varepsilon}\Vert_{2}$
required for Theorem \ref{Thm factor test plug-in Chi-sq} merely
requires that the estimation error $\epsilon_{N,T}$ satisfies 
\[
\epsilon_{N,T}\ll\bigl((\left\Vert \bm{\Sigma}_{\varepsilon}\right\Vert _{2}/s(\bm{\Sigma}_{\varepsilon}))|S|^{-1/2}\kappa_{\varepsilon}^{-4}r^{-1}\log^{-2}n\bigr)^{1/(1-q)}.
\]
In subsequent applications, we will directly use $\epsilon_{N,T}$,
instead of $\widehat{\bm{\bm{\Sigma}}}_{\varepsilon}^{\tau}$, to
express assumptions on the estimation error of the noise covariance
matrix.

\subsection{Test for structural breaks in betas}

The discussion so far has studied the factor models with constant
betas. However, in many cases, there is a possibility of time-variant
betas, and it is important to test for structural changes in betas
\citep[e.g.,][]{StockWatson2009beta_break,breitung2011beta_break,chen2014beta_break,han2015beta_break}.
We test for two time periods $\Gamma_{1}=\{t_{1}+1,t_{1}+2,\ldots,t_{1}+T_{1}\}$
and $\Gamma_{2}=\{t_{2}+1,t_{2}+2,\ldots,t_{2}+T_{2}\}$ that are
possibly not consecutive, i.e., $t_{1}+T_{1}\leq t_{2}+1$.  The factor
models in these two periods are subject to a structural break in beta:  for each given individual unit $i$,
\begin{align*}
x_{i,t} & =\bm{f}_{t}^{\top}\bm{b}_{i}^{1}+\varepsilon_{i,t}\qquad\text{for}\qquad t\in\Gamma_{1},\\
x_{i,t} & =\bm{f}_{t}^{\top}\bm{b}_{i}^{2}+\varepsilon_{i,t}\qquad\text{for}\qquad t\in\Gamma_{2},
\end{align*}
where $i$ is the index of the cross-sectional unit of interest. Here,
we assume that the factors are the same, but the betas can be different.
 For example, we might ask if a external shock such as the 2008 financial crises has caused the changes in factor loadings before and after the shock with a transition period $(T_1, T_2)$.

The data consist of two panels, $\bm{X}^{1}$ and $\bm{X}^{2}$, during $\Gamma_{1}$ and $\Gamma_{2}$, respectively, where $\bm{X}^{1}\in\mathbb{R}^{N\times T_{1}}$
and $\bm{X}^{2}\in\mathbb{R}^{N\times T_{2}}$.
We test the hypothesis as follows:
\[
H_{0}:\ \bm{b}_{i}^{1}=\bm{b}_{i}^{2}\ \leftrightarrow\ H_{1}:\ \bm{b}_{i}^{1}\neq\bm{b}_{i}^{2}.
\]
Different from our approach, \citet{chen2014beta_break,han2015beta_break}
tested for the entire factor loading, with their null hypothesis being
$H_{0}:\ \bm{b}_{i}^{1}=\bm{b}_{i}^{2}\text{ for }\forall i$. \citet{StockWatson2009beta_break,breitung2011beta_break}
also studied the test for a given $i$ like ours, while \citet{StockWatson2009beta_break}
focused on empirical studies and \citet{breitung2011beta_break} developed
and established theories for the test statistics. Our work differs
in that, as we will show below, the effectiveness of our test statistic
does not require the pervasiveness assumption needed in \citet{breitung2011beta_break}.
In addition, our formulation allows the gap between the two periods
$\Gamma_{1}$ and $\Gamma_{2}$, which can model the special transition periods
we purposely, e.g., financial crisis, enabling us to test for
structural breaks before and after these special periods.

To construct the test statistic, we first merge the two panels $\bm{X}^{1}$
and $\bm{X}^{2}$ into a $N\times(T_{1}+T_{2})$-dimensional data
matrix $\bm{X}=(\bm{X}^{1},\bm{X}^{2})$. Next, we apply SVD as described
in Algorithm \ref{alg:PCA-SVD} to obtain an estimator of the factors:
$\widehat{\bm{F}}=\sqrt{T}\widehat{\bm{V}}$, where $T=T_{1}+T_{2}$
and $\widehat{\bm{U}}\widehat{\bm{\Sigma}}\widehat{\bm{V}}^{\top}$
is the truncated rank-$r$ SVD of $T^{-1/2}\bm{X}$ with $\widehat{\bm{U}}\in\mathbb{R}^{N\times r}$
and $\widehat{\bm{V}}\in\mathbb{R}^{T\times r}$. Then, we split the
estimated factors into $\widehat{\bm{F}}=((\widehat{\bm{F}}^{1})^{\top},(\widehat{\bm{F}}^{2})^{\top})^{\top}$
according to two time periods $\Gamma_{1}$ and $\Gamma_{2}$, where
$\widehat{\bm{F}}^{j}\in\mathbb{R}^{T_{j}\times r}$. Subsequently,
we obtain the estimator for $\bm{b}_{i}^{j}$ by regressing $\bm{X}_{i,\cdot}^{j}$
on $\widehat{\bm{F}}^{j}$:
\[
\widehat{\bm{b}}_{i}^{j}=((\widehat{\bm{F}}^{j})^{\top}\widehat{\bm{F}}^{j})^{-1}(\widehat{\bm{F}}^{j})^{\top}(\bm{X}_{i,\cdot}^{j})^{\top}\qquad\text{for}\qquad j=1,2.
\]
Finally, under null hypothesis $H_{0}$, we show that $\widehat{\bm{b}}_{i}^{1}-\widehat{\bm{b}}_{i}^{2}$
is approximately Gaussian via the first-order approximation results
in Theorem \ref{Thm UV 1st approx row-wise error}, and construct
the test statistic using the plug-in estimator of the covariance matrix.

\begin{theorem}\label{Thm beta structure test}Suppose that the assumptions
in Theorem \ref{Thm UV 1st approx row-wise error} hold, and the covariance
estimator $\widehat{\bm{\bm{\Sigma}}}_{\varepsilon}^{\tau}$ satisfies
the conditions as in Lemma \ref{Lemma noise cov matrix estimate}.
For any given target error level $\delta>0$, assume there exists
some universal constants $c_{0},C_{0},C_{U}>0$ such that $\min(T_{1},T_{2})\geq C_{0}(r+\log n)$,
$\Vert\bm{U}_{i,\cdot}\Vert_{2}\leq C_{U}\sqrt{T/(T_{1}T_{2})}$,
$T\geq C_{0}^{2}C_{U}^{2}\delta^{-2}\kappa_{\varepsilon}r\log^{3}n$,
$\vartheta_{i}\geq C_{0}C_{U}^{2}\delta^{-1}T^{-1/2}\kappa_{\varepsilon}^{2}r\log n$,
\[
\theta\geq C_{0}\{C_{U}^{2}\frac{1}{\delta}(\kappa_{\varepsilon}\frac{n}{T}+\kappa_{\varepsilon}^{2}\sqrt{\frac{n}{T}}\Vert\bm{U}_{i,\cdot}\Vert_{2})+\sqrt{T_{1}T_{2}/T}\frac{1}{\delta}\kappa_{\varepsilon}\}r\log^{3/2}n,
\]
and $C_{0}r^{3/2}\Vert\bm{w}\Vert_{\infty}/\Vert\bm{w}\Vert_{2}\leq\delta$,
where $\bm{w}$ is a vector defined as $\bm{w}:=[\bm{\Sigma}_{\varepsilon}^{1/2}(\bm{I}_{N}+\bm{U}\bm{U}^{\top})]_{\cdot,i}$;
for the parameter $\epsilon_{N,T}$ (cf. Lemma \ref{Lemma noise cov matrix estimate})
that captures the estimation error of the noise covariance matrix
$\bm{\Sigma}_{\varepsilon}$, assume that $\epsilon_{N,T}\leq c_{0}\delta/(\kappa_{\varepsilon}^{2}r\log n)$
and $\epsilon_{N,T}\leq c_{0}[\delta\Vert\bm{\Sigma}_{\varepsilon}\Vert_{2}/(s(\bm{\Sigma}_{\varepsilon})\Vert\bm{U}_{i,\cdot}\Vert_{2}\kappa_{\varepsilon}^{2}r\log n)]^{1/(1-q)}$.
Then, under the null hypothesis $H_{0}:\ \bm{b}_{i}^{1}=\bm{b}_{i}^{2}$,
we have that
\[
\left|\mathbb{P}\left(\mathscr{\widehat{B}}_{i}\leq\chi_{1-\alpha}^{2}(r)\right)-(1-\alpha)\right|\leq\delta,
\]
where the test statistic $\mathscr{\widehat{B}}_{i}$ is given by
\[
\mathscr{\widehat{B}}_{i}:=\frac{1}{T\widehat{\varphi}_{i}}(\widehat{\bm{b}}_{i}^{1}-\widehat{\bm{b}}_{i}^{2})^{\top}[\prod_{j=1}^{2}(\widehat{\bm{F}}^{j})^{\top}\widehat{\bm{F}}^{j}](\widehat{\bm{b}}_{i}^{1}-\widehat{\bm{b}}_{i}^{2}),
\]
with $\widehat{\varphi}_{i}:=(\widehat{\bm{\Sigma}}_{\varepsilon}^{\tau}+\widehat{\bm{U}}\widehat{\bm{U}}^{\top}\widehat{\bm{\Sigma}}_{\varepsilon}^{\tau}+\widehat{\bm{\Sigma}}_{\varepsilon}^{\tau}\widehat{\bm{U}}\widehat{\bm{U}}^{\top}+\widehat{\bm{U}}\widehat{\bm{U}}^{\top}\widehat{\bm{\Sigma}}_{\varepsilon}^{\tau}\widehat{\bm{U}}\widehat{\bm{U}}^{\top})_{i,i}$.

\end{theorem}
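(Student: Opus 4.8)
The plan is to deduce the $\chi^2$ statement from a Gaussian approximation for the $r$-vector $\widehat{\bm{b}}_i^1-\widehat{\bm{b}}_i^2$ under $H_0$, together with consistency of the two plug-in normalizers $\widehat{\varphi}_i$ and $\prod_{j=1}^2(\widehat{\bm{F}}^j)^\top\widehat{\bm{F}}^j$. \emph{Step 1 (first-order expansion).} Since Algorithm~\ref{alg:PCA-SVD} is run on the merged panel $\bm{X}=(\bm{X}^1,\bm{X}^2)$, which under $H_0$ is the rank-$r$ model $\bm{B}\bm{F}^\top+\bm{E}$ with $\bm{F}=((\bm{F}^1)^\top,(\bm{F}^2)^\top)^\top$, both Theorem~\ref{Thm UV 1st approx row-wise error} and Lemma~\ref{Lemma SVD BF good event} apply. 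Write $\widehat{\bm{F}}^j=\bm{F}^j\bm{R}_F+\bm{\Delta}^j$ with $\bm{\Delta}^j=(\bm{E}^j)^\top\bm{U}\bm{\Lambda}^{-1}\bm{R}_V^\top+\sqrt{T}\,(\bm{\Psi}_V)_{\Gamma_j,\cdot}\bm{R}_V^\top$ (first piece linear in the noise, second higher-order negligible by Theorem~\ref{Thm UV 1st approx row-wise error}(ii)), and note $(\widehat{\bm{F}}^j)^\top\widehat{\bm{F}}^j=T\,\widehat{\bm{V}}_{\Gamma_j,\cdot}^\top\widehat{\bm{V}}_{\Gamma_j,\cdot}$ is close to $T_j\bm{I}_r$ (by incoherence of $\bm{V}$, concentration of $(\bm{F}^j)^\top\bm{F}^j$ at $T_j\bm{I}_r$ — here $\min(T_1,T_2)\gtrsim r+\log n$ and $\|\bm{U}_{i,\cdot}\|_2\le C_U\sqrt{T/(T_1T_2)}$ enter). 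Substituting $(\bm{X}^j_{i,\cdot})^\top=\bm{F}^j\bm{b}_i+(\bm{E}^j_{i,\cdot})^\top$ into $\widehat{\bm{b}}_i^j=((\widehat{\bm{F}}^j)^\top\widehat{\bm{F}}^j)^{-1}(\widehat{\bm{F}}^j)^\top(\bm{X}^j_{i,\cdot})^\top$, treating the signal part via the exact identity $(\widehat{\bm{F}}^j)^+\bm{F}^j=\bm{R}_F^{-1}-(\widehat{\bm{F}}^j)^+\bm{\Delta}^j\bm{R}_F^{-1}$ and using $\bm{\Lambda}^{-1}\bm{R}_V^\top\bm{R}_F^{-1}\bm{b}_i=\bm{U}_{i,\cdot}^\top$ (from $\bm{b}_i=\bm{J}\bm{\Lambda}\bm{U}_{i,\cdot}^\top$ and $\bm{R}_F=\bm{J}\bm{R}_V^\top$ in Lemma~\ref{Lemma SVD BF good event}), I obtain
\[
\widehat{\bm{b}}_i^j=\bm{R}_B^\top\bm{b}_i+\bm{\beta}_i+\bm{R}_F^\top\,T_j^{-1}\!\!\sum_{t\in\Gamma_j}\bm{f}_t\,\langle\bm{w},\bm{z}_t\rangle+\bm{\rho}_i^j ,
\]
where $\bm{R}_B^\top\bm{b}_i$ is the rotated ground truth, $\bm{\beta}_i$ a \emph{period-independent} bias, $\bm{w}$ the vector of the statement, and $\bm{\rho}_i^j$ collects the remainders (the $\bm{\Psi}$-terms, the fluctuations of $(\bm{\Delta}^j)^\top\bm{\Delta}^j$ and of $(\bm{E}^j)(\bm{E}^j_{i,\cdot})^\top$ about their period-independent means, and the error in $(\widehat{\bm{F}}^j)^\top\widehat{\bm{F}}^j\approx T_j\bm{I}_r$). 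Differencing over $j$ kills $\bm{R}_B^\top\bm{b}_i$ — the only place $H_0$ is used — and $\bm{\beta}_i$, leaving $\widehat{\bm{b}}_i^1-\widehat{\bm{b}}_i^2=\bm{R}_F^\top\bm{\eta}_i+(\bm{\rho}_i^1-\bm{\rho}_i^2)$ with $\bm{\eta}_i:=T_1^{-1}\sum_{t\in\Gamma_1}\bm{f}_t\langle\bm{w},\bm{z}_t\rangle-T_2^{-1}\sum_{t\in\Gamma_2}\bm{f}_t\langle\bm{w},\bm{z}_t\rangle$.

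\emph{Step 2 (remainder control — the hard part).} The crux is $\|\bm{\rho}_i^1-\bm{\rho}_i^2\|_2\lesssim\delta\,\sqrt{T/(T_1T_2)}\,\|\bm{w}\|_2$ uniformly, with no eigengap or condition-number assumption — exactly what the row-wise, $\kappa$-free bounds of Theorem~\ref{Thm UV 1st approx row-wise error} are built to deliver. The $\bm{\Psi}$-terms are bounded by Theorem~\ref{Thm UV 1st approx row-wise error}(i)--(ii); the fluctuations of $(\bm{\Delta}^j)^\top\bm{\Delta}^j$ and of $(\bm{E}^j)(\bm{E}^j_{i,\cdot})^\top-T_j(\bm{\Sigma}_\varepsilon)_{\cdot,i}$ about their (cancelling) means are handled by operator-norm and Hanson--Wright-type concentration, using that they are sandwiched between the low-rank factors $\bm{U}\bm{\Lambda}^{-1}$ and the fixed vectors $\bm{b}_i$, $(\bm{\Sigma}_\varepsilon)_{\cdot,i}$; and $\|(\widehat{\bm{F}}^j)^\top\widehat{\bm{F}}^j-T_j\bm{I}_r\|$ is controlled via incoherence of $\bm{V}$ and sample-covariance concentration. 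Tallying the worst cases yields precisely the stated hypotheses: $T\gtrsim\delta^{-2}\kappa_\varepsilon r\log^3 n$, $\vartheta_i\gtrsim\delta^{-1}T^{-1/2}\kappa_\varepsilon^2 r\log n$, the displayed three-term lower bound on $\theta$ (the terms tracking, respectively, the $(\bm{\Delta})^\top\bm{\Delta}$ fluctuation, the $\|\bm{U}_{i,\cdot}\|_2$-weighted cross terms, and the noise that enters through $\widehat{\bm{F}}$ acting on the signal $\bm{b}_i$), and the $\epsilon_{N,T}$-bounds that, through Lemma~\ref{Lemma noise cov matrix estimate}, keep $\widehat{\bm{\Sigma}}_\varepsilon^\tau$ close to $\bm{\Sigma}_\varepsilon$.

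\emph{Step 3 (Gaussian approximation and plug-in consistency).} Conditionally on $\bm{F}$, $\bm{\eta}_i$ is a sum of independent mean-zero vectors; by independence of $\bm{f}_t$ and $\bm{z}_t$ its covariance is $\|\bm{w}\|_2^2\sum_{j=1}^2 T_j^{-2}\sum_{t\in\Gamma_j}\bm{f}_t\bm{f}_t^\top$, which concentrates at $\tfrac{T}{T_1T_2}\|\bm{w}\|_2^2\bm{I}_r$. A multivariate Berry--Esseen bound over convex sets (Bentkus-type, as in the proofs of Corollaries~\ref{Thm F inference}--\ref{Thm B inference}) then gives $\sup_{\mathcal{C}\in\mathscr{C}^r}\big|\mathbb{P}(\bm{R}_F^\top\bm{\eta}_i\in\mathcal{C})-\mathbb{P}(\mathcal{N}(0,\tfrac{T}{T_1T_2}\|\bm{w}\|_2^2\bm{I}_r)\in\mathcal{C})\big|\lesssim\delta$, the governing Lyapunov ratio being $\lesssim r^{3/2}\|\bm{w}\|_\infty/\|\bm{w}\|_2$ — this is where $C_0 r^{3/2}\|\bm{w}\|_\infty/\|\bm{w}\|_2\le\delta$ enters; for Gaussian noise $\langle\bm{w},\bm{z}_t\rangle$ is exactly Gaussian and the condition is superfluous (cf.\ the Remark). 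Separately, the Step~1 estimates give $\tfrac1T\prod_{j=1}^2(\widehat{\bm{F}}^j)^\top\widehat{\bm{F}}^j\to\tfrac{T_1T_2}{T}\bm{I}_r$; and since $\widehat{\varphi}_i=\big[(\bm{I}_N+\widehat{\bm{U}}\widehat{\bm{U}}^\top)\,\widehat{\bm{\Sigma}}_\varepsilon^\tau\,(\bm{I}_N+\widehat{\bm{U}}\widehat{\bm{U}}^\top)\big]_{i,i}$ with $\widehat{\bm{U}}\widehat{\bm{U}}^\top\to\bm{U}\bm{U}^\top$ by Theorem~\ref{Thm UV 1st approx row-wise error} and $\|\widehat{\bm{\Sigma}}_\varepsilon^\tau-\bm{\Sigma}_\varepsilon\|_2\lesssim\epsilon_{N,T}^{1-q}s(\bm{\Sigma}_\varepsilon)$ by Lemma~\ref{Lemma noise cov matrix estimate}, one gets $\widehat{\varphi}_i\to\|\bm{w}\|_2^2$.

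\emph{Step 4 (assembly, and the main obstacle).} A Slutsky-type argument turns $\mathscr{\widehat{B}}_i=\tfrac{1}{T\widehat{\varphi}_i}(\widehat{\bm{b}}_i^1-\widehat{\bm{b}}_i^2)^\top\big[\prod_{j=1}^2(\widehat{\bm{F}}^j)^\top\widehat{\bm{F}}^j\big](\widehat{\bm{b}}_i^1-\widehat{\bm{b}}_i^2)$ into $\|\bm{g}\|_2^2$ with $\bm{g}\sim\mathcal{N}(0,\bm{I}_r)$, up to $\lesssim\delta$ error in distribution on half-lines; this is a $\chi^2(r)$ variable, and evaluating at $\chi^2_{1-\alpha}(r)$ gives the claimed bound after absorbing constants into $C_0,c_0,C_U$. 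The \textbf{main obstacle} is Step~2: the uniform, condition-number-free bound on $\bm{\rho}_i^1-\bm{\rho}_i^2$, delicate because $\widehat{\bm{F}}^j$ is a nonlinear functional of $\bm{E}$ that appears both in the numerator and in $((\widehat{\bm{F}}^j)^\top\widehat{\bm{F}}^j)^{-1}$, is correlated with the row-$i$ noise (so one cannot simply condition), and must be handled while tracking the exact dependence on $\theta,\vartheta_i,\kappa,\kappa_\varepsilon$ — all of which is supplied by the row-wise expansion of Theorem~\ref{Thm UV 1st approx row-wise error}. A secondary obstacle is making $(\widehat{\bm{F}}^j)^\top\widehat{\bm{F}}^j\approx T_j\bm{I}_r$ quantitative enough, i.e.\ showing the estimated factors distribute their energy evenly over the possibly very unequal, non-consecutive windows $\Gamma_1,\Gamma_2$ — this is precisely what $\|\bm{U}_{i,\cdot}\|_2\le C_U\sqrt{T/(T_1T_2)}$ plus incoherence of $\bm{V}$ provide.
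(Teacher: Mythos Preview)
Your proposal is correct and tracks the paper's proof closely: first-order expansion of $\widehat{\bm{b}}_i^1-\widehat{\bm{b}}_i^2$, remainder control via Theorem~\ref{Thm UV 1st approx row-wise error}, a Berry--Esseen step on the leading sum (producing exactly the $r^{3/2}\|\bm{w}\|_\infty/\|\bm{w}\|_2$ condition), and plug-in consistency of $\widehat{\varphi}_i$ and the Gram-matrix normalizer. The paper organizes this as two preparatory lemmas plus a final assembly in which one sets $\zeta=\widehat{\mathscr{B}}_i-\mathscr{B}_i$ for an intermediate statistic $\mathscr{B}_i$ built from the \emph{true} $\varphi_i$ and $\prod_j(\bm{V}^j\bm{R}_V^\top)^\top\bm{V}^j\bm{R}_V^\top$, then bounds $\zeta=r_1+r_2$ (matrix replacement and variance replacement) --- your ``Slutsky-type argument.''

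The main difference is in Step~1. The paper works with $\widehat{\bm{V}}^j$ rather than $\widehat{\bm{F}}^j$ and keeps the exact pseudo-inverse $(\bm{V}^j\bm{R}_V^\top)^+$ in the first-order term, invoking Stewart's pseudo-inverse perturbation identity to write $(\widehat{\bm{V}}^{j})^{+}-(\bm{V}^{j}\bm{R}_{V}^{\top})^{+}$ in closed form. This yields
\[
\widehat{\bm{b}}_i^j-\bm{R}_V\bm{J}^{-1}\bm{b}_i^j = T^{-1/2}(\bm{V}^j\bm{R}_V^\top)^+(\bm{E}^j)^\top[\bm{I}_N+\bm{U}\bm{U}^\top]_{\cdot,i}+\bm{r}_0^j,
\]
with $\bm{r}_0^j$ containing only the pseudo-inverse \emph{difference} acting on $(\bm{E}^j)^\top\bm{w}$ and the $\bm{\Psi}_V$-piece acting on $\bm{b}_i$. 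There is no ``period-independent bias $\bm{\beta}_i$'' and no $(\bm{E}^j)(\bm{E}^j_{i,\cdot})^\top$ fluctuation to track: those appear in your route only because you further approximate $(\bm{V}^j\bm{R}_V^\top)^+\approx T_j^{-1}\bm{R}_F^\top(\bm{F}^j)^\top$, introducing an extra error layer the paper avoids. Correspondingly, the conditional covariance of the paper's $\bm{G}_{b_i}$ is exactly $(\varphi_i/T)\bigl[\sum_j((\bm{V}^j\bm{R}_V^\top)^\top\bm{V}^j\bm{R}_V^\top)^{-1}\bigr]$, which is the inverse of the normalizing matrix, so after premultiplying by its square root one gets identity covariance \emph{exactly}, not only after an additional concentration step. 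One small correction: the hypothesis $\|\bm{U}_{i,\cdot}\|_2\le C_U\sqrt{T/(T_1T_2)}$ is not what makes $(\widehat{\bm{F}}^j)^\top\widehat{\bm{F}}^j\approx T_j\bm{I}_r$ work (that needs only $\min(T_1,T_2)\gtrsim r+\log n$); in the paper it is used to convert the $\sigma_r\|\bm{U}_{i,\cdot}\|_2$-weighted piece of the remainder bound $\|\bm{\Upsilon}_{b_i}\|_2$ and of $|\widehat{\varphi}_i-\varphi_i|$ into the stated conditions on $T$ and $\theta$.
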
\begin{remark}If all the entries of the noise matrix
$\bm{Z}$ in (\ref{noise matrix formula}) are Gaussian, then we do
not need the assumption that $C_{0}r^{3/2}\Vert\bm{w}\Vert_{\infty}/\Vert\bm{w}\Vert_{2}\leq\delta$,
where $\bm{w}=[\bm{\Sigma}_{\varepsilon}^{1/2}(\bm{I}_{N}+\bm{U}\bm{U}^{\top})]_{\cdot,i}$.\end{remark}

To illustrate our assumptions, we consider the following setting similar
to that discussed after Theorem \ref{Thm UV 1st approx row-wise error}:~$r\asymp1$,
$T_{1}\asymp T\asymp N$, $T_{1}\gg T_{2}$, $\bm{\Sigma}_{\varepsilon}$
is well-conditioned and its sparsity parameter satisfies $s(\bm{\Sigma}_{\varepsilon})\lesssim\Vert\bm{\Sigma}_{\varepsilon}\Vert_{2}$,
and the column subspace of $\bm{B}$ is incoherent. Then, fulfilling
our assumptions merely requires that, $\theta\gg\log^{3/2}n$ and
$\vartheta_{i}\gg T^{-1/2}\log n$ for the SNR, $T\gg\log^{3}n$ for
the sample size, and $\epsilon_{N,T}\ll\min(\log^{-1}n,(N^{1/2}/\log n)^{1/(1-q)})$
for the estimation error of $\bm{\Sigma}_{\varepsilon}$. In particular,
our SNR conditions are less restrictive than prior work \citep[e.g.,][]{breitung2011beta_break}
that needed the pervasiveness assumption, which required the SNR to
grow as $\theta\gg N^{1/4}$. Our results adapt to the weak factor
model with the cross-sectional correlations, and to the best of our
knowledge, no prior work has developed the test statistics for the
structural break test under this case.

\subsection{The two-sample test for betas}

In finance and economics, besides the latent common factors that drive
the co-movements of asset returns, the factor loadings, also known
as betas, are important as well, which measure the sensitivity of
asset return to the movements of the factors. Consider the example
that the panel data $\bm{X}$ is the stock return, where the $i$-th
row of $\bm{X}$ represents the time series data of the $i$-th stock.
Then, the factor loadings $\bm{B}$ assess the risk exposure of these
stocks to the latent common factors $\bm{f}_{t}$, with the $i$-th
row $\bm{b}_{i}$ being the $i$-th stock's beta.

For any distinct $i$ and $j$, we aim to test the hypothesis that
$\bm{b}_{i}=\bm{b}_{j}$, i.e., if the $i$-th and the $j$-th stocks
have the same risk exposure on the common factors. It is a statistical
approach to evaluate the similarity in risk structure between two
stocks. Our test statistic is similar to the idea of two-sample test:
we show that $\widehat{\bm{B}}_{i,\cdot}-\widehat{\bm{B}}_{j,\cdot}$
is approximately Gaussian and then derive a Chi-square test statistic.
In Theorem \ref{Thm factor test plug-in Chi-sq} below, we construct
the test statistic and show its validity.

\begin{theorem}\label{Thm two-sample test B}Suppose that the assumptions
in Theorem \ref{Thm UV 1st approx row-wise error} hold. Assume that
$s(\bm{\Sigma}_{\varepsilon})/\lambda_{\min}(\bm{\Sigma}_{\varepsilon})\leq C_{\varepsilon}$
for some universal constant $C_{\varepsilon}$, and the covariance
estimator $\widehat{\bm{\bm{\Sigma}}}_{\varepsilon}^{\tau}$ satisfies
the conditions as in Lemma \ref{Lemma noise cov matrix estimate}.
For any given target error level $\delta>0$, assume there exists
some universal constants $c_{0},C_{0}>0$ such that $T\geq C_{0}r^{2}\delta^{-2}\log n$,
\begin{subequations}
\begin{equation}
\max(\Vert\bar{\bm{U}}_{i,\cdot}\Vert_{2},\Vert\bar{\bm{U}}_{j,\cdot}\Vert_{2})\leq c_{0}\delta(\frac{n}{\theta\sqrt{T}}+1)^{-1}\kappa_{\varepsilon}^{-1/2}r^{1/2}\log^{-1}n,\label{B two-sample test SNR 1}
\end{equation}
and 
\begin{equation}
\min(\vartheta_{i},\vartheta_{j})\geq C_{0}\delta^{-1}\sqrt{\kappa_{\varepsilon}}\frac{\theta n}{\vartheta\sqrt{T}}(\big\Vert\bar{\bm{U}}\big\Vert_{2,\infty}+\frac{1}{\sqrt{n}})\sqrt{r}\log^{3/2}n;\label{B two-sample test SNR 2}
\end{equation}
\end{subequations}
 for the parameter $\epsilon_{N,T}$ (cf. Lemma \ref{Lemma noise cov matrix estimate})
that captures the estimation error of the noise covariance matrix
$\bm{\Sigma}_{\varepsilon}$, assume that $\epsilon_{N,T}\leq c_{0}(\delta r^{3}\log^{2}n)^{1/(1-q)}$.
Then, for any $i,j\in\{1,2,\ldots,N\}$ satisfying $i\neq j$, under
the null hypothesis $H_{0}:\ \bm{b}_{i}=\bm{b}_{j}$, we have that
\[
\left|\mathbb{P}\left(\mathfrak{\widehat{\mathcal{B}}}_{ij}\leq\chi_{1-\alpha}^{2}(r)\right)-(1-\alpha)\right|\leq\delta,
\]
where the test statistic $\mathfrak{\widehat{\mathcal{B}}}_{ij}$
is given by 
\[
\mathfrak{\widehat{\mathcal{B}}}_{ij}:=T((\widehat{\bm{\bm{\Sigma}}}_{\varepsilon}^{\tau})_{i,i}+(\widehat{\bm{\bm{\Sigma}}}_{\varepsilon}^{\tau})_{j,j}-2(\widehat{\bm{\bm{\Sigma}}}_{\varepsilon}^{\tau})_{i,j})^{-1}\big\Vert\widehat{\bm{B}}_{i,\cdot}-\widehat{\bm{B}}_{j,\cdot}\big\Vert_{2}^{2}.
\]

\end{theorem}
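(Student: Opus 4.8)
The plan is to reduce the statistic $\mathfrak{\widehat{\mathcal{B}}}_{ij}$ to a clean quadratic form in the noise and then invoke a chi-square approximation. Recall from Algorithm~\ref{alg:PCA-SVD} that $\widehat{\bm{B}}=\widehat{\bm{U}}\widehat{\bm{\Sigma}}$; since $\widehat{\bm{V}}$ has orthonormal columns spanning the top-$r$ right singular subspace of $T^{-1/2}\bm{X}$, one has the identity $\widehat{\bm{U}}\widehat{\bm{\Sigma}}=T^{-1/2}\bm{X}\widehat{\bm{V}}$, i.e.\ $\widehat{\bm{B}}_{i,\cdot}$ is exactly the least-squares fit of $\bm{X}_{i,\cdot}$ on $\widehat{\bm{V}}$. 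Using $\bm{X}=\bm{B}\bm{F}^{\top}+\bm{E}$ we get $\widehat{\bm{B}}_{i,\cdot}-\widehat{\bm{B}}_{j,\cdot}=T^{-1/2}\big((\bm{b}_{i}-\bm{b}_{j})^{\top}\bm{F}^{\top}+\bm{E}_{i,\cdot}-\bm{E}_{j,\cdot}\big)\widehat{\bm{V}}$, so that \emph{under $H_{0}:\bm{b}_{i}=\bm{b}_{j}$ the deterministic part vanishes identically} and $\widehat{\bm{B}}_{i,\cdot}-\widehat{\bm{B}}_{j,\cdot}=T^{-1/2}(\bm{E}_{i,\cdot}-\bm{E}_{j,\cdot})\widehat{\bm{V}}$. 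Writing the row vector $\bm{d}^{\top}:=(\bm{\Sigma}_{\varepsilon}^{1/2})_{i,\cdot}-(\bm{\Sigma}_{\varepsilon}^{1/2})_{j,\cdot}$ and using $\bm{E}=\bm{\Sigma}_{\varepsilon}^{1/2}\bm{Z}$, this is $T^{-1/2}\bm{d}^{\top}\bm{Z}\widehat{\bm{V}}$, with $\Vert\bm{d}\Vert_{2}^{2}=(\bm{\Sigma}_{\varepsilon})_{i,i}+(\bm{\Sigma}_{\varepsilon})_{j,j}-2(\bm{\Sigma}_{\varepsilon})_{i,j}=:\phi_{ij}$; since $\phi_{ij}$ is the value of the quadratic form $\bm{v}\mapsto\bm{v}^{\top}\bm{\Sigma}_{\varepsilon}\bm{v}$ at a vector $\bm{v}$ of norm $\sqrt{2}$, we have $\phi_{ij}\ge 2\lambda_{\min}(\bm{\Sigma}_{\varepsilon})>0$. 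Consequently $\mathfrak{\widehat{\mathcal{B}}}_{ij}=T\widehat{\phi}_{ij}^{-1}\Vert\widehat{\bm{B}}_{i,\cdot}-\widehat{\bm{B}}_{j,\cdot}\Vert_{2}^{2}=\widehat{\phi}_{ij}^{-1}\Vert\bm{d}^{\top}\bm{Z}\widehat{\bm{V}}\Vert_{2}^{2}$ with $\widehat{\phi}_{ij}:=(\widehat{\bm{\Sigma}}_{\varepsilon}^{\tau})_{i,i}+(\widehat{\bm{\Sigma}}_{\varepsilon}^{\tau})_{j,j}-2(\widehat{\bm{\Sigma}}_{\varepsilon}^{\tau})_{i,j}$, and it remains to (i) pin down the law of $\phi_{ij}^{-1}\Vert\bm{d}^{\top}\bm{Z}\widehat{\bm{V}}\Vert_{2}^{2}$ and (ii) replace $\phi_{ij}$ by $\widehat{\phi}_{ij}$.

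\textbf{Step 1: isolating the leading term.} By the first-order expansion $\widehat{\bm{V}}\bm{R}_{V}-\bm{V}=\bm{G}_{V}+\bm{\Psi}_{V}$ of Theorem~\ref{Thm UV 1st approx row-wise error} and the rotational invariance of $\Vert\cdot\Vert_{2}$, $\Vert\bm{d}^{\top}\bm{Z}\widehat{\bm{V}}\Vert_{2}=\Vert\bm{d}^{\top}\bm{Z}\widehat{\bm{V}}\bm{R}_{V}\Vert_{2}=\Vert\bm{d}^{\top}\bm{Z}\bm{V}+\bm{d}^{\top}\bm{Z}\bm{G}_{V}+\bm{d}^{\top}\bm{Z}\bm{\Psi}_{V}\Vert_{2}$; I would show the two error contributions $\bm{d}^{\top}\bm{Z}\bm{G}_{V}$ and $\bm{d}^{\top}\bm{Z}\bm{\Psi}_{V}$ are of smaller order than $\delta r^{-1/2}\sqrt{\phi_{ij}}$, leaving the clean leading term $T^{-1/2}\bm{d}^{\top}\bm{Z}\bm{V}$ (equivalently $\widehat{\bm{B}}_{i,\cdot}-\widehat{\bm{B}}_{j,\cdot}=T^{-1/2}\bm{d}^{\top}\bm{Z}\bm{V}\bm{R}_{V}^{\top}+(\text{negligible})$). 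This is the contrast version---the $i$-th minus the $j$-th coordinate in place of a single coordinate---of the single-row computation behind Corollary~\ref{Thm B inference}: $\bm{d}^{\top}\bm{Z}\bm{\Psi}_{V}$ is handled by the row-wise control of $\bm{\Psi}_{V}$, while $\bm{d}^{\top}\bm{Z}\bm{G}_{V}=T^{-1/2}\bm{d}^{\top}\bm{Z}\bm{Z}^{\top}\bm{\Sigma}_{\varepsilon}^{1/2}\bm{U}\bm{\Lambda}^{-1}$ is a quadratic-in-$\bm{Z}$ term whose mean $T^{1/2}\bm{d}^{\top}\bm{\Sigma}_{\varepsilon}\bm{U}\bm{\Lambda}^{-1}$ and fluctuation must both be bounded; the factors $\max(\Vert\bar{\bm{U}}_{i,\cdot}\Vert_{2},\Vert\bar{\bm{U}}_{j,\cdot}\Vert_{2})$ in \eqref{B two-sample test SNR 1} and $\min(\vartheta_{i},\vartheta_{j})$ in \eqref{B two-sample test SNR 2} are exactly what render these remainders negligible. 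A feature specific to the two-sample statistic is that $\phi_{ij}$ may be far smaller than $(\bm{\Sigma}_{\varepsilon})_{i,i}$ or $(\bm{\Sigma}_{\varepsilon})_{j,j}$ when units $i,j$ are strongly cross-correlated, so the remainders must be gauged against $\sqrt{\phi_{ij}}$; this is why the lower bound $\phi_{ij}\ge 2\lambda_{\min}(\bm{\Sigma}_{\varepsilon})$ and the assumption $s(\bm{\Sigma}_{\varepsilon})/\lambda_{\min}(\bm{\Sigma}_{\varepsilon})\le C_{\varepsilon}$ enter.

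\textbf{Step 2: chi-square approximation of the leading term.} Conditioning on $\bm{F}$---which determines $\bm{V}$, $\bm{\Lambda}$ and leaves $\bm{Z}$ conditionally independent---write $(T^{-1/2}\bm{d}^{\top}\bm{Z}\bm{V})^{\top}=\sum_{t=1}^{T}\bm{\xi}_{t}$ with $\bm{\xi}_{t}:=T^{-1/2}(\bm{d}^{\top}\bm{Z}_{\cdot,t})\bm{V}_{t,\cdot}^{\top}$, which are independent over $t$ with $\mathbb{E}[\bm{\xi}_{t}]=0$ and $\sum_{t}\mathrm{Cov}(\bm{\xi}_{t})=T^{-1}\phi_{ij}\bm{V}^{\top}\bm{V}=T^{-1}\phi_{ij}\bm{I}_{r}$. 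Since $\bm{d}^{\top}\bm{Z}_{\cdot,t}$ is mean-zero with variance $\phi_{ij}$ and sub-Gaussian norm $\lesssim\sqrt{\phi_{ij}}$, it has an $O(1)$ normalized third moment, and a multivariate Berry--Esseen bound over convex sets---the same device used to prove Corollary~\ref{Thm B inference}---gives an approximation error controlled by $\Vert\bm{V}\Vert_{2,\infty}\lesssim T^{-1/2}\log^{1/2}n$ (cf.\ the bound on $\Vert\bm{V}\Vert_{2,\infty}$ noted after Theorem~\ref{Thm UV 1st approx row-wise error}); the assumed $T\ge C_{0}r^{2}\delta^{-2}\log n$ makes this error $\le\delta$. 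Combining with Step~1 and rotational invariance, conditionally on $\bm{F}$ the vector $\sqrt{T/\phi_{ij}}\,(\widehat{\bm{B}}_{i,\cdot}-\widehat{\bm{B}}_{j,\cdot})^{\top}$ is within $O(\delta)$ of $\mathcal{N}(0,\bm{I}_{r})$ in convex-set distance, hence $T\phi_{ij}^{-1}\Vert\widehat{\bm{B}}_{i,\cdot}-\widehat{\bm{B}}_{j,\cdot}\Vert_{2}^{2}$ is within $O(\delta)$ of $\chi^{2}(r)$ in CDF; averaging over $\bm{F}$ preserves the bound.

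\textbf{Step 3: estimated normalizer, and the main obstacle.} By Lemma~\ref{Lemma noise cov matrix estimate}, $|\widehat{\phi}_{ij}-\phi_{ij}|\le 2\Vert\widehat{\bm{\Sigma}}_{\varepsilon}^{\tau}-\bm{\Sigma}_{\varepsilon}\Vert_{2}\lesssim\epsilon_{N,T}^{1-q}s(\bm{\Sigma}_{\varepsilon})$, so $|\widehat{\phi}_{ij}/\phi_{ij}-1|\lesssim\epsilon_{N,T}^{1-q}s(\bm{\Sigma}_{\varepsilon})/\lambda_{\min}(\bm{\Sigma}_{\varepsilon})\le C_{\varepsilon}\epsilon_{N,T}^{1-q}$, which under the assumed bound on $\epsilon_{N,T}$ is $\lesssim\delta r^{-1/2}$; since the $\chi^{2}(r)$ density is $\lesssim r^{-1/2}$ near its quantiles and those quantiles are $\asymp r$, rescaling the argument of $\chi^{2}(r)$ by $1+O(\delta r^{-1/2})$ moves its CDF by $O(\delta)$, so substituting $\widehat{\phi}_{ij}$ for $\phi_{ij}$ costs only $O(\delta)$. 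Assembling the $O(\delta)$ errors of Steps~1--3 and rescaling the constants $c_{0},C_{0}$ yields $|\mathbb{P}(\mathfrak{\widehat{\mathcal{B}}}_{ij}\le\chi_{1-\alpha}^{2}(r))-(1-\alpha)|\le\delta$. The hard part is Step~1: cross-sectional correlation makes $\bm{E}_{i,\cdot}-\bm{E}_{j,\cdot}=\bm{d}^{\top}\bm{Z}$ depend on \emph{every} row of $\bm{Z}$, so it is correlated with $\widehat{\bm{V}}$, and the crude bound $\Vert\bm{d}^{\top}\bm{Z}(\widehat{\bm{V}}-\bm{V}\bm{R}_{V}^{\top})\Vert_{2}\le\Vert\bm{d}^{\top}\bm{Z}\Vert_{2}\Vert\widehat{\bm{V}}-\bm{V}\bm{R}_{V}^{\top}\Vert_{2}$ is far too lossy---a standard leave-one-out construction does not decouple the relevant direction, exactly the difficulty highlighted in the paper. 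One must instead lean on the sharp, $\kappa$-free row-wise first-order expansion of Theorem~\ref{Thm UV 1st approx row-wise error} (obtained via matrix concentration), whose precision, together with the contrast structure, is what produces the mild signal-to-noise requirements \eqref{B two-sample test SNR 1}--\eqref{B two-sample test SNR 2} in place of a pervasiveness-type condition.
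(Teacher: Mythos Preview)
Your Steps~2 and~3 (Berry--Esseen for the leading term and plug-in of the normalizer) match the paper's argument, and your identification of the leading term $T^{-1/2}\bm{d}^{\top}\bm{Z}\bm{V}$ is correct. The difference---and the gap---is in Step~1. You expand on the \emph{$\bm{V}$-side}, writing $\widehat{\bm{V}}\bm{R}_{V}=\bm{V}+\bm{G}_{V}+\bm{\Psi}_{V}$, and propose to bound $\bm{d}^{\top}\bm{Z}\bm{G}_{V}$ and $\bm{d}^{\top}\bm{Z}\bm{\Psi}_{V}$ separately. But the first of these, $T^{-1/2}\bm{d}^{\top}\bm{Z}\bm{G}_{V}=T^{-1}\bm{d}^{\top}\bm{Z}\bm{Z}^{\top}\bm{\Sigma}_{\varepsilon}^{1/2}\bm{U}\bm{\Lambda}^{-1}$, has mean $\bm{d}^{\top}\bm{\Sigma}_{\varepsilon}^{1/2}\bm{U}\bm{\Lambda}^{-1}$; after normalizing by $\sqrt{T/\phi_{ij}}$ this bias is of order $\sqrt{T}/\theta$, which is \emph{not} small under the theorem's hypotheses (they allow $\theta$ to grow only polylogarithmically when $N\asymp T$). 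Conditions~\eqref{B two-sample test SNR 1}--\eqref{B two-sample test SNR 2} do not touch this quantity---they involve $\Vert\bar{\bm{U}}_{i,\cdot}\Vert_{2}$, $\Vert\bar{\bm{U}}_{j,\cdot}\Vert_{2}$, $\vartheta_{i}$, $\vartheta_{j}$, none of which control $((\bm{\Sigma}_{\varepsilon})_{i,\cdot}-(\bm{\Sigma}_{\varepsilon})_{j,\cdot})\bm{U}\bm{\Lambda}^{-1}$. So the two pieces $\bm{d}^{\top}\bm{Z}\bm{G}_{V}$ and $\bm{d}^{\top}\bm{Z}\bm{\Psi}_{V}$ are individually too large; there is cancellation between them that your decomposition destroys.

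The paper sidesteps this entirely by expanding on the \emph{$\bm{U}$-side}: from $\widehat{\bm{U}}\widehat{\bm{\Sigma}}\bm{R}_{V}=\bm{U}\bm{\Lambda}+\bm{G}_{U}\bm{\Lambda}+\bm{\Psi}_{U}\bm{\Lambda}+\widehat{\bm{U}}\bm{R}_{U}\varepsilon_{\bm{\Sigma}}$ one gets, under $H_{0}$ (so $\bm{U}_{i,\cdot}=\bm{U}_{j,\cdot}$), the same leading term $T^{-1/2}\bm{d}^{\top}\bm{Z}\bm{V}=((\bm{G}_{U})_{i,\cdot}-(\bm{G}_{U})_{j,\cdot})\bm{\Lambda}$, but now the remainder is $((\bm{\Psi}_{U})_{i,\cdot}-(\bm{\Psi}_{U})_{j,\cdot})\bm{\Lambda}$ plus a small $\varepsilon_{\bm{\Sigma}}$ contribution. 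The row-wise bound on $(\bm{\Psi}_{U})_{k,\cdot}$ from Theorem~\ref{Thm UV 1st approx row-wise error}(i)---which already has the quadratic-in-$\bm{Z}$ interactions baked in via the leave-one-out/matrix-concentration argument---controls this directly, and it is exactly those bounds (featuring $\omega_{i},\omega_{j},\omega,\Vert\bar{\bm{U}}_{i,\cdot}\Vert_{2},\Vert\bar{\bm{U}}\Vert_{2,\infty}$) that translate into \eqref{B two-sample test SNR 1}--\eqref{B two-sample test SNR 2}. In short: use $\bm{\Psi}_{U}$, not $\bm{G}_{V}+\bm{\Psi}_{V}$, to package the remainder.
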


\begin{remark}If all the entries of the noise matrix $\bm{Z}$ in
(\ref{noise matrix formula}) are Gaussian, then we do not need the
assumption that $T\geq C_{0}r^{2}\delta^{-2}\log n$ on the sample
size $T$.\end{remark}

To illustrate the SNR conditions (\ref{B two-sample test SNR 1})--(\ref{B two-sample test SNR 2}),
we consider the setting discussed after Theorem \ref{Thm UV 1st approx row-wise error},
i.e.,~$r\asymp1$, $N\asymp T$, $\bm{\Sigma}_{\varepsilon}$ is
well-conditioned, and the column subspace of $\bm{B}$ is incoherent.
Then, fulfilling our assumptions (\ref{B two-sample test SNR 1})--(\ref{B two-sample test SNR 2})
merely requires that $\theta\gg\log n$ and $\theta^{-1}\vartheta\min(\vartheta_{i},\vartheta_{j})\gg\log^{3/2}n$.
These SNR conditions are less restrictive than the prior work \citep[e.g.,][]{BaiNg2023PCA,Jiang2023PCA,ChoiMing2024PCA}
that required the SNR to grow with a polynomial rate of $n$.

To the best of our knowledge, no prior work has developed the test
statistics for this two-sample test of betas. Our results are applicable
to the weak factor model with the cross-sectional correlations. The
subspace perturbation bounds in Theorem \ref{Thm UV 1st approx row-wise error}
pave the way for us to statistically assess the similarity for arbitrary
two rows of factor loadings $\bm{B}$.

\subsection{Statistical inference for the systematic risks}

In the factor model $x_{i,t}=\bm{b}_{i}^{\top}\bm{f}_{t}+\varepsilon_{i,t}$,
the risk associated with stock return $x_{i,t}$ is decomposed into
two parts \citep[e.g.,][]{Bai2003ECTA} --- systematic risk from
the common component $\bm{b}_{i}^{\top}\bm{f}_{t}$ and idiosyncratic
risk from the noise $\varepsilon_{i,t}$. Systematic risk, often referred
to as market risk, is integral in financial economics as it represents
the inherent risk affecting the entire market or market segment. This
risk, driven by broader economic forces such as inflation, political
events, and changes in interest rates, comes from the risk factors
$\bm{f}_{t}$ that explain the systematic co-movements and impacts
all the stocks.

A standard metric of the systematic risk is the variance of $\bm{b}_{i}^{\top}\bm{f}_{t}$,
which is given by $\Var(\bm{b}_{i}^{\top}\bm{f}_{t})=
\Vert\bm{b}_{i}\Vert_{2}^{2}$. Our focus is on constructing the confidence interval (CI) for systematic
risk $\Vert\bm{b}_{i}\Vert_{2}^{2}$. Similar to the idea we conduct
the inference for beta $\bm{b}_{i}$ in Corollary \ref{Thm B inference}
where the estimator of $\bm{b}_{i}$ is $\widehat{\bm{B}}_{i,\cdot}$,
we show that $\Vert\widehat{\bm{B}}_{i,\cdot}\Vert_{2}^{2}-\Vert\bm{b}_{i}\Vert_{2}^{2}$
is approximately Gaussian and then construct the CI for $\Vert\bm{b}_{i}\Vert_{2}^{2}$.
We present the CI and its validity in Theorem \ref{Thm B row norm}.

\begin{theorem}\label{Thm B row norm}Suppose that the assumptions
in Theorem \ref{Thm UV 1st approx row-wise error} hold. Assume that
covariance estimator $\widehat{\bm{\bm{\Sigma}}}_{\varepsilon}^{\tau}$
satisfies the conditions as in Lemma \ref{Lemma noise cov matrix estimate}.
For any given target error level $\delta>0$, assume there exist some
universal constants $c_{0},C_{0},C_{U}>0$ such that $T\geq C_{0}\delta^{-2}\log n$,
$\Vert\bar{\bm{U}}\Vert_{2,\infty}\leq C_{U}\theta/\vartheta$, $\Vert\bar{\bm{U}}\Vert_{2,\infty}\leq C_{U}\sqrt{T}\vartheta/n$,
$\Vert\bar{\bm{U}}\Vert_{2,\infty}\leq C_{U}\sqrt{T}\theta^{2}/(\vartheta n)$,
$\Vert\bm{b}_{i}\Vert_{2}^{2}/(\bm{\Sigma}_{\varepsilon})_{i,i}\leq c_{0}\delta^{2}r^{-1}\log^{-1}n$,
\begin{subequations}
\begin{equation}
\vartheta_{i}\geq C_{0}\delta^{-1}\frac{\sqrt{\kappa_{\varepsilon}}\theta}{\vartheta_{i}\sqrt{T}}\frac{1}{\Vert\bar{\bm{U}}_{i,\cdot}\Vert_{2}}r\log^{2}n,\qquad\vartheta\geq C_{0}\delta^{-1}\frac{\sqrt{\kappa_{\varepsilon}}\theta}{\vartheta_{i}\sqrt{T}}(\sqrt{n}+n\Vert\bar{\bm{U}}\Vert_{2,\infty})r\log^{2}n\label{B rownorm CI SNR 1}
\end{equation}
\begin{equation}
\vartheta_{i}\geq C_{0}\delta^{-1}\sqrt{\kappa_{\varepsilon}}(\frac{1}{\sqrt{T}}+\frac{n}{\theta T})(1+\frac{1}{\Vert\bar{\bm{U}}_{i,\cdot}\Vert_{2}})r\log n,\qquad\text{and}\qquad\Vert\bar{\bm{U}}_{i,\cdot}\Vert_{2}\leq c_{0}\delta\kappa_{\varepsilon}^{-1/2}(1+\frac{n}{\theta\sqrt{T}})^{-1};\label{B rownorm CI SNR 2}
\end{equation}
\end{subequations}
 for the parameter $\epsilon_{N,T}$ (cf. Lemma \ref{Lemma noise cov matrix estimate})
that captures the estimation error of the noise covariance matrix
$\bm{\Sigma}_{\varepsilon}$, assume that $\epsilon_{N,T}\leq c_{0}\delta r^{-1}\log^{-1/2}n$.
Then we have that, for any $1\leq i\leq N$, 
\[
\left\vert \mathbb{P}(\Vert\bm{b}_{i}\Vert_{2}^{2}\in\text{CI}_{i}^{\bm{B},(1-\alpha)})-(1-\alpha)\right\vert \leq\delta,
\]
where the confidence interval is constructed as 
\[
\text{CI}_{i}^{\bm{B},(1-\alpha)}:=[\big\Vert\widehat{\bm{B}}_{i,\cdot}\big\Vert_{2}^{2}-\widehat{\sigma}_{B,i}z_{1-\frac{1}{2}\alpha},\big\Vert\widehat{\bm{B}}_{i,\cdot}\big\Vert_{2}^{2}+\widehat{\sigma}_{B,i}z_{1-\frac{1}{2}\alpha}],\qquad\text{with}\qquad\widehat{\sigma}_{B,i}:=\frac{2}{\sqrt{T}}\sqrt{(\widehat{\bm{\bm{\Sigma}}}_{\varepsilon}^{\tau})_{i,i}}\big\Vert\widehat{\bm{B}}_{i,\cdot}\big\Vert_{2},
\]
and $z_{p}$ is the $p$-quantile of the standard Gaussian $\mathcal{N}(0,1)$.

\end{theorem}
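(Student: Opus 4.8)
The plan is to reduce the claim to a scalar Berry--Esseen bound for a linear functional of the idiosyncratic noise, together with consistency of the plug-in variance. First I would record that $\widehat{\bm{B}}_{i,\cdot}=\widehat{\bm{U}}_{i,\cdot}\widehat{\bm{\Sigma}}=T^{-1/2}\bm{X}_{i,\cdot}\widehat{\bm{V}}$ (the two forms agree because $T^{-1/2}\bm{X}\widehat{\bm{V}}=\widehat{\bm{U}}\widehat{\bm{\Sigma}}$), and then invoke the first-order expansion of Theorem~\ref{Thm UV 1st approx row-wise error} --- equivalently the row-wise decomposition behind Corollary~\ref{Thm B inference} --- to write
\[
\widehat{\bm{B}}_{i,\cdot}^{\top}-(\bm{B}\bm{R}_{B})_{i,\cdot}^{\top}=\bm{g}_{i}+\bm{\zeta}_{i},\qquad\bm{g}_{i}:=\tfrac{1}{\sqrt{T}}\bm{R}_{V}\bm{V}^{\top}\bm{E}_{i,\cdot}^{\top},
\]
where, by Assumption~\ref{Assump_noise_Z_entries} and conditionally on $\bm{F}$, the leading term $\bm{g}_{i}$ has mean zero and covariance $T^{-1}(\bm{\Sigma}_{\varepsilon})_{i,i}\bm{I}_{r}$, and $\bm{\zeta}_{i}$ collects the $\bm{\Psi}_{U}$-contribution, the gap between $\widehat{\bm{\Sigma}}$ and its deterministic surrogate $\bm{R}_{U}^{\top}\bm{\Lambda}\bm{R}_{V}$, and the departure of $\bm{R}_{B}$ from a genuine rotation, all of which Theorem~\ref{Thm UV 1st approx row-wise error} and Lemma~\ref{Lemma SVD BF good event} control. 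Squaring and writing $\sigma_{B,i}:=\tfrac{2}{\sqrt{T}}\sqrt{(\bm{\Sigma}_{\varepsilon})_{i,i}}\,\|\bm{b}_{i}\|_{2}$, I obtain $\|\widehat{\bm{B}}_{i,\cdot}\|_{2}^{2}-\|\bm{b}_{i}\|_{2}^{2}=2(\bm{B}\bm{R}_{B})_{i,\cdot}\bm{g}_{i}+\mathrm{Rem}_{i}$, with $\mathrm{Rem}_{i}=(\|\bm{b}_{i}^{\top}\bm{R}_{B}\|_{2}^{2}-\|\bm{b}_{i}\|_{2}^{2})+\|\bm{g}_{i}\|_{2}^{2}+2(\bm{B}\bm{R}_{B})_{i,\cdot}\bm{\zeta}_{i}+2\bm{g}_{i}^{\top}\bm{\zeta}_{i}+\|\bm{\zeta}_{i}\|_{2}^{2}$; the goal is to show the first summand is the genuine, asymptotically $\mathcal{N}(0,\sigma_{B,i}^{2})$ fluctuation while $|\mathrm{Rem}_{i}|=o(\delta\,\sigma_{B,i})$.

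The second step --- bounding $\mathrm{Rem}_{i}$ --- is where the work lies, and is delicate because $\sigma_{B,i}$ is itself small in the weak-factor regime (it scales with $\|\bm{b}_{i}\|_{2}$ and $T^{-1/2}$), so every term must be tracked with row-wise precision and with explicit dependence on $\|\bm{b}_{i}\|_{2}$ and $(\bm{\Sigma}_{\varepsilon})_{i,i}$. The rotation-correction piece $\|\bm{b}_{i}^{\top}\bm{R}_{B}\|_{2}^{2}-\|\bm{b}_{i}\|_{2}^{2}=\bm{b}_{i}^{\top}(\bm{R}_{B}\bm{R}_{B}^{\top}-\bm{I}_{r})\bm{b}_{i}$ and the cross-term $2(\bm{B}\bm{R}_{B})_{i,\cdot}\bm{\zeta}_{i}$ are of second subspace-perturbation order in $\|\bm{b}_{i}\|_{2}^{2}$ (via Lemma~\ref{Lemma SVD BF good event} and the $\bm{\Psi}_{U}$ bound of Theorem~\ref{Thm UV 1st approx row-wise error}), and the upper bounds on $\|\bm{b}_{i}\|_{2}^{2}/(\bm{\Sigma}_{\varepsilon})_{i,i}$ and on $\|\bar{\bm{U}}_{i,\cdot}\|_{2}$ in the hypotheses are precisely what pushes them below $\delta\,\sigma_{B,i}$; the quadratic-in-noise term has $\mathbb{E}[\|\bm{g}_{i}\|_{2}^{2}\mid\bm{F}]=rT^{-1}(\bm{\Sigma}_{\varepsilon})_{i,i}$ with sub-exponential concentration, and the lower bounds on $\vartheta_{i}$ together with $T\geq C_{0}\delta^{-2}\log n$ make it $\ll\delta\,\sigma_{B,i}$; the remaining terms $\bm{g}_{i}^{\top}\bm{\zeta}_{i}$ and $\|\bm{\zeta}_{i}\|_{2}^{2}$ are of still lower order. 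For the Gaussian approximation of the leading term, I would write $2(\bm{B}\bm{R}_{B})_{i,\cdot}\bm{g}_{i}=\tfrac{2}{\sqrt{T}}\sum_{j=1}^{N}\sum_{t=1}^{T}c_{t}(\bm{\Sigma}_{\varepsilon}^{1/2})_{i,j}Z_{j,t}$ with $\bm{c}^{\top}:=(\bm{B}\bm{R}_{B})_{i,\cdot}\bm{R}_{V}\bm{V}^{\top}$, noting $\|\bm{c}\|_{2}=\|\bm{b}_{i}^{\top}\bm{R}_{B}\|_{2}=\|\bm{b}_{i}\|_{2}(1+o(1))$ and $\|\bm{c}\|_{\infty}\lesssim\|\bm{V}\|_{2,\infty}\,\|\bm{b}_{i}\|_{2}\lesssim T^{-1/2}\log^{1/2}n\,\|\bm{b}_{i}\|_{2}$; this is a sum of independent mean-zero sub-Gaussian terms (Assumption~\ref{Assump_noise_Z_entries}) of total conditional variance $\tfrac{4}{T}\|\bm{c}\|_{2}^{2}(\bm{\Sigma}_{\varepsilon})_{i,i}=\sigma_{B,i}^{2}(1+o(1))$ and with maximal-coefficient-to-standard-deviation ratio $\lesssim\sqrt{\log n/T}$, so Berry--Esseen gives Kolmogorov distance $\lesssim\sqrt{\log n/T}\leq\delta$ under $T\geq C_{0}\delta^{-2}\log n$ (and this step is exact, making the sample-size condition unnecessary, when the $Z_{j,t}$ are Gaussian, which is the content of the remark).

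Finally, for the plug-in I would use $(\widehat{\bm{\Sigma}}_{\varepsilon}^{\tau})_{i,i}=(\bm{\Sigma}_{\varepsilon})_{i,i}(1+O(\epsilon_{N,T}))$ from Lemma~\ref{Lemma noise cov matrix estimate} together with the estimate $|\,\|\widehat{\bm{B}}_{i,\cdot}\|_{2}^{2}-\|\bm{b}_{i}\|_{2}^{2}\,|\lesssim\delta\,\|\bm{b}_{i}\|_{2}^{2}$ (a by-product of the previous step and the hypotheses) to conclude $\widehat{\sigma}_{B,i}=\sigma_{B,i}(1+O(\delta))$; then, conditionally on $\bm{F}$ and hence unconditionally, $(\|\widehat{\bm{B}}_{i,\cdot}\|_{2}^{2}-\|\bm{b}_{i}\|_{2}^{2})/\widehat{\sigma}_{B,i}$ is within Kolmogorov distance $O(\delta)$ of $\mathcal{N}(0,1)$, and since the standard normal is anti-concentrated the $O(\delta)$ perturbations of center and scale change $\mathbb{P}(|\,\|\widehat{\bm{B}}_{i,\cdot}\|_{2}^{2}-\|\bm{b}_{i}\|_{2}^{2}\,|\leq\widehat{\sigma}_{B,i}z_{1-\alpha/2})$ by at most $O(\delta)$; combining this with $\mathbb{P}(|\mathcal{N}(0,1)|\leq z_{1-\alpha/2})=1-\alpha$ and rescaling the generic constants yields the stated bound. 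The main obstacle is the remainder control in the second step: because $\sigma_{B,i}$ shrinks with the factor strength of unit $i$, one must keep exact track of how the row-wise bounds of Theorem~\ref{Thm UV 1st approx row-wise error} and the departure of $\bm{R}_{B}$ from orthogonality scale in $\|\bm{b}_{i}\|_{2}$ and $(\bm{\Sigma}_{\varepsilon})_{i,i}$, and it is this balance that dictates the two-sided window for $\|\bm{b}_{i}\|_{2}^{2}/(\bm{\Sigma}_{\varepsilon})_{i,i}$ built into the hypotheses.
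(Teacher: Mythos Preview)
Your proposal is correct and follows essentially the same route as the paper: expand $\widehat{\bm{B}}_{i,\cdot}=\widehat{\bm{U}}_{i,\cdot}\widehat{\bm{\Sigma}}$ via the first-order decomposition of Theorem~\ref{Thm UV 1st approx row-wise error}, extract the scalar leading term $2T^{-1/2}\bm{E}_{i,\cdot}\bm{V}\bm{\Lambda}\bm{U}_{i,\cdot}^{\top}$ (which is your $2(\bm{B}\bm{R}_B)_{i,\cdot}\bm{g}_i$ after unwinding $\bm{B}\bm{R}_B=\bm{U}\bm{\Lambda}\bm{R}_V^{\top}$), apply the one-dimensional Berry--Esseen bound to get Kolmogorov error $\lesssim\sqrt{\log n/T}$, and absorb the remainder and the plug-in variance error as additive perturbations of the standardized statistic. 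The only cosmetic differences are that the paper works in the $(\bm{U},\bm{\Lambda},\bm{V})$ frame and defines the reference scale with $\|\bm{U}_{i,\cdot}\bm{\Lambda}\|_2$ rather than $\|\bm{b}_i\|_2$ (so the gap $\varepsilon_J=\|\bm{U}_{i,\cdot}\bm{\Lambda}\|_2^2-\|\bm{b}_i\|_2^2$ appears explicitly there), and it organizes the remainder into three pieces $\varepsilon_1$ (from $\widehat{\bm{\Sigma}}$), $\varepsilon_2$ (from $\bm{\Psi}_U$ and the quadratic noise), $\varepsilon_J$; these match one-to-one with the pieces of your $\mathrm{Rem}_i$.
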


\begin{remark}If all the entries of the noise matrix $\bm{Z}$ in
(\ref{noise matrix formula}) are Gaussian, then we do not need the
assumption that $T\geq C_{0}\delta^{-2}\log n$ on the sample size
$T$.\end{remark}

We now explain the assumptions under the setting discussed after Theorem
\ref{Thm two-sample test B}. In this case, to make (\ref{B rownorm CI SNR 1})
hold, it suffices to assume that $\theta^{-1}\vartheta_{i}\vartheta\gg\log^{2}n$
and $\theta^{-1}\vartheta_{i}^{2}\gg\log^{2}n$; to make (\ref{B rownorm CI SNR 2})
hold, it suffices to assume that $\theta\gg1$, $\vartheta_{i}\theta\gg\log n$,
and $\vartheta_{i}\gg T^{-1/2}\log n$. All these conditions require
only polynomial growth rates of $\log n$ for the SNRs $\vartheta_{i}$,
$\vartheta$, and $\theta$. The assumption on $\Vert\bar{\bm{U}}\Vert_{2,\infty}$
is the same with that we required in Corollary \ref{corollary:error bound B F}
to prove the estimation error for factor loading.

On one hand, as shown in Theorem \ref{Thm B row norm}, the CI width
$\widehat{\sigma}_{B,i}$ is proportional to the square root of the
product of noise level estimator $(\widehat{\bm{\bm{\Sigma}}}_{\varepsilon}^{\tau})_{i,i}$
and systematic risk estimator $\Vert\widehat{\bm{B}}_{i,\cdot}\Vert_{2}^{2}$.
On the other hand, in the proof of Theorem \ref{Thm B row norm},
we show that the bias of systematic risk estimator $\Vert\widehat{\bm{B}}_{i,\cdot}\Vert_{2}^{2}$
is proportional to $\Vert\bm{b}_{i}\Vert_{2}^{2}$. Thus, the assumption
that $\Vert\bm{b}_{i}\Vert_{2}^{2}/(\bm{\Sigma}_{\varepsilon})_{i,i}\leq c_{0}\delta^{2}r^{-1}\log^{-1}n$
is essential because the validity of inference hinges on the dominance
of the CI width over the bias. Consequently, we assume that the ratio
between systematic risk $\Vert\bm{b}_{i}\Vert_{2}^{2}$ and noise
level $(\bm{\Sigma}_{\varepsilon})_{i,i}$ cannot be too large. The
row-wise subspace perturbation bounds in Theorem \ref{Thm UV 1st approx row-wise error}
facilitate us to conduct statistical inference for the systematic
risk of any given stock in the panel data. 

\section{Numerical experiments\label{sec:Numerical-experiments}}

In this section, we conduct Monte Carlo simulations to demonstrate
our inferential theories for the PC estimators in the weak factor
models. Additionally, our empirical studies reveal that the testing
results based on our test statistics surprisingly align with the economic
cycles and financial crisis periods.

\subsection{Monte Carlo simulations}

To make our simulations similar to the real applications, we use the
standard Fama-French three-factor model: 
\[
x_{i,t}=\bm{b}_{i}^{\top}\bm{f}_{t}+\varepsilon_{i,t},\qquad1\leq i\leq N,1\leq t\leq T,
\]
where the dimension of the latent factor $\bm{f}_{t}$ 
is set to $r=3$. The idiosyncratic noise $\varepsilon_{i,t}$ exhibits
cross-sectional correlations, and the noise covariance matrix $\bm{\Sigma}_{\varepsilon}$
is sparse.

We generate the factor loadings $\{\bm{b}_{i}\}_{i=1}^{N}$, the factors
$\{\bm{f}_{t}\}_{t=1}^{T}$, and the noise terms $\{\bm{\varepsilon}_{t}\}_{t=1}^{T}$
independently from $\mathcal{N}(0,\bm{\Sigma}_{b})$, $\mathcal{N}(0,s_{f}\bm{\Sigma}_{f})$,
and $\mathcal{N}(0,\bm{\Sigma}_{\varepsilon})$, respectively, where
$\bm{\varepsilon}_{t}=(\varepsilon_{1,t},\ldots,\varepsilon_{N,t})^{\top}$.
Both $\bm{\Sigma}_{b}$ and $\bm{\Sigma}_{f}$ are set to $r\times r$
identity matrices. We generate $\bm{\Sigma}_{\varepsilon}$ as a block-diagonal
matrix $\bm{\Sigma}_{\varepsilon}=\text{\ensuremath{\mathsf{diag}}}(\bm{A}_{1},\bm{A}_{2},\ldots,\bm{A}_{J})$,
where the number of blocks is set to $J=20$. We set $N=300$ and
$T=200$. For each $\bm{A}_{i}$, we construct it as an equi-correlation
matrix $\bm{A}_{i}=(1-\rho_{i})\bm{I}_{m}+\rho_{i}\bm{1}_{m}\bm{1}_{m}^{\top}$,
where the block size $m$ is set to $m=N/J=15$, $\bm{1}_{m}$ is
the $m$-dimensional vector of ones, and $\rho_{i}$ is drawn from
a uniform distribution on $[0,0.5]$. To validate our test statistic
for two-sample test of betas, we set $\bm{b}_{2}$ equal to $\bm{b}_{1}$
after generating the loading matrix $\bm{B}$. This slight modification
allows us to examine our test statistics under the null hypothesis
$\bm{b}_{1}=\bm{b}_{2}$. In our simulation results, we report the
values of $\theta=\sigma_{r}/\Vert\bm{\Sigma}_{\varepsilon}^{1/2}\Vert_{2}$
to reflect different levels of SNR.

First, we demonstrate the practical validity of the confidence regions
constructed using Corollaries \ref{Thm F inference}--\ref{Thm B inference}
and Theorem \ref{Thm B row norm}, for the factors, betas (i.e., factor
loadings), and systematic risks, respectively.
\begin{itemize}
\item For factors (resp. betas), we construct 95\% confidence regions by
substituting the asymptotic covariance matrix $\bm{\Sigma}_{V,t}$
(resp. $\bm{\Sigma}_{B,i}$) with their consistent estimators as commented
after Corollary \ref{Thm B inference}. We define $\widehat{\mathsf{Cov}}_{F}(t)$
(resp. $\widehat{\mathsf{Cov}}_{B}(i)$) as the empirical probability
that the constructed confidence region covers $\bm{f}_{t}^{\top}\bm{R}_{F}$
(resp. $\bm{b}_{i}^{\top}\bm{R}_{B}$) over 200 Monte Carlo trials,
where $\bm{R}_{F}$ (resp. $\bm{R}_{B}$) is the rotation matrix defined
in Corollary \ref{Thm F inference} (resp. Corollary \ref{Thm B inference}). 
\item For systematic risks, similarly, we define $\widehat{\mathsf{Cov}}_{\ell_{2}}(i)$
as the empirical probability that the 95\% confidence interval $\text{CI}_{i}^{\bm{b},0.95}$
constructed via Theorem \ref{Thm B row norm} covers $\Vert\bm{b}_{i}\Vert_{2}^{2}$
over 200 Monte Carlo trials.
\end{itemize}
Finally, we compute the mean and standard deviation for $\{\widehat{\mathsf{Cov}}_{F}(t)\}_{t=1}^{T}$,
$\{\widehat{\mathsf{Cov}}_{B}(i)\}_{i=1}^{N}$, and $\{\widehat{\mathsf{Cov}}_{\ell_{2}}(i)\}_{i=1}^{N}$,
then present these as $\mathsf{Mean}(\widehat{\mathsf{Cov}})$ and
$\mathsf{Std}(\widehat{\mathsf{Cov}})$, with the results reported
in Table \ref{table:CI cover}. As indicated in Table \ref{table:CI cover},
the coverage probabilities are close to 0.95 according to the mean
value, and exhibit stability across different rows of $\bm{F}$ and
$\bm{B}$ as evidenced by low standard deviation. As SNR $\theta$
goes down, the slight slippage of the coverage probabilities of the
systematic risks reconciles with our comments after Theorem \ref{Thm B row norm}
that, $\Vert\bm{b}_{i}\Vert_{2}^{2}/(\bm{\Sigma}_{\varepsilon})_{i,i}$,
which is close to SNR, reflects the ratio between the bias and the
CI width and cannot be too large. These favorable numerical results
persist even under low SNR, supporting our inferential theory under
the weak factor model.

\begin{table}[H]
\caption{Empirical coverage rates of 95\%-CI\label{table:CI cover}}

\centering

\begin{tabular}{c|c|c|c|c|c|c}
\hline 
$(N,T)=(300,200)$ & \multicolumn{2}{c|}{Factor} & \multicolumn{2}{c|}{Beta} & \multicolumn{2}{c}{Systematic Risk}\tabularnewline
\hline 
SNR $\theta$  & $\mathsf{Mean}(\widehat{\mathsf{Cov}})$  & $\mathsf{Std}(\widehat{\mathsf{Cov}})$  & $\mathsf{Mean}(\widehat{\mathsf{Cov}})$  & $\mathsf{Std}(\widehat{\mathsf{Cov}})$  & $\mathsf{Mean}(\widehat{\mathsf{Cov}})$  & $\mathsf{Std}(\widehat{\mathsf{Cov}})$\tabularnewline
\hline 
$4.5$  & $0.9383$  & $0.0172$  & $0.9325$  & $0.0171$  & $0.9071$  & $0.0400$\tabularnewline
\hline 
$3.5$  & $0.9298$  & $0.0190$  & $0.9264$  & $0.0184$  & $0.9192$  & $0.0323$\tabularnewline
\hline 
$2.5$  & $0.9045$  & $0.0210$  & $0.9103$  & $0.0215$  & $0.9244$  & $0.0292$\tabularnewline
\hline 
\end{tabular}
\end{table}

Next, we demonstrate the effectiveness of our test statistics in Theorems
\ref{Thm factor test plug-in Chi-sq} and \ref{Thm two-sample test B}
by showing their satisfactory size and power. 
\begin{itemize}
\item For the null hypothesis $H_{0}:\text{ }\bm{v}\in col(\bm{F}_{S,\cdot})$
in the factor specification test of Theorem \ref{Thm factor test plug-in Chi-sq},
the time index subset $S$ is chosen as $S=\{\frac{1}{2}T+1,\frac{1}{2}T+2,\ldots,\frac{1}{2}T+L\}$
with $|S|=L=12$, and the observed factors $\bm{v}$ is set as
\[
\bm{v}=\bm{F}_{S,\cdot}\bm{w}+\delta\bm{g}.
\]
The vector $\bm{w}$ is set as $\bm{w}=(w_{1},w_{2},w_{3})^{\top}=(1,1,0.5)^{\top}$,
and the parameter $\delta$ controls the deviation of the alternatives
from the null distribution. The vector $\bm{g}$ is constructed as
follows: a vector $\bm{u}$ is drawn from the standard Gaussian $\mathcal{N}(0,\bm{I}_{|S|})$
and the projection residual $\bm{u}^{\perp}=(\bm{I}_{|S|}-\bm{P}_{\bm{F}_{S,\cdot}})\bm{u}$
is computed, where $\bm{P}_{\bm{F}_{S,\cdot}}$ is the projection
matrix on the column space of $\bm{F}_{S,\cdot}$. Then, we set $\bm{g}=2\bm{u}^{\perp}/\Vert\bm{u}^{\perp}\Vert_{2}\cdot\Vert\bm{F}_{S,\cdot}\Vert_{\mathrm{F}}\Vert\bm{w}\Vert_{2}$.
The formulation of $\bm{g}$ ensures that the signal strength from
the column space of $\bm{F}_{S,\cdot}$, which is captured by $\Vert\bm{F}_{S,\cdot}\Vert_{\mathrm{F}}\Vert\bm{w}\Vert_{2}$,
is balanced with that from the space orthogonal to $\bm{F}_{S,\cdot}$.
The null hypothesis $H_{0}$ should not be rejected when $\delta=0$,
and should be rejected when $\delta>0$. 
\item For the null hypothesis $H_{0}:\ \bm{b}_{i}^{1}=\bm{b}_{i}^{2}$ in
the structural break test of betas in Theorem \ref{Thm beta structure test},
the cross-sectional unit is set as $i=1$, and the time subsets are
set as $\Gamma_{1}=\{1,2,\ldots,T_{1}\}$ and $\Gamma_{2}=\{T_{1}+1,T_{1}+2,\ldots,T_{1}+T_{2}\}$
with $T_{1}=T_{2}=T/2=100$. The beta $\bm{b}_{i}^{1}$ on period
$\Gamma_{1}$ is generated by the aforementioned procedure, while
the beta $\bm{b}_{i}^{2}$ on period $\Gamma_{2}$ is set as
\[
\bm{b}_{i}^{2}=\bm{b}_{i}^{1}+\Delta\Vert\bm{b}_{i}^{1}\Vert_{2}\mathbf{1},
\]
where $\mathbf{1}=(1,1,1)^{\top}$. The null hypothesis $H_{0}$ should
not be rejected when $\Delta=0$, and should be rejected when $\Delta>0$.
\item For the null hypothesis $H_{0}:\text{ }\bm{b}_{i}=\bm{b}_{j}$ in
the two-sample test of betas in Theorem \ref{Thm two-sample test B},
we set $i=1$ and study two cases for $j=2$ and $j=3$ respectively.
According to our simulation setup, the null hypothesis $H_{0}$ should
not be rejected when $j=2$, and should be rejected when $j=3$.
\end{itemize}
Tables \ref{table:factor tests}--\ref{table:beta tests} report
the empirical rejections rates at 5\% significance level over 200
Monte Carlo trials. Table \ref{table:factor tests} (resp. \ref{table:beta tests})
shows that for the test statistics in Theorem \ref{Thm factor test plug-in Chi-sq}
(resp. Theorems \ref{Thm beta structure test} and \ref{Thm two-sample test B}),
the results are favorable, exhibiting appropriate size and power,
even under a weak signal setup where the SNR $\theta$ is small.

\begin{table}[H]
\caption{Empirical rejection rates at level 5\% for the factor tests\label{table:factor tests}}

\centering

\begin{tabular}{c|c|c|c|c|c}
\hline 
$(N,T)=(300,200)$  & \multicolumn{5}{c}{Factor specification test}\tabularnewline
\hline 
SNR $\theta$  & $\delta=0$  & $\delta=0.25$  & $\delta=0.5$  & $\delta=0.75$  & $\delta=1$ \tabularnewline
\hline 
$5.5$  & $0.000$  & $0.000$  & $0.745$  & $0.990$  & $1.000$ \tabularnewline
\hline 
$5.0$  & $0.000$  & $0.000$  & $0.505$  & $0.945$  & $0.995$ \tabularnewline
\hline 
$4.5$  & $0.000$  & $0.000$  & $0.250$  & $0.895$  & $0.980$ \tabularnewline
\hline 
\end{tabular}
\end{table}

\begin{table}[H]
\caption{Empirical rejection rates at level 5\% for the beta tests\label{table:beta tests}}

\centering

\begin{tabular}{c|c|c|c|c|c|c|c}
\hline 
$(N,T)=(300,200)$ & \multicolumn{5}{c|}{Structural break test} & \multicolumn{2}{c}{Two-sample test}\tabularnewline
\hline 
SNR $\theta$ & $\Delta=0$ & $\Delta=0.25$ & $\Delta=0.5$ & $\Delta=0.75$ & $\Delta=1$ & $j=2$ & $j=3$\tabularnewline
\hline 
$5.5$ & $0.030$ & $0.510$ & $0.990$ & $1.000$ & $1.000$ & $0.045$ & $1.000$\tabularnewline
\hline 
$5.0$ & $0.030$ & $0.445$ & $0.970$ & $1.000$ & $1.000$ & $0.050$ & $1.000$\tabularnewline
\hline 
$4.5$ & $0.035$ & $0.355$ & $0.945$ & $1.000$ & $1.000$ & $0.050$ & $1.000$\tabularnewline
\hline 
\end{tabular}
\end{table}

\subsection{Empirical studies}

We analyze the monthly returns data of the S\&P 500 constituents from
the CRSP database for the period from January 1995 to March 2024.
We apply the factor specification test in Theorem \ref{Thm factor test plug-in Chi-sq}
and the structural break test for betas in Theorem \ref{Thm beta structure test}
to the stock returns.

First, we consider the factor specification test. The observed factors
we study are the Fama-French three factors: market (MKT), size (SMB),
and value (HML), denoted as $\bm{v}^{(1)}$, $\bm{v}^{(2)}$, and
$\bm{v}^{(3)}$, respectively. We obtain the time series data for
these factors from Kenneth French's website. We conduct our tests
via a rolling window approach, moving a 60-month window $[t+1,t+T]$
($T=60$) through the dataset. To mitigate the survival bias, we keep
the time series that have no more than 50\% missing values in each
window, i.e., the number of missing values is less than $T/2$, and
fill the missing values by the median of each time series. For each
window, the formed data matrix $\bm{X}^{t}$ is an $N\times T$ matrix,
where the cross-sectional dimension $N$ varies with $t$. We assume
that $\bm{X}^{t}$ satisfies the factor model as in (\ref{factor model matrix form}),
i.e.,
\[
\bm{X}^{t}=\bm{B}^{t}(\bm{F}^{t})^{\top}+\bm{E}^{t}.
\]
Here, the superscript $t$ is added to each matrix to emphasize that
we apply the PCA method across different time windows $[t+1,t+T]$.
The number of factors is fixed as $r=3$. We conduct the factor specification
test in Theorem \ref{Thm factor test plug-in Chi-sq} to test the
null hypothesis 
\[
\bm{v}_{S^{t},\cdot}^{(i)}=\bm{F}_{S^{t},\cdot}^{t}\bm{w}^{t,(i)},
\]
for the three factors, corresponding to $i=1,2,3$, respectively.
The time index subset $S^{t}$ is set as $S^{t}=[t+T-L+1,t+T]$ with
$|S^{t}|=L=12$.

The above procedure implies that, for each 12-month period $S^{t}=[t+T-L+1,t+T]$,
to test if the observed factors are in the column space of the latent
factors, we look back and utilize a broader historical data window
$[t+1,t+T]$ ($T=60$) to estimate the latent common factors $\bm{F}^{t}$.
Then we test the null hypothesis for this specific 12-month period
$S^{t}$, a subset at the end of the whole window $[t+1,t+T]$. Finally,
we plot in Figure \ref{fig:factor test} the test statistics against
the time index $t$ for each factor, underscoring the 95\% critical
value. As highlighted by \citet{FanLiaoYao2015power}, this rolling
window manner not only utilizes the up-to-date information in the
equity universe, but also alleviates the impacts of time-varying betas
and sampling biases.

\begin{figure}[H]
\centering

\begin{tabular}{c}
\includegraphics[scale=0.8]{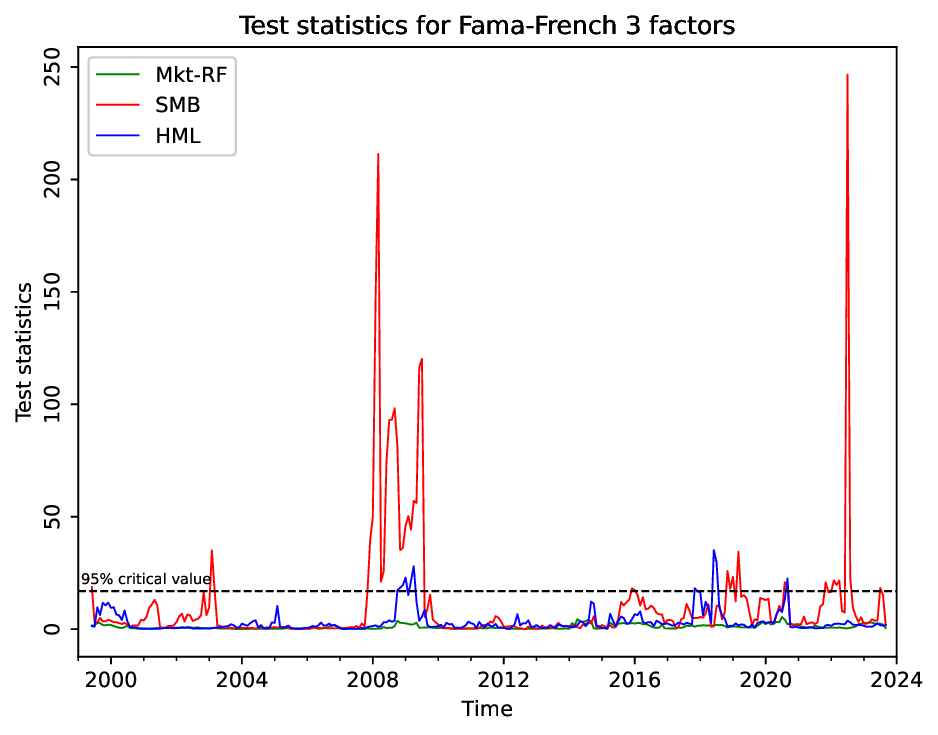}\tabularnewline
\end{tabular}

\caption{The evolution of the test statistics\label{fig:factor test}}
\end{figure}

In Figure \ref{fig:factor test}, our findings indicate that during
the financial crisis of 2007-2009, the null hypothesis that the size
factor SMB lies in the latent factors' column space, is rejected at
95\% confidence level. This suggests a diminished importance and reduced
explanatory power of the size factor SMB for stock return data in
this period. Note that the sizes of stocks can still be important
or even more important than that in the normal period, but not necessarily
captured by the variable SMB. Similar interpretations hold for the
value factor HML around 2009. During the COVID period around 2019,
both the size factor SMB and the value factor HML exhibit a loss in
explanatory power. The spike for the size factor SMB during 2022 is
probably due to the war between Russia and Ukraine that started in
February 2022, which is an unexpected economic shock to the stock
market. Notably, the market portfolio maintains its explanatory strength
throughout, indicating its stability and resilience as an explanatory
variable, even during distinct economic cycles.

Next, we consider the structural break test for betas for individual stocks. We test whether
the betas have changed before and after the three economic recessions
covered by the time horizon of our data -- the Early 2000s Recession
(Mar. 2001--Nov. 2001) due to the dot com bubble, the 2008 Great Recession due to housing bubble and
 financial crisis (Dec. 2007--Jun. 2009), and the COVID-19 Recession (Feb. 2020--Apr. 2020). Here, the
start and the end of each recession are according to the NBER's Business
Cycle Dating Committee.

For each recession period $[t_{\text{start}},t_{\text{end}}]$, we
first take the data $\bm{X}^{1}$ and $\bm{X}^{2}$ lying in the time
window $[t_{\text{start}}-T_{1},t_{\text{start}}-1]$ and $[t_{\text{end}}+1,t_{\text{end}}+T_{2}]$,
respectively, where $T_{1}=T_{2}=60$. Next, we merge the two panels
to get $\bm{X}=(\bm{X}^{1},\bm{X}^{2})$, and fill the missing values
in the same manner as in the factor specification test. We assume
that $\bm{X}$ satisfies the factor model as follows
\begin{align*}
x_{i,t} & =\bm{f}_{t}^{\top}\bm{b}_{i}^{1}+\varepsilon_{i,t}\qquad\text{for}\qquad t\in\{t_{\text{start}}-T_{1},t_{\text{start}}-T_{1}+1,\ldots,t_{\text{start}}-1\},\\
x_{i,t} & =\bm{f}_{t}^{\top}\bm{b}_{i}^{2}+\varepsilon_{i,t}\qquad\text{for}\qquad t\in\{t_{\text{end}}+1,t_{\text{end}}+2,\ldots,t_{\text{end}}+T_{2}\}.
\end{align*}
The number of factors is chosen by a scree plot of the eigenvalues
of $\bm{X}$, which is $4$, $3$, and $3$ for the three recessions,
respectively. We test the hypothesis $H_{0}:\ \bm{b}_{i}^{1}=\bm{b}_{i}^{2}\ \leftrightarrow\ H_{1}:\ \bm{b}_{i}^{1}\neq\bm{b}_{i}^{2}$
for each cross-sectional unit $i$ using the test statistic in Theorem
\ref{Thm beta structure test}. To report the test results, we group
the stocks into 11 sectors by Global Industrial Classification Standard
(GICS), and then count the numbers of stocks in each sector that reject
the null at the 95\% confidence level. The test results for the three
recessions are illustrated in Figures \ref{fig:beta test 1}, \ref{fig:beta test 2},
and \ref{fig:beta test 3}, respectively.

\begin{figure}[H]
\begin{tabular}{l}
\qquad{}\includegraphics[scale=0.65]{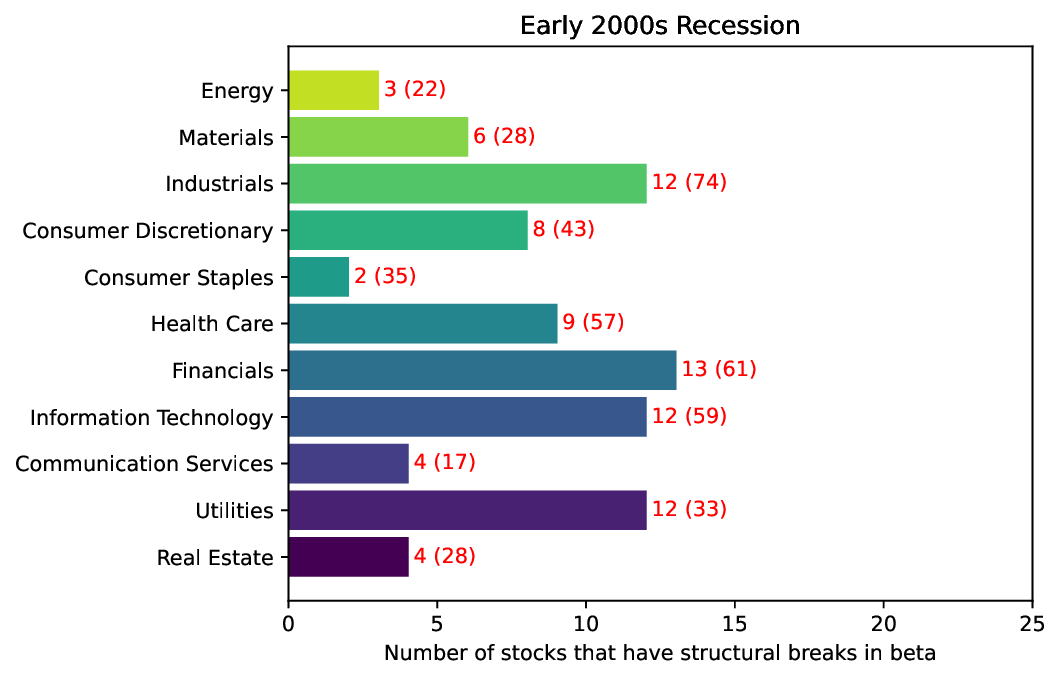}\tabularnewline
\end{tabular}

\caption{Test for breaks in betas for Early 2000s Recession (total number of
stocks is in bracket)\label{fig:beta test 1}}
\end{figure}

In Figure \ref{fig:beta test 1}, we observe that the sectors with
the highest number of stocks experiencing structural breaks in betas
are Financials, Information Technology, Industrials, and Utilities.
The impact on the Information Technology sector is likely due to the
dot-com bubble burst, which was one of the triggers of the Early 2000s
Recession. Another significant cause of this recession was the 9/11
attacks. Such severe economic shocks are possible reasons why typically
stable sectors like Industrials and Utilities were also affected.

\begin{figure}[H]
\begin{tabular}{l}
\qquad{}\includegraphics[scale=0.65]{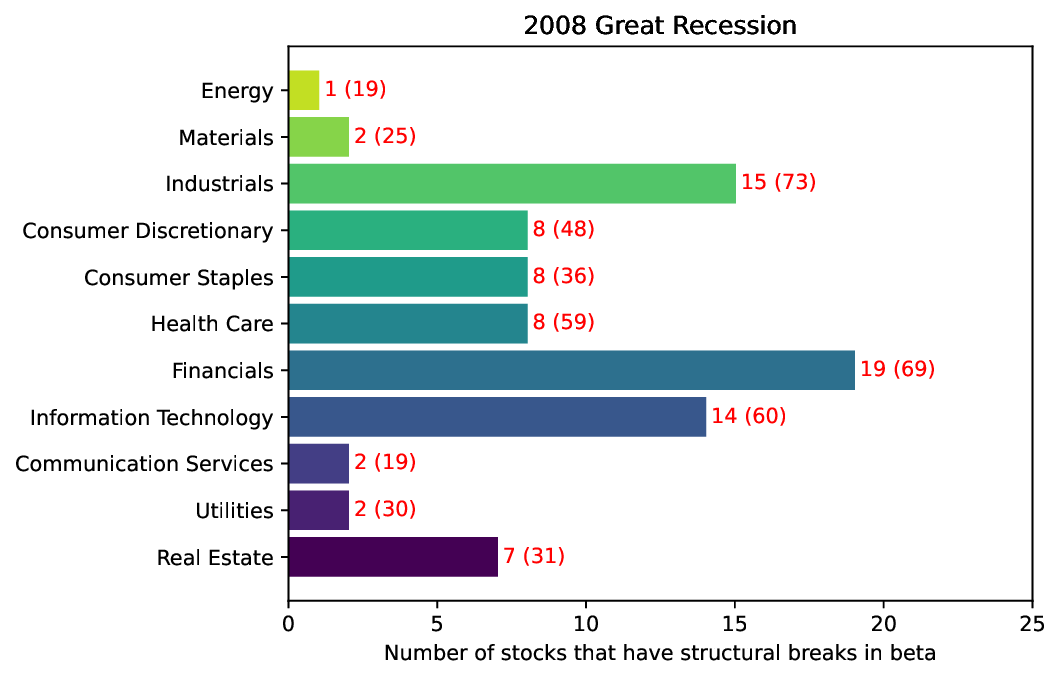}\tabularnewline
\end{tabular}

\caption{Test for breaks in betas for 2008 Great Recession (total number of
stocks is in bracket)\label{fig:beta test 2}}
\end{figure}

In Figure \ref{fig:beta test 2}, the 2007-2009 financial crisis,
marked by the subprime mortgage crisis and the collapse of the United
States housing bubble, affected many sectors. The financial sector
experienced a strong impact due to direct exposure to mortgage-backed
securities and other related financial instruments. The crisis led
to the failure or collapse of many of the United States' largest financial
institutions. Even though our analysis is inevitably influenced by
survival bias, as we can only analyze stocks that existed before and
after the crisis, we still observe that the financial sector had the
most affected stocks. The impact in Real Estate reflects the direct
consequences of the housing bubble burst. For sectors related to consumer
spending, such as Consumer Discretionary, Consumer Staples, and Health
Care, the shocks can be attributed to reduced consumer spending due
to increased unemployment and economic uncertainty.

\begin{figure}[H]
\begin{tabular}{l}
\qquad{}\includegraphics[scale=0.65]{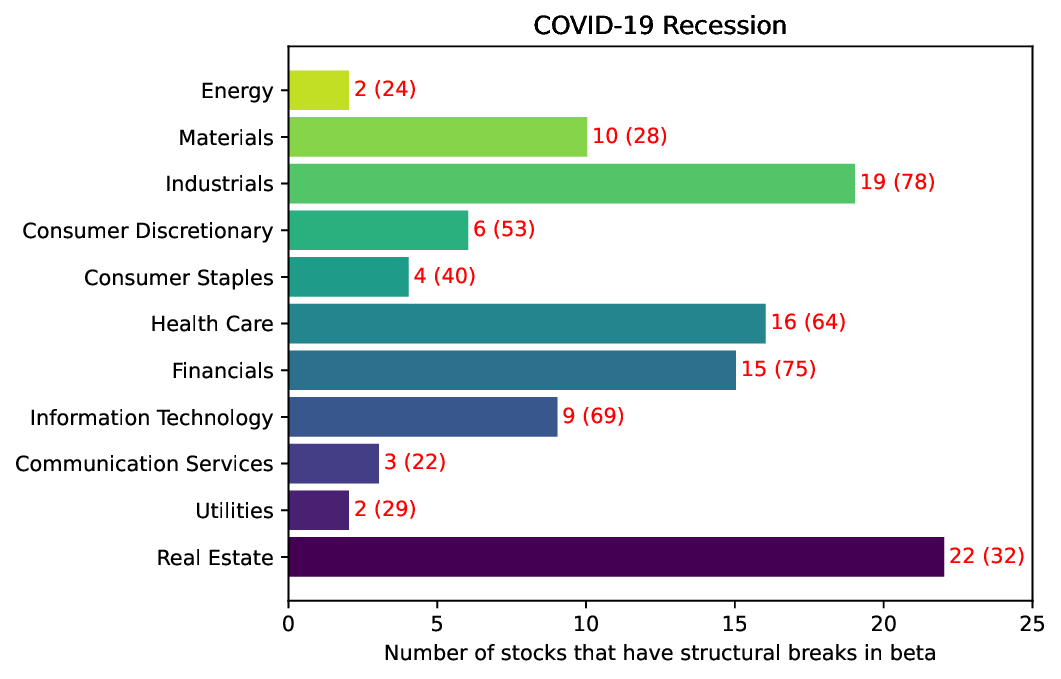}\tabularnewline
\end{tabular}

\caption{Test for breaks in betas for COVID-19 Recession (total number of stocks
is in bracket)\label{fig:beta test 3}}
\end{figure}

In Figure \ref{fig:beta test 3}, the economic effects of the pandemic
can be seen in many affected sectors, such as Real Estate, Industrials,
Health Care, and Financials. The significant changes in Real Estate
likely reflect the uncertainties brought about by lockdowns and health
crises, resulting in severe fluctuations in property values and rent
payments. The impact on the Health Care sector may be tied to the
heightened demand for medical services and supplies, alongside volatility
in biotechnology investments. Notably, Utilities and Energy showed
smaller changes, suggesting relative stability in these sectors despite
overall market volatility. During the 2008 Great Recession, these
sectors also demonstrated similar stability. This resilience could
be attributed to the essential nature of services provided by these
sectors, making them less susceptible to economic disruptions.

\section{Other related works\label{sec:related work}}

The factor model is an important topic in finance and economics. The
early econometric studies on factor model can be dated back to \citet{forni2000factor,StockWatson2002PCA}.
Most previous works on factor model assumes that all factors are strong
or pervasive, that is, the SNR grows at rate $\sqrt{N}$ \citep[e.g.,][]{BaiNg2002ECTA,Bai2003ECTA,POET2013}.
When the SNR grows at a rate slower than $\sqrt{N}$, the model is
often called the weak factor model, which has been a popular research
topic in recent years. The method for determining the number of factors
under the weak factor model has been studied by a few papers \citep[e.g.,][]{Onatski2009number,Onatski2010number,frey2022weakFactorNumber}.
Several recent works have pursued the estimation and inference in
the weak factor model. To name a few examples, \citet{BaiNg2023PCA,Jiang2023PCA,ChoiMing2024PCA}
studied the consistency and asymptotic normality of PC estimators;
\citet{YoTa2022sWF_estimate,YoTa2022sWF_infer,wei2023sWF} studied
the sparsity-induced weak factor models where the low SNR is due to
the sparsity of the factor loading matrix; \citet{BKP2021weakFactor}
proposed an estimator of factor strength and established its theoretical
guarantee; \citet{Onatski2012PCAweak} showed that the PC estimators
are inconsistent in the extreme case (also known as the super-weak
factor model) where the SNR is $O(1)$.

The PCA is one of the most popular methods for the factor model, and
has been studied in many papers mentioned previously \citep[e.g.,][]{StockWatson2002PCA,BaiNg2002ECTA,Bai2003ECTA,Onatski2012PCAweak,POET2013,BaiNg2023PCA,Jiang2023PCA,ChoiMing2024PCA}.
Among the enormous literature, \citet{BaiNg2023PCA,Jiang2023PCA,ChoiMing2024PCA}
are the most recent and closest to our paper, since we all focus on
the PC estimators, especially on the inference side, under the weak
factor model. Besides the theoretical analysis, the variants of PCA
under the weak factor model have also been applied to empirical asset
pricing, macroeconomic forecasting, and many other important problems
in finance \citep[e.g.,][]{gigXiu2021testWeak,GiglioXiu2023weakFactorPredict}.
The estimation of factor model is closely related to the low-rank
matrix denoising in the statistical machine learning community. In
recent years, studying the factor model from the view of low-rank
matrix denoising has provided lots of exciting findings and understandings
(see \citet{FanLiLiao2021ARFEfactor,yan2021inference} for comprehensive
reviews), and our theory is partly inspired by this low-rank structure
of factor model.

From a technical perspective, our estimation procedure is a spectral
method, and is related to previous studies of spectral methods on
PCA and subspace estimation \citep{abbe2020ell_p,cai2019subspace,chen2020bridging,yan2021inference,zhou2023deflated}.
Our analysis relies on the leave-one-out techniques that have found
wide applications in analyzing spectral estimators and nonconvex optimization
algorithms \citep{el2015impact,abbe2017entrywise,ma2017implicit,chen2019noisy,chen2020convex};
see the recent monograph \citet{chen2021Monograph} for more details.
In addition, the problem and analysis in this paper is also related
to past works on inference for other low-rank models \citep{chen2019inference,xia2021statistical,chernozhukov2021inference,choi2023inference,choi2024inference,yan2024entrywise}.

\section{Conclusions and discussions\label{sec:Discussion}}

In this paper, we establish a novel theory for PCA under the weak
factor model, offering significant advancements on the inference of
PCA over the existing literature. The weak factor model removes the
pervasiveness assumption that requires of SNR growing at a $\sqrt{N}$
rate, where $N$ is the cross-sectional dimension. Our theory covers
both estimation and inference of factors and factor loadings. Notably,
in the regime $N\asymp T$, where $T$ is the temporal dimension,
we show that the asymptotic normality of PC estimators holds as long
as the SNR grows faster than a polynomial rate of $\log N$, while
the previous work required a polynomial rate of $N$. The optimality
of our theory is in the sense that, the required growth rate of the
SNR for consistency aligns with that for asymptotic normality, differing
only by some logarithmic factors. Our theory paves the way to design
easy-to-implement test statistics for practical applications, e.g.,
factor specification test, structural break test for betas, and build
confidence regions for crucial model parameters, e.g., betas and systematic
risks. We validate our statistical methods through extensive Monte
Carlo simulations, and conduct empirical studies to find noteworthy
correlations between our test results and specific economic cycles.

While the current scope of our theory is considerably wide-ranging,
it is possible to further widen it and there are lots of topics that
are worth pursuing. For instance, how to extend the theory to the
setting where each observed variables have only fouth moment via robustification
of the covariance input \citep{fan2021robust}? How to generalize
our inferential theory to the case where the time-serial correlation
also appears in the idiosyncratic noise? If the panel data is missing
at random, are the PC estimators still valid under the weak factor
model, and how to conduct statistical inference in this case? What
benefit could our new theory bring to forecasting methods based on
the factor-augmented regression? Among many directions, these topics
can be investigated in future research.

\section*{Acknowledgements}

J.~Fan is supported in part by the NSF grants  DMS-2053832 and DMS-2210833, and ONR grant N00014-22-1-2340. Y.~Yan is supported in part by the Norbert Wiener Postdoctoral Fellowship
from MIT.

\appendix

\section{\label{Proof of Thm UV 1st approx}Proof of Theorem \ref{Thm UV 1st approx row-wise error}:
First-order approximations}

For ease of exposition, we introduce some additional notation that
will be used throughout the proofs. For any $m\times n$ matrix $\bm{M}=(M_{i,j})_{1\leq i\leq m,1\leq j\leq n}$,
we let $\sigma_{i}(\bm{M})$, $\sigma_{\min}(\bm{M})$, and $\sigma_{\max}(\bm{M})$
denote the $i$-th largest, the minimum, and the maximum singular
value of $\bm{M}$, respectively. For a symmetric matrix $\bm{A}$,
we denote by tr$(\bm{A})$ its trace.

\subsection{Some useful lemmas}

To prove Theorem \ref{Thm UV 1st approx row-wise error}, we collect
some useful lemmas as preparations. We start with a lemma to reveal
the close relations between the SVD $\bm{U}\bm{\Lambda}\bm{V}^{\top}$
and the common component $\bm{BF}^{\top}$ in the factor model (\ref{factor model matrix form}),
where $\bm{B}=\bar{\bm{U}}\bm{\Sigma}$.

\begin{lemma} \label{Lemma SVD BF good event}Suppose that Assumptions
\ref{Assump_Bf_identification} and \ref{Assump_factor_f} hold. Assuming
that 
\begin{equation}
r+\log n\ll T,\label{Assump T large logn}
\end{equation}
then we have that, there exists a $\sigma(\bm{F})$-measurable event
$\mathcal{E}_{0}$ with $\mathbb{P}(\mathcal{E}_{0})>1-O(n^{-2})$,
where $\sigma(\bm{F})$ is the $\sigma$-algebra generated by $\bm{F}$,
such that, the following properties hold when $\mathcal{E}_{0}$ happens:
(i) rank$(\bm{F})=r$, 
\[
\left\Vert \frac{1}{T}\bm{F}^{\top}\bm{F}-\bm{I}_{r}\right\Vert _{2}\lesssim\frac{\sqrt{r+\log n}}{\sqrt{T}},\text{\qquad and\qquad}\left\Vert \bm{F}\right\Vert _{2,\infty}\lesssim\sqrt{\log n}.
\]

(ii) There exists a $\sigma(\bm{F})$-measurable matrix $\bm{Q}$
satisfying that $\bm{U}=\bar{\bm{U}}\bm{Q}$ and $\bm{Q}\in\mathcal{O}^{r\times r}$
is a rotation matrix, i.e., $\bm{QQ}^{\top}=\bm{Q}^{\top}\bm{Q}=\bm{I}_{r}$.

(iii) $\lambda_{i}\asymp\sigma_{i}$ for $i=1,2,\ldots,r$.

(iv) There exists a $\sigma(\bm{F})$-measurable matrix $\bm{J}$
satisfying that $\bm{V}=T^{-1/2}\bm{FJ}$ and $\bm{J}=\bm{\Sigma}\bm{Q}\bm{\Lambda}^{-1}$
is a $r\times r$ invertible matrix satisfying that $\sigma_{i}(\bm{J})\asymp1$
for $i=1,2,\ldots,r$. Further, it holds 
\[
\big\Vert\bm{V}\big\Vert_{2,\infty}\lesssim\sqrt{\frac{\log n}{T}}.
\]

\end{lemma}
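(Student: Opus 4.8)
The plan is to obtain (i) from concentration inequalities for the i.i.d.\ sub-Gaussian factors $\bm f_1,\dots,\bm f_T$, to take $\mathcal E_0$ to be the event on which the bounds of (i) hold (this event is visibly $\sigma(\bm F)$-measurable), and then to deduce (ii)--(iv) as deterministic linear-algebra facts valid on $\mathcal E_0$, with the measurability of $\bm Q$ and $\bm J$ inherited from $\bm F$. For (i): write $\bm F^\top\bm F=\sum_{t=1}^{T}\bm f_t\bm f_t^\top$; by Assumption~\ref{Assump_factor_f} the summands are independent with $\mathbb E[\bm f_t\bm f_t^\top]=\bm I_r$ and $\|\bm f_t\|_{\psi_2}=O(1)$, so the usual matrix-concentration bound for sample covariances of sub-Gaussian vectors yields $\|T^{-1}\bm F^\top\bm F-\bm I_r\|_2\lesssim\sqrt{(r+\log n)/T}+(r+\log n)/T$ with probability $1-O(n^{-2})$; under $r+\log n\ll T$ the quadratic term is negligible, which gives the stated bound and in particular $\tfrac12\bm I_r\preceq T^{-1}\bm F^\top\bm F\preceq\tfrac32\bm I_r$ and $\mathrm{rank}(\bm F)=r$. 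A covering-plus-union-bound argument for the Euclidean norm of a sub-Gaussian vector gives $\mathbb P(\|\bm f_t\|_2\gtrsim\sqrt r+s)\le e^{-cs^2}$; taking $s\asymp\sqrt{\log n}$ and a union bound over $t\le T\le n$ gives $\|\bm F\|_{2,\infty}=\max_t\|\bm f_t\|_2\lesssim\sqrt{\log n}$ with probability $1-O(n^{-2})$. Let $\mathcal E_0$ be the intersection of these two events.

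For (ii)--(iv), set $\bar{\bm U}:=\bm B\bm\Sigma^{-1}$, which has orthonormal columns since $\bar{\bm U}^\top\bar{\bm U}=\bm\Sigma^{-1}\bm B^\top\bm B\bm\Sigma^{-1}=\bm I_r$ by Assumption~\ref{Assump_Bf_identification}, so $T^{-1/2}\bm B\bm F^\top=\bar{\bm U}(\bm\Sigma\,T^{-1/2}\bm F^\top)$. On $\mathcal E_0$ the factor $\bm\Sigma\,T^{-1/2}\bm F^\top$ has full row rank $r$, hence $\mathrm{col}(\bm U)=\mathrm{col}(\bar{\bm U})$; defining $\bm Q:=\bar{\bm U}^\top\bm U$ we get $\bar{\bm U}\bm Q=\bar{\bm U}\bar{\bm U}^\top\bm U=\bm U$, and since $\bm U$ and $\bar{\bm U}$ both have orthonormal columns spanning the same $r$-dimensional subspace, $\bm Q\in\mathcal O^{r\times r}$, giving (ii); $\bm Q$ is $\sigma(\bm F)$-measurable once the SVD of the $\sigma(\bm F)$-measurable matrix $T^{-1/2}\bm B\bm F^\top$ is fixed measurably. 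For (iii), since $T^{-1}\bm B\bm F^\top\bm F\bm B^\top=\bar{\bm U}\bigl[\bm\Sigma(T^{-1}\bm F^\top\bm F)\bm\Sigma\bigr]\bar{\bm U}^\top$ and $\bar{\bm U}$ has orthonormal columns, the $\lambda_i^2$ are exactly the eigenvalues of $\bm\Sigma(T^{-1}\bm F^\top\bm F)\bm\Sigma$; combining $\tfrac12\bm I_r\preceq T^{-1}\bm F^\top\bm F\preceq\tfrac32\bm I_r$ with Weyl monotonicity gives $\tfrac12\sigma_i^2\le\lambda_i^2\le\tfrac32\sigma_i^2$, i.e.\ $\lambda_i\asymp\sigma_i$. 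For (iv), $\bm V^\top=\bm\Lambda^{-1}\bm U^\top(T^{-1/2}\bm B\bm F^\top)=\bm\Lambda^{-1}\bm Q^\top\bm\Sigma\,T^{-1/2}\bm F^\top$, so $\bm V=T^{-1/2}\bm F\bm J$ with $\bm J=\bm\Sigma\bm Q\bm\Lambda^{-1}$, which is $\sigma(\bm F)$-measurable and invertible; using $\bm Q^\top\bm\Sigma(T^{-1}\bm F^\top\bm F)\bm\Sigma\bm Q=\bm U^\top\bigl(T^{-1}\bm B\bm F^\top\bm F\bm B^\top\bigr)\bm U=\bm\Lambda^2$ and $\tfrac12\bm I_r\preceq T^{-1}\bm F^\top\bm F\preceq\tfrac32\bm I_r$ yields $\bm Q^\top\bm\Sigma^2\bm Q\asymp\bm\Lambda^2$ in the Loewner order, hence $\bm J^\top\bm J=\bm\Lambda^{-1}(\bm Q^\top\bm\Sigma^2\bm Q)\bm\Lambda^{-1}\asymp\bm I_r$ and $\sigma_i(\bm J)\asymp1$; finally $\|\bm V_{t,\cdot}\|_2=T^{-1/2}\|\bm J^\top\bm f_t\|_2\le T^{-1/2}\|\bm J\|_2\|\bm f_t\|_2\lesssim T^{-1/2}\|\bm F\|_{2,\infty}\lesssim\sqrt{\log n/T}$.

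I expect the only subtle points to be: (a) fixing a $\sigma(\bm F)$-measurable choice of the SVD $\bm U\bm\Lambda\bm V^\top$ of $T^{-1/2}\bm B\bm F^\top$ when $\bm\Lambda$ has repeated entries, which can be handled by a measurable-selection statement for SVDs of matrices of constant rank or, equivalently, by constructing the SVD directly from a measurable spectral decomposition of $\bm\Sigma(T^{-1}\bm F^\top\bm F)\bm\Sigma$; and (b) obtaining the $\sqrt{(r+\log n)/T}$ rate with failure probability $O(n^{-2})$ in (i), which is precisely where the sub-Gaussian hypothesis on $\bm f_t$ enters. Everything after (i) is deterministic on $\mathcal E_0$, so no additional probabilistic input is needed.
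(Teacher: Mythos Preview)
Your proposal is correct and follows essentially the same approach as the paper: establish (i) by sub-Gaussian concentration for $T^{-1}\bm F^\top\bm F$ and a union bound for $\|\bm F\|_{2,\infty}$, take $\mathcal E_0$ to be the intersection, and then read off (ii)--(iv) as deterministic linear algebra on $\mathcal E_0$. The only noteworthy difference is that for $\sigma_i(\bm J)\asymp1$ in (iv) the paper uses the one-line observation that $\bm V^\top\bm V=\bm I_r$ forces $(\bm J^{-1})^\top\bm J^{-1}=T^{-1}\bm F^\top\bm F$, which is a bit cleaner than your Loewner-order chain, and for (iii) it cites a multiplicative eigenvalue perturbation bound rather than your Weyl-monotonicity argument; both variants are equivalent in strength.
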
 
\begin{proof}
Since $\bm{F}=(\bm{f}_{1},\bm{f}_{2},\ldots,\bm{f}_{T})^{\top}$,
and $\bm{f}_{1},\bm{f}_{2},\ldots,\bm{f}_{T}$ are independent sub-Gaussian
random vectors under Assumption \ref{Assump_factor_f}, we obtain
by (4.22) in \citet{vershynin2016high} that, 
\begin{equation}
\left\Vert \frac{1}{T}\bm{F}^{\top}\bm{F}-\bm{I}_{r}\right\Vert _{2}\lesssim\sqrt{\frac{r}{T}}+\sqrt{\frac{\log n}{T}}+\frac{r}{T}+\frac{\log n}{T}\lesssim\sqrt{\frac{r+\log n}{T}},\label{FF-I_r inequality}
\end{equation}
with probability at least $1-O(n^{-2})$, where the last inequality
is owing to (\ref{Assump T large logn}) and the fact that $\kappa\geq1$.
Then using the sub-Gaussian property of $\bm{f}_{t}$ and the standard
concentration inequality, we have that 
\begin{equation}
\left\Vert \bm{F}\right\Vert _{2,\infty}\lesssim\sqrt{\log n},\label{F sup row norm inequality}
\end{equation}
with probability at least $1-O(n^{-2})$. In particular, we let $\mathcal{E}_{0}$
be the event that both (\ref{FF-I_r inequality}) and (\ref{F sup row norm inequality})
happen, then we have that $\mathbb{P}(\mathcal{E}_{0})>1-O(n^{-2})$.
In what follows, we show that $\mathcal{E}_{0}$ satisfies all the
requirements.

When (\ref{FF-I_r inequality}) happens, since (\ref{Assump T large logn})
implies that $\sqrt{\frac{r+\log n}{T}}\ll1$, we have that $\big\Vert\frac{1}{T}\bm{F}^{\top}\bm{F}-\bm{I}_{r}\big\Vert_{2}\ll1$,
and thus $\vartheta_{i}(\frac{1}{T}\bm{F}^{\top}\bm{F})\asymp1$ for
$i=1,2,\ldots,r$, implying that $\sigma_{i}(\bm{F})\asymp\sqrt{T}$
for $i=1,2,\ldots,r$. Since $\sigma_{i}(\bm{F})\asymp\sqrt{T}$ implies
that $\sigma_{\text{min}}(\bm{F})>0$, we obtain that $\text{rank}(\bm{F})=r$,
i.e., $\bm{F}$ has full column rank.

By definition, the columns of $\bm{U}$ (resp. $\bar{\bm{U}}$) are
the left singular vectors of $\bm{B}\bm{F}^{\top}$ (resp. $\bm{B}$).
When (\ref{FF-I_r inequality}) happens, since $\text{rank}(\bm{F}_{S,\cdot})=r$,
the $r$-dimensional column spaces of $\bm{U}$ and $\bar{\bm{U}}$
are the same with each other. Then since the columns of both $\bm{U}$
and $\bar{\bm{U}}$ are orthonormal, we obtain that there exists a
rotation matrix $\bm{Q}\in\mathcal{O}^{r\times r}$ such that $\bm{U}=\bar{\bm{U}}\bm{Q}$.
By construction we have that $\bm{Q}$ is $\sigma(\bm{F})$-measurable.

Next, we prove the relations for eigenvalues. By definition, we have
that $\{\lambda_{i}^{2}\}_{1\leq i\leq r}$ are the eigenvalues of
$(T^{-1/2}\bm{BF}^{\top})^{\top}T^{-1/2}\bm{BF}^{\top}=T^{-1}\bm{FB}^{\top}\bm{B\bm{F}^{\top}}$,
and $\{\sigma_{i}^{2}\}_{1\leq i\leq r}$ are the eigenvalues of $\bm{B}^{\top}\bm{B}=\bm{\Sigma}^{2}$.
By Theorem A.2 in \citet{braun2006JMLR_eigen}, we obtain that 
\[
|\lambda_{i}^{2}-\sigma_{i}^{2}|=|\vartheta_{i}(T^{-1}\bm{FB}^{\top}\bm{B\bm{F}^{\top}})-\vartheta_{i}(\bm{B}^{\top}\bm{B})|\leq\sigma_{i}^{2}\left\Vert T^{-1}\bm{\bm{F}^{\top}}\bm{F}-\bm{I}_{r}\right\Vert _{2}\lesssim\sigma_{i}^{2}\sqrt{\frac{r+\log n}{T}}\ll\sigma_{i}^{2},
\]
where the last inequality is owing to (\ref{Assump T large logn}).
So we obtain that $\lambda_{i}\asymp\sigma_{i}$ for $i=1,2,\ldots,r$.

By the property of SVD, we have that $\bm{V}=(T^{-1/2}\bm{BF}^{\top})^{\top}\bm{U}\bm{\Lambda}^{-1}=T^{-1/2}\bm{F}\bm{J}$,
where $\bm{J}$ is given by $\bm{J}:=\bm{B}^{\top}\bm{U}\bm{\Lambda}^{-1}=\bm{\Sigma}\bar{\bm{U}}^{\top}\bm{U}\bm{\Lambda}^{-1}=\bm{\Sigma}\bm{Q}\bm{\Lambda}^{-1}$
and $\bm{J}$ is invertible. So we have that $T^{-1/2}\bm{F}=\bm{V}\bm{J}^{-1}$
and thus $T^{-1}\bm{F}^{\top}\bm{F}=(\bm{J}^{-1})^{\top}\bm{V}^{\top}\bm{V}\bm{J}^{-1}=(\bm{J}^{-1})^{\top}\bm{J}^{-1}$.
Then, we obtain that $||(\bm{J}^{-1})^{\top}\bm{J}^{-1}-\bm{I}_{r}||_{2}\leq||T^{-1}\bm{F}^{\top}\bm{F}-\bm{I}_{r}||_{2}\lesssim T^{-1/2}\sqrt{r+\log n}\ll1$,
and thus we obtain that $\sigma_{i}(\bm{J})\asymp1$ for $i=1,2,\ldots,r$.
As a result, we have that 
\[
\big\Vert\bm{V}\big\Vert_{2,\infty}=\frac{1}{\sqrt{T}}\big\Vert\bm{F}\bm{J}\big\Vert_{2,\infty}\leq\frac{1}{\sqrt{T}}\big\Vert\bm{F}\big\Vert_{2,\infty}\big\Vert\bm{J}\big\Vert_{2}\overset{\text{(i)}}{\lesssim}\frac{1}{\sqrt{T}}\big\Vert\bm{F}\big\Vert_{2,\infty}\overset{\text{(ii)}}{\lesssim}\sqrt{\frac{\log n}{T}},
\]
where (i) and (ii) use $\sigma_{i}(\bm{J})\asymp1$ and $\left\Vert \bm{F}\right\Vert _{2,\infty}\lesssim\sqrt{\log n}$
in (\ref{F sup row norm inequality}), respectively. 
\end{proof}
Lemma \ref{Lemma SVD BF good event} shows how the SVD $\bm{U}\bm{\Lambda}\bm{V}^{\top}$
relates with $\bm{BF}^{\top}$ and constructs a good event $\mathcal{E}_{0}$
on which we have nice properties of the quantities in our setup. Lemma
\ref{Lemma Z norms A B} below establishes some useful results on
the random matrix $\bm{Z}$ which characterizes the noise matrix $\bm{E}$
in (\ref{noise matrix formula}).

\begin{lemma} \label{Lemma Z norms A B}Under Assumption \ref{Assump_noise_Z_entries},
for any fixed matrices $\bm{A}\in\mathbb{R}^{T\times m}$ and $\bm{B}\in\mathbb{R}^{N\times m}$
with $m\leq n$, we have that, with probability at least $1-O(n^{-2})$,
\begin{equation}
\left\Vert \bm{Z}\right\Vert _{2}\lesssim\sqrt{n},\text{ }\left\Vert \bm{E}\right\Vert _{2}\lesssim\sqrt{\left\Vert \bm{\Sigma}_{\varepsilon}\right\Vert _{2}}\sqrt{n},\label{noise E norm inequality}
\end{equation}
and 
\[
\left\Vert \bm{ZA}\right\Vert _{2,\infty}\lesssim\left\Vert \bm{A}\right\Vert _{\mathrm{F}}\log n,\text{ }\big\Vert\bm{Z}^{\top}\bm{B}\big\Vert_{2,\infty}\lesssim\left\Vert \bm{B}\right\Vert _{\mathrm{F}}\log n.
\]

\end{lemma}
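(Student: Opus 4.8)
The plan is to establish the two displayed estimates separately; both are routine consequences of standard sub-Gaussian concentration, applied at the matrix level for the first and row-by-row for the second.

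For the operator-norm bounds, I would invoke the non-asymptotic bound on the spectral norm of a random matrix with independent, mean-zero, sub-Gaussian entries (e.g.\ Theorem~4.4.5 in \citet{vershynin2016high}): under Assumption~\ref{Assump_noise_Z_entries}, with $K := \max_{i,t}\Vert Z_{i,t}\Vert_{\psi_2} = O(1)$, one has $\Vert\bm{Z}\Vert_2 \le CK(\sqrt{N}+\sqrt{T}+s)$ with probability at least $1-2e^{-s^2}$. Taking $s = C'\sqrt{\log n}$ for a sufficiently large constant $C'$ and using $n = \max(N,T) \ge \log n$, the right-hand side is $\lesssim \sqrt{n}$ and the failure probability is $O(n^{-2})$. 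The bound on $\bm{E}$ then follows from submultiplicativity together with the identity $\Vert\bm{\Sigma}_\varepsilon^{1/2}\Vert_2 = \sqrt{\Vert\bm{\Sigma}_\varepsilon\Vert_2}$ (valid since $\bm{\Sigma}_\varepsilon^{1/2}$ is the symmetric PSD square root): $\Vert\bm{E}\Vert_2 = \Vert\bm{\Sigma}_\varepsilon^{1/2}\bm{Z}\Vert_2 \le \Vert\bm{\Sigma}_\varepsilon^{1/2}\Vert_2\,\Vert\bm{Z}\Vert_2 \lesssim \sqrt{\Vert\bm{\Sigma}_\varepsilon\Vert_2}\,\sqrt{n}$.

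For the $\ell_{2,\infty}$ bounds, fix a row index $i$ and write $(\bm{Z}\bm{A})_{i,\cdot} = \bm{Z}_{i,\cdot}\bm{A}$, so that $\Vert\bm{Z}_{i,\cdot}\bm{A}\Vert_2^2 = \bm{Z}_{i,\cdot}(\bm{A}\bm{A}^\top)\bm{Z}_{i,\cdot}^\top$ is a quadratic form in the independent, mean-zero, unit-variance, sub-Gaussian entries of the row $\bm{Z}_{i,\cdot}$. Its expectation equals $\tr(\bm{A}\bm{A}^\top) = \Vert\bm{A}\Vert_{\mathrm{F}}^2$, and since $\bm{A}\bm{A}^\top \succeq 0$ one has $\Vert\bm{A}\bm{A}^\top\Vert_{\mathrm{F}} \le \tr(\bm{A}\bm{A}^\top) = \Vert\bm{A}\Vert_{\mathrm{F}}^2$ and $\Vert\bm{A}\bm{A}^\top\Vert_2 \le \Vert\bm{A}\Vert_{\mathrm{F}}^2$. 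The Hanson--Wright inequality (e.g.\ Theorem~6.2.1 in \citet{vershynin2016high}) then yields, for any $t \ge 0$,
\[
\mathbb{P}\!\left(\big|\Vert\bm{Z}_{i,\cdot}\bm{A}\Vert_2^2 - \Vert\bm{A}\Vert_{\mathrm{F}}^2\big| > t\right) \le 2\exp\!\left(-c\min\Big\{\tfrac{t^2}{K^4\Vert\bm{A}\Vert_{\mathrm{F}}^4},\ \tfrac{t}{K^2\Vert\bm{A}\Vert_{\mathrm{F}}^2}\Big\}\right).
\]
Choosing $t \asymp \Vert\bm{A}\Vert_{\mathrm{F}}^2\log n$ makes the linear branch dominate and the failure probability $O(n^{-3})$, so that $\Vert\bm{Z}_{i,\cdot}\bm{A}\Vert_2 \lesssim \Vert\bm{A}\Vert_{\mathrm{F}}\sqrt{\log n} \le \Vert\bm{A}\Vert_{\mathrm{F}}\log n$; a union bound over $i = 1,\dots,N \le n$ gives $\Vert\bm{Z}\bm{A}\Vert_{2,\infty} \lesssim \Vert\bm{A}\Vert_{\mathrm{F}}\log n$ with probability at least $1-O(n^{-2})$. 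The estimate for $\Vert\bm{Z}^\top\bm{B}\Vert_{2,\infty}$ is obtained verbatim after replacing the row $\bm{Z}_{i,\cdot}$ by the $t$-th column $\bm{Z}_{\cdot,t}$ of $\bm{Z}$ and taking the union bound over $t = 1,\dots,T \le n$.

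There is no genuine obstacle here---this is a toolbox lemma---but the point that needs care is obtaining the row-wise Frobenius-norm scaling simultaneously with the $O(n^{-2})$ tail. A crude estimate such as $\Vert\bm{Z}_{i,\cdot}\bm{A}\Vert_2 \le \Vert\bm{Z}_{i,\cdot}\Vert_2\Vert\bm{A}\Vert_2$ would replace $\Vert\bm{A}\Vert_{\mathrm{F}}$ by $\sqrt{T}\,\Vert\bm{A}\Vert_2$ and be far too lossy downstream; it is the two-sided Hanson--Wright bound---equivalently, a direct coordinatewise argument using sub-Gaussianity of each scalar $\bm{Z}_{i,\cdot}\bm{A}_{\cdot,j}$ and a union bound over the $Nm \le n^2$ pairs $(i,j)$, which even recovers the sharper rate $\Vert\bm{A}\Vert_{\mathrm{F}}\sqrt{\log n}$---that produces the correct dependence. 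The hypothesis $m \le n$ is used only to keep the ambient dimension $m$ inside the union-bound budget.
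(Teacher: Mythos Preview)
Your argument is correct and actually slightly sharper than the paper's, but the route is genuinely different for both parts. For the spectral norm of $\bm{Z}$, the paper first passes to the bounded-entry event $\mathcal{E}_Z=\{\max_{k,l}|Z_{k,l}|\le C\sqrt{\log n}\}$ and then invokes a concentration bound for matrices with bounded entries ((3.9) in \citet{chen2021Monograph}), whereas you apply the sub-Gaussian matrix bound directly without truncation. For the $\ell_{2,\infty}$ bounds, the paper again conditions on $\mathcal{E}_Z$ so that each summand $Z_{k,l}\bm{A}_{l,\cdot}$ is almost surely bounded by $C\sqrt{\log n}\,\Vert\bm{A}_{l,\cdot}\Vert_2$, and then applies the \emph{matrix Hoeffding inequality} of \citet{tropp2012user}; this produces a variance proxy $\sigma=C\sqrt{\log n}\,\Vert\bm{A}\Vert_{\mathrm F}$ and hence the stated rate $\Vert\bm{A}\Vert_{\mathrm F}\log n$. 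Your Hanson--Wright route avoids the conditioning step entirely and recovers the sharper $\Vert\bm{A}\Vert_{\mathrm F}\sqrt{\log n}$ (which you then relax to match the lemma). The paper's truncation-then-Hoeffding device has the advantage of being reusable: the event $\mathcal{E}_Z$ is invoked repeatedly in the later leave-one-out analysis (cf.~\eqref{2nd term in LOO V}), so establishing it once amortizes across several arguments. Your approach is more self-contained and loses nothing for this lemma in isolation.
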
 
\begin{proof}
It suffices to prove that every inequality holds with probability
at least $1-O(n^{-2})$. Since the entries of $\bm{Z}$ are independent
and sub-Gaussian under Assumption \ref{Assump_noise_Z_entries}, we
obtain by the union bound argument that 
\begin{equation}
\mathbb{P}(\mathcal{E}_{Z})>1-O(n^{-2})\text{ with }\mathcal{E}_{Z}:=\left\{ \max_{1\leq k\leq N,1\leq l\leq T}|\bm{Z}_{k,l}|\leq C\sqrt{\log n}\right\} ,\label{Prob E_Z entry logn bound}
\end{equation}
where $C>0$ is a constant. Given $\mathcal{E}_{Z}$, it follows from
(3.9) in \citet{chen2021Monograph} that, $\left\Vert \bm{Z}\right\Vert _{2}\lesssim\sqrt{n}+\log n\lesssim\sqrt{n}$
with probability at least $1-O(n^{-2})$, i.e., $\mathbb{P}\left(\mathcal{E}_{1}|\mathcal{E}_{Z}\right)>1-O(n^{-2})$
where $\mathcal{E}_{1}:=\left\{ \left\Vert \bm{Z}\right\Vert _{2}\lesssim\sqrt{n}\right\} $.
So, we obtain 
\begin{equation}
\mathbb{P}\left(\mathcal{E}_{1}^{c}\right)=\mathbb{P}\left(\mathcal{E}_{1}^{c}|\mathcal{E}_{Z}\right)\mathbb{P}\left(\mathcal{E}_{Z}\right)+\mathbb{P}\left(\mathcal{E}_{1}^{c}|\mathcal{E}_{Z}^{c}\right)\mathbb{P}\left(\mathcal{E}_{Z}^{c}\right)\leq\mathbb{P}\left(\mathcal{E}_{1}^{c}|\mathcal{E}_{Z}^{c}\right)+\mathbb{P}\left(\mathcal{E}_{Z}^{c}\right)\leq O(n^{-2}),\label{condition on E_Z argument}
\end{equation}
implying that our desired result $\mathbb{P}\left(\left\Vert \bm{Z}\right\Vert _{2}\lesssim\sqrt{n}\right)=1-\mathbb{P}\left(\mathcal{E}_{1}^{c}\right)>1-O(n^{-2})$.
Indeed, the argument for $\mathcal{E}_{1}$ in (\ref{condition on E_Z argument})
adapts to all the events which we will prove to hold with probability
at least $1-O(n^{-2})$. In other words, for any event $\mathcal{E},$
to prove $\mathbb{P}\left(\mathcal{E}\right)>1-O(n^{-2})$, it suffices
to prove that $\mathbb{P}\left(\mathcal{E}|\mathcal{E}_{Z}\right)>1-O(n^{-2})$.
So, in what follows, we conduct the proofs conditioning on $\mathcal{E}_{Z}$.

It is easy to see that, $\left\Vert \bm{E}\right\Vert _{2}\lesssim\sqrt{\left\Vert \bm{\Sigma}_{\varepsilon}\right\Vert _{2}}\sqrt{n}$
follows from 
\[
\left\Vert \bm{E}\right\Vert _{2}=\big\Vert\bm{\Sigma}_{\varepsilon}^{1/2}\bm{Z}\big\Vert_{2}\leq\big\Vert\bm{\Sigma}_{\varepsilon}^{1/2}\big\Vert_{2}\left\Vert \bm{Z}\right\Vert _{2}=\sqrt{\left\Vert \bm{\Sigma}_{\varepsilon}\right\Vert _{2}}\left\Vert \bm{Z}\right\Vert _{2}\lesssim\sqrt{\left\Vert \bm{\Sigma}_{\varepsilon}\right\Vert _{2}}\sqrt{n}.
\]
Then, we prove that 
\begin{equation}
\mathbb{P}\left(\mathcal{E}_{A}|\mathcal{E}_{Z}\right)>1-O(n^{-2})\text{ with }\mathcal{E}_{A}:=\left\{ \left\Vert \bm{ZA}\right\Vert _{2,\infty}\lesssim\left\Vert \bm{A}\right\Vert _{\mathrm{F}}\log n\right\} .\label{Prob E_A given E_Z}
\end{equation}
To prove (\ref{Prob E_A given E_Z}), we note that, for $k=1,2,\ldots,N$,
$\bm{Z}_{k,\cdot}\bm{A}=\sum_{l=1}^{T}\bm{Z}_{k,l}\bm{A}_{l,\cdot}$
is a sum of independent mean zero random vectors. Given $\mathcal{E}_{Z}$,
it holds $\left\Vert \bm{Z}_{k,l}\bm{A}_{l,\cdot}\right\Vert _{2}\leq C\sqrt{\log n}\left\Vert \bm{A}_{l,\cdot}\right\Vert _{2}$
for $l=1,2,\ldots,T$. Then, using the matrix Hoeffding inequality
\citep[Theorem 1.3]{tropp2012user}, we obtain that 
\[
\mathbb{P}\left(\left\Vert \bm{Z}_{k,\cdot}\bm{A}\right\Vert _{2}\geq\delta|\mathcal{E}_{Z}\right)\leq2m\exp\left(-\frac{\delta^{2}}{8\sigma^{2}}\right),
\]
where $\sigma=[\sum_{l=1}^{T}(C\sqrt{\log n}\left\Vert \bm{A}_{l,\cdot}\right\Vert _{2})^{2}]^{1/2}=C\sqrt{\log n}\left\Vert \bm{A}\right\Vert _{\mathrm{F}}$.
Then we obtain by the union bound argument that $\mathbb{P}(\left\Vert \bm{ZA}\right\Vert _{2,\infty}\leq\delta|\mathcal{E}_{Z})>1-2Nm\exp(-\frac{\delta^{2}}{8\sigma^{2}})$.
Letting $\delta=100\sigma\sqrt{\log n}$, we obtain that $\mathbb{P}(\left\Vert \bm{ZA}\right\Vert _{2,\infty}\leq\delta|\mathcal{E}_{Z})>1-O(n^{-2})$,
which implying (\ref{Prob E_A given E_Z}). The proof for $\big\Vert\bm{Z}^{\top}\bm{B}\big\Vert_{2,\infty}\lesssim\left\Vert \bm{B}\right\Vert _{\mathrm{F}}\log n$
follows from the same manner, so we omit the details for the sake
of brevity. 
\end{proof}
Next, we consider the truncated rank-$r$ SVD $\widehat{\bm{U}}\widehat{\bm{\Sigma}}\widehat{\bm{V}}^{\top}$
of $\frac{1}{\sqrt{T}}\bm{X}=\frac{1}{\sqrt{T}}($\textbf{$\bm{B}$}$\bm{F}^{\top}\mathrm{+}\bm{E})$.
We define 
\[
\bm{H}_{U}:=\widehat{\bm{U}}^{\top}\bm{U}\text{ and }\bm{H}_{V}:=\widehat{\bm{V}}^{\top}\bm{V}.
\]
Lemma \ref{Lemma R H for U V} below give the perturbation bounds
for the singular spaces, i.e., the column spaces of $\bm{U}$ and
$\bm{V}$, under the spectral norm, and some basic facts on $\widehat{\bm{\Sigma}}$,
$\bm{H}_{U}$, and $\bm{H}_{V}$. \begin{lemma}\label{Lemma R H for U V}Suppose
that Assumptions \ref{Assump_Bf_identification}, \ref{Assump_noise_Z_entries},
and \ref{Assump_factor_f} hold. For simplicity of notations, we define
\begin{equation}
\rho:=\frac{\sqrt{\left\Vert \bm{\Sigma}_{\varepsilon}\right\Vert _{2}}\sqrt{n}}{\sigma_{r}\sqrt{T}}=\frac{\sqrt{n}}{\theta\sqrt{T}}.\label{SNR 2-norm def}
\end{equation}
Assuming that 
\begin{equation}
\rho\ll1,\label{SNR 2-norm}
\end{equation}
then we have that, with probability at least $1-O(n^{-2})$, 
\[
\max\left\{ \big\Vert\widehat{\bm{U}}\bm{R}_{U}-\bm{U}\big\Vert_{2},\big\Vert\widehat{\bm{V}}\bm{R}_{V}-\bm{V}\big\Vert_{2},\sqrt{\left\Vert \bm{H}_{U}-\bm{R}_{U}\right\Vert _{2}},\sqrt{\left\Vert \bm{H}_{V}-\bm{R}_{V}\right\Vert _{2}}\right\} \lesssim\rho,
\]
and $\max\{\big\Vert\widehat{\bm{U}}\bm{R}_{U}-\bm{U}\big\Vert{}_{\mathrm{F}},\big\Vert\widehat{\bm{V}}\bm{R}_{V}-\bm{V}\big\Vert{}_{\mathrm{F}}\}\lesssim\rho\sqrt{r}$,
as well as 
\[
\widehat{\sigma}_{i}\asymp\sigma_{i},\text{ }\sigma_{i}(\bm{H}_{U})\asymp1,\text{ }\sigma_{i}(\bm{H}_{V})\asymp1,\text{\qquad for\qquad}i=1,2,\ldots,r.
\]
Also, we have that 
\[
\max\left\{ \big\Vert\widehat{\bm{\Sigma}}(\bm{H}_{U}-\bm{R}_{U})\big\Vert_{2},\big\Vert\widehat{\bm{\Sigma}}(\bm{H}_{V}-\bm{R}_{V})\big\Vert_{2}\right\} \lesssim\sigma_{r}\rho^{2}.
\]

\end{lemma}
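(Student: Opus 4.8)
The plan is to work on the intersection $\mathcal{E}$ of two high-probability events: the event $\mathcal{E}_0$ of Lemma \ref{Lemma SVD BF good event} (so that $\mathrm{rank}(\bm{F})=r$, $\lambda_i\asymp\sigma_i$, $col(\bm{U})=col(\bm{B})$, $col(\bm{V})=col(\bm{F})$), and the event of Lemma \ref{Lemma Z norms A B} on which $\Vert\bm{E}\Vert_2\lesssim\sqrt{\Vert\bm{\Sigma}_\varepsilon\Vert_2}\sqrt{n}$, i.e. $T^{-1/2}\Vert\bm{E}\Vert_2\lesssim\sigma_r\rho$; a union bound gives $\mathbb{P}(\mathcal{E})\geq1-O(n^{-2})$. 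On $\mathcal{E}$ we regard $T^{-1/2}\bm{X}=T^{-1/2}\bm{B}\bm{F}^\top+T^{-1/2}\bm{E}$ as a perturbation of the exactly rank-$r$ matrix $T^{-1/2}\bm{B}\bm{F}^\top=\bm{U}\bm{\Lambda}\bm{V}^\top$, with perturbation size $\lesssim\sigma_r\rho\ll\sigma_r\asymp\lambda_r$ and spectral gap $\lambda_r-\lambda_{r+1}=\lambda_r$.

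First, Weyl's inequality yields $\vert\widehat{\sigma}_i-\lambda_i\vert\leq T^{-1/2}\Vert\bm{E}\Vert_2\lesssim\sigma_r\rho$; since $\lambda_i\asymp\sigma_i\geq\sigma_r$ and $\rho\ll1$ this gives $\widehat{\sigma}_i\asymp\sigma_i$. Next, Wedin's $\sin\bm{\Theta}$ theorem (spectral and Frobenius versions), applied with the above gap, gives $\Vert\sin\bm{\Theta}(\widehat{\bm{U}},\bm{U})\Vert_2,\Vert\sin\bm{\Theta}(\widehat{\bm{V}},\bm{V})\Vert_2\lesssim\rho$ and the Frobenius counterparts $\lesssim\sqrt{r}\rho$. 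The standard Procrustes inequality $\Vert\widehat{\bm{U}}\,\mathsf{sgn}(\widehat{\bm{U}}^\top\bm{U})-\bm{U}\Vert\lesssim\Vert\sin\bm{\Theta}(\widehat{\bm{U}},\bm{U})\Vert$ (for unitarily invariant norms) then produces $\Vert\widehat{\bm{U}}\bm{R}_U-\bm{U}\Vert_2\lesssim\rho$, $\Vert\widehat{\bm{U}}\bm{R}_U-\bm{U}\Vert_{\mathrm{F}}\lesssim\sqrt{r}\rho$, and likewise for $\bm{V}$. Since the singular values of $\bm{H}_U=\widehat{\bm{U}}^\top\bm{U}$ are exactly the cosines $\cos\theta_i$ of the principal angles, $\sin\theta_i\leq\Vert\sin\bm{\Theta}\Vert_2\lesssim\rho\ll1$ gives $\sigma_i(\bm{H}_U)=\cos\theta_i\asymp1$ and $\Vert\bm{H}_U-\bm{R}_U\Vert_2=\max_i(1-\cos\theta_i)\leq\max_i\sin^2\theta_i=\Vert\sin\bm{\Theta}(\widehat{\bm{U}},\bm{U})\Vert_2^2\lesssim\rho^2$; the same reasoning handles $\bm{H}_V$. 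All of these bounds depend only on $\rho$, hence are free of $\kappa$.

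The delicate step is the last display $\Vert\widehat{\bm{\Sigma}}(\bm{H}_U-\bm{R}_U)\Vert_2\lesssim\sigma_r\rho^2$, because the naive estimate $\Vert\widehat{\bm{\Sigma}}\Vert_2\Vert\bm{H}_U-\bm{R}_U\Vert_2\lesssim\sigma_1\rho^2$ reintroduces $\kappa=\sigma_1/\sigma_r$. The key observation is that when applied to $\widehat{\bm{U}}^\top\bm{U}_\perp$, the matrix $\widehat{\bm{\Sigma}}$ sees only the noise: letting $\bm{U}_\perp\in\mathbb{R}^{N\times(N-r)}$ be an orthonormal basis of $col(\bm{U})^\perp=col(\bm{B})^\perp$ and $\bm{G}_U:=\widehat{\bm{U}}^\top\bm{U}_\perp$, the identity $\widehat{\bm{\Sigma}}\widehat{\bm{U}}^\top=\widehat{\bm{V}}^\top T^{-1/2}\bm{X}^\top$ together with $\bm{B}^\top\bm{U}_\perp=\bm{0}$ gives $\widehat{\bm{\Sigma}}\bm{G}_U=\widehat{\bm{V}}^\top T^{-1/2}\bm{E}^\top\bm{U}_\perp$, hence $\Vert\widehat{\bm{\Sigma}}\bm{G}_U\Vert_2\leq T^{-1/2}\Vert\bm{E}\Vert_2\lesssim\sigma_r\rho$. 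I would then write $\bm{H}_U=\bm{S}_U\bm{R}_U$ with $\bm{S}_U:=(\bm{H}_U\bm{H}_U^\top)^{1/2}=(\bm{I}_r-\bm{G}_U\bm{G}_U^\top)^{1/2}$ (using $\bm{G}_U\bm{G}_U^\top=\bm{I}_r-\bm{H}_U\bm{H}_U^\top$), so $\bm{H}_U-\bm{R}_U=(\bm{S}_U-\bm{I}_r)\bm{R}_U$ and $\bm{S}_U-\bm{I}_r=-(\bm{S}_U+\bm{I}_r)^{-1}\bm{G}_U\bm{G}_U^\top$; the push-through identity (valid because $(\bm{S}_U+\bm{I}_r)^{-1}$ is a convergent power series in $\bm{G}_U\bm{G}_U^\top$) rewrites this as $-\bm{G}_U\bigl(\bm{I}_{N-r}+(\bm{I}_{N-r}-\bm{G}_U^\top\bm{G}_U)^{1/2}\bigr)^{-1}\bm{G}_U^\top$. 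Therefore $\widehat{\bm{\Sigma}}(\bm{H}_U-\bm{R}_U)=-(\widehat{\bm{\Sigma}}\bm{G}_U)\bigl(\bm{I}_{N-r}+(\bm{I}_{N-r}-\bm{G}_U^\top\bm{G}_U)^{1/2}\bigr)^{-1}\bm{G}_U^\top\bm{R}_U$, whose spectral norm is at most $\Vert\widehat{\bm{\Sigma}}\bm{G}_U\Vert_2\cdot1\cdot\Vert\bm{G}_U\Vert_2\cdot1\lesssim(\sigma_r\rho)\rho=\sigma_r\rho^2$. The symmetric argument, using $\widehat{\bm{\Sigma}}\widehat{\bm{V}}^\top=\widehat{\bm{U}}^\top T^{-1/2}\bm{X}$, $\bm{F}^\top\bm{V}_\perp=\bm{0}$ and $\bm{G}_V:=\widehat{\bm{V}}^\top\bm{V}_\perp$, gives $\Vert\widehat{\bm{\Sigma}}(\bm{H}_V-\bm{R}_V)\Vert_2\lesssim\sigma_r\rho^2$.

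I expect the $\sigma_r\rho^2$ bound to be the main obstacle: it is the one place where a textbook perturbation estimate (which would only give $\sigma_1\rho^2$) is insufficient, and one must exploit the problem-specific structure $\widehat{\bm{\Sigma}}\widehat{\bm{U}}^\top\bm{U}_\perp=\widehat{\bm{V}}^\top T^{-1/2}\bm{E}^\top\bm{U}_\perp$ together with the push-through identity in order to retain a second factor of $\Vert\bm{G}_U\Vert_2$ rather than letting $\widehat{\bm{\Sigma}}$ absorb the full size of $\bm{H}_U-\bm{R}_U$. Everything else reduces to routine invocations of Weyl's inequality, Wedin's theorem, and elementary principal-angle identities.
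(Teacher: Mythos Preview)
Your proof is correct and follows essentially the same approach as the paper: Weyl for $\widehat{\sigma}_i\asymp\sigma_i$, Wedin's $\sin\bm{\Theta}$ theorem together with Procrustes/principal-angle identities for the first two displays, and the structural identity $\widehat{\bm{\Sigma}}\,\widehat{\bm{U}}^\top\bm{U}_\perp=\widehat{\bm{V}}^\top T^{-1/2}\bm{E}^\top\bm{U}_\perp$ for the $\sigma_r\rho^2$ bound. The paper obtains the reduction $\Vert\widehat{\bm{\Sigma}}(\bm{H}_U-\bm{R}_U)\Vert_2\lesssim\rho\,\Vert\widehat{\bm{\Sigma}}\,\widehat{\bm{U}}^\top\bm{U}_\perp\Vert_2$ by citing \citet{yan2024entrywise}, whereas your polar-decomposition/push-through argument makes that step self-contained; and where the paper splits $\widehat{\bm{V}}$ and applies a concentration inequality to $(\bm{U}_\perp)^\top\bm{\Sigma}_\varepsilon^{1/2}\bm{Z}\bm{V}$, you use the simpler (and equally sufficient) bound $\Vert\widehat{\bm{\Sigma}}\,\widehat{\bm{U}}^\top\bm{U}_\perp\Vert_2\leq T^{-1/2}\Vert\bm{E}\Vert_2$.
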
 
\begin{proof}
Using the argument as discussed after (\ref{condition on E_Z argument}),
for any event $\mathcal{E},$ to prove $\mathbb{P}\left(\mathcal{E}\right)>1-O(n^{-2})$,
it suffices to prove that $\mathbb{P}\left(\mathcal{E}|\mathcal{E}_{Z}\cap\mathcal{E}_{0}\right)>1-O(n^{-2})$.
Here, the event $\mathcal{E}_{Z}$ is defined in (\ref{Prob E_Z entry logn bound})
and the event $\mathcal{E}_{0}$ is defined in Lemma \ref{Lemma SVD BF good event}.
To see this, we note that, as long as $\mathbb{P}\left(\mathcal{E}|\mathcal{E}_{Z}\cap\mathcal{E}_{0}\right)>1-O(n^{-2})$,
it holds 
\begin{align}
\mathbb{P}\left(\mathcal{E}^{c}\right) & =\mathbb{P}\left(\mathcal{E}^{c}|\mathcal{E}_{Z}\cap\mathcal{E}_{0}\right)\mathbb{P}\left(\mathcal{E}_{Z}\cap\mathcal{E}_{0}\right)+\mathbb{P}\left(\mathcal{E}^{c}|(\mathcal{E}_{Z}\cap\mathcal{E}_{0})^{c}\right)\mathbb{P}\left((\mathcal{E}_{Z}\cap\mathcal{E}_{0})^{c}\right)\label{condition on E_Z and F good}\\
 & \leq\mathbb{P}\left(\mathcal{E}^{c}|\mathcal{E}_{Z}\cap\mathcal{E}_{0}\right)+\mathbb{P}\left((\mathcal{E}_{Z}\cap\mathcal{E}_{0})^{c}\right)\leq O(n^{-2})+\mathbb{P}\left(\mathcal{E}_{Z}^{c}\cup\mathcal{E}_{0}^{c}\right)\nonumber \\
 & \leq O(n^{-2})+\mathbb{P}(\mathcal{E}_{Z}^{c})+\mathbb{P}(\mathcal{E}_{0}^{c})\leq O(n^{-2}).\nonumber 
\end{align}
So, in what follows, we conduct the proofs conditioning on $\mathcal{E}_{Z}\cap\mathcal{E}_{0}$.

Recall that $\widehat{\bm{U}}\widehat{\bm{\Sigma}}\widehat{\bm{V}}^{\top}$
is the truncated rank-$r$ SVD of $\frac{1}{\sqrt{T}}\bm{X}$, and
$\bm{U}\bm{\Lambda}\bm{V}^{\top}$ is the SVD of $\frac{1}{\sqrt{T}}$\textbf{$\bm{B}$}$\bm{F}^{\top}$.
Note that $\frac{1}{\sqrt{T}}\bm{X}-\frac{1}{\sqrt{T}}$\textbf{$\bm{B}$}$\bm{F}^{\top}=\frac{1}{\sqrt{T}}\bm{E}$,
and it follows from (\ref{noise E norm inequality}) in Lemma \ref{Lemma Z norms A B}
that 
\[
\frac{1}{\sqrt{T}}\left\Vert \bm{E}\right\Vert _{2}\lesssim\frac{1}{\sqrt{T}}\sqrt{\left\Vert \bm{\Sigma}_{\varepsilon}\right\Vert _{2}}\sqrt{n}\ll\sigma_{r},
\]
where the last inequality is owing to (\ref{SNR 2-norm}). So we obtain
$\frac{1}{\sqrt{T}}\left\Vert \bm{E}\right\Vert _{2}\ll\lambda_{r}$,
since $\lambda_{i}\asymp\sigma_{i}$ for $i=1,2,\ldots,r$ according
to Lemma \ref{Lemma SVD BF good event}. We write the full SVD of
$T^{-1/2}\bm{X}$ as 
\[
\frac{1}{\sqrt{T}}\bm{X}=\left[\begin{array}{cc}
\widehat{\bm{U}} & \widehat{\bm{U}}_{\perp}\end{array}\right]\left[\begin{array}{cc}
\widehat{\bm{\Sigma}} & 0\\
0 & \widehat{\bm{\Sigma}}_{\perp}
\end{array}\right]\left[\begin{array}{c}
\widehat{\bm{V}}^{\top}\\
(\widehat{\bm{V}}_{\perp})^{\top}
\end{array}\right],
\]
where $\widehat{\bm{U}}_{\perp}\in\mathbb{R}^{N\times(N-r)}$ and
$\widehat{\bm{V}}_{\perp}\in\mathbb{R}^{T\times(T-r)}$ are the orthogonal
complements of $\widehat{\bm{U}}$ and $\widehat{\bm{V}}$, and $\widehat{\bm{\Sigma}}_{\perp}\in\mathbb{R}^{(N-r)\times(T-r)}$
contains the smaller singular values and $\widehat{\bm{\Sigma}}_{\perp}$
is not necessarily a diagonal matrix. Then by Wedin's $\sin\Theta$
theorem \citep[(2.26a)\textendash (2.26b)]{chen2021Monograph}, we
have that 
\begin{equation}
\big\Vert\widehat{\bm{U}}\bm{R}_{U}-\bm{U}\big\Vert_{2}\lesssim\big\Vert\widehat{\bm{U}}\widehat{\bm{U}}^{\top}-\bm{U}\bm{U}^{\top}\big\Vert_{2}\lesssim\big\Vert\bm{U}^{\top}\widehat{\bm{U}}_{\perp}\big\Vert_{2}\lesssim\frac{\frac{1}{\sqrt{T}}\left\Vert \bm{E}\right\Vert _{2}}{\lambda_{r}-\lambda_{r+1}}\lesssim\frac{\frac{1}{\sqrt{T}}\sqrt{\left\Vert \bm{\Sigma}_{\varepsilon}\right\Vert _{2}}\sqrt{n}}{\sigma_{r}}=\rho,\label{U U_hat_cmpl bound}
\end{equation}
and $\big\Vert\widehat{\bm{V}}\bm{R}_{V}-\bm{V}\big\Vert_{2}\lesssim\big\Vert\bm{V}^{\top}\widehat{\bm{V}}_{\perp}\big\Vert_{2}\lesssim\rho$,
and similarly $\max\{\big\Vert\widehat{\bm{U}}\bm{R}_{U}-\bm{U}\big\Vert{}_{\mathrm{F}},\big\Vert\widehat{\bm{V}}\bm{R}_{V}-\bm{V}\big\Vert{}_{\mathrm{F}}\}\lesssim\rho\sqrt{r}$,
where we use the fact that $\lambda_{r}\asymp\sigma_{r}$ and $\lambda_{r+1}=0$,
since $\lambda_{i}$ is the $i$-th largest singular value of $\frac{1}{\sqrt{T}}$\textbf{$\bm{B}$}$\bm{F}^{\top}$
and rank$($\textbf{$\bm{B}$}$\bm{F}^{\top})=r$. Next, for $\bm{H}_{U}$,
using the arguments in Section C.3.1 of \citet{yan2021inference}
based on Wedin's $\sin\Theta$ theorem, we obtain that $\left\Vert \bm{H}_{U}-\bm{R}_{U}\right\Vert _{2}\lesssim\big\Vert\widehat{\bm{U}}\bm{R}_{U}-\bm{U}\big\Vert_{2}^{2}\lesssim\rho^{2}\ll1$.
Then, since $\bm{R}_{U}\in\mathcal{O}^{r\times r}$ implies that $\sigma_{i}(\bm{R}_{U})=1$
for $i=1,2,\ldots,r$, we obtain $\sigma_{i}(\bm{H}_{U})\asymp1$
for $i=1,2,\ldots,r$. The results for $\bm{H}_{V}$ can be proven
in the same manner. Also, we obtain by Weyl's inequality \citep[Lemma 2.2]{chen2021Monograph}
that, for $i=1,2,\ldots,r$, $|\widehat{\sigma}_{i}-\lambda_{i}|\leq\frac{1}{\sqrt{T}}\left\Vert \bm{E}\right\Vert _{2}\ll\sigma_{r}$,
implying that $\widehat{\sigma}_{i}\asymp\sigma_{i}$ for $i=1,2,\ldots,r$.

We now prove the results for $\big\Vert\widehat{\bm{\Sigma}}(\bm{H}_{U}-\bm{R}_{U})\big\Vert_{2}$
and $\big\Vert\widehat{\bm{\Sigma}}(\bm{H}_{V}-\bm{R}_{V})\big\Vert_{2}$.
Similar to the arguments for proving (F.33) in \citet{yan2024entrywise},
we have that 
\[
\big\Vert\widehat{\bm{\Sigma}}(\bm{H}_{U}-\bm{R}_{U})\big\Vert_{2}\lesssim\rho\big\Vert\widehat{\bm{\Sigma}}\widehat{\bm{U}}\bm{U}_{\perp}\big\Vert_{2},
\]
and $\widehat{\bm{\Sigma}}\widehat{\bm{U}}\bm{U}_{\perp}=T^{-1/2}\widehat{\bm{V}}^{\top}\bm{E}^{\top}\bm{U}_{\perp}$,
where $\bm{U}_{\perp}\in\mathbb{R}^{N\times(N-r)}$ is the orthogonal
complement of $\bm{U}$. Next, we have that 
\begin{align*}
\big\Vert\widehat{\bm{\Sigma}}\widehat{\bm{U}}\bm{U}_{\perp}\big\Vert_{2} & =\frac{1}{\sqrt{T}}\big\Vert(\bm{U}_{\perp})^{\top}\bm{E}\widehat{\bm{V}}\big\Vert_{2}\leq\frac{1}{\sqrt{T}}\big\Vert(\bm{U}_{\perp})^{\top}\bm{\Sigma}_{\varepsilon}^{1/2}\bm{Z}\bm{V}\big\Vert_{2}+\frac{1}{\sqrt{T}}\big\Vert(\bm{U}_{\perp})^{\top}\bm{E}(\widehat{\bm{V}}\bm{R}_{V}-\bm{V})\big\Vert_{2}\\
 & \overset{\text{(i)}}{\lesssim}\frac{1}{\sqrt{T}}\big\Vert\widehat{\bm{V}}^{\top}\bm{\Sigma}_{\varepsilon}^{1/2}\big\Vert_{2}\big\Vert\bm{U}_{\perp}\big\Vert_{2}\sqrt{r+(N-r)+\log n}+\frac{1}{\sqrt{T}}\big\Vert\bm{E}\big\Vert_{2}\big\Vert(\widehat{\bm{V}}\bm{R}_{V}-\bm{V})\big\Vert_{2}\\
 & \overset{\text{(ii)}}{\lesssim}\frac{1}{\sqrt{T}}\sqrt{\left\Vert \bm{\Sigma}_{\varepsilon}\right\Vert _{2}}\sqrt{N+\log n}+\frac{1}{\sqrt{T}}\sqrt{\left\Vert \bm{\Sigma}_{\varepsilon}\right\Vert _{2}}\sqrt{n}\rho\overset{\text{(iii)}}{\lesssim}\frac{1}{\sqrt{T}}\sqrt{\left\Vert \bm{\Sigma}_{\varepsilon}\right\Vert _{2}}\sqrt{n},
\end{align*}
where (i) uses (G.3) in Lemma 19 of \citet{yan2024entrywise} and
the fact that $\mathrm{rank}(\bm{U}_{\perp})=N-r$, (ii) follows from
(\ref{noise E norm inequality}) and $\big\Vert\widehat{\bm{V}}\bm{R}_{V}-\bm{V}\big\Vert_{2}\lesssim\rho$,
and (iii) is because $\rho\ll1$ as we assumed. So we conclude that
\[
\big\Vert\widehat{\bm{\Sigma}}(\bm{H}_{U}-\bm{R}_{U})\big\Vert_{2}\lesssim\rho\big\Vert\widehat{\bm{\Sigma}}\widehat{\bm{U}}\bm{U}_{\perp}\big\Vert_{2}\lesssim\rho\frac{1}{\sqrt{T}}\sqrt{\left\Vert \bm{\Sigma}_{\varepsilon}\right\Vert _{2}}\sqrt{n}=\sigma_{r}\rho^{2}.
\]
The results for $\big\Vert\widehat{\bm{\Sigma}}(\bm{H}_{V}-\bm{R}_{V})\big\Vert_{2}$
can be proven in the same manner. 
\end{proof}
Then we establish the close relation between $\widehat{\bm{\Sigma}}$
and $\bm{\Lambda}$. In particular, when we prove all of the following
results, we will use the results that have been proven to hold with
probabilities at least $1-O(n^{-2})$ in the previous lemmas. When
we use these previous results, it is equivalent to add a new event
in $\mathcal{E}_{Z}\cap\mathcal{E}_{0}$ and then repeat the argument
in (\ref{condition on E_Z and F good}). Since all of our results
are to be proven to hold with probabilities at least $1-O(n^{-2})$,
the argument in (\ref{condition on E_Z and F good}) always works
and in what follows we will use it many times without mentioning it.

\begin{lemma} \label{Lemma Sigma hat tilde H R}Suppose that (\ref{SNR 2-norm})
and Assumptions \ref{Assump_Bf_identification}, \ref{Assump_noise_Z_entries},
and \ref{Assump_factor_f} hold. We have that, with probability at
least $1-O(n^{-2})$, 
\begin{align*}
\big\Vert(\bm{H}_{U})^{\top}\widehat{\bm{\Sigma}}\bm{H}_{V}-\bm{\Lambda}\big\Vert_{2} & \lesssim\sigma_{r}\rho^{3}+\sigma_{r}\rho\sqrt{\frac{1}{n}(r+\log n)},\\
\big\Vert(\bm{R}_{U})^{\top}\widehat{\bm{\Sigma}}\bm{R}_{V}-\bm{\Lambda}\big\Vert_{2} & \lesssim\sigma_{r}\rho^{2}+\sigma_{r}\rho\sqrt{\frac{1}{n}(r+\log n)}.
\end{align*}

\end{lemma}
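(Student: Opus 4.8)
The plan is to convert each of the two quantities into an exact "noise'' bilinear form plus lower-order remainders, and then control the bilinear form by a conditional concentration argument. First I would record an exact identity. Since $\widehat{\bm{U}}\widehat{\bm{\Sigma}}\widehat{\bm{V}}^{\top}$ is the truncated rank-$r$ SVD of $T^{-1/2}\bm{X}$, writing its full SVD as in the proof of Lemma~\ref{Lemma R H for U V} gives $T^{-1/2}\bm{X}=\widehat{\bm{U}}\widehat{\bm{\Sigma}}\widehat{\bm{V}}^{\top}+\widehat{\bm{U}}_{\perp}\widehat{\bm{\Sigma}}_{\perp}\widehat{\bm{V}}_{\perp}^{\top}$. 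Multiplying by $\bm{U}^{\top}$ on the left and $\bm{V}$ on the right, and using $T^{-1/2}\bm{X}=\bm{U}\bm{\Lambda}\bm{V}^{\top}+T^{-1/2}\bm{E}$ together with $\bm{H}_{U}=\widehat{\bm{U}}^{\top}\bm{U}$ and $\bm{H}_{V}=\widehat{\bm{V}}^{\top}\bm{V}$, one obtains
\[
(\bm{H}_{U})^{\top}\widehat{\bm{\Sigma}}\bm{H}_{V}-\bm{\Lambda}=\frac{1}{\sqrt{T}}\bm{U}^{\top}\bm{E}\bm{V}-\bm{U}^{\top}\widehat{\bm{U}}_{\perp}\widehat{\bm{\Sigma}}_{\perp}\widehat{\bm{V}}_{\perp}^{\top}\bm{V}.
\]
The spectral norm of the remainder term is at most $\Vert\bm{U}^{\top}\widehat{\bm{U}}_{\perp}\Vert_{2}\Vert\widehat{\bm{\Sigma}}_{\perp}\Vert_{2}\Vert\widehat{\bm{V}}_{\perp}^{\top}\bm{V}\Vert_{2}$; by Wedin's $\sin\Theta$ theorem (exactly as in \eqref{U U_hat_cmpl bound}) the two subspace factors are each $\lesssim\rho$, while Weyl's inequality together with \eqref{noise E norm inequality} yields $\Vert\widehat{\bm{\Sigma}}_{\perp}\Vert_{2}=\widehat{\sigma}_{r+1}\leq T^{-1/2}\Vert\bm{E}\Vert_{2}\lesssim\sigma_{r}\rho$ (using $\lambda_{r+1}=0$), so this term is $\lesssim\sigma_{r}\rho^{3}$.

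The main work is bounding $T^{-1/2}\Vert\bm{U}^{\top}\bm{E}\bm{V}\Vert_{2}$. Here I would condition on $\bm{F}$: on the event $\mathcal{E}_{0}$ of Lemma~\ref{Lemma SVD BF good event}, both $\bm{U}=\bar{\bm{U}}\bm{Q}$ and $\bm{V}=T^{-1/2}\bm{F}\bm{J}$ are $\sigma(\bm{F})$-measurable, hence deterministic once $\bm{F}$ is fixed, while $\bm{Z}$ (and thus $\bm{E}=\bm{\Sigma}_{\varepsilon}^{1/2}\bm{Z}$) is independent of $\bm{F}$. Writing $\bm{U}^{\top}\bm{E}\bm{V}=(\bm{\Sigma}_{\varepsilon}^{1/2}\bm{U})^{\top}\bm{Z}\bm{V}$, this is a bilinear form in $\bm{Z}$ with fixed $N\times r$ and $T\times r$ factors; applying the concentration estimate for such bilinear forms, i.e.~(G.3) in Lemma~19 of \citet{yan2024entrywise} as already invoked in the proof of Lemma~\ref{Lemma R H for U V}, gives, with conditional probability $1-O(n^{-2})$,
\[
\big\Vert\bm{U}^{\top}\bm{E}\bm{V}\big\Vert_{2}\lesssim\big\Vert\bm{\Sigma}_{\varepsilon}^{1/2}\bm{U}\big\Vert_{2}\,\Vert\bm{V}\Vert_{2}\,\sqrt{r+\log n}\lesssim\sqrt{\Vert\bm{\Sigma}_{\varepsilon}\Vert_{2}}\,\sqrt{r+\log n},
\]
so that $T^{-1/2}\Vert\bm{U}^{\top}\bm{E}\bm{V}\Vert_{2}\lesssim\sigma_{r}\rho\sqrt{(r+\log n)/n}$. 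Averaging over $\bm{F}$ through the standard device around \eqref{condition on E_Z and F good} turns this into an unconditional $1-O(n^{-2})$ bound, which combined with the remainder estimate proves the first claim.

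For the second claim I would pass from $(\bm{H}_{U},\bm{H}_{V})$ to $(\bm{R}_{U},\bm{R}_{V})$ via
\[
(\bm{R}_{U})^{\top}\widehat{\bm{\Sigma}}\bm{R}_{V}-(\bm{H}_{U})^{\top}\widehat{\bm{\Sigma}}\bm{H}_{V}=(\bm{R}_{U}-\bm{H}_{U})^{\top}\widehat{\bm{\Sigma}}\bm{R}_{V}+(\bm{H}_{U})^{\top}\widehat{\bm{\Sigma}}(\bm{R}_{V}-\bm{H}_{V}),
\]
bounding the two terms by $\Vert\widehat{\bm{\Sigma}}(\bm{R}_{U}-\bm{H}_{U})\Vert_{2}\Vert\bm{R}_{V}\Vert_{2}$ and $\Vert\bm{H}_{U}\Vert_{2}\Vert\widehat{\bm{\Sigma}}(\bm{R}_{V}-\bm{H}_{V})\Vert_{2}$ respectively; since $\Vert\bm{H}_{U}\Vert_{2}\leq1$ and $\Vert\bm{R}_{V}\Vert_{2}=1$, both are $\lesssim\sigma_{r}\rho^{2}$ by the last display of Lemma~\ref{Lemma R H for U V}. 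It is essential here to use the condition-number-free bound $\Vert\widehat{\bm{\Sigma}}(\bm{H}-\bm{R})\Vert_{2}\lesssim\sigma_{r}\rho^{2}$ rather than the crude $\Vert\widehat{\bm{\Sigma}}\Vert_{2}\Vert\bm{H}-\bm{R}\Vert_{2}\lesssim\sigma_{1}\rho^{2}$. Adding this to the first claim and absorbing $\rho^{3}\lesssim\rho^{2}$ (valid since $\rho\ll1$) gives $\Vert(\bm{R}_{U})^{\top}\widehat{\bm{\Sigma}}\bm{R}_{V}-\bm{\Lambda}\Vert_{2}\lesssim\sigma_{r}\rho^{2}+\sigma_{r}\rho\sqrt{(r+\log n)/n}$, as desired. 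The only genuinely delicate point is the conditional bilinear concentration: one must condition on $\bm{F}$ to freeze $\bm{U}$ and $\bm{V}$, and one must keep the dimension count $r+r+\log n$ (rather than $N+\log n$) so that the bound scales like $\sqrt{r+\log n}$ instead of $\sqrt{n}$.
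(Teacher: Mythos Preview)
Your proposal is correct and follows essentially the same route as the paper: the same identity $(\bm{H}_{U})^{\top}\widehat{\bm{\Sigma}}\bm{H}_{V}-\bm{\Lambda}=T^{-1/2}\bm{U}^{\top}\bm{E}\bm{V}-\bm{U}^{\top}\widehat{\bm{U}}_{\perp}\widehat{\bm{\Sigma}}_{\perp}\widehat{\bm{V}}_{\perp}^{\top}\bm{V}$ (the paper writes it as $\varepsilon_{1}+\varepsilon_{2}$ via the intermediate $\bm{U}^{\top}(T^{-1/2}\bm{X})\bm{V}$), the same bound $\sigma_{r}\rho^{3}$ on the remainder, the same conditional application of (G.3) in Lemma~19 of \citet{yan2024entrywise} for the bilinear noise term, and the same telescoping from $(\bm{H}_{U},\bm{H}_{V})$ to $(\bm{R}_{U},\bm{R}_{V})$ using the condition-number-free bound $\Vert\widehat{\bm{\Sigma}}(\bm{H}-\bm{R})\Vert_{2}\lesssim\sigma_{r}\rho^{2}$ from Lemma~\ref{Lemma R H for U V}. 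The two telescopings differ only in which factor carries $\bm{R}$ versus $\bm{H}$, which is immaterial.
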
 
\begin{proof}
Note that 
\[
\big\Vert(\bm{H}_{U})^{\top}\widehat{\bm{\Sigma}}\bm{H}_{V}-\bm{\Lambda}\big\Vert_{2}\leq\underset{=:\varepsilon_{1}}{\underbrace{\big\Vert(\bm{H}_{U})^{\top}\widehat{\bm{\Sigma}}\bm{H}_{V}-\bm{U}^{\top}(T^{-1/2}\bm{X})\bm{V}\big\Vert_{2}}}+\underset{=:\varepsilon_{2}}{\underbrace{\big\Vert\bm{U}^{\top}(T^{-1/2}\bm{X})\bm{V}-\bm{\Lambda}\big\Vert_{2}}}.
\]

For $\varepsilon_{1}$, using the full SVD of $T^{-1/2}\bm{X}$ in
the proof of Lemma \ref{Lemma R H for U V}, we have that 
\[
(\bm{H}_{U})^{\top}\widehat{\bm{\Sigma}}\bm{H}_{V}-\bm{U}^{\top}(T^{-1/2}\bm{X})\bm{V}=\bm{U}^{\top}\widehat{\bm{U}}\widehat{\bm{\Sigma}}\widehat{\bm{V}}^{\top}\bm{V}-\bm{U}^{\top}(T^{-1/2}\bm{X})\bm{V}=-\bm{U}^{\top}\widehat{\bm{U}}_{\perp}\widehat{\bm{\Sigma}}_{\perp}(\widehat{\bm{V}}_{\perp})^{\top}\bm{V}.
\]
By (\ref{noise E norm inequality}) in Lemma \ref{Lemma Z norms A B}
and Weyl's inequality \citep[Lemma 2.2]{chen2021Monograph}, we have
\[
\big\Vert\widehat{\bm{\Sigma}}_{\perp}\big\Vert_{2}\leq\sigma_{r+1}(\frac{1}{\sqrt{T}}\bm{BF}^{\top})+\frac{1}{\sqrt{T}}\left\Vert \bm{E}\right\Vert _{2}=\frac{1}{\sqrt{T}}\left\Vert \bm{E}\right\Vert _{2}\lesssim\frac{1}{\sqrt{T}}\sqrt{\left\Vert \bm{\Sigma}_{\varepsilon}\right\Vert _{2}}\sqrt{n}.
\]
By Lemmas 2.5-2.6 in \citet{chen2021Monograph} and Lemma \ref{Lemma R H for U V},
we have that $\big\Vert\bm{U}^{\top}\widehat{\bm{U}}_{\perp}\big\Vert_{2}\lesssim\big\Vert\widehat{\bm{U}}\bm{R}_{U}-\bm{U}\big\Vert_{2}\lesssim\rho$
and $\big\Vert\bm{V}^{\top}\widehat{\bm{V}}_{\perp}\big\Vert_{2}\lesssim\big\Vert\widehat{\bm{V}}\bm{R}_{V}-\bm{V}\big\Vert_{2}\lesssim\rho$.
So we obtain 
\[
\varepsilon_{1}=\big\Vert\bm{U}^{\top}\widehat{\bm{U}}_{\perp}\widehat{\bm{\Sigma}}_{\perp}(\widehat{\bm{V}}_{\perp})^{\top}\bm{V}\big\Vert_{2}\leq\big\Vert\bm{U}^{\top}\widehat{\bm{U}}_{\perp}\big\Vert_{2}\big\Vert\widehat{\bm{\Sigma}}_{\perp}\big\Vert_{2}\big\Vert(\widehat{\bm{V}}_{\perp})^{\top}\bm{V}\big\Vert_{2}\lesssim\sigma_{r}\rho^{3}.
\]

For $\varepsilon_{2}$, we start by writting 
\[
\bm{U}^{\top}(T^{-1/2}\bm{X})\bm{V}-\bm{\Lambda}=\bm{U}^{\top}(T^{-1/2}\bm{X})\bm{V}-\bm{U}^{\top}(T^{-1/2}\bm{BF}^{\top})\bm{V}=\bm{U}^{\top}(T^{-1/2}\bm{E})\bm{V}=T^{-1/2}\bm{U}^{\top}\bm{\Sigma}_{\varepsilon}^{1/2}\bm{Z}\bm{V}.
\]
Using (G.3) in Lemma 19 of \citet{yan2024entrywise}, we obtain that
\[
\varepsilon_{2}\lesssim\frac{1}{\sqrt{T}}\big\Vert\bm{U}^{\top}\bm{\Sigma}_{\varepsilon}^{1/2}\big\Vert_{2}\big\Vert\bm{V}\big\Vert_{2}\sqrt{r+r+\log n}\lesssim\frac{1}{\sqrt{T}}\sqrt{\left\Vert \bm{\Sigma}_{\varepsilon}\right\Vert _{2}}\sqrt{r+\log n}=\sigma_{r}\rho\sqrt{\frac{1}{n}(r+\log n)}.
\]
Then, we conclude that 
\[
\big\Vert(\bm{H}_{U})^{\top}\widehat{\bm{\Sigma}}\bm{H}_{V}-\bm{\Lambda}\big\Vert_{2}\leq\varepsilon_{1}+\varepsilon_{2}\lesssim\sigma_{r}\rho^{3}+\sigma_{r}\rho\sqrt{\frac{1}{n}(r+\log n)}.
\]

For $(\bm{R}_{U})^{\top}\widehat{\bm{\Sigma}}\bm{R}_{V}-\bm{\Lambda}$,
we note that 
\begin{align*}
\big\Vert(\bm{R}_{U})^{\top}\widehat{\bm{\Sigma}}\bm{R}_{V}-(\bm{H}_{U})^{\top}\widehat{\bm{\Sigma}}\bm{H}_{V}\big\Vert_{2} & \leq\big\Vert(\bm{R}_{U})^{\top}\widehat{\bm{\Sigma}}\left(\bm{R}_{V}-\bm{H}_{V}\right)\big\Vert_{2}+\big\Vert(\bm{R}_{U}-\bm{H}_{U})^{\top}\widehat{\bm{\Sigma}}\bm{H}_{V}\big\Vert_{2}\\
 & \leq\left\Vert \bm{R}_{U}\right\Vert _{2}\big\Vert\widehat{\bm{\Sigma}}(\bm{H}_{V}-\bm{R}_{V})\big\Vert_{2}+\big\Vert\widehat{\bm{\Sigma}}(\bm{H}_{U}-\bm{R}_{U})\big\Vert_{2}\left\Vert \bm{H}_{V}\right\Vert _{2}\\
 & \lesssim\sigma_{r}\rho^{2},
\end{align*}
where the last inequality is owing to Lemma \ref{Lemma R H for U V}.
So we obtain 
\begin{align*}
\big\Vert(\bm{R}_{U})^{\top}\widehat{\bm{\Sigma}}\bm{R}_{V}-\bm{\Lambda}\big\Vert_{2} & \leq\big\Vert(\bm{R}_{U})^{\top}\widehat{\bm{\Sigma}}\bm{R}_{V}-(\bm{H}_{U})^{\top}\widehat{\bm{\Sigma}}\bm{H}_{V}\big\Vert_{2}+\big\Vert(\bm{H}_{U})^{\top}\widehat{\bm{\Sigma}}\bm{H}_{V}-\bm{\Lambda}\big\Vert_{2}\\
 & \lesssim\sigma_{r}\rho^{2}+\sigma_{r}\rho^{3}+\sigma_{r}\rho\sqrt{\frac{1}{n}(r+\log n)}\\
 & \lesssim\sigma_{r}\rho^{2}+\sigma_{r}\rho\sqrt{\frac{1}{n}(r+\log n)},
\end{align*}
where the last inequality is because $\rho\ll1$ in (\ref{SNR 2-norm}). 
\end{proof}
We now prove an important lemma where the leave-one-out (LOO) technique
plays a key role.

\begin{lemma} \label{Lemma LOO UV Sig H}Suppose that Assumptions
\ref{Assump_Bf_identification}, \ref{Assump_noise_Z_entries}, and
\ref{Assump_factor_f} hold. For simplicity of notations, we define
\begin{equation}
\omega_{k}:=\frac{\left\Vert (\bm{\Sigma}_{\varepsilon}^{1/2})_{k,\cdot}\right\Vert _{1}\sqrt{n}}{\sigma_{r}\sqrt{T}}=\frac{\sqrt{n}}{\vartheta_{k}\sqrt{T}}\text{\qquad for\qquad}k=1,2,\ldots,N,\label{SNR 1-norm k-th row}
\end{equation}
as well as 
\[
\omega:=\max_{1\leq k\leq N}\omega_{k}=\frac{\left\Vert \bm{\Sigma}_{\varepsilon}^{1/2}\right\Vert _{1}\sqrt{n}}{\sigma_{r}\sqrt{T}}=\frac{\sqrt{n}}{\vartheta\sqrt{T}},
\]

(i) Assuming that (\ref{SNR 2-norm logn}) holds, i.e., 
\[
\rho\sqrt{\log n}=\frac{\sqrt{\left\Vert \bm{\Sigma}_{\varepsilon}\right\Vert _{2}}\sqrt{n\log n}}{\sigma_{r}\sqrt{T}}\ll1,
\]
then we have that, for $l=1,2,\ldots,T$, with probability at least
$1-O(n^{-2})$, 
\begin{align*}
 & \big\Vert(\widehat{\bm{V}}\widehat{\bm{\Sigma}}\bm{H}_{U}-(T^{-1/2}\bm{X})^{\top}\bm{U})_{l,\cdot}\big\Vert_{2}\\
 & \lesssim\sigma_{r}\rho^{2}\left(\sqrt{\frac{r}{n}}\log^{3/2}n+\log n\big\Vert\bm{V}_{l,\cdot}\big\Vert_{2}+\log n\big\Vert(\widehat{\bm{V}}\bm{H}_{V}-\bm{V})_{l,\cdot}\big\Vert_{2}\right).
\end{align*}

(ii) Assuming that (\ref{SNR 1-norm logn}) holds, i.e., 
\[
\omega\sqrt{\log n}=\frac{\left\Vert \bm{\Sigma}_{\varepsilon}^{1/2}\right\Vert _{1}\sqrt{n\log n}}{\sigma_{r}\sqrt{T}}\ll1,
\]
then we have that, for $k=1,2,\ldots,N$, with probability at least
$1-O(n^{-2})$, 
\begin{align*}
 & \big\Vert(\widehat{\bm{U}}\widehat{\bm{\Sigma}}\bm{H}_{V}-(T^{-1/2}\bm{X})\bm{V})_{k,\cdot}\big\Vert_{2}\\
 & \lesssim\sigma_{r}\left(\omega_{k}\rho\sqrt{\frac{r}{n}}\log^{3/2}n+\rho^{2}\big\Vert\bm{U}_{k,\cdot}\big\Vert_{2}+\omega_{k}\omega\log n\left(\big\Vert\bm{U}\big\Vert_{2,\infty}+\big\Vert\widehat{\bm{U}}\bm{H}_{U}-\bm{U}\big\Vert_{2,\infty}\right)\right),
\end{align*}
where $\omega_{k}=\left\Vert (\bm{\Sigma}_{\varepsilon}^{1/2})_{k,\cdot}\right\Vert _{1}\sqrt{n}/(\sigma_{r}\sqrt{T})$
is defined in (\ref{SNR 1-norm k-th row}).

\end{lemma}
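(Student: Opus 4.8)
The plan is to prove both inequalities by a leave-one-out (LOO) analysis: in the temporal (column) direction for part~(i), and in the cross-sectional (row) direction for part~(ii), the latter reinforced by matrix concentration to cope with the cross-sectional dependence carried by $\bm{\Sigma}_{\varepsilon}$. Throughout I would work on the good event $\mathcal{E}_{Z}\cap\mathcal{E}_{0}$ (cf.~the proof of Lemma~\ref{Lemma R H for U V}) and use freely the spectral-norm controls of Lemmas~\ref{Lemma SVD BF good event}--\ref{Lemma Sigma hat tilde H R} together with their obvious LOO analogues, which survive because deleting a row or a column of $\bm{Z}$ only decreases $\|\bm{Z}\|_{2}$. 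The starting point is the exact identities $\widehat{\bm{U}}\widehat{\bm{\Sigma}}\bm{H}_{V}-(T^{-1/2}\bm{X})\bm{V}=(T^{-1/2}\bm{X})(\widehat{\bm{V}}\bm{H}_{V}-\bm{V})$ and $\widehat{\bm{V}}\widehat{\bm{\Sigma}}\bm{H}_{U}-(T^{-1/2}\bm{X})^{\top}\bm{U}=(T^{-1/2}\bm{X})^{\top}(\widehat{\bm{U}}\bm{H}_{U}-\bm{U})$, which follow from $\widehat{\bm{U}}\widehat{\bm{\Sigma}}=(T^{-1/2}\bm{X})\widehat{\bm{V}}$, $\widehat{\bm{V}}\widehat{\bm{\Sigma}}=(T^{-1/2}\bm{X})^{\top}\widehat{\bm{U}}$ and $\widehat{\bm{U}}\bm{H}_{U}=\widehat{\bm{U}}\widehat{\bm{U}}^{\top}\bm{U}$. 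For part~(ii) it then suffices to bound $\|(T^{-1/2}\bm{X})_{k,\cdot}(\widehat{\bm{V}}\bm{H}_{V}-\bm{V})\|_{2}$, and I split $(T^{-1/2}\bm{X})_{k,\cdot}=\bm{U}_{k,\cdot}\bm{\Lambda}\bm{V}^{\top}+T^{-1/2}\bm{E}_{k,\cdot}$ into a signal part and a noise part.

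\textbf{Signal part and the condition-number-free cancellation.} The signal part equals $\bm{U}_{k,\cdot}\bm{\Lambda}(\bm{H}_{V}^{\top}\bm{H}_{V}-\bm{I}_{r})=-\bm{U}_{k,\cdot}(\bm{\Lambda}\bm{V}^{\top}\widehat{\bm{V}}_{\perp})(\widehat{\bm{V}}_{\perp}^{\top}\bm{V})$. The naive bound $\|\bm{H}_{V}^{\top}\bm{H}_{V}-\bm{I}_{r}\|_{2}\lesssim\rho^{2}$ leaves a factor $\sigma_{1}$; instead I would write $\bm{\Lambda}\bm{V}^{\top}\widehat{\bm{V}}_{\perp}=\bm{U}^{\top}\widehat{\bm{U}}_{\perp}\widehat{\bm{\Sigma}}_{\perp}-T^{-1/2}\bm{U}^{\top}\bm{E}\widehat{\bm{V}}_{\perp}$ (using $(T^{-1/2}\bm{X})\widehat{\bm{V}}_{\perp}=\widehat{\bm{U}}_{\perp}\widehat{\bm{\Sigma}}_{\perp}$ and $\bm{\Lambda}\bm{V}^{\top}=\bm{U}^{\top}(T^{-1/2}\bm{X})-T^{-1/2}\bm{U}^{\top}\bm{E}$), so that $\|\widehat{\bm{\Sigma}}_{\perp}\|_{2}\lesssim\sigma_{r}\rho$, $\|\bm{U}^{\top}\widehat{\bm{U}}_{\perp}\|_{2}\lesssim\rho$ and $\|\bm{E}\|_{2}\lesssim\sqrt{\|\bm{\Sigma}_{\varepsilon}\|_{2}n}$ yield $\|\bm{\Lambda}\bm{V}^{\top}\widehat{\bm{V}}_{\perp}\|_{2}\lesssim\sigma_{r}\rho$; combined with $\|\widehat{\bm{V}}_{\perp}^{\top}\bm{V}\|_{2}\lesssim\rho$ this gives the term $\sigma_{r}\rho^{2}\|\bm{U}_{k,\cdot}\|_{2}$. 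Part~(i) is the mirror image and is the easy case: the LOO is done by zeroing the $l$-th \emph{column} of $\bm{Z}$, and since $\bm{\Sigma}_{\varepsilon}^{1/2}$ acts on rows, $\bm{X}-\bm{X}^{[l]}=-\bm{\Sigma}_{\varepsilon}^{1/2}\bm{Z}_{\cdot,l}\bm{e}_{l}^{\top}$ is rank one \emph{and supported on column $l$}, so the classical decoupling applies verbatim; only the spectral SNR $\rho$ appears, the $\|\bm{V}_{l,\cdot}\|_{2}$ term comes from the (analogous) signal cancellation and the $\|(\widehat{\bm{V}}\bm{H}_{V}-\bm{V})_{l,\cdot}\|_{2}$ term from the coupling of column $l$ of $\bm{X}$ to row $l$ of $\widehat{\bm{V}}$ through $(T^{-1/2}\bm{X})_{\cdot,l}=\widehat{\bm{U}}\widehat{\bm{\Sigma}}(\widehat{\bm{V}}_{l,\cdot})^{\top}$.

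\textbf{Noise part of (ii) via LOO.} Introduce $\bm{X}^{(k)}:=\bm{BF}^{\top}+\bm{\Sigma}_{\varepsilon}^{1/2}\bm{Z}^{(k)}$, where $\bm{Z}^{(k)}$ is $\bm{Z}$ with its $k$-th row zeroed, with rank-$r$ SVD $\widehat{\bm{U}}^{(k)}\widehat{\bm{\Sigma}}^{(k)}(\widehat{\bm{V}}^{(k)})^{\top}$ and $\bm{H}_{V}^{(k)}:=(\widehat{\bm{V}}^{(k)})^{\top}\bm{V}$. The key facts are that $\widehat{\bm{V}}^{(k)}$ is independent of $\bm{Z}_{k,\cdot}$ given $\bm{F}$, whereas $\bm{X}-\bm{X}^{(k)}=(\bm{\Sigma}_{\varepsilon}^{1/2})_{\cdot,k}\bm{Z}_{k,\cdot}$, although rank one, is \emph{not} supported on the $k$-th row. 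I would then: (a)~replace $\widehat{\bm{V}}\bm{H}_{V}-\bm{V}$ by $\widehat{\bm{V}}^{(k)}\bm{H}_{V}^{(k)}-\bm{V}$, the discrepancy controlled by Wedin's theorem via $T^{-1/2}\|\bm{X}-\bm{X}^{(k)}\|_{2}\lesssim T^{-1/2}\sqrt{(\bm{\Sigma}_{\varepsilon})_{k,k}}\,\|\bm{Z}_{k,\cdot}\|_{2}$ and absorbed into a higher-order remainder; (b)~split $T^{-1/2}(\bm{\Sigma}_{\varepsilon}^{1/2})_{k,\cdot}\bm{Z}(\widehat{\bm{V}}^{(k)}\bm{H}_{V}^{(k)}-\bm{V})=T^{-1/2}(\bm{\Sigma}_{\varepsilon}^{1/2})_{k,k}\bm{Z}_{k,\cdot}(\widehat{\bm{V}}^{(k)}\bm{H}_{V}^{(k)}-\bm{V})+T^{-1/2}(\bm{\Sigma}_{\varepsilon}^{1/2})_{k,\cdot}\bm{Z}^{(k)}(\widehat{\bm{V}}^{(k)}\bm{H}_{V}^{(k)}-\bm{V})$, the first summand handled by the conditional matrix-Hoeffding bound of Lemma~\ref{Lemma Z norms A B} (conditioning on $\bm{Z}^{(k)},\bm{F}$) with $\|\widehat{\bm{V}}^{(k)}\bm{H}_{V}^{(k)}-\bm{V}\|_{\mathrm{F}}\lesssim\rho\sqrt{r}$ and $T^{-1/2}(\bm{\Sigma}_{\varepsilon}^{1/2})_{k,k}\le\sigma_{r}\omega_{k}/\sqrt{n}$, producing the $\sigma_{r}\omega_{k}\rho\sqrt{r/n}\log^{3/2}n$ term; (c)~the second summand is the crux: linearize $\widehat{\bm{V}}^{(k)}\bm{H}_{V}^{(k)}-\bm{V}=T^{-1/2}\bm{V}_{\perp}\bm{V}_{\perp}^{\top}(\bm{E}^{(k)})^{\top}\bm{U}\bm{\Lambda}^{-1}+(\text{higher order})$ --- here $\bm{V}_{\perp}^{\top}\bm{F}=\bm{0}$ since $\bm{V}=T^{-1/2}\bm{FJ}$ with $\bm{J}$ invertible (Lemma~\ref{Lemma SVD BF good event}) --- so the leading contribution becomes the quadratic form $T^{-1}(\bm{\Sigma}_{\varepsilon}^{1/2})_{k,\cdot}\bm{Z}^{(k)}\bm{V}_{\perp}\bm{V}_{\perp}^{\top}(\bm{Z}^{(k)})^{\top}\bm{\Sigma}_{\varepsilon}^{1/2}\bm{U}\bm{\Lambda}^{-1}$, which (conditionally on $\bm{F}$) concentrates by a Hanson--Wright / matrix-Bernstein estimate around a mean proportional to $(\bm{\Sigma}_{\varepsilon})_{k,\cdot}\bm{U}\bm{\Lambda}^{-1}$; bounding this mean by $\sigma_{r}^{-1}\|(\bm{\Sigma}_{\varepsilon}^{1/2})_{k,\cdot}\|_{1}\|\bm{\Sigma}_{\varepsilon}^{1/2}\bm{U}\|_{2,\infty}\le\sigma_{r}^{-1}\|(\bm{\Sigma}_{\varepsilon}^{1/2})_{k,\cdot}\|_{1}\|\bm{\Sigma}_{\varepsilon}^{1/2}\|_{1}\|\bm{U}\|_{2,\infty}$ via two uses of $\|\bm{v}^{\top}\bm{A}\|_{2}\le\|\bm{v}\|_{1}\|\bm{A}\|_{2,\infty}$, handling the fluctuation similarly, and carrying along the $\ell_{2,\infty}$ discrepancy between $\widehat{\bm{U}}$ and $\bm{U}$ that enters whenever the leading term is re-expressed through $\widehat{\bm{U}}$, yields the $\sigma_{r}\omega_{k}\omega\log n(\|\bm{U}\|_{2,\infty}+\|\widehat{\bm{U}}\bm{H}_{U}-\bm{U}\|_{2,\infty})$ term. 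A union bound over $k=1,\dots,N$ then finishes part~(ii).

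\textbf{Main obstacle.} The hard step is (c). Because of cross-sectional correlation the auxiliary matrix $\bm{X}^{(k)}$ removes only the dependence on $\bm{Z}_{k,\cdot}$, so the surviving term $(\bm{\Sigma}_{\varepsilon}^{1/2})_{k,\cdot}\bm{Z}^{(k)}(\widehat{\bm{V}}^{(k)}\bm{H}_{V}^{(k)}-\bm{V})$ still has $\bm{Z}^{(k)}$ appearing both explicitly and implicitly through $\widehat{\bm{V}}^{(k)}$; a crude spectral bound loses the decisive $\sqrt{r/n}$ gain, forcing one to linearize the LOO subspace error and deploy quadratic-form / matrix concentration, and then to control the resulting trace (bias) term carefully. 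It is precisely this control --- estimating objects of the form $\|\bm{e}_{k}^{\top}\bm{\Sigma}_{\varepsilon}^{1/2}\bm{Z}^{(k)}\bm{M}(\bm{Z}^{(k)})^{\top}\bm{\Sigma}_{\varepsilon}^{1/2}\|$ through $\|\bm{v}^{\top}\bm{A}\|_{2}\le\|\bm{v}\|_{1}\|\bm{A}\|_{2,\infty}$ --- that brings the $\ell_{1}$-type signal-to-noise ratios $\vartheta_{k},\vartheta$ (equivalently $\omega_{k},\omega$) into the statement, in place of the spectral SNR $\theta$ that would suffice were the idiosyncratic components cross-sectionally independent.
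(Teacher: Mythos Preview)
Your starting identities, the condition-number-free bound on the signal piece via $\bm{\Lambda}\bm{V}^{\top}\widehat{\bm{V}}_{\perp}=\bm{U}^{\top}\widehat{\bm{U}}_{\perp}\widehat{\bm{\Sigma}}_{\perp}-T^{-1/2}\bm{U}^{\top}\bm{E}\widehat{\bm{V}}_{\perp}$, and the column-LOO argument for part~(i) all match the paper. The divergence is in the noise piece of part~(ii), and there the paper takes a much simpler route that sidesteps the difficulty you flag in step~(c). Writing $\beta_{2}=T^{-1/2}(\bm{\Sigma}_{\varepsilon}^{1/2})_{k,\cdot}\bm{Z}(\widehat{\bm{V}}\bm{H}_{V}-\bm{V})$, the paper immediately applies $\|\bm{v}^{\top}\bm{A}\|_{2}\le\|\bm{v}\|_{1}\|\bm{A}\|_{2,\infty}$ to obtain
\[
\|\beta_{2}\|_{2}\le T^{-1/2}\bigl\|(\bm{\Sigma}_{\varepsilon}^{1/2})_{k,\cdot}\bigr\|_{1}\,\bigl\|\bm{Z}(\widehat{\bm{V}}\bm{H}_{V}-\bm{V})\bigr\|_{2,\infty},
\]
so the dependence on $k$ is stripped off at once and the remaining task is to bound $\max_{m}\|\bm{Z}_{m,\cdot}(\widehat{\bm{V}}\bm{H}_{V}-\bm{V})\|_{2}$. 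For each $m$ the paper does LOO on row $m$ of $\bm{Z}$ (not row~$k$): since $\bm{Z}_{m,\cdot}\perp\bm{Z}^{(m)}$, the term $\bm{Z}_{m,\cdot}(\widehat{\bm{V}}^{(m)}\bm{H}_{V}^{(m)}-\bm{V})$ is handled by Lemma~\ref{Lemma Z norms A B}, and the coupling term $\bm{Z}_{m,\cdot}\nu_{(m)}$ is controlled by $\sqrt{n\log n}\,\|\nu_{(m)}\|_{\mathrm F}$. The quantities $\|\nu_{(m)}\|_{\mathrm F},\|\mu_{(m)}\|_{\mathrm F}$ are then bounded by a Wedin plus self-bounding argument; in that Wedin step one meets $\|(E_{m})^{\top}\widehat{\bm{U}}^{(m)}\|_{\mathrm F}=T^{-1/2}\|\bm{Z}_{m,\cdot}\|_{2}\,\|(\bm{\Sigma}_{\varepsilon}^{1/2}\widehat{\bm{U}}^{(m)})_{m,\cdot}\|_{2}\le T^{-1/2}\|\bm{Z}_{m,\cdot}\|_{2}\,\|\bm{\Sigma}_{\varepsilon}^{1/2}\|_{1}\|\widehat{\bm{U}}^{(m)}\bm{H}_{U}^{(m)}\|_{2,\infty}$, and this second use of the $\ell_{1}\!\to\!\ell_{2,\infty}$ inequality is exactly what produces the factor $\omega$ and the term $\|\bm{U}\|_{2,\infty}+\|\widehat{\bm{U}}\bm{H}_{U}-\bm{U}\|_{2,\infty}$ in the final bound. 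No linearization of $\widehat{\bm{V}}\bm{H}_{V}-\bm{V}$ and no quadratic-form concentration are needed.

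Your step~(c) is where your outline has a real gap. After linearizing $\widehat{\bm{V}}^{(k)}\bm{H}_{V}^{(k)}-\bm{V}$ you must control $T^{-1/2}(\bm{\Sigma}_{\varepsilon}^{1/2})_{k,\cdot}\bm{Z}^{(k)}\cdot(\text{higher order})$; with only a spectral-norm bound $\lesssim\rho^{2}$ on the higher-order piece (which is all that is available from Wedin at this stage --- the sharper row-wise control is Lemma~\ref{Lemma UH-U VH-V}, which itself relies on the present lemma), this contributes $\lesssim T^{-1/2}\sqrt{(\bm{\Sigma}_{\varepsilon})_{k,k}}\sqrt{n}\,\rho^{2}\le\sigma_{r}\rho^{3}$, and $\sigma_{r}\rho^{3}$ is in general not dominated by any of the three target terms (take $\|\bm{U}_{k,\cdot}\|_{2}\ll\rho$ and $\omega_{k}\ll\rho$). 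Moreover, the vector-valued Hanson--Wright you invoke for the main quadratic form $T^{-1}\bm{a}^{\top}\bm{Z}^{(k)}\bm{V}_{\perp}\bm{V}_{\perp}^{\top}(\bm{Z}^{(k)})^{\top}\bm{B}$ with $\bm{a}\ne\bm{B}$ is not a one-line citation, and your account of how $\|\widehat{\bm{U}}\bm{H}_{U}-\bm{U}\|_{2,\infty}$ ``enters whenever the leading term is re-expressed through $\widehat{\bm{U}}$'' does not match any step of your own argument. The paper's early $\ell_{1}\!\to\!\ell_{2,\infty}$ reduction avoids all of this: once the problem is $\|\bm{Z}(\widehat{\bm{V}}\bm{H}_{V}-\bm{V})\|_{2,\infty}$, the row $\bm{Z}_{m,\cdot}$ is genuinely independent of the LOO surrogate, the Wedin self-bounding closes without circularity, and the two $\ell_{1}$ norms $\|(\bm{\Sigma}_{\varepsilon}^{1/2})_{k,\cdot}\|_{1}$ and $\|\bm{\Sigma}_{\varepsilon}^{1/2}\|_{1}$ (hence $\omega_{k}$ and $\omega$) enter in exactly the two places you anticipated, but via a cleaner mechanism.
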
 
\begin{proof}
The proof consists of four steps. In the first (resp, third) step,
we prove the upper bound for $\big\Vert(\widehat{\bm{V}}\widehat{\bm{\Sigma}}\bm{H}_{U}-(T^{-1/2}\bm{X})^{\top}\bm{U})_{l,\cdot}\big\Vert_{2}$
(resp. $\big\Vert(\widehat{\bm{U}}\widehat{\bm{\Sigma}}\bm{H}_{V}-(T^{-1/2}\bm{X})\bm{V})_{k,\cdot}\big\Vert_{2}$),
and in the second (resp. fourth) step, we add the details of the proof
for an important inequality used in the proof for $\big\Vert(\widehat{\bm{V}}\widehat{\bm{\Sigma}}\bm{H}_{U}-(T^{-1/2}\bm{X})^{\top}\bm{U})_{l,\cdot}\big\Vert_{2}$
(resp. $\big\Vert(\widehat{\bm{U}}\widehat{\bm{\Sigma}}\bm{H}_{V}-(T^{-1/2}\bm{X})\bm{V})_{k,\cdot}\big\Vert_{2}$).

\textit{Step 1} --
By calculations, we have that, for $l=1,2,\ldots,T$, 
\[
(\widehat{\bm{V}}\widehat{\bm{\Sigma}}\bm{H}_{U}-(T^{-1/2}\bm{X})^{\top}\bm{U})_{l,\cdot}=\underset{=:\alpha_{1}}{\underbrace{\bm{V}_{l,\cdot}\bm{\Lambda}\bm{U}^{\top}(\widehat{\bm{U}}\bm{H}_{U}-\bm{U})}}+\underset{=:\alpha_{2}}{\underbrace{T^{-1/2}(\bm{Z}^{\top})_{l,\cdot}\bm{\Sigma}_{\varepsilon}^{1/2}(\widehat{\bm{U}}\bm{H}_{U}-\bm{U})}}.
\]

For $\alpha_{1}$, using the full SVD of $T^{-1/2}\bm{X}$ in the
proof of Lemma \ref{Lemma R H for U V}, we have that, similar to
Step 1 in Section F.2.3 of \citet{yan2024entrywise}, it holds 
\[
\bm{\Lambda}\bm{U}^{\top}(\widehat{\bm{U}}\bm{H}_{U}-\bm{U})=\bm{V}^{\top}\widehat{\bm{V}}_{\perp}(\widehat{\bm{\Sigma}}_{\perp})^{\top}(\widehat{\bm{U}}_{\perp})^{\top}\bm{U}-T^{-1/2}\bm{V}^{\top}\bm{E}^{\top}(\widehat{\bm{U}}\widehat{\bm{U}}^{\top}-\bm{I}_{N})\bm{U}.
\]
Using (\ref{U U_hat_cmpl bound}), the spectral norm of the first
matrix can be bounded by 
\[
\big\Vert\bm{V}^{\top}\widehat{\bm{V}}_{\perp}(\widehat{\bm{\Sigma}}_{\perp})^{\top}(\widehat{\bm{U}}_{\perp})^{\top}\bm{U}\big\Vert_{2}\leq\big\Vert\bm{V}^{\top}\widehat{\bm{V}}_{\perp}\big\Vert_{2}\big\Vert\widehat{\bm{\Sigma}}_{\perp}\big\Vert_{2}\big\Vert\bm{U}^{\top}\widehat{\bm{U}}_{\perp}\big\Vert_{2}\lesssim\sigma_{r}\rho^{3},
\]
where we use the fact that $\big\Vert\widehat{\bm{\Sigma}}_{\perp}\big\Vert_{2}=\hat{\sigma}_{r}\lesssim T^{-1/2}\left\Vert \bm{E}\right\Vert _{2}\lesssim\sigma_{r}\rho$.
The spectral norm of the second matrix can be bounded by 
\begin{align*}
\big\Vert T^{-1/2}\bm{V}^{\top}\bm{E}^{\top}(\widehat{\bm{U}}\widehat{\bm{U}}^{\top}-\bm{I}_{N})\bm{U}\big\Vert_{2} & \leq\frac{1}{\sqrt{T}}\big\Vert\bm{E}\bm{V}\big\Vert_{2}\big\Vert(\widehat{\bm{U}}\widehat{\bm{U}}^{\top}-\bm{I}_{N})\bm{U}\big\Vert_{2}=\frac{1}{\sqrt{T}}\big\Vert\bm{\Sigma}_{\varepsilon}^{1/2}\bm{Z}\bm{V}\big\Vert_{2}\big\Vert\widehat{\bm{U}}\bm{H}_{U}-\bm{U}\big\Vert_{2}\\
 & \overset{\text{(i)}}{\lesssim}\frac{1}{\sqrt{T}}\sqrt{\left\Vert \bm{\Sigma}_{\varepsilon}\right\Vert _{2}}\sqrt{N+\log n}\cdot\rho\lesssim\sigma_{r}\rho^{2},
\end{align*}
where (i) uses (G.3) in Lemma 19 of \citet{yan2024entrywise} and
the fact that $\widehat{\bm{U}}\bm{H}_{U}-\bm{U}=\widehat{\bm{U}}\widehat{\bm{U}}^{\top}\bm{U}-\bm{U}=(\widehat{\bm{U}}\widehat{\bm{U}}^{\top}-\bm{U}\bm{U}^{\top})\bm{U}$
implying $\big\Vert\widehat{\bm{U}}\bm{H}_{U}-\bm{U}\big\Vert_{2}\leq\big\Vert\widehat{\bm{U}}\widehat{\bm{U}}^{\top}-\bm{U}\bm{U}^{\top}\big\Vert_{2}\lesssim\rho$
according to (\ref{U U_hat_cmpl bound}). So, we obtain that 
\begin{equation}
\left\Vert \alpha_{1}\right\Vert _{2}\leq\big\Vert\bm{V}_{l,\cdot}\big\Vert_{2}(\sigma_{r}\rho^{3}+\sigma_{r}\rho^{2})\lesssim\sigma_{r}\rho^{2}\big\Vert\bm{V}_{l,\cdot}\big\Vert_{2}.\label{1st term for V Sig H_U}
\end{equation}

For $\alpha_{2}$, we use the leave-one-out (LOO) technique. We define
$\bm{Z}^{\{l\}}$ as the matrix obtained by replacing the $l$th column
of $\bm{Z}$ with a zero vector, and define $\bm{X}^{\{l\}}:=\bm{BF}^{\top}+\bm{\Sigma}_{\varepsilon}^{1/2}\bm{Z}^{\{l\}}$.
Then, we define $\widehat{\bm{U}}^{\{l\}},$ $\widehat{\bm{V}}^{\{l\}},$
$\bm{H}_{U}^{\{l\}},$ and $\bm{H}_{V}^{\{l\}}$ w.r.t. $T^{-1/2}\bm{X}^{\{l\}}$,
in the same way as how we define $\widehat{\bm{U}},$ $\widehat{\bm{V}},$
$\bm{H}_{U},$ and $\bm{H}_{V}$ w.r.t. $T^{-1/2}\bm{X}$. We have
that 
\begin{equation}
\alpha_{2}=\underset{=:\gamma_{1}}{\underbrace{T^{-1/2}(\bm{Z}^{\top})_{l,\cdot}\bm{\Sigma}_{\varepsilon}^{1/2}(\widehat{\bm{U}}^{\{l\}}\bm{H}_{U}^{\{l\}}-\bm{U})}}+\underset{=:\gamma_{2}}{\underbrace{T^{-1/2}(\bm{Z}^{\top})_{l,\cdot}\bm{\Sigma}_{\varepsilon}^{1/2}(\widehat{\bm{U}}\bm{H}_{U}-\widehat{\bm{U}}^{\{l\}}\bm{H}_{U}^{\{l\}})}}.\label{2nd term for V Sig H_U LOO decompose}
\end{equation}

For $\gamma_{1}$, since $(\widehat{\bm{U}}^{\{l\}}\bm{H}_{U}^{\{l\}}-\bm{U})$
is independent with $(\bm{Z}^{\top})_{l,\cdot}$, we apply Lemma \ref{Lemma Z norms A B}
to obtain that $\left\Vert \gamma_{1}\right\Vert _{2}\lesssim\frac{1}{\sqrt{T}}\big\Vert\bm{\Sigma}_{\varepsilon}^{1/2}(\widehat{\bm{U}}^{\{l\}}\bm{H}_{U}^{\{l\}}-\bm{U})\big\Vert_{\mathrm{F}}\log n$.
Note that 
\begin{align*}
\big\Vert\bm{\Sigma}_{\varepsilon}^{1/2}(\widehat{\bm{U}}^{\{l\}}\bm{H}_{U}^{\{l\}}-\bm{U})\big\Vert_{\mathrm{F}} & \leq\big\Vert\bm{\Sigma}_{\varepsilon}^{1/2}\big\Vert_{2}\big\Vert\widehat{\bm{U}}^{\{l\}}\bm{H}_{U}^{\{l\}}-\bm{U}\big\Vert_{\mathrm{F}}\\
 & \leq\sqrt{\left\Vert \bm{\Sigma}_{\varepsilon}\right\Vert _{2}}\left(\big\Vert\widehat{\bm{U}}\bm{H}_{U}-\bm{U}\big\Vert_{\mathrm{F}}+\big\Vert\widehat{\bm{U}}^{\{l\}}\bm{H}_{U}^{\{l\}}-\widehat{\bm{U}}\bm{H}_{U}\big\Vert_{\mathrm{F}}\right).
\end{align*}
Then since $\widehat{\bm{U}}\bm{H}_{U}-\bm{U}=\widehat{\bm{U}}\widehat{\bm{U}}^{\top}\bm{U}-\bm{U}=(\widehat{\bm{U}}\widehat{\bm{U}}^{\top}-\bm{U}\bm{U}^{\top})\bm{U}$,
we obtain that 
\begin{equation}
\big\Vert\widehat{\bm{U}}\bm{H}_{U}-\bm{U}\big\Vert_{\mathrm{F}}\leq\big\Vert\widehat{\bm{U}}\widehat{\bm{U}}^{\top}-\bm{U}\bm{U}^{\top}\big\Vert_{\mathrm{F}}\lesssim\big\Vert\widehat{\bm{U}}\bm{R}_{U}-\bm{U}\big\Vert_{\mathrm{F}}\lesssim\rho\sqrt{r},\label{U H_U - U F-norm}
\end{equation}
where the last inequality is from Lemma \ref{Lemma R H for U V}.
So, we conclude that 
\begin{equation}
\left\Vert \gamma_{1}\right\Vert _{2}\lesssim\frac{1}{\sqrt{T}}\sqrt{\left\Vert \bm{\Sigma}_{\varepsilon}\right\Vert _{2}}\left(\rho\sqrt{r}+\left\Vert \mu_{\{l\}}\right\Vert _{\mathrm{F}}\right)\log n\text{\qquad with\qquad}\mu_{\{l\}}:=\widehat{\bm{U}}^{\{l\}}\bm{H}_{U}^{\{l\}}-\widehat{\bm{U}}\bm{H}_{U}.\label{1st term in LOO V}
\end{equation}

For $\gamma_{2}$, using the event $\mathcal{E}_{Z}$ defined in (\ref{Prob E_Z entry logn bound}),
we have that 
\begin{equation}
\left\Vert \gamma_{2}\right\Vert _{2}\lesssim\frac{1}{\sqrt{T}}\sqrt{\left\Vert \bm{\Sigma}_{\varepsilon}\right\Vert _{2}}\sqrt{n\log n}\left\Vert \mu_{\{l\}}\right\Vert _{2}\leq\frac{1}{\sqrt{T}}\sqrt{\left\Vert \bm{\Sigma}_{\varepsilon}\right\Vert _{2}n\log n}\left\Vert \mu_{\{l\}}\right\Vert _{\mathrm{F}},\label{2nd term in LOO V}
\end{equation}
with probability at least $1-O(n^{-2})$. Combining the bounds for
$\gamma_{1}$ and $\gamma_{2}$, we obtain that 
\begin{equation}
\left\Vert \alpha_{2}\right\Vert _{2}\leq\left\Vert \gamma_{1}\right\Vert _{2}+\left\Vert \gamma_{2}\right\Vert _{2}\lesssim\frac{1}{\sqrt{T}}\sqrt{\left\Vert \bm{\Sigma}_{\varepsilon}\right\Vert _{2}}\rho\sqrt{r}\log n+\frac{1}{\sqrt{T}}\sqrt{\left\Vert \bm{\Sigma}_{\varepsilon}\right\Vert _{2}n\log n}\left\Vert \mu_{\{l\}}\right\Vert _{\mathrm{F}}.\label{2nd term for V Sig H_U}
\end{equation}

We will use self-bounding technique to prove in \textit{Step 2} that
\[
\left\Vert \mu_{\{l\}}\right\Vert _{\mathrm{F}}\lesssim\rho\sqrt{\frac{r}{n}}\log n+\rho\sqrt{\log n}\left(\big\Vert\bm{V}_{l,\cdot}\big\Vert_{2}+\big\Vert(\widehat{\bm{V}}\bm{H}_{V}-\bm{V})_{l,\cdot}\big\Vert_{2}\right).
\]
As long as we have this, pluging the above upper bound of $\left\Vert \mu_{\{l\}}\right\Vert _{\mathrm{F}}$
into (\ref{2nd term for V Sig H_U}), and combining (\ref{1st term for V Sig H_U})
and (\ref{2nd term for V Sig H_U}), we obtain the desired bound for
$(\widehat{\bm{V}}\widehat{\bm{\Sigma}}\bm{H}_{U}-(T^{-1/2}\bm{X})^{\top}\bm{U})_{l,\cdot}$
as below, 
\begin{align*}
 & \big\Vert(\widehat{\bm{V}}\widehat{\bm{\Sigma}}\bm{H}_{U}-(T^{-1/2}\bm{X})^{\top}\bm{U})_{l,\cdot}\big\Vert_{2}\\
 & \leq\left\Vert \alpha_{1}\right\Vert _{2}+\left\Vert \alpha_{2}\right\Vert _{2}\\
 & \lesssim\sigma_{r}\rho^{2}\big\Vert\bm{V}_{l,\cdot}\big\Vert_{2}+\frac{1}{\sqrt{T}}\sqrt{\left\Vert \bm{\Sigma}_{\varepsilon}\right\Vert _{2}}\rho\sqrt{r}\log n\\
 & +\frac{1}{\sqrt{T}}\sqrt{\left\Vert \bm{\Sigma}_{\varepsilon}\right\Vert _{2}n\log n}\left[\rho\sqrt{\frac{r}{n}}\log n+\rho\sqrt{\log n}\left(\big\Vert\bm{V}_{l,\cdot}\big\Vert_{2}+\big\Vert(\widehat{\bm{V}}\bm{H}_{V}-\bm{V})_{l,\cdot}\big\Vert_{2}\right)\right]\\
 & \lesssim\sigma_{r}\rho^{2}\left(\sqrt{\frac{r}{n}}\log^{3/2}n+\log n\big\Vert\bm{V}_{l,\cdot}\big\Vert_{2}+\log n\big\Vert(\widehat{\bm{V}}\bm{H}_{V}-\bm{V})_{l,\cdot}\big\Vert_{2}\right).
\end{align*}

\textit{Step 2} --
Recall that $\mu_{\{l\}}=\widehat{\bm{U}}^{\{l\}}\bm{H}_{U}^{\{l\}}-\widehat{\bm{U}}\bm{H}_{U}$
as defined in (\ref{1st term in LOO V}). We define 
\[
\nu_{\{l\}}:=\widehat{\bm{V}}^{\{l\}}\bm{H}_{V}^{\{l\}}-\widehat{\bm{V}}\bm{H}_{V}.
\]
By definition, we have that $\mu_{\{l\}}=(\widehat{\bm{U}}^{\{l\}}(\widehat{\bm{U}}^{\{l\}})^{\top}-\widehat{\bm{U}}\widehat{\bm{U}}^{\top})\bm{U}$,
so $\left\Vert \mu_{\{l\}}\right\Vert _{\mathrm{F}}\leq\big\Vert\widehat{\bm{U}}^{\{l\}}(\widehat{\bm{U}}^{\{l\}})^{\top}-\widehat{\bm{U}}\widehat{\bm{U}}^{\top}\big\Vert_{\mathrm{F}}$,
and similarly we have $\left\Vert \nu_{\{l\}}\right\Vert _{\mathrm{F}}\leq\big\Vert\widehat{\bm{V}}^{\{l\}}(\widehat{\bm{V}}^{\{l\}})^{\top}-\widehat{\bm{V}}\widehat{\bm{V}}^{\top}\big\Vert_{\mathrm{F}}$.
Denote by $E_{l}:=\frac{1}{\sqrt{T}}\bm{X}^{\{l\}}-\frac{1}{\sqrt{T}}\bm{X}=\frac{1}{\sqrt{T}}\bm{\Sigma}_{\varepsilon}^{1/2}(\bm{Z}^{\{l\}}-\bm{Z})$.
Then using (\ref{SNR 2-norm}) and our argument in proof of Lemma
\ref{Lemma R H for U V}, we have that $\left\Vert E_{l}\right\Vert \ll\sigma_{r}$,
and $\sigma_{i}(\frac{1}{\sqrt{T}}\bm{X}^{\{l\}})\asymp\sigma_{i}$
for $i=1,2,\ldots,r$, and $\sigma_{r+1}(\frac{1}{\sqrt{T}}\bm{X}^{\{l\}})\ll\sigma_{r}$.
By Wedin's $\sin\Theta$ theorem \citep[Theorem 2.9]{chen2021Monograph},
we obtain that 
\begin{align*}
 & \max\left\{ \left\Vert \mu_{\{l\}}\right\Vert _{\mathrm{F}},\left\Vert \nu_{\{l\}}\right\Vert _{\mathrm{F}}\right\} \\
 & \leq\max\left\{ \big\Vert\widehat{\bm{U}}^{\{l\}}(\widehat{\bm{U}}^{\{l\}})^{\top}-\widehat{\bm{U}}\widehat{\bm{U}}^{\top}\big\Vert_{\mathrm{F}},\big\Vert\widehat{\bm{V}}^{\{l\}}(\widehat{\bm{V}}^{\{l\}})^{\top}-\widehat{\bm{V}}\widehat{\bm{V}}^{\top}\big\Vert_{\mathrm{F}}\right\} \\
 & \lesssim\frac{1}{\sigma_{r}(\frac{1}{\sqrt{T}}\bm{X}^{\{l\}})-\sigma_{r+1}(\frac{1}{\sqrt{T}}\bm{X}^{\{l\}})-\left\Vert E_{l}\right\Vert _{2}}\max\left\{ \big\Vert E_{l}\widehat{\bm{V}}^{\{l\}}\big\Vert_{\mathrm{F}},\big\Vert(E_{l})^{\top}\widehat{\bm{U}}^{\{l\}}\big\Vert_{\mathrm{F}}\right\} \\
 & \lesssim\frac{1}{\sigma_{r}}\max\left\{ \big\Vert E_{l}\widehat{\bm{V}}^{\{l\}}\big\Vert_{\mathrm{F}},\big\Vert(E_{l})^{\top}\widehat{\bm{U}}^{\{l\}}\big\Vert_{\mathrm{F}}\right\} .
\end{align*}

Next, by calculations we have that 
\begin{align*}
\big\Vert(E_{l})^{\top}\widehat{\bm{U}}^{\{l\}}\big\Vert_{\mathrm{F}} & =\left\Vert \frac{1}{\sqrt{T}}(\bm{Z}^{\{l\}}-\bm{Z})^{\top}\bm{\Sigma}_{\varepsilon}^{1/2}\widehat{\bm{U}}^{\{l\}}\right\Vert _{\mathrm{F}}\\
 & =\frac{1}{\sqrt{T}}\left\Vert (\bm{Z}_{\cdot,l})^{\top}\bm{\Sigma}_{\varepsilon}^{1/2}\widehat{\bm{U}}^{\{l\}}\right\Vert _{2}\overset{\text{(i)}}{\lesssim}\frac{1}{\sqrt{T}}\left\Vert (\bm{Z}_{\cdot,l})^{\top}\bm{\Sigma}_{\varepsilon}^{1/2}\widehat{\bm{U}}^{\{l\}}\bm{H}_{U}^{\{l\}}\right\Vert _{2}\\
 & \leq\frac{1}{\sqrt{T}}\left\Vert (\bm{Z}_{\cdot,l})^{\top}\bm{\Sigma}_{\varepsilon}^{1/2}\bm{U}\right\Vert _{2}+\frac{1}{\sqrt{T}}\left\Vert (\bm{Z}_{\cdot,l})^{\top}\bm{\Sigma}_{\varepsilon}^{1/2}(\widehat{\bm{U}}^{\{l\}}\bm{H}_{U}^{\{l\}}-\bm{U})\right\Vert _{2}\\
 & \lesssim\frac{1}{\sqrt{T}}\sqrt{\left\Vert \bm{\Sigma}_{\varepsilon}\right\Vert _{2}r}\log n+\frac{1}{\sqrt{T}}\sqrt{\left\Vert \bm{\Sigma}_{\varepsilon}\right\Vert _{2}}\left(\rho\sqrt{r}+\left\Vert \mu_{\{l\}}\right\Vert _{\mathrm{F}}\right)\log n.
\end{align*}
Here, (i) is because $\sigma_{\min}(\bm{H}_{U}^{\{l\}})\asymp1$,
which can be proven in the same manner as we prove $\sigma_{i}(\bm{H}_{U})\asymp1$
in Lemma \ref{Lemma R H for U V}; the last inequality uses the bound
(\ref{1st term in LOO V}) for $\gamma_{1}$ defined in (\ref{2nd term for V Sig H_U LOO decompose})
and the fact that, by Lemma \ref{Lemma Z norms A B}, 
\[
\big\Vert(\bm{Z}_{\cdot,l})^{\top}\bm{\Sigma}_{\varepsilon}^{1/2}\bm{U}\big\Vert_{2}\lesssim\big\Vert\bm{\Sigma}_{\varepsilon}^{1/2}\bm{U}\big\Vert_{\mathrm{F}}\log n\leq\big\Vert\bm{\Sigma}_{\varepsilon}^{1/2}\big\Vert_{2}\big\Vert\bm{U}\big\Vert_{\mathrm{F}}\log n=\sqrt{\left\Vert \bm{\Sigma}_{\varepsilon}\right\Vert _{2}r}\log n.
\]

Also, we have that 
\begin{align*}
\big\Vert E_{l}\widehat{\bm{V}}^{\{l\}}\big\Vert_{\mathrm{F}} & =\left\Vert \frac{1}{\sqrt{T}}\bm{\Sigma}_{\varepsilon}^{1/2}(\bm{Z}^{\{l\}}-\bm{Z})\widehat{\bm{V}}^{\{l\}}\right\Vert _{\mathrm{F}}\leq\frac{1}{\sqrt{T}}\left\Vert \bm{\Sigma}_{\varepsilon}^{1/2}\right\Vert _{2}\left\Vert (\bm{Z}^{\{l\}}-\bm{Z})\widehat{\bm{V}}^{\{l\}}\right\Vert _{\mathrm{F}}\\
 & =\frac{1}{\sqrt{T}}\sqrt{\left\Vert \bm{\Sigma}_{\varepsilon}\right\Vert _{2}}\left\Vert \bm{Z}_{\cdot,l}\right\Vert _{2}\left\Vert (\widehat{\bm{V}}^{\{l\}})_{l,\cdot}\right\Vert _{2}\overset{\text{(i)}}{\lesssim}\frac{1}{\sqrt{T}}\sqrt{\left\Vert \bm{\Sigma}_{\varepsilon}\right\Vert _{2}}\left\Vert \bm{Z}_{\cdot,l}\right\Vert _{2}\left\Vert (\widehat{\bm{V}}^{\{l\}})_{l,\cdot}\bm{H}_{V}^{\{l\}}\right\Vert _{2}\\
 & \leq\frac{1}{\sqrt{T}}\sqrt{\left\Vert \bm{\Sigma}_{\varepsilon}\right\Vert _{2}}\left\Vert \bm{Z}_{\cdot,l}\right\Vert _{2}\left(\big\Vert\bm{V}_{l,\cdot}\big\Vert_{2}+\big\Vert(\widehat{\bm{V}}\bm{H}_{V}-\bm{V})_{l,\cdot}\big\Vert_{2}+\left\Vert (\widehat{\bm{V}}^{\{l\}}\bm{H}_{V}^{\{l\}}-\widehat{\bm{V}}\bm{H}_{V})_{l,\cdot}\right\Vert _{2}\right)\\
 & \leq\frac{1}{\sqrt{T}}\sqrt{\left\Vert \bm{\Sigma}_{\varepsilon}\right\Vert _{2}n\log n}\left(\big\Vert\bm{V}_{l,\cdot}\big\Vert_{2}+\big\Vert(\widehat{\bm{V}}\bm{H}_{V}-\bm{V})_{l,\cdot}\big\Vert_{2}+\left\Vert \nu_{\{l\}}\right\Vert _{\mathrm{F}}\right).
\end{align*}
Here (i) is because $\sigma_{\min}(\bm{H}_{V}^{\{l\}})\asymp1$, which
can be proven in the same manner as we prove $\sigma_{i}(\bm{H}_{V})\asymp1$
in Lemma \ref{Lemma R H for U V}; the last inequality uses the event
$\mathcal{E}_{Z}$ defined in (\ref{Prob E_Z entry logn bound}).

Combining the above bounds for $\Vert(E_{l})^{\top}\widehat{\bm{U}}^{\{l\}}\Vert_{\mathrm{F}}$
and $\Vert E_{l}\widehat{\bm{V}}^{\{l\}}\Vert_{\mathrm{F}}$, and
using $\rho\ll1$, we obtain 
\begin{align*}
 & \max\left\{ \left\Vert \mu_{\{l\}}\right\Vert _{\mathrm{F}},\left\Vert \nu_{\{l\}}\right\Vert _{\mathrm{F}}\right\} \\
 & \lesssim\frac{1}{\sigma_{r}\sqrt{T}}\sqrt{\left\Vert \bm{\Sigma}_{\varepsilon}\right\Vert _{2}r}\log n+\frac{1}{\sigma_{r}\sqrt{T}}\sqrt{\left\Vert \bm{\Sigma}_{\varepsilon}\right\Vert _{2}}\left(\rho\sqrt{r}+\left\Vert \mu_{\{l\}}\right\Vert _{\mathrm{F}}\right)\log n\\
 & +\frac{1}{\sigma_{r}\sqrt{T}}\sqrt{\left\Vert \bm{\Sigma}_{\varepsilon}\right\Vert _{2}n\log n}\left(\big\Vert\bm{V}_{l,\cdot}\big\Vert_{2}+\big\Vert(\widehat{\bm{V}}\bm{H}_{V}-\bm{V})_{l,\cdot}\big\Vert_{2}+\left\Vert \nu_{\{l\}}\right\Vert _{\mathrm{F}}\right)\\
 & \lesssim\rho\sqrt{\frac{r}{n}}\log n+\rho\sqrt{\log n}\left(\big\Vert\bm{V}_{l,\cdot}\big\Vert_{2}+\big\Vert(\widehat{\bm{V}}\bm{H}_{V}-\bm{V})_{l,\cdot}\big\Vert_{2}\right)+\rho\sqrt{\log n}\max\left\{ \left\Vert \mu_{\{l\}}\right\Vert _{\mathrm{F}},\left\Vert \nu_{\{l\}}\right\Vert _{\mathrm{F}}\right\} .
\end{align*}
Note that both sides of the above inequality include the term $\max\left\{ \left\Vert \mu_{\{l\}}\right\Vert _{\mathrm{F}},\left\Vert \nu_{\{l\}}\right\Vert _{\mathrm{F}}\right\} $.
Then, since $\rho\sqrt{\log n}\ll1$ according to (\ref{SNR 2-norm logn}),
we obtain that 
\[
\max\left\{ \left\Vert \mu_{\{l\}}\right\Vert _{\mathrm{F}},\left\Vert \nu_{\{l\}}\right\Vert _{\mathrm{F}}\right\} \lesssim\rho\sqrt{\frac{r}{n}}\log n+\rho\sqrt{\log n}\left(\big\Vert\bm{V}_{l,\cdot}\big\Vert_{2}+\big\Vert(\widehat{\bm{V}}\bm{H}_{V}-\bm{V})_{l,\cdot}\big\Vert_{2}\right).
\]

\textit{Step 3} --
For $\beta_{1}$, in the same manner as we did for $\alpha_{1}$ in
(\ref{1st term for V Sig H_U}), we have that $\left\Vert \beta_{1}\right\Vert _{2}\lesssim\sigma_{r}\rho^{2}\big\Vert\bm{U}_{k,\cdot}\big\Vert_{2}$.
For $\beta_{2}$, our starting point is the following inequality 
\begin{equation}
\left\Vert \beta_{2}\right\Vert _{2}\leq\frac{1}{\sqrt{T}}\big\Vert(\bm{\Sigma}_{\varepsilon}^{1/2})_{k,\cdot}\big\Vert_{1}\big\Vert\bm{Z}(\widehat{\bm{V}}\bm{H}_{V}-\bm{V})\big\Vert_{2,\infty}.\label{2nd term for U Sig H_V inequality0}
\end{equation}
Then, the problem boils down to deriving an upper bound for $\big\Vert\bm{Z}(\widehat{\bm{V}}\bm{H}_{V}-\bm{V})\big\Vert_{2,\infty}=\max_{1\leq m\leq N}\big\Vert\bm{Z}_{m,\cdot}(\widehat{\bm{V}}\bm{H}_{V}-\bm{V})\big\Vert_{2}$.
We use the LOO technique to handle $\bm{Z}_{m,\cdot}(\widehat{\bm{V}}\bm{H}_{V}-\bm{V})$
for any $m=1,2,\ldots,N$. We define $\bm{Z}^{(m)}$ as the matrix
obtained by replacing the $m$th row of $\bm{Z}$ with a zero vector,
and define $\bm{X}^{(m)}:=\bm{BF}^{\top}+\bm{\Sigma}_{\varepsilon}^{1/2}\bm{Z}^{(m)}$.
Then, we define $\widehat{\bm{U}}^{(m)},$ $\widehat{\bm{V}}^{(m)},$
$\bm{H}_{U}^{(m)},$ and $\bm{H}_{V}^{(m)}$ w.r.t. $\frac{1}{\sqrt{T}}\bm{X}^{(m)}$,
in the same way as how we define $\widehat{\bm{U}},$ $\widehat{\bm{V}},$
$\bm{H}_{U},$ and $\bm{H}_{V}$ w.r.t. $\frac{1}{\sqrt{T}}\bm{X}$.
We have that 
\begin{equation}
\bm{Z}_{m,\cdot}(\widehat{\bm{V}}\bm{H}_{V}-\bm{V})=\underset{=:\zeta_{1}}{\underbrace{\bm{Z}_{m,\cdot}(\widehat{\bm{V}}^{(m)}\bm{H}_{V}^{(m)}-\bm{V})}}+\underset{=:\zeta_{2}}{\underbrace{\bm{Z}_{m,\cdot}(\widehat{\bm{V}}\bm{H}_{V}-\widehat{\bm{V}}^{(m)}\bm{H}_{V}^{(m)})}}.\label{2nd term for U Sig H_V LOO decompose}
\end{equation}

For $\zeta_{1}$, since $(\widehat{\bm{V}}^{(m)}\bm{H}_{V}^{(m)}-\bm{V})$
is independent with $\bm{Z}_{m,\cdot}$, we apply Lemma \ref{Lemma Z norms A B}
to obtain that $\left\Vert \zeta_{1}\right\Vert _{2}\lesssim\big\Vert\widehat{\bm{V}}^{(m)}\bm{H}_{V}^{(m)}-\bm{V}\big\Vert_{\mathrm{F}}\log n$.
Note that 
\[
\big\Vert\widehat{\bm{V}}^{(m)}\bm{H}_{V}^{(m)}-\bm{V}\big\Vert_{\mathrm{F}}\leq\big\Vert\widehat{\bm{V}}\bm{H}_{V}-\bm{V}\big\Vert_{\mathrm{F}}+\big\Vert\widehat{\bm{V}}^{(m)}\bm{H}_{V}^{(m)}-\widehat{\bm{V}}\bm{H}_{V}\big\Vert_{\mathrm{F}}\lesssim\rho\sqrt{r}+\big\Vert\widehat{\bm{V}}^{(m)}\bm{H}_{V}^{(m)}-\widehat{\bm{V}}\bm{H}_{V}\big\Vert_{\mathrm{F}}.
\]
Here, the last inequality is because $\big\Vert\widehat{\bm{V}}\bm{H}_{V}-\bm{V}\big\Vert_{\mathrm{F}}\lesssim\rho\sqrt{r}$,
which can be proven in the same manner as (\ref{U H_U - U F-norm}).
So, we conclude that 
\begin{equation}
\left\Vert \zeta_{1}\right\Vert _{2}\lesssim\left(\rho\sqrt{r}+\left\Vert \nu_{(m)}\right\Vert _{\mathrm{F}}\right)\log n\text{\qquad with\qquad}\nu_{(m)}:=\widehat{\bm{V}}^{(m)}\bm{H}_{V}^{(m)}-\widehat{\bm{V}}\bm{H}_{V}.\label{1st term in LOO U}
\end{equation}

For $\zeta_{2}$, using the event $\mathcal{E}_{Z}$ defined in (\ref{Prob E_Z entry logn bound}),
we have that 
\[
\left\Vert \zeta_{2}\right\Vert _{2}\lesssim\sqrt{n\log n}\left\Vert \nu_{(m)}\right\Vert _{2},
\]
with probability at least $1-O(n^{-2})$. Combining the bounds for
$\zeta_{1}$ and $\zeta_{2}$, we obtain that 
\begin{equation}
\big\Vert\bm{Z}_{m,\cdot}(\widehat{\bm{V}}\bm{H}_{V}-\bm{V})\big\Vert_{2}\leq\left\Vert \zeta_{1}\right\Vert _{2}+\left\Vert \zeta_{2}\right\Vert _{2}\lesssim\rho\sqrt{r}\log n+\sqrt{n\log n}\left\Vert \nu_{(m)}\right\Vert _{\mathrm{F}}.\label{2nd term for U Sig H_V}
\end{equation}

We will use self-bounding technique to prove in \textit{Step 4} that
\[
\left\Vert \nu_{(m)}\right\Vert _{\mathrm{F}}\lesssim\rho\sqrt{\frac{r}{n}}\log n+\omega\sqrt{\log n}\left(\big\Vert\bm{U}\big\Vert_{2,\infty}+\big\Vert\widehat{\bm{U}}\bm{H}_{U}-\bm{U}\big\Vert_{2,\infty}\right).
\]
As long as we have this, pluging the above upper bound of $\left\Vert \nu_{(m)}\right\Vert _{\mathrm{F}}$
into (\ref{2nd term for U Sig H_V}), we obtain the upper bound for
$\big\Vert\bm{Z}(\widehat{\bm{V}}\bm{H}_{V}-\bm{V})\big\Vert_{2,\infty}=\max_{1\leq m\leq N}\big\Vert\bm{Z}_{m,\cdot}(\widehat{\bm{V}}\bm{H}_{V}-\bm{V})\big\Vert_{2}$,
which leads to an upper bound for $\left\Vert \beta_{2}\right\Vert _{2}$
in (\ref{2nd term for U Sig H_V inequality0}). Finally, combining
the bound for $\beta_{1}$ and the bound for $\beta_{2}$ in (\ref{2nd term for U Sig H_V inequality0}),
we obtain the desired bound for $(\widehat{\bm{V}}\widehat{\bm{\Sigma}}\bm{H}_{U}-(T^{-1/2}\bm{X})^{\top}\bm{U})_{k,\cdot}$
as below, 
\begin{align*}
 & \big\Vert(\widehat{\bm{V}}\widehat{\bm{\Sigma}}\bm{H}_{U}-(T^{-1/2}\bm{X})^{\top}\bm{U})_{k,\cdot}\big\Vert_{2}\\
 & \leq\left\Vert \beta_{1}\right\Vert _{2}+\left\Vert \beta_{2}\right\Vert _{2}\\
 & \lesssim\sigma_{r}\rho^{2}\big\Vert\bm{U}_{k,\cdot}\big\Vert_{2}+\frac{1}{\sqrt{T}}\big\Vert(\bm{\Sigma}_{\varepsilon}^{1/2})_{k,\cdot}\big\Vert_{1}\rho\sqrt{r}\log n\\
 & +\frac{1}{\sqrt{T}}\big\Vert(\bm{\Sigma}_{\varepsilon}^{1/2})_{k,\cdot}\big\Vert_{1}\sqrt{n\log n}\left[\rho\sqrt{\frac{r}{n}}\log n+\omega\sqrt{\log n}\left(\big\Vert\bm{U}\big\Vert_{2,\infty}+\big\Vert\widehat{\bm{U}}\bm{H}_{U}-\bm{U}\big\Vert_{2,\infty}\right)\right]\\
 & \lesssim\sigma_{r}\omega_{k}\rho\sqrt{\frac{r}{n}}\log^{3/2}n+\sigma_{r}\rho^{2}\big\Vert\bm{U}_{k,\cdot}\big\Vert_{2}+\sigma_{r}\omega_{k}\omega\log n\left(\big\Vert\bm{U}\big\Vert_{2,\infty}+\big\Vert\widehat{\bm{U}}\bm{H}_{U}-\bm{U}\big\Vert_{2,\infty}\right).
\end{align*}

\textit{Step 4} --
Recall that $\nu_{(m)}:=\widehat{\bm{V}}^{(m)}\bm{H}_{V}^{(m)}-\widehat{\bm{V}}\bm{H}_{V}$
as defined in (\ref{1st term in LOO U}). We define 
\[
\mu_{(m)}:=\widehat{\bm{U}}^{(m)}\bm{H}_{U}^{(m)}-\widehat{\bm{U}}\bm{H}_{U}.
\]
By definition, we have that $\mu_{(m)}=(\widehat{\bm{U}}^{(m)}(\widehat{\bm{U}}^{(m)})^{\top}-\widehat{\bm{U}}\widehat{\bm{U}}^{\top})\bm{U}$,
so $\left\Vert \mu_{(m)}\right\Vert _{\mathrm{F}}\leq\big\Vert\widehat{\bm{U}}^{(m)}(\widehat{\bm{U}}^{(m)})^{\top}-\widehat{\bm{U}}\widehat{\bm{U}}^{\top}\big\Vert_{\mathrm{F}}$,
and similarly we have $\left\Vert \nu_{(m)}\right\Vert _{\mathrm{F}}\leq\big\Vert\widehat{\bm{V}}^{(m)}(\widehat{\bm{V}}^{(m)})^{\top}-\widehat{\bm{V}}\widehat{\bm{V}}^{\top}\big\Vert_{\mathrm{F}}$.
Denote by $E_{m}:=\frac{1}{\sqrt{T}}\bm{X}^{(m)}-\frac{1}{\sqrt{T}}\bm{X}=\frac{1}{\sqrt{T}}\bm{\Sigma}_{\varepsilon}^{1/2}(\bm{Z}^{(m)}-\bm{Z})$.
Then using (\ref{SNR 2-norm}) and our argument in proof of Lemma
\ref{Lemma R H for U V}, we have that $\left\Vert E_{m}\right\Vert \ll\sigma_{r}$,
and $\sigma_{i}(\frac{1}{\sqrt{T}}\bm{X}^{(m)})\asymp\sigma_{i}$
for $i=1,2,\ldots,r$, and $\sigma_{r+1}(\frac{1}{\sqrt{T}}\bm{X}^{(m)})\ll\sigma_{r}$.
By Wedin's $\sin\Theta$ theorem \citep[Theorem 2.9]{chen2021Monograph},
we obtain that 
\begin{align*}
 & \max\left\{ \left\Vert \mu_{(m)}\right\Vert _{\mathrm{F}},\left\Vert \nu_{(m)}\right\Vert _{\mathrm{F}}\right\} \\
 & \leq\max\left\{ \big\Vert\widehat{\bm{U}}^{(m)}(\widehat{\bm{U}}^{(m)})^{\top}-\widehat{\bm{U}}\widehat{\bm{U}}^{\top}\big\Vert_{\mathrm{F}},\big\Vert\widehat{\bm{V}}^{(m)}(\widehat{\bm{V}}^{(m)})^{\top}-\widehat{\bm{V}}\widehat{\bm{V}}^{\top}\big\Vert_{\mathrm{F}}\right\} \\
 & \lesssim\frac{1}{\sigma_{r}(\frac{1}{\sqrt{T}}\bm{X}^{(m)})-\sigma_{r+1}(\frac{1}{\sqrt{T}}\bm{X}^{(m)})-\left\Vert E_{m}\right\Vert _{2}}\max\left\{ \big\Vert E_{m}\widehat{\bm{V}}^{(m)}\big\Vert_{\mathrm{F}},\big\Vert(E_{m})^{\top}\widehat{\bm{U}}^{(m)}\big\Vert_{\mathrm{F}}\right\} \\
 & \lesssim\frac{1}{\sigma_{r}}\max\left\{ \big\Vert E_{m}\widehat{\bm{V}}^{(m)}\big\Vert_{\mathrm{F}},\big\Vert(E_{m})^{\top}\widehat{\bm{U}}^{(m)}\big\Vert_{\mathrm{F}}\right\} .
\end{align*}

Next, by calculations we have that 
\begin{align*}
\big\Vert(E_{m})^{\top}\widehat{\bm{U}}^{(m)}\big\Vert_{\mathrm{F}} & =\left\Vert \frac{1}{\sqrt{T}}(\bm{Z}^{(m)}-\bm{Z})^{\top}\bm{\Sigma}_{\varepsilon}^{1/2}\widehat{\bm{U}}^{(m)}\right\Vert _{\mathrm{F}}=\frac{1}{\sqrt{T}}\left\Vert (\bm{Z}_{m,\cdot})^{\top}(\bm{\Sigma}_{\varepsilon}^{1/2}\widehat{\bm{U}}^{(m)})_{m,\cdot}\right\Vert _{\mathrm{F}}\\
 & =\frac{1}{\sqrt{T}}\left\Vert \bm{Z}_{m,\cdot}\right\Vert _{2}\left\Vert (\bm{\Sigma}_{\varepsilon}^{1/2}\widehat{\bm{U}}^{(m)})_{m,\cdot}\right\Vert _{2}\overset{\text{(i)}}{\lesssim}\frac{1}{\sqrt{T}}\left\Vert \bm{Z}_{m,\cdot}\right\Vert _{2}\left\Vert (\bm{\Sigma}_{\varepsilon}^{1/2}\widehat{\bm{U}}^{(m)})_{m,\cdot}\bm{H}_{U}^{(m)}\right\Vert _{2}\\
 & \leq\frac{1}{\sqrt{T}}\left\Vert \bm{Z}_{m,\cdot}\right\Vert _{2}\left\Vert \bm{\Sigma}_{\varepsilon}^{1/2}\widehat{\bm{U}}^{(m)}\bm{H}_{U}^{(m)}\right\Vert _{2,\infty}\leq\frac{1}{\sqrt{T}}\left\Vert \bm{Z}_{m,\cdot}\right\Vert _{2}\left\Vert \bm{\Sigma}_{\varepsilon}^{1/2}\right\Vert _{1}\left\Vert \widehat{\bm{U}}^{(m)}\bm{H}_{U}^{(m)}\right\Vert _{2,\infty}\\
 & \leq\frac{1}{\sqrt{T}}\left\Vert \bm{\Sigma}_{\varepsilon}^{1/2}\right\Vert _{1}\left\Vert \bm{Z}_{m,\cdot}\right\Vert _{2}\left(\big\Vert\bm{U}\big\Vert_{2,\infty}+\big\Vert\widehat{\bm{U}}\bm{H}_{U}-\bm{U}\big\Vert_{2,\infty}+\left\Vert \widehat{\bm{U}}^{(m)}\bm{H}_{U}^{(m)}-\widehat{\bm{U}}\bm{H}_{U}\right\Vert _{2,\infty}\right)\\
 & \lesssim\frac{1}{\sqrt{T}}\left\Vert \bm{\Sigma}_{\varepsilon}^{1/2}\right\Vert _{1}\sqrt{n\log n}\left(\big\Vert\bm{U}\big\Vert_{2,\infty}+\big\Vert\widehat{\bm{U}}\bm{H}_{U}-\bm{U}\big\Vert_{2,\infty}+\left\Vert \mu_{(m)}\right\Vert _{\mathrm{F}}\right).
\end{align*}
Here (i) is because $\sigma_{\min}(\bm{H}_{U}^{(m)})\asymp1$, which
can be proven in the same manner as we prove $\sigma_{i}(\bm{H}_{U})\asymp1$
in Lemma \ref{Lemma R H for U V}, and the last inequality uses the
event $\mathcal{E}_{Z}$ defined in (\ref{Prob E_Z entry logn bound}).

Also, we have that 
\begin{align*}
\big\Vert E_{m}\widehat{\bm{V}}^{(m)}\big\Vert_{\mathrm{F}} & =\left\Vert \frac{1}{\sqrt{T}}\bm{\Sigma}_{\varepsilon}^{1/2}(\bm{Z}^{(m)}-\bm{Z})\widehat{\bm{V}}^{(m)}\right\Vert _{\mathrm{F}}\leq\frac{1}{\sqrt{T}}\left\Vert \bm{\Sigma}_{\varepsilon}^{1/2}\right\Vert _{2}\left\Vert (\bm{Z}^{(m)}-\bm{Z})\widehat{\bm{V}}^{(m)}\right\Vert _{\mathrm{F}}\\
 & =\frac{1}{\sqrt{T}}\sqrt{\left\Vert \bm{\Sigma}_{\varepsilon}\right\Vert _{2}}\left\Vert \bm{Z}_{m,\cdot}\widehat{\bm{V}}^{(m)}\right\Vert _{2}\overset{\text{(i)}}{\lesssim}\frac{1}{\sqrt{T}}\sqrt{\left\Vert \bm{\Sigma}_{\varepsilon}\right\Vert _{2}}\left\Vert \bm{Z}_{m,\cdot}\widehat{\bm{V}}^{(m)}\bm{H}_{V}^{(m)}\right\Vert _{2}\\
 & \leq\frac{1}{\sqrt{T}}\sqrt{\left\Vert \bm{\Sigma}_{\varepsilon}\right\Vert _{2}}\left(\left\Vert \bm{Z}_{m,\cdot}\bm{V}\right\Vert _{2}+\left\Vert \bm{Z}_{m,\cdot}(\widehat{\bm{V}}^{(m)}\bm{H}_{V}^{(m)}-\bm{V})\right\Vert _{2}\right)\\
 & \lesssim\frac{1}{\sqrt{T}}\sqrt{\left\Vert \bm{\Sigma}_{\varepsilon}\right\Vert _{2}r}\log n+\frac{1}{\sqrt{T}}\sqrt{\left\Vert \bm{\Sigma}_{\varepsilon}\right\Vert _{2}}\left(\rho\sqrt{r}+\left\Vert \nu_{(m)}\right\Vert _{\mathrm{F}}\right)\log n.
\end{align*}
Here, (i) is because $\sigma_{\min}(\bm{H}_{V}^{(m)})\asymp1$, which
can be proven in the same manner as we prove $\sigma_{i}(\bm{H}_{V})\asymp1$
in Lemma \ref{Lemma R H for U V}; the last inequality uses the bound
(\ref{1st term in LOO U}) for $\zeta_{1}$ defined in (\ref{2nd term for U Sig H_V LOO decompose})
and the fact that, by Lemma \ref{Lemma Z norms A B}, $\big\Vert\bm{Z}_{m,\cdot}\bm{V}\big\Vert_{2}\lesssim\big\Vert\bm{V}\big\Vert_{\mathrm{F}}\log n=\sqrt{r}\log n$.

Combining the above bounds for $\big\Vert(E_{m})^{\top}\widehat{\bm{U}}^{(m)}\big\Vert_{\mathrm{F}}$
and $\big\Vert E_{m}\widehat{\bm{V}}^{(m)}\big\Vert_{\mathrm{F}}$,
and using $\rho\ll1$, we obtain 
\begin{align*}
 & \max\left\{ \left\Vert \mu_{(m)}\right\Vert _{\mathrm{F}},\left\Vert \nu_{(m)}\right\Vert _{\mathrm{F}}\right\} \\
 & \lesssim\frac{1}{\sigma_{r}\sqrt{T}}\sqrt{\left\Vert \bm{\Sigma}_{\varepsilon}\right\Vert _{2}r}\log n+\frac{1}{\sqrt{T}}\sqrt{\left\Vert \bm{\Sigma}_{\varepsilon}\right\Vert _{2}}\left(\rho\sqrt{r}+\left\Vert \nu_{(m)}\right\Vert _{\mathrm{F}}\right)\log n\\
 & +\frac{1}{\sigma_{r}\sqrt{T}}\left\Vert \bm{\Sigma}_{\varepsilon}^{1/2}\right\Vert _{1}\sqrt{n\log n}\left(\big\Vert\bm{U}\big\Vert_{2,\infty}+\big\Vert\widehat{\bm{U}}\bm{H}_{U}-\bm{U}\big\Vert_{2,\infty}+\left\Vert \mu_{(m)}\right\Vert _{\mathrm{F}}\right)\\
 & \lesssim\rho\sqrt{\frac{r}{n}}\log n+\omega\sqrt{\log n}\left(\big\Vert\bm{U}\big\Vert_{2,\infty}+\big\Vert\widehat{\bm{U}}\bm{H}_{U}-\bm{U}\big\Vert_{2,\infty}\right)+\omega\sqrt{\log n}\max\left\{ \left\Vert \mu_{(m)}\right\Vert _{\mathrm{F}},\left\Vert \nu_{(m)}\right\Vert _{\mathrm{F}}\right\} .
\end{align*}
Note that, both sides of the above inequality include the term $\max\left\{ \left\Vert \mu_{(m)}\right\Vert _{\mathrm{F}},\left\Vert \nu_{(m)}\right\Vert _{\mathrm{F}}\right\} $.
Then, since $\omega\sqrt{\log n}\ll1$ according to (\ref{SNR 1-norm logn}),
we obtain that 
\[
\max\left\{ \left\Vert \mu_{(m)}\right\Vert _{\mathrm{F}},\left\Vert \nu_{(m)}\right\Vert _{\mathrm{F}}\right\} \lesssim\rho\sqrt{\frac{r}{n}}\log n+\omega\sqrt{\log n}\left(\big\Vert\bm{U}\big\Vert_{2,\infty}+\big\Vert\widehat{\bm{U}}\bm{H}_{U}-\bm{U}\big\Vert_{2,\infty}\right).
\]
\end{proof}
Then we handle $\widehat{\bm{U}}\bm{H}_{U}-\bm{U}$ and $\widehat{\bm{V}}\bm{H}_{V}-\bm{V}$.

\begin{lemma} \label{Lemma UH-U VH-V}Suppose that Assumptions \ref{Assump_Bf_identification},
\ref{Assump_noise_Z_entries}, and \ref{Assump_factor_f} hold, and
\begin{equation}
\rho\ll\sqrt{\log n}\text{ and }r\log n\ll n.\label{SNR for UH-U VH-V self}
\end{equation}

(i) Assuming (\ref{SNR 2-norm logn}), i.e., $\rho\sqrt{\log n}\ll1$,
then we have that, for $l=1,2,\ldots,T$, with probability at least
$1-O(n^{-2})$, 
\[
\big\Vert(\widehat{\bm{V}}\bm{H}_{V}-\bm{V})_{l,\cdot}\big\Vert_{2}\lesssim\rho\sqrt{\frac{r}{n}}\log n+(\rho^{2}+\rho\sqrt{\frac{r}{n}})\log n\big\Vert\bm{V}_{l,\cdot}\big\Vert_{2}.
\]

(ii) Assuming (\ref{SNR 1-norm logn}), i.e., $\omega\sqrt{\log n}\ll1$,
then we have that, with probability at least $1-O(n^{-2})$, 
\[
\big\Vert\widehat{\bm{U}}\bm{H}_{U}-\bm{U}\big\Vert_{2,\infty}\lesssim\omega\sqrt{\frac{r}{n}}\log n+(\rho\sqrt{\frac{r}{n}}+\omega^{2})\log n\big\Vert\bm{U}\big\Vert_{2,\infty},
\]
and for $k=1,2,\ldots,N$, 
\[
\big\Vert(\widehat{\bm{U}}\bm{H}_{U}-\bm{U})_{k,\cdot}\big\Vert_{2}\lesssim\omega_{k}\sqrt{\frac{r}{n}}\log n+(\rho\sqrt{\frac{r}{n}}+\omega^{2})\log n\big\Vert\bm{U}_{k,\cdot}\big\Vert_{2}+\omega_{k}\omega\log n\big\Vert\bm{U}\big\Vert_{2,\infty},
\]
where $\omega_{k}=\left\Vert ((\bm{\Sigma}_{e})^{1/2})_{k,\cdot}\right\Vert _{1}\sqrt{n}/(\sigma_{r}\sqrt{T})$
is defined in (\ref{SNR 1-norm k-th row}).

\end{lemma}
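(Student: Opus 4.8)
The plan is to bootstrap these row-wise bounds out of the three ``near-diagonalization'' lemmas already established --- Lemma~\ref{Lemma R H for U V} (spectral and Frobenius control of $\widehat{\bm{U}}\bm{R}_{U}-\bm{U}$, $\widehat{\bm{V}}\bm{R}_{V}-\bm{V}$ and of $\widehat{\bm{\Sigma}},\bm{H}_{U},\bm{H}_{V}$), Lemma~\ref{Lemma Sigma hat tilde H R} ($\bm{H}_{U}^{\top}\widehat{\bm{\Sigma}}\bm{H}_{V}\approx\bm{\Lambda}$), and, crucially, the leave-one-out estimates of Lemma~\ref{Lemma LOO UV Sig H}. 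First I would note that~\eqref{SNR 1-norm logn} forces $\omega\sqrt{\log n}\ll1$, hence $\rho\sqrt{\log n}\le\omega\sqrt{\log n}\ll1$ too, so under either hypothesis of the lemma all of Lemmas~\ref{Lemma R H for U V}, \ref{Lemma Sigma hat tilde H R} and the relevant part of Lemma~\ref{Lemma LOO UV Sig H} are in force; everything is stated on the intersection of their defining events, which has probability $1-O(n^{-2})$ by the bookkeeping of~\eqref{condition on E_Z and F good}.

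The backbone is an \emph{exact} identity. Using $T^{-1/2}\bm{B}\bm{F}^{\top}=\bm{U}\bm{\Lambda}\bm{V}^{\top}$ (equivalently $\bm{U}=T^{-1/2}\bm{B}\bm{F}^{\top}\bm{V}\bm{\Lambda}^{-1}$), the splitting $T^{-1/2}\bm{B}\bm{F}^{\top}=T^{-1/2}\bm{X}-T^{-1/2}\bm{E}$, and $\widehat{\bm{U}}\widehat{\bm{U}}^{\top}(T^{-1/2}\bm{X})=\widehat{\bm{U}}\widehat{\bm{\Sigma}}\widehat{\bm{V}}^{\top}$, I would derive
\[
\widehat{\bm{U}}\bm{H}_{U}-\bm{U}=(\bm{I}_{N}-\widehat{\bm{U}}\widehat{\bm{U}}^{\top})\bm{G}_{U}+\bm{\Delta}_{U}\bm{\Lambda}^{-1},\qquad
\widehat{\bm{V}}\bm{H}_{V}-\bm{V}=(\bm{I}_{T}-\widehat{\bm{V}}\widehat{\bm{V}}^{\top})\bm{G}_{V}+\bm{\Delta}_{V}\bm{\Lambda}^{-1},
\]
with $\bm{G}_{U}:=T^{-1/2}\bm{E}\bm{V}\bm{\Lambda}^{-1}$, $\bm{G}_{V}:=T^{-1/2}\bm{E}^{\top}\bm{U}\bm{\Lambda}^{-1}$ (as in Theorem~\ref{Thm UV 1st approx row-wise error}) and $\bm{\Delta}_{U}:=\widehat{\bm{U}}\widehat{\bm{\Sigma}}\bm{H}_{V}-(T^{-1/2}\bm{X})\bm{V}$, $\bm{\Delta}_{V}:=\widehat{\bm{V}}\widehat{\bm{\Sigma}}\bm{H}_{U}-(T^{-1/2}\bm{X})^{\top}\bm{U}$ --- exactly the quantities whose rows are controlled in Lemma~\ref{Lemma LOO UV Sig H}(ii) and (i). Then I would bound each piece. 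For every row, $\|((\bm{I}_{N}-\widehat{\bm{U}}\widehat{\bm{U}}^{\top})\bm{G}_{U})_{k,\cdot}\|_{2}\le\|(\bm{G}_{U})_{k,\cdot}\|_{2}+\|\widehat{\bm{U}}_{k,\cdot}\|_{2}\|\widehat{\bm{U}}^{\top}\bm{G}_{U}\|_{2}$ (and symmetrically for $\bm{V}$); writing $\bm{E}=\bm{\Sigma}_{\varepsilon}^{1/2}\bm{Z}$, Lemma~\ref{Lemma Z norms A B} gives $\|(\bm{G}_{U})_{k,\cdot}\|_{2}\lesssim\omega_{k}\sqrt{r/n}\log n$ and $\|(\bm{G}_{V})_{l,\cdot}\|_{2}\lesssim\rho\sqrt{r/n}\log n$; writing $\widehat{\bm{U}}^{\top}=\bm{H}_{U}^{-\top}\bm{U}^{\top}+\bm{H}_{U}^{-\top}(\widehat{\bm{U}}\bm{H}_{U}-\bm{U})^{\top}$ and applying a Gaussian-concentration bound of the form $\|\bm{A}^{\top}\bm{Z}\bm{B}\|_{2}\lesssim\|\bm{A}\|_{2}\|\bm{B}\|_{2}\sqrt{\mathrm{rank}\,\bm{A}+\mathrm{rank}\,\bm{B}+\log n}$ to the leading part $\bm{U}^{\top}\bm{\Sigma}_{\varepsilon}^{1/2}\bm{Z}\bm{V}$ (together with Lemma~\ref{Lemma R H for U V}) gives the \emph{refined} bound $\|\widehat{\bm{U}}^{\top}\bm{G}_{U}\|_{2}\lesssim\rho\sqrt{(r+\log n)/n}+\rho^{2}$, and likewise for $\widehat{\bm{V}}^{\top}\bm{G}_{V}$; finally $\|\widehat{\bm{U}}_{k,\cdot}\|_{2}\lesssim\|\bm{U}_{k,\cdot}\|_{2}+\|(\widehat{\bm{U}}\bm{H}_{U}-\bm{U})_{k,\cdot}\|_{2}$ and $\|\widehat{\bm{U}}\|_{2,\infty}\lesssim\|\bm{U}\|_{2,\infty}+\|\widehat{\bm{U}}\bm{H}_{U}-\bm{U}\|_{2,\infty}$. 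The $\bm{\Delta}$-terms are obtained by dividing the bounds of Lemma~\ref{Lemma LOO UV Sig H} by $\sigma_{r}\asymp\lambda_{r}$.

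Assembling these into the identity for $\widehat{\bm{U}}\bm{H}_{U}-\bm{U}$, its right-hand side contains $\|\widehat{\bm{U}}\bm{H}_{U}-\bm{U}\|_{2,\infty}$ --- through $\|\widehat{\bm{U}}\|_{2,\infty}$ and through the last term of Lemma~\ref{Lemma LOO UV Sig H}(ii) --- with coefficient $O(\rho\sqrt{(r+\log n)/n}+\rho^{2}+\omega^{2}\log n)=o(1)$; absorbing this self-term (the step where $\omega\sqrt{\log n}\ll1$ and $r\log n\ll n$ are genuinely used) and then simplifying via $\rho^{2}\le\omega^{2}\le\omega^{2}\log n$, $\rho\sqrt{(r+\log n)/n}\lesssim\rho\sqrt{r/n}\log n$, and $\omega\rho\sqrt{r/n}\log^{3/2}n=\omega\sqrt{r/n}\log n\cdot\rho\sqrt{\log n}\ll\omega\sqrt{r/n}\log n$ yields the claimed $\ell_{2,\infty}$ bound. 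Substituting this $\ell_{2,\infty}$ bound back into the row-wise inequality --- where $\|\widehat{\bm{U}}\bm{H}_{U}-\bm{U}\|_{2,\infty}$ now enters only at lower order --- gives the per-row bound on $\|(\widehat{\bm{U}}\bm{H}_{U}-\bm{U})_{k,\cdot}\|_{2}$. Part~(i) is the same computation carried out row by row: the self-coefficient is $O(\rho^{2}\log n+\rho\sqrt{(r+\log n)/n})=o(1)$ by~\eqref{SNR 2-norm logn}, and after absorption the surviving terms collapse to $\rho\sqrt{r/n}\log n+(\rho^{2}+\rho\sqrt{r/n})\log n\|\bm{V}_{l,\cdot}\|_{2}$.

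I expect the main obstacle to be structural rather than computational: Lemma~\ref{Lemma LOO UV Sig H} does not bound $\widehat{\bm{U}}\bm{H}_{U}-\bm{U}$ directly, only $\bm{\Delta}_{U}$ in terms of $\|\widehat{\bm{U}}\bm{H}_{U}-\bm{U}\|_{2,\infty}$ (and $\bm{\Delta}_{V}$ in terms of $\|(\widehat{\bm{V}}\bm{H}_{V}-\bm{V})_{l,\cdot}\|_{2}$), so the whole argument is a fixed-point / self-bounding one, and the precise role of the SNR hypotheses~\eqref{SNR 2-norm logn}--\eqref{SNR 1-norm logn} is to make the self-coefficients strictly sub-unit so that the absorption is legitimate. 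A secondary point requiring care is the refined estimate $\|\widehat{\bm{U}}^{\top}\bm{G}_{U}\|_{2}\lesssim\rho\sqrt{(r+\log n)/n}+\rho^{2}$: the crude bound $\|\widehat{\bm{U}}^{\top}\bm{G}_{U}\|_{2}\le\|\bm{G}_{U}\|_{2}\lesssim\rho$ would leave an unabsorbable term of size $\rho\|\bm{U}_{k,\cdot}\|_{2}$, so one must exploit that $\bm{U}$ --- unlike $\widehat{\bm{U}}$ --- is $\sigma(\bm{F})$-measurable and independent of $\bm{Z}$ to gain the extra $\sqrt{(r+\log n)/n}$ factor.
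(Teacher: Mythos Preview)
Your proposal is correct and follows essentially the same approach as the paper. The only cosmetic difference is that you package the paper's terms $\bm{b}_{2}+\bm{b}_{3}$ (respectively $\bm{a}_{2}+\bm{a}_{3}$) as the single expression $(\bm{I}-\widehat{\bm{U}}\widehat{\bm{U}}^{\top})\bm{G}_{U}$ (respectively $(\bm{I}-\widehat{\bm{V}}\widehat{\bm{V}}^{\top})\bm{G}_{V}$) via the algebraic identity $\bm{H}_{U}\bm{\Lambda}-\widehat{\bm{\Sigma}}\bm{H}_{V}=-\widehat{\bm{U}}^{\top}\bm{G}_{U}\bm{\Lambda}$, after which the refined estimate on $\widehat{\bm{U}}^{\top}\bm{G}_{U}$ and the self-bounding step proceed exactly as in the paper.
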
 
\begin{proof}
We will prove (i) and (ii) in \textit{Step 1} and \textit{Step 2},
respectively.

\textit{Step 1} --
Since $\lambda_{i}\asymp\sigma_{i}$ according to Lemma \ref{Lemma SVD BF good event},
we obtain that $\Vert(\widehat{\bm{V}}\bm{H}_{V}-\bm{V})_{l,\cdot}\Vert_{2}\lesssim\frac{1}{\sigma_{r}}\Vert(\widehat{\bm{V}}\bm{H}_{V}\bm{\Lambda}-\bm{V}\bm{\Lambda})_{l,\cdot}\Vert_{2}$.
Next, by calculations, we have that 
\[
\widehat{\bm{V}}\bm{H}_{V}\bm{\Lambda}-\bm{V}\bm{\Lambda}=\widehat{\bm{V}}\widehat{\bm{\Sigma}}\bm{H}_{U}-(T^{-1/2}\bm{X})^{\top}\bm{U}+T^{-1/2}\bm{Z}^{\top}\bm{\Sigma}_{\varepsilon}^{1/2}\bm{U}+\widehat{\bm{V}}\bm{H}_{V}\bm{\Lambda}-\widehat{\bm{V}}\widehat{\bm{\Sigma}}\bm{H}_{U}.
\]
So we obtain that 
\begin{align*}
 & \big\Vert(\widehat{\bm{V}}\bm{H}_{V}-\bm{V})_{l,\cdot}\big\Vert_{2}\\
 & \lesssim\underset{=:\bm{a}_{1}}{\underbrace{\frac{1}{\sigma_{r}}\big\Vert(\widehat{\bm{V}}\widehat{\bm{\Sigma}}\bm{H}_{U}-(T^{-1/2}\bm{X})^{\top}\bm{U})_{l,\cdot}\big\Vert_{2}}}+\underset{=:\bm{a}_{2}}{\underbrace{\frac{1}{\sigma_{r}}\big\Vert T^{-1/2}(\bm{Z}^{\top})_{l,\cdot}\bm{\Sigma}_{\varepsilon}^{1/2}\bm{U}\big\Vert_{2}}}+\underset{=:\bm{a}_{3}}{\underbrace{\frac{1}{\sigma_{r}}\big\Vert\widehat{\bm{V}}_{l,\cdot}(\bm{H}_{V}\bm{\Lambda}-\widehat{\bm{\Sigma}}\bm{H}_{U})\big\Vert_{2}}}.
\end{align*}
The term $\bm{a}_{1}$ can be bounded using Lemma \ref{Lemma LOO UV Sig H}.
For $\bm{a}_{2}$, we obtain by Lemma \ref{Lemma Z norms A B} that
\[
\bm{a}_{2}\leq\frac{1}{\sigma_{r}\sqrt{T}}\big\Vert\bm{\Sigma}_{\varepsilon}^{1/2}\bm{U}\big\Vert_{\mathrm{F}}\log n\leq\frac{1}{\sigma_{r}\sqrt{T}}\big\Vert\bm{\Sigma}_{\varepsilon}^{1/2}\big\Vert_{2}\big\Vert\bm{U}\big\Vert_{\mathrm{F}}\log n=\rho\sqrt{\frac{r}{n}}\log n.
\]

For $\bm{a}_{3}$, we have that $\big\Vert\widehat{\bm{V}}_{l,\cdot}(\bm{H}_{V}\bm{\Lambda}-\widehat{\bm{\Sigma}}\bm{H}_{U})\big\Vert_{2}\leq\big\Vert\widehat{\bm{V}}_{l,\cdot}\big\Vert_{2}\big\Vert\bm{H}_{V}\bm{\Lambda}-\widehat{\bm{\Sigma}}\bm{H}_{U}\big\Vert_{2}$
and we will handle the two terms in the right hand side. For $\big\Vert\widehat{\bm{V}}_{l,\cdot}\big\Vert_{2}$,
since $\sigma_{i}(\bm{H}_{V})\asymp1$ in Lemma \ref{Lemma R H for U V},
we have that 
\[
\big\Vert\widehat{\bm{V}}_{l,\cdot}\big\Vert_{2}\lesssim\big\Vert\widehat{\bm{V}}_{l,\cdot}\bm{H}_{V}\big\Vert_{2}=\big\Vert(\widehat{\bm{V}}\bm{H}_{V})_{l,\cdot}\big\Vert_{2}\leq\big\Vert\bm{V}_{l,\cdot}\big\Vert_{2}+\big\Vert(\widehat{\bm{V}}\bm{H}_{V}-\bm{V})_{l,\cdot}\big\Vert_{2}.
\]
For $\big\Vert\bm{H}_{V}\bm{\Lambda}-\widehat{\bm{\Sigma}}\bm{H}_{U}\big\Vert_{2}$,
using the full SVD of $T^{-1/2}\bm{X}$ in the proof of Lemma \ref{Lemma R H for U V},
we have that 
\begin{align*}
\bm{H}_{V}\bm{\Lambda}-\widehat{\bm{\Sigma}}\bm{H}_{U} & =\widehat{\bm{V}}^{\top}\bm{V}\bm{\Lambda}-\widehat{\bm{\Sigma}}\widehat{\bm{U}}^{\top}\bm{U}=\widehat{\bm{V}}^{\top}(\bm{V}\bm{\Lambda}\bm{U}^{\top}-\widehat{\bm{V}}\widehat{\bm{\Sigma}}\widehat{\bm{U}}^{\top})\bm{U}\\
 & =\widehat{\bm{V}}^{\top}(T^{-1/2}\bm{E}^{\top}-\widehat{\bm{V}}_{\perp}(\widehat{\bm{\Sigma}}_{\perp})^{\top}(\widehat{\bm{U}}_{\perp})^{\top})\bm{U}=T^{-1/2}\widehat{\bm{V}}^{\top}\bm{E}^{\top}\bm{U}.
\end{align*}
Thus, we obtain that 
\begin{align*}
\big\Vert\bm{H}_{V}\bm{\Lambda}-\widehat{\bm{\Sigma}}\bm{H}_{U}\big\Vert_{2} & \overset{\text{(i)}}{\lesssim}T^{-1/2}\big\Vert\bm{U}^{\top}\bm{E}\widehat{\bm{V}}\bm{H}_{V}\big\Vert_{2}\leq T^{-1/2}\big\Vert\bm{U}^{\top}\bm{E}\bm{V}\big\Vert_{2}+T^{-1/2}\big\Vert\bm{U}^{\top}\bm{E}(\widehat{\bm{V}}\bm{H}_{V}-\bm{V})\big\Vert_{2}\\
 & \lesssim T^{-1/2}\big\Vert\bm{U}^{\top}\bm{\Sigma}_{\varepsilon}^{1/2}\bm{Z}\bm{V}\big\Vert_{2}+T^{-1/2}\big\Vert\bm{U}^{\top}\bm{\Sigma}_{\varepsilon}^{1/2}\bm{Z}\big\Vert_{2}\big\Vert\widehat{\bm{V}}\bm{H}_{V}-\bm{V}\big\Vert_{2}\\
 & \overset{\text{(ii)}}{\lesssim}\sigma_{r}\rho\sqrt{\frac{1}{n}(r+\log n)}+\frac{1}{\sqrt{T}}\sqrt{\left\Vert \bm{\Sigma}_{\varepsilon}\right\Vert _{2}}\sqrt{r+T+\log n}\cdot\rho\\
 & \lesssim\sigma_{r}\rho\sqrt{\frac{1}{n}(r+\log n)}+\sigma_{r}\rho^{2},
\end{align*}
where (i) is because $\sigma_{i}(\bm{H}_{V})\asymp1$ in Lemma \ref{Lemma R H for U V},
(ii) uses (G.3) in Lemma 19 of \citet{yan2024entrywise} and the upper
bound for $\big\Vert\widehat{\bm{V}}\bm{H}_{V}-\bm{V}\big\Vert_{2}$
is similar to that in (\ref{U U_hat_cmpl bound}). Combing the bounds
for $\Vert\widehat{\bm{V}}_{l,\cdot}\Vert_{2}$ and $\Vert\bm{H}_{V}\bm{\Lambda}-\widehat{\bm{\Sigma}}\bm{H}_{U}\Vert_{2}$,
we obtain that 
\begin{align*}
\bm{a}_{3} & \lesssim\frac{1}{\sigma_{r}}\left(\big\Vert\bm{V}_{l,\cdot}\big\Vert_{2}+\big\Vert(\widehat{\bm{V}}\bm{H}_{V}-\bm{V})_{l,\cdot}\big\Vert_{2}\right)\left(\sigma_{r}\rho\sqrt{\frac{1}{n}(r+\log n)}+\sigma_{r}\rho^{2}\right)\\
 & \lesssim\left(\big\Vert\bm{V}_{l,\cdot}\big\Vert_{2}+\big\Vert(\widehat{\bm{V}}\bm{H}_{V}-\bm{V})_{l,\cdot}\big\Vert_{2}\right)\left(\sqrt{\frac{1}{n}(r+\log n)}+\rho\right)\rho.
\end{align*}

Finally, we combine the bounds for $\bm{a}_{1}$, $\bm{a}_{2}$, and
$\bm{a}_{3}$ to obtain that, for $l=1,2,\ldots,T$, 
\begin{align*}
 & \big\Vert(\widehat{\bm{V}}\bm{H}_{V}-\bm{V})_{l,\cdot}\big\Vert_{2}\\
 & \lesssim\rho^{2}\left(\sqrt{\frac{r}{n}}\log^{3/2}n+\log n\big\Vert\bm{V}_{l,\cdot}\big\Vert_{2}+\log n\big\Vert(\widehat{\bm{V}}\bm{H}_{V}-\bm{V})_{l,\cdot}\big\Vert_{2}\right)\\
 & +\rho\sqrt{\frac{r}{n}}\log n+\left(\big\Vert\bm{V}_{l,\cdot}\big\Vert_{2}+\big\Vert(\widehat{\bm{V}}\bm{H}_{V}-\bm{V})_{l,\cdot}\big\Vert_{2}\right)\left(\sqrt{\frac{1}{n}(r+\log n)}+\rho\right)\rho\\
 & \lesssim\rho\sqrt{\frac{r}{n}}\log n+(\rho^{2}+\rho\sqrt{\frac{r}{n}})\log n\big\Vert\bm{V}_{l,\cdot}\big\Vert_{2}\\
 & +(\rho^{2}+\rho\sqrt{\frac{r}{n}})\log n\big\Vert(\widehat{\bm{V}}\bm{H}_{V}-\bm{V})_{l,\cdot}\big\Vert_{2},
\end{align*}
where we use $\rho\sqrt{\log n}\ll1$ in (\ref{SNR 2-norm logn}).
Then using the self-bounding technique , since $(\rho^{2}+\rho\sqrt{\frac{r}{n}})\log n\ll1$
owing to (\ref{SNR 2-norm logn}) and (\ref{SNR for UH-U VH-V self}),
we conclude that 
\[
\big\Vert(\widehat{\bm{V}}\bm{H}_{V}-\bm{V})_{l,\cdot}\big\Vert_{2}\lesssim\rho\sqrt{\frac{r}{n}}\log n+(\rho^{2}+\rho\sqrt{\frac{r}{n}})\log n\big\Vert\bm{V}_{l,\cdot}\big\Vert_{2}.
\]

\textit{Step 2} --
Since $\lambda_{i}\asymp\sigma_{i}$ according to Lemma \ref{Lemma SVD BF good event},
we obtain that $\big\Vert(\widehat{\bm{U}}\bm{H}_{U}-\bm{U})_{k,\cdot}\big\Vert_{2}\lesssim\frac{1}{\sigma_{r}}\big\Vert(\widehat{\bm{U}}\bm{H}_{U}\bm{\Lambda}-\bm{U}\bm{\Lambda})_{k,\cdot}\big\Vert_{2}$.
Next, by calculations, we have that 
\[
\widehat{\bm{U}}\bm{H}_{U}\bm{\Lambda}-\bm{U}\bm{\Lambda}=\widehat{\bm{U}}\widehat{\bm{\Sigma}}\bm{H}_{V}-(T^{-1/2}\bm{X})\bm{V}+T^{-1/2}\bm{\Sigma}_{\varepsilon}^{1/2}\bm{Z}\bm{V}+\widehat{\bm{U}}\bm{H}_{U}\bm{\Lambda}-\widehat{\bm{U}}\widehat{\bm{\Sigma}}\bm{H}_{V}.
\]
So we obtain that 
\begin{align*}
 & \big\Vert(\widehat{\bm{U}}\bm{H}_{U}-\bm{U})_{k,\cdot}\big\Vert_{2}\\
 & \lesssim\underset{=:\bm{b}_{1}}{\underbrace{\frac{1}{\sigma_{r}}\big\Vert(\widehat{\bm{U}}\widehat{\bm{\Sigma}}\bm{H}_{V}-(T^{-1/2}\bm{X})\bm{V})_{k,\cdot}\big\Vert_{2}}}+\underset{=:\bm{b}_{2}}{\underbrace{\frac{1}{\sigma_{r}}\big\Vert T^{-1/2}(\bm{\Sigma}_{\varepsilon}^{1/2})_{k,\cdot}\bm{Z}\bm{V}\big\Vert_{2}}}+\underset{=:\bm{b}_{3}}{\underbrace{\frac{1}{\sigma_{r}}\big\Vert\widehat{\bm{U}}_{k,\cdot}(\bm{H}_{U}\bm{\Lambda}-\widehat{\bm{\Sigma}}\bm{H}_{V})\big\Vert_{2}}}.
\end{align*}
The term $\bm{b}_{1}$ can be bounded using Lemma \ref{Lemma LOO UV Sig H}.
For $\bm{b}_{2}$, we obtain by Lemma \ref{Lemma Z norms A B} that
\begin{align*}
\bm{b}_{2} & \leq\frac{1}{\sigma_{r}\sqrt{T}}\big\Vert(\bm{\Sigma}_{\varepsilon}^{1/2})_{k,\cdot}\big\Vert_{1}\big\Vert\bm{Z}\bm{V}\big\Vert_{2,\infty}\lesssim\frac{1}{\sigma_{r}\sqrt{T}}\big\Vert(\bm{\Sigma}_{\varepsilon}^{1/2})_{k,\cdot}\big\Vert_{1}\big\Vert\bm{V}\big\Vert_{\mathrm{F}}\log n\\
 & =\frac{1}{\sigma_{r}\sqrt{T}}\big\Vert(\bm{\Sigma}_{\varepsilon}^{1/2})_{k,\cdot}\big\Vert_{1}\sqrt{r}\log n=\omega_{k}\sqrt{\frac{r}{n}}\log n.
\end{align*}

For $\bm{b}_{3}$, we have that $\big\Vert\widehat{\bm{U}}_{k,\cdot}(\bm{H}_{U}\bm{\Lambda}-\widehat{\bm{\Sigma}}\bm{H}_{V})\big\Vert_{2}\leq\big\Vert\widehat{\bm{U}}_{k,\cdot}\big\Vert_{2}\big\Vert\bm{H}_{U}\bm{\Lambda}-\widehat{\bm{\Sigma}}\bm{H}_{V}\big\Vert_{2}$
and we will handle the two terms in the right hand side. For $\big\Vert\widehat{\bm{U}}_{k,\cdot}\big\Vert_{2}$,
since $\sigma_{i}(\bm{H}_{U})\asymp1$ in Lemma \ref{Lemma R H for U V},
we have that 
\[
\big\Vert\widehat{\bm{U}}_{k,\cdot}\big\Vert_{2}\lesssim\big\Vert\widehat{\bm{U}}_{k,\cdot}\bm{H}_{U}\big\Vert_{2}=\big\Vert(\widehat{\bm{U}}\bm{H}_{U})_{k,\cdot}\big\Vert_{2}\leq\big\Vert\bm{U}_{k,\cdot}\big\Vert_{2}+\big\Vert(\widehat{\bm{U}}\bm{H}_{U}-\bm{U})_{k,\cdot}\big\Vert_{2}.
\]
For $\big\Vert\bm{H}_{U}\bm{\Lambda}-\widehat{\bm{\Sigma}}\bm{H}_{V}\big\Vert_{2}$,
similar to the upper bound for $\big\Vert\bm{H}_{V}\bm{\Lambda}-\widehat{\bm{\Sigma}}\bm{H}_{U}\big\Vert_{2}$
in Step 1, we have that 
\[
\big\Vert\bm{H}_{U}\bm{\Lambda}-\widehat{\bm{\Sigma}}\bm{H}_{V}\big\Vert_{2}\lesssim\sigma_{r}\rho\sqrt{\frac{1}{n}(r+\log n)}+\sigma_{r}\rho^{2}.
\]
Combining the bounds for $\big\Vert\widehat{\bm{U}}_{k,\cdot}\big\Vert_{2}$
and $\big\Vert\bm{H}_{U}\bm{\Lambda}-\widehat{\bm{\Sigma}}\bm{H}_{V}\big\Vert_{2}$,
we obtain that 
\[
\bm{b}_{3}\lesssim\left(\big\Vert\bm{U}_{k,\cdot}\big\Vert_{2}+\big\Vert(\widehat{\bm{U}}\bm{H}_{U}-\bm{U})_{k,\cdot}\big\Vert_{2}\right)\left(\rho\sqrt{\frac{1}{n}(r+\log n)}+\rho^{2}\right).
\]

Finally, we combine the bounds for $\bm{b}_{1}$, $\bm{b}_{2}$, and
$\bm{b}_{3}$ to obtain that, for $k=1,2,\ldots,N$, 
\begin{align*}
 & \big\Vert(\widehat{\bm{U}}\bm{H}_{U}-\bm{U})_{k,\cdot}\big\Vert_{2}\\
 & \lesssim\left(\omega_{k}\rho\sqrt{\frac{r}{n}}\log^{3/2}n+\rho^{2}\big\Vert\bm{U}_{k,\cdot}\big\Vert_{2}+\omega_{k}\omega\log n\left(\big\Vert\bm{U}\big\Vert_{2,\infty}+\big\Vert\widehat{\bm{U}}\bm{H}_{U}-\bm{U}\big\Vert_{2,\infty}\right)\right)\\
 & +\omega_{k}\sqrt{\frac{r}{n}}\log n+\left(\big\Vert\bm{U}_{k,\cdot}\big\Vert_{2}+\left\Vert (\widehat{\bm{U}}\bm{H}_{U}-\bm{U})_{k,\cdot}\right\Vert _{2}\right)\left(\rho\sqrt{\frac{1}{n}(r+\log n)}+\rho^{2}\right)\\
 & \lesssim\omega_{k}\sqrt{\frac{r}{n}}\log n+\left(\rho\sqrt{\frac{1}{n}(r+\log n)}+\rho^{2}\right)\left(\big\Vert\bm{U}_{k,\cdot}\big\Vert_{2}+\big\Vert(\widehat{\bm{U}}\bm{H}_{U}-\bm{U})_{k,\cdot}\big\Vert_{2}\right)\\
 & +\omega_{k}\omega\log n\left(\big\Vert\bm{U}\big\Vert_{2,\infty}+\big\Vert\widehat{\bm{U}}\bm{H}_{U}-\bm{U}\big\Vert_{2,\infty}\right),
\end{align*}
where we use the fact that $\rho\sqrt{\log n}\leq\omega\sqrt{\log n}\ll1$.
Taking supremum w.r.t. $k$, we obtain that 
\[
\big\Vert\widehat{\bm{U}}\bm{H}_{U}-\bm{U}\big\Vert_{2,\infty}\lesssim\omega\sqrt{\frac{r}{n}}\log n+\left(\rho\sqrt{\frac{1}{n}(r+\log n)}+\omega^{2}\log n\right)\left(\big\Vert\bm{U}\big\Vert_{2,\infty}+\big\Vert\widehat{\bm{U}}\bm{H}_{U}-\bm{U}\big\Vert_{2,\infty}\right).
\]
Then using the self-bounding technique for $\Vert\widehat{\bm{U}}\bm{H}_{U}-\bm{U}\Vert_{2,\infty}$,
since $\omega\sqrt{\log n}\ll1$ and $\rho\sqrt{\frac{1}{n}(r+\log n)}\ll1$
and $\rho\leq\omega$, we conclude that 
\[
\big\Vert\widehat{\bm{U}}\bm{H}_{U}-\bm{U}\big\Vert_{2,\infty}\lesssim\omega\sqrt{\frac{r}{n}}\log n+\left(\rho\sqrt{\frac{r}{n}}\log n+\omega^{2}\log n\right)\big\Vert\bm{U}\big\Vert_{2,\infty}.
\]
Plugging the above bound of $\Vert\widehat{\bm{U}}\bm{H}_{U}-\bm{U}\Vert_{2,\infty}$
into the previous inequality for $\Vert(\widehat{\bm{U}}\bm{H}_{U}-\bm{U})_{k,\cdot}\Vert_{2}$,
we obtain that 
\begin{align*}
\big\Vert(\widehat{\bm{U}}\bm{H}_{U}-\bm{U})_{k,\cdot}\big\Vert_{2} & \lesssim\omega_{k}\sqrt{\frac{r}{n}}\log n+\left(\rho\sqrt{\frac{1}{n}(r+\log n)}+\rho^{2}\right)\left(\big\Vert\bm{U}_{k,\cdot}\big\Vert_{2}+\big\Vert(\widehat{\bm{U}}\bm{H}_{U}-\bm{U})_{k,\cdot}\big\Vert_{2}\right)\\
 & +\omega_{k}\omega\log n\left(\big\Vert\bm{U}\big\Vert_{2,\infty}+\omega\sqrt{\frac{r}{n}}\log n+\left(\rho\sqrt{\frac{r}{n}}+\omega^{2}\right)\log n\big\Vert\bm{U}\big\Vert_{2,\infty}\right).
\end{align*}
Then using the self-bounding technique for $\Vert(\widehat{\bm{U}}\bm{H}_{U}-\bm{U})_{k,\cdot}\Vert_{2}$,
since $\rho\sqrt{\log n}\leq\omega\sqrt{\log n}\ll1$ in (\ref{SNR 1-norm logn}),
we conclude that 
\[
\big\Vert(\widehat{\bm{U}}\bm{H}_{U}-\bm{U})_{k,\cdot}\big\Vert_{2}\lesssim\omega_{k}\sqrt{\frac{r}{n}}\log n+\left(\rho\sqrt{\frac{r}{n}}\log n+\omega^{2}\log n\right)\big\Vert\bm{U}_{k,\cdot}\big\Vert_{2}+\omega_{k}\omega\log n\big\Vert\bm{U}\big\Vert_{2,\infty}.
\]
\end{proof}

\subsection{Proof of Theorem \ref{Thm UV 1st approx row-wise error}}

We are now ready to prove Theorem \ref{Thm UV 1st approx row-wise error}.
We will prove (i) and (ii) in \textit{Step 1} and \textit{Step 2},
respectively.

\textit{Step 1 -- }
By calculations, we have that 
\[
\bm{\Psi}_{U}=(\widehat{\bm{U}}\widehat{\bm{\Sigma}}\bm{H}_{V}-(T^{-1/2}\bm{X})\bm{V})\bm{\Lambda}^{-1}+\widehat{\bm{U}}\widehat{\bm{\Sigma}}(\bm{R}_{V}-\bm{H}_{V})\bm{\Lambda}^{-1}+(\widehat{\bm{U}}\bm{R}_{U})(\bm{\Lambda}-(\bm{R}_{U})^{\top}\widehat{\bm{\Sigma}}\bm{R}_{V})\bm{\Lambda}^{-1}.
\]
Since $\Vert\widehat{\bm{U}}_{k,\cdot}\bm{R}_{U}\Vert_{2}=\Vert\widehat{\bm{U}}_{k,\cdot}\Vert_{2}$,
we obtain that 
\begin{align*}
 & \left\Vert (\bm{\Psi}_{U})_{k,\cdot}\right\Vert _{2}\\
 & \lesssim\big\Vert(\widehat{\bm{U}}\widehat{\bm{\Sigma}}\bm{H}_{V}-(T^{-1/2}\bm{X})\bm{V})_{k,\cdot}\big\Vert_{2}\big\Vert\bm{\Lambda}^{-1}\big\Vert_{2}+\big\Vert\widehat{\bm{U}}_{k,\cdot}\big\Vert_{2}\left(\big\Vert\widehat{\bm{\Sigma}}(\bm{R}_{V}-\bm{H}_{V})\bm{\Lambda}^{-1}\big\Vert_{2}+\big\Vert(\bm{\Lambda}-(\bm{R}_{U})^{\top}\widehat{\bm{\Sigma}}\bm{R}_{V})\bm{\Lambda}^{-1}\big\Vert_{2}\right)\\
 & \overset{\text{(i)}}{\lesssim}\big\Vert(\widehat{\bm{U}}\widehat{\bm{\Sigma}}\bm{H}_{V}-(T^{-1/2}\bm{X})\bm{V})_{k,\cdot}\big\Vert_{2}\frac{1}{\sigma_{r}}+\big\Vert\widehat{\bm{U}}_{k,\cdot}\big\Vert_{2}\left(\sigma_{r}\rho^{2}\frac{1}{\sigma_{r}}+\frac{1}{\sigma_{r}}\left(\sigma_{r}\rho^{2}+\sigma_{r}\rho\sqrt{\frac{1}{n}(r+\log n)}\right)\right)\\
 & \lesssim\big\Vert(\widehat{\bm{U}}\widehat{\bm{\Sigma}}\bm{H}_{V}-(T^{-1/2}\bm{X})\bm{V})_{k,\cdot}\big\Vert_{2}\frac{1}{\sigma_{r}}+\big\Vert\widehat{\bm{U}}_{k,\cdot}\big\Vert_{2}\left(\rho^{2}+\rho\sqrt{\frac{r}{n}}\sqrt{\log n}\right),
\end{align*}
where (i) is owing to the inequalities in Lemmas \ref{Lemma R H for U V}
and \ref{Lemma Sigma hat tilde H R}. Using $\sigma_{i}(\bm{H}_{U})\asymp1$
in Lemma \ref{Lemma R H for U V}, we have that $\big\Vert\widehat{\bm{U}}_{k,\cdot}\big\Vert_{2}\lesssim\big\Vert(\widehat{\bm{U}}\bm{H}_{U})_{k,\cdot}\big\Vert_{2}\leq\big\Vert\bm{U}_{k,\cdot}\big\Vert_{2}+\big\Vert(\widehat{\bm{U}}\bm{H}_{U}-\bm{U})_{k,\cdot}\big\Vert_{2}$.
Then, we obtain by Lemma \ref{Lemma LOO UV Sig H} that, 
\begin{align*}
\left\Vert (\bm{\Psi}_{U})_{k,\cdot}\right\Vert _{2} & \lesssim\left(\omega_{k}\rho\sqrt{\frac{r}{n}}\log^{3/2}n+\rho^{2}\big\Vert\bm{U}_{k,\cdot}\big\Vert_{2}+\omega_{k}\omega\log n\left(\big\Vert\bm{U}\big\Vert_{2,\infty}+\big\Vert\widehat{\bm{U}}\bm{H}_{U}-\bm{U}\big\Vert_{2,\infty}\right)\right)\\
 & +\left(\big\Vert\bm{U}_{k,\cdot}\big\Vert_{2}+\big\Vert(\widehat{\bm{U}}\bm{H}_{U}-\bm{U})_{k,\cdot}\big\Vert_{2}\right)\left(\rho^{2}+\rho\sqrt{\frac{r}{n}}\sqrt{\log n}\right).
\end{align*}

Finally, we obtain by Lemma \ref{Lemma UH-U VH-V} that, 
\begin{align*}
 & \left\Vert (\bm{\Psi}_{U})_{k,\cdot}\right\Vert _{2}\\
 & \lesssim\left(\omega_{k}\rho\sqrt{\frac{r}{n}}\log^{3/2}n+\rho^{2}\big\Vert\bm{U}_{k,\cdot}\big\Vert_{2}+\omega_{k}\omega\log n\left(\begin{array}{c}
\big\Vert\bm{U}\big\Vert_{2,\infty}+\omega\sqrt{\frac{r}{n}}\log n\\
+(\rho\sqrt{\frac{r}{n}}+\omega^{2})\log n\big\Vert\bm{U}\big\Vert_{2,\infty}
\end{array}\right)\right)\\
 & +\left(\begin{array}{c}
\big\Vert\bm{U}_{k,\cdot}\big\Vert_{2}+\omega_{k}\sqrt{\frac{r}{n}}\log n\\
+(\rho\sqrt{\frac{r}{n}}+\omega^{2})\log n\big\Vert\bm{U}_{k,\cdot}\big\Vert_{2}+\omega_{k}\omega\log n\big\Vert\bm{U}\big\Vert_{2,\infty}
\end{array}\right)\left(\rho^{2}+\rho\sqrt{\frac{r}{n}}\sqrt{\log n}\right)\\
 & \lesssim\omega_{k}\omega\sqrt{\frac{r}{n}}\log^{3/2}n+\left(\rho^{2}+\rho\sqrt{\frac{r}{n}}\log n\right)\big\Vert\bm{U}_{k,\cdot}\big\Vert_{2}+\omega_{k}\omega\log n\big\Vert\bm{U}\big\Vert_{2,\infty},
\end{align*}
where, in the last inequality, we use the assumptions that $\omega\sqrt{\log n}\ll1$,
$\rho\sqrt{\log n}\ll1$, and $r\log n\ll n$ owing to (\ref{SNR 2-norm logn})
and (\ref{SNR 1-norm logn}), and $\omega_{k}\leq\omega$ as well
as $\rho\leq\omega$.

\textit{Step 2} --
By calculations, we have that 
\[
\bm{\Psi}_{V}=(\widehat{\bm{V}}\widehat{\bm{\Sigma}}\bm{H}_{U}-(T^{-1/2}\bm{X})^{\top}\bm{U})\bm{\Lambda}^{-1}+\widehat{\bm{V}}\widehat{\bm{\Sigma}}(\bm{R}_{U}-\bm{H}_{U})\bm{\Lambda}^{-1}+(\widehat{\bm{V}}\bm{R}_{V})(\bm{\Lambda}-(\bm{R}_{V})^{\top}\widehat{\bm{\Sigma}}\bm{R}_{U})\bm{\Lambda}^{-1}.
\]
Since $\Vert\widehat{\bm{V}}_{l,\cdot}\bm{R}_{V}\Vert_{2}=\Vert\widehat{\bm{V}}_{l,\cdot}\Vert_{2}$,
we obtain that 
\begin{align*}
 & \left\Vert (\bm{\Psi}_{V})_{l,\cdot}\right\Vert _{2}\\
 & \lesssim\big\Vert(\widehat{\bm{V}}\widehat{\bm{\Sigma}}\bm{H}_{U}-(T^{-1/2}\bm{X})^{\top}\bm{U})_{l,\cdot}\big\Vert_{2}\big\Vert\bm{\Lambda}^{-1}\big\Vert_{2}+\big\Vert\widehat{\bm{V}}_{l,\cdot}\big\Vert_{2}\left(\big\Vert\widehat{\bm{\Sigma}}(\bm{R}_{U}-\bm{H}_{U})\bm{\Lambda}^{-1}\big\Vert_{2}+\big\Vert(\bm{\Lambda}-(\bm{R}_{V})^{\top}\widehat{\bm{\Sigma}}\bm{R}_{U})\bm{\Lambda}^{-1}\big\Vert_{2}\right)\\
 & \overset{\text{(i)}}{\lesssim}\big\Vert(\widehat{\bm{V}}\widehat{\bm{\Sigma}}\bm{H}_{U}-(T^{-1/2}\bm{X})^{\top}\bm{U})_{l,\cdot}\big\Vert_{2}\frac{1}{\sigma_{r}}+\big\Vert\widehat{\bm{V}}_{l,\cdot}\big\Vert_{2}\left(\sigma_{r}\rho^{2}\frac{1}{\sigma_{r}}+\frac{1}{\sigma_{r}}\left(\sigma_{r}\rho^{2}+\sigma_{r}\rho\sqrt{\frac{r}{n}}\log n\right)\right)\\
 & \lesssim\big\Vert(\widehat{\bm{V}}\widehat{\bm{\Sigma}}\bm{H}_{U}-(T^{-1/2}\bm{X})^{\top}\bm{U})_{l,\cdot}\big\Vert_{2}\frac{1}{\sigma_{r}}+\big\Vert\widehat{\bm{V}}_{l,\cdot}\big\Vert_{2}\left(\rho^{2}+\rho\sqrt{\frac{r}{n}}\log n\right),
\end{align*}
where (i) is owing to the inequalities in Lemmas \ref{Lemma R H for U V}
and \ref{Lemma Sigma hat tilde H R}, and the fact that $\big\Vert\bm{\Lambda}-(\bm{R}_{V})^{\top}\widehat{\bm{\Sigma}}\bm{R}_{U}\big\Vert_{2}=\big\Vert(\bm{\Lambda}-(\bm{R}_{V})^{\top}\widehat{\bm{\Sigma}}\bm{R}_{U})^{\top}\big\Vert_{2}=\big\Vert\bm{\Lambda}-(\bm{R}_{U})^{\top}\widehat{\bm{\Sigma}}\bm{R}_{V}\big\Vert_{2}$.
Using $\sigma_{i}(\bm{H}_{V})\asymp1$ in Lemma \ref{Lemma R H for U V},
we have that $\big\Vert\widehat{\bm{V}}_{l,\cdot}\big\Vert_{2}\lesssim\big\Vert(\widehat{\bm{V}}\bm{H}_{V})_{l,\cdot}\big\Vert_{2}\leq\big\Vert\bm{V}_{l,\cdot}\big\Vert_{2}+\big\Vert(\widehat{\bm{V}}\bm{H}_{V}-\bm{V})_{l,\cdot}\big\Vert_{2}$.
Then, we obtain by Lemma \ref{Lemma LOO UV Sig H} that, 
\begin{align*}
\left\Vert (\bm{\Psi}_{V})_{l,\cdot}\right\Vert _{2} & \lesssim\sigma_{r}\rho^{2}\left(\sqrt{\frac{r}{n}}\log^{3/2}n+\log n\big\Vert\bm{V}_{l,\cdot}\big\Vert_{2}+\log n\big\Vert(\widehat{\bm{V}}\bm{H}_{V}-\bm{V})_{l,\cdot}\big\Vert_{2}\right)\\
 & +\left(\big\Vert\bm{V}_{l,\cdot}\big\Vert_{2}+\big\Vert(\widehat{\bm{V}}\bm{H}_{V}-\bm{V})_{l,\cdot}\big\Vert_{2}\right)\left(\rho^{2}+\rho\sqrt{\frac{r}{n}}\log n\right).
\end{align*}

Finally, we obtain by Lemma \ref{Lemma UH-U VH-V} that, 
\begin{align*}
 & \left\Vert (\bm{\Psi}_{V})_{l,\cdot}\right\Vert _{2}\\
 & \lesssim\rho^{2}\left(\sqrt{\frac{r}{n}}\log^{3/2}n+\log n\big\Vert\bm{V}_{l,\cdot}\big\Vert_{2}+\log n\left(\begin{array}{c}
\rho\sqrt{\frac{r}{n}}\log n\\
+(\rho^{2}+\rho\sqrt{\frac{r}{n}})\log n\big\Vert\bm{V}_{l,\cdot}\big\Vert_{2}
\end{array}\right)\right)\\
 & +\left(\begin{array}{c}
\big\Vert\bm{V}_{l,\cdot}\big\Vert_{2}+\rho\sqrt{\frac{r}{n}}\log n\\
+(\rho^{2}+\rho\sqrt{\frac{r}{n}})\log n\big\Vert\bm{V}_{l,\cdot}\big\Vert_{2}
\end{array}\right)\left(\rho^{2}+\rho\sqrt{\frac{r}{n}}\log n\right)\\
 & \lesssim\rho^{2}\sqrt{\frac{r}{n}}\log^{3/2}n+\left(\rho^{2}\log n+\rho\sqrt{\frac{r}{n}}\log n\right)\big\Vert\bm{V}_{l,\cdot}\big\Vert_{2},
\end{align*}
where, in the last inequality, we use the assumptions that, $\rho\sqrt{\log n}\ll1$
and $r\log n\ll n$ owing to (\ref{SNR 2-norm logn}) and (\ref{SNR for UH-U VH-V self}).

\section{Proof of Corollary \ref{corollary:error bound B F}}

For the first-order terms $\bm{G}_{U}$ and $\bm{G}_{V}$ in Theorem
\ref{Thm UV 1st approx row-wise error}, we obtain by Lemma \ref{Lemma Z norms A B}
that, with probability at least $1-O(n^{-2})$, 
\begin{align*}
\left\Vert (\bm{G}_{V})_{l,\cdot}\right\Vert _{2} & =\big\Vert(T^{-1/2}\bm{E}^{\top}\bm{U}\bm{\Lambda}^{-1})_{l,\cdot}\big\Vert_{2}=\big\Vert(T^{-1/2}\bm{Z}^{\top}\bm{\Sigma}_{\varepsilon}^{1/2}\bm{U}\bm{\Lambda}^{-1})_{l,\cdot}\big\Vert_{2}\\
 & =\big\Vert T^{-1/2}(\bm{Z}^{\top})_{l,\cdot}\bm{\Sigma}_{\varepsilon}^{1/2}\bm{U}\bm{\Lambda}^{-1}\big\Vert_{2}\lesssim\frac{1}{\sqrt{T}}\big\Vert\bm{\Sigma}_{\varepsilon}^{1/2}\bm{U}\bm{\Lambda}^{-1}\big\Vert_{\mathrm{F}}\log n\\
 & \lesssim\frac{1}{\sqrt{T}}\left\Vert \bm{\Sigma}_{\varepsilon}^{1/2}\right\Vert _{2}\big\Vert\bm{U}\big\Vert_{\mathrm{F}}\frac{1}{\sigma_{r}}\log n=\rho\sqrt{\frac{r}{n}}\log n,
\end{align*}
and similarly we obtain that 
\[
\left\Vert (\bm{G}_{U})_{k,\cdot}\right\Vert _{2}\lesssim\frac{1}{\vartheta_{k}\sqrt{T}}\sqrt{r}\log n.
\]

\textit{Step 1 -- Derive the error bounds for factors.}

Using the upper bounds for $\big\Vert\widehat{\bm{U}}\bm{R}_{U}-\bm{U}\big\Vert_{\mathrm{F}}$
and $\big\Vert\widehat{\bm{V}}\bm{R}_{V}-\bm{V}\big\Vert_{\mathrm{F}}$
in Lemma \ref{Lemma R H for U V} and the upper bound for $\big\Vert(\bm{R}_{U})^{\top}\widehat{\bm{\Sigma}}\bm{R}_{V}-\bm{\Lambda}\big\Vert_{2}$
in Lemma \ref{Lemma Sigma hat tilde H R}, we have that 
\[
\frac{1}{T}\big\Vert\widehat{\bm{F}}-\bm{F}\bm{J}(\bm{R}_{V})^{\top}\big\Vert{}_{\mathrm{F}}^{2}=\big\Vert\widehat{\bm{V}}\bm{R}_{V}(\bm{R}_{V})^{\top}-\bm{V}(\bm{R}_{V})^{\top}\big\Vert{}_{\mathrm{F}}^{2}\lesssim\frac{1}{\theta^{2}}\frac{n}{T}r.
\]
For $\widehat{\bm{V}}\bm{R}_{V}-\bm{V}=\bm{G}_{V}+\bm{\Psi}_{V}$,
using the fact that $\big\Vert\bm{V}\big\Vert_{2,\infty}\lesssim\sqrt{\frac{\log n}{T}}$
in Lemma \ref{Lemma SVD BF good event}, we obtain that, 
\begin{align*}
\left\Vert \bm{\Psi}_{V}\right\Vert _{2,\infty} & =\sup_{1\leq l\leq T}\left\Vert (\bm{\Psi}_{V})_{l,\cdot}\right\Vert _{2}\lesssim\rho^{2}\sqrt{\frac{r}{n}}\log^{3/2}n+\left(\rho^{2}+\rho\sqrt{\frac{r}{n}}\right)\log n\big\Vert\bm{V}\big\Vert_{2,\infty}\\
 & \lesssim\rho^{2}\sqrt{\frac{r}{n}}\log^{3/2}n+\left(\rho^{2}+\rho\sqrt{\frac{r}{n}}\log n\right)\sqrt{\frac{\log n}{T}}\lesssim\rho\sqrt{\frac{r}{n}}\log n,
\end{align*}
where the last inequality is because $\rho\sqrt{\log n}\ll1$ and
$\log n\ll T$. Since $\left\Vert \bm{G}_{V}\right\Vert _{2,\infty}\lesssim\rho\sqrt{\frac{r}{n}}\log n$,
we obtain that $\big\Vert\widehat{\bm{V}}\bm{R}_{V}-\bm{V}\big\Vert_{2,\infty}\leq\left\Vert \bm{G}_{V}\right\Vert _{2,\infty}+\left\Vert \bm{\Psi}_{V}\right\Vert _{2,\infty}\lesssim\rho\sqrt{\frac{r}{n}}\log n$,
which implies the row-wise error bound for $\big\Vert(\widehat{\bm{F}}-\bm{F}\bm{R}_{F})_{t,\cdot}\big\Vert_{2}$.

\textit{Step 2 -- Derive the error bounds for factor loadings.}

For $\widehat{\bm{U}}\bm{R}_{U}-\bm{U}=\bm{G}_{U}+\bm{\Psi}_{U}$,
the problem boils down to comparing the upper bounds of $\left\Vert (\bm{G}_{U})_{k,\cdot}\right\Vert _{2}$
and $\left\Vert (\bm{\Psi}_{U})_{k,\cdot}\right\Vert _{2}$. For the
first term in the upper bound of $\left\Vert (\bm{\Psi}_{U})_{k,\cdot}\right\Vert _{2}$,
we obtain by $\omega\sqrt{\log n}\ll1$ that $\omega_{k}\omega\sqrt{\frac{r}{n}}\log^{3/2}n=\omega_{k}\sqrt{\frac{r}{n}}\log n(\omega\sqrt{\log n})\ll\omega_{k}\sqrt{\frac{r}{n}}\log n$.
Next, for the second and the third terms in the upper bound of $\left\Vert (\bm{\Psi}_{U})_{k,\cdot}\right\Vert _{2}$,
we have that 
\[
\frac{(\rho^{2}+\rho\sqrt{\frac{r}{n}}\log n)\big\Vert\bm{U}_{k,\cdot}\big\Vert_{2}}{\omega_{k}\sqrt{\frac{r}{n}}\log n}=(\frac{\rho}{\sqrt{\frac{r}{n}}\log n}+1)\frac{\rho}{\omega_{k}}\big\Vert\bm{U}_{k,\cdot}\big\Vert_{2}\leq(\rho\sqrt{n}+1)\frac{\rho}{\omega_{k}}\big\Vert\bar{\bm{U}}_{k,\cdot}\big\Vert_{2}\lesssim1,
\]
and 
\[
\frac{\omega_{k}\omega\log n\big\Vert\bm{U}\big\Vert_{2,\infty}}{\omega_{k}\sqrt{\frac{r}{n}}\log n}=\omega\sqrt{\frac{n}{r}}\big\Vert\bm{U}\big\Vert_{2,\infty}\leq\omega\sqrt{n}\big\Vert\bar{\bm{U}}\big\Vert_{2,\infty}\lesssim1.
\]
So we obtain that 
\[
\big\Vert(\widehat{\bm{U}}\bm{R}_{U}-\bm{U})_{k,\cdot}\big\Vert_{2}\lesssim\omega_{k}\sqrt{\frac{r}{n}}\log n\text{\qquad and\qquad}\big\Vert\widehat{\bm{U}}\bm{R}_{U}-\bm{U}\big\Vert_{2,\infty}\lesssim\omega\sqrt{\frac{r}{n}}\log n.
\]
The row-wise error bound for $\widehat{\bm{B}}-\bm{B}\bm{R}_{B}$
follows from combining the error bounds of $\widehat{\bm{U}}\bm{R}_{U}-\bm{U}$
and $(\bm{R}_{U})^{\top}\widehat{\bm{\Sigma}}\bm{R}_{V}-\bm{\Lambda}$.
Then the averaged estimation error bound follows from averaging the
row-wise error bounds.

\section{Inference for the factors and the factor loadings}

\subsection{Proof of Corollary \ref{Thm F inference}}

\textit{Step 1 -- bounding $\mathbb{P}((\widehat{\bm{V}}\bm{R}_{V}\bm{J}^{-1}-\frac{1}{\sqrt{T}}\bm{F})_{t,\cdot}\in\mathcal{C})$.}

It follows from Lemma \ref{Lemma SVD BF good event} that, there exists
a $\sigma(\bm{F})$-measurable matrix $\bm{J}$ satisfying that $\bm{V}=\frac{1}{\sqrt{T}}\bm{FJ}$
and $\bm{J}$ is a $r\times r$ invertible matrix satisfying that
$\sigma_{i}(\bm{J})\asymp1$ for $i=1,2,\ldots,r$. We define 
\[
\bm{R}_{F}:=\bm{J}\bm{R}_{V}^{\top},
\]
then we obtain by Theorem \ref{Thm UV 1st approx row-wise error}
that 
\[
\widehat{\bm{F}}-\bm{F}\bm{R}_{F}=\sqrt{T}(\widehat{\bm{V}}\bm{R}_{V}\bm{J}^{-1}-\bm{V}\bm{J}^{-1})\bm{R}_{F}=\sqrt{T}(\bm{G}_{V}+\bm{\Psi}_{V})\bm{J}^{-1}\bm{R}_{F}=(\sqrt{T}\bm{G}_{V}\bm{J}^{-1}+\sqrt{T}\bm{\Psi}_{V}\bm{J}^{-1})\bm{R}_{F}.
\]
We also note that $\bm{\Sigma}_{F,t}=\bm{R}_{V}\bm{\Lambda}^{-1}\bm{U}^{\top}\bm{\Sigma}_{\varepsilon}\bm{U}\bm{\Lambda}^{-1}\bm{R}_{V}^{\top}=\bm{R}_{V}\bm{J}^{\top}\bm{\Sigma}^{-1}\bar{\bm{U}}^{\top}\bm{\Sigma}_{\varepsilon}\bar{\bm{U}}\bm{\Sigma}^{-1}\bm{J}\bm{R}_{V}^{\top}=\bm{R}_{F}^{\top}(\bm{\Sigma}^{-1}\bar{\bm{U}}^{\top}\bm{\Sigma}_{\varepsilon}\bar{\bm{U}}\bm{\Sigma}^{-1})\bm{R}_{F}$.
Denote by $(\bar{\bm{U}}^{\top}\bm{\Sigma}_{\varepsilon}\bar{\bm{U}})^{-1/2}$
the matrix such that $(\bar{\bm{U}}^{\top}\bm{\Sigma}_{\varepsilon}\bar{\bm{U}})^{-1/2}(\bar{\bm{U}}^{\top}\bm{\Sigma}_{\varepsilon}\bar{\bm{U}})(\bar{\bm{U}}^{\top}\bm{\Sigma}_{\varepsilon}\bar{\bm{U}})^{-1/2}=\bm{I}_{r}$.
So, we have that 
\begin{align*}
 & \sup_{\mathcal{C}\in\mathscr{C}^{r}}\left\vert \mathbb{P}((\widehat{\bm{F}}-\bm{F}\bm{R}_{F})_{t,\cdot}\in\mathcal{C})-\mathbb{P}(\mathcal{N}(0,\bm{\Sigma}_{F,t})\in\mathcal{C})\right\vert \\
= & \sup_{\mathcal{C}\in\mathscr{C}^{r}}\left\vert \mathbb{P}(\sqrt{T}(\bm{G}_{V}\bm{J}^{-1}+\bm{\Psi}_{V}\bm{J}^{-1})_{t,\cdot}\in\mathcal{C})-\mathbb{P}(\mathcal{N}(0,\bm{\Sigma}^{-1}\bar{\bm{U}}^{\top}\bm{\Sigma}_{\varepsilon}\bar{\bm{U}}\bm{\Sigma}^{-1})\in\mathcal{C})\right\vert \\
= & \sup_{\mathcal{C}\in\mathscr{C}^{r}}\left\vert \mathbb{P}(\sqrt{T}(\bar{\bm{U}}^{\top}\bm{\Sigma}_{\varepsilon}\bar{\bm{U}})^{-1/2}\bm{\Sigma}((\bm{G}_{V}\bm{J}^{-1}+\bm{\Psi}_{V}\bm{J}^{-1})_{t,\cdot})^{\top}\in\mathcal{C})-\mathbb{P}(\mathcal{N}(0,\bm{I}_{r})\in\mathcal{C})\right\vert ,
\end{align*}
where the last equality is because $\mathscr{C}^{r}$ is invariant
under nondegenerate linear transformation and the matrix $(\bar{\bm{U}}^{\top}\bm{\Sigma}_{\varepsilon}\bar{\bm{U}})^{-1/2}\bm{\Sigma}$
is not degenerate.

To study $\sqrt{T}(\bar{\bm{U}}^{\top}\bm{\Sigma}_{\varepsilon}\bar{\bm{U}})^{-1/2}\bm{\Sigma}((\bm{G}_{V}\bm{J}^{-1}+\bm{\Psi}_{V}\bm{J}^{-1})_{t,\cdot})^{\top}$,
we define the vector $\bm{K}_{V,t}$ and the scalar $\varepsilon_{1}$,
respectively, as below: 
\[
\bm{K}_{V,t}:=\sqrt{T}(\bar{\bm{U}}^{\top}\bm{\Sigma}_{\varepsilon}\bar{\bm{U}})^{-1/2}\bm{\Sigma}((\bm{G}_{V}\bm{J}^{-1})_{t,\cdot})^{\top}\text{\qquad and\qquad}\varepsilon_{1}:=\big\Vert\sqrt{T}(\bar{\bm{U}}^{\top}\bm{\Sigma}_{\varepsilon}\bar{\bm{U}})^{-1/2}\bm{\Sigma}((\bm{\Psi}_{V}\bm{J}^{-1})_{t,\cdot})^{\top}\big\Vert_{2}.
\]
Then we have the following inequality 
\[
\mathbb{P}(\bm{K}_{V,t}\in\mathcal{C}^{-\varepsilon_{1}})\leq\mathbb{P}(\sqrt{T}(\bar{\bm{U}}^{\top}\bm{\Sigma}_{\varepsilon}\bar{\bm{U}})^{-1/2}\bm{\Sigma}((\bm{G}_{V}\bm{J}^{-1}+\bm{\Psi}_{V}\bm{J}^{-1})_{t,\cdot})^{\top}\in\mathcal{C})\leq\mathbb{P}(\bm{K}_{V,t}\in\mathcal{C}^{\varepsilon_{1}}).
\]
For any non-empty convex set $\mathcal{C}\in\mathscr{C}^{r}$, we
define $\mathcal{C^{\varepsilon}}:=\{x\in\mathbb{R}^{r}:\delta_{\mathcal{C}}(x)\leq\varepsilon\}$
where $\delta_{\mathcal{C}}(x)$ is the signed distance of the point
$x$ to the set $\mathcal{C}$. We will show later that $\varepsilon_{1}$
is negligible.

\textit{Step 2 -- Proving the approximate Gaussianity of $\bm{K}_{V,t}$.}

In this step, we establish the upper bound for the constant $\tau_{V}$
defined below, 
\[
\tau_{V}:=\sup_{\mathcal{C}_{1}\in\mathscr{C}^{r}}\left\vert \mathbb{P}(\bm{K}_{V,t}\in\mathcal{C}_{1})-\mathbb{P}(\mathcal{N}(0,\bm{I}_{r})\in\mathcal{C}_{1})\right\vert .
\]
We obtain by Theorem \ref{Thm UV 1st approx row-wise error} that
$\bm{G}_{V}:=T^{-1/2}\bm{E}^{\top}\bm{U}\bm{\Lambda}^{-1}$. It follows
from Lemma \ref{Lemma SVD BF good event} that $\bm{J}=\bm{\Sigma}\bm{Q}\bm{\Lambda}^{-1}$
and $\bm{U}=\bar{\bm{U}}\bm{Q}$, where $\bm{Q}\in\mathcal{O}^{r\times r}$
is a rotation matrix, i.e., $\bm{QQ}^{\top}=\bm{Q}^{\top}\bm{Q}=\bm{I}_{r}$.
So, we obtain that $\bm{U}\bm{Q}^{-1}=\bar{\bm{U}}$ and 
\begin{align*}
\sqrt{T}\bm{G}_{V}\bm{J}^{-1} & =\bm{E}^{\top}\bm{U}\bm{\Lambda}^{-1}\bm{\Lambda}\bm{Q}^{-1}\bm{\Sigma}^{-1}=\bm{E}^{\top}\bm{U}\bm{Q}^{-1}\bm{\Sigma}^{-1}\\
 & =\bm{E}^{\top}\bar{\bm{U}}\bm{\Sigma}^{-1}=\bm{Z}^{\top}\bm{\Sigma}_{\varepsilon}^{1/2}\bar{\bm{U}}\bm{\Sigma}^{-1}=\bm{Z}^{\top}\bm{H}\text{\qquad with\qquad}\bm{H}:=\bm{\Sigma}_{\varepsilon}^{1/2}\bar{\bm{U}}\bm{\Sigma}^{-1}.
\end{align*}
Going back to our target vector, we have that 
\begin{align*}
\bm{K}_{V,t} & =(\bar{\bm{U}}^{\top}\bm{\Sigma}_{\varepsilon}\bar{\bm{U}})^{-1/2}\bm{\Sigma}(\sqrt{T}(\bm{G}_{V}\bm{J}^{-1})_{t,\cdot})^{\top}=(\bar{\bm{U}}^{\top}\bm{\Sigma}_{\varepsilon}\bar{\bm{U}})^{-1/2}\bm{\Sigma}((\bm{Z}^{\top}\bm{H})_{t,\cdot})^{\top}\\
 & =(\bar{\bm{U}}^{\top}\bm{\Sigma}_{\varepsilon}\bar{\bm{U}})^{-1/2}\bm{\Sigma}\bm{H}^{\top}\bm{Z}_{\cdot,t}=(\bar{\bm{U}}^{\top}\bm{\Sigma}_{\varepsilon}\bar{\bm{U}})^{-1/2}\bm{\Sigma}\bm{\Sigma}^{-1}(\bm{\Sigma}_{\varepsilon}^{1/2}\bar{\bm{U}})^{\top}\bm{Z}_{\cdot,t}\\
 & =\sum_{k=1}^{N}\bm{Z}_{k,t}(\bar{\bm{U}}^{\top}\bm{\Sigma}_{\varepsilon}\bar{\bm{U}})^{-1/2}((\bm{\Sigma}_{\varepsilon}^{1/2}\bar{\bm{U}})_{k,\cdot})^{\top}.
\end{align*}
Note that both $\bar{\bm{U}}$ and $\bm{\Sigma}_{\varepsilon}$ are
deterministic. So, we obtain that the vector $\bm{K}_{V,t}$ is the
summation of $N$ independent and mean zero vectors, and the covariance
matrix of $\bm{K}_{V,t}$ is given by 
\begin{align*}
cov(\bm{K}_{V,t}) & =\sum_{k=1}^{N}(\bar{\bm{U}}^{\top}\bm{\Sigma}_{\varepsilon}\bar{\bm{U}})^{-1/2}((\bm{\Sigma}_{\varepsilon}^{1/2}\bar{\bm{U}})_{k,\cdot})^{\top}(\bm{\Sigma}_{\varepsilon}^{1/2}\bar{\bm{U}})_{k,\cdot}(\bar{\bm{U}}^{\top}\bm{\Sigma}_{\varepsilon}\bar{\bm{U}})^{-1/2}\\
 & =(\bar{\bm{U}}^{\top}\bm{\Sigma}_{\varepsilon}\bar{\bm{U}})^{-1/2}(\bar{\bm{U}}^{\top}\bm{\Sigma}_{\varepsilon}\bar{\bm{U}})(\bar{\bm{U}}^{\top}\bm{\Sigma}_{\varepsilon}\bar{\bm{U}})^{-1/2}=\bm{I}_{r}.
\end{align*}

Given the above analysis of $\bm{K}_{V,t}$, we are now ready to bound
the term $\tau_{V}$:

(i) When all entries of $\bm{Z}$ are Gaussian, we have that, the
vector $\bm{K}_{V,t}$ is also a Gaussian vector because it is a linear
combination of Gaussian vectors. Since $cov(\bm{K}_{V,t})=\bm{I}_{r}$,
we obtain that the law of $\bm{K}_{V,t}$ is the $r$-dimensional
standard Gaussian law $\mathcal{N}(0,\bm{I}_{r})$. Thus, we have
that $\tau_{V}=0$ when all entries of $\bm{Z}$ are Gaussian.

(ii) When the entries of $\bm{Z}$ are sub-Gaussian, we obtain by
the Berry-Esseen theorem (Theorem 1.1 in \citet{raivc2019multivariate})
that 
\[
\tau_{V}\lesssim r^{1/4}\tau_{V}^{(0)},
\]
and 
\begin{align*}
\tau_{V}^{(0)} & =\sum_{k=1}^{N}\mathbb{E}\left[\big\Vert\bm{Z}_{k,t}(\bar{\bm{U}}^{\top}\bm{\Sigma}_{\varepsilon}\bar{\bm{U}})^{-1/2}((\bm{\Sigma}_{\varepsilon}^{1/2}\bar{\bm{U}})_{k,\cdot})^{\top}\big\Vert_{2}^{3}\right]=\sum_{k=1}^{N}\big\Vert(\bm{\Sigma}_{V,t})^{-1/2}(\bm{H}_{k,\cdot})^{\top}\big\Vert_{2}^{3}\mathbb{E}\left[\big\Vert\bm{Z}_{k,t}\big\Vert_{2}^{3}\right]\\
 & \overset{\text{(i)}}{\lesssim}\sum_{k=1}^{N}\big\Vert(\bar{\bm{U}}^{\top}\bm{\Sigma}_{\varepsilon}\bar{\bm{U}})^{-1/2}((\bm{\Sigma}_{\varepsilon}^{1/2}\bar{\bm{U}})_{k,\cdot})^{\top}\big\Vert_{2}^{3}\mathbb{E}\left[|\bm{Z}_{k,t}|_{2}^{3}\right]\\
 & \leq\max_{1\leq k\leq N}\big\Vert(\bar{\bm{U}}^{\top}\bm{\Sigma}_{\varepsilon}\bar{\bm{U}})^{-1/2}((\bm{\Sigma}_{\varepsilon}^{1/2}\bar{\bm{U}})_{k,\cdot})^{\top}\big\Vert_{2}\cdot\sum_{k=1}^{N}\big\Vert(\bar{\bm{U}}^{\top}\bm{\Sigma}_{\varepsilon}\bar{\bm{U}})^{-1/2}((\bm{\Sigma}_{\varepsilon}^{1/2}\bar{\bm{U}})_{k,\cdot})^{\top}\big\Vert_{2}^{2}\\
 & \leq(\lambda_{\min}(\bar{\bm{U}}^{\top}\bm{\Sigma}_{\varepsilon}\bar{\bm{U}}))^{-1/2}\max_{1\leq k\leq N}\big\Vert(\bm{\Sigma}_{\varepsilon}^{1/2}\bar{\bm{U}})_{k,\cdot}\big\Vert_{2}\cdot tr[(\bar{\bm{U}}^{\top}\bm{\Sigma}_{\varepsilon}\bar{\bm{U}})^{-1/2}\bar{\bm{U}}^{\top}\bm{\Sigma}_{\varepsilon}\bar{\bm{U}}(\bar{\bm{U}}^{\top}\bm{\Sigma}_{\varepsilon}\bar{\bm{U}})^{-1/2}]\\
 & =(\lambda_{\min}(\bar{\bm{U}}^{\top}\bm{\Sigma}_{\varepsilon}\bar{\bm{U}}))^{-1/2}\big\Vert\bm{\Sigma}_{\varepsilon}^{1/2}\bar{\bm{U}}\big\Vert_{2,\infty}r,
\end{align*}
where (i) is because the entries of $\bm{Z}$ are sub-Gaussian and
their sub-Gaussian norms satisfy $\left\Vert Z_{i,t}\right\Vert _{\psi_{2}}=O(1)$.
Then, since $\bar{\bm{U}}$ has full column rank, we have that 
\begin{align*}
\lambda_{\min}(\bar{\bm{U}}^{\top}\bm{\Sigma}_{\varepsilon}\bar{\bm{U}}) & =\inf_{\left\Vert \bm{x}\right\Vert _{2}=1}\bm{x}^{\top}\bar{\bm{U}}^{\top}\bm{\Sigma}_{\varepsilon}\bar{\bm{U}}\bm{x}=\inf_{\left\Vert \bm{x}\right\Vert _{2}=1}(\bar{\bm{U}}\bm{x})^{\top}\bm{\Sigma}_{\varepsilon}\bar{\bm{U}}\bm{x}=\inf_{\left\Vert \bm{x}\right\Vert _{2}=1}\bm{x}^{\top}\bm{\Sigma}_{\varepsilon}\bm{x}=\lambda_{\min}(\bm{\Sigma}_{\varepsilon}).
\end{align*}
Recall that $\rho:=\sqrt{\left\Vert \bm{\Sigma}_{\varepsilon}\right\Vert _{2}}\sqrt{n}/(\sigma_{r}\sqrt{T})$.
So, we conclude that 
\[
\tau_{V}^{(0)}\lesssim\frac{1}{\lambda_{\text{min}}(\bm{\Sigma}_{\varepsilon})}\big\Vert\bm{\Sigma}_{\varepsilon}^{1/2}\bar{\bm{U}}\big\Vert_{2,\infty}r=\sqrt{\kappa_{\varepsilon}}\frac{\big\Vert\bm{\Sigma}_{\varepsilon}^{1/2}\bar{\bm{U}}\big\Vert_{2,\infty}}{\big\Vert\bm{\Sigma}_{\varepsilon}^{1/2}\big\Vert_{2}}r,\text{\qquad and\qquad\ensuremath{\tau_{V}\lesssim r^{1/4}\tau_{V}^{(0)}\lesssim\sqrt{\kappa_{\varepsilon}}r^{5/4}\frac{\big\Vert\bm{\Sigma}_{\varepsilon}^{1/2}\bar{\bm{U}}\big\Vert_{2,\infty}}{\big\Vert\bm{\Sigma}_{\varepsilon}^{1/2}\big\Vert_{2}}}},
\]
where we use the fact that $\left\Vert \bm{\Sigma}_{\varepsilon}\right\Vert _{2}=\lambda_{\text{max}}(\bm{\Sigma}_{\varepsilon})$
and $\kappa_{\varepsilon}=\lambda_{\text{max}}(\bm{\Sigma}_{\varepsilon})/\lambda_{\text{min}}(\bm{\Sigma}_{\varepsilon})$
is the condition number of $\bm{\Sigma}_{\varepsilon}$.

\textit{Step 3 -- Establishing the validity of confidence regions.}

Since we obtained in Step 2 that $\left\vert \mathbb{P}(\bm{K}_{V,t}\in\mathcal{C}_{1})-\mathbb{P}(\mathcal{N}(0,\bm{I}_{r})\in\mathcal{C}_{1})\right\vert \leq\tau_{V}$
holds for any convex set $\mathcal{C}_{1}$, we let $\mathcal{C}_{1}$
be $\mathcal{C}^{-\varepsilon_{1}}$ and $\mathcal{C}^{\varepsilon_{1}}$,
respectively to obtain the probabilities bounds as below 
\[
\mathbb{P}(\mathcal{N}(0,\bm{I}_{r})\in\mathcal{C}^{-\varepsilon_{1}})-\tau_{V}\leq\mathbb{P}(\sqrt{T}(\bar{\bm{U}}^{\top}\bm{\Sigma}_{\varepsilon}\bar{\bm{U}})^{-1/2}\bm{\Sigma}((\bm{G}_{V}\bm{J}^{-1}+\bm{\Psi}_{V}\bm{J}^{-1})_{t,\cdot})^{\top}\in\mathcal{C})\leq\mathbb{P}(\mathcal{N}(0,\bm{I}_{r})\in\mathcal{C}^{\varepsilon_{1}})+\tau_{V}.
\]
Next, by Theorem 1.2 in \citet{raivc2019multivariate} for multi-dimensional
standard Gaussian vector, we obtain 
\begin{align*}
 & \mathbb{P}(\mathcal{N}(0,\bm{I}_{r})\in\mathcal{C})-(K_{1}r^{1/4}+K_{0})\varepsilon_{1}-\tau_{V}\\
\leq & \mathbb{P}(\sqrt{T}(\bar{\bm{U}}^{\top}\bm{\Sigma}_{\varepsilon}\bar{\bm{U}})^{-1/2}\bm{\Sigma}((\bm{G}_{V}\bm{J}^{-1}+\bm{\Psi}_{V}\bm{J}^{-1})_{t,\cdot})^{\top}\in\mathcal{C})\\
\leq & \mathbb{P}(\mathcal{N}(0,\bm{I}_{r})\in\mathcal{C})+(K_{1}r^{1/4}+K_{0})\varepsilon_{1}+\tau_{V}.
\end{align*}
Since the above inequality holds for any convex set $\mathcal{C}\in\mathscr{C}^{r}$,
we obtain that 
\begin{align*}
 & \sup_{\mathcal{C}\in\mathscr{C}^{r}}\left\vert \mathbb{P}((\widehat{\bm{F}}-\bm{F}\bm{R}_{F})_{t,\cdot}\in\mathcal{C})-\mathbb{P}(\mathcal{N}(0,\bm{\Sigma}_{V,t})\in\mathcal{C})\right\vert \\
 & =\sup_{\mathcal{C}\in\mathscr{C}^{r}}\left\vert \mathbb{P}(\sqrt{T}(\bar{\bm{U}}^{\top}\bm{\Sigma}_{\varepsilon}\bar{\bm{U}})^{-1/2}\bm{\Sigma}((\bm{G}_{V}\bm{J}^{-1}+\bm{\Psi}_{V}\bm{J}^{-1})_{t,\cdot})^{\top}\in\mathcal{C})-\mathbb{P}(\mathcal{N}(0,\bm{I}_{r})\in\mathcal{C})\right\vert \\
 & \leq(K_{1}r^{1/4}+K_{0})\varepsilon_{1}+\tau_{V}\lesssim r^{1/4}\varepsilon_{1}+\tau_{V}.
\end{align*}

Finally, we derive the upper bound for $\varepsilon_{1}:=\frac{1}{\sqrt{T}}\big\Vert(\bm{\Sigma}_{V,t})^{-1/2}((\bm{\Psi}_{V}\bm{J}^{-1})_{t,\cdot})^{\top}\big\Vert_{2}$.

Using the fact that $\sigma_{i}(\bm{J})\asymp1$ for $i=1,2,\ldots,r$
in Lemma \ref{Lemma SVD BF good event}, we obtain that, with probability
at least $1-O(n^{-2})$, it holds 
\begin{align*}
\varepsilon_{1} & =\big\Vert\sqrt{T}(\bar{\bm{U}}^{\top}\bm{\Sigma}_{\varepsilon}\bar{\bm{U}})^{-1/2}\bm{\Sigma}((\bm{\Psi}_{V}\bm{J}^{-1})_{t,\cdot})^{\top}\big\Vert_{2}\leq\frac{\sqrt{T}\big\Vert(\bm{\Psi}_{V}\bm{J}^{-1})_{t,\cdot}\bm{\Sigma}\big\Vert_{2}}{(\lambda_{\text{min}}(\bm{\Sigma}_{\varepsilon}))^{1/2}}=\frac{\sqrt{\kappa_{\varepsilon}}\sqrt{T}}{\big\Vert\bm{\Sigma}_{\varepsilon}^{1/2}\big\Vert_{2}}\big\Vert(\bm{\Psi}_{V}\bm{J}^{-1})_{t,\cdot}\bm{\Sigma}\big\Vert_{2}.
\end{align*}
Then, using the upper bound $\big\Vert\bm{V}\big\Vert_{2,\infty}\lesssim\sqrt{\frac{\log n}{T}}$
obtained in Lemma \ref{Lemma SVD BF good event} and the formula of
$\bm{\Psi}_{V}$ in the proof of Theorem \ref{Thm UV 1st approx row-wise error},
we obtain that 
\begin{align*}
\big\Vert(\bm{\Psi}_{V}\bm{J}^{-1})_{t,\cdot}\bm{\Sigma}\big\Vert_{2} & \lesssim\sigma_{r}\rho^{2}\sqrt{\frac{r}{n}}\log^{3/2}n+\sigma_{r}\left(\rho^{2}+\rho\sqrt{\frac{r}{n}}\log n\right)\big\Vert\bm{V}_{l,\cdot}\big\Vert_{2}\\
 & \lesssim\sigma_{r}\rho^{2}\sqrt{\frac{r}{n}}\log^{3/2}n+\sigma_{r}\left(\rho^{2}+\rho\sqrt{\frac{r}{n}}\log n\right)\sqrt{\frac{\log n}{T}}.
\end{align*}
So, we obtain that, with probability at least $1-O(n^{-2})$, it holds
\begin{align*}
r^{1/4}\varepsilon_{1} & \lesssim r^{1/4}\frac{\sqrt{\kappa_{\varepsilon}}\sqrt{T}\sigma_{r}}{\big\Vert\bm{\Sigma}_{\varepsilon}^{1/2}\big\Vert_{2}}\rho\cdot\left(\rho\sqrt{\frac{r}{n}}\log^{3/2}n+\left(\rho+\sqrt{\frac{r}{n}}\log n\right)\sqrt{\frac{\log n}{T}}\right)\\
 & =r^{1/4}\sqrt{\kappa_{\varepsilon}}\sqrt{n}\left(\rho\sqrt{\frac{r}{n}}\log^{3/2}n+\left(\rho+\sqrt{\frac{r}{n}}\log n\right)\sqrt{\frac{\log n}{T}}\right)\\
 & =\sqrt{\kappa_{\varepsilon}}\left(\rho+\left(\rho\sqrt{n}+1\right)\frac{1}{\sqrt{T}}\right)r\log^{3/2}n\\
 & \lesssim\sqrt{\kappa_{\varepsilon}}(\rho\sqrt{\frac{n}{T}}+\frac{1}{\sqrt{T}})r\log^{3/2}n.
\end{align*}
Thus, by a standard conditioning argument, we conclude that 
\begin{align*}
 & \sup_{\mathcal{C}\in\mathscr{C}^{r}}\left\vert \mathbb{P}((\widehat{\bm{V}}\bm{R}_{F}-\bm{F})_{t,\cdot}\in\mathcal{C})-\mathbb{P}(\mathcal{N}(0,\bm{\Sigma}_{V,t})\in\mathcal{C})\right\vert \\
 & \lesssim\sqrt{\kappa_{\varepsilon}}(\rho\sqrt{\frac{n}{T}}+\frac{1}{\sqrt{T}})r\log^{3/2}n+\tau_{V}+O(n^{-2}),
\end{align*}
which is our desired result. The sufficient conditions for $\sqrt{\kappa_{\varepsilon}}(\rho\sqrt{\frac{n}{T}}+\frac{1}{\sqrt{T}})r\log^{3/2}n\lesssim\delta$
to hold can be rewritten as two inequalities as follows: $C_{0}n^{-2}\leq\delta$,
$\sqrt{\kappa_{\varepsilon}}r^{5/4}\big\Vert\bm{\Sigma}_{\varepsilon}^{1/2}\bar{\bm{U}}\big\Vert_{2,\infty}\leq\delta\big\Vert\bm{\Sigma}_{\varepsilon}^{1/2}\big\Vert_{2}$,
\[
C_{0}\sqrt{\kappa_{\varepsilon}}\frac{\sqrt{\left\Vert \bm{\Sigma}_{\varepsilon}\right\Vert _{2}}}{\sigma_{r}}\frac{n}{T}r\log^{3/2}n\leq\delta,\qquad\text{and}\qquad C_{0}\kappa_{\varepsilon}r^{2}\log^{3}n\ll\delta^{2}T.
\]

\subsection{Proof of Corollary \ref{Thm B inference}}

The proof is similar to that for the factors. So we omit the proof
due to the limit of space. 

\section{\label{Proof of Thm factor test}Proof of Theorem \ref{Thm factor test plug-in Chi-sq}:
Factor test}

\subsection{Some useful lemmas}

To prove Theorem \ref{Thm factor test plug-in Chi-sq}, we collect
some useful lemmas as preparations. Recall that, we already showed
in Lemma \ref{Lemma SVD BF good event} that $\bm{V}=\frac{1}{\sqrt{T}}\bm{FJ}$,
implying that $\bm{\bm{V}}_{S,\cdot}=\frac{1}{\sqrt{T}}\bm{F}_{S,\cdot}\bm{J}$
and thus the column space of $\bm{F}_{S,\cdot}$ is closely related
to that of $\bm{\bm{V}}_{S,\cdot}$. Then, we have the following lemma
for some properties of $\bm{\bm{V}}_{S,\cdot}$.

\begin{lemma} \label{Lemma V subset good event}Suppose that Assumptions
\ref{Assump_Bf_identification} and \ref{Assump_factor_f} hold. Assume
that (\ref{Assump T large logn}), i.e., $r+\log n\ll T,$ and 
\begin{equation}
r+\log n\ll|S|,\label{Assump subset size large logn}
\end{equation}
then we have that, there exists a $\sigma(\bm{F})$-measurable event
$\mathcal{E}_{S}$ with $\mathbb{P}(\mathcal{E}_{S})>1-O(n^{-2})$,
such that, the following properties hold when $\mathcal{E}_{0}\cap\mathcal{E}_{S}$
happens:

(i) $\sigma_{i}(\bm{\bm{\bm{V}}}_{S,\cdot})\asymp\sqrt{\frac{|S|}{T}}$
for $i=1,2,\ldots,r$, and $\text{rank}(\bm{\bm{\bm{V}}}_{S,\cdot})=r$,
i.e., $\bm{\bm{\bm{V}}}_{S,\cdot}$ has full column rank.

(ii) $\sigma_{i}((\bm{\bm{\bm{V}}}_{S,\cdot})^{+})\asymp\sqrt{\frac{T}{|S|}}$
for $i=1,2,\ldots,r$, where $(\bm{\bm{\bm{V}}}_{S,\cdot})^{+}$ is
the generalized inverse of $\bm{\bm{\bm{V}}}_{S,\cdot}$. \end{lemma}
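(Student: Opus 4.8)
The plan is to reduce everything to singular-value control of the subsampled factor matrix $\bm{F}_{S,\cdot}$, using the representation $\bm{V}_{S,\cdot}=T^{-1/2}\bm{F}_{S,\cdot}\bm{J}$ that follows from Lemma~\ref{Lemma SVD BF good event} (valid on $\mathcal{E}_0$), together with the fact recorded there that $\bm{J}$ is an invertible, well-conditioned $r\times r$ matrix, i.e.\ $\sigma_i(\bm{J})\asymp1$. First I would define the event $\mathcal{E}_S$. The rows $\{\bm{f}_t\}_{t\in S}$ form a subcollection of $\{\bm{f}_1,\dots,\bm{f}_T\}$, hence are independent, mean-zero, identity-covariance sub-Gaussian vectors in $\mathbb{R}^r$ by Assumptions~\ref{Assump_Bf_identification} and~\ref{Assump_factor_f}; applying the very same concentration estimate used in the proof of Lemma~\ref{Lemma SVD BF good event} (namely (4.22) in \citet{vershynin2016high}), but with $T$ replaced by $|S|$, gives
\[
\Bigl\| \tfrac{1}{|S|}\bm{F}_{S,\cdot}^{\top}\bm{F}_{S,\cdot}-\bm{I}_r \Bigr\|_2 \;\lesssim\; \sqrt{\tfrac{r+\log n}{|S|}}
\]
with probability at least $1-O(n^{-2})$. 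I would let $\mathcal{E}_S$ be this event; it is $\sigma(\bm{F})$-measurable (it depends only on $\bm{F}$) and $\mathbb{P}(\mathcal{E}_S)>1-O(n^{-2})$.

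Next, on $\mathcal{E}_S$, the assumption $r+\log n\ll|S|$ in (\ref{Assump subset size large logn}) forces $\|\tfrac{1}{|S|}\bm{F}_{S,\cdot}^{\top}\bm{F}_{S,\cdot}-\bm{I}_r\|_2\ll1$, so that $\sigma_i\bigl(\tfrac{1}{|S|}\bm{F}_{S,\cdot}^{\top}\bm{F}_{S,\cdot}\bigr)\asymp1$ for $i=1,\dots,r$, i.e.\ $\sigma_i(\bm{F}_{S,\cdot})\asymp\sqrt{|S|}$; in particular $\sigma_{\min}(\bm{F}_{S,\cdot})>0$, giving $\mathrm{rank}(\bm{F}_{S,\cdot})=r$. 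Then, working on $\mathcal{E}_0\cap\mathcal{E}_S$, I would write $\bm{V}_{S,\cdot}=T^{-1/2}\bm{F}_{S,\cdot}\bm{J}$ and use that for an invertible $r\times r$ matrix $\bm{J}$ one has $\sigma_{\min}(\bm{J})\,\sigma_i(\bm{F}_{S,\cdot})\le\sigma_i(\bm{F}_{S,\cdot}\bm{J})\le\sigma_{\max}(\bm{J})\,\sigma_i(\bm{F}_{S,\cdot})$ — the upper bound being submultiplicativity of singular values, the lower bound following by applying the upper bound to $\bm{F}_{S,\cdot}=(\bm{F}_{S,\cdot}\bm{J})\bm{J}^{-1}$. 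Since $\sigma_i(\bm{J})\asymp1$, this yields $\sigma_i(\bm{V}_{S,\cdot})=T^{-1/2}\sigma_i(\bm{F}_{S,\cdot}\bm{J})\asymp\sqrt{|S|/T}$ for $i=1,\dots,r$, and $\mathrm{rank}(\bm{V}_{S,\cdot})=\mathrm{rank}(\bm{F}_{S,\cdot})=r$ because $\bm{J}$ is invertible; this is part (i).

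For part (ii), since $\bm{V}_{S,\cdot}\in\mathbb{R}^{|S|\times r}$ has full column rank $r$, its generalized inverse is $(\bm{V}_{S,\cdot})^{+}=(\bm{V}_{S,\cdot}^{\top}\bm{V}_{S,\cdot})^{-1}\bm{V}_{S,\cdot}^{\top}$, whose nonzero singular values are precisely the reciprocals of those of $\bm{V}_{S,\cdot}$; hence $\sigma_i\bigl((\bm{V}_{S,\cdot})^{+}\bigr)=1/\sigma_{r+1-i}(\bm{V}_{S,\cdot})\asymp\sqrt{T/|S|}$ for $i=1,\dots,r$, which is part (ii). I do not expect a genuine obstacle here: the argument is a routine adaptation of the concentration step in Lemma~\ref{Lemma SVD BF good event} to the subsampled rows, combined with elementary singular-value inequalities. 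The only points deserving a little care are checking that $\mathcal{E}_S$ is genuinely $\sigma(\bm{F})$-measurable, and invoking $r+\log n\ll|S|$ exactly where needed — both to push the perturbation below a constant and to guarantee $|S|\ge r$ so that full column rank is even possible.
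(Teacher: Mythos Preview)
Your proposal is correct and follows essentially the same route as the paper: define $\mathcal{E}_S$ via the Vershynin concentration bound applied to $\bm{F}_{S,\cdot}$, then pass to $\bm{V}_{S,\cdot}=T^{-1/2}\bm{F}_{S,\cdot}\bm{J}$ using the well-conditionedness of $\bm{J}$ from Lemma~\ref{Lemma SVD BF good event}. The only cosmetic difference is that the paper controls $\Vert\tfrac{T}{|S|}\bm{V}_{S,\cdot}^{\top}\bm{V}_{S,\cdot}-\bm{I}_r\Vert_2$ by expanding through the rotation $\bm{Q}$ and the bound $\Vert\bm{J}-\bm{Q}\Vert_2\ll1$, whereas you use the direct two-sided singular-value inequality $\sigma_{\min}(\bm{J})\sigma_i(\bm{F}_{S,\cdot})\le\sigma_i(\bm{F}_{S,\cdot}\bm{J})\le\sigma_{\max}(\bm{J})\sigma_i(\bm{F}_{S,\cdot})$; similarly, for part~(ii) you invoke the reciprocal relation between singular values of a full-column-rank matrix and its Moore--Penrose inverse, while the paper bounds $\sigma_{\max}$ and $\sigma_{\min}$ of the pseudoinverse separately---your variants are if anything slightly cleaner, but the substance is identical.
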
 
\begin{proof}
Since $\bm{F}=(\bm{f}_{1},\bm{f}_{2},\ldots,\bm{f}_{T})^{\top}$,
and $\bm{f}_{1},\bm{f}_{2},\ldots,\bm{f}_{T}$ are independent sub-Gaussian
random vectors under Assumption \ref{Assump_factor_f}, we obtain
by (4.22) in \citet{vershynin2016high} that, 
\begin{equation}
\left\Vert \frac{1}{|S|}(\bm{F}_{S,\cdot})^{\top}\bm{F}_{S,\cdot}-\bm{I}_{r}\right\Vert _{2}=\left\Vert \frac{1}{|S|}\sum_{i\in S}\bm{f}_{i}(\bm{f}_{i})^{\top}-\bm{I}_{r}\right\Vert _{2}\lesssim\sqrt{\frac{r}{|S|}}+\sqrt{\frac{\log n}{|S|}}+\frac{r}{|S|}+\frac{\log n}{|S|}\lesssim\sqrt{\frac{r+\log n}{|S|}},\label{FF-I_r inequality subset}
\end{equation}
with probability at least $1-O(n^{-2})$, where the last inequality
is owing to $(r+\log n)\ll|S|$ in (\ref{Assump subset size large logn}).
In particular, we let $\mathcal{E}_{0}$ be the event that (\ref{FF-I_r inequality subset})
happens, then we show that $\mathcal{E}_{S}$ satisfies all the requirements.

When (\ref{FF-I_r inequality subset}) happens, since $(r+\log n)\ll|S|$,
we obtain that $\big\Vert\frac{1}{|S|}(\bm{F}_{S,\cdot})^{\top}\bm{F}_{S,\cdot}-\bm{I}_{r}\big\Vert_{2}\ll1$,
and thus $\lambda_{i}(\frac{1}{|S|}(\bm{F}_{S,\cdot})^{\top}\bm{F}_{S,\cdot})\asymp1$
for $i=1,2,\ldots,r$, implying that $\sigma_{i}(\bm{F}_{S,\cdot})\asymp\sqrt{|S|}$
for $i=1,2,\ldots,r$. Since $\sigma_{r}(\bm{F}_{S,\cdot})\asymp\sqrt{|S|}$
implies that $\sigma_{\text{min}}(\bm{F}_{S,\cdot})>0$, we obtain
that $\text{rank}(\bm{F}_{S,\cdot})=r$, i.e., $\bm{F}_{S,\cdot}$
has full column rank.

Then, recall that, given $\mathcal{E}_{0}$ in Lemma \ref{Lemma SVD BF good event},
it holds $\bm{V}=\frac{1}{\sqrt{T}}\bm{FJ}$ and $\bm{J}$ is a $r\times r$
invertible matrix satisfying that $\left\Vert \bm{J}-\bm{Q}\right\Vert _{2}\ll1$
and $\sigma_{i}(\bm{J})\asymp1$ for $i=1,2,\ldots,r$, where $\bm{Q}\in\mathcal{O}^{r\times r}$
is the rotation matrix given in Lemma \ref{Lemma SVD BF good event}.
When $\mathcal{E}_{0}\cap\mathcal{E}_{S}$ happens, since $\left\Vert \bm{J}-\bm{Q}\right\Vert _{2}\ll1$
and $\big\Vert\frac{1}{|S|}(\bm{F}_{S,\cdot})^{\top}\bm{F}_{S,\cdot}-\bm{I}_{r}\big\Vert_{2}\ll1$,
we have that 
\begin{align*}
\left\Vert \frac{T}{|S|}(\bm{\bm{\bm{V}}}_{S,\cdot})^{\top}\bm{\bm{\bm{V}}}_{S,\cdot}-\bm{I}_{r}\right\Vert _{2} & =\left\Vert \bm{J}^{\top}(\frac{1}{|S|}(\bm{F}_{S,\cdot})^{\top}\bm{F}_{S,\cdot})\bm{J}-\bm{Q}^{\top}\bm{I}_{r}\bm{Q}\right\Vert _{2}\\
 & \leq\left\Vert \bm{J}^{\top}(\frac{1}{|S|}(\bm{F}_{S,\cdot})^{\top}\bm{F}_{S,\cdot}-\bm{I}_{r})\bm{J}\right\Vert _{2}+\big\Vert\bm{J}^{\top}\bm{J}-\bm{Q}^{\top}\bm{Q}\big\Vert_{2}\\
 & \leq\left\Vert \frac{1}{|S|}(\bm{F}_{S,\cdot})^{\top}\bm{F}_{S,\cdot}-\bm{I}_{r}\right\Vert _{2}+\big\Vert(\bm{J}^{\top}-\bm{Q}^{\top})\bm{Q}\big\Vert_{2}+\big\Vert\bm{J}^{\top}(\bm{J}-\bm{Q})\big\Vert_{2}\\
 & \ll1.
\end{align*}
So, we obtain that $\lambda_{i}(\frac{T}{|S|}(\bm{\bm{\bm{V}}}_{S,\cdot})^{\top}\bm{\bm{\bm{V}}}_{S,\cdot})\asymp1$
for $i=1,2,\ldots,r$, and thus $\sigma_{i}(\bm{\bm{\bm{V}}}_{S,\cdot})\asymp\sqrt{\frac{|S|}{T}}$
for $i=1,2,\ldots,r$, implying that $\text{rank}(\bm{\bm{\bm{V}}}_{S,\cdot})=r$.

Given $\mathcal{E}_{0}\cap\mathcal{E}_{S}$, since $\bm{\bm{\bm{V}}}_{S,\cdot}$
has full column rank, we have that $(\bm{\bm{\bm{V}}}_{S,\cdot})^{+}=[(\bm{\bm{\bm{V}}}_{S,\cdot})^{\top}\bm{\bm{\bm{V}}}_{S,\cdot}]^{-1}(\bm{\bm{\bm{V}}}_{S,\cdot})^{\top}$.
So, on the one hand, we have that 
\begin{align*}
\sigma_{\text{max}}((\bm{\bm{\bm{V}}}_{S,\cdot})^{+}) & \leq\big\Vert[(\bm{\bm{\bm{V}}}_{S,\cdot})^{\top}\bm{\bm{\bm{V}}}_{S,\cdot}]^{-1}\big\Vert_{2}\big\Vert\bm{\bm{\bm{V}}}_{S,\cdot}\big\Vert_{2}\leq\frac{\sigma_{\text{max}}(\bm{\bm{\bm{V}}}_{S,\cdot})}{(\sigma_{\text{min}}(\bm{\bm{\bm{V}}}_{S,\cdot}))^{2}}\lesssim\left(\sqrt{\frac{|S|}{T}}\right)^{-2}\sqrt{\frac{|S|}{T}}=\sqrt{\frac{T}{|S|}}.
\end{align*}
On the other hand, similarly we have that 
\[
\sigma_{\text{min}}((\bm{\bm{\bm{V}}}_{S,\cdot})^{+})\geq\frac{\sigma_{\text{min}}(\bm{\bm{\bm{V}}}_{S,\cdot})}{(\sigma_{\text{max}}(\bm{\bm{\bm{V}}}_{S,\cdot}))^{2}}\gtrsim\left(\sqrt{\frac{|S|}{T}}\right)^{-2}\sqrt{\frac{|S|}{T}}=\sqrt{\frac{T}{|S|}}.
\]
So we conclude that $\sigma_{i}((\bm{\bm{\bm{V}}}_{S,\cdot})^{+})\asymp\sqrt{\frac{T}{|S|}}$
for $i=1,2,\ldots,r$. 
\end{proof}
Using the properties of $\bm{\bm{\bm{V}}}_{S,\cdot}$ obtained in
Lemma \ref{Lemma V subset good event}, we are able to establish the
following properties on the projection matrices.

\begin{lemma} \label{Lemma projection vector 1st-order approx}Suppose
that Assumptions \ref{Assump_Bf_identification}, \ref{Assump_noise_Z_entries},
and \ref{Assump_factor_f} hold. Assume that $r+\log n\ll T$, $r\log n\ll n$,
and $\rho\sqrt{\log n}\ll1$ as in Theorem \ref{Thm factor test plug-in Chi-sq},
and $r+\log n\ll|S|$ in (\ref{Assump subset size large logn}). Then
we have that

(i) The projection matrices have the perturbation bound as below:
with probability at least $1-O(n^{-2})$, it holds 
\[
\big\Vert\bm{P}_{\bm{\bm{\bm{V}}}_{S,\cdot}}-\bm{P}_{\widehat{\bm{V}}_{S,\cdot}}\big\Vert_{2}\lesssim\rho\sqrt{\frac{T}{n}}\sqrt{r}\log^{3/2}n.
\]

(ii) Under the null hypothesis $H_{0}$ in (\ref{Null H0 factor test subset}),
we have that, with probability at least $1-O(n^{-2})$, 
\[
\big\Vert(\bm{I}_{|S|}-\bm{P}_{\widehat{\bm{V}}_{S,\cdot}})\bm{v}\big\Vert_{2}\lesssim\sqrt{|S|}\rho\sqrt{\frac{r}{n}}\log n\big\Vert(\bm{\bm{\bm{V}}}_{S,\cdot})^{+}\bm{v}\big\Vert_{2},
\]
and the following first-order approximation 
\[
(\bm{P}_{\widehat{\bm{V}}_{S,\cdot}}-\bm{I}_{|S|})\bm{v}=\bm{G}_{P}+\bm{\Upsilon}_{P},\text{\qquad with\qquad}\bm{G}_{P}:=(\bm{I}_{|S|}-\bm{P}_{\bm{\bm{V}}_{S,\cdot}})(\bm{G}_{V})_{S,\cdot}(\bm{\bm{\bm{V}}}_{S,\cdot})^{+}\bm{v},
\]
where $\bm{G}_{V}$ is the first-order term in (\ref{V 1st-order approx hat til})
and is given by $\bm{G}_{V}=T^{-1/2}\bm{E}^{\top}\bm{U}\bm{\Lambda}^{-1}$,
and with probability at least $1-O(n^{-2})$, it holds 
\[
\left\Vert \bm{\Upsilon}_{P}\right\Vert _{2}\lesssim\sqrt{|S|}\left(\rho^{2}\frac{1}{\sqrt{T}}+\rho\frac{1}{\sqrt{nT}}\right)r\log^{2}n\big\Vert(\bm{\bm{\bm{V}}}_{S,\cdot})^{+}\bm{v}\big\Vert_{2}.
\]

\end{lemma}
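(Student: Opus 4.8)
Throughout I work on the good event $\mathcal{E}_{0}\cap\mathcal{E}_{S}$ (together with $\mathcal{E}_{Z}$), which has probability at least $1-O(n^{-2})$ by Lemmas \ref{Lemma SVD BF good event}, \ref{Lemma V subset good event} and \ref{Lemma Z norms A B}. First I note that $\bm{R}_{V}\in\mathcal{O}^{r\times r}$ is invertible, so right-multiplication does not change the column space and hence $\bm{P}_{\widehat{\bm{V}}_{S,\cdot}}=\bm{P}_{\widehat{\bm{V}}_{S,\cdot}\bm{R}_{V}}$. I set $\bm{\Delta}:=\widehat{\bm{V}}_{S,\cdot}\bm{R}_{V}-\bm{V}_{S,\cdot}=(\widehat{\bm{V}}\bm{R}_{V}-\bm{V})_{S,\cdot}$, so that by Theorem \ref{Thm UV 1st approx row-wise error} we have $\bm{\Delta}=(\bm{G}_{V})_{S,\cdot}+(\bm{\Psi}_{V})_{S,\cdot}$. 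The basic estimate I will use repeatedly is $\Vert\bm{\Delta}\Vert_{2}\le\Vert\bm{\Delta}\Vert_{\mathrm{F}}\le\sqrt{|S|}\,\Vert\widehat{\bm{V}}\bm{R}_{V}-\bm{V}\Vert_{2,\infty}$, combined with the row-wise bounds $\Vert\bm{G}_{V}\Vert_{2,\infty}\lesssim\rho\sqrt{r/n}\log n$ (Lemma \ref{Lemma Z norms A B}, exactly as in the proof of Corollary \ref{corollary:error bound B F}) and $\Vert\bm{\Psi}_{V}\Vert_{2,\infty}\lesssim\rho\sqrt{r/n}\log n$ (Theorem \ref{Thm UV 1st approx row-wise error} together with $\Vert\bm{V}\Vert_{2,\infty}\lesssim\sqrt{\log n/T}$ from Lemma \ref{Lemma SVD BF good event} and $\rho\sqrt{\log n}\ll1$, again as in the proof of Corollary \ref{corollary:error bound B F}). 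This gives $\Vert\bm{\Delta}\Vert_{2}\lesssim\sqrt{|S|}\,\rho\sqrt{r/n}\log n$. Finally, under $H_{0}$ in (\ref{Null H0 factor test subset}) we have $\bm{v}=\bm{F}_{S,\cdot}\bm{w}=\sqrt{T}\,\bm{V}_{S,\cdot}\bm{J}^{-1}\bm{w}\in col(\bm{V}_{S,\cdot})$ (using $\bm{V}=T^{-1/2}\bm{FJ}$ from Lemma \ref{Lemma SVD BF good event}), so that $\bm{P}_{\bm{V}_{S,\cdot}}\bm{v}=\bm{v}$, $(\bm{I}_{|S|}-\bm{P}_{\bm{V}_{S,\cdot}})\bm{v}=\bm{0}$, and $\Vert\bm{v}\Vert_{2}\le\Vert\bm{V}_{S,\cdot}\Vert_{2}\Vert(\bm{V}_{S,\cdot})^{+}\bm{v}\Vert_{2}\asymp\sqrt{|S|/T}\,\Vert(\bm{V}_{S,\cdot})^{+}\bm{v}\Vert_{2}$, where the singular-value estimates are from Lemma \ref{Lemma V subset good event}.

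\textbf{Part (i): the $\sin\Theta$ bound for the projections.} Since $\bm{V}_{S,\cdot}$ has full column rank $r$ with $\sigma_{r}(\bm{V}_{S,\cdot})\asymp\sqrt{|S|/T}$ (Lemma \ref{Lemma V subset good event}), and since the SNR hypothesis of Theorem \ref{Thm factor test plug-in Chi-sq} forces $\Vert\bm{\Delta}\Vert_{2}/\sigma_{r}(\bm{V}_{S,\cdot})\lesssim\rho\sqrt{rT/n}\log n\ll1$, the matrix $\widehat{\bm{V}}_{S,\cdot}\bm{R}_{V}=\bm{V}_{S,\cdot}+\bm{\Delta}$ also has full column rank and the Davis--Kahan/Wedin $\sin\Theta$ bound for column subspaces of thin matrices yields
\[
\big\Vert\bm{P}_{\bm{V}_{S,\cdot}}-\bm{P}_{\widehat{\bm{V}}_{S,\cdot}}\big\Vert_{2}=\big\Vert\bm{P}_{\bm{V}_{S,\cdot}}-\bm{P}_{\widehat{\bm{V}}_{S,\cdot}\bm{R}_{V}}\big\Vert_{2}\lesssim\frac{\Vert\bm{\Delta}\Vert_{2}}{\sigma_{r}(\bm{V}_{S,\cdot})}\lesssim\rho\sqrt{\frac{T}{n}}\,\sqrt{r}\log n\le\rho\sqrt{\frac{T}{n}}\,\sqrt{r}\log^{3/2}n,
\]
which is the claimed bound (in fact with a $\sqrt{\log n}$ to spare).

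\textbf{Part (ii): first-order expansion and remainder.} I use the standard first-order perturbation identity for orthogonal projections onto column spaces: for $\widehat{\bm{A}}=\bm{A}+\bm{\Delta}$ with $\bm{A}$ full column rank and $\Vert\bm{\Delta}\Vert_{2}/\sigma_{r}(\bm{A})\ll1$,
\[
\bm{P}_{\widehat{\bm{A}}}-\bm{P}_{\bm{A}}=(\bm{I}-\bm{P}_{\bm{A}})\bm{\Delta}\bm{A}^{+}+(\bm{A}^{+})^{\top}\bm{\Delta}^{\top}(\bm{I}-\bm{P}_{\bm{A}})+\bm{R}_{0},\qquad\Vert\bm{R}_{0}\Vert_{2}\lesssim\big(\Vert\bm{\Delta}\Vert_{2}/\sigma_{r}(\bm{A})\big)^{2}.
\]
Applying this with $\bm{A}=\bm{V}_{S,\cdot}$, $\widehat{\bm{A}}=\widehat{\bm{V}}_{S,\cdot}\bm{R}_{V}$, then multiplying on the right by $\bm{v}$, the middle term vanishes because $(\bm{I}-\bm{P}_{\bm{V}_{S,\cdot}})\bm{v}=\bm{0}$; and using $\bm{P}_{\bm{V}_{S,\cdot}}\bm{v}=\bm{v}$ on the left-hand side, I obtain $(\bm{P}_{\widehat{\bm{V}}_{S,\cdot}}-\bm{I}_{|S|})\bm{v}=(\bm{I}-\bm{P}_{\bm{V}_{S,\cdot}})\bm{\Delta}(\bm{V}_{S,\cdot})^{+}\bm{v}+\bm{R}_{0}\bm{v}$. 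Splitting $\bm{\Delta}=(\bm{G}_{V})_{S,\cdot}+(\bm{\Psi}_{V})_{S,\cdot}$, the $\bm{G}_{V}$-part is exactly $\bm{G}_{P}$ and
\[
\bm{\Upsilon}_{P}=(\bm{I}_{|S|}-\bm{P}_{\bm{V}_{S,\cdot}})(\bm{\Psi}_{V})_{S,\cdot}(\bm{V}_{S,\cdot})^{+}\bm{v}+\bm{R}_{0}\bm{v}.
\]
The first piece is bounded by $\Vert(\bm{\Psi}_{V})_{S,\cdot}\Vert_{2}\Vert(\bm{V}_{S,\cdot})^{+}\bm{v}\Vert_{2}\le\sqrt{|S|}\Vert\bm{\Psi}_{V}\Vert_{2,\infty}\Vert(\bm{V}_{S,\cdot})^{+}\bm{v}\Vert_{2}$, where I now use the sharper row-wise bound of Theorem \ref{Thm UV 1st approx row-wise error} with $\Vert\bm{V}\Vert_{2,\infty}\lesssim\sqrt{\log n/T}$ to get $\Vert\bm{\Psi}_{V}\Vert_{2,\infty}\lesssim(\rho^{2}/\sqrt{T}+\rho/\sqrt{nT})\sqrt{r}\log^{3/2}n$ (after using $T\le n$ to consolidate the $\sqrt{r/n}$ and $1/\sqrt{T}$ terms). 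The second piece is bounded via $\Vert\bm{R}_{0}\bm{v}\Vert_{2}\lesssim(\Vert\bm{\Delta}\Vert_{2}/\sqrt{|S|/T})^{2}\Vert\bm{v}\Vert_{2}\lesssim\Vert\bm{\Delta}\Vert_{2}^{2}\sqrt{T/|S|}\,\Vert(\bm{V}_{S,\cdot})^{+}\bm{v}\Vert_{2}$ together with $\Vert\bm{\Delta}\Vert_{2}\lesssim\sqrt{|S|}\rho\sqrt{r/n}\log n$ and again $T\le n$; both pieces then fit inside $\sqrt{|S|}\big(\rho^{2}T^{-1/2}+\rho(nT)^{-1/2}\big)r\log^{2}n\,\Vert(\bm{V}_{S,\cdot})^{+}\bm{v}\Vert_{2}$. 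The first displayed bound of (ii) follows by triangle inequality: $\Vert(\bm{I}_{|S|}-\bm{P}_{\widehat{\bm{V}}_{S,\cdot}})\bm{v}\Vert_{2}\le\Vert\bm{G}_{P}\Vert_{2}+\Vert\bm{\Upsilon}_{P}\Vert_{2}$, with $\Vert\bm{G}_{P}\Vert_{2}\le\Vert(\bm{G}_{V})_{S,\cdot}\Vert_{2}\Vert(\bm{V}_{S,\cdot})^{+}\bm{v}\Vert_{2}\lesssim\sqrt{|S|}\rho\sqrt{r/n}\log n\,\Vert(\bm{V}_{S,\cdot})^{+}\bm{v}\Vert_{2}$ and $\Vert\bm{\Upsilon}_{P}\Vert_{2}$ of strictly lower order.

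\textbf{Main obstacle.} The routine part is the projection perturbation algebra; the delicate point is the bookkeeping that makes the remainder come out with the claimed dependence on $|S|$, $r$, $\log n$ and the \emph{right} normalization $\Vert(\bm{V}_{S,\cdot})^{+}\bm{v}\Vert_{2}$ rather than $\Vert\bm{v}\Vert_{2}$ --- this crucially exploits $\bm{v}\in col(\bm{V}_{S,\cdot})$ under $H_{0}$ to both kill the symmetric ($\bm{M}^{\top}$) term in the projection expansion and to convert $\Vert\bm{v}\Vert_{2}$ into $\sqrt{|S|/T}\,\Vert(\bm{V}_{S,\cdot})^{+}\bm{v}\Vert_{2}$. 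One must also verify that $\Vert\bm{\Delta}\Vert_{2}\ll\sigma_{r}(\bm{V}_{S,\cdot})\asymp\sqrt{|S|/T}$ so that the projection and $\sin\Theta$ bounds are applicable; this is where the stronger SNR condition $\theta\gtrsim\delta^{-1}(n/T)\sqrt{|S|}\,\kappa_{\varepsilon}^{4}r^{2}\log^{4}n$ of Theorem \ref{Thm factor test plug-in Chi-sq} (rather than the bare $\rho\sqrt{\log n}\ll1$) enters, since $\Vert\bm{\Delta}\Vert_{2}/\sigma_{r}(\bm{V}_{S,\cdot})\asymp\rho\sqrt{rT/n}\log n$ need not be small under the mild Theorem \ref{Thm UV 1st approx row-wise error} assumptions alone.
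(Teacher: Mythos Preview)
Your proposal is correct and follows essentially the same strategy as the paper's proof: both reduce to controlling $\bm{\Delta}=(\widehat{\bm V}\bm R_V-\bm V)_{S,\cdot}$ via the $\ell_{2,\infty}$ bounds from Theorem~\ref{Thm UV 1st approx row-wise error} and Corollary~\ref{corollary:error bound B F}, both exploit $(\bm I-\bm P_{\bm V_{S,\cdot}})\bm v=0$ under $H_0$ to isolate the linear-in-$\bm\Delta$ term, and both split $\bm\Delta=(\bm G_V)_{S,\cdot}+(\bm\Psi_V)_{S,\cdot}$ to exhibit $\bm G_P$.

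The only real difference is in how the second-order remainder is organized. The paper invokes Stewart's \emph{exact} identity for $\bm P_{\widehat{\bm A}}-\bm P_{\bm A}$ (which here yields, after hitting $\bm v$, the closed form $(\bm P_{\widehat{\bm V}_{S,\cdot}}-\bm I)\bm v=(\bm I-\bm P_{\widehat{\bm V}_{S,\cdot}})\bm\Delta(\bm V_{S,\cdot})^+\bm v$) and then decomposes $(\bm I-\bm P_{\widehat{\bm V}_{S,\cdot}})=(\bm I-\bm P_{\bm V_{S,\cdot}})+(\bm P_{\bm V_{S,\cdot}}-\bm P_{\widehat{\bm V}_{S,\cdot}})$, so that $\bm\Upsilon_P=\psi_1+\psi_2$ with $\psi_2=(\bm P_{\bm V_{S,\cdot}}-\bm P_{\widehat{\bm V}_{S,\cdot}})\bm\Delta(\bm V_{S,\cdot})^+\bm v$ bounded via Part~(i). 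You instead use the symmetric Fr\'echet-derivative expansion with a quadratic remainder $\bm R_0$ and bound $\bm R_0\bm v$ after converting $\|\bm v\|_2$ to $\sqrt{|S|/T}\,\|(\bm V_{S,\cdot})^+\bm v\|_2$. Both routes produce the same final order. The paper's route has the mild advantage that the Stewart identity is exact, so no separate smallness hypothesis $\|\bm\Delta\|_2\ll\sigma_r(\bm V_{S,\cdot})$ is needed to \emph{state} the decomposition; your route makes that smallness explicit (and, as you note, it follows from the SNR condition of Theorem~\ref{Thm factor test plug-in Chi-sq}).
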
 
\begin{proof}
The proof consists of three steps.

\textit{Step 1 -- Projection matrix perturbation bound.}

Since the rotation matrix $\bm{R}_{V}\in\mathcal{O}^{r\times r}$
is invertible, we have that $col(\widehat{\bm{V}}_{S,\cdot})=col(\widehat{\bm{V}}_{S,\cdot}\bm{R}_{V})$,
implying that $\bm{P}_{\widehat{\bm{V}}_{S,\cdot}}=\bm{P}_{\widehat{\bm{V}}_{S,\cdot}\bm{R}_{V}}$.
By Theorem 1.2 in \citet{ch2016projection} on the perturbation bound
of projection matrix, we have that 
\begin{align*}
\big\Vert\bm{P}_{\bm{\bm{\bm{V}}}_{S,\cdot}}-\bm{P}_{\widehat{\bm{V}}_{S,\cdot}}\big\Vert_{2} & =\big\Vert\bm{P}_{\bm{\bm{\bm{V}}}_{S,\cdot}}-\bm{P}_{\widehat{\bm{V}}_{S,\cdot}\bm{R}_{V}}\big\Vert_{2}\leq\min\left\{ \big\Vert(\bm{\bm{\bm{V}}}_{S,\cdot})^{+}\big\Vert_{2},\big\Vert(\widehat{\bm{V}}_{S,\cdot}\bm{R}_{V})^{+}\big\Vert_{2}\right\} \big\Vert\widehat{\bm{V}}_{S,\cdot}\bm{R}_{V}-\bm{\bm{\bm{V}}}_{S,\cdot}\big\Vert_{2}\\
 & \leq\big\Vert(\bm{\bm{\bm{V}}}_{S,\cdot})^{+}\big\Vert_{2}\big\Vert\widehat{\bm{V}}_{S,\cdot}\bm{R}_{V}-\bm{\bm{\bm{V}}}_{S,\cdot}\big\Vert_{2}.
\end{align*}
The term $\big\Vert(\bm{\bm{\bm{V}}}_{S,\cdot})^{+}\big\Vert_{2}$
can be bounded using Lemma \ref{Lemma V subset good event}. For the
term $\big\Vert\widehat{\bm{V}}_{S,\cdot}\bm{R}_{V}-\bm{\bm{\bm{V}}}_{S,\cdot}\big\Vert_{2}$,
we have that 
\[
\big\Vert\widehat{\bm{V}}_{S,\cdot}\bm{R}_{V}-\bm{\bm{\bm{V}}}_{S,\cdot}\big\Vert_{2}\leq\sqrt{|S|}\big\Vert\widehat{\bm{V}}_{S,\cdot}\bm{R}_{V}-\bm{\bm{\bm{V}}}_{S,\cdot}\big\Vert_{2,\infty}\leq\sqrt{|S|}\big\Vert\widehat{\bm{V}}\bm{R}_{V}-\bm{V}\big\Vert_{2,\infty}\lesssim\sqrt{|S|}\rho\sqrt{\frac{r}{n}}\log n,
\]
where the last inequality uses the upper bound of $\big\Vert\widehat{\bm{V}}\bm{R}_{V}-\bm{V}\big\Vert_{2,\infty}$
obtained in the proof of Corollary \ref{corollary:error bound B F}.
So, we obtain 
\[
\big\Vert\bm{P}_{\bm{\bm{\bm{V}}}_{S,\cdot}}-\bm{P}_{\widehat{\bm{V}}_{S,\cdot}}\big\Vert_{2}\leq\big\Vert(\bm{\bm{\bm{V}}}_{S,\cdot})^{+}\big\Vert_{2}\big\Vert\widehat{\bm{V}}_{S,\cdot}\bm{R}_{V}-\bm{\bm{\bm{V}}}_{S,\cdot}\big\Vert_{2}\lesssim\sqrt{\frac{T}{|S|}}\sqrt{|S|}\rho\sqrt{\frac{r}{n}}\log n=\rho\sqrt{\frac{T}{n}}\sqrt{r}\log n.
\]

\textit{Step 2 -- First-order approximation expression.}

For ease of exposition, we denote 
\[
G:=\widehat{\bm{V}}_{S,\cdot}\bm{R}_{V}-\bm{\bm{V}}_{S,\cdot}.
\]
We obtain by the identity after (4.6) in \citet{stewart1977perturbation}
that 
\begin{align*}
\bm{P}_{\widehat{\bm{V}}_{S,\cdot}}-\bm{P}_{\bm{\bm{V}}_{S,\cdot}} & =\bm{P}_{\widehat{\bm{V}}_{S,\cdot}\bm{R}_{V}}-\bm{P}_{\bm{\bm{V}}_{S,\cdot}}\\
 & =(\widehat{\bm{V}}_{S,\cdot}\bm{R}_{V})^{+}\bm{P}_{(\widehat{\bm{V}}_{S,\cdot}\bm{R}_{V})^{\top}}G^{\top}(\bm{I}_{|S|}-\bm{P}_{\bm{\bm{V}}_{S,\cdot}})+(\bm{I}_{|S|}-\bm{P}_{\widehat{\bm{V}}_{S,\cdot}})G\bm{P}_{(\bm{\bm{\bm{V}}}_{S,\cdot})^{\top}}(\bm{\bm{\bm{V}}}_{S,\cdot})^{+}.
\end{align*}
Next, we obtain by Lemma \ref{Lemma SVD BF good event} that, there
exists a $\sigma(\bm{F})$-measurable and invertible matrix $\bm{J}$
such that $\bm{V}=\frac{1}{\sqrt{T}}\bm{FJ}$, implying that $\bm{\bm{V}}_{S,\cdot}=\frac{1}{\sqrt{T}}\bm{F}_{S,\cdot}\bm{J}$.
So, under the null hypothesis $H_{0}$ in (\ref{Null H0 factor test subset}),
since $\bm{v}\in col(\bm{F}_{S,\cdot})$ implies that $\bm{v}\in col(\bm{\bm{V}}_{S,\cdot})$,
we have that $(\bm{I}_{|S|}-\bm{P}_{\bm{\bm{V}}_{S,\cdot}})\bm{v}=0$.
Then, we obtain that 
\begin{align*}
(\bm{P}_{\widehat{\bm{V}}_{S,\cdot}}-\bm{I}_{|S|})\bm{v} & =(\bm{P}_{\widehat{\bm{V}}_{S,\cdot}\bm{R}_{V}}-\bm{P}_{\bm{\bm{V}}_{S,\cdot}})\bm{v}=(\bm{I}_{|S|}-\bm{P}_{\widehat{\bm{V}}_{S,\cdot}})G\bm{P}_{(\bm{\bm{\bm{V}}}_{S,\cdot})^{\top}}(\bm{\bm{\bm{V}}}_{S,\cdot})^{+}\bm{v}\\
 & =(\bm{I}_{|S|}-\bm{P}_{\widehat{\bm{V}}_{S,\cdot}})G(\bm{\bm{\bm{V}}}_{S,\cdot})^{+}\bm{v}.
\end{align*}
Here, the last equality is because $\bm{\bm{V}}_{S,\cdot}$ has full
column rank, implying that $\bm{P}_{(\bm{\bm{\bm{V}}}_{S,\cdot})^{\top}}=\bm{I}_{r}$.

Then, since the projection matrix $\bm{I}_{|S|}-\bm{P}_{\widehat{\bm{V}}_{S,\cdot}}$
satisfies that $\big\Vert\bm{I}_{|S|}-\bm{P}_{\widehat{\bm{V}}_{S,\cdot}}\big\Vert_{2}=1$,
we obtain that 
\[
\big\Vert(\bm{P}_{\widehat{\bm{V}}_{S,\cdot}}-\bm{I}_{|S|})\bm{v}\big\Vert_{2}\leq\left\Vert G\right\Vert _{2}\big\Vert(\bm{\bm{\bm{V}}}_{S,\cdot})^{+}\bm{v}\big\Vert_{2}\lesssim\sqrt{|S|}\rho\sqrt{\frac{r}{n}}\log n\cdot\big\Vert(\bm{\bm{\bm{V}}}_{S,\cdot})^{+}\bm{v}\big\Vert_{2},
\]
where the last inequality uses the upper bound of $\left\Vert G\right\Vert _{2}=\big\Vert\widehat{\bm{V}}_{S,\cdot}\bm{R}_{V}-\bm{\bm{\bm{V}}}_{S,\cdot}\big\Vert_{2}$
obtained in Step 1.

\textit{Step 3 -- Decomposition and upper bounds for the higher-order terms.}

To establish the first-order approximation, we do the decomposition
as follows: 
\begin{align*}
 & (\bm{P}_{\widehat{\bm{V}}_{S,\cdot}}-\bm{I}_{|S|})\bm{v}\\
= & (\bm{I}_{|S|}-\bm{P}_{\bm{\bm{V}}_{S,\cdot}})G(\bm{\bm{\bm{V}}}_{S,\cdot})^{+}\bm{v}+(\bm{P}_{\bm{\bm{V}}_{S,\cdot}}-\bm{P}_{\widehat{\bm{V}}_{S,\cdot}})G(\bm{\bm{\bm{V}}}_{S,\cdot})^{+}\bm{v}\\
= & (\bm{I}_{|S|}-\bm{P}_{\bm{\bm{V}}_{S,\cdot}})(\bm{G}_{V})_{S,\cdot}(\bm{\bm{\bm{V}}}_{S,\cdot})^{+}\bm{v}\\
 & +\underset{=:\psi_{1}}{\underbrace{(\bm{I}_{|S|}-\bm{P}_{\bm{\bm{V}}_{S,\cdot}})(\bm{\Psi}_{V})_{S,\cdot}(\bm{\bm{\bm{V}}}_{S,\cdot})^{+}\bm{v}}}+\underset{=:\psi_{2}}{\underbrace{(\bm{P}_{\bm{\bm{V}}_{S,\cdot}}-\bm{P}_{\widehat{\bm{V}}_{S,\cdot}})G(\bm{\bm{\bm{V}}}_{S,\cdot})^{+}\bm{v}}}.
\end{align*}
Here, the last equation is because $G=\widehat{\bm{V}}_{S,\cdot}\bm{R}_{V}-\bm{\bm{\bm{V}}}_{S,\cdot}=(\widehat{\bm{V}}\bm{R}_{V}-\bm{\bm{\bm{V}}})_{S,\cdot}=(\bm{G}_{V})_{S,\cdot}+(\bm{\Psi}_{V})_{S,\cdot}$.
So, we get that $\bm{\Upsilon}_{P}=\psi_{1}+\psi_{2}$, and thus the
problems boil down to deriving the upper bound of the norms of $\psi_{1}$
and $\psi_{2}$.

For $\psi_{1}$, since $\big\Vert(\bm{I}_{|S|}-\bm{P}_{\bm{\bm{V}}_{S,\cdot}})\big\Vert_{2}=1$,
we have that 
\begin{align*}
\left\Vert \psi_{1}\right\Vert _{2} & \leq\big\Vert(\bm{\Psi}_{V})_{S,\cdot}\big\Vert_{2}\big\Vert(\bm{\bm{\bm{V}}}_{S,\cdot})^{+}\bm{v}\big\Vert_{2}\leq\sqrt{|S|}\left\Vert \bm{\Psi}_{V}\right\Vert _{2,\infty}\big\Vert(\bm{\bm{\bm{V}}}_{S,\cdot})^{+}\bm{v}\big\Vert_{2}\\
 & \overset{\text{(i)}}{\lesssim}\sqrt{|S|}\left(\rho^{2}\sqrt{\frac{r}{n}}\log^{3/2}n+\left(\rho^{2}+\rho\sqrt{\frac{r}{n}}\log n\right)\big\Vert\bm{V}\big\Vert_{2,\infty}\right)\big\Vert(\bm{\bm{\bm{V}}}_{S,\cdot})^{+}\bm{v}\big\Vert_{2}\\
 & \overset{\text{(ii)}}{\lesssim}\sqrt{|S|}\left(\rho^{2}\sqrt{\frac{r}{n}}\log^{3/2}n+\left(\rho^{2}+\rho\sqrt{\frac{r}{n}}\right)\log n\sqrt{\frac{\log n}{T}}\right)\big\Vert(\bm{\bm{\bm{V}}}_{S,\cdot})^{+}\bm{v}\big\Vert_{2}.
\end{align*}
where (i) uses the upper bound for $\left\Vert (\bm{\Psi}_{V})_{l,\cdot}\right\Vert _{2}$
in Theorem \ref{Thm UV 1st approx row-wise error}, and (ii) uses
the upper bound $\big\Vert\bm{V}\big\Vert_{2,\infty}\lesssim\sqrt{\frac{\log n}{T}}$
obtained in Lemma \ref{Lemma SVD BF good event}.

For $\psi_{2}$, using the bounds of $\big\Vert\bm{P}_{\bm{\bm{\bm{V}}}_{S,\cdot}}-\bm{P}_{\widehat{\bm{V}}_{S,\cdot}}\big\Vert_{2}$
and $\left\Vert G\right\Vert _{2}=\big\Vert\widehat{\bm{V}}_{S,\cdot}\bm{R}_{V}-\bm{\bm{\bm{V}}}_{S,\cdot}\big\Vert_{2}$
obtained in Step 1, we have that 
\begin{align*}
\left\Vert \psi_{2}\right\Vert _{2} & \leq\big\Vert\bm{P}_{\bm{\bm{\bm{V}}}_{S,\cdot}}-\bm{P}_{\widehat{\bm{V}}_{S,\cdot}}\big\Vert_{2}\left\Vert G\right\Vert _{2}\big\Vert(\bm{\bm{\bm{V}}}_{S,\cdot})^{+}\bm{v}\big\Vert_{2}\\
 & \lesssim\rho\sqrt{\frac{T}{n}}\sqrt{r}\log n\cdot\sqrt{|S|}\rho\sqrt{\frac{r}{n}}\log n\cdot\big\Vert(\bm{\bm{\bm{V}}}_{S,\cdot})^{+}\bm{v}\big\Vert_{2}=\sqrt{|S|}\rho^{2}\frac{\sqrt{T}}{n}r\log^{2}n\big\Vert(\bm{\bm{\bm{V}}}_{S,\cdot})^{+}\bm{v}\big\Vert_{2}.
\end{align*}

Then, we combine the above results for $\left\Vert \psi_{1}\right\Vert _{2}$
and $\left\Vert \psi_{2}\right\Vert _{2}$ to obtain that 
\begin{align*}
\left\Vert \bm{\Upsilon}_{P}\right\Vert _{2} & \leq\left\Vert \psi_{1}\right\Vert _{2}+\left\Vert \psi_{2}\right\Vert _{2}\\
 & \lesssim\sqrt{|S|}\left(\rho^{2}\sqrt{\frac{r}{n}}\log^{3/2}n+\left(\rho^{2}+\rho\sqrt{\frac{r}{n}}\right)\log n\sqrt{\frac{\log n}{T}}+\frac{\sqrt{T}}{n}\rho^{2}r\log^{2}n\right)\big\Vert(\bm{\bm{\bm{V}}}_{S,\cdot})^{+}\bm{v}\big\Vert_{2},\\
 & \lesssim\sqrt{|S|}\left(\rho^{2}\frac{1}{\sqrt{T}}+\rho\frac{1}{\sqrt{nT}}\right)r\log^{2}n\big\Vert(\bm{\bm{\bm{V}}}_{S,\cdot})^{+}\bm{v}\big\Vert_{2}\\
 & \lesssim\sqrt{|S|}(\rho^{2}\frac{1}{\sqrt{T}}+\rho\frac{1}{\sqrt{nT}})r\log^{2}n\big\Vert(\bm{\bm{\bm{V}}}_{S,\cdot})^{+}\bm{v}\big\Vert_{2},
\end{align*}
which is our desired result. 
\end{proof}
Then, we establish Gaussian approximation for the first-order term.
For simplicity of notations, we denote $\chi^{2}(n)$ as a Chi-square
random variable with degree of freedom equal to $n$, and denote by
$\chi_{\phi}^{2}(n)$ its $\phi$-quantile.

\begin{lemma} \label{Lemma true Chi-square stat}Suppose that the
assumptions in Lemma \ref{Lemma projection vector 1st-order approx}
hold. Then, under the null hypothesis $H_{0}$ in (\ref{Null H0 factor test subset}),
we have that, for any random variable $\zeta$ satisfying $\left|\zeta\right|\ll|S|$,
it holds 
\begin{equation}
\left|\mathbb{P}\left(\mathfrak{\mathcal{T}}(S,\bm{v})+\zeta\leq\chi_{1-\alpha}^{2}(|S|-r)\right)-(1-\alpha)\right|\lesssim\sqrt{|S|}(\rho\sqrt{\frac{n}{T}}+\frac{1}{\sqrt{T}})\kappa_{\varepsilon}r\log^{2}n+\frac{1}{\sqrt{|S|}}\left|\zeta\right|+s_{3},\label{Chi-square stat True key inequality}
\end{equation}
where 
\[
\mathfrak{\mathcal{T}}(S,\bm{v}):=\frac{1}{\phi}\bm{\bm{v}}^{\top}(\bm{I}_{|S|}-\bm{P}_{\widehat{\bm{V}}_{S,\cdot}})(\bm{I}_{|S|}-\bm{P}_{\bm{\bm{V}}_{S,\cdot}})(\bm{I}_{|S|}-\bm{P}_{\widehat{\bm{V}}_{S,\cdot}})\bm{\bm{v}},
\]
with 
\[
\phi:=\frac{1}{T}((\bm{\bm{\bm{V}}}_{S,\cdot})^{+}\bm{\bm{v}})^{\top}\bm{\Lambda}^{-1}\bm{U}^{\top}\bm{\Sigma}_{\varepsilon}\bm{U}\bm{\Lambda}^{-1}(\bm{\bm{\bm{V}}}_{S,\cdot})^{+}\bm{\bm{v}},
\]
and $s_{3}\geq0$ is a constant bounded by 
\[
s_{3}\lesssim\kappa_{\varepsilon}\sqrt{\frac{n}{T}}|S|^{3/2}\frac{\big\Vert\bm{\Sigma}_{\varepsilon}^{1/2}\bar{\bm{U}}\big\Vert_{2,\infty}}{\big\Vert\bm{\Sigma}_{\varepsilon}^{1/2}\big\Vert_{2}}.
\]
Further, if all the entries of the matrix $\bm{Z}$ are Gaussian,
i.e., the noise is Gaussian, then $s_{3}$ in the inequality (\ref{Chi-square stat True key inequality})
is equal to zero, i.e., the inequality (\ref{Chi-square stat True key inequality})
holds when $s_{3}=0$.

In addition, we have that 
\[
\phi\gtrsim\frac{1}{\kappa_{\varepsilon}^{2}n}\rho^{2}\big\Vert(\bm{\bm{\bm{V}}}_{S,\cdot})^{+}\bm{\bm{v}}\big\Vert_{2}^{2}.
\]

\end{lemma}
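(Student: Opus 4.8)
The plan is to reduce everything to a quadratic form in an approximately Gaussian vector. Write $\bm{P}:=\bm{P}_{\widehat{\bm{V}}_{S,\cdot}}$ and $\bm{P}_0:=\bm{P}_{\bm{V}_{S,\cdot}}$. Since $\bm{I}_{|S|}-\bm{P}_0$ is an orthogonal projection, $\mathfrak{\mathcal{T}}(S,\bm{v})=\phi^{-1}\|(\bm{I}_{|S|}-\bm{P}_0)(\bm{I}_{|S|}-\bm{P})\bm{v}\|_2^2$. On the good event of Lemmas \ref{Lemma SVD BF good event}, \ref{Lemma V subset good event}, and \ref{Lemma projection vector 1st-order approx}, the null hypothesis gives $(\bm{I}_{|S|}-\bm{P}_0)\bm{v}=0$, hence $(\bm{I}_{|S|}-\bm{P})\bm{v}=-(\bm{G}_P+\bm{\Upsilon}_P)$, and because $(\bm{I}_{|S|}-\bm{P}_0)\bm{G}_P=\bm{G}_P$ we get $\mathfrak{\mathcal{T}}(S,\bm{v})=\phi^{-1}\|\bm{G}_P+(\bm{I}_{|S|}-\bm{P}_0)\bm{\Upsilon}_P\|_2^2=\phi^{-1}\|\bm{G}_P\|_2^2+\zeta'$ with $\zeta':=\phi^{-1}\bigl(2\langle\bm{G}_P,(\bm{I}_{|S|}-\bm{P}_0)\bm{\Upsilon}_P\rangle+\|(\bm{I}_{|S|}-\bm{P}_0)\bm{\Upsilon}_P\|_2^2\bigr)$. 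So the task splits into: (a) showing $\phi^{-1}\|\bm{G}_P\|_2^2$ is close in law to $\chi^2(|S|-r)$ with error $s_3$; (b) bounding $|\zeta'|$ and proving the stated lower bound on $\phi$; (c) a quantile–perturbation argument absorbing the external $\zeta$.

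For (a), substitute $\bm{G}_P=(\bm{I}_{|S|}-\bm{P}_0)T^{-1/2}\bm{Z}_{\cdot,S}^\top\bm{a}$ with $\bm{a}:=\bm{\Sigma}_\varepsilon^{1/2}\bm{U}\bm{\Lambda}^{-1}\bm{w}$ and $\bm{w}:=(\bm{\bm{V}}_{S,\cdot})^{+}\bm{v}$, so that $\phi=T^{-1}\|\bm{a}\|_2^2$ and $\phi^{-1}\|\bm{G}_P\|_2^2=\bm{g}^\top(\bm{I}_{|S|}-\bm{P}_0)\bm{g}$ where $\bm{g}:=\bm{Z}_{\cdot,S}^\top\bm{a}/\|\bm{a}\|_2$. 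Conditionally on $\bm{F}$, the matrices $\bm{a}$ and $\bm{P}_0$ are deterministic, $\bm{I}_{|S|}-\bm{P}_0$ has rank $m:=|S|-r$ by Lemma \ref{Lemma V subset good event}, and the entries of $\bm{g}$ are independent, mean zero, unit variance, and $O(1)$-sub-Gaussian. Writing $\bm{I}_{|S|}-\bm{P}_0=\bm{O}\bm{O}^\top$ with $\bm{O}\in\mathbb{R}^{|S|\times m}$ having orthonormal columns, we have $\bm{g}^\top(\bm{I}_{|S|}-\bm{P}_0)\bm{g}=\|\bm{O}^\top\bm{g}\|_2^2$ and $\bm{O}^\top\bm{g}=\sum_{i=1}^N(a_i/\|\bm{a}\|_2)(\bm{Z}_{i,S}\bm{O})^\top$, a sum of $N$ independent mean-zero vectors in $\mathbb{R}^m$ with total covariance $\bm{I}_m$. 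When $\bm{Z}$ is Gaussian, $\bm{O}^\top\bm{g}\sim\mathcal{N}(0,\bm{I}_m)$ exactly, so $\phi^{-1}\|\bm{G}_P\|_2^2\sim\chi^2(m)$ and $s_3=0$. In general the multivariate Berry--Esseen bound (Theorem 1.1 in \citet{raivc2019multivariate}, which controls the discrepancy over all convex sets, in particular Euclidean balls) gives $\bigl|\mathbb{P}(\bm{g}^\top(\bm{I}_{|S|}-\bm{P}_0)\bm{g}\le x)-\mathbb{P}(\chi^2(m)\le x)\bigr|\lesssim m^{1/4}\sum_i(|a_i|/\|\bm{a}\|_2)^3\,\mathbb{E}\|\bm{Z}_{i,S}\bm{O}\|_2^3$; combining $\mathbb{E}\|\bm{Z}_{i,S}\bm{O}\|_2^3\lesssim|S|^{3/2}$ (since $\|\bm{O}\|_2=1$), $\sum_i|a_i|^3\le\|\bm{a}\|_\infty\|\bm{a}\|_2^2$, and $\|\bm{a}\|_\infty/\|\bm{a}\|_2\le\sqrt{\kappa_\varepsilon}\,\|\bm{\Sigma}_\varepsilon^{1/2}\bar{\bm{U}}\|_{2,\infty}/\|\bm{\Sigma}_\varepsilon^{1/2}\|_2$ (using $\bm{U}=\bar{\bm{U}}\bm{Q}$ from Lemma \ref{Lemma SVD BF good event} and $\|\bm{a}\|_2\ge\sqrt{\lambda_{\min}(\bm{\Sigma}_\varepsilon)}\|\bm{\Lambda}^{-1}\bm{w}\|_2$) produces the claimed bound on $s_3$; the Gaussian-noise sharpening is exactly the case $s_3=0$.

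For (b), the lower bound follows from $\phi=T^{-1}\|\bm{a}\|_2^2\ge T^{-1}\lambda_{\min}(\bm{\Sigma}_\varepsilon)\|\bm{\Lambda}^{-1}\bm{w}\|_2^2$ together with $\lambda_i\asymp\sigma_i$ and the definition $\rho=\sqrt{\|\bm{\Sigma}_\varepsilon\|_2}\sqrt{n}/(\sigma_r\sqrt{T})$. Using the bounds $\|\bm{G}_P\|_2\lesssim\sqrt{|S|}\,\rho\sqrt{r/n}\log n\,\|\bm{w}\|_2$ (from Lemma \ref{Lemma projection vector 1st-order approx}(ii), noting $\bm{G}_P=(\bm{P}_{\widehat{\bm{V}}_{S,\cdot}}-\bm{I}_{|S|})\bm{v}-\bm{\Upsilon}_P$) and $\|\bm{\Upsilon}_P\|_2\lesssim\sqrt{|S|}(\rho^2/\sqrt{T}+\rho/\sqrt{nT})r\log^2 n\,\|\bm{w}\|_2$, Cauchy--Schwarz, and the $\phi$ lower bound, one gets $|\zeta'|\lesssim|S|(\rho\sqrt{n/T}+1/\sqrt{T})\kappa_\varepsilon r\log^2 n$ times a factor that is $\ll1$ under the SNR assumption $\theta\gtrsim\delta^{-1}(n/T)\sqrt{|S|}\kappa_\varepsilon^4 r^2\log^4 n$; in particular $|\zeta'|\ll|S|$. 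For (c): since $|\zeta|,|\zeta'|\ll|S|$, on the good event one sandwiches $\mathbb{P}(\phi^{-1}\|\bm{G}_P\|_2^2+\zeta+\zeta'\le\chi^2_{1-\alpha}(m))$ between $\mathbb{P}\bigl(\phi^{-1}\|\bm{G}_P\|_2^2\le\chi^2_{1-\alpha}(m)\pm(|\zeta|+|\zeta'|)\bigr)$, compares with $\chi^2(m)$ via (a), and uses that the density of $\chi^2(m)$ is $O(m^{-1/2})$ so shifting the threshold by $\Delta$ changes its CDF by $O(\Delta/\sqrt{|S|})$; the pieces assemble into the terms $\sqrt{|S|}(\rho\sqrt{n/T}+1/\sqrt{T})\kappa_\varepsilon r\log^2 n$, $|\zeta|/\sqrt{|S|}$, and $s_3$, with the failure of the good events contributing only a lower-order $O(n^{-2})$.

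The main obstacle is controlling $s_3$ without dimension blow-up: organizing the Berry--Esseen sum over the $|S|$ time indices would force $\|\bm{O}\|_{2,\infty}\le1$ and give a useless $|S|^{5/4}$-type error with no incoherence gain, so it is essential to sum over the $N$ noise rows — where the coefficients $a_i/\|\bm{a}\|_2$ are genuinely small — and to pull out the quantity $\|\bm{\Sigma}_\varepsilon^{1/2}\bar{\bm{U}}\|_{2,\infty}/\|\bm{\Sigma}_\varepsilon^{1/2}\|_2$ that appears in the theorem's hypothesis. A secondary but pervasive point is bookkeeping the conditioning on $\bm{F}$, so that $\bm{a}$ and $\bm{P}_0$ are legitimately deterministic when the multivariate Berry--Esseen inequality is invoked.
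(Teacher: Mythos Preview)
Your strategy is the paper's: write $(\bm{I}-\bm{P})\bm{v}=\bm{G}_P+\bm{\Upsilon}_P$, reduce the leading piece to $\|\bm{O}^\top\bm{g}\|_2^2$ with $\bm{g}=\bm{Z}_{\cdot,S}^\top\bm{a}/\|\bm{a}\|_2$, sum the Berry--Esseen bound over the $N$ noise rows (this is the right organization, as you note), and then run a quantile--perturbation argument. Two points where your execution loses precision relative to the paper are worth flagging.

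First, you linearize at the \emph{squared} level, writing $\mathcal{T}=\phi^{-1}\|\bm{G}_P\|_2^2+\zeta'$ and then bounding the cross term via Cauchy--Schwarz and the crude estimate $\|\bm{G}_P\|_2\lesssim\sqrt{|S|}\,\rho\sqrt{r/n}\log n\,\|\bm{w}\|_2$ from Lemma~\ref{Lemma projection vector 1st-order approx}. Plugging this in gives $|\zeta'|/\sqrt{|S|}$ larger than the target by a factor $\kappa_\varepsilon\sqrt{r}\log n$; your claim that the extra factor is ``$\ll 1$ under the SNR assumption of Theorem~\ref{Thm factor test plug-in Chi-sq}'' is true but outside the hypotheses of the present lemma. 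The paper avoids this entirely by working at the \emph{square-root} level: set $\tau:=\sqrt{\mathcal{T}}-\phi^{-1/2}\|\bm{G}_P\|_2$, so that by the triangle inequality $|\tau|\le\phi^{-1/2}\|\bm{\Upsilon}_P\|_2$ with no cross term and no need to bound $\|\bm{G}_P\|_2$; the sandwiching in (c) is then done on $\sqrt{\chi^2(m)}$, whose density is $O(1)$, which is why $|\tau|$ (not $|\tau|/\sqrt{|S|}$) enters directly and yields exactly $\sqrt{|S|}(\rho\sqrt{n/T}+T^{-1/2})\kappa_\varepsilon r\log^2 n$.

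Second, for $s_3$ you invoke Theorem~1.1 of \citet{raivc2019multivariate} (all convex sets), which carries a factor $m^{1/4}\asymp|S|^{1/4}$. Since the sets here are Euclidean balls, the paper uses the sharper dimension-free bound for balls (their Lemma based on Theorem~1.3 and Example~1.2 of the same reference), which removes that factor and gives the stated $|S|^{3/2}$ dependence rather than $|S|^{7/4}$. Both issues are easily repaired and do not affect the method; they only cost constants in $r$, $\kappa_\varepsilon$, $\log n$, and $|S|^{1/4}$ relative to the lemma as stated.
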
 
\begin{proof}
Our starting point is the following inequality 
\begin{equation}
\mathbb{P}\left(\mathfrak{\mathcal{T}}(S,\bm{\bm{v}})\leq\chi_{1-\alpha}^{2}(|S|-r)-\left|\zeta\right|\right)\leq\mathbb{P}\left(\mathfrak{\mathcal{T}}(S,\bm{\bm{v}})+\zeta\leq\chi_{1-\alpha}^{2}(|S|-r)\right)\leq\mathbb{P}\left(\mathfrak{\mathcal{T}}(S,\bm{\bm{v}})\leq\chi_{1-\alpha}^{2}(|S|-r)+\left|\zeta\right|\right).\label{1st inequality for Chi-square stat plus a term}
\end{equation}
Then, we will prove both the upper bound and the lower bound are close
to $(1-\alpha)$.

We will use three steps. We conduct the proof conditioning on $\mathcal{E}_{0}\cap\mathcal{E}_{S}$.

\textit{Step 1 -- Bounding $\mathbb{P}\left(\mathfrak{\mathcal{T}}(S,\bm{\bm{v}})+\zeta\leq\chi_{1-\alpha}^{2}(|S|-r)\right)$.}

We already obtained the first-order approximation of $(\bm{I}_{|S|}-\bm{P}_{\widehat{\bm{V}}_{S,\cdot}})\bm{\bm{v}}=\bm{G}_{P}+\bm{\Upsilon}_{P}$
in Lemma \ref{Lemma projection vector 1st-order approx}. Indeed,
the first-order term $\bm{G}_{P}$ can be rewritten as 
\begin{equation}
\bm{G}_{P}=(\bm{I}_{|S|}-\bm{P}_{\bm{\bm{V}}_{S,\cdot}})\bm{u}\text{\qquad with\qquad}\bm{u}:=(\bm{G}_{V})_{S,\cdot}(\bm{\bm{\bm{V}}}_{S,\cdot})^{+}\bm{\bm{v}}=(T^{-1/2}\bm{E}^{\top}\bm{U}\bm{\Lambda}^{-1})_{S,\cdot}(\bm{\bm{\bm{V}}}_{S,\cdot})^{+}\bm{\bm{v}}.\label{projection 1st term def before iid sum}
\end{equation}
Later we will show that the $|S|$-dimensional random vector $\bm{u}$
is close to a Gaussian vector, so that the desired Chi-square law
results can be proven. But now let us focus on how this term appear
at the term $\mathfrak{\mathcal{T}}(S,\bm{\bm{v}})$. To do this,
we need to decompose $(\bm{I}_{|S|}-\bm{P}_{\bm{\bm{V}}_{S,\cdot}})$
as follows.

By Lemma \ref{Lemma V subset good event}, given the event $\mathcal{E}_{0}\cap\mathcal{E}_{S}$,
$\bm{\bm{V}}_{S,\cdot}$ has rank $r$, and the $|S|\times|S|$ projection
matrix $(\bm{I}_{|S|}-\bm{P}_{\bm{\bm{V}}_{S,\cdot}})$ is idempotent
with rank $(|S|-r)$. So the eigen-decomposition of $\bm{I}_{|S|}-\bm{P}_{\bm{\bm{V}}_{S,\cdot}}$
is given by 
\begin{equation}
\bm{I}_{|S|}-\bm{P}_{\bm{\bm{V}}_{S,\cdot}}=\bm{H}\bm{I}_{|S|-r}\bm{H}^{\top}=\bm{H}\bm{H}^{\top},\label{eigen decompose I-P_V}
\end{equation}
where $\bm{H}$ is a $\sigma(\bm{F})$-measurable $|S|\times(|S|-r)$
matrix, and $\bm{H}$ has orthonormal columns, i.e., $\bm{H}^{\top}\bm{H}=\bm{I}_{|S|-r}$.
So we obtain that 
\[
\bm{G}_{P}=(\bm{I}_{|S|}-\bm{P}_{\bm{\bm{V}}_{S,\cdot}})\bm{u}=\bm{H}\bm{H}^{\top}\bm{u},
\]
and thus $\mathfrak{\mathcal{T}}(S,\bm{\bm{v}})$ can be rewritten
as 
\begin{align}
\mathfrak{\mathcal{T}}(S,\bm{\bm{v}}) & =\frac{1}{\left\Vert \bm{m}\right\Vert _{2}^{2}}(\bm{G}_{P}+\bm{\Upsilon}_{P})^{\top}\bm{H}\bm{H}^{\top}(\bm{G}_{P}+\bm{\Upsilon}_{P})=\frac{1}{\left\Vert \bm{m}\right\Vert _{2}^{2}}(\bm{H}\bm{H}^{\top}\bm{u}+\bm{\Upsilon}_{P})^{\top}\bm{H}\bm{H}^{\top}(\bm{H}\bm{H}^{\top}\bm{u}+\bm{\Upsilon}_{P})\nonumber \\
 & =\frac{1}{\left\Vert \bm{m}\right\Vert _{2}^{2}}\big\Vert\bm{H}^{\top}(\bm{H}\bm{H}^{\top}\bm{u}+\bm{\Upsilon}_{P})\big\Vert_{2}^{2}=\frac{1}{\left\Vert \bm{m}\right\Vert _{2}^{2}}\big\Vert\bm{H}^{\top}\bm{u}+\bm{H}^{\top}\bm{\Upsilon}_{P}\big\Vert_{2}^{2}=\left\Vert \frac{1}{\left\Vert \bm{m}\right\Vert _{2}}\bm{H}^{\top}\bm{u}+\frac{1}{\left\Vert \bm{m}\right\Vert _{2}}\bm{H}^{\top}\bm{\Upsilon}_{P}\right\Vert _{2}^{2}.\label{Chi-square stat 2-norm equality}
\end{align}
The expression (\ref{Chi-square stat 2-norm equality}) shows how
$\frac{1}{\left\Vert \bm{m}\right\Vert _{2}}\bm{H}^{\top}\bm{u}$
plays a role in $\mathfrak{\mathcal{T}}(S,\bm{\bm{v}})$.

To further use the probabilities about $\frac{1}{\left\Vert \bm{m}\right\Vert _{2}}\big\Vert\bm{H}^{\top}\bm{u}\big\Vert_{2}$
to bound the probabilities about $\mathfrak{\mathcal{T}}(S,\bm{\bm{v}})$,
we define 
\begin{equation}
\tau:=\sqrt{\mathfrak{\mathcal{T}}(S,\bm{\bm{v}})}-\frac{1}{\left\Vert \bm{m}\right\Vert _{2}}\big\Vert\bm{H}^{\top}\bm{u}\big\Vert_{2}.\label{diff sqrt Chi-square stat and 2-norm Gaussian}
\end{equation}
Then, we obtain that, for the upper bound in (\ref{1st inequality for Chi-square stat plus a term})
\begin{align*}
\mathbb{P}\left(\mathfrak{\mathcal{T}}(S,\bm{\bm{v}})\leq\chi_{1-\alpha}^{2}(|S|-r)+\left|\zeta\right|\right) & =\mathbb{P}\left(\frac{1}{\left\Vert \bm{m}\right\Vert _{2}}\big\Vert\bm{H}^{\top}\bm{u}\big\Vert_{2}+\tau\leq\sqrt{\chi_{1-\alpha}^{2}(|S|-r)+\left|\zeta\right|}\right)\\
 & \leq\mathbb{P}\left(\frac{1}{\left\Vert \bm{m}\right\Vert _{2}}\big\Vert\bm{H}^{\top}\bm{u}\big\Vert_{2}\leq\sqrt{\chi_{1-\alpha}^{2}(|S|-r)+\left|\zeta\right|}+\left|\tau\right|\right),
\end{align*}
and similarly for the lower bound in (\ref{1st inequality for Chi-square stat plus a term})
we have that $\mathbb{P}(\mathfrak{\mathcal{T}}(S,\bm{\bm{v}})\leq\chi_{1-\alpha}^{2}(|S|-r)-\left|\zeta\right|)\geq\mathbb{P}\bigl(\frac{1}{\left\Vert \bm{m}\right\Vert _{2}}\big\Vert\bm{H}^{\top}\bm{u}\big\Vert_{2}\leq\sqrt{\chi_{1-\alpha}^{2}(|S|-r)-\left|\zeta\right|}-\left|\tau\right|\bigr)$.
Thus, we obtain by the inequality (\ref{1st inequality for Chi-square stat plus a term})
that 
\begin{align}
 & \mathbb{P}\left(\frac{1}{\left\Vert \bm{m}\right\Vert _{2}}\big\Vert\bm{H}^{\top}\bm{u}\big\Vert_{2}\leq\sqrt{\chi_{1-\alpha}^{2}(|S|-r)-\left|\zeta\right|}-\left|\tau\right|\right)\label{2nd inequality for Chi-square stat plus a term}\\
 & \leq\mathbb{P}\left(\mathfrak{\mathcal{T}}(S,\bm{\bm{v}})+\zeta\leq\chi_{1-\alpha}^{2}(|S|-r)\right)\nonumber \\
 & \leq\mathbb{P}\left(\frac{1}{\left\Vert \bm{m}\right\Vert _{2}}\big\Vert\bm{H}^{\top}\bm{u}\big\Vert_{2}\leq\sqrt{\chi_{1-\alpha}^{2}(|S|-r)+\left|\zeta\right|}+\left|\tau\right|\right).\nonumber 
\end{align}

\textit{Step 2 -- Expressing $\frac{1}{\left\Vert \bm{m}\right\Vert _{2}}\bm{H}^{\top}\bm{u}$
as a sum to show its proximity to a Gaussian vector.}

We define 
\begin{equation}
s_{3}:=\sup_{R\geq0}\left\vert \mathbb{P}\left(\frac{1}{\left\Vert \bm{m}\right\Vert _{2}}\big\Vert\bm{H}^{\top}\bm{u}\big\Vert_{2}\leq R\right)-\mathbb{P}(\chi^{2}(|S|-r)\leq R^{2})\right\vert .\label{distance iid sum ball prob and Gaussian}
\end{equation}
We look at the random vector $\bm{u}\in\mathbb{R}^{|S|}$ first. Conditioning
on $\bm{F}$, the random vector $\bm{u}$ defined in (\ref{projection 1st term def before iid sum})
can be written as a sum of independent and mean zero $|S|$-dimensional
vectors as follows: 
\begin{align}
\bm{u} & =(T^{-1/2}\bm{E}^{\top}\bm{U}\bm{\Lambda}^{-1})_{S,\cdot}(\bm{\bm{\bm{V}}}_{S,\cdot})^{+}\bm{\bm{v}}=(T^{-1/2}\bm{Z}^{\top}\bm{\Sigma}_{\varepsilon}^{1/2}\bm{U}\bm{\Lambda}^{-1})_{S,\cdot}(\bm{\bm{\bm{V}}}_{S,\cdot})^{+}\bm{\bm{v}}\nonumber \\
 & =T^{-1/2}(\bm{Z}^{\top})_{S,\cdot}\bm{\Sigma}_{\varepsilon}^{1/2}\bm{U}\bm{\Lambda}^{-1}(\bm{\bm{\bm{V}}}_{S,\cdot})^{+}\bm{\bm{v}}\nonumber \\
 & =(\bm{Z}_{\cdot,S})^{\top}\bm{m}=\sum_{k=1}^{N}m_{k}(\bm{Z}_{k,S})^{\top}\text{\qquad with\qquad}\bm{m}:=(m_{1},m_{2},\ldots,m_{N})^{\top}=T^{-1/2}\bm{\Sigma}_{\varepsilon}^{1/2}\bm{U}\bm{\Lambda}^{-1}(\bm{\bm{\bm{V}}}_{S,\cdot})^{+}\bm{\bm{v}}.\label{projection 1st term to iid sum}
\end{align}
So we obtain that $\frac{1}{\left\Vert \bm{m}\right\Vert _{2}}\bm{H}^{\top}\bm{u}$
can be expressed as a summation as follows, 
\begin{equation}
\frac{1}{\left\Vert \bm{m}\right\Vert _{2}}\bm{H}^{\top}\bm{u}=\sum_{k=1}^{N}\frac{1}{\left\Vert \bm{m}\right\Vert _{2}}m_{k}\bm{H}^{\top}(\bm{Z}_{k,S})^{\top}.\label{projection 1st term to iid sum final}
\end{equation}
Here, the matrix $\bm{H}$ defined in (\ref{eigen decompose I-P_V})
is $\sigma(\bm{F})$-measurable; the vector $\bm{m}$ defined in (\ref{projection 1st term to iid sum})
is also is $\sigma(\bm{F})$-measurable under the Null because $\bm{\bm{v}}$
is in the column space of $\bm{F}_{S,\cdot}$. By the independence
between $\bm{F}$ and $\bm{Z}$ in Assummption \ref{Assump_noise_Z_entries},
we obtain that, the vectors $\frac{1}{\left\Vert \bm{m}\right\Vert _{2}}m_{k}\bm{H}^{\top}(\bm{Z}_{k,S})^{\top}\in\mathbb{R}^{|S|-r}$
for $k=1,2,\ldots,N$ are independent and mean zero, and the covariance
matrix is given by 
\[
\text{cov}(\frac{1}{\left\Vert \bm{m}\right\Vert _{2}}\bm{H}^{\top}\bm{u}|\bm{F})=\frac{1}{\left\Vert \bm{m}\right\Vert _{2}^{2}}\sum_{k=1}^{N}m_{k}^{2}\bm{H}^{\top}\bm{H}=\frac{1}{\left\Vert \bm{m}\right\Vert _{2}^{2}}\left\Vert \bm{m}\right\Vert _{2}^{2}\bm{I}_{|S|-r}=\bm{I}_{|S|-r}.
\]
Here, we use the fact that $\text{cov}((\bm{Z}_{k,S})^{\top}|\bm{F})=\bm{I}_{|S|-r}$,
and $\bm{H}^{\top}\bm{H}=\bm{I}_{|S|-r}$.

Given the above analysis of $\frac{1}{\left\Vert \bm{m}\right\Vert _{2}}\bm{H}^{\top}\bm{u}$,
we are now ready to bound the term $s_{3}$ defined in (\ref{distance iid sum ball prob and Gaussian}):

(i) When all entries of $\bm{Z}$ are Gaussian, we have that, conditioning
on $\bm{F}$, the vector $\frac{1}{\left\Vert \bm{m}\right\Vert _{2}}\bm{H}^{\top}\bm{u}$
is also a Gaussian vector because it is a linear combination of Gaussian
vectors. Since $\text{cov}(\frac{1}{\left\Vert \bm{m}\right\Vert _{2}}\bm{H}^{\top}\bm{u}|\bm{F})=\bm{I}_{|S|-r}$,
and $\bm{I}_{|S|-r}$ is independent with $\bm{F}$, we obtain that
the unconditional law of $\frac{1}{\left\Vert \bm{m}\right\Vert _{2}}\bm{H}^{\top}\bm{u}$
the $(|S|-r)$-dimensional standard Gaussian law $\mathcal{N}(0,\bm{I}_{|S|-r})$,
implying that the conditional law of $\frac{1}{\left\Vert \bm{m}\right\Vert _{2}}\bm{u}$
given $\bm{F}$ is the $|S|$-dimensional standard Gaussian law $\mathcal{N}(0,\bm{I}_{|S|})$.
Note that $\chi^{2}(|S|-r)$ has the same distribution with the 2-norm
of $\mathcal{N}(0,\bm{I}_{|S|-r})$. Thus, we have that $s_{3}=0$
when all entries of $\bm{Z}$ are Gaussian.

(ii) When the entries of $\bm{Z}$ are sub-Gaussian, we obtain by
Lemma \ref{Lemma Berry-Esseen balls} that 
\[
s_{3}\lesssim\sum_{k=1}^{N}\mathbb{E}\left[\big\Vert\frac{1}{\left\Vert \bm{m}\right\Vert _{2}}\bm{H}^{\top}m_{k}(\bm{Z}_{k,S})^{\top}\big\Vert_{2}^{3}|\bm{F}\right]=\sum_{k=1}^{N}\frac{1}{\left\Vert \bm{m}\right\Vert _{2}^{3}}\mathbb{E}\left[\big\Vert\bm{H}^{\top}m_{k}(\bm{Z}_{k,S})^{\top}\big\Vert_{2}^{3}|\bm{F}\right].
\]
Then, since $\left\Vert \bm{H}\right\Vert _{2}=1$ owing to $\bm{H}^{\top}\bm{H}=\bm{I}_{|S|-r}$,
we have that 
\begin{align*}
\mathbb{E}\left[\big\Vert\bm{H}^{\top}m_{k}(\bm{Z}_{k,S})^{\top}\big\Vert_{2}^{3}|\bm{F}\right] & \leq\mathbb{E}\left[\big\Vert m_{k}(\bm{Z}_{k,S})^{\top}\big\Vert_{2}^{3}|\bm{F}\right]=\mathbb{E}\left[\left|m_{k}\right|^{3}\big\Vert\bm{Z}_{k,S}\big\Vert_{2}^{3}|\bm{F}\right]\\
 & \overset{\text{(i)}}{=}\left|m_{k}\right|^{3}\mathbb{E}\left[\big\Vert\bm{Z}_{k,S}\big\Vert_{2}^{3}|\bm{F}\right]\overset{\text{(ii)}}{\leq}\left|m_{k}\right|^{2}\mathbb{E}\left[\big\Vert\bm{Z}_{k,S}\big\Vert_{2}^{3}\right]\max_{1\leq i\leq N}\left|m_{i}\right|,
\end{align*}
where (i) is because $m_{k}$ is $\sigma(\bm{F})$-measurable, (ii)
is because $\bm{Z}$ is independent with $\bm{F}$. On the one hand,
we have that 
\[
\mathbb{E}\left[\big\Vert\bm{Z}_{k,S}\big\Vert_{2}^{3}\right]=\mathbb{E}\left[\left(\sum_{i=1}^{|S|}\left|\bm{Z}_{k,i}\right|^{2}\right)^{3/2}\right]\overset{\text{(i)}}{\leq}\mathbb{E}\left[\left(\sum_{i=1}^{|S|}\left|\bm{Z}_{k,i}\right|^{3}\right)\left(\sum_{i=1}^{|S|}1\right)^{1/2}\right]=|S|^{3/2}\mathbb{E}\left[\left|\bm{Z}_{k,i}\right|^{3}\right]\overset{\text{(ii)}}{\lesssim}|S|^{3/2},
\]
where (i) uses Holder's inequality, and (ii) uses the fact that the
entries of $\bm{Z}$ are sub-Gaussian and their sub-Gaussian norms
satisfy $\left\Vert \bm{Z}_{k,i}\right\Vert _{\psi_{2}}=O(1)$ under
Assumption \ref{Assump_noise_Z_entries}. Then, On the other hand,
for $\max_{1\leq i\leq N}\left|m_{i}\right|$, by definition of the
vector $\bm{m}$ in (\ref{projection 1st term to iid sum}), we have
that 
\begin{align*}
\max_{1\leq i\leq N}\left|m_{i}\right| & =\big\Vert T^{-1/2}\bm{\Sigma}_{\varepsilon}^{1/2}\bm{U}\bm{\Lambda}^{-1}(\bm{\bm{\bm{V}}}_{S,\cdot})^{+}\bm{\bm{v}}\big\Vert_{2,\infty}\leq\frac{1}{\sqrt{T}}\big\Vert\bm{\Sigma}_{\varepsilon}^{1/2}\bm{U}\big\Vert_{2,\infty}\big\Vert\bm{\Lambda}^{-1}(\bm{\bm{\bm{V}}}_{S,\cdot})^{+}\bm{\bm{v}}\big\Vert_{2}\\
 & \leq\frac{1}{\sqrt{T}}\big\Vert\bm{\Sigma}_{\varepsilon}^{1/2}\bm{U}\big\Vert_{2,\infty}\big\Vert\bm{\Lambda}^{-1}\big\Vert_{2}\big\Vert(\bm{\bm{\bm{V}}}_{S,\cdot})^{+}\bm{\bm{v}}\big\Vert_{2}\overset{\text{(i)}}{\leq}\frac{1}{\sqrt{T}}\big\Vert\bm{\Sigma}_{\varepsilon}^{1/2}\bar{\bm{U}}\big\Vert_{2,\infty}\left\Vert \bm{Q}\right\Vert _{2}\frac{1}{\lambda_{r}}\big\Vert(\bm{\bm{\bm{V}}}_{S,\cdot})^{+}\bm{\bm{v}}\big\Vert_{2}\\
 & \overset{\text{(ii)}}{\lesssim}\frac{1}{\sqrt{T}}\big\Vert\bm{\Sigma}_{\varepsilon}^{1/2}\bar{\bm{U}}\big\Vert_{2,\infty}\frac{1}{\sigma_{r}}\big\Vert(\bm{\bm{\bm{V}}}_{S,\cdot})^{+}\bm{\bm{v}}\big\Vert_{2},
\end{align*}
where (i) uses $\bm{U}=\bar{\bm{U}}\bm{Q}$ in Lemma \ref{Lemma SVD BF good event},
(ii) uses the facts that $\lambda_{i}\asymp\sigma_{i}$, and $\bm{Q}\in\mathcal{O}^{r\times r}$
is a rotation matrix so that $\left\Vert \bm{Q}\right\Vert _{2}=1$.
Combining the above upper bounds of $\mathbb{E}[\big\Vert\bm{Z}_{k,S}\big\Vert_{2}^{3}]$
and $\max_{1\leq i\leq N}\left|m_{i}\right|$, we obtain that 
\[
s_{3}\leq\mathbb{E}\left[\big\Vert\bm{Z}_{k,S}\big\Vert_{2}^{3}\right]\max_{1\leq i\leq N}\left|m_{i}\right|\sum_{k=1}^{N}\frac{1}{\left\Vert \bm{m}\right\Vert _{2}^{3}}\left|m_{k}\right|^{2}\lesssim|S|^{3/2}\frac{1}{\sqrt{T}}\big\Vert\bm{\Sigma}_{\varepsilon}^{1/2}\bm{U}\big\Vert_{2,\infty}\frac{1}{\sigma_{r}}\big\Vert(\bm{\bm{\bm{V}}}_{S,\cdot})^{+}\bm{\bm{v}}\big\Vert_{2}\frac{1}{\left\Vert \bm{m}\right\Vert _{2}}.
\]

The lower bound of $\left\Vert \bm{m}\right\Vert _{2}$ is derived
as follows. Recall that $\bm{m}=T^{-1/2}\bm{\Sigma}_{\varepsilon}^{1/2}\bm{U}\bm{\Lambda}^{-1}(\bm{\bm{\bm{V}}}_{S,\cdot})^{+}\bm{\bm{v}}$
as defined in (\ref{projection 1st term to iid sum}). So, we obtain
that 
\begin{align*}
\left\Vert \bm{m}\right\Vert _{2} & =\sqrt{\bm{m}^{\top}\bm{m}}=\frac{1}{\sqrt{T}}\bigl[((\bm{\bm{\bm{V}}}_{S,\cdot})^{+}\bm{\bm{v}})^{\top}\bm{\Lambda}^{-1}\bm{U}\bm{\Sigma}_{\varepsilon}\bm{U}\bm{\Lambda}^{-1}(\bm{\bm{\bm{V}}}_{S,\cdot})^{+}\bm{\bm{v}}\bigr]^{1/2}\\
 & \geq\frac{1}{\sqrt{T}}\sqrt{\lambda_{\text{min}}\bigl(\bm{\Lambda}^{-1}\bm{U}^{\top}\bm{\Sigma}_{\varepsilon}\bm{U}\bm{\Lambda}^{-1}\bigr)}\big\Vert(\bm{\bm{\bm{V}}}_{S,\cdot})^{+}\bm{\bm{v}}\big\Vert_{2}.
\end{align*}
Next, since $\bm{U}^{\top}\bm{U}=\bm{I}_{r}$ and $\bm{\Lambda}=\mathsf{diag}(\lambda_{1},\lambda_{2},\ldots,\lambda_{r})$
satisfies $\lambda_{i}\asymp\sigma_{i}$ as shown in Lemma \ref{Lemma SVD BF good event},
we obtain that 
\begin{align*}
\lambda_{\text{min}}\bigl(\bm{\Lambda}^{-1}\bm{U}^{\top}\bm{\Sigma}_{\varepsilon}\bm{U}\bm{\Lambda}^{-1}\bigr) & =\inf_{\left\Vert \bm{x}\right\Vert _{2}=1}\bm{x}^{\top}\bm{\Lambda}^{-1}\bm{U}^{\top}\bm{\Sigma}_{\varepsilon}\bm{U}\bm{\Lambda}^{-1}\bm{x}\geq\frac{1}{\lambda_{r}^{2}}\inf_{\left\Vert \bm{x}\right\Vert _{2}=1}\bm{x}^{\top}\bm{U}^{\top}\bm{\Sigma}_{\varepsilon}\bm{U}\bm{x}\\
 & \gtrsim\frac{1}{\sigma_{r}^{2}}\inf_{\left\Vert \bm{x}\right\Vert _{2}=1}(\bm{U}\bm{x})^{\top}\bm{\Sigma}_{\varepsilon}\bm{U}\bm{x}\geq\frac{1}{\sigma_{r}^{2}}\lambda_{\text{min}}(\bm{\Sigma}_{\varepsilon}).
\end{align*}
Recall that $\rho:=\sqrt{\left\Vert \bm{\Sigma}_{\varepsilon}\right\Vert _{2}}\sqrt{n}/(\sigma_{r}\sqrt{T})$.
So, we conclude that 
\begin{align}
\left\Vert \bm{m}\right\Vert _{2} & \gtrsim\frac{1}{\sqrt{T}}\frac{1}{\sigma_{r}}\sqrt{\lambda_{\text{min}}(\bm{\Sigma}_{\varepsilon})}\big\Vert(\bm{\bm{\bm{V}}}_{S,\cdot})^{+}\bm{\bm{v}}\big\Vert_{2}=\frac{1}{\sqrt{n}}\frac{1}{\kappa_{\varepsilon}}\rho\big\Vert(\bm{\bm{\bm{V}}}_{S,\cdot})^{+}\bm{\bm{v}}\big\Vert_{2},\label{projection iid sum weight 2-norm}
\end{align}
where we use the fact that $\left\Vert \bm{\Sigma}_{\varepsilon}\right\Vert _{2}=\lambda_{\text{max}}(\bm{\Sigma}_{\varepsilon})$
and $\kappa_{\varepsilon}=\lambda_{\text{max}}(\bm{\Sigma}_{\varepsilon})/\lambda_{\text{min}}(\bm{\Sigma}_{\varepsilon})$
is the condition number of $\bm{\Sigma}_{\varepsilon}$. So, when
the entries of $\bm{Z}$ are sub-Gaussian, we obtain that 
\[
s_{3}\lesssim|S|^{3/2}\frac{1}{\sqrt{T}}\big\Vert\bm{\Sigma}_{\varepsilon}^{1/2}\bm{U}\big\Vert_{2,\infty}\frac{1}{\sigma_{r}}\big\Vert(\bm{\bm{\bm{V}}}_{S,\cdot})^{+}\bm{\bm{v}}\big\Vert_{2}\frac{1}{\frac{1}{\sqrt{n}}\frac{1}{\kappa_{\varepsilon}}\rho\big\Vert(\bm{\bm{\bm{V}}}_{S,\cdot})^{+}\bm{\bm{v}}\big\Vert_{2}}=\kappa_{\varepsilon}\sqrt{\frac{n}{T}}|S|^{3/2}\frac{\big\Vert\bm{\Sigma}_{\varepsilon}^{1/2}\bar{\bm{U}}\big\Vert_{2,\infty}}{\big\Vert\bm{\Sigma}_{\varepsilon}^{1/2}\big\Vert_{2}}.
\]

By definition of $s_{3}$, we obtain by the inequality (\ref{2nd inequality for Chi-square stat plus a term})
in Step 1 for $\mathbb{P}\left(\mathfrak{\mathcal{T}}(S,\bm{\bm{v}})+\zeta\leq\chi_{1-\alpha}^{2}(|S|-r)\right)$
that 
\begin{align}
 & \mathbb{P}\left(\sqrt{\chi^{2}(|S|-r)}\leq\sqrt{\chi_{1-\alpha}^{2}(|S|-r)-\left|\zeta\right|}-\left|\tau\right|\right)-s_{3}\label{3rd inequality for Chi-square stat plus a term}\\
 & \leq\mathbb{P}\left(\mathfrak{\mathcal{T}}(S,\bm{\bm{v}})+\zeta\leq\chi_{1-\alpha}^{2}(|S|-r)\right)\nonumber \\
 & \leq\mathbb{P}\left(\sqrt{\chi^{2}(|S|-r)}\leq\sqrt{\chi_{1-\alpha}^{2}(|S|-r)+\left|\zeta\right|}+\left|\tau\right|\right)+s_{3}.\nonumber 
\end{align}

\textit{Step 3 -- Establishing chi-squared distributional characterization.}

By Lemma \ref{Lemma Chi-square quantile order}, we have that $\chi_{1-\alpha}^{2}(|S|-r)\asymp(|S|-r)\asymp|S|$,
where we use the fact that $r\ll|S|$ and thus $(|S|-r)\asymp|S|$.
Since we assume that $\left|\zeta\right|\ll|S|$, we have that $\left|\zeta\right|\lesssim\chi_{1-\alpha}^{2}(|S|-r)$.
For $a>0$ and $b\in\mathbb{R}$ satisfying $|b|\ll a$, we have that
$|\sqrt{a+b}-\sqrt{a}|=\frac{|b|}{\sqrt{a+b}+\sqrt{a}}\lesssim\frac{|b|}{\sqrt{a}}$.
So we obtain that $\sqrt{\chi_{1-\alpha}^{2}(|S|-r)-\left|\zeta\right|}\geq\sqrt{\chi_{1-\alpha}^{2}(|S|-r)}-g$
and $\sqrt{\chi_{1-\alpha}^{2}(|S|-r)+\left|\zeta\right|}\leq\sqrt{\chi_{1-\alpha}^{2}(|S|-r)}+g$,
where $g>0$ and $g$ satisfies that $g\lesssim\frac{1}{\sqrt{\chi_{1-\alpha}^{2}(|S|-r)}}\left|\zeta\right|\lesssim\frac{1}{\sqrt{|S|}}\left|\zeta\right|$.
Then, using the inequality (\ref{3rd inequality for Chi-square stat plus a term}),
we conclude that, for $h=g+\left|\tau\right|$, it holds 
\begin{align}
 & \mathbb{P}\left(\sqrt{\chi^{2}(|S|-r)}\leq\sqrt{\chi_{1-\alpha}^{2}(|S|-r)}-h\right)-s_{3}\label{4th inequality for Chi-square stat plus a term}\\
 & \leq\mathbb{P}\left(\mathfrak{\mathcal{T}}(S,\bm{\bm{v}})+\zeta\leq\chi_{1-\alpha}^{2}(|S|-r)\right)\nonumber \\
 & \leq\mathbb{P}\left(\sqrt{\chi^{2}(|S|-r)}\leq\sqrt{\chi_{1-\alpha}^{2}(|S|-r)}+h\right)+s_{3}.\nonumber 
\end{align}
Here, $h>0$ satisfies that $h\lesssim\left|\tau\right|+\frac{1}{\sqrt{|S|}}\left|\zeta\right|$.

Note that, for any $R>0$, we have that $\mathbb{P}(\sqrt{\chi^{2}(n)}\leq R)=\mathbb{P}(\mathcal{N}(0,\bm{I}_{n})\in B(R))$
where $B(R)$ is the ball. So, we obtain by Lemma \ref{Lemma Chi-square two balls}
that 
\begin{align*}
 & \left|\mathbb{P}\left(\sqrt{\chi^{2}(|S|-r)}\leq\sqrt{\chi_{1-\alpha}^{2}(|S|-r)}\pm h\right)-(1-\alpha)\right|\\
 & =\left|\mathbb{P}\left(\sqrt{\chi^{2}(|S|-r)}\leq\sqrt{\chi_{1-\alpha}^{2}(|S|-r)}\pm h\right)-\mathbb{P}\left(\sqrt{\chi^{2}(|S|-r)}\leq\sqrt{\chi_{1-\alpha}^{2}(|S|-r)}\right)\right|\lesssim h.
\end{align*}
Thus, using the two-side bounds for $\mathbb{P}\left(\mathfrak{\mathcal{T}}(S,\bm{\bm{v}})+\zeta\leq\chi_{1-\alpha}^{2}(|S|-r)\right)$
in the inequality (\ref{4th inequality for Chi-square stat plus a term}),
we obtain that 
\[
\left|\mathbb{P}\left(\mathfrak{\mathcal{T}}(S,\bm{\bm{v}})+\zeta\leq\chi_{1-\alpha}^{2}(|S|-r)\right)-(1-\alpha)\right|\lesssim h+s_{3}\lesssim\left|\tau\right|+\frac{1}{\sqrt{|S|}}\left|\zeta\right|+s_{3}.
\]
So, to prove the desired result in the lemma, it suffices to establish
the upper bound for $\left|\tau\right|$, which is defined in (\ref{diff sqrt Chi-square stat and 2-norm Gaussian}).

\textit{Step 4 -- Deriving an upper bound for $\left|\tau\right|$.}

For $\tau$ defined in (\ref{diff sqrt Chi-square stat and 2-norm Gaussian}),
using (\ref{Chi-square stat 2-norm equality}) and the triangle inequality,
we obtain that 
\[
|\tau|=\frac{1}{\left\Vert \bm{m}\right\Vert _{2}}\left|\big\Vert\bm{H}^{\top}\bm{u}+\bm{H}^{\top}\bm{\Upsilon}_{P}\big\Vert_{2}-\big\Vert\bm{H}^{\top}\bm{u}\big\Vert_{2}\right|\leq\frac{1}{\left\Vert \bm{m}\right\Vert _{2}}\big\Vert\bm{H}^{\top}\bm{\Upsilon}_{P}\big\Vert_{2}\leq\frac{1}{\left\Vert \bm{m}\right\Vert _{2}}\left\Vert \bm{H}\right\Vert _{2}\left\Vert \bm{\Upsilon}_{P}\right\Vert _{2}=\frac{1}{\left\Vert \bm{m}\right\Vert _{2}}\left\Vert \bm{\Upsilon}_{P}\right\Vert _{2},
\]
where the last equality uses the fact that $\big\Vert\bm{H}\big\Vert_{2}=1$
since $\bm{H}^{\top}\bm{H}=\bm{I}_{|S|-r}$. Combining the lower bound
of $\left\Vert \bm{m}\right\Vert _{2}$ in (\ref{projection iid sum weight 2-norm})
and the upper bound of $\left\Vert \bm{\Upsilon}_{P}\right\Vert _{2}$
in Lemma \ref{Lemma projection vector 1st-order approx}, we obtain
that 
\begin{align*}
|\tau| & \leq\frac{1}{\left\Vert \bm{m}\right\Vert _{2}}\left\Vert \bm{\Upsilon}_{P}\right\Vert _{2}\\
 & \lesssim\frac{\kappa_{\varepsilon}\sqrt{n}}{\rho\big\Vert(\bm{\bm{\bm{V}}}_{S,\cdot})^{+}\bm{\bm{v}}\big\Vert_{2}}\sqrt{|S|}\left(\rho^{2}\frac{1}{\sqrt{n}}\sqrt{\frac{n}{T}}+\rho\frac{1}{\sqrt{nT}}\right)r\log^{2}n\big\Vert(\bm{\bm{\bm{V}}}_{S,\cdot})^{+}\bm{\bm{v}}\big\Vert_{2}\\
 & =\lesssim\kappa_{\varepsilon}\sqrt{|S|}\left(\rho\sqrt{\frac{n}{T}}+\frac{1}{\sqrt{T}}\right)r\log^{2}n.
\end{align*}
Finally, the lower bound of $\phi$ can be proven using the lower
bound of $\left\Vert \bm{m}\right\Vert _{2}$ in (\ref{projection iid sum weight 2-norm})
as follows, 
\begin{align*}
\phi & =\left\Vert \bm{m}\right\Vert _{2}^{2}\gtrsim\left(\frac{1}{\sqrt{n}}\frac{1}{\kappa_{\varepsilon}}\rho\big\Vert(\bm{\bm{\bm{V}}}_{S,\cdot})^{+}\bm{\bm{v}}\big\Vert_{2}\right)^{2}=\frac{1}{\kappa_{\varepsilon}^{2}}\frac{1}{n}\rho^{2}\big\Vert(\bm{\bm{\bm{V}}}_{S,\cdot})^{+}\bm{\bm{v}}\big\Vert_{2}^{2},
\end{align*}
which is our desired result. 
\end{proof}

\subsection{Proof of Theorem \ref{Thm factor test plug-in Chi-sq}}

To prove the desired results, we use Lemma \ref{Lemma true Chi-square stat}
with the small perturbation $\zeta=\mathfrak{\widehat{\mathcal{T}}}(S,\bm{\bm{v}})-\mathcal{T}(S,\bm{\bm{v}})$.
Then, the problem boils down to deriving the upper bound for $\frac{1}{\sqrt{|S|}}|\zeta|$.
We do the following decomposition: 
\begin{align*}
\zeta=\mathfrak{\widehat{\mathcal{T}}}(S,\bm{\bm{v}})-\mathcal{T}(S,\bm{\bm{v}}) & =r_{1}+r_{2},
\end{align*}
where $r_{1}$ and $r_{2}$ are respectively defined by 
\[
r_{1}=\frac{1}{\phi}\bm{\bm{v}}^{\top}(\bm{I}_{|S|}-\bm{P}_{\widehat{\bm{V}}_{S,\cdot}})\bm{\bm{v}}-\mathcal{T}(S,\bm{\bm{v}})\text{\qquad and\qquad}r_{2}=\mathfrak{\widehat{\mathcal{T}}}(S,\bm{\bm{v}})-\frac{1}{\phi}\bm{\bm{v}}^{\top}(\bm{I}_{|S|}-\bm{P}_{\widehat{\bm{V}}_{S,\cdot}})\bm{\bm{v}}.
\]

\textit{Step 1 -- Upper bound for $\frac{1}{\sqrt{|S|}}|r_{1}|$.}

Since $(\bm{I}_{|S|}-\bm{P}_{\widehat{\bm{V}}_{S,\cdot}})$ is idempotent,
we can rewrite $r_{1}$ as 
\begin{align*}
r_{1} & =\frac{1}{\phi}\bm{\bm{v}}^{\top}(\bm{I}_{|S|}-\bm{P}_{\widehat{\bm{V}}_{S,\cdot}})(\bm{I}_{|S|}-\bm{P}_{\widehat{\bm{V}}_{S,\cdot}})(\bm{I}_{|S|}-\bm{P}_{\widehat{\bm{V}}_{S,\cdot}})\bm{\bm{v}}-\frac{1}{\phi}\bm{\bm{v}}^{\top}(\bm{I}_{|S|}-\bm{P}_{\widehat{\bm{V}}_{S,\cdot}})(\bm{I}_{|S|}-\bm{P}_{\bm{\bm{V}}_{S,\cdot}})(\bm{I}_{|S|}-\bm{P}_{\widehat{\bm{V}}_{S,\cdot}})\bm{\bm{v}}\\
 & =\frac{1}{\phi}\bm{\bm{v}}^{\top}(\bm{I}_{|S|}-\bm{P}_{\widehat{\bm{V}}_{S,\cdot}})(\bm{P}_{\bm{\bm{V}}_{S,\cdot}}-\bm{P}_{\widehat{\bm{V}}_{S,\cdot}})(\bm{I}_{|S|}-\bm{P}_{\widehat{\bm{V}}_{S,\cdot}})\bm{\bm{v}}.
\end{align*}
Then we get 
\begin{align*}
|r_{1}| & \leq\frac{1}{\phi}\big\Vert\bm{P}_{\bm{\bm{V}}_{S,\cdot}}-\bm{P}_{\widehat{\bm{V}}_{S,\cdot}}\big\Vert_{2}\big\Vert(\bm{I}_{|S|}-\bm{P}_{\widehat{\bm{V}}_{S,\cdot}})\bm{\bm{v}}\big\Vert_{2}^{2}\\
 & \overset{\text{(i)}}{\lesssim}\frac{\kappa_{\varepsilon}^{2}n}{\rho^{2}\big\Vert(\bm{\bm{\bm{V}}}_{S,\cdot})^{+}\bm{\bm{v}}\big\Vert_{2}^{2}}\rho\sqrt{\frac{T}{n}}\sqrt{r}\log^{3/2}n\left(\sqrt{|S|}\rho\sqrt{\frac{r}{n}}\log n\big\Vert(\bm{\bm{\bm{V}}}_{S,\cdot})^{+}\bm{\bm{v}}\big\Vert_{2}\right)^{2}\\
 & \leq\kappa_{\varepsilon}^{2}|S|\rho\sqrt{\frac{T}{n}}r^{3/2}\log^{7/2}n,
\end{align*}
where (i) uses the results in Lemmas \ref{Lemma projection vector 1st-order approx}
and \ref{Lemma true Chi-square stat}. So we obtain 
\[
\frac{|r_{1}|}{\sqrt{|S|}}\lesssim\kappa_{\varepsilon}^{2}\rho\sqrt{|S|}\sqrt{\frac{T}{n}}r^{3/2}\log^{7/2}n\ll1.
\]

\textit{Step 2 -- Upper bound for $\frac{1}{\sqrt{|S|}}|r_{2}|$.}

Since $(\bm{I}_{|S|}-\bm{P}_{\widehat{\bm{V}}_{S,\cdot}})$ is idempotent,
we can rewrite $r_{2}$ as 
\begin{align*}
r_{2} & =\frac{1}{\widehat{\phi}}\bm{\bm{v}}^{\top}(\bm{I}_{|S|}-\bm{P}_{\widehat{\bm{V}}_{S,\cdot}})(\bm{I}_{|S|}-\bm{P}_{\widehat{\bm{V}}_{S,\cdot}})\bm{\bm{v}}-\frac{1}{\phi}\bm{\bm{v}}^{\top}(\bm{I}_{|S|}-\bm{P}_{\widehat{\bm{V}}_{S,\cdot}})(\bm{I}_{|S|}-\bm{P}_{\widehat{\bm{V}}_{S,\cdot}})\bm{\bm{v}}\\
 & =(\frac{1}{\widehat{\phi}}-\frac{1}{\phi})\big\Vert(\bm{I}_{|S|}-\bm{P}_{\widehat{\bm{V}}_{S,\cdot}})\bm{\bm{v}}\big\Vert_{2}^{2}=\frac{\phi-\widehat{\phi}}{\widehat{\phi}}\frac{\big\Vert(\bm{I}_{|S|}-\bm{P}_{\widehat{\bm{V}}_{S,\cdot}})\bm{\bm{v}}\big\Vert_{2}^{2}}{\phi}.
\end{align*}
Then, using the results in Lemmas \ref{Lemma projection vector 1st-order approx}
and \ref{Lemma true Chi-square stat}, we obtain that 
\begin{equation}
|r_{2}|\lesssim\frac{1}{\widehat{\phi}}\bigl|\widehat{\phi}-\phi\bigl|\cdot\frac{\left(\sqrt{|S|}\rho\sqrt{\frac{r}{n}}\log n\big\Vert(\bm{\bm{\bm{V}}}_{S,\cdot})^{+}\bm{\bm{v}}\big\Vert_{2}\right)^{2}}{\frac{1}{\kappa_{\varepsilon}^{2}}\frac{1}{n}\rho^{2}\big\Vert(\bm{\bm{\bm{V}}}_{S,\cdot})^{+}\bm{\bm{v}}\big\Vert_{2}^{2}}\leq\frac{1}{\widehat{\phi}}\bigl|\widehat{\phi}-\phi\bigl|\cdot|S|\kappa_{\varepsilon}^{2}r\log^{2}n.\label{scalar prepare bound for Chi-square test}
\end{equation}
So, to derive the upper bound for $|r_{2}|$, we need to derive the
upper bound of $\frac{1}{\widehat{\phi}}\bigl|\widehat{\phi}-\phi\bigl|$.

Note that $\widehat{\phi}$ and $\phi$ are nonnegative owing the
property of positive semidefinite matrix. We will derive the upper
bound for $\bigl|\widehat{\phi}-\phi\bigl|$ first, and then use it
and the lower bound of $\phi$ obtained in Lemma \ref{Lemma true Chi-square stat}
to get the lower bound of $\widehat{\phi}$, because $\widehat{\phi}\geq\phi-\bigl|\widehat{\phi}-\phi\bigl|$.
Our starting point is to rewrite $\phi$ as the following form: 
\[
\widehat{\phi}=\frac{1}{T}((\widehat{\bm{V}}_{S,\cdot}\bm{R}_{V})^{+}\bm{\bm{v}})^{\top}((\bm{R}_{U})^{\top}\widehat{\bm{\Sigma}}^{-1}\bm{R}_{V})^{\top}(\widehat{\bm{U}}\bm{R}_{U})^{\top}\widehat{\bm{\bm{\Sigma}}}_{\varepsilon}^{\tau}(\widehat{\bm{U}}\bm{R}_{U})((\bm{R}_{U})^{\top}\widehat{\bm{\Sigma}}^{-1}\bm{R}_{V})(\widehat{\bm{V}}_{S,\cdot}\bm{R}_{V})^{+}\bm{\bm{v}},
\]
where we use the fact that $(\widehat{\bm{V}}_{S,\cdot}\bm{R}_{V})^{+}=(\bm{R}_{V})^{\top}(\widehat{\bm{V}}_{S,\cdot})^{+}$
since $\widehat{\bm{V}}_{S,\cdot}$ has full column rank and $(\bm{\bm{\bm{V}}}_{S,\cdot})^{+}=[(\bm{\bm{\bm{V}}}_{S,\cdot})^{\top}\bm{\bm{\bm{V}}}_{S,\cdot}]^{-1}(\bm{\bm{\bm{V}}}_{S,\cdot})^{\top}$.
Recall that 
\[
\phi=\frac{1}{T}((\bm{\bm{\bm{V}}}_{S,\cdot})^{+}\bm{\bm{v}})^{\top}\bm{\Lambda}^{-1}\bm{U}^{\top}\bm{\Sigma}_{\varepsilon}\bm{U}\bm{\Lambda}^{-1}(\bm{\bm{\bm{V}}}_{S,\cdot})^{+}\bm{\bm{v}}.
\]
Next, for simplicity of notations, we denote 
\[
A_{0}=(\bm{\bm{\bm{V}}}_{S,\cdot})^{+},\text{ }B_{0}=\bm{\Lambda}^{-1},\text{ }C_{0}=\bm{U},\text{ }D_{0}=\bm{\Sigma}_{\varepsilon},
\]
and 
\[
A_{1}=(\widehat{\bm{V}}_{S,\cdot}\bm{R}_{V})^{+},\text{ }B_{1}=(\bm{R}_{U})^{\top}\widehat{\bm{\Sigma}}^{-1}\bm{R}_{V},\text{ }C_{1}=\widehat{\bm{U}}\bm{R}_{U},\text{ }D_{1}=\widehat{\bm{\bm{\Sigma}}}_{\varepsilon}^{\tau},
\]
so that $\widehat{\phi}$ and $\phi$ can be written as 
\[
\widehat{\phi}=\frac{1}{T}\bm{\bm{v}}^{\top}A_{1}^{\top}B_{1}^{\top}C_{1}^{\top}D_{1}C_{1}B_{1}A_{1}\bm{\bm{v}}\qquad\text{and}\qquad\phi=\frac{1}{T}\bm{\bm{v}}^{\top}A_{0}^{\top}B_{0}^{\top}C_{0}^{\top}D_{0}C_{0}B_{0}A_{0}\bm{\bm{v}},
\]
respectively.

To derive the upper bound for $\bigl|\widehat{\phi}-\phi\bigl|$,
we will establish the upper bounds for $\left\Vert A_{1}-A_{0}\right\Vert _{2}$,
$\left\Vert B_{1}-B_{0}\right\Vert _{2}$, $\left\Vert C_{1}-C_{0}\right\Vert _{2}$,
and $\left\Vert D_{1}-D_{0}\right\Vert _{2}$ respectively.

\textit{Step 2.1.}
For $\left\Vert A_{1}-A_{0}\right\Vert _{2}$, using Theorem 3.3 in
\citet{stewart1977perturbation}, we obtain that 
\begin{align*}
\left\Vert A_{1}-A_{0}\right\Vert _{2} & =\big\Vert(\widehat{\bm{V}}_{S,\cdot}\bm{R}_{V})^{+}-(\bm{\bm{\bm{V}}}_{S,\cdot})^{+}\big\Vert_{2}\\
 & \lesssim\max(\big\Vert(\bm{\bm{\bm{V}}}_{S,\cdot})^{+}\big\Vert_{2}^{2},\big\Vert(\widehat{\bm{V}}_{S,\cdot}\bm{R}_{V})^{+}\big\Vert_{2}^{2})\big\Vert\widehat{\bm{V}}_{S,\cdot}\bm{R}_{V}-\bm{\bm{V}}_{S,\cdot}\big\Vert_{2}.
\end{align*}
For the term $\big\Vert\widehat{\bm{V}}_{S,\cdot}\bm{R}_{V}-\bm{\bm{\bm{V}}}_{S,\cdot}\big\Vert_{2}$,
we have that 
\[
\big\Vert\widehat{\bm{V}}_{S,\cdot}\bm{R}_{V}-\bm{\bm{\bm{V}}}_{S,\cdot}\big\Vert_{2}\leq\sqrt{|S|}\big\Vert\widehat{\bm{V}}_{S,\cdot}\bm{R}_{V}-\bm{\bm{\bm{V}}}_{S,\cdot}\big\Vert_{2,\infty}\leq\sqrt{|S|}\big\Vert\widehat{\bm{V}}\bm{R}_{V}-\bm{V}\big\Vert_{2,\infty}\lesssim\sqrt{|S|}\rho\sqrt{\frac{r}{n}}\log n,
\]
where the last inequality uses the upper bound of $\big\Vert\widehat{\bm{V}}\bm{R}_{V}-\bm{V}\big\Vert_{2,\infty}$
obtained in the proof of Corollary \ref{corollary:error bound B F}.
We obtain by Lemma \ref{Lemma V subset good event} that $\sigma_{i}(\bm{\bm{\bm{V}}}_{S,\cdot})\asymp\sqrt{\frac{|S|}{T}}$
for $i=1,2,\ldots,r$. Next, since (\ref{assemble condition SNR and data size})
implies that $\rho\sqrt{\frac{T}{n}}\sqrt{r}\log n\ll1$, we obtain
that $\big\Vert\widehat{\bm{V}}_{S,\cdot}\bm{R}_{V}-\bm{\bm{\bm{V}}}_{S,\cdot}\big\Vert_{2}\lesssim\sqrt{|S|}\rho\sqrt{\frac{r}{n}}\log n\ll\sqrt{\frac{|S|}{T}}$,
and thus $\sigma_{i}(\widehat{\bm{V}}_{S,\cdot}\bm{R}_{V})\asymp\sqrt{\frac{|S|}{T}}$
for $i=1,2,\ldots,r$. Then, similar to the proof of $\sigma_{i}((\bm{\bm{\bm{V}}}_{S,\cdot})^{+})\asymp\sqrt{\frac{T}{|S|}}$
for $i=1,2,\ldots,r$ in Lemma \ref{Lemma V subset good event}, we
obtain that $\sigma_{i}((\widehat{\bm{V}}_{S,\cdot}\bm{R}_{V})^{+})\asymp\sqrt{\frac{T}{|S|}}$
for $i=1,2,\ldots,r$. So we obtain 
\begin{align*}
\left\Vert A_{1}-A_{0}\right\Vert _{2} & \lesssim\max(\big\Vert(\bm{\bm{\bm{V}}}_{S,\cdot})^{+}\big\Vert_{2}^{2},\big\Vert(\widehat{\bm{V}}_{S,\cdot}\bm{R}_{V})^{+}\big\Vert_{2}^{2})\big\Vert\widehat{\bm{V}}_{S,\cdot}\bm{R}_{V}-\bm{\bm{V}}_{S,\cdot}\big\Vert_{2}\\
 & \lesssim\frac{T}{|S|}\cdot\sqrt{|S|}\rho\sqrt{\frac{r}{n}}\log n=\frac{T}{\sqrt{n|S|}}\rho\sqrt{r}\log n.
\end{align*}
Also, we obtain by Lemma \ref{Lemma V subset good event} that $\left\Vert A_{0}\right\Vert _{2}=\big\Vert(\bm{\bm{\bm{V}}}_{S,\cdot})^{+}\big\Vert_{2}\lesssim\sqrt{\frac{T}{|S|}}$.
Since $\sigma_{i}((\widehat{\bm{V}}_{S,\cdot}\bm{R}_{V})^{+})\asymp\sqrt{\frac{T}{|S|}}$
for $i=1,2,\ldots,r$, we obtain that $\left\Vert A_{1}\right\Vert _{2}=\big\Vert(\widehat{\bm{V}}_{S,\cdot}\bm{R}_{V})^{+}\big\Vert_{2}\lesssim\sqrt{\frac{T}{|S|}}$.

For $\left\Vert B_{1}-B_{0}\right\Vert _{2}$, since $\bm{R}_{U}$
and $\bm{R}_{V}$ are rotation matrices, we have that 
\begin{align*}
\left\Vert B_{1}-B_{0}\right\Vert _{2} & =\big\Vert(\bm{R}_{U})^{\top}\widehat{\bm{\Sigma}}^{-1}\bm{R}_{V}-\bm{\Lambda}^{-1}\big\Vert_{2}\\
 & =\big\Vert[(\bm{R}_{U})^{\top}\widehat{\bm{\Sigma}}^{-1}-\bm{\Lambda}^{-1}(\bm{R}_{V})^{\top}]\bm{R}_{V}\big\Vert_{2}\\
 & \leq\big\Vert\bm{\Lambda}^{-1}[\bm{\Lambda}(\bm{R}_{U})^{\top}-(\bm{R}_{V})^{\top}\widehat{\bm{\Sigma}}]\widehat{\bm{\Sigma}}^{-1}\big\Vert_{2}\leq\frac{1}{\lambda_{r}}\big\Vert\bm{\Lambda}(\bm{R}_{U})^{\top}-(\bm{R}_{V})^{\top}\widehat{\bm{\Sigma}}\big\Vert_{2}\frac{1}{\widehat{\sigma}_{r}}\\
 & \overset{\text{(i)}}{\lesssim}\frac{1}{\sigma_{r}^{2}}\big\Vert[\bm{\Lambda}-(\bm{R}_{V})^{\top}\widehat{\bm{\Sigma}}\bm{R}_{U}](\bm{R}_{U})^{\top}\big\Vert_{2}\leq\frac{1}{\sigma_{r}^{2}}\big\Vert(\bm{R}_{U})^{\top}\widehat{\bm{\Sigma}}\bm{R}_{V}-\bm{\Lambda}\big\Vert_{2}\\
 & \overset{\text{(ii)}}{\lesssim}\frac{1}{\sigma_{r}^{2}}\left(\sigma_{r}\rho^{2}+\sigma_{r}\rho\sqrt{\frac{r}{n}}\sqrt{\log n}\right)=\frac{1}{\sigma_{r}}\left(\rho^{2}+\rho\sqrt{\frac{r}{n}}\sqrt{\log n}\right),
\end{align*}
where (i) uses $\widehat{\sigma}_{i}\asymp\sigma_{i}$ and $\lambda_{i}\asymp\sigma_{i}$
in Lemmas \ref{Lemma SVD BF good event} and \ref{Lemma R H for U V},
respectively, and (ii) uses Lemma \ref{Lemma Sigma hat tilde H R}.
Also, since $\left\Vert B_{0}\right\Vert _{2}=\big\Vert\bm{\Lambda}^{-1}\big\Vert_{2}\lesssim\frac{1}{\sigma_{r}}$
and the above result implies that $\left\Vert B_{1}-B_{0}\right\Vert _{2}\ll\frac{1}{\sigma_{r}}$
owing to $\rho+\sqrt{\frac{r}{n}}\sqrt{\log n}\ll1$ as assumed, so
we obtain $\left\Vert B_{1}\right\Vert _{2}\leq\left\Vert B_{1}-B_{0}\right\Vert _{2}+\left\Vert B_{0}\right\Vert _{2}\lesssim\frac{1}{\sigma_{r}}$.

For $\left\Vert C_{1}-C_{0}\right\Vert _{2}=\big\Vert\widehat{\bm{U}}\bm{R}_{U}-\bm{U}\big\Vert_{2}$,
we obtain by Lemma \ref{Lemma R H for U V} that $\left\Vert C_{1}-C_{0}\right\Vert _{2}\lesssim\rho$.
Also, since $\widehat{\bm{U}}^{\top}\widehat{\bm{U}}=\bm{U}^{\top}\bm{U}=\bm{I}_{r}$
and $\bm{R}_{U}\in\mathcal{O}^{r\times r}$ is a rotation matrix,
we have that $\left\Vert C_{1}\right\Vert _{2}=\big\Vert\widehat{\bm{U}}\bm{R}_{U}\big\Vert_{2}\leq1$
and $\left\Vert C_{0}\right\Vert _{2}=\big\Vert\bm{U}\big\Vert_{2}\leq1$.

For $\left\Vert D_{1}-D_{0}\right\Vert _{2}=\big\Vert\widehat{\bm{\bm{\Sigma}}}_{\varepsilon}^{\tau}-\bm{\Sigma}_{\varepsilon}\big\Vert_{2}$,
by assumption we have that $\left\Vert D_{1}-D_{0}\right\Vert _{2}\sqrt{|S|}\cdot\kappa_{\varepsilon}^{4}r\log^{2}n\ll\left\Vert \bm{\Sigma}_{\varepsilon}\right\Vert _{2}$.

In particular, the above results imply that, $\left\Vert A_{i}\right\Vert _{2}\lesssim K_{A}$,
$\left\Vert B_{i}\right\Vert _{2}\lesssim K_{B}$, $\left\Vert C_{i}\right\Vert _{2}\lesssim K_{C}$,
and $\left\Vert D_{i}\right\Vert _{2}\lesssim K_{D}$ for $i=0,1$,
where $K_{A}=\sqrt{\frac{T}{|S|}}$, $K_{B}=\frac{1}{\sigma_{r}}$,
$K_{C}=1$, and $K_{D}=\left\Vert \bm{\Sigma}_{\varepsilon}\right\Vert _{2}$.

\textit{Step 2.2.}
We obtain by the representations of $\widehat{\phi}$ and $\phi$
that, 
\[
\bigl|\widehat{\phi}-\phi\bigl|\lesssim\frac{1}{T}\left(\frac{\left\Vert A_{1}-A_{0}\right\Vert _{2}}{K_{A}}+\frac{\left\Vert B_{1}-B_{0}\right\Vert _{2}}{K_{B}}+\frac{\left\Vert C_{1}-C_{0}\right\Vert _{2}}{K_{C}}+\frac{\left\Vert D_{1}-D_{0}\right\Vert _{2}}{K_{D}}\right)K_{A}^{2}K_{B}^{2}K_{C}^{2}K_{D}\left\Vert \bm{\bm{v}}\right\Vert _{2}^{2}.
\]
Then using the results in Step 2.1, we obtain that 
\begin{align*}
 & \bigl|\widehat{\phi}-\phi\bigl|\\
 & \lesssim\frac{1}{T}\left|\frac{1}{\left\Vert \bm{\Sigma}_{\varepsilon}\right\Vert _{2}}\big\Vert\widehat{\bm{\bm{\Sigma}}}_{\varepsilon}^{\tau}-\bm{\Sigma}_{\varepsilon}\big\Vert_{2}+\frac{\rho}{1}+\frac{\frac{1}{\sigma_{r}}\rho\left(\rho+\sqrt{\frac{r}{n}}\sqrt{\log n}\right)}{\frac{1}{\sigma_{r}}}+\frac{\frac{T}{\sqrt{n|S|}}\rho\sqrt{r}\left\Vert \bm{\bm{v}}\right\Vert _{2}}{\sqrt{\frac{T}{|S|}}\left\Vert \bm{\bm{v}}\right\Vert _{2}}\right|\cdot\frac{T}{|S|}\cdot\frac{1}{\sigma_{r}^{2}}\cdot\left\Vert \bm{\Sigma}_{\varepsilon}\right\Vert _{2}\cdot\left\Vert \bm{\bm{v}}\right\Vert _{2}^{2}\\
 & \lesssim\left|\frac{1}{\left\Vert \bm{\Sigma}_{\varepsilon}\right\Vert _{2}}\big\Vert\widehat{\bm{\bm{\Sigma}}}_{\varepsilon}^{\tau}-\bm{\Sigma}_{\varepsilon}\big\Vert_{2}+\rho+\rho\left(\rho+\sqrt{\frac{r}{n}}\sqrt{\log n}\right)+\rho\sqrt{\frac{T}{n}}\sqrt{r}\right|\cdot\frac{1}{|S|}\rho^{2}\frac{T}{n}\left\Vert \bm{\bm{v}}\right\Vert _{2}^{2}\\
 & \overset{\text{(i)}}{\lesssim}\left|\frac{1}{\left\Vert \bm{\Sigma}_{\varepsilon}\right\Vert _{2}}\big\Vert\widehat{\bm{\bm{\Sigma}}}_{\varepsilon}^{\tau}-\bm{\Sigma}_{\varepsilon}\big\Vert_{2}+\rho+\rho\sqrt{\frac{T}{n}}\sqrt{r}\right|\cdot\frac{1}{|S|}\rho^{2}\frac{T}{n}\left\Vert \bm{\bm{v}}\right\Vert _{2}^{2}.
\end{align*}
where (i) uses in (\ref{assemble condition SNR and data size}). We
obtain by the bound of $\phi$ in Lemma \ref{Lemma true Chi-square stat}
and the bound of $\sigma_{\text{min}}((\bm{\bm{\bm{V}}}_{S,\cdot})^{+})$
in Lemma \ref{Lemma V subset good event} that 
\[
\phi\gtrsim\frac{1}{\kappa_{\varepsilon}^{2}n}\rho^{2}\big\Vert(\bm{\bm{\bm{V}}}_{S,\cdot})^{+}\bm{\bm{v}}\big\Vert_{2}^{2}\geq\frac{1}{\kappa_{\varepsilon}^{2}n}\rho^{2}[\sigma_{\text{min}}((\bm{\bm{\bm{V}}}_{S,\cdot})^{+})]^{2}\left\Vert \bm{\bm{v}}\right\Vert _{2}^{2}\gtrsim\frac{1}{\kappa_{\varepsilon}^{2}n}\rho^{2}\frac{T}{|S|}\left\Vert \bm{\bm{v}}\right\Vert _{2}^{2}.
\]
Then, we conclude that $\bigl|\widehat{\phi}-\phi\bigl|\ll\phi$ because
the ratio between the above upper bound of $\bigl|\widehat{\phi}-\phi\bigl|$
and the above lower bound of $\phi$ is given by 
\[
\frac{\left|\frac{1}{\left\Vert \bm{\Sigma}_{\varepsilon}\right\Vert _{2}}\big\Vert\widehat{\bm{\bm{\Sigma}}}_{\varepsilon}^{\tau}-\bm{\Sigma}_{\varepsilon}\big\Vert_{2}+\rho+\rho\sqrt{\frac{T}{n}}\sqrt{r}\right|\cdot\frac{1}{|S|}\rho^{2}\frac{T}{n}\left\Vert \bm{\bm{v}}\right\Vert _{2}^{2}}{\frac{1}{\kappa_{\varepsilon}^{2}n}\rho^{2}\frac{T}{|S|}\left\Vert \bm{\bm{v}}\right\Vert _{2}^{2}}=\kappa_{\varepsilon}^{2}\left|\frac{1}{\left\Vert \bm{\Sigma}_{\varepsilon}\right\Vert _{2}}\big\Vert\widehat{\bm{\bm{\Sigma}}}_{\varepsilon}^{\tau}-\bm{\Sigma}_{\varepsilon}\big\Vert_{2}+\rho+\rho\sqrt{\frac{T}{n}}\sqrt{r}\right|\ll1,
\]
where the last inequality is owing to the assumptions. So $\widehat{\phi}\geq\phi-\bigl|\widehat{\phi}-\phi\bigl|\gtrsim\phi\gtrsim\frac{1}{\kappa_{\varepsilon}^{2}n}\rho^{2}\frac{T}{|S|}\left\Vert \bm{\bm{v}}\right\Vert _{2}^{2}$.
Using this lower bound of $\widehat{\phi}$ and the above upper bound
of $\bigl|\widehat{\phi}-\phi\bigl|$, we obtain that 
\begin{equation}
\frac{1}{\widehat{\phi}}\bigl|\widehat{\phi}-\phi\bigl|\lesssim\frac{\left|R_{\varepsilon}+\rho+\rho\sqrt{\frac{T}{n}}\sqrt{r}\right|\cdot\frac{1}{|S|}\rho^{2}\frac{T}{n}\left\Vert \bm{\bm{v}}\right\Vert _{2}^{2}}{\frac{1}{\kappa_{\varepsilon}^{2}n}\rho^{2}\frac{T}{|S|}\left\Vert \bm{\bm{v}}\right\Vert _{2}^{2}}=\kappa_{\varepsilon}^{2}\left|\frac{1}{\left\Vert \bm{\Sigma}_{\varepsilon}\right\Vert _{2}}\big\Vert\widehat{\bm{\bm{\Sigma}}}_{\varepsilon}^{\tau}-\bm{\Sigma}_{\varepsilon}\big\Vert_{2}+\rho+\rho\sqrt{\frac{T}{n}}\sqrt{r}\right|.\label{scalar diff ratio bound for Chi-square test}
\end{equation}

Using (\ref{scalar prepare bound for Chi-square test}), (\ref{scalar diff ratio bound for Chi-square test}),
and the fact that $\kappa_{\varepsilon}\geq1$, we obtain that 
\begin{align*}
\frac{|r_{2}|}{\sqrt{|S|}} & \lesssim\frac{1}{\widehat{\phi}}\bigl|\widehat{\phi}-\phi\bigl|\sqrt{|S|}\kappa_{\varepsilon}^{2}r\log^{2}n\lesssim\left|\frac{1}{\left\Vert \bm{\Sigma}_{\varepsilon}\right\Vert _{2}}\big\Vert\widehat{\bm{\bm{\Sigma}}}_{\varepsilon}^{\tau}-\bm{\Sigma}_{\varepsilon}\big\Vert_{2}+\rho+\sqrt{\frac{T}{n}}\rho\sqrt{r}\right|\sqrt{|S|}\kappa_{\varepsilon}^{4}r\log^{2}n.\\
 & \lesssim\frac{1}{\left\Vert \bm{\Sigma}_{\varepsilon}\right\Vert _{2}}\big\Vert\widehat{\bm{\bm{\Sigma}}}_{\varepsilon}^{\tau}-\bm{\Sigma}_{\varepsilon}\big\Vert_{2}\sqrt{|S|}\kappa_{\varepsilon}^{4}r\log^{2}n+\rho\sqrt{|S|}\kappa_{\varepsilon}^{4}r^{2}\log^{2}n.
\end{align*}

\textit{Step 3.}
We now assemble all the above upper bounds. We obtain by Lemma \ref{Lemma true Chi-square stat}
that 
\begin{align*}
 & \left|\mathbb{P}\left(\mathfrak{\widehat{\mathcal{T}}}(S,\bm{\bm{v}})\leq\chi_{1-\alpha}^{2}(|S|-r)\right)-(1-\alpha)\right|\\
 & \lesssim\sqrt{|S|}(\rho\sqrt{\frac{n}{T}}+\frac{1}{\sqrt{T}})\kappa_{\varepsilon}r\log^{2}n+\kappa_{\varepsilon}^{2}\rho\sqrt{|S|}\sqrt{\frac{T}{n}}r^{3/2}\log^{7/2}n\\
 & +\frac{1}{\left\Vert \bm{\Sigma}_{\varepsilon}\right\Vert _{2}}\big\Vert\widehat{\bm{\bm{\Sigma}}}_{\varepsilon}^{\tau}-\bm{\Sigma}_{\varepsilon}\big\Vert_{2}\sqrt{|S|}\kappa_{\varepsilon}^{4}r\log^{2}n+\rho\sqrt{|S|}\kappa_{\varepsilon}^{4}r^{2}\log^{2}n+s_{3}\\
 & \lesssim\frac{1}{\left\Vert \bm{\Sigma}_{\varepsilon}\right\Vert _{2}}\big\Vert\widehat{\bm{\bm{\Sigma}}}_{\varepsilon}^{\tau}-\bm{\Sigma}_{\varepsilon}\big\Vert_{2}\sqrt{|S|}\kappa_{\varepsilon}^{4}r\log^{2}n+\sqrt{|S|}\rho\sqrt{\frac{n}{T}}\kappa_{\varepsilon}^{4}r^{2}\log^{4}n+\frac{\sqrt{|S|}}{\sqrt{T}}\kappa_{\varepsilon}r\log^{2}n+s_{3}
\end{align*}
where, as defined in Lemma \ref{Lemma true Chi-square stat}, 
\[
s_{3}\lesssim\kappa_{\varepsilon}\sqrt{\frac{n}{T}}|S|^{3/2}\frac{\big\Vert\bm{\Sigma}_{\varepsilon}^{1/2}\bar{\bm{U}}\big\Vert_{2,\infty}}{\big\Vert\bm{\Sigma}_{\varepsilon}^{1/2}\big\Vert_{2}},
\]
and, if all the entries of the matrix $\bm{Z}$ are Gaussian, i.e.,
the noise is Gaussian, then the above inequality for $|\mathbb{P}(\mathfrak{\widehat{\mathcal{T}}}(S,\bm{\bm{v}})\leq\chi_{1-\alpha}^{2}(|S|-r))-(1-\alpha)|$
holds when $s_{3}=0$. Then the desired non-asymptotic results follow
from the above upper bound of $|\mathbb{P}(\mathfrak{\widehat{\mathcal{T}}}(S,\bm{\bm{v}})\leq\chi_{1-\alpha}^{2}(|S|-r))-(1-\alpha)|$.

\subsection{Estimation of the idiosyncratic noise covariance matrix}
\textit{Step 1 -- Estimation error for the sample covariance matrix $\widehat{\bm{\bm{\Sigma}}}_{\varepsilon}=\frac{1}{T}\widehat{\bm{E}}\widehat{\bm{E}}^{\top}$.}

We obtain by $x_{i,t}=\bm{b}_{i}^{\top}\bm{f}_{t}+\varepsilon_{i,t}$
that $\widehat{\varepsilon}_{i,t}-\varepsilon_{i,t}=\bm{b}_{i}^{\top}\bm{f}_{t}-\widehat{\bm{b}}_{i}^{\top}\widehat{\bm{f}}_{t}$.
So, the sample covariance estimator $\widehat{\bm{\bm{\Sigma}}}_{\varepsilon}$
is given by 
\begin{align*}
\widehat{\bm{\bm{\Sigma}}}_{\varepsilon} & =\frac{1}{T}\widehat{\bm{E}}\widehat{\bm{E}}^{\top}=\frac{1}{T}(\bm{X}-\widehat{\bm{B}}\widehat{\bm{F}}^{\top})(\bm{X}-\widehat{\bm{B}}\widehat{\bm{F}}^{\top})^{\top}\\
 & =\frac{1}{T}(\bm{X}\bm{X}^{\top}-\bm{X}\widehat{\bm{B}}\widehat{\bm{F}}^{\top}-\widehat{\bm{B}}\widehat{\bm{F}}^{\top}\bm{X}+\widehat{\bm{B}}\widehat{\bm{F}}^{\top}\widehat{\bm{F}}\widehat{\bm{B}}^{\top})=\frac{1}{T}(\bm{X}\bm{X}^{\top}-T\widehat{\bm{B}}\widehat{\bm{B}}^{\top}),
\end{align*}
where we use the fact that $\frac{1}{T}\bm{X}\widehat{\bm{F}}=\frac{1}{\sqrt{T}}\bm{X}\widehat{\bm{V}}=\widehat{\bm{U}}\widehat{\bm{\Sigma}}=\widehat{\bm{B}}$.
Then, we use the identity $\bm{X}=\bm{BF}^{\top}+\bm{E}$ to obtain
that 
\begin{align*}
\widehat{\bm{\bm{\Sigma}}}_{\varepsilon} & =\frac{1}{T}((\bm{BF}^{\top}+\bm{E})(\bm{BF}^{\top}+\bm{E})^{\top}-T\widehat{\bm{B}}\widehat{\bm{B}}^{\top})\\
 & =\frac{1}{T}\bm{E}\bm{E}^{\top}+(\frac{1}{T}\bm{E}\bm{FB}^{\top}+\frac{1}{T}\bm{BF}^{\top}\bm{E}^{\top})+(\frac{1}{T}\bm{B\bm{F}^{\top}\bm{F}B}^{\top}-\widehat{\bm{B}}\widehat{\bm{B}}^{\top}).
\end{align*}
Then, using the fact that $\frac{1}{T}\bm{B\bm{F}^{\top}\bm{F}B}^{\top}=\bm{U}\bm{\Lambda}^{2}\bm{U}^{\top}$
and $\widehat{\bm{B}}\widehat{\bm{B}}^{\top}=\widehat{\bm{U}}\widehat{\bm{\Sigma}}^{2}\widehat{\bm{U}}$,
we obtain that 
\[
\widehat{\bm{\bm{\Sigma}}}_{\varepsilon}-\bm{\Sigma}_{\varepsilon}=(\frac{1}{T}\bm{E}\bm{E}^{\top}-\mathbb{E}[\frac{1}{T}\bm{E}\bm{E}^{\top}])+(\frac{1}{T}\bm{E}\bm{FB}^{\top}+\frac{1}{T}\bm{BF}^{\top}\bm{E}^{\top})+(\bm{U}\bm{\Lambda}^{2}\bm{U}^{\top}-\widehat{\bm{U}}\widehat{\bm{\Sigma}}^{2}\widehat{\bm{U}}).
\]
We now look at the estimator error $\bigl|(\widehat{\bm{\bm{\Sigma}}}_{\varepsilon}-\bm{\Sigma}_{\varepsilon})_{i,j}\bigl|$
of the $(i,j)$-th entry by analyzing the terms in the above three
brackets.

\textit{Step 1.1 -- Bound for $(\frac{1}{T}\bm{E}\bm{E}^{\top}-\mathbb{E}[\frac{1}{T}\bm{E}\bm{E}^{\top}])$.}

First, we have that 
\[
(\frac{1}{T}\bm{E}\bm{E}^{\top}-\mathbb{E}[\frac{1}{T}\bm{E}\bm{E}^{\top}])_{i,j}=\frac{1}{T}\sum_{t=1}^{T}\bm{E}_{i,t}\bm{E}_{j,t}-\mathbb{E}[\bm{E}_{i,t}\bm{E}_{j,t}]
\]
is the sum of $T$ independent zero-mean sub-Gaussian random variables.
Since $\bm{E}_{i,t}=\sum_{k=1}^{N}(\bm{\Sigma}_{\varepsilon}^{1/2})_{i,k}\bm{Z}_{k,t}$,
we obtain that $\left\Vert \bm{E}_{i,t}\right\Vert _{\psi_{2}}\lesssim(\sum_{k=1}^{N}((\bm{\Sigma}_{\varepsilon}^{1/2})_{i,k})^{2})^{1/2}=\sqrt{(\bm{\Sigma}_{\varepsilon})_{i,i}}$.
By Lemma 2.7.7 in \citet{wainwright2019high}, we obtain that $\bm{E}_{i,t}\bm{E}_{j,t}$
is sub-exponential and $\left\Vert \bm{E}_{i,t}\bm{E}_{j,t}\right\Vert _{\psi_{1}}\leq\left\Vert \bm{E}_{i,t}\right\Vert _{\psi_{2}}\left\Vert \bm{E}_{j,t}\right\Vert _{\psi_{2}}\lesssim\sqrt{(\bm{\Sigma}_{\varepsilon})_{i,i}(\bm{\Sigma}_{\varepsilon})_{j,j}}$,
and the centering $\bm{E}_{i,t}\bm{E}_{j,t}-\mathbb{E}[\bm{E}_{i,t}\bm{E}_{j,t}]$
does not hurt the sub-exponential properties by Exercise 2.7.8 in
\citet{wainwright2019high}. Then we obtain by Theorem 2.8.1 in \citet{wainwright2019high}
that, with probability at least $1-O(n^{-10})$, 
\[
\left|(\frac{1}{T}\bm{E}\bm{E}^{\top}-\mathbb{E}[\frac{1}{T}\bm{E}\bm{E}^{\top}])_{i,j}\right|\lesssim\frac{1}{\sqrt{T}}\sqrt{(\bm{\Sigma}_{\varepsilon})_{i,i}(\bm{\Sigma}_{\varepsilon})_{j,j}}\log n.
\]

\textit{Step 1.2 -- Bound for $(\frac{1}{T}\bm{E}\bm{FB}^{\top}+\frac{1}{T}\bm{BF}^{\top}\bm{E}^{\top})$.}

Next, we have that, conditioning on $\bm{F}$, 
\[
(\frac{1}{T}\bm{E}\bm{FB}^{\top})_{i,j}=\frac{1}{T}(\bm{\Sigma}_{\varepsilon}^{1/2})_{i,\cdot}\bm{Z}\bm{F}\bm{b}_{j}=\frac{1}{T}\sum_{t=1}^{T}\sum_{k=1}^{N}(\bm{\Sigma}_{\varepsilon}^{1/2})_{i,k}\bm{Z}_{k,t}\bm{f}_{t}^{\top}\bm{b}_{j}
\]
is the sum of $NT$ independent zero-mean sub-Gaussian random variables.
We obtain by Theorem 2.6.3 in \citet{wainwright2019high} that, with
probability at least $1-O(n^{-10})$, 
\begin{align*}
\left|(\frac{1}{T}\bm{E}\bm{FB}^{\top})_{i,j}\right| & \lesssim\frac{1}{T}\sqrt{\sum_{t=1}^{T}\sum_{k=1}^{N}((\bm{\Sigma}_{\varepsilon}^{1/2})_{i,k})^{2}(\bm{f}_{t}^{\top}\bm{b}_{j})^{2}}\sqrt{\log n}=\frac{1}{T}\sqrt{(\bm{\Sigma}_{\varepsilon})_{i,i}\bm{b}_{j}^{\top}\bm{F}^{\top}\bm{F}\bm{b}_{j}}\sqrt{\log n}\\
 & =\frac{1}{\sqrt{T}}\sqrt{(\bm{\Sigma}_{\varepsilon})_{i,i}\bm{b}_{j}^{\top}(\frac{1}{T}\bm{F}^{\top}\bm{F}-\bm{I}_{r}+\bm{I}_{r})\bm{b}_{j}}\sqrt{\log n}\\
 & \lesssim\frac{1}{\sqrt{T}}\sqrt{(\bm{\Sigma}_{\varepsilon})_{i,i}\bm{b}_{j}^{\top}\bm{I}_{r}\bm{b}_{j}}\sqrt{\log n}=\frac{1}{\sqrt{T}}\left\Vert \bm{b}_{j}\right\Vert _{2}\sqrt{(\bm{\Sigma}_{\varepsilon})_{i,i}}\sqrt{\log n}\\
 & \lesssim\frac{1}{\sqrt{T}}\sigma_{1}\left\Vert \bar{\bm{U}}_{j,\cdot}\right\Vert _{2}\sqrt{(\bm{\Sigma}_{\varepsilon})_{i,i}}\sqrt{\log n},
\end{align*}
where the last line is owing to the fact that $\left\Vert \bm{b}_{i}\right\Vert _{2}\lesssim\sigma_{1}\big\Vert\bar{\bm{U}}_{i,\cdot}\big\Vert_{2}$
for all $1\leq i\leq N$. Similarly we have $\bigl|(\frac{1}{T}\bm{BF}^{\top}\bm{E}^{\top})_{i,j}\bigl|=\bigl|(\frac{1}{T}\bm{E}\bm{FB}^{\top})_{j,i}\bigl|\lesssim\frac{1}{\sqrt{T}}\sigma_{1}\left\Vert \bar{\bm{U}}_{i,\cdot}\right\Vert _{2}\sqrt{(\bm{\Sigma}_{\varepsilon})_{j,j}}\sqrt{\log n}$.

\textit{Step 1.3 -- Bound for $(\bm{U}\bm{\Lambda}^{2}\bm{U}^{\top}-\widehat{\bm{U}}\widehat{\bm{\Sigma}}^{2}\widehat{\bm{U}})$.}

Finally, we have that $(\bm{U}\bm{\Lambda}^{2}\bm{U}^{\top}-\widehat{\bm{U}}\widehat{\bm{\Sigma}}^{2}\widehat{\bm{U}})_{i,j}=\widetilde{\bm{B}}_{i,\cdot}\widetilde{\bm{B}}_{j,\cdot}^{\top}-\widehat{\bm{B}}_{i,\cdot}\widehat{\bm{B}}_{j,\cdot}^{\top}$.
We obtain by Lemma \ref{Lemma SVD BF good event} that $\bm{B}\bm{R}_{B}=\bm{B}(\bm{J}^{-1})^{\top}\bm{R}_{V}^{\top}=\widetilde{\bm{B}}\bm{R}_{V}^{\top}$.
So, using the error bounds for $\big\Vert(\widehat{\bm{B}}-\bm{B}\bm{R}_{B})_{i,\cdot}\big\Vert_{2}$
in Corollary \ref{corollary:error bound B F}, we obtain that, with
probability at least $1-O(n^{-10})$ 
\begin{align*}
\left|(\bm{U}\bm{\Lambda}^{2}\bm{U}^{\top}-\widehat{\bm{U}}\widehat{\bm{\Sigma}}^{2}\widehat{\bm{U}})_{i,j}\right| & =\left|(\bm{B}\bm{R}_{B})_{i,\cdot}(\bm{B}\bm{R}_{B})_{j,\cdot}^{\top}-\widehat{\bm{B}}_{i,\cdot}\widehat{\bm{B}}_{j,\cdot}^{\top}\right|\\
 & \lesssim\big\Vert(\bm{B}\bm{R}_{B})_{i,\cdot}-\widehat{\bm{B}}_{i,\cdot}\big\Vert_{2}\big\Vert(\bm{B}\bm{R}_{B})_{j,\cdot}\big\Vert_{2}+\big\Vert(\bm{B}\bm{R}_{B})_{j,\cdot}-\widehat{\bm{B}}_{j,\cdot}\big\Vert_{2}\big\Vert\widehat{\bm{B}}_{i,\cdot}\big\Vert_{2}\\
 & \overset{\text{(i)}}{\lesssim}\bigl(\frac{1}{\sqrt{T}}\big\Vert(\bm{\Sigma}_{\varepsilon}^{1/2})_{i,\cdot}\big\Vert_{1}+(\frac{n}{\theta^{2}T}+\frac{1}{\theta\sqrt{T}})\sigma_{r}\big\Vert\bar{\bm{U}}_{i,\cdot}\big\Vert_{2}\bigr)r\log^{2}n\left\Vert \bm{b}_{j}\right\Vert _{2}\\
 & +\bigl(\frac{1}{\sqrt{T}}\big\Vert(\bm{\Sigma}_{\varepsilon}^{1/2})_{j,\cdot}\big\Vert_{1}+(\frac{n}{\theta^{2}T}+\frac{1}{\theta\sqrt{T}})\sigma_{r}\big\Vert\bar{\bm{U}}_{j,\cdot}\big\Vert_{2}\bigr)r\log^{2}n\left\Vert \bm{b}_{i}\right\Vert _{2}
\end{align*}
where (i) is because $\Vert\bm{R}_{B}\Vert_{2}\leq\Vert\bm{J}^{-1}\Vert_{2}\Vert\bm{R}_{V}\Vert_{2}\lesssim1$
by Lemma \ref{Lemma SVD BF good event}, and $\big\Vert\widehat{\bm{B}}_{i,\cdot}\big\Vert_{2}\leq\big\Vert(\bm{B}\bm{R}_{B})_{i,\cdot}\big\Vert_{2}+\big\Vert\widehat{\bm{B}}_{i,\cdot}-(\bm{B}\bm{R}_{B})_{i,\cdot}\big\Vert_{2}\lesssim\big\Vert\bm{B}_{i,\cdot}\big\Vert_{2}$
since $\big\Vert\widehat{\bm{B}}_{i,\cdot}-(\bm{B}\bm{R}_{B})_{i,\cdot}\big\Vert_{2}$
is negligble by Corollary \ref{corollary:error bound B F}.

In summary, using a standard union bound argument, we obtain that,
with probability at least $1-O(n^{-8})$, the following bound holds
simultaneously for all $1\leq i,j\leq N$: 
\begin{align*}
\bigl|(\widehat{\bm{\bm{\Sigma}}}_{\varepsilon}-\bm{\Sigma}_{\varepsilon})_{i,j}\bigl| & \lesssim\frac{1}{\sqrt{T}}\sqrt{(\bm{\Sigma}_{\varepsilon})_{i,i}(\bm{\Sigma}_{\varepsilon})_{j,j}}\log n+\frac{1}{\sqrt{T}}\left\Vert \bm{b}_{i}\right\Vert _{2}\sqrt{(\bm{\Sigma}_{\varepsilon})_{j,j}}\sqrt{\log n}+\frac{1}{\sqrt{T}}\left\Vert \bm{b}_{j}\right\Vert _{2}\sqrt{(\bm{\Sigma}_{\varepsilon})_{i,i}}\sqrt{\log n}\\
 & +\bigl(\frac{1}{\sqrt{T}}(\big\Vert(\bm{\Sigma}_{\varepsilon}^{1/2})_{i,\cdot}\big\Vert_{1}\left\Vert \bm{b}_{j}\right\Vert _{2}+\big\Vert(\bm{\Sigma}_{\varepsilon}^{1/2})_{j,\cdot}\big\Vert_{1}\left\Vert \bm{b}_{i}\right\Vert _{2})+(\frac{n}{\theta^{2}T}+\frac{1}{\theta\sqrt{T}})\sigma_{r}(\big\Vert\bar{\bm{U}}_{i,\cdot}\big\Vert_{2}\left\Vert \bm{b}_{j}\right\Vert _{2}+\big\Vert\bar{\bm{U}}_{j,\cdot}\big\Vert_{2}\left\Vert \bm{b}_{i}\right\Vert _{2})\bigr)r\log^{2}n.
\end{align*}
The desired bound for $\bigl|(\widehat{\bm{\bm{\Sigma}}}_{\varepsilon}-\bm{\Sigma}_{\varepsilon})_{i,j}\bigl|$
follows from simplifying the above bound using $\beta_{i}$, $\gamma_{i}$,
and $\gamma_{\varepsilon,i}$.

\textit{Step 2 -- Estimation error for the generalized thresholding estimator
$\widehat{\bm{\bm{\Sigma}}}_{\varepsilon}^{\tau}$.}

Since the spectral norm is not larger than the maximum $L_{1}$-norm
of rows, we obtain that 
\[
\big\Vert\widehat{\bm{\bm{\Sigma}}}_{\varepsilon}^{\tau}-\bm{\Sigma}_{\varepsilon}\big\Vert_{2}\leq\max_{1\leq i\leq N}\sum_{j=1}^{N}|(\widehat{\bm{\bm{\Sigma}}}_{\varepsilon}^{\tau})_{i,j}-(\bm{\Sigma}_{\varepsilon})_{i,j}|=\max_{1\leq i\leq N}\sum_{j=1}^{N}|h((\widehat{\bm{\bm{\Sigma}}}_{\varepsilon})_{i,j},\tau_{i,j})-(\bm{\Sigma}_{\varepsilon})_{i,j}|.
\]
For the summand $|h((\widehat{\bm{\bm{\Sigma}}}_{\varepsilon})_{i,j},\tau_{i,j})-(\bm{\Sigma}_{\varepsilon})_{i,j}|$,
we have that 
\begin{align*}
|h((\widehat{\bm{\bm{\Sigma}}}_{\varepsilon})_{i,j},\tau_{i,j})-(\bm{\Sigma}_{\varepsilon})_{i,j}| & =|h((\widehat{\bm{\bm{\Sigma}}}_{\varepsilon})_{i,j},\tau_{i,j})1_{\{|(\widehat{\bm{\bm{\Sigma}}}_{\varepsilon})_{i,j}|\geq\tau_{i,j}\}}-(\bm{\Sigma}_{\varepsilon})_{i,j}1_{\{|(\widehat{\bm{\bm{\Sigma}}}_{\varepsilon})_{i,j}|\geq\tau_{i,j}\}}-(\bm{\Sigma}_{\varepsilon})_{i,j}1_{\{|(\widehat{\bm{\bm{\Sigma}}}_{\varepsilon})_{i,j}|<\tau_{i,j}\}}|\\
 & =|h((\widehat{\bm{\bm{\Sigma}}}_{\varepsilon})_{i,j},\tau_{i,j})-(\bm{\Sigma}_{\varepsilon})_{i,j}|1_{\{|(\widehat{\bm{\bm{\Sigma}}}_{\varepsilon})_{i,j}|\geq\tau_{i,j}\}}+|(\bm{\Sigma}_{\varepsilon})_{i,j}|1_{\{|(\widehat{\bm{\bm{\Sigma}}}_{\varepsilon})_{i,j}|<\tau_{i,j}\}}.
\end{align*}
Then we derive the upper bounds for the above two terms.

For the first term, without loss of generality, we take for instance
the hard thresholding function $h(z,\tau)=z1_{\{|z|\geq\tau\}}$,
and the proof for other thresholding functions follows from the similar
ways. We have that 
\begin{align*}
|h((\widehat{\bm{\bm{\Sigma}}}_{\varepsilon})_{i,j},\tau_{i,j})-(\bm{\Sigma}_{\varepsilon})_{i,j}|1_{\{|(\widehat{\bm{\bm{\Sigma}}}_{\varepsilon})_{i,j}|\geq\tau_{i,j}\}} & =|(\widehat{\bm{\bm{\Sigma}}}_{\varepsilon})_{i,j}-(\bm{\Sigma}_{\varepsilon})_{i,j}|1_{\{|(\widehat{\bm{\bm{\Sigma}}}_{\varepsilon})_{i,j}|\geq\tau_{i,j}\}}\\
 & \leq|(\widehat{\bm{\bm{\Sigma}}}_{\varepsilon})_{i,j}-(\bm{\Sigma}_{\varepsilon})_{i,j}|\lesssim\epsilon_{N,T}\sqrt{(\bm{\Sigma}_{\varepsilon})_{i,i}(\bm{\Sigma}_{\varepsilon})_{j,j}}\\
 & \lesssim(\epsilon_{N,T}\sqrt{(\bm{\Sigma}_{\varepsilon})_{i,i}(\bm{\Sigma}_{\varepsilon})_{j,j}})^{1-q}|(\bm{\Sigma}_{\varepsilon})_{i,j}|^{q},
\end{align*}
where the last second inequality is because $\epsilon_{N,T}\sqrt{(\bm{\Sigma}_{\varepsilon})_{i,i}(\bm{\Sigma}_{\varepsilon})_{j,j}}\lesssim|(\bm{\Sigma}_{\varepsilon})_{i,j}|$.

Next, for the second term, we have that 
\begin{align*}
|(\bm{\Sigma}_{\varepsilon})_{i,j}|1_{\{|(\widehat{\bm{\bm{\Sigma}}}_{\varepsilon})_{i,j}|<\tau_{i,j}\}} & =|(\bm{\Sigma}_{\varepsilon})_{i,j}|^{q}\cdot(|(\bm{\Sigma}_{\varepsilon})_{i,j}|)^{1-q}\cdot1_{\{|(\widehat{\bm{\bm{\Sigma}}}_{\varepsilon})_{i,j}|<\tau_{i,j}\}}\\
 & \lesssim|(\bm{\Sigma}_{\varepsilon})_{i,j}|^{q}\cdot(|(\widehat{\bm{\bm{\Sigma}}}_{\varepsilon})_{i,j}|+\epsilon_{N,T}\sqrt{(\bm{\Sigma}_{\varepsilon})_{i,i}(\bm{\Sigma}_{\varepsilon})_{j,j}})^{1-q}\cdot1_{\{|(\widehat{\bm{\bm{\Sigma}}}_{\varepsilon})_{i,j}|<\tau_{i,j}\}}\\
 & \leq|(\bm{\Sigma}_{\varepsilon})_{i,j}|^{q}\cdot(\tau_{i,j}+\epsilon_{N,T}\sqrt{(\bm{\Sigma}_{\varepsilon})_{i,i}(\bm{\Sigma}_{\varepsilon})_{j,j}})^{1-q}.
\end{align*}
For $\tau_{i,j}$, since $\max_{1\leq i,j\leq N}\bigl|(\widehat{\bm{\bm{\Sigma}}}_{\varepsilon}-\bm{\Sigma}_{\varepsilon})_{i,j}\bigl|\lesssim\epsilon_{N,T}\sqrt{(\bm{\Sigma}_{\varepsilon})_{i,i}(\bm{\Sigma}_{\varepsilon})_{j,j}}$
and $\epsilon_{N,T}\ll1$, we obtain that $\max_{1\leq i\leq N}\bigl|(\widehat{\bm{\bm{\Sigma}}}_{\varepsilon})_{i,i}-(\bm{\Sigma}_{\varepsilon})_{i,i}\bigl|\lesssim\epsilon_{N,T}(\bm{\Sigma}_{\varepsilon})_{i,i}$
and thus $(\widehat{\bm{\bm{\Sigma}}}_{\varepsilon})_{i,i}\asymp(\bm{\Sigma}_{\varepsilon})_{i,i}$,
implying $\tau_{i,j}=C\epsilon_{N,T}\sqrt{(\widehat{\bm{\bm{\Sigma}}}_{\varepsilon})_{i,i}(\widehat{\bm{\bm{\Sigma}}}_{\varepsilon})_{j,j}}\lesssim\epsilon_{N,T}\sqrt{(\bm{\Sigma}_{\varepsilon})_{i,i}(\bm{\Sigma}_{\varepsilon})_{j,j}}$.
So, we obtain that 
\begin{align*}
|(\bm{\Sigma}_{\varepsilon})_{i,j}|1_{\{|(\widehat{\bm{\bm{\Sigma}}}_{\varepsilon})_{i,j}|<\tau_{i,j}\}} & \lesssim|(\bm{\Sigma}_{\varepsilon})_{i,j}|^{q}(\epsilon_{N,T}\sqrt{(\bm{\Sigma}_{\varepsilon})_{i,i}(\bm{\Sigma}_{\varepsilon})_{j,j}}+\epsilon_{N,T}\sqrt{(\bm{\Sigma}_{\varepsilon})_{i,i}(\bm{\Sigma}_{\varepsilon})_{j,j}})^{1-q}\\
 & \lesssim(\epsilon_{N,T})^{1-q}((\bm{\Sigma}_{\varepsilon})_{i,i}(\bm{\Sigma}_{\varepsilon})_{j,j})^{(1-q)/2}|(\bm{\Sigma}_{\varepsilon})_{i,j}|^{q}.
\end{align*}

Combining the bounds for the above two terms, we conclude that, using
the assumption on the sparsity of $\bm{\Sigma}_{\varepsilon}$, 
\begin{align*}
\big\Vert\widehat{\bm{\bm{\Sigma}}}_{\varepsilon}^{\tau}-\bm{\Sigma}_{\varepsilon}\big\Vert_{2} & \leq\max_{1\leq i\leq N}\sum_{j=1}^{N}|(\widehat{\bm{\bm{\Sigma}}}_{\varepsilon}^{\tau})_{i,j}-(\bm{\Sigma}_{\varepsilon})_{i,j}|\\
 & \lesssim(\epsilon_{N,T})^{1-q}\max_{1\leq i\leq N}\sum_{j=1}^{N}((\bm{\Sigma}_{\varepsilon})_{i,i}(\bm{\Sigma}_{\varepsilon})_{j,j})^{(1-q)/2}|(\bm{\Sigma}_{\varepsilon})_{i,j}|^{q}\\
 & \leq(\epsilon_{N,T})^{1-q}s(\bm{\Sigma}_{\varepsilon}).
\end{align*}

\section{Proof of Theorem \ref{Thm beta structure test}: Test for structural
breaks in betas}

\subsection{Some useful lemmas}

To prove Theorem \ref{Thm factor test plug-in Chi-sq}, we collect
some useful lemmas as preparations. Recall that, we already showed
in Lemma \ref{Lemma V subset good event} for some properties of $\bm{\bm{V}}_{S,\cdot}$
for any subset $S$. Using these properties of $\bm{\bm{\bm{V}}}_{S,\cdot}$,
we are able to establish the following lemma.

\begin{lemma}\label{Lemma beta_hat 1st-order approx}Suppose that
$r+\log n\ll\min(T_{1},T_{2})$ and the assumptions in Theorem \ref{Thm UV 1st approx row-wise error}
hold. Then we have that, with probability at least $1-O(n^{-2})$:

(i) It holds $\sigma_{i}(\widehat{\bm{V}}^{j})\asymp\sigma_{i}(\bm{V}^{j})\asymp(T_{j}/T)^{1/2}$,
$\sigma_{i}((\widehat{\bm{V}}^{j})^{+})\asymp\sigma_{i}((\bm{V}^{j})^{+})\asymp(T/T_{j})^{1/2}$,
\[
\Vert(\widehat{\bm{V}}^{j})^{+}-(\bm{V}^{j}\bm{R}_{V}^{\top})^{+}\Vert_{2}\lesssim(T/T_{j})^{1/2}\frac{1}{\theta}\sqrt{r}\log n,
\]
and
\[
\big\Vert\prod_{j=1}^{2}(\widehat{\bm{V}}^{j})^{\top}\widehat{\bm{V}}^{j}-\prod_{j=1}^{2}(\bm{V}^{j}\bm{R}_{V}^{\top})^{\top}\bm{V}^{j}\bm{R}_{V}^{\top}\big\Vert_{2}\lesssim(T_{1}T_{2}/T^{2})\frac{1}{\theta}\sqrt{r}\log n.
\]

(ii) Under the null hypothesis $H_{0}:\ \bm{b}_{i}^{1}=\bm{b}_{i}^{2}$,
it holds
\[
\big\Vert\widehat{\bm{b}}_{i}^{1}-\widehat{\bm{b}}_{i}^{2}\big\Vert_{2}\lesssim\frac{1}{\theta}\sqrt{r}\log n\Vert\bm{b}_{i}^{j}\Vert_{2}+\frac{1}{\theta}\sqrt{r}\log^{3/2}n\Vert(\bm{\Sigma}_{\varepsilon}^{1/2})_{i,\cdot}\Vert_{2}+\sum_{j=1}^{2}\frac{1}{\sqrt{T_{j}}}\Vert(\bm{\Sigma}_{\varepsilon}^{1/2})_{i,\cdot}\Vert_{2}\sqrt{r\log n}
\]
and $\widehat{\bm{b}}_{i}^{1}-\widehat{\bm{b}}_{i}^{2}$ admits the
following first-order approximation
\[
\widehat{\bm{b}}_{i}^{1}-\widehat{\bm{b}}_{i}^{2}=\bm{G}_{b_{i}}+\bm{\Upsilon}_{b_{i}},\text{\qquad with\qquad}\bm{G}_{b_{i}}:=\frac{1}{\sqrt{T}}[\sum_{j=1}^{2}(-1)^{j-1}(\bm{V}^{j}\bm{R}_{V}^{\top})^{+}(\bm{Z}^{j})^{\top}]\bm{\Sigma}_{\varepsilon}^{1/2}[\bm{I}_{N}+\bm{U}\bm{U}^{\top}]_{\cdot,i},
\]
where the term $\bm{\Upsilon}_{b_{i}}$ satisfies that
\[
\big\Vert\bm{\Upsilon}_{b_{i}}\big\Vert_{2}\lesssim\frac{1}{\theta}\sqrt{r}\log^{3/2}n\cdot\sqrt{\varphi_{i}}+(\frac{n}{\theta^{2}T}+\frac{1}{\theta\sqrt{T}})\sqrt{r}\log^{3/2}n\cdot\sigma_{r}\Vert\bm{U}_{i,\cdot}\Vert_{2},
\]
with $\varphi_{i}:=[\bm{\Sigma}_{\varepsilon}+\bm{U}\bm{U}^{\top}\bm{\Sigma}_{\varepsilon}+\bm{\Sigma}_{\varepsilon}\bm{U}\bm{U}^{\top}+\bm{U}\bm{U}^{\top}\bm{\Sigma}_{\varepsilon}\bm{U}\bm{U}^{\top}]_{i,i}$.

\end{lemma}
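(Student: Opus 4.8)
Since $\widehat{\bm{F}}^{j}=\sqrt{T}\,\widehat{\bm{V}}^{j}$, where $\widehat{\bm{V}}^{j}$ denotes the rows of $\widehat{\bm{V}}$ indexed by $\Gamma_{j}$, a direct computation gives $\widehat{\bm{b}}_{i}^{j}=T^{-1/2}(\widehat{\bm{V}}^{j})^{+}(\bm{X}_{i,\cdot}^{j})^{\top}$ as soon as $\widehat{\bm{V}}^{j}$ has full column rank. On the event of Lemma~\ref{Lemma SVD BF good event} we also have $\bm{V}=T^{-1/2}\bm{F}\bm{J}$, hence $\bm{V}^{j}=T^{-1/2}\bm{F}^{j}\bm{J}$, and combining with $\bm{B}\bm{F}^{\top}=\sqrt{T}\,\bm{U}\bm{\Lambda}\bm{V}^{\top}$ yields the model decomposition $(\bm{X}_{i,\cdot}^{j})^{\top}=\sqrt{T}\,\bm{V}^{j}\bm{\Lambda}\bm{U}_{i,\cdot}^{\top}+((\bm{E}^{j})_{i,\cdot})^{\top}$. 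These two identities, together with the perturbation machinery already developed, are the backbone of the argument. It will be convenient to record $\bm{\Lambda}\bm{U}_{i,\cdot}^{\top}=\bm{J}^{-1}\bm{b}_{i}$, which follows from $\bm{J}=\bm{\Sigma}\bm{Q}\bm{\Lambda}^{-1}$, $\bm{U}=\bar{\bm{U}}\bm{Q}$ and $\bm{b}_{i}=\bm{\Sigma}\bar{\bm{U}}_{i,\cdot}^{\top}$; since $\sigma_{k}(\bm{J})\asymp1$ this gives $\Vert\bm{\Lambda}\bm{U}_{i,\cdot}^{\top}\Vert_{2}\asymp\Vert\bm{b}_{i}\Vert_{2}$.

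\textbf{Part (i).} Applying Lemma~\ref{Lemma V subset good event} with $S=\Gamma_{j}$ gives $\sigma_{k}(\bm{V}^{j})\asymp\sqrt{T_{j}/T}$, while the row-wise bound of Corollary~\ref{corollary:error bound B F} yields $\Vert\widehat{\bm{V}}^{j}\bm{R}_{V}-\bm{V}^{j}\Vert_{2}\le\sqrt{T_{j}}\,\Vert\widehat{\bm{V}}\bm{R}_{V}-\bm{V}\Vert_{2,\infty}\lesssim\sqrt{T_{j}}\,\rho\sqrt{r/n}\log n$, which is $\ll\sqrt{T_{j}/T}$ under the stated SNR conditions. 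Weyl's inequality then upgrades this to $\sigma_{k}(\widehat{\bm{V}}^{j})\asymp\sqrt{T_{j}/T}$ (in particular full column rank), and the bounds for $(\widehat{\bm{V}}^{j})^{+}$ and $(\bm{V}^{j})^{+}$ follow as in the proof of Lemma~\ref{Lemma V subset good event}. Theorem~3.3 of \citet{stewart1977perturbation} then gives $\Vert(\widehat{\bm{V}}^{j})^{+}-(\bm{V}^{j}\bm{R}_{V}^{\top})^{+}\Vert_{2}\lesssim\max(\Vert(\bm{V}^{j}\bm{R}_{V}^{\top})^{+}\Vert_{2}^{2},\Vert(\widehat{\bm{V}}^{j})^{+}\Vert_{2}^{2})\,\Vert\widehat{\bm{V}}^{j}\bm{R}_{V}-\bm{V}^{j}\Vert_{2}\lesssim\sqrt{T/T_{j}}\,\theta^{-1}\sqrt{r}\log n$ after substituting $\sqrt{T}\,\rho/\sqrt{n}=\theta^{-1}$. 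The Gram-product bound is obtained by telescoping $\prod_{j}(\widehat{\bm{V}}^{j})^{\top}\widehat{\bm{V}}^{j}-\prod_{j}(\bm{V}^{j}\bm{R}_{V}^{\top})^{\top}\bm{V}^{j}\bm{R}_{V}^{\top}$, using $\Vert(\widehat{\bm{V}}^{j})^{\top}\widehat{\bm{V}}^{j}\Vert_{2}\asymp T_{j}/T$ and $\Vert(\widehat{\bm{V}}^{j})^{\top}\widehat{\bm{V}}^{j}-(\bm{V}^{j}\bm{R}_{V}^{\top})^{\top}\bm{V}^{j}\bm{R}_{V}^{\top}\Vert_{2}\lesssim(T_{j}/T)\,\theta^{-1}\sqrt{r}\log n$.

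\textbf{Part (ii).} The crucial observation is $(\bm{V}^{j}\bm{R}_{V}^{\top})^{+}\bm{V}^{j}=\bm{R}_{V}$ for \emph{both} $j$: the $j$-independent piece $\bm{R}_{V}\bm{\Lambda}\bm{U}_{i,\cdot}^{\top}$ of $\widehat{\bm{b}}_{i}^{j}$ (which estimates the common $\bm{b}_{i}$ up to rotation) therefore cancels in $\widehat{\bm{b}}_{i}^{1}-\widehat{\bm{b}}_{i}^{2}$ under $H_{0}$, and what survives is the pseudoinverse perturbation acting on the signal part $\bm{V}^{j}\bm{\Lambda}\bm{U}_{i,\cdot}^{\top}$ plus the pseudoinverse acting on the noise $((\bm{E}^{j})_{i,\cdot})^{\top}$. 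Writing $\bm{P}^{j}:=(\widehat{\bm{V}}^{j})^{+}-(\bm{V}^{j}\bm{R}_{V}^{\top})^{+}$ and invoking the exact first-order expansion of the pseudoinverse (the identity after (4.6) of \citet{stewart1977perturbation}, specialized to full column rank) together with $\widehat{\bm{V}}^{j}\bm{R}_{V}-\bm{V}^{j}=(\bm{G}_{V})_{\Gamma_{j},\cdot}+(\bm{\Psi}_{V})_{\Gamma_{j},\cdot}$ from Theorem~\ref{Thm UV 1st approx row-wise error} and the explicit form $\bm{G}_{V}=T^{-1/2}\bm{E}^{\top}\bm{U}\bm{\Lambda}^{-1}$: the leading $(\bm{G}_{V})_{\Gamma_{j},\cdot}$-contribution to $\bm{P}^{j}\bm{V}^{j}\bm{\Lambda}\bm{U}_{i,\cdot}^{\top}$ carries a factor $\bm{\Lambda}^{-1}\bm{\Lambda}=\bm{I}_{r}$ and produces the $\bm{U}\bm{U}^{\top}$ term, while replacing $(\widehat{\bm{V}}^{j})^{+}$ by $(\bm{V}^{j}\bm{R}_{V}^{\top})^{+}$ at leading order in the noise part and using $((\bm{E}^{j})_{i,\cdot})^{\top}=(\bm{Z}^{j})^{\top}(\bm{\Sigma}_{\varepsilon}^{1/2})_{\cdot,i}$ produces the $\bm{I}_{N}$ term; collecting these yields $\bm{G}_{b_{i}}$. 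Everything else goes into $\bm{\Upsilon}_{b_{i}}$: the $(\bm{\Psi}_{V})_{\Gamma_{j},\cdot}$-contribution, bounded through $\Vert\bm{\Psi}_{V}\Vert_{2,\infty}$ and the row-wise bound of Theorem~\ref{Thm UV 1st approx row-wise error} (this, via $\Vert\bm{\Lambda}\bm{U}_{i,\cdot}^{\top}\Vert_{2}\asymp\Vert\bm{b}_{i}\Vert_{2}$ and the scales $\sigma_{r}\Vert\bm{U}_{i,\cdot}\Vert_{2}$ and $\sqrt{\varphi_{i}}$, generates all the terms in the claimed bounds for $\Vert\widehat{\bm{b}}_{i}^{1}-\widehat{\bm{b}}_{i}^{2}\Vert_{2}$ and $\Vert\bm{\Upsilon}_{b_{i}}\Vert_{2}$), the quadratic-in-perturbation remainder of the pseudoinverse expansion, and the $\bm{P}^{j}((\bm{E}^{j})_{i,\cdot})^{\top}$ cross term (controlled by the part~(i) bound on $\bm{P}^{j}$ together with Lemma~\ref{Lemma Z norms A B}). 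The bound on $\Vert\bm{G}_{b_{i}}\Vert_{2}$ follows from a conditional (on $\bm{F}$) concentration for the sum of independent rows of $\bm{Z}^{j}$, whose conditional second moment equals $T^{-1}\Vert\bm{w}\Vert_{2}^{2}\sum_{j}\tr(((\bm{V}^{j})^{\top}\bm{V}^{j})^{-1})\asymp(r/T)\,\varphi_{i}\sum_{j}(T/T_{j})$, where $\bm{w}=\bm{\Sigma}_{\varepsilon}^{1/2}[\bm{I}_{N}+\bm{U}\bm{U}^{\top}]_{\cdot,i}$ and $\Vert\bm{w}\Vert_{2}^{2}=\varphi_{i}$; the Gaussian versus sub-Gaussian distinction enters precisely here and explains the Gaussian remark. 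The step I expect to be the main obstacle is the clean isolation of $\bm{G}_{b_{i}}$ with the precise $[\bm{I}_{N}+\bm{U}\bm{U}^{\top}]_{\cdot,i}$ structure: one must carry the exact first-order Stewart identity rather than the crude norm bound, and — more importantly — show that every discarded term is genuinely higher order when measured against the \emph{right} data-dependent scales, $\sqrt{\varphi_{i}}$ and $\sigma_{r}\Vert\bm{U}_{i,\cdot}\Vert_{2}$, rather than worst-case quantities; concretely one must not bound $\Vert\bm{\Lambda}\bm{U}_{i,\cdot}^{\top}\Vert_{2}$ by $\sigma_{1}\Vert\bm{U}_{i,\cdot}\Vert_{2}$ but instead exploit the $\bm{\Lambda}^{-1}\bm{\Lambda}$ cancellation and the condition-number-free row-wise bounds on $\bm{\Psi}_{V}$, which is exactly what keeps $\kappa$ out of the final estimate.
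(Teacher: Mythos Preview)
Your proposal is correct and follows essentially the same route as the paper: part~(i) via Lemma~\ref{Lemma V subset good event}, the row-wise $\ell_{2,\infty}$ bound on $\widehat{\bm{V}}-\bm{V}\bm{R}_{V}^{\top}$, and a Stewart-type pseudoinverse perturbation; part~(ii) by expanding $(\widehat{\bm{V}}^{j})^{+}$ around $(\bm{V}^{j}\bm{R}_{V}^{\top})^{+}$, exploiting the cancellation $(\bm{V}^{j}\bm{R}_{V}^{\top})^{+}\bm{V}^{j}=\bm{R}_{V}$ under $H_{0}$, and identifying the $[\bm{I}_{N}+\bm{U}\bm{U}^{\top}]_{\cdot,i}$ structure from the $\bm{\Lambda}^{-1}\bm{\Lambda}$ collapse in $\bm{G}_{V}^{j}\bm{J}^{-1}\bm{b}_{i}$. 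One small citation slip: the ``identity after (4.6)'' in \citet{stewart1977perturbation} is for projection matrices, not pseudoinverses; the exact expansion you need for $(\widehat{\bm{V}}^{j})^{+}-(\bm{V}^{j}\bm{R}_{V}^{\top})^{+}$ is Theorem~3.2 there, and the paper also regroups the two remainder pieces $\bm{P}^{j}(\bm{E}_{i,\cdot}^{j})^{\top}$ and $\bm{P}^{j}\bm{G}_{V}^{j}\bm{J}^{-1}\bm{b}_{i}^{j}$ into a single $\bm{P}^{j}(\bm{E}^{j})^{\top}[\bm{I}_{N}+\bm{U}\bm{U}^{\top}]_{\cdot,i}$ term so that both are controlled by $\sqrt{\varphi_{i}}$ simultaneously.
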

\begin{proof}
For the first-order approximation of $\widehat{\bm{b}}_{i}^{1}-\widehat{\bm{b}}_{i}^{2}$,
we start by rewriting the regression coefficients $\widehat{\bm{b}}_{i}^{j}$
as $\widehat{\bm{b}}_{i}^{j}=T^{-1/2}(\widehat{\bm{V}}^{j})^{+}(\bm{X}_{i,\cdot}^{j})^{\top}$,
where $(\widehat{\bm{V}}^{j})^{+}=((\widehat{\bm{V}}^{j})^{\top}\widehat{\bm{V}}^{j})^{-1}(\widehat{\bm{V}}^{j})^{\top}$
is the pseudo inverse of $\widehat{\bm{F}}^{j}$. We obtain by our
master theorem that $\widehat{\bm{V}}-\bm{V}\bm{R}_{V}^{\top}=\bm{G}_{V}\bm{R}_{V}^{\top}+\bm{\Psi}_{V}\bm{R}_{V}^{\top}$.
We partition the matrices as $\widehat{\bm{V}}=((\widehat{\bm{V}}^{1})^{\top},(\widehat{\bm{V}}^{2})^{\top})^{\top}$,
$\bm{V}=((\bm{V}^{1})^{\top},(\bm{V}^{2})^{\top})^{\top}$, $\bm{E}=(\bm{E}^{1},\bm{E}^{2})=(\bm{\Sigma}_{\varepsilon}^{1/2}\bm{Z}^{1},\bm{\Sigma}_{\varepsilon}^{1/2}\bm{Z}^{2})$,
where $\bm{V}^{j},\widehat{\bm{V}}^{j}\in\mathbb{R}^{T_{j}\times r}$,
$\bm{E}^{j},\bm{Z}^{j}\in\mathbb{R}^{N\times T_{j}}$, and denote
$\bm{D}^{j}:=\widehat{\bm{V}}^{j}-\bm{V}^{j}\bm{R}_{V}^{\top}$.

\textit{Proof of (i).}
Similar to Step 1 in proof of Lemma \ref{Lemma beta_hat 1st-order approx},
we have that
\[
\big\Vert\widehat{\bm{V}}^{j}-\bm{V}^{j}\bm{R}_{V}^{\top}\big\Vert_{2}\lesssim\sqrt{T_{j}}\rho\sqrt{\frac{r}{n}}\log n=\sqrt{T_{j}/T}\frac{1}{\theta}\sqrt{r}\log n.
\]
Then using the above bounds we can prove that, with probability at
least $1-O(n^{-2})$, the ranks of $\widehat{\bm{V}}^{j}$ and $\bm{V}^{j}$
are both equal to $r$. Similar to the proof of Lemma \ref{Lemma V subset good event},
we have that $\sigma_{i}(\widehat{\bm{V}}^{j})\asymp\sigma_{i}(\bm{V}^{j})\asymp(T_{j}/T)^{1/2}$.
We obtain by Theorem 3.4 in \citet{stewart1977perturbation} that
\begin{align*}
\big\Vert(\widehat{\bm{V}}^{j})^{+}-(\bm{V}^{j}\bm{R}_{V}^{\top})^{+}\big\Vert_{2} & \lesssim\Vert(\widehat{\bm{V}}^{j})^{+}\Vert_{2}\Vert(\bm{V}^{j}\bm{R}_{V}^{\top})^{+}\Vert_{2}\Vert\widehat{\bm{V}}^{j}-\bm{V}^{j}\bm{R}_{V}^{\top}\Vert_{2}\\
 & \lesssim(\Vert(\widehat{\bm{V}}^{j})^{+}-(\bm{V}^{j}\bm{R}_{V}^{\top})^{+}\Vert_{2}+\Vert(\bm{V}^{j}\bm{R}_{V}^{\top})^{+}\Vert_{2})\Vert(\bm{V}^{j}\bm{R}_{V}^{\top})^{+}\Vert_{2}\Vert\widehat{\bm{V}}^{j}-\bm{V}^{j}\bm{R}_{V}^{\top}\Vert_{2}\\
 & \lesssim\Vert(\widehat{\bm{V}}^{j})^{+}-(\bm{V}^{j}\bm{R}_{V}^{\top})^{+}\Vert_{2}\times\Vert(\bm{V}^{j}\bm{R}_{V}^{\top})^{+}\Vert_{2}\Vert\widehat{\bm{V}}^{j}-\bm{V}^{j}\bm{R}_{V}^{\top}\Vert_{2}\\
 & +\Vert(\bm{V}^{j}\bm{R}_{V}^{\top})^{+}\Vert_{2}^{2}\Vert\widehat{\bm{V}}^{j}-\bm{V}^{j}\bm{R}_{V}^{\top}\Vert_{2}.
\end{align*}
Note that the coefficient for $\Vert(\widehat{\bm{V}}^{j})^{+}-(\bm{V}^{j}\bm{R}_{V}^{\top})^{+}\Vert_{2}$
in the right hand side of the above inequality satisfies that $\Vert(\bm{V}^{j}\bm{R}_{V}^{\top})^{+}\Vert_{2}\Vert\widehat{\bm{V}}^{j}-\bm{V}^{j}\bm{R}_{V}^{\top}\Vert_{2}\lesssim(T/T_{j})^{1/2}\sqrt{T_{j}}\rho\sqrt{\frac{r}{n}}\log n\ll1$.
Thus, by the self-bounding method, we obtain that
\[
\big\Vert(\widehat{\bm{V}}^{j})^{+}-(\bm{V}^{j}\bm{R}_{V}^{\top})^{+}\big\Vert_{2}\lesssim\Vert(\bm{V}^{j}\bm{R}_{V}^{\top})^{+}\Vert_{2}^{2}\Vert\widehat{\bm{V}}^{j}-\bm{V}^{j}\bm{R}_{V}^{\top}\Vert_{2}\lesssim\frac{T}{T_{j}}\sqrt{T_{j}}\rho\sqrt{\frac{r}{n}}\log n\lesssim(T/T_{j})^{1/2}\frac{1}{\theta}\sqrt{r}\log n.
\]
Then we obtain that $\sigma_{i}((\widehat{\bm{V}}^{j})^{+})\asymp\sigma_{i}((\bm{V}^{j})^{+})\asymp(T/T_{j})^{1/2}$.

We have that
\begin{align*}
\Vert(\widehat{\bm{V}}^{j})^{\top}\widehat{\bm{V}}^{j}-(\bm{V}^{j}\bm{R}_{V}^{\top})^{\top}\bm{V}^{j}\bm{R}_{V}^{\top}\Vert_{2} & \lesssim(\Vert\widehat{\bm{V}}^{j}\Vert_{2}+\Vert\bm{V}^{j}\Vert_{2})\Vert\widehat{\bm{V}}^{j}-\bm{V}^{j}\bm{R}_{V}^{\top}\Vert_{2}\\
 & \lesssim(T_{j}/T)^{1/2}\sqrt{T_{j}}\rho\sqrt{\frac{r}{n}}\log n\lesssim T_{j}\rho\sqrt{\frac{r}{nT}}\log n=(T_{j}/T)\frac{1}{\theta}\sqrt{r}\log n.
\end{align*}
Then, we obtain that
\begin{align*}
\big\Vert\prod_{j=1}^{2}(\widehat{\bm{V}}^{j})^{\top}\widehat{\bm{V}}^{j}-\prod_{j=1}^{2}(\bm{V}^{j}\bm{R}_{V}^{\top})^{\top}\bm{V}^{j}\bm{R}_{V}^{\top}\big\Vert_{2} & \lesssim((T_{1}/T)^{1/2})^{2}T_{2}\rho\sqrt{\frac{r}{nT}}\log n+((T_{2}/T)^{1/2})^{2}T_{1}\rho\sqrt{\frac{r}{nT}}\log n\\
 & \lesssim\frac{1}{T}T_{1}T_{2}\rho\sqrt{\frac{r}{nT}}\log n\lesssim(T_{1}T_{2}/T^{2})\frac{1}{\theta}\sqrt{r}\log n.
\end{align*}

\textit{Proof of (ii).}
For the upper bound of $\widehat{\bm{b}}_{i}^{1}-\widehat{\bm{b}}_{i}^{2}$,
we have that
\begin{align*}
\widehat{\bm{b}}_{i}^{j} & =T^{-1/2}(\widehat{\bm{V}}^{j})^{+}(\bm{X}_{i,\cdot}^{j})^{\top}=T^{-1/2}(\widehat{\bm{V}}^{j})^{+}(\bm{F}^{j}\bm{b}_{i}^{j}+(\bm{E}_{i,\cdot}^{j})^{\top})\\
 & =(\widehat{\bm{V}}^{j})^{+}\bm{V}^{j}\bm{R}_{V}^{\top}(\bm{R}_{V}\bm{J}^{-1}\bm{b}_{i}^{j})+T^{-1/2}(\widehat{\bm{V}}^{j})^{+}(\bm{E}_{i,\cdot}^{j})^{\top}\\
 & =[\bm{I}_{r}-(\widehat{\bm{V}}^{j})^{+}(\widehat{\bm{V}}^{j}-\bm{V}^{j}\bm{R}_{V}^{\top})](\bm{R}_{V}\bm{J}^{-1}\bm{b}_{i}^{j})+T^{-1/2}(\widehat{\bm{V}}^{j})^{+}(\bm{E}_{i,\cdot}^{j})^{\top}.
\end{align*}
So we obtain that, under the null hypothesis that $\bm{b}_{i}^{1}=\bm{b}_{i}^{2}$,
it holds
\begin{align*}
\widehat{\bm{b}}_{i}^{1}-\widehat{\bm{b}}_{i}^{2} & =\sum_{j=1}^{2}(-1)^{j}[(\widehat{\bm{V}}^{j})^{+}(\widehat{\bm{V}}^{j}-\bm{V}^{j}\bm{R}_{V}^{\top})(\bm{R}_{V}\bm{J}^{-1}\bm{b}_{i}^{j})-T^{-1/2}(\widehat{\bm{V}}^{j})^{+}(\bm{E}_{i,\cdot}^{j})^{\top}]\\
 & =\sum_{j=1}^{2}(-1)^{j}[(\widehat{\bm{V}}^{j})^{+}(\widehat{\bm{V}}^{j}-\bm{V}^{j}\bm{R}_{V}^{\top})(\bm{R}_{V}\bm{J}^{-1}\bm{b}_{i}^{j})]\\
 & +\sum_{j=1}^{2}(-1)^{j-1}T^{-1/2}\{[(\widehat{\bm{V}}^{j})^{+}-(\bm{V}^{j}\bm{R}_{V}^{\top})^{+}](\bm{Z}^{j})^{\top}(\bm{\Sigma}_{\varepsilon}^{1/2})_{\cdot,i}+(\bm{V}^{j}\bm{R}_{V}^{\top})^{+}(\bm{Z}^{j})^{\top}(\bm{\Sigma}_{\varepsilon}^{1/2})_{\cdot,i}\}.
\end{align*}
Then we obtain by (G.3) in Lemma 19 of \citet{yan2024entrywise} that
\begin{align*}
\big\Vert\widehat{\bm{b}}_{i}^{1}-\widehat{\bm{b}}_{i}^{2}\big\Vert_{2} & \lesssim\sum_{j=1}^{2}(T/T_{j})^{1/2}\sqrt{T_{j}/T}\frac{1}{\theta}\sqrt{r}\log n\Vert\bm{b}_{i}^{j}\Vert_{2}\\
 & +\sum_{j=1}^{2}T^{-1/2}\cdot(T/T_{j})^{1/2}\frac{1}{\theta}\sqrt{r}\log n\cdot\Vert(\bm{\Sigma}_{\varepsilon}^{1/2})_{i,\cdot}\Vert_{2}\sqrt{T_{j}\log n}+T^{-1/2}\cdot\sqrt{T/T_{j}}\Vert(\bm{\Sigma}_{\varepsilon}^{1/2})_{i,\cdot}\Vert_{2}\sqrt{r\log n}\\
 & \lesssim\frac{1}{\theta}\sqrt{r}\log n\Vert\bm{b}_{i}^{j}\Vert_{2}+\frac{1}{\theta}\sqrt{r}\log^{3/2}n\Vert(\bm{\Sigma}_{\varepsilon}^{1/2})_{i,\cdot}\Vert_{2}+\sum_{j=1}^{2}(1/T_{j})^{1/2}\Vert(\bm{\Sigma}_{\varepsilon}^{1/2})_{i,\cdot}\Vert_{2}\sqrt{r\log n}.
\end{align*}

Then, we establish the first-order approximation for $\widehat{\bm{b}}_{i}^{1}-\widehat{\bm{b}}_{i}^{2}$.

\textit{Step 1 -- compute the first-order approximation.}

We obtain by Theorem 3.2 in \citet{stewart1977perturbation} that
\[
(\widehat{\bm{V}}^{j})^{+}-(\bm{V}^{j}\bm{R}_{V}^{\top})^{+}=(\widehat{\bm{V}}^{j})^{+}\bm{D}^{j}(\bm{V}^{j}\bm{R}_{V}^{\top})^{+}+((\widehat{\bm{V}}^{j})^{\top}\widehat{\bm{V}}^{j})^{-1}(\bm{D}^{j})^{\top}(\bm{I}_{T_{j}}-\bm{P}_{\bm{V}^{j}}),
\]
for $j=1,2$. Then, we obtain that
\begin{align*}
\widehat{\bm{b}}_{i}^{j}-T^{-1/2}(\bm{V}^{j}\bm{R}_{V}^{\top})^{+}(\bm{X}_{i,\cdot}^{j})^{\top} & =T^{-1/2}((\widehat{\bm{V}}^{j})^{+}-(\bm{V}^{j}\bm{R}_{V}^{\top})^{+})(\bm{X}_{i,\cdot}^{j})^{\top}\\
 & =T^{-1/2}((\widehat{\bm{V}}^{j})^{+}-(\bm{V}^{j}\bm{R}_{V}^{\top})^{+})\bm{F}^{j}\bm{b}_{i}^{j}+T^{-1/2}((\widehat{\bm{V}}^{j})^{+}-(\bm{V}^{j}\bm{R}_{V}^{\top})^{+})(\bm{E}_{i,\cdot}^{j})^{\top}\\
 & =T^{-1/2}(\widehat{\bm{V}}^{j})^{+}\bm{D}^{j}(\bm{V}^{j}\bm{R}_{V}^{\top})^{+}\bm{F}^{j}\bm{b}_{i}^{j}+T^{-1/2}((\widehat{\bm{V}}^{j})^{+}-(\bm{V}^{j}\bm{R}_{V}^{\top})^{+})(\bm{E}_{i,\cdot}^{j})^{\top}\\
 & =(\widehat{\bm{V}}^{j})^{+}\bm{D}^{j}\bm{R}_{V}\bm{J}^{-1}\bm{b}_{i}^{j}+T^{-1/2}((\widehat{\bm{V}}^{j})^{+}-(\bm{V}^{j}\bm{R}_{V}^{\top})^{+})(\bm{E}_{i,\cdot}^{j})^{\top}.
\end{align*}
Note that the term in the left hand side can be decomposed into
\[
T^{-1/2}(\bm{V}^{j}\bm{R}_{V}^{\top})^{+}(\bm{X}_{i,\cdot}^{j})^{\top}=T^{-1/2}(\bm{V}^{j}\bm{R}_{V}^{\top})^{+}(\bm{F}^{j}\bm{b}_{i}^{j}+(\bm{E}_{i,\cdot}^{j})^{\top})=\bm{R}_{V}\bm{J}^{-1}\bm{b}_{i}^{j}+T^{-1/2}(\bm{V}^{j}\bm{R}_{V}^{\top})^{+}(\bm{E}_{i,\cdot}^{j})^{\top},
\]
and the first term in the right hand side can be decomposed into
\begin{align*}
(\widehat{\bm{V}}^{j})^{+}\bm{D}^{j}\bm{R}_{V}\bm{J}^{-1}\bm{b}_{i}^{j} & =(\widehat{\bm{V}}^{j})^{+}\bm{D}^{j}\bm{R}_{V}\bm{J}^{-1}\bm{b}_{i}^{j}=(\widehat{\bm{V}}^{j})^{+}(\bm{G}_{V}^{j}\bm{R}_{V}^{\top}+\bm{\Psi}_{V}^{j}\bm{R}_{V}^{\top})\bm{R}_{V}\bm{J}^{-1}\bm{b}_{i}^{j}\\
 & =(\widehat{\bm{V}}^{j})^{+}\bm{G}_{V}^{j}\bm{J}^{-1}\bm{b}_{i}^{j}+(\widehat{\bm{V}}^{j})^{+}\bm{\Psi}_{V}^{j}\bm{J}^{-1}\bm{b}_{i}^{j}\\
 & =(\bm{V}^{j}\bm{R}_{V}^{\top})^{+}\bm{G}_{V}^{j}\bm{J}^{-1}\bm{b}_{i}^{j}+[((\widehat{\bm{V}}^{j})^{+}-(\bm{V}^{j}\bm{R}_{V}^{\top})^{+})\bm{G}_{V}^{j}+(\widehat{\bm{V}}^{j})^{+}\bm{\Psi}_{V}^{j}]\bm{J}^{-1}\bm{b}_{i}^{j}.
\end{align*}
So, we obtain by $\bm{G}_{V}^{j}\bm{J}^{-1}\bm{b}_{i}^{j}=T^{-1/2}(\bm{E}^{j})^{\top}\bm{U}(\bm{U}_{i,\cdot})^{\top}$
that
\begin{align*}
\widehat{\bm{b}}_{i}^{j}-\bm{R}_{V}\bm{J}^{-1}\bm{b}_{i}^{j} & =T^{-1/2}(\bm{V}^{j}\bm{R}_{V}^{\top})^{+}(\bm{E}_{i,\cdot}^{j})^{\top}+(\bm{V}^{j}\bm{R}_{V}^{\top})^{+}\bm{G}_{V}^{j}\bm{J}^{-1}\bm{b}_{i}^{j}+\bm{r}_{0}^{j}\\
 & =T^{-1/2}(\bm{V}^{j}\bm{R}_{V}^{\top})^{+}[(\bm{E}_{i,\cdot}^{j})^{\top}+(\bm{E}^{j})^{\top}\bm{U}(\bm{U}_{i,\cdot})^{\top}]+\bm{r}_{0}^{j}\\
 & =T^{-1/2}(\bm{V}^{j}\bm{R}_{V}^{\top})^{+}(\bm{E}^{j})^{\top}[\bm{I}_{N}+\bm{U}\bm{U}^{\top}]_{\cdot,i}+\bm{r}_{0}^{j},
\end{align*}
where the remainder term $\bm{r}_{0}^{j}$ is given by
\begin{align*}
\bm{r}_{0}^{j} & =T^{-1/2}((\widehat{\bm{V}}^{j})^{+}-(\bm{V}^{j}\bm{R}_{V}^{\top})^{+})(\bm{E}_{i,\cdot}^{j})^{\top}+[((\widehat{\bm{V}}^{j})^{+}-(\bm{V}^{j}\bm{R}_{V}^{\top})^{+})\bm{G}_{V}^{j}+(\widehat{\bm{V}}^{j})^{+}\bm{\Psi}_{V}^{j}]\bm{J}^{-1}\bm{b}_{i}^{j}\\
 & =T^{-1/2}((\widehat{\bm{V}}^{j})^{+}-(\bm{V}^{j}\bm{R}_{V}^{\top})^{+})(\bm{E}^{j})^{\top}[\bm{I}_{N}+\bm{U}\bm{U}^{\top}]_{\cdot,i}+(\widehat{\bm{V}}^{j})^{+}\bm{\Psi}_{V}^{j}\bm{J}^{-1}\bm{b}_{i}^{j}.
\end{align*}

Under the null hypothesis $H_{0}:\ \bm{b}_{i}^{1}=\bm{b}_{i}^{2}$,
we have that $\widehat{\bm{b}}_{i}^{1}-\widehat{\bm{b}}_{i}^{2}=\bm{G}_{b_{i}}+\bm{\Upsilon}_{b_{i}}$
where the first-order term $\bm{G}_{b_{i}}$ is given by
\begin{align*}
\bm{G}_{b_{i}} & =T^{-1/2}[(\bm{V}^{1}\bm{R}_{V}^{\top})^{+}(\bm{E}^{1})^{\top}-(\bm{V}^{2}\bm{R}_{V}^{\top})^{+}(\bm{E}^{2})^{\top}][\bm{I}_{N}+\bm{U}\bm{U}^{\top}]_{\cdot,i}\\
 & =T^{-1/2}[(\bm{V}^{1}\bm{R}_{V}^{\top})^{+}(\bm{Z}^{1})^{\top}-(\bm{V}^{2}\bm{R}_{V}^{\top})^{+}(\bm{Z}^{2})^{\top}]\bm{\Sigma}_{\varepsilon}^{1/2}[\bm{I}_{N}+\bm{U}\bm{U}^{\top}]_{\cdot,i},
\end{align*}
and the remainder term $\bm{\Upsilon}_{b_{i}}$ is given by $\bm{\Upsilon}_{b_{i}}=\bm{r}_{0}^{1}-\bm{r}_{0}^{2}$.

\textit{Step 2 -- derive the upper bound for $\bm{\Upsilon}_{b_{i}}$.}

Then we have that, by the expression of $\bm{\Psi}_{V}^{j}$ and (G.3)
in Lemma 19 of \citet{yan2024entrywise}
\begin{align*}
\Vert\bm{r}_{0}^{j}\Vert_{2} & \lesssim T^{-1/2}\big\Vert(\widehat{\bm{V}}^{j})^{+}-(\bm{V}^{j}\bm{R}_{V}^{\top})^{+}\big\Vert_{2}\Vert(\bm{Z}^{j})^{\top}\bm{\Sigma}_{\varepsilon}^{1/2}[\bm{I}_{N}+\bm{U}\bm{U}^{\top}]_{\cdot,i}\Vert_{2}\\
 & +\big\Vert(\widehat{\bm{V}}^{j})^{+}\big\Vert_{2}\cdot\sqrt{T_{j}}\left(\frac{\sqrt{n}}{\theta^{2}T}\sqrt{r}\log^{3/2}n+(\frac{n}{\theta^{2}T}+\frac{1}{\theta\sqrt{T}})\sqrt{r}\log n\sqrt{\frac{\log n}{T}}\right)\cdot\sigma_{r}\Vert\bm{U}_{i,\cdot}\Vert_{2}\\
 & \lesssim T^{-1/2}\cdot(T/T_{j})^{1/2}\frac{1}{\theta}\sqrt{r}\log n\cdot\sqrt{T_{j}\log n}\sqrt{\varphi_{i}}+\sqrt{T}(\frac{\sqrt{n}}{\theta^{2}T}+(\frac{n}{\theta^{2}T}+\frac{1}{\theta\sqrt{T}})\frac{1}{\sqrt{T}})\sqrt{r}\log^{3/2}n\cdot\sigma_{r}\Vert\bm{U}_{i,\cdot}\Vert_{2}\\
 & \lesssim\frac{1}{\theta}\sqrt{r}\log^{3/2}n\cdot\sqrt{\varphi_{i}}+(\frac{n}{\theta^{2}T}+\frac{1}{\theta\sqrt{T}})\sqrt{r}\log^{3/2}n\cdot\sigma_{r}\Vert\bm{U}_{i,\cdot}\Vert_{2}.
\end{align*}
Then, the upper bound for $\big\Vert\bm{\Upsilon}_{b_{i}}\big\Vert_{2}$
follows from $\big\Vert\bm{\Upsilon}_{b_{i}}\big\Vert_{2}\leq\Vert\bm{r}_{0}^{1}\Vert_{2}+\Vert\bm{r}_{0}^{2}\Vert_{2}$.
\end{proof}
Then, we establish Gaussian approximation for the first-order term.
For simplicity of notations, we denote $\chi^{2}(n)$ as a Chi-square
random variable with degree of freedom equal to $n$, and denote by
$\chi_{\phi}^{2}(n)$ its $\phi$-quantile.

\begin{lemma} \label{Lemma beta true Chi-square stat}Suppose that
the assumptions in Lemma \ref{Lemma beta_hat 1st-order approx} hold.
Then, under the null hypothesis $H_{0}:\ \bm{b}_{i}^{1}=\bm{b}_{i}^{2}$,
we have that, for any random variable $\zeta$ satisfying $\left|\zeta\right|\ll r$,
it holds
\begin{equation}
\left|\mathbb{P}\left(\mathscr{B}_{i}+\zeta\leq\chi_{1-\alpha}^{2}(r)\right)-(1-\alpha)\right|\lesssim|\tau|+r^{-1/2}\left|\zeta\right|+s_{3},\label{beta Chi-square stat True key inequality}
\end{equation}
where $\varphi_{i}=[\bm{\Sigma}_{\varepsilon}+\bm{U}\bm{U}^{\top}\bm{\Sigma}_{\varepsilon}+\bm{\Sigma}_{\varepsilon}\bm{U}\bm{U}^{\top}+\bm{U}\bm{U}^{\top}\bm{\Sigma}_{\varepsilon}\bm{U}\bm{U}^{\top}]_{i,i}$
as defined in Lemma \ref{Lemma beta_hat 1st-order approx}, $\mathscr{B}_{i}$
is defined by
\[
\mathscr{B}_{i}:=\frac{T}{\varphi_{i}}(\widehat{\bm{b}}_{i}^{1}-\widehat{\bm{b}}_{i}^{2})^{\top}[\prod_{j=1}^{2}(\bm{V}^{j}\bm{R}_{V}^{\top})^{\top}\bm{V}^{j}\bm{R}_{V}^{\top}](\widehat{\bm{b}}_{i}^{1}-\widehat{\bm{b}}_{i}^{2}),
\]
and $|\tau|$ satisfies that
\[
|\tau|\lesssim[\frac{1}{\theta}+\sqrt{\kappa_{\varepsilon}}\Vert\bm{U}_{i,\cdot}\Vert_{2}(\frac{1}{\theta}\frac{n}{T}+\frac{1}{\sqrt{T}})]\sqrt{T_{1}T_{2}/T}\sqrt{r}\log^{3/2}n.
\]
Here, $s_{3}\geq0$ is a constant bounded by 
\[
s_{3}\lesssim\frac{r^{3/2}\max_{1\leq k\leq N}\left|[\bm{\Sigma}_{\varepsilon}^{1/2}(\bm{I}_{N}+\bm{U}\bm{U}^{\top})]_{k,i}\right|}{\Vert[\bm{\Sigma}_{\varepsilon}^{1/2}(\bm{I}_{N}+\bm{U}\bm{U}^{\top})]_{\cdot,i}\Vert_{2}}.
\]
Further, if all the entries of the matrix $\bm{Z}$ are Gaussian,
i.e., the noise is Gaussian, then $s_{3}$ in the inequality (\ref{beta Chi-square stat True key inequality})
is equal to zero, i.e., the inequality (\ref{beta Chi-square stat True key inequality})
holds when $s_{3}=0$.

\end{lemma}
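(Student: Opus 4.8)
The plan is to mirror the proof of Lemma~\ref{Lemma true Chi-square stat}: substitute the first-order expansion $\widehat{\bm b}_i^1-\widehat{\bm b}_i^2=\bm G_{b_i}+\bm\Upsilon_{b_i}$ of Lemma~\ref{Lemma beta_hat 1st-order approx}, show that the leading term $\bm G_{b_i}$ is, conditionally on $\bm F$, an exact sum of independent mean-zero $r$-vectors whose whitened version has identity covariance, and then control the remainder $\bm\Upsilon_{b_i}$ and the non-Gaussianity of $\bm G_{b_i}$ separately. First I would work on the good events $\mathcal E_0\cap\mathcal E_{\Gamma_1}\cap\mathcal E_{\Gamma_2}$ of Lemmas~\ref{Lemma SVD BF good event} and~\ref{Lemma V subset good event}, so that the singular-value comparisons in Lemma~\ref{Lemma beta_hat 1st-order approx}(i) hold; the conditioning argument from the proof of Lemma~\ref{Lemma R H for U V} then removes these events at the cost of an $O(n^{-2})$ term. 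Writing $\bm N_j:=(\bm V^j\bm R_V^\top)^\top\bm V^j\bm R_V^\top$ for $j=1,2$ and $\bm w:=[\bm\Sigma_\varepsilon^{1/2}(\bm I_N+\bm U\bm U^\top)]_{\cdot,i}$, one has $\mathscr B_i=\tfrac T{\varphi_i}(\bm G_{b_i}+\bm\Upsilon_{b_i})^\top\bm N_1\bm N_2(\bm G_{b_i}+\bm\Upsilon_{b_i})$, and a one-line expansion of $(\bm I_N+\bm U\bm U^\top)\bm\Sigma_\varepsilon(\bm I_N+\bm U\bm U^\top)$ gives $\varphi_i=\|\bm w\|_2^2$.

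The key algebraic step is the identity $\bm N_1+\bm N_2=\bm R_V\big((\bm V^1)^\top\bm V^1+(\bm V^2)^\top\bm V^2\big)\bm R_V^\top=\bm R_V\bm V^\top\bm V\bm R_V^\top=\bm I_r$, so $\bm N_1$ and $\bm N_2=\bm I_r-\bm N_1$ commute, $\bm M:=\bm N_1\bm N_2$ is symmetric positive semidefinite, and $\bm N_1^{-1}+\bm N_2^{-1}=\bm N_1^{-1}\bm N_2^{-1}(\bm N_2+\bm N_1)=\bm M^{-1}$. Set $\bm L:=\bm M^{1/2}$ and $\bm K_i:=\sqrt{T/\varphi_i}\,\bm L^\top\bm G_{b_i}$, so that $\sqrt{\mathscr B_i}=\big\Vert\bm K_i+\sqrt{T/\varphi_i}\,\bm L^\top\bm\Upsilon_{b_i}\big\Vert_2$. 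Conditioned on $\bm F$,
\[
\bm G_{b_i}=\frac1{\sqrt T}\sum_{j=1}^2\sum_{k=1}^N(-1)^{j-1}w_k(\bm V^j\bm R_V^\top)^+\big((\bm Z^j)_{k,\cdot}\big)^\top
\]
is a sum of $2N$ independent mean-zero $r$-vectors, and since $\operatorname{cov}\big((\bm Z^j)^\top\bm w\mid\bm F\big)=\|\bm w\|_2^2\,\bm I_{T_j}$ and $(\bm V^j\bm R_V^\top)^+((\bm V^j\bm R_V^\top)^+)^\top=\bm N_j^{-1}$, its conditional covariance is $\tfrac{\varphi_i}{T}(\bm N_1^{-1}+\bm N_2^{-1})=\tfrac{\varphi_i}{T}\bm M^{-1}$; hence $\operatorname{cov}(\bm K_i\mid\bm F)=\bm M^{1/2}\bm M^{-1}\bm M^{1/2}=\bm I_r$. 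In the Gaussian case $\bm K_i\mid\bm F\sim\mathcal N(0,\bm I_r)$ exactly, so $\|\bm K_i\|_2^2\sim\chi^2(r)$ and $s_3=0$.

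Next I would bound the perturbation $|\tau|:=\sqrt{T/\varphi_i}\,\|\bm L\|_2\,\|\bm\Upsilon_{b_i}\|_2\ge\big|\sqrt{\mathscr B_i}-\|\bm K_i\|_2\big|$, using $\|\bm L\|_2^2=\|\bm M\|_2\le\|\bm N_1\|_2\|\bm N_2\|_2\lesssim T_1T_2/T^2$ by Lemma~\ref{Lemma beta_hat 1st-order approx}(i), $\varphi_i=\|\bm w\|_2^2\gtrsim\lambda_{\min}(\bm\Sigma_\varepsilon)\gtrsim\|\bm\Sigma_\varepsilon\|_2/\kappa_\varepsilon$ (as $\|(\bm I_N+\bm U\bm U^\top)e_i\|_2\ge1$), the bound on $\|\bm\Upsilon_{b_i}\|_2$ from Lemma~\ref{Lemma beta_hat 1st-order approx}(ii), and $\sigma_r=\theta\|\bm\Sigma_\varepsilon^{1/2}\|_2$; the factor $\sqrt{\varphi_i}$ in the first term of $\|\bm\Upsilon_{b_i}\|_2$ cancels $1/\sqrt{\varphi_i}$, yielding exactly the stated $|\tau|\lesssim\big[\tfrac1\theta+\sqrt{\kappa_\varepsilon}\|\bm U_{i,\cdot}\|_2(\tfrac1\theta\tfrac nT+\tfrac1{\sqrt T})\big]\sqrt{T_1T_2/T}\,\sqrt r\log^{3/2}n$. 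For $s_3$ I would apply the Berry--Esseen bound for balls (Lemma~\ref{Lemma Berry-Esseen balls}) conditionally on $\bm F$ to $\bm K_i=\sum_{j,k}\bm Y_{j,k}$ with $\bm Y_{j,k}=\sqrt{1/\varphi_i}(-1)^{j-1}w_k\bm L^\top(\bm V^j\bm R_V^\top)^+((\bm Z^j)_{k,\cdot})^\top$; since $\bm L^\top(\bm V^j\bm R_V^\top)^+$ has rank $\le r$, one has $\mathbb E[\|\bm Y_{j,k}\|_2^3\mid\bm F]\lesssim(\varphi_i)^{-3/2}|w_k|^3 r^{3/2}\|\bm L\|_2^3\|(\bm V^j\bm R_V^\top)^+\|_2^3$, and summing over $k$ with $\sum_k|w_k|^3\le\|\bm w\|_\infty\varphi_i$ and over $j$ with $\|\bm L\|_2^3\|(\bm V^j\bm R_V^\top)^+\|_2^3\lesssim(T_1T_2/T^2)^{3/2}(T/T_j)^{3/2}$ (whose two values, $(T_2/T)^{3/2}$ and $(T_1/T)^{3/2}$, sum to $O(1)$) gives $s_3\lesssim r^{3/2}\|\bm w\|_\infty/\|\bm w\|_2$, which is the claimed bound. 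Finally I would conclude exactly as in Lemma~\ref{Lemma true Chi-square stat}: using $\chi^2_{1-\alpha}(r)\asymp r$ (Lemma~\ref{Lemma Chi-square quantile order}), the square-root trick to convert the $\zeta$-shift of the threshold into an $O(r^{-1/2}|\zeta|)$ shift of the radius, the bound $\big|\sqrt{\mathscr B_i}-\|\bm K_i\|_2\big|\le|\tau|$, the definition of $s_3$, and the anti-concentration of $\chi^2(r)$ (Lemma~\ref{Lemma Chi-square two balls}), bracketing $\mathbb P(\mathscr B_i+\zeta\le\chi^2_{1-\alpha}(r))$ between probabilities about $\|\bm K_i\|_2$ with radii shifted by $|\tau|+O(r^{-1/2}|\zeta|)$ yields $\big|\mathbb P(\mathscr B_i+\zeta\le\chi^2_{1-\alpha}(r))-(1-\alpha)\big|\lesssim|\tau|+r^{-1/2}|\zeta|+s_3$.

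The main obstacle is the interplay of two points: (i) establishing the exact identities $\bm N_1+\bm N_2=\bm I_r$, $\bm M=(\bm N_1^{-1}+\bm N_2^{-1})^{-1}$, and $\varphi_i=\|\bm w\|_2^2$, which together make the whitened leading term have exactly identity covariance and the null distribution exactly $\chi^2(r)$ at first order; and (ii) the third-moment estimate in the Berry--Esseen step, where one must exploit that the relevant operator lives in an $r$-dimensional space to obtain the dimension factor $r^{3/2}$ rather than a $T_j^{3/2}$, and then verify that the $j$-sum of $(T_1T_2/T^2)^{3/2}(T/T_j)^{3/2}$ collapses to $O(1)$. Everything else — including the separate bookkeeping for the two summands of $\|\bm\Upsilon_{b_i}\|_2$ and the $\sigma(\bm F)$-measurability of $\bm V^j,\bm N_j,\bm L,\bm w$ — is routine once the expansions of Lemma~\ref{Lemma beta_hat 1st-order approx} are in hand.
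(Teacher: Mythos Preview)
Your proposal is correct and follows essentially the same route as the paper's proof. The paper introduces a factorization $\bm K$ with $\bm K^\top\bm K=\prod_{j}\bm N_j$ and asserts the identity $\bm K^\top\bm K=\big[\sum_j\bm N_j^{-1}\big]^{-1}$ directly, whereas you derive it from $\bm N_1+\bm N_2=\bm I_r$ and the resulting commutativity; your $\bm L=\bm M^{1/2}$ is just the symmetric choice of $\bm K$. The only organizational difference is that the paper groups the Berry--Esseen sum by $k$ alone (so each summand $\bm K\bm g_k$ already has covariance $\bm I_r$ and third moment $\lesssim r^{3/2}$), while you split over $(j,k)$ and recombine via $(T_1/T)^{3/2}+(T_2/T)^{3/2}=O(1)$; both yield the same $s_3$ bound.
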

\begin{proof}
Our starting point is the following inequality 
\[
\mathbb{P}\left(\mathscr{B}_{i}\leq\chi_{1-\alpha}^{2}(r)-\left|\zeta\right|\right)\leq\mathbb{P}\left(\mathscr{B}_{i}+\zeta\leq\chi_{1-\alpha}^{2}(r)\right)\leq\mathbb{P}\left(\mathscr{B}_{i}\leq\chi_{1-\alpha}^{2}(r)+\left|\zeta\right|\right).
\]
Then, we will prove both the upper bound and the lower bound are close
to $(1-\alpha)$. We will use three steps.

\textit{Step 1 -- Bounding $\mathbb{P}(\mathscr{B}_{i}+\zeta\leq\chi_{1-\alpha}^{2}(r))$.}

We already obtained the first-order approximation of in Lemma \ref{Lemma beta_hat 1st-order approx}.
Indeed, the first-order term $\bm{G}_{P}$ can be rewritten as 
\[
\bm{G}_{b_{i}}=T^{-1/2}((\bm{V}^{1}\bm{R}_{V}^{\top})^{+}(\bm{Z}^{1})^{\top}-(\bm{V}^{2}\bm{R}_{V}^{\top})^{+}(\bm{Z}^{2})^{\top})\bm{\Sigma}_{\varepsilon}^{1/2}(\bm{I}_{N}+\bm{U}\bm{U}^{\top})_{\cdot,i}.
\]
Denote $\bm{K}\in\mathbb{R}^{r\times r}$ the matrix such that $\prod_{j=1}^{2}(\bm{V}^{j}\bm{R}_{V}^{\top})^{\top}\bm{V}^{j}\bm{R}_{V}^{\top}=\bm{K}^{\top}\bm{K}$.
Then $\mathscr{B}_{i}$ can be rewritten as
\[
\mathscr{B}_{i}=\frac{T}{\varphi_{i}}(\bm{G}_{b_{i}}+\bm{\Upsilon}_{b_{i}})^{\top}\bm{K}^{\top}\bm{K}(\bm{G}_{b_{i}}+\bm{\Upsilon}_{b_{i}})=\Vert(T/\varphi_{i})^{1/2}\bm{K}(\bm{G}_{b_{i}}+\bm{\Upsilon}_{b_{i}})\Vert_{2}^{2}.
\]
We denote
\[
\tau:=\sqrt{\mathscr{B}_{i}}-\big\Vert\sqrt{\frac{T}{\varphi_{i}}}\bm{K}\bm{G}_{b_{i}}\big\Vert_{2}=\big\Vert\sqrt{\frac{T}{\varphi_{i}}}\bm{K}(\bm{G}_{b_{i}}+\bm{\Upsilon}_{b_{i}})\big\Vert_{2}-\big\Vert\sqrt{\frac{T}{\varphi_{i}}}\bm{K}\bm{G}_{b_{i}}\big\Vert_{2}.
\]
Then we get
\[
\mathbb{P}\left(\mathscr{B}_{i}\leq\chi_{1-\alpha}^{2}(r)+\left|\zeta\right|\right)=\mathbb{P}\left(\big\Vert\sqrt{\frac{T}{\varphi_{i}}}\bm{K}\bm{G}_{b_{i}}\big\Vert_{2}+\tau\leq\sqrt{\chi_{1-\alpha}^{2}(r)+\left|\zeta\right|}\right)\leq\mathbb{P}\left(\big\Vert\sqrt{\frac{T}{\varphi_{i}}}\bm{K}\bm{G}_{b_{i}}\big\Vert_{2}\leq\sqrt{\chi_{1-\alpha}^{2}(r)+\left|\zeta\right|}+\left|\tau\right|\right),
\]
and thus
\[
\mathbb{P}\left(\big\Vert\sqrt{\frac{T}{\varphi_{i}}}\bm{K}\bm{G}_{b_{i}}\big\Vert_{2}\leq\sqrt{\chi_{1-\alpha}^{2}(r)-\left|\zeta\right|}-\left|\tau\right|\right)\leq\mathbb{P}\left(\mathscr{B}_{i}+\zeta\leq\chi_{1-\alpha}^{2}(r)\right)\leq\mathbb{P}\left(\big\Vert\sqrt{\frac{T}{\varphi_{i}}}\bm{K}\bm{G}_{b_{i}}\big\Vert_{2}\leq\sqrt{\chi_{1-\alpha}^{2}(r)+\left|\zeta\right|}+\left|\tau\right|\right).
\]

\textit{Step 2 -- Expressing $\bm{K}\bm{G}_{b_{i}}$ as a sum to show its
proximity to a Gaussian vector.}

We define 
\[
s_{3}:=\sup_{R\geq0}\left\vert \mathbb{P}(\big\Vert\sqrt{\frac{T}{\varphi_{i}}}\bm{K}\bm{G}_{b_{i}}\big\Vert_{2}\leq R)-\mathbb{P}(\chi^{2}(r)\leq R^{2})\right\vert.
\]
Conditioning on $\bm{F}$, the random vector $\bm{G}_{b_{i}}$ can
be written as a sum of $N$ independent and mean zero $|S|$-dimensional
vectors as follows: 
\[
\bm{G}_{b_{i}}=\sum_{k=1}^{N}\bm{g}_{k}\text{\qquad with\qquad}\bm{g}_{k}=T^{-1/2}(\sum_{j=1}^{2}(-1)^{j-1}(\bm{V}^{j}\bm{R}_{V}^{\top})^{+}(\bm{Z}_{k,\cdot}^{j})^{\top})[\bm{\Sigma}_{\varepsilon}^{1/2}(\bm{I}_{N}+\bm{U}\bm{U}^{\top})]_{k,i}.
\]
By calculations, we have that $\bm{K}^{\top}\bm{K}=\prod_{j=1}^{2}(\bm{V}^{j}\bm{R}_{V}^{\top})^{\top}\bm{V}^{j}\bm{R}_{V}^{\top}=[\sum_{j=1}^{2}((\bm{V}^{j}\bm{R}_{V}^{\top})^{\top}\bm{V}^{j}\bm{R}_{V}^{\top})^{-1}]^{-1}$,
the covariance matrix is given by $\text{cov}(\bm{G}_{b_{i}}|\bm{F})=\sum_{k=1}^{N}\text{cov}(\bm{g}_{k}|\bm{F})=(\varphi_{i}/T)\sum_{j=1}^{2}((\bm{V}^{j}\bm{R}_{V}^{\top})^{\top}\bm{V}^{j}\bm{R}_{V}^{\top})^{-1}=(\varphi_{i}/T)(\bm{K}^{\top}\bm{K})^{-1}=(\varphi_{i}/T)\bm{K}^{-1}(\bm{K}^{\top})^{-1}$,
and thus
\[
\text{cov}((T/\varphi_{i})^{1/2}\bm{K}\bm{G}_{b_{i}}|\bm{F})=(T/\varphi_{i})\bm{K}\text{cov}(\bm{G}_{b_{i}}|\bm{F})\bm{K}^{\top}=\bm{K}\bm{K}^{-1}(\bm{K}^{\top})^{-1}\bm{K}^{\top}=\bm{I}_{r}.
\]
Then we consider two cases.

(i) When all entries of $\bm{Z}$ are Gaussian, we have that $s_{3}=0$.

(ii) When the entries of $\bm{Z}$ are sub-Gaussian, we obtain by
Lemma \ref{Lemma Berry-Esseen balls} that
\begin{align*}
s_{3} & \lesssim\sum_{k=1}^{N}\mathbb{E}\left[\Vert(T/\varphi_{i})^{1/2}\bm{K}\bm{g}_{k}\Vert_{2}^{3}|\bm{F}\right]\leq(1/\varphi_{i})^{3/2}\mathbb{E}\left[\big\Vert\bm{K}[(\bm{V}^{1}\bm{R}_{V}^{\top})^{+}(\bm{Z}_{k,\cdot}^{1})^{\top}-(\bm{V}^{2}\bm{R}_{V}^{\top})^{+}(\bm{Z}_{k,\cdot}^{2})^{\top}]\big\Vert_{2}^{3}|\bm{F}\right]\\
 & \times\max_{1\leq k\leq N}\left|[\bm{\Sigma}_{\varepsilon}^{1/2}(\bm{I}_{N}+\bm{U}\bm{U}^{\top})]_{k,i}\right|\cdot\sum_{k=1}^{N}([\bm{\Sigma}_{\varepsilon}^{1/2}(\bm{I}_{N}+\bm{U}\bm{U}^{\top})]_{k,i})^{2}\\
 & \lesssim\varphi_{i}^{-1/2}\max_{1\leq k\leq N}\left|[\bm{\Sigma}_{\varepsilon}^{1/2}(\bm{I}_{N}+\bm{U}\bm{U}^{\top})]_{k,i}\right|\mathbb{E}\left[\big\Vert\bm{K}[(\bm{V}^{1}\bm{R}_{V}^{\top})^{+}(\bm{Z}_{k,\cdot}^{1})^{\top}-(\bm{V}^{2}\bm{R}_{V}^{\top})^{+}(\bm{Z}_{k,\cdot}^{2})^{\top}]\big\Vert_{2}^{3}|\bm{F}\right].
\end{align*}
Recall that $\prod_{j=1}^{2}(\bm{V}^{j}\bm{R}_{V}^{\top})^{\top}\bm{V}^{j}\bm{R}_{V}^{\top}=\bm{K}^{\top}\bm{K}$,
so we obtain that $\Vert\bm{K}\Vert_{2}\lesssim\prod_{j=1}^{2}(T_{j}/T)^{1/2}\leq\sqrt{T_{1}T_{2}}/T$.
since $\bm{K}[(\bm{V}^{1}\bm{R}_{V}^{\top})^{+}(\bm{Z}_{k,\cdot}^{1})^{\top}-(\bm{V}^{2}\bm{R}_{V}^{\top})^{+}(\bm{Z}_{k,\cdot}^{2})^{\top}]$
is a $r$-dimensional sub-Gaussian vector whose covariance matrix
is equal to $\bm{K}[\sum_{j=1}^{2}((\bm{V}^{j}\bm{R}_{V}^{\top})^{\top}\bm{V}^{j}\bm{R}_{V}^{\top})^{-1}]\bm{K}^{\top}=\bm{K}(\bm{K}^{\top}\bm{K})^{-1}\bm{K}^{\top}=\bm{I}_{r}$.
Then we have that, similar to Step 2 in the proof of Lemma \ref{Lemma true Chi-square stat},
it holds $\mathbb{E}\left[\big\Vert\bm{K}[(\bm{V}^{1}\bm{R}_{V}^{\top})^{+}(\bm{Z}_{k,\cdot}^{1})^{\top}-(\bm{V}^{2}\bm{R}_{V}^{\top})^{+}(\bm{Z}_{k,\cdot}^{2})^{\top}]\big\Vert_{2}^{3}|\bm{F}\right]\lesssim r^{3/2}$.
So we obtain
\[
s_{3}\lesssim r^{3/2}\varphi_{i}^{-1/2}\max_{1\leq k\leq N}\left|[\bm{\Sigma}_{\varepsilon}^{1/2}(\bm{I}_{N}+\bm{U}\bm{U}^{\top})]_{k,i}\right|.
\]

\textit{Step 3 -- Establishing chi-squared distributional characterization.}

By definition of $s_{3}$, we obtain that
\[
\mathbb{P}\left(\sqrt{\chi^{2}(r)}\leq\sqrt{\chi_{1-\alpha}^{2}(r)-\left|\zeta\right|}-\left|\tau\right|\right)-s_{3}\leq\mathbb{P}\left(\mathscr{B}_{i}+\zeta\leq\chi_{1-\alpha}^{2}(r)\right)\leq\mathbb{P}\left(\sqrt{\chi^{2}(r)}\leq\sqrt{\chi_{1-\alpha}^{2}(r)+\left|\zeta\right|}+\left|\tau\right|\right)+s_{3}.
\]

Similar to Step 3 in the proof of Lemma \ref{Lemma true Chi-square stat},
we obtain that 
\[
\left|\mathbb{P}\left(\mathscr{B}_{i}+\zeta\leq\chi_{1-\alpha}^{2}(r)\right)-(1-\alpha)\right|\lesssim\left|\tau\right|+r^{-1/2}\left|\zeta\right|+s_{3}.
\]
So, to prove the desired result in the lemma, it suffices to establish
the upper bound for $\left|\tau\right|$, which is defined in Step
1. Using (\ref{Chi-square stat 2-norm equality}) and the triangle
inequality, we obtain that 
\[
|\tau|=\left|\Vert\sqrt{\frac{T}{\varphi_{i}}}\bm{K}(\bm{G}_{b_{i}}+\bm{\Upsilon}_{b_{i}})\Vert_{2}-\Vert\sqrt{\frac{T}{\varphi_{i}}}\bm{K}\bm{G}_{b_{i}}\Vert_{2}\right|\leq\Vert\sqrt{\frac{T}{\varphi_{i}}}\bm{K}\bm{\Upsilon}_{b_{i}}\Vert_{2}.
\]
Finally, using the upper bound of $\Vert\bm{\Upsilon}_{b_{i}}\Vert_{2}$,
we obtain the upper bound for $|\tau|$ as
\begin{align*}
|\tau| & \lesssim(T/\varphi_{i})^{1/2}\cdot\frac{1}{T}\sqrt{T_{1}T_{2}}\cdot\Vert\bm{\Upsilon}_{b_{i}}\Vert_{2}=(1/\varphi_{i})^{1/2}\sqrt{T_{1}T_{2}/T}\cdot\Vert\bm{\Upsilon}_{b_{i}}\Vert_{2}\\
 & \lesssim\frac{1}{\theta}\sqrt{T_{1}T_{2}/T}\sqrt{r}\log^{3/2}n+\sqrt{\kappa_{\varepsilon}}\Vert\bm{U}_{i,\cdot}\Vert_{2}(\frac{n}{\theta T}+\frac{1}{\sqrt{T}})\sqrt{T_{1}T_{2}/T}\sqrt{r}\log^{3/2}n.
\end{align*}
\end{proof}

\subsection{Proof of Theorem \ref{Thm beta structure test}}

To prove the desired results, we use Lemma \ref{Lemma beta true Chi-square stat}
with the small perturbation $\zeta=\mathscr{\widehat{B}}_{i}-\mathscr{B}_{i}$.
Then, the problem boils down to deriving the upper bound for $r^{-1/2}|\zeta|$.
Rewrite the test statistic as $\mathscr{\widehat{B}}_{i}=(T/\varphi_{i})(\widehat{\bm{b}}_{i}^{1}-\widehat{\bm{b}}_{i}^{2})^{\top}[\prod_{j=1}^{2}(\widehat{\bm{V}}^{j})^{\top}\widehat{\bm{V}}^{j}](\widehat{\bm{b}}_{i}^{1}-\widehat{\bm{b}}_{i}^{2})$.
We do the decomposition $\zeta=\mathscr{\widehat{B}}_{i}-\mathscr{B}_{i}=r_{1}+r_{2}$,
where $r_{1}$ and $r_{2}$ are respectively defined by 
\[
r_{1}:=\frac{T}{\varphi_{i}}(\widehat{\bm{b}}_{i}^{1}-\widehat{\bm{b}}_{i}^{2})^{\top}[\prod_{j=1}^{2}(\widehat{\bm{V}}^{j})^{\top}\widehat{\bm{V}}^{j}](\widehat{\bm{b}}_{i}^{1}-\widehat{\bm{b}}_{i}^{2})-\mathscr{B}_{i},
\]
and
\[
r_{2}:=\mathscr{\widehat{B}}_{i}-\frac{T}{\varphi_{i}}(\widehat{\bm{b}}_{i}^{1}-\widehat{\bm{b}}_{i}^{2})^{\top}[\prod_{j=1}^{2}(\widehat{\bm{V}}^{j})^{\top}\widehat{\bm{V}}^{j}](\widehat{\bm{b}}_{i}^{1}-\widehat{\bm{b}}_{i}^{2}).
\]

\textit{Step 1 -- Upper bound for $|r_{1}|$.}

We have that
\[
r_{1}=\frac{T}{\varphi_{i}}(\widehat{\bm{b}}_{i}^{1}-\widehat{\bm{b}}_{i}^{2})^{\top}[\prod_{j=1}^{2}(\widehat{\bm{V}}^{j})^{\top}\widehat{\bm{V}}^{j}-\prod_{j=1}^{2}(\bm{V}^{j}\bm{R}_{V}^{\top})^{\top}\bm{V}^{j}\bm{R}_{V}^{\top}](\widehat{\bm{b}}_{i}^{1}-\widehat{\bm{b}}_{i}^{2}).
\]
Then using the results in Lemma \ref{Lemma beta_hat 1st-order approx},
we obtain
\begin{align*}
|r_{1}| & \leq\frac{T}{\varphi_{i}}\big\Vert\prod_{j=1}^{2}(\widehat{\bm{V}}^{j})^{\top}\widehat{\bm{V}}^{j}-\prod_{j=1}^{2}(\bm{V}^{j}\bm{R}_{V}^{\top})^{\top}\bm{V}^{j}\bm{R}_{V}^{\top}\big\Vert_{2}\big\Vert\widehat{\bm{b}}_{i}^{1}-\widehat{\bm{b}}_{i}^{2}\big\Vert_{2}^{2}\\
 & \lesssim\frac{T}{\varphi_{i}}\cdot(T_{1}T_{2}/T^{2})\frac{1}{\theta}\sqrt{r}\log n\cdot\big\Vert\widehat{\bm{b}}_{i}^{1}-\widehat{\bm{b}}_{i}^{2}\big\Vert_{2}^{2}=\frac{1}{\theta}\sqrt{r}\log n\cdot\frac{1}{\varphi_{i}}\frac{T_{1}T_{2}}{T}\big\Vert\widehat{\bm{b}}_{i}^{1}-\widehat{\bm{b}}_{i}^{2}\big\Vert_{2}^{2}.
\end{align*}

\textit{Step 2 -- Upper bound for $|r_{2}|$.}

By definition, we have that
\[
r_{2}=T(\frac{1}{\widehat{\varphi}_{i}}-\frac{1}{\varphi_{i}})\cdot(\widehat{\bm{b}}_{i}^{1}-\widehat{\bm{b}}_{i}^{2})^{\top}[\prod_{j=1}^{2}(\widehat{\bm{V}}^{j})^{\top}\widehat{\bm{V}}^{j}](\widehat{\bm{b}}_{i}^{1}-\widehat{\bm{b}}_{i}^{2}).
\]
Then we have that
\begin{align*}
 & |r_{2}/(\frac{1}{\widehat{\varphi}_{i}}-\frac{1}{\varphi_{i}})|=T(\widehat{\bm{b}}_{i}^{1}-\widehat{\bm{b}}_{i}^{2})^{\top}[\prod_{j=1}^{2}(\widehat{\bm{V}}^{j})^{\top}\widehat{\bm{V}}^{j}](\widehat{\bm{b}}_{i}^{1}-\widehat{\bm{b}}_{i}^{2})\\
 & \lesssim T\prod_{j=1}^{2}((T_{j}/T)^{1/2})^{2}\big\Vert\widehat{\bm{b}}_{i}^{1}-\widehat{\bm{b}}_{i}^{2}\big\Vert_{2}^{2}\lesssim\frac{T_{1}T_{2}}{T}\big\Vert\widehat{\bm{b}}_{i}^{1}-\widehat{\bm{b}}_{i}^{2}\big\Vert_{2}^{2}.
\end{align*}
So the problem boils down to bounding $|1/\widehat{\varphi}_{i}-1/\varphi_{i}|=|\widehat{\varphi}_{i}-\varphi_{i}|/|\widehat{\varphi}_{i}\varphi_{i}|$.
We obtain by Lemma \ref{Lemma noise cov matrix estimate} that $\big\Vert(\widehat{\bm{\Sigma}}_{\varepsilon}^{\tau})_{i,\cdot}-(\bm{\Sigma}_{\varepsilon})_{i,\cdot}\big\Vert_{2}\leq\big\Vert\widehat{\bm{\Sigma}}_{\varepsilon}^{\tau}-\bm{\Sigma}_{\varepsilon}\big\Vert_{2}\lesssim(\epsilon_{N,T})^{1-q}s(\bm{\Sigma}_{\varepsilon})$
and $|(\widehat{\bm{\Sigma}}_{\varepsilon}^{\tau})_{i,i}-(\bm{\Sigma}_{\varepsilon})_{i,i}|=|(\widehat{\bm{\Sigma}}_{\varepsilon})_{i,i}-(\bm{\Sigma}_{\varepsilon})_{i,i}|\lesssim(\bm{\Sigma}_{\varepsilon})_{i,i}\epsilon_{N,T}$.
We obtain by Lemma \ref{Lemma R H for U V} that $\big\Vert\widehat{\bm{U}}\bm{R}_{U}-\bm{U}\big\Vert_{2}\lesssim\rho$,
and we obtain by the proof of Corollary \ref{corollary:error bound B F}
that $\big\Vert(\widehat{\bm{U}}\bm{R}_{U}-\bm{U})_{i,\cdot}\big\Vert_{2}\lesssim\omega_{i}\sqrt{\frac{r}{n}}\log n$.
So we obtain that
\begin{align*}
|\widehat{\varphi}_{i}-\varphi_{i}| & \lesssim\epsilon_{N,T}(\bm{\Sigma}_{\varepsilon})_{i,i}+(1+\Vert\bm{U}_{i,\cdot}\Vert_{2})\{\Vert\bm{U}_{i,\cdot}\Vert_{2}(\epsilon_{N,T})^{1-q}s(\bm{\Sigma}_{\varepsilon})\\
 & +\omega_{i}\sqrt{\frac{r}{n}}\log n\Vert(\bm{\Sigma}_{\varepsilon})_{i,\cdot}\Vert_{2}+\Vert\bm{U}_{i,\cdot}\Vert_{2}\rho\Vert(\bm{\Sigma}_{\varepsilon})_{i,\cdot}\Vert_{2}\}.
\end{align*}
Then we have that
\begin{align*}
\varphi_{i} & =(\bm{I}_{N}+\bm{U}\bm{U}^{\top})_{i,\cdot}\bm{\Sigma}_{\varepsilon}(\bm{I}_{N}+\bm{U}\bm{U}^{\top})_{\cdot,i}\geq\lambda_{\min}(\bm{\Sigma}_{\varepsilon})\Vert(\bm{I}_{N}+\bm{U}\bm{U}^{\top})_{\cdot,i}\Vert_{2}^{2}=\lambda_{\min}(\bm{\Sigma}_{\varepsilon})(1+3\Vert\bm{U}_{i,\cdot}\Vert_{2}^{2}),\\
 & \gtrsim\frac{1}{\kappa_{\varepsilon}}\Vert\bm{\Sigma}_{\varepsilon}\Vert_{2}(1+\Vert\bm{U}_{i,\cdot}\Vert_{2}^{2}).
\end{align*}
Finally, we obtain that
\begin{align*}
|1/\widehat{\varphi}_{i}-1/\varphi_{i}| & =\frac{|\widehat{\varphi}_{i}-\varphi_{i}|}{|\widehat{\varphi}_{i}||\varphi_{i}|}\leq\frac{1}{\varphi_{i}}\frac{1}{\frac{1}{\kappa_{\varepsilon}}\Vert\bm{\Sigma}_{\varepsilon}\Vert_{2}(1+\Vert\bm{U}_{i,\cdot}\Vert_{2}^{2})}\\
 & \times\{\epsilon_{N,T}(\bm{\Sigma}_{\varepsilon})_{i,i}+(1+\Vert\bm{U}_{i,\cdot}\Vert_{2})[\Vert\bm{U}_{i,\cdot}\Vert_{2}(\epsilon_{N,T})^{1-q}s(\bm{\Sigma}_{\varepsilon})+\omega_{i}\sqrt{\frac{r}{n}}\log n\Vert(\bm{\Sigma}_{\varepsilon})_{i,\cdot}\Vert_{2}+\Vert\bm{U}_{i,\cdot}\Vert_{2}\rho\Vert(\bm{\Sigma}_{\varepsilon})_{i,\cdot}\Vert_{2}]\}
\end{align*}
We assemble all the above error bounds to obtain
\begin{align*}
|r_{2}| & \lesssim\frac{1}{\varphi_{i}}\frac{T_{1}T_{2}}{T}\big\Vert\widehat{\bm{b}}_{i}^{1}-\widehat{\bm{b}}_{i}^{2}\big\Vert_{2}^{2}\cdot\frac{\kappa_{\varepsilon}}{\Vert\bm{\Sigma}_{\varepsilon}\Vert_{2}}\\
 & \times\{\epsilon_{N,T}(\bm{\Sigma}_{\varepsilon})_{i,i}+\sqrt{r}[\Vert\bm{U}_{i,\cdot}\Vert_{2}(\epsilon_{N,T})^{1-q}s(\bm{\Sigma}_{\varepsilon})+\omega_{i}\sqrt{\frac{r}{n}}\log n\Vert(\bm{\Sigma}_{\varepsilon})_{i,\cdot}\Vert_{2}+\Vert\bm{U}_{i,\cdot}\Vert_{2}\rho\Vert(\bm{\Sigma}_{\varepsilon})_{i,\cdot}\Vert_{2}]\}\\
 & \lesssim\frac{1}{\varphi_{i}}\frac{T_{1}T_{2}}{T}\big\Vert\widehat{\bm{b}}_{i}^{1}-\widehat{\bm{b}}_{i}^{2}\big\Vert_{2}^{2}\cdot\kappa_{\varepsilon}\{\epsilon_{N,T}+(\epsilon_{N,T})^{1-q}\sqrt{r}\Vert\bm{U}_{i,\cdot}\Vert_{2}\frac{s(\bm{\Sigma}_{\varepsilon})}{\Vert\bm{\Sigma}_{\varepsilon}\Vert_{2}}+\frac{1}{\vartheta_{i}\sqrt{T}}r\log n+\sqrt{r}\Vert\bm{U}_{i,\cdot}\Vert_{2}\frac{1}{\theta}\sqrt{\frac{n}{T}}\}
\end{align*}
where we use the fact that $(\bm{\Sigma}_{\varepsilon})_{i,i}\leq\Vert\bm{\Sigma}_{\varepsilon}\Vert_{2}$
and $\Vert(\bm{\Sigma}_{\varepsilon})_{i,\cdot}\Vert_{2}\leq\Vert\bm{\Sigma}_{\varepsilon}\Vert_{2}$.

\textit{Step 3 -- Combining all the bounds.}

We obtain that
\begin{align*}
\left|\zeta\right| & \leq|r_{1}|+|r_{2}|\lesssim\frac{1}{\varphi_{i}}\frac{T_{1}T_{2}}{T}\big\Vert\widehat{\bm{b}}_{i}^{1}-\widehat{\bm{b}}_{i}^{2}\big\Vert_{2}^{2}\\
 & \times\{\frac{1}{\theta}\sqrt{r}\log n+\kappa_{\varepsilon}[\epsilon_{N,T}+(\epsilon_{N,T})^{1-q}\sqrt{r}\Vert\bm{U}_{i,\cdot}\Vert_{2}\frac{s(\bm{\Sigma}_{\varepsilon})}{\Vert\bm{\Sigma}_{\varepsilon}\Vert_{2}}+\frac{1}{\vartheta_{i}\sqrt{T}}r\log n+\sqrt{r}\Vert\bm{U}_{i,\cdot}\Vert_{2}\frac{1}{\theta}\sqrt{\frac{n}{T}}]\}.
\end{align*}
Then, the desired results follow from the inequality (\ref{beta Chi-square stat True key inequality})
in Lemma \ref{Lemma beta true Chi-square stat}:
\begin{align*}
 & \left|\mathbb{P}\left(\mathscr{\widehat{B}}_{i}\leq\chi_{1-\alpha}^{2}(r)\right)-(1-\alpha)\right|\lesssim|\tau|+r^{-1/2}\left|\zeta\right|+s_{3}\\
 & \lesssim[\frac{1}{\theta}+\sqrt{\kappa_{\varepsilon}}\Vert\bm{U}_{i,\cdot}\Vert_{2}(\frac{1}{\theta}\frac{n}{T}+\frac{1}{\sqrt{T}})]\sqrt{T_{1}T_{2}/T}\sqrt{r}\log^{3/2}n+\frac{1}{\varphi_{i}}\frac{T_{1}T_{2}}{T}\big\Vert\widehat{\bm{b}}_{i}^{1}-\widehat{\bm{b}}_{i}^{2}\big\Vert_{2}^{2}\\
 & \times\{\frac{1}{\theta}\sqrt{r}\log n+\kappa_{\varepsilon}\epsilon_{N,T}+\kappa_{\varepsilon}(\epsilon_{N,T})^{1-q}\sqrt{r}\Vert\bm{U}_{i,\cdot}\Vert_{2}\frac{s(\bm{\Sigma}_{\varepsilon})}{\Vert\bm{\Sigma}_{\varepsilon}\Vert_{2}}+\kappa_{\varepsilon}\frac{1}{\vartheta_{i}\sqrt{T}}r\log n+\kappa_{\varepsilon}\sqrt{r}\Vert\bm{U}_{i,\cdot}\Vert_{2}\frac{1}{\theta}\sqrt{\frac{n}{T}}\}\\
 & \lesssim\frac{1}{\theta}\sqrt{T_{1}T_{2}/T}\sqrt{r}\log^{3/2}n+\sqrt{\kappa_{\varepsilon}}\Vert\bm{U}_{i,\cdot}\Vert_{2}(\frac{1}{\theta}\frac{n}{T}+\frac{1}{\sqrt{T}})\sqrt{T_{1}T_{2}/T}\sqrt{r}\log^{3/2}n\\
 & +\{\kappa_{\varepsilon}\Vert\bm{U}_{i,\cdot}\Vert_{2}^{2}\frac{T_{1}T_{2}}{T}r\log n+\kappa_{\varepsilon}\frac{1}{\theta^{2}}\frac{T_{1}T_{2}}{T}r\log n+\kappa_{\varepsilon}r\log n\}\\
 & \times\{\frac{1}{\theta}+\kappa_{\varepsilon}\epsilon_{N,T}+\kappa_{\varepsilon}(\epsilon_{N,T})^{1-q}\Vert\bm{U}_{i,\cdot}\Vert_{2}\frac{s(\bm{\Sigma}_{\varepsilon})}{\Vert\bm{\Sigma}_{\varepsilon}\Vert_{2}}+\kappa_{\varepsilon}\frac{1}{\vartheta_{i}\sqrt{T}}+\kappa_{\varepsilon}\Vert\bm{U}_{i,\cdot}\Vert_{2}\frac{1}{\theta}\sqrt{\frac{n}{T}}\}.
\end{align*}

\section{Proof of Theorem \ref{Thm two-sample test B}: Two-sample test for
betas}

\subsection{A useful lemma}

We prove a lemma that will be useful when establishing Theorem \ref{Thm two-sample test B}.
\begin{lemma}\label{Lemma B two sample prepare}Assume
that the assumptions in Theorem \ref{Thm two-sample test B} hold.
For any $i\neq j$ and $i,j\in\{1,2,\ldots,N\}$, under the null hypothesis
$\bm{b}_{i}=\bm{b}_{j}$, we have that, for any random variable $\zeta$
satisfying $\left|\zeta\right|\ll r$, it holds 
\begin{align*}
\left|\mathbb{P}\left(\mathcal{B}_{ij}+\zeta\leq\chi_{1-\alpha}^{2}(r)\right)-(1-\alpha)\right| & \lesssim\frac{\left|\zeta\right|}{\sqrt{r}}+\tau_{U}^{(i,j)}\\
 & +\frac{\sqrt{T}\sigma_{r}(\omega_{i}+\omega_{j})\omega\sqrt{\frac{r}{n}}\log^{3/2}n+\sqrt{T}\sigma_{r}(\omega_{i}+\omega_{j})\omega\log n\big\Vert\bar{\bm{U}}\big\Vert_{2,\infty}}{\sqrt{(\bm{\Sigma}_{\varepsilon})_{i,i}+(\bm{\Sigma}_{\varepsilon})_{j,j}-2(\bm{\Sigma}_{\varepsilon})_{i,j}}}\\
 & +\frac{\sqrt{T}\sigma_{r}\left(\rho^{2}+\rho\sqrt{\frac{r}{n}}\log n\right)(\big\Vert\bar{\bm{U}}_{i,\cdot}\big\Vert_{2}+\big\Vert\bar{\bm{U}}_{j,\cdot}\big\Vert_{2})}{\sqrt{(\bm{\Sigma}_{\varepsilon})_{i,i}+(\bm{\Sigma}_{\varepsilon})_{j,j}-2(\bm{\Sigma}_{\varepsilon})_{i,j}}},
\end{align*}
where $\mathcal{B}_{ij}$ is defined as 
\begin{align*}
\mathcal{B}_{ij}:= & T((\bm{\Sigma}_{\varepsilon})_{i,i}+(\bm{\Sigma}_{\varepsilon})_{j,j}-2(\bm{\Sigma}_{\varepsilon})_{i,j})^{-1}\big\Vert(\widehat{\bm{U}}_{i,\cdot}-\widehat{\bm{U}}_{j,\cdot})\widehat{\bm{\Sigma}}\big\Vert_{2}^{2},
\end{align*}
and $\tau_{U}^{(i,j)}$ is a constant bounded by $\tau_{U}^{(i,j)}\lesssim r\sqrt{\frac{\log n}{T}}$.
Further, if all the entries of the matrix $\bm{Z}$ are Gaussian,
i.e., the noise is Gaussian, then result holds even when $\tau_{U}^{(i,j)}=0$.\end{lemma}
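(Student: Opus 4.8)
The plan is to follow the template of Lemmas \ref{Lemma true Chi-square stat} and \ref{Lemma beta true Chi-square stat}: extract a clean first-order Gaussian term from $\widehat{\bm{B}}_{i,\cdot}-\widehat{\bm{B}}_{j,\cdot}=(\widehat{\bm{U}}_{i,\cdot}-\widehat{\bm{U}}_{j,\cdot})\widehat{\bm{\Sigma}}$, show that after the correct normalization its squared norm is close in law to $\chi^2(r)$, absorb the higher-order remainder into a scalar $|\tau|$, and then combine the Berry--Esseen bound for balls with the $\chi^2$-quantile estimates (Lemmas \ref{Lemma Chi-square quantile order}, \ref{Lemma Chi-square two balls}). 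Throughout I work on the good events of probability $1-O(n^{-2})$ and invoke the standard conditioning argument used in the earlier proofs.

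\textbf{Step 1 (first-order expansion, avoiding $\kappa$).} The key is to isolate the leading term without picking up a $\sigma_1$. Rather than multiply the assembled row-wise bound for $\bm{\Psi}_U$ by $\widehat{\bm{\Sigma}}$, I use Lemma \ref{Lemma LOO UV Sig H}(ii): $\widehat{\bm{U}}\widehat{\bm{\Sigma}}\bm{H}_V=(T^{-1/2}\bm{X})\bm{V}+\bm{R}$, where $\Vert\bm{R}_{k,\cdot}\Vert_2\lesssim\sigma_r(\omega_k\rho\sqrt{r/n}\log^{3/2}n+\rho^2\Vert\bm{U}_{k,\cdot}\Vert_2+\omega_k\omega\log n\Vert\bm{U}\Vert_{2,\infty})$, and here the multiplication by $\widehat{\bm{\Sigma}}$ is built into the leave-one-out analysis so only $\sigma_r$ appears. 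Since $(T^{-1/2}\bm{X})\bm{V}=\bm{U}\bm{\Lambda}+T^{-1/2}\bm{E}\bm{V}$, and under $H_0:\bm{b}_i=\bm{b}_j$ we have $\bm{U}_{i,\cdot}=\bm{U}_{j,\cdot}=\bar{\bm{U}}_{i,\cdot}\bm{Q}$ by Lemma \ref{Lemma SVD BF good event}, subtracting rows $i,j$ and right-multiplying by $\bm{H}_V^{-1}$ (with $\Vert\bm{H}_V^{-1}-\bm{R}_V^{\top}\Vert_2\lesssim\rho^2$ from Lemma \ref{Lemma R H for U V}) yields
\[
\widehat{\bm{B}}_{i,\cdot}-\widehat{\bm{B}}_{j,\cdot}=\bm{G}_{B_{ij}}+\bm{\Upsilon}_{B_{ij}},\qquad\bm{G}_{B_{ij}}:=T^{-1/2}\bm{w}^{\top}\bm{Z}\bm{V}\bm{R}_V^{\top},
\]
where $\bm{w}:=((\bm{\Sigma}_{\varepsilon}^{1/2})_{i,\cdot}-(\bm{\Sigma}_{\varepsilon}^{1/2})_{j,\cdot})^{\top}$, so $\Vert\bm{w}\Vert_2^2=(\bm{\Sigma}_{\varepsilon})_{i,i}+(\bm{\Sigma}_{\varepsilon})_{j,j}-2(\bm{\Sigma}_{\varepsilon})_{i,j}$. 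Bounding $\bm{\Upsilon}_{B_{ij}}$ by the $\bm{R}_i-\bm{R}_j$ contribution, the $(\bm{H}_V^{-1}-\bm{R}_V^{\top})$-correction applied to $T^{-1/2}(\bm{E}_{i,\cdot}-\bm{E}_{j,\cdot})\bm{V}$ (controlled by Lemma \ref{Lemma Z norms A B}), and the negligibility of $\Vert\widehat{\bm{U}}\bm{H}_U-\bm{U}\Vert_{2,\infty}$ against $\Vert\bm{U}\Vert_{2,\infty}$ (Lemma \ref{Lemma UH-U VH-V}), and then simplifying with $\rho\le\omega$ and $\rho^2\lesssim\omega$, I expect $\Vert\bm{\Upsilon}_{B_{ij}}\Vert_2\lesssim\sigma_r(\omega_i+\omega_j)\omega\sqrt{r/n}\log^{3/2}n+\sigma_r(\omega_i+\omega_j)\omega\log n\Vert\bar{\bm{U}}\Vert_{2,\infty}+\sigma_r(\rho^2+\rho\sqrt{r/n}\log n)(\Vert\bar{\bm{U}}_{i,\cdot}\Vert_2+\Vert\bar{\bm{U}}_{j,\cdot}\Vert_2)$.

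\textbf{Step 2 (Gaussian approximation of the leading term).} Writing $\mathcal{B}_{ij}=\Vert\tfrac{\sqrt{T}}{\Vert\bm{w}\Vert_2}(\bm{G}_{B_{ij}}+\bm{\Upsilon}_{B_{ij}})\Vert_2^2$ and noting that $\Vert\cdot\bm{R}_V^{\top}\Vert_2$ is rotation-invariant (which also dispatches the data-dependence of $\bm{R}_V$), it suffices to understand the law of $\tfrac{1}{\Vert\bm{w}\Vert_2}\Vert\bm{w}^{\top}\bm{Z}\bm{V}\Vert_2$. Conditionally on $\bm{F}$, expand $\tfrac{1}{\Vert\bm{w}\Vert_2}\bm{w}^{\top}\bm{Z}\bm{V}=\sum_{t=1}^{T}\tfrac{1}{\Vert\bm{w}\Vert_2}(\bm{w}^{\top}\bm{Z}_{\cdot,t})\bm{V}_{t,\cdot}$, a sum of $T$ independent mean-zero $\mathbb{R}^{r}$-vectors with conditional covariance $\sum_t\bm{V}_{t,\cdot}^{\top}\bm{V}_{t,\cdot}=\bm{V}^{\top}\bm{V}=\bm{I}_r$. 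In the Gaussian case this is exactly $\mathcal{N}(0,\bm{I}_r)$ and independent of $\bm{F}$, so its squared norm is $\chi^2(r)$ and $\tau_U^{(i,j)}=0$. In the sub-Gaussian case, Lemma \ref{Lemma Berry-Esseen balls} (via \citet{raivc2019multivariate}) gives $\tau_U^{(i,j)}\lesssim r^{1/4}\sum_t\mathbb{E}[\Vert\tfrac{1}{\Vert\bm{w}\Vert_2}(\bm{w}^{\top}\bm{Z}_{\cdot,t})\bm{V}_{t,\cdot}\Vert_2^3]\lesssim r^{1/4}\sum_t\Vert\bm{V}_{t,\cdot}\Vert_2^3\le r^{1/4}\Vert\bm{V}\Vert_{2,\infty}\Vert\bm{V}\Vert_{\mathrm{F}}^2$, using $\mathbb{E}|\bm{w}^{\top}\bm{Z}_{\cdot,t}|^3\lesssim\Vert\bm{w}\Vert_2^3$; with $\Vert\bm{V}\Vert_{2,\infty}\lesssim\sqrt{\log n/T}$ (Lemma \ref{Lemma SVD BF good event}) this is $\lesssim r\sqrt{\log n/T}$, as claimed.

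\textbf{Step 3 (conclusion) and the main obstacle.} Set $\tau:=\sqrt{\mathcal{B}_{ij}}-\tfrac{1}{\Vert\bm{w}\Vert_2}\Vert\bm{w}^{\top}\bm{Z}\bm{V}\Vert_2$, so $|\tau|\le\tfrac{\sqrt{T}}{\Vert\bm{w}\Vert_2}\Vert\bm{\Upsilon}_{B_{ij}}\Vert_2$, which is precisely the third group of terms in the asserted bound. Then, exactly as in Step 3 of Lemma \ref{Lemma true Chi-square stat}, sandwich $\mathbb{P}(\mathcal{B}_{ij}+\zeta\le\chi_{1-\alpha}^2(r))$ between shifted events $\{\sqrt{\chi^2(r)}\le\sqrt{\chi_{1-\alpha}^2(r)\mp|\zeta|}\mp|\tau|\}$ up to $\tau_U^{(i,j)}$, use $\chi_{1-\alpha}^2(r)\asymp r$ and $|\zeta|\ll r$ to linearize the square roots (producing a shift $\lesssim r^{-1/2}|\zeta|+|\tau|$), and apply the Gaussian anti-concentration estimate of Lemma \ref{Lemma Chi-square two balls} to get $|\mathbb{P}(\mathcal{B}_{ij}+\zeta\le\chi_{1-\alpha}^2(r))-(1-\alpha)|\lesssim|\tau|+r^{-1/2}|\zeta|+\tau_U^{(i,j)}$, which is the claim; the Gaussian-noise refinement ($\tau_U^{(i,j)}=0$) follows from Step 2. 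The main difficulty is Step 1: getting the remainder bound free of the condition number $\kappa=\sigma_1/\sigma_r$, which forces one to route through the ``$\widehat{\bm{U}}\widehat{\bm{\Sigma}}\bm{H}_V$'' quantity of Lemma \ref{Lemma LOO UV Sig H}(ii) instead of multiplying the $\bm{\Psi}_U$ row bounds by $\widehat{\bm{\Sigma}}$; a secondary subtlety, the randomness of $\bm{R}_V$, is harmless because every quantity of interest depends on $\bm{R}_V$ only through the rotation-invariant norm $\Vert\cdot\bm{R}_V^{\top}\Vert_2$.
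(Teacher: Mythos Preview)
Your proposal is essentially correct and follows the same three-step template as the paper (first-order extraction, Berry--Esseen for balls on the leading term, then the $\chi^2$ sandwich via Lemmas \ref{Lemma Chi-square quantile order}--\ref{Lemma Chi-square two balls}). Two small points are worth flagging.

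First, your Step 1 detour through $\widehat{\bm{U}}\widehat{\bm{\Sigma}}\bm{H}_V$ followed by converting $\bm{H}_V^{-1}$ to $\bm{R}_V^{\top}$ works, but the paper's route is more direct and avoids that conversion: it writes $\widehat{\bm{U}}\widehat{\bm{\Sigma}}\bm{R}_V=\bm{U}\bm{\Lambda}+T^{-1/2}\bm{\Sigma}_{\varepsilon}^{1/2}\bm{Z}\bm{V}+(\widehat{\bm{U}}\bm{R}_U\varepsilon_{\bm{\Sigma}}+\bm{\Psi}_U\bm{\Lambda})$ and then, for the remainder $((\bm{\Psi}_U)_{i,\cdot}-(\bm{\Psi}_U)_{j,\cdot})\bm{\Lambda}$, simply revisits the explicit decomposition of $\bm{\Psi}_U$ in the proof of Theorem \ref{Thm UV 1st approx row-wise error}. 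There $\bm{\Psi}_U=[\,\cdot\,]\bm{\Lambda}^{-1}$, so $\bm{\Psi}_U\bm{\Lambda}$ is exactly the bracketed sum, each piece of which was already bounded by $\sigma_r$-scale quantities (Lemmas \ref{Lemma R H for U V}, \ref{Lemma Sigma hat tilde H R}, \ref{Lemma LOO UV Sig H}). This yields the $\kappa$-free remainder bound you want without any $\bm{H}_V\to\bm{R}_V^{\top}$ bookkeeping; your identification of ``avoiding $\kappa$'' as the main issue is right, but the fix is cheaper than you suggest.

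Second, in Step 2 you invoke Lemma \ref{Lemma Berry-Esseen balls} but then insert an $r^{1/4}$ prefactor; that factor belongs to the convex-set version (Theorem 1.1 of \citet{raivc2019multivariate}), not to the ball version used here. Dropping it gives $\tau_U^{(i,j)}\lesssim\|\bm{V}\|_{2,\infty}\|\bm{V}\|_{\mathrm{F}}^2\lesssim r\sqrt{\log n/T}$ as stated, so your final claim is right but the displayed intermediate bound is off by $r^{1/4}$.
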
 
\begin{proof}
\textit{Step 1 -- Expand $(\widehat{\bm{U}}_{i,\cdot}-\widehat{\bm{U}}_{j,\cdot})\widehat{\bm{\Sigma}}$
and develop the whole roadmap for the proof.}

We already have the expansion of $\widehat{\bm{U}}\bm{R}_{U}-\bm{U}$
and the error bound for $\varepsilon_{\bm{\Sigma}}:=(\bm{R}_{U})^{\top}\widehat{\bm{\Sigma}}\bm{R}_{V}-\bm{\Lambda}$,
so we obtain that 
\begin{align*}
\widehat{\bm{U}}\widehat{\bm{\Sigma}}\bm{R}_{V} & =\widehat{\bm{U}}\bm{R}_{U}(\bm{R}_{U})^{\top}\widehat{\bm{\Sigma}}\bm{R}_{V}=(\bm{U}+\bm{G}_{U}+\bm{\Psi}_{U})(\bm{\Lambda}+\varepsilon_{\bm{\Sigma}})=(\bm{U}+\bm{G}_{U}+\bm{\Psi}_{U})\bm{\Lambda}+\widehat{\bm{U}}\bm{R}_{U}\varepsilon_{\bm{\Sigma}}\\
 & =\bm{U}\bm{\Lambda}+T^{-1/2}\bm{\Sigma}_{\varepsilon}^{1/2}\bm{Z}\bm{V}+(\widehat{\bm{U}}\bm{R}_{U}\varepsilon_{\bm{\Sigma}}+\bm{\Psi}_{U}\bm{\Lambda}).
\end{align*}
Note that, by Lemma \ref{Lemma SVD BF good event}, conditioning on
the good event $\mathcal{E}_{0}$, we have that $\bm{U}_{i,\cdot}-\bm{U}_{j,\cdot}=0$
under the null $\bm{b}_{i}=\bm{b}_{j}$. The reason is that, there
exists a rotation matrix $\bm{Q}$ such that $\bm{U}=\bar{\bm{U}}\bm{Q}$,
and by definition we have that $\bar{\bm{U}}_{i,\cdot}=\bar{\bm{U}}_{j,\cdot}$
since $\bar{\bm{U}}=\bm{\bm{B}\Sigma}^{-1}$ and $\bm{b}_{i}=\bm{b}_{j}$,
i.e., $\bm{B}_{i,\cdot}=\bm{B}_{j,\cdot}$.

The following derivations are conditioning on $\mathcal{E}_{0}$ and
we will take the effect of $\mathcal{E}_{0}$ into consideration in
the final non-asymptotic probability bound. Then we have that 
\begin{align*}
 & (\widehat{\bm{U}}_{i,\cdot}-\widehat{\bm{U}}_{j,\cdot})\widehat{\bm{\Sigma}}\bm{R}_{V}\\
 & =(\bm{U}_{i,\cdot}-\bm{U}_{j,\cdot})\bm{\Lambda}+T^{-1/2}((\bm{\Sigma}_{\varepsilon}^{1/2})_{i,\cdot}-(\bm{\Sigma}_{\varepsilon}^{1/2})_{j,\cdot})\bm{Z}\bm{V}+(\widehat{\bm{U}}_{i,\cdot}-\widehat{\bm{U}}_{j,\cdot})\bm{R}_{U}\varepsilon_{\bm{\Sigma}}+((\bm{\Psi}_{U})_{i,}-(\bm{\Psi}_{U})_{j,})\bm{\Lambda}\\
 & =T^{-1/2}\bm{a}\bm{Z}\bm{V}+((\bm{\Psi}_{U})_{i,}-(\bm{\Psi}_{U})_{j,})\bm{\Lambda}=T^{-1/2}\sum_{t=1}^{T}(\bm{a}\bm{Z}_{\cdot,t})\bm{V}_{t,\cdot}+((\bm{\Psi}_{U})_{i,}-(\bm{\Psi}_{U})_{j,})\bm{\Lambda},
\end{align*}
where, for simplicity of notations, we denote $a_{k}:=T^{-1/2}(\bm{\Sigma}_{\varepsilon}^{1/2})_{i,k}-(\bm{\Sigma}_{\varepsilon}^{1/2})_{j,k}$
and denote $\bm{a}=(a_{1},a_{2},\ldots,a_{N})$ as a row vector satisfying
that 
\[
\left\Vert \bm{a}\right\Vert _{2}^{2}=\left\Vert \bm{a}\right\Vert _{\mathrm{F}}^{2}=\sum_{k=1}^{N}a_{k}^{2}=\frac{1}{T}((\bm{\Sigma}_{\varepsilon})_{i,i}+(\bm{\Sigma}_{\varepsilon})_{j,j}-2(\bm{\Sigma}_{\varepsilon})_{i,j}).
\]

We will show proximity between the law of $\sum_{t=1}^{T}(\bm{a}\bm{Z}_{\cdot,t})\bm{V}_{t,\cdot}$
and Gaussian, and prove that the term 
\[
\varepsilon_{2}:=\frac{1}{\left\Vert \bm{a}\right\Vert _{2}}\big\Vert((\bm{\Psi}_{U})_{i,}-(\bm{\Psi}_{U})_{j,})\bm{\Lambda}\big\Vert_{2}
\]
is negligible.

We will establish the upper bound for the parameter 
\[
\tau_{U}^{(i,j)}:=\sup_{R\geq0}\left\vert \mathbb{P}\left(\frac{1}{\left\Vert \bm{a}\right\Vert _{2}}\big\Vert\sum_{t=1}^{T}(\bm{a}\bm{Z}_{\cdot,t})\bm{V}_{t,\cdot}\big\Vert_{2}\leq R\right)-\mathbb{P}(\chi^{2}(r)\leq R^{2})\right\vert .
\]

The motivations for establishing the upper bound for $\tau_{U}^{(i,j)}$
are as follows: since $\left\Vert \bm{a}\right\Vert _{2}^{2}=\frac{1}{T}((\bm{\Sigma}_{\varepsilon})_{i,i}+(\bm{\Sigma}_{\varepsilon})_{j,j}-2(\bm{\Sigma}_{\varepsilon})_{i,j})$,
we can rewrite $\mathcal{B}_{ij}$ as 
\begin{align*}
\mathcal{B}_{ij} & :=\frac{1}{\left\Vert \bm{a}\right\Vert _{2}}\big\Vert\sum_{t=1}^{T}(\bm{a}\bm{Z}_{\cdot,t})\bm{V}_{t,\cdot}+((\bm{\Psi}_{U})_{i,}-(\bm{\Psi}_{U})_{j,})\bm{\Lambda}\big\Vert_{2}^{2}.
\end{align*}
Similar to the techniques we used for projection matrix of factors
in Lemma \ref{Lemma true Chi-square stat}, we have that, for any
random variable $\zeta$ satisfying $\left|\zeta\right|\ll r$, 
\begin{align*}
 & \mathbb{P}\left(\frac{1}{\left\Vert \bm{a}\right\Vert _{2}}\big\Vert\sum_{t=1}^{T}(\bm{a}\bm{Z}_{\cdot,t})\bm{V}_{t,\cdot}\big\Vert_{2}\leq\sqrt{\chi_{1-\alpha}^{2}(r)-\left|\zeta\right|}-\varepsilon_{2}\right)\\
 & \leq\mathbb{P}\left(\mathcal{B}_{ij}+\zeta\leq\chi_{1-\alpha}^{2}(r)\right)\\
 & \leq\mathbb{P}\left(\frac{1}{\left\Vert \bm{a}\right\Vert _{2}}\big\Vert\sum_{t=1}^{T}(\bm{a}\bm{Z}_{\cdot,t})\bm{V}_{t,\cdot}\big\Vert_{2}\leq\sqrt{\chi_{1-\alpha}^{2}(r)+\left|\zeta\right|}+\varepsilon_{2}\right).
\end{align*}
Then by definition of $\tau_{U}^{(i,j)}$, we have that 
\[
\mathbb{P}\left(\sqrt{\chi^{2}(r)}\leq\sqrt{\chi_{1-\alpha}^{2}(r)-\left|\zeta\right|}-\varepsilon_{2}\right)-\tau_{U}^{(i,j)}\leq\mathbb{P}\left(\mathcal{B}_{ij}+\zeta\leq\chi_{1-\alpha}^{2}(r)\right)\leq\mathbb{P}\left(\sqrt{\chi^{2}(r)}\leq\sqrt{\chi_{1-\alpha}^{2}(r)+\left|\zeta\right|}+\varepsilon_{2}\right)+\tau_{U}^{(i,j)}.
\]
Similar to Step 3 of the proof of Lemma \ref{Lemma true Chi-square stat},
since $\left|\zeta\right|\ll r\asymp\chi_{1-\alpha}^{2}(r)$, we have
that the left bound of the above inequality satisfies that 
\[
\mathbb{P}\left(\chi^{2}(r)\leq\chi_{1-\alpha}^{2}(r)\right)-h\lesssim\mathbb{P}\left(\sqrt{\chi^{2}(r)}\leq\sqrt{\chi_{1-\alpha}^{2}(r)}-h\right)\leq\mathbb{P}\left(\sqrt{\chi^{2}(r)}\leq\sqrt{\chi_{1-\alpha}^{2}(r)-\left|\zeta\right|}-\varepsilon_{2}\right),
\]
where $h>0$ satisfies that $h\lesssim\left|\varepsilon_{2}\right|+\frac{1}{\sqrt{r}}\left|\zeta\right|$.
Similar inequality holds for the right bound $\mathbb{P}(\sqrt{\chi^{2}(r)}\leq\sqrt{\chi_{1-\alpha}^{2}(r)+\left|\zeta\right|}+\varepsilon_{2})$
of $\mathbb{P}(\mathcal{B}_{ij}+\zeta\leq\chi_{1-\alpha}^{2}(r))$,
so we obtain that 
\[
\left|\mathbb{P}\left(\mathcal{B}_{ij}+\zeta\leq\chi_{1-\alpha}^{2}(r)\right)-(1-\alpha)\right|\lesssim\varepsilon_{2}+\frac{1}{\sqrt{r}}\left|\zeta\right|+\tau_{U}^{(i,j)}.
\]
Thus, the problems boil down to proving the upper bounds for $\tau_{U}^{(i,j)}$
and $\varepsilon_{2}$.

\textit{Step 2 -- Prove the upper bound for $\tau_{U}^{(i,j)}$ to show the
proximity of $\sum_{t=1}^{T}(\bm{a}\bm{Z}_{\cdot,t})\bm{V}_{t,\cdot}$
with a Gaussian vector.}

Conditioning on $\bm{F}$, the variance of $\sum_{t=1}^{T}(\bm{a}\bm{Z}_{\cdot,t})\bm{V}_{t,\cdot}$
is given by 
\[
\sum_{t=1}^{T}\mathbb{E}[(\bm{a}\bm{Z}_{\cdot,t})^{2}](\bm{V}_{t,\cdot})^{\top}\bm{V}_{t,\cdot}=\left\Vert \bm{a}\right\Vert _{2}^{2}\bm{V}^{\top}\bm{V}=\left\Vert \bm{a}\right\Vert _{2}^{2}\bm{I}_{r},
\]
where we use the fact that $\bm{V}$ is $\sigma(\bm{F})$-measurable
and $\bm{Z}$ is independent with $\bm{F}$. Then we have that

(i) When all entries of $\bm{Z}$ are Gaussian, we have that, the
law of $\frac{1}{\left\Vert \bm{a}\right\Vert _{2}}\sum_{t=1}^{T}(\bm{a}\bm{Z}_{\cdot,t})\bm{V}_{t,\cdot}$
is $r$-dimensional standard Gaussian $\mathcal{N}(0,\bm{I}_{r})$,
because it is a linear combination of Gaussian vectors. Thus, we have
that $\tau_{U}^{(i,j)}=0$ when all entries of $\bm{Z}$ are Gaussian.

(ii) When the entries of $\bm{Z}$ are sub-Gaussian, we obtain by
Lemma \ref{Lemma Berry-Esseen balls} that 
\begin{align*}
\tau_{U}^{(i,j)} & \lesssim\frac{1}{\left\Vert \bm{a}\right\Vert _{2}^{3}}\sum_{t=1}^{T}\mathbb{E}\left[\big\Vert(\bm{a}\bm{Z}_{\cdot,t})\bm{V}_{t,\cdot}\big\Vert_{2}^{3}\right]=\frac{1}{\left\Vert \bm{a}\right\Vert _{2}^{3}}\mathbb{E}\left[\big\Vert(\bm{a}\bm{Z}_{\cdot,t})\big\Vert_{2}^{3}\right]\max_{1\leq t\leq T}\big\Vert\bm{V}_{t,\cdot}\big\Vert_{2}\sum_{t=1}^{T}\big\Vert\bm{V}_{t,\cdot}\big\Vert_{2}^{2}\\
 & \lesssim\frac{1}{\left\Vert \bm{a}\right\Vert _{2}^{3}}\left\Vert \bm{a}\right\Vert _{2}^{3}\cdot\max_{1\leq t\leq T}\big\Vert\bm{V}_{t,\cdot}\big\Vert_{2}\cdot r\lesssim r\sqrt{\frac{\log n}{T}},
\end{align*}
where we used the fact that $\mathbb{E}\left[\big\Vert(\bm{a}\bm{Z}_{\cdot,t})\big\Vert_{2}^{3}\right]\lesssim\left\Vert \bm{a}\right\Vert _{2}^{3}$.
We now prove it as follows. By Exercise 6.3.5 in \citet{vershynin2016high},
we have that, since $\big\Vert\bm{a}\big\Vert_{2}=\big\Vert\bm{a}\big\Vert_{\mathrm{F}}$,
there exist constants $C_{1}>0$, $C_{2}>0$ such that, for any $t\geq0$,
it holds 
\[
\mathbb{P}\left(|\bm{a}\bm{Z}_{\cdot,t}|\geq C_{1}\big\Vert\bm{a}\big\Vert_{2}+t\right)\leq\exp(-C_{2}\frac{t^{2}}{\big\Vert\bm{a}\big\Vert_{2}^{2}}),\text{ i.e.},\mathbb{P}\left(|\bm{a}\bm{Z}_{\cdot,t}|\geq C_{1}\big\Vert\bm{a}\big\Vert_{2}(1+t)\right)\leq\exp\left(-C_{2}t^{2}\right)
\]
So similarly we have 
\begin{align*}
\mathbb{E}\left[|\bm{a}\bm{Z}_{\cdot,t}|^{3}\right] & =\int_{0}^{+\infty}3s^{2}\mathbb{P}\left(|\bm{a}\bm{Z}_{\cdot,t}|\geq s\right)ds\\
 & =\int_{0}^{C_{1}\big\Vert\bm{a}\big\Vert_{2}}3s^{2}\mathbb{P}\left(|\bm{a}\bm{Z}_{\cdot,t}|\geq s\right)ds\\
 & +\int_{0}^{+\infty}3(C_{1}\big\Vert\bm{a}\big\Vert_{2}(1+t))^{2}\mathbb{P}\left(|\bm{a}\bm{Z}_{\cdot,t}|\geq C_{1}\big\Vert\bm{a}\big\Vert_{2}(1+t)\right)dt\\
 & \leq\int_{0}^{C_{1}\big\Vert\bm{a}\big\Vert_{2}}3s^{2}ds+\int_{0}^{+\infty}3(C_{1}\big\Vert\bm{a}\big\Vert_{2}(1+t))^{2}\exp\left(-C_{2}t^{2}\right)dt\\
 & \lesssim\left\Vert \bm{a}\right\Vert _{2}^{3}.
\end{align*}

\textit{Step 3 -- Prove the upper bound for $\varepsilon_{2}$.}

Recall that 
\[
\varepsilon_{2}=\frac{1}{\left\Vert \bm{a}\right\Vert _{2}}\big\Vert((\bm{\Psi}_{U})_{i,}-(\bm{\Psi}_{U})_{j,})\bm{\Lambda}\big\Vert_{2}=\frac{\sqrt{T}}{\sqrt{(\bm{\Sigma}_{\varepsilon})_{i,i}+(\bm{\Sigma}_{\varepsilon})_{j,j}-2(\bm{\Sigma}_{\varepsilon})_{i,j}}}\big\Vert((\bm{\Psi}_{U})_{i,}-(\bm{\Psi}_{U})_{j,})\bm{\Lambda}\big\Vert_{2}.
\]
Then using the decomposition of $\bm{\Psi}_{U}$ in the proof of Theorem
\ref{Thm UV 1st approx row-wise error}, we obtain that 
\begin{align*}
\big\Vert((\bm{\Psi}_{U})_{i,}-(\bm{\Psi}_{U})_{j,})\bm{\Lambda}\big\Vert_{2} & \lesssim\sigma_{r}(\omega_{i}+\omega_{j})\omega\sqrt{\frac{r}{n}}\log^{3/2}n+\sigma_{r}(\omega_{i}+\omega_{j})\omega\log n\big\Vert\bar{\bm{U}}\big\Vert_{2,\infty}\\
 & +\sigma_{r}\left(\rho^{2}+\rho\sqrt{\frac{r}{n}}\log n\right)(\big\Vert\bar{\bm{U}}_{i,\cdot}\big\Vert_{2}+\big\Vert\bar{\bm{U}}_{j,\cdot}\big\Vert_{2}),
\end{align*}
Finally, the desird result follows from assembling the upper bounds
of $\tau_{U}^{(i,j)}$ and $\varepsilon_{2}$. 
\end{proof}

\subsection{Proof of Theorem \ref{Thm two-sample test B}:}

We are now ready to prove the main theorem. Define $\bm{e}_{ij}\in\mathbb{R}^{N}$
as the vector where the $i$-th (resp. $j$-th) entry is $1$ (resp.
$-1$) and the other entries are all zero. Note that $T\left\Vert \bm{a}\right\Vert _{2}^{2}=(\bm{\Sigma}_{\varepsilon})_{i,i}+(\bm{\Sigma}_{\varepsilon})_{j,j}-2(\bm{\Sigma}_{\varepsilon})_{i,j}=\bm{e}_{ij}^{\top}\bm{\Sigma}_{\varepsilon}\bm{e}_{ij}\geq\lambda_{\min}(\bm{\Sigma}_{\varepsilon})\left\Vert \bm{e}_{ij}\right\Vert _{2}^{2}=2\lambda_{\min}(\bm{\Sigma}_{\varepsilon})$,
$|\bm{e}_{ij}^{\top}(\widehat{\bm{\bm{\Sigma}}}_{\varepsilon}^{\tau}-\bm{\Sigma}_{\varepsilon})\bm{e}_{ij}|\lesssim(\epsilon_{N,T})^{1-q}s(\bm{\Sigma}_{\varepsilon})\ll1$,
and thus $(\widehat{\bm{\bm{\Sigma}}}_{\varepsilon}^{\tau})_{i,i}+(\widehat{\bm{\bm{\Sigma}}}_{\varepsilon}^{\tau})_{j,j}-2(\widehat{\bm{\bm{\Sigma}}}_{\varepsilon}^{\tau})_{i,j}=\bm{e}_{ij}^{\top}\bm{\Sigma}_{\varepsilon}\bm{e}_{ij}+\bm{e}_{ij}^{\top}(\widehat{\bm{\bm{\Sigma}}}_{\varepsilon}^{\tau}-\bm{\Sigma}_{\varepsilon})\bm{e}_{ij}\gtrsim\lambda_{\min}(\bm{\Sigma}_{\varepsilon})$.
So we obtain that 
\begin{align*}
|1\bigl/\bm{e}_{ij}^{\top}\widehat{\bm{\bm{\Sigma}}}_{\varepsilon}^{\tau}\bm{e}_{ij}-1\bigl/\bm{e}_{ij}^{\top}\bm{\Sigma}_{\varepsilon}\bm{e}_{ij}| & =\frac{|\bm{e}_{ij}^{\top}(\widehat{\bm{\bm{\Sigma}}}_{\varepsilon}^{\tau}-\bm{\Sigma}_{\varepsilon})\bm{e}_{ij}|}{(\bm{e}_{ij}^{\top}\widehat{\bm{\bm{\Sigma}}}_{\varepsilon}^{\tau}\bm{e}_{ij})(\bm{e}_{ij}^{\top}\bm{\Sigma}_{\varepsilon}\bm{e}_{ij})}\lesssim(\epsilon_{N,T})^{1-q}s(\bm{\Sigma}_{\varepsilon})\frac{1}{\lambda_{\min}(\bm{\Sigma}_{\varepsilon})T\left\Vert \bm{a}\right\Vert _{2}^{2}}.
\end{align*}
We let 
\begin{align*}
\zeta & :=\mathfrak{\widehat{\mathcal{B}}}_{ij}-\mathcal{B}_{ij}=T(1\bigl/\bm{e}_{ij}^{\top}\widehat{\bm{\bm{\Sigma}}}_{\varepsilon}^{\tau}\bm{e}_{ij}-1\bigl/\bm{e}_{ij}^{\top}\bm{\Sigma}_{\varepsilon}\bm{e}_{ij})\big\Vert(\widehat{\bm{U}}_{i,\cdot}-\widehat{\bm{U}}_{j,\cdot})\widehat{\bm{\Sigma}}\big\Vert_{2}^{2}.
\end{align*}
We derive the upper bound for $\big\Vert(\widehat{\bm{U}}_{i,\cdot}-\widehat{\bm{U}}_{j,\cdot})\widehat{\bm{\Sigma}}\big\Vert_{2}^{2}$
first. Recall that 
\[
(\widehat{\bm{U}}_{i,\cdot}-\widehat{\bm{U}}_{j,\cdot})\widehat{\bm{\Sigma}}\bm{R}_{V}=\sum_{t=1}^{T}(\bm{a}\bm{Z}_{\cdot,t})\bm{V}_{t,\cdot}+((\bm{\Psi}_{U})_{i,}-(\bm{\Psi}_{U})_{j,})\bm{\Lambda}.
\]
Note that $\sum_{t=1}^{T}(\bm{a}\bm{Z}_{\cdot,t})\bm{V}_{t,\cdot}$
is the sum of independent mean zero random vectors. Given the event
$\mathcal{E}_{Z}$ defined in Lemma \ref{Lemma Z norms A B}, it holds
$\big\Vert(\bm{a}\bm{Z}_{\cdot,t})\bm{V}_{t,\cdot}\big\Vert_{2}\leq C\sqrt{\log n}\big\Vert\bm{a}\big\Vert_{2}\big\Vert\bm{V}_{t,\cdot}\big\Vert_{2}$
for $t=1,2,\ldots,T$. Then, we obtain by the matrix Hoeffding inequality
\citep[Theorem 1.3]{tropp2012user} that $\big\Vert\sum_{t=1}^{T}(\bm{a}\bm{Z}_{\cdot,t})\bm{V}_{t,\cdot}\big\Vert_{2}\lesssim\sigma\sqrt{\log n}$,
where $\sigma=[\sum_{l=1}^{T}(C\sqrt{\log n}\big\Vert\bm{a}\big\Vert_{2}\big\Vert\bm{V}_{t,\cdot}\big\Vert_{2})^{2}]^{1/2}=C\sqrt{\log n}\big\Vert\bm{a}\big\Vert_{2}r$.
So, we obtain that, with probability at least $1-O(n^{-2})$, it holds
\[
\big\Vert\sum_{t=1}^{T}(\bm{a}\bm{Z}_{\cdot,t})\bm{V}_{t,\cdot}\big\Vert_{2}\lesssim\big\Vert\bm{a}\big\Vert_{2}r\log n.
\]

Then we obtain that 
\begin{align*}
\big\Vert(\widehat{\bm{U}}_{i,\cdot}-\widehat{\bm{U}}_{j,\cdot})\widehat{\bm{\Sigma}}\big\Vert_{2}^{2} & \lesssim\big\Vert\sum_{t=1}^{T}(\bm{a}\bm{Z}_{\cdot,t})\bm{V}_{t,\cdot}\big\Vert_{2}^{2}+\big\Vert((\bm{\Psi}_{U})_{i,}-(\bm{\Psi}_{U})_{j,})\bm{\Lambda}\big\Vert_{2}^{2}\\
 & \lesssim\big\Vert\bm{a}\big\Vert_{2}^{2}r^{2}\log^{2}n+\left\Vert \bm{a}\right\Vert _{2}^{2}\varepsilon_{2}^{2},
\end{align*}
where $\varepsilon_{2}$ is defined in the proof of Lemma \ref{Lemma B two sample prepare}.
Thus, we obtain that 
\begin{align*}
|\zeta| & =T(1\bigl/\bm{e}_{ij}^{\top}\widehat{\bm{\bm{\Sigma}}}_{\varepsilon}^{\tau}\bm{e}_{ij}-1\bigl/\bm{e}_{ij}^{\top}\bm{\Sigma}_{\varepsilon}\bm{e}_{ij})\big\Vert(\widehat{\bm{U}}_{i,\cdot}-\widehat{\bm{U}}_{j,\cdot})\widehat{\bm{\Sigma}}\big\Vert_{2}^{2}\\
 & \lesssim T(\epsilon_{N,T})^{1-q}s(\bm{\Sigma}_{\varepsilon})\frac{1}{\lambda_{\min}(\bm{\Sigma}_{\varepsilon})T\left\Vert \bm{a}\right\Vert _{2}^{2}}(\big\Vert\bm{a}\big\Vert_{2}^{2}r^{2}\log^{2}n+\left\Vert \bm{a}\right\Vert _{2}^{2}\varepsilon_{2}^{2})\\
 & \lesssim(\epsilon_{N,T})^{1-q}(r^{2}\log^{2}n+\varepsilon_{2}^{2}),
\end{align*}
where we used the assumption that $s(\bm{\Sigma}_{\varepsilon})/\lambda_{\min}(\bm{\Sigma}_{\varepsilon})\leq C_{\varepsilon}$
for some universal constant $C_{\varepsilon}$. Plugging the above
bound of $|\zeta|$ into Lemma \ref{Lemma B two sample prepare},
we obtain that 
\begin{align*}
\left|\mathbb{P}\left(\mathfrak{\widehat{\mathcal{B}}}_{ij}\leq\chi_{1-\alpha}^{2}(r)\right)-(1-\alpha)\right| & \lesssim(\epsilon_{N,T})^{1-q}r^{3/2}\log^{2}n+\tau_{U}^{(i,j)}+\frac{1}{\sqrt{r}}(\epsilon_{N,T})^{1-q}\varepsilon_{2}^{2}+\varepsilon_{2}\\
 & \lesssim(\epsilon_{N,T})^{1-q}r^{3/2}\log^{2}n+\tau_{U}^{(i,j)}+\varepsilon_{2},
\end{align*}
where we use the assumption that $\epsilon_{N,T}\ll1$ and $\varepsilon_{2}\ll1$.
Using the upper bound of $\varepsilon_{2}$ obtained in the proof
of Lemma \ref{Lemma B two sample prepare}, we have that 
\begin{align*}
\varepsilon_{2} & \lesssim\frac{\sqrt{T}\sigma_{r}(\omega_{i}+\omega_{j})\omega\sqrt{\frac{r}{n}}\log^{3/2}n+\sqrt{T}\sigma_{r}(\omega_{i}+\omega_{j})\omega\log n\big\Vert\bar{\bm{U}}\big\Vert_{2,\infty}}{\sqrt{(\bm{\Sigma}_{\varepsilon})_{i,i}+(\bm{\Sigma}_{\varepsilon})_{j,j}-2(\bm{\Sigma}_{\varepsilon})_{i,j}}}\\
 & +\frac{\sqrt{T}\sigma_{r}\left(\rho^{2}+\rho\sqrt{\frac{r}{n}}\log n\right)(\big\Vert\bar{\bm{U}}_{i,\cdot}\big\Vert_{2}+\big\Vert\bar{\bm{U}}_{j,\cdot}\big\Vert_{2})}{\sqrt{(\bm{\Sigma}_{\varepsilon})_{i,i}+(\bm{\Sigma}_{\varepsilon})_{j,j}-2(\bm{\Sigma}_{\varepsilon})_{i,j}}}\\
 & \lesssim\frac{\sqrt{\left\Vert \bm{\Sigma}_{\varepsilon}\right\Vert _{2}}}{\sqrt{\lambda_{\min}(\bm{\Sigma}_{\varepsilon})}}\sqrt{T}\theta(\omega_{i}+\omega_{j})\omega(\sqrt{\frac{r}{n}}\log^{3/2}n+\log n\big\Vert\bar{\bm{U}}\big\Vert_{2,\infty})\\
 & +\frac{\sqrt{\left\Vert \bm{\Sigma}_{\varepsilon}\right\Vert _{2}}}{\sqrt{\lambda_{\min}(\bm{\Sigma}_{\varepsilon})}}\sqrt{T}\theta\left(\rho^{2}+\rho\sqrt{\frac{r}{n}}\log n\right)(\big\Vert\bar{\bm{U}}_{i,\cdot}\big\Vert_{2}+\big\Vert\bar{\bm{U}}_{j,\cdot}\big\Vert_{2})\\
 & \lesssim\sqrt{\kappa_{\varepsilon}}\frac{\theta n}{\vartheta\sqrt{T}}(\frac{1}{\vartheta_{i}}+\frac{1}{\vartheta_{j}})(\big\Vert\bar{\bm{U}}\big\Vert_{2,\infty}+\frac{1}{\sqrt{n}})\sqrt{r}\log^{3/2}n+\sqrt{\kappa_{\varepsilon}}(\frac{n}{\theta\sqrt{T}}+1)(\big\Vert\bar{\bm{U}}_{i,\cdot}\big\Vert_{2}+\big\Vert\bar{\bm{U}}_{j,\cdot}\big\Vert_{2})\sqrt{r}\log n.
\end{align*}
The final result follows from the above upper bound.

\section{Proof of Theorem \ref{Thm B row norm}: Inference for the systematic
risks}

We start by rewriting our target error as follows, 
\[
\left\vert \mathbb{P}(\frac{1}{\widehat{\sigma}_{B,i}}(\big\Vert\widetilde{\bm{B}}_{i,\cdot}\big\Vert_{2}^{2}-\big\Vert(\widehat{\bm{U}}\widehat{\bm{\Sigma}})_{i,\cdot}\big\Vert_{2}^{2})\in[-z_{1-\frac{1}{2}\alpha},z_{1-\frac{1}{2}\alpha}])-(1-\alpha)\right\vert .
\]

\textit{Step 1 -- Rewrite the first-order approximation to show the Gaussian term.}

Our starting point is to prove that, for any random variable $\zeta$
satisfying $\left|\zeta\right|\ll1$, it holds 
\[
\left\vert \mathbb{P}(\left|\frac{1}{\sigma_{B,i}}(\big\Vert\widetilde{\bm{B}}_{i,\cdot}\big\Vert_{2}^{2}-\big\Vert(\widehat{\bm{U}}\widehat{\bm{\Sigma}})_{i,\cdot}\big\Vert_{2}^{2})+\zeta\right|\leq z_{1-\frac{1}{2}\alpha})-(1-\alpha)\right\vert \lesssim|\varepsilon_{B,i}|+|\zeta|+\tau_{B}^{i},
\]
where $\sigma_{B,i}=\frac{2}{\sqrt{T}}\sqrt{(\bm{\Sigma}_{\varepsilon})_{i,i}}\big\Vert\widetilde{\bm{B}}_{i,\cdot}\big\Vert_{2}$.
Then we let 
\[
\zeta=(\frac{1}{\widehat{\sigma}_{B,i}}-\frac{1}{\sigma_{B,i}})(\big\Vert\widetilde{\bm{B}}_{i,\cdot}\big\Vert_{2}^{2}-\big\Vert(\widehat{\bm{U}}\widehat{\bm{\Sigma}})_{i,\cdot}\big\Vert_{2}^{2})=(\frac{1}{\sqrt{(\widehat{\bm{\bm{\Sigma}}}_{\varepsilon}^{\tau})_{i,i}}}-\frac{1}{\sqrt{(\bm{\Sigma}_{\varepsilon})_{i,i}}})\frac{\sqrt{T}}{2\big\Vert\widetilde{\bm{B}}_{i,\cdot}\big\Vert_{2}}(\big\Vert\widetilde{\bm{B}}_{i,\cdot}\big\Vert_{2}^{2}-\big\Vert(\widehat{\bm{U}}\widehat{\bm{\Sigma}})_{i,\cdot}\big\Vert_{2}^{2}),
\]
and later we will derive an upper bound for $|\zeta|$. We already
have the expansion of $\widehat{\bm{U}}\bm{R}_{U}-\bm{U}$ and the
error bound for $\varepsilon_{\bm{\Sigma}}:=(\bm{R}_{U})^{\top}\widehat{\bm{\Sigma}}\bm{R}_{V}-\bm{\Lambda}$,
so we obtain that 
\begin{align*}
\widehat{\bm{U}}\widehat{\bm{\Sigma}}\bm{R}_{V} & =\widehat{\bm{U}}\bm{R}_{U}\cdot(\bm{R}_{U})^{\top}\widehat{\bm{\Sigma}}\bm{R}_{V}=(\bm{U}+\bm{G}_{U}+\bm{\Psi}_{U})(\bm{\Lambda}+\varepsilon_{\bm{\Sigma}})\\
 & =\widetilde{\bm{B}}+\bm{G}_{U}\bm{\Lambda}+\widehat{\bm{U}}\bm{R}_{U}\varepsilon_{\bm{\Sigma}}+\bm{\Psi}_{U}\bm{\Lambda}.
\end{align*}
Since $\bm{R}_{V}$ is a rotation matrix, we obtain 
\begin{align*}
\big\Vert(\widehat{\bm{U}}\widehat{\bm{\Sigma}})_{i,\cdot}\big\Vert_{2}^{2} & =\big\Vert(\widehat{\bm{U}}\widehat{\bm{\Sigma}})_{i,\cdot}\bm{R}_{V}\big\Vert_{2}^{2}=(\widehat{\bm{U}}\widehat{\bm{\Sigma}}\bm{R}_{V})_{i,\cdot}((\widehat{\bm{U}}\widehat{\bm{\Sigma}}\bm{R}_{V})_{i,\cdot})^{\top}\\
 & =(\widehat{\bm{U}}\bm{R}_{U})_{i,\cdot}(\bm{\Lambda}^{2}+\varepsilon_{\bm{\Sigma}}\bm{\Lambda}+\bm{\Lambda}(\varepsilon_{\bm{\Sigma}})^{\top}+\varepsilon_{\bm{\Sigma}}(\varepsilon_{\bm{\Sigma}})^{\top})((\widehat{\bm{U}}\bm{R}_{U})_{i,\cdot})^{\top}\\
 & =(\widehat{\bm{U}}\bm{R}_{U})_{i,\cdot}(\bm{\Lambda}^{2})((\widehat{\bm{U}}\bm{R}_{U})_{i,\cdot})^{\top}+\varepsilon_{1}=(\bm{U}+\bm{G}_{U}+\bm{\Psi}_{U})_{i,\cdot}(\bm{\Lambda}^{2})((\bm{U}+\bm{G}_{U}+\bm{\Psi}_{U})_{i,\cdot})^{\top}+\varepsilon_{1}\\
 & =\big\Vert\widetilde{\bm{B}}_{i,\cdot}\big\Vert_{2}^{2}+(\bm{U}_{i,\cdot}\bm{\Lambda}^{2}((\bm{G}_{U})_{i,\cdot})^{\top}+(\bm{G}_{U})_{i,\cdot}\bm{\Lambda}^{2}(\bm{U}_{i,\cdot})^{\top})+\varepsilon_{1}+\varepsilon_{2}\\
 & =\big\Vert\bm{B}_{i,\cdot}\big\Vert_{2}^{2}+(\bm{U}_{i,\cdot}\bm{\Lambda}^{2}((\bm{G}_{U})_{i,\cdot})^{\top}+(\bm{G}_{U})_{i,\cdot}\bm{\Lambda}^{2}(\bm{U}_{i,\cdot})^{\top})+\varepsilon_{1}+\varepsilon_{2}+\varepsilon_{J},
\end{align*}
where $\varepsilon_{1}$, $\varepsilon_{2}$, and $\varepsilon_{J}$
are scalars defined by 
\begin{align*}
\varepsilon_{1} & :=(\widehat{\bm{U}}\bm{R}_{U})_{i,\cdot}(\varepsilon_{\bm{\Sigma}}\bm{\Lambda}+\bm{\Lambda}(\varepsilon_{\bm{\Sigma}})^{\top}+\varepsilon_{\bm{\Sigma}}(\varepsilon_{\bm{\Sigma}})^{\top})((\widehat{\bm{U}}\bm{R}_{U})_{i,\cdot})^{\top},\\
\varepsilon_{2} & :=\bm{U}_{i,\cdot}\bm{\Lambda}^{2}((\bm{\bm{\Psi}}_{U})_{i,\cdot})^{\top}+(\bm{\bm{\Psi}}_{U})_{i,\cdot}\bm{\Lambda}^{2}(\bm{U}_{i,\cdot})^{\top}+(\bm{G}_{U}+\bm{\Psi}_{U})_{i,\cdot}(\bm{\Lambda}^{2})((\bm{G}_{U}+\bm{\Psi}_{U})_{i,\cdot})^{\top},\\
\varepsilon_{J} & :=\big\Vert\widetilde{\bm{B}}_{i,\cdot}\big\Vert_{2}^{2}-\big\Vert\bm{B}_{i,\cdot}\big\Vert_{2}^{2}=\bm{B}_{i,\cdot}((\bm{J}^{-1})^{\top}\bm{J}^{-1}-\bm{I}_{r})(\bm{B}_{i,\cdot})^{\top}.
\end{align*}
The expression of $\varepsilon_{J}$ uses the fact that $\bm{B}=\widetilde{\bm{B}}\bm{J}^{\top}$.
Note that we have 
\begin{align*}
\bm{U}_{i,\cdot}\bm{\Lambda}^{2}((\bm{G}_{U})_{i,\cdot})^{\top} & =(\bm{G}_{U})_{i,\cdot}\bm{\Lambda}^{2}(\bm{U}_{i,\cdot})^{\top}=T^{-1/2}(\bm{\Sigma}_{\varepsilon}^{1/2})_{i,\cdot}\bm{Z}\bm{V}\bm{\Lambda}(\bm{U}_{i,\cdot})^{\top}\\
 & =T^{-1/2}(\bm{\Sigma}_{\varepsilon}^{1/2})_{i,\cdot}\bm{Z}\bm{V}(\widetilde{\bm{B}}_{i,\cdot})^{\top}\\
 & =T^{-1/2}\sum_{t=1}^{T}(\bm{\Sigma}_{\varepsilon}^{1/2})_{i,\cdot}\bm{Z}_{\cdot,t}\bm{V}_{t,\cdot}(\widetilde{\bm{B}}_{i,\cdot})^{\top}
\end{align*}
is a scalar. So we obtain 
\[
\frac{1}{\sigma_{B,i}}(\big\Vert(\widehat{\bm{U}}\widehat{\bm{\Sigma}})_{i,\cdot}\big\Vert_{2}^{2}-\big\Vert\widetilde{\bm{B}}_{i,\cdot}\big\Vert_{2}^{2})=\bm{K}_{B,i}+\varepsilon_{B,i},\text{\qquad with\qquad}\bm{K}_{B,i}:=\frac{1}{\sigma_{B,i}}2T^{-1/2}\sum_{t=1}^{T}(\bm{\Sigma}_{\varepsilon}^{1/2})_{i,\cdot}\bm{Z}_{\cdot,t}\bm{V}_{t,\cdot}(\widetilde{\bm{B}}_{i,\cdot})^{\top},
\]
and 
\[
|\varepsilon_{B,i}|\leq\frac{1}{\sigma_{B,i}}(|\varepsilon_{1}|+|\varepsilon_{2}|+|\varepsilon_{J}|).
\]

\textit{Step 2 -- Show the proximity of $\bm{K}_{B,i}$ to a Gaussian vector
and simplify the inequality.}

Note that, the variance of $\bm{K}_{B,i}$ is equal to 
\[
\frac{4}{\sigma_{B,i}^{2}T}\sum_{t=1}^{T}Var[(\bm{\Sigma}_{\varepsilon}^{1/2})_{i,\cdot}\bm{Z}_{\cdot,t}\bm{V}_{t,\cdot}(\widetilde{\bm{B}}_{i,\cdot})^{\top}]=\frac{4}{\sigma_{B,i}^{2}T}(\bm{\Sigma}_{\varepsilon})_{i,i}\widetilde{\bm{B}}_{i,\cdot}(\widetilde{\bm{B}}_{i,\cdot})^{\top}=1,
\]
since $\sigma_{B,i}=\frac{2}{\sqrt{T}}\sqrt{(\bm{\Sigma}_{\varepsilon})_{i,i}}\big\Vert\widetilde{\bm{B}}_{i,\cdot}\big\Vert_{2}$.
We will establish the upper bound for the parameter 
\[
\tau_{B}^{i}:=\sup_{R\geq0}\left\vert \mathbb{P}\left(\left\vert \bm{K}_{B,i}\right\vert \leq R\right)-\mathbb{P}(\left\vert \mathcal{N}(0,1)\right\vert \leq R)\right\vert .
\]
Then by the Berry-Esseen theorem, we obtain that 
\begin{align*}
\tau_{B}^{i} & \leq\frac{2\sqrt{2}}{\sigma_{B,i}^{3}T^{3/2}}\sum_{t=1}^{T}\mathbb{E}[\big\Vert(\bm{\Sigma}_{\varepsilon}^{1/2})_{i,\cdot}\bm{Z}_{\cdot,t}\bm{V}_{t,\cdot}(\widetilde{\bm{B}}_{i,\cdot})^{\top}\big\Vert_{2}^{3}]\\
 & \leq\frac{2\sqrt{2}}{\sigma_{B,i}^{3}T^{3/2}}\mathbb{E}[|(\bm{\Sigma}_{\varepsilon}^{1/2})_{i,\cdot}\bm{Z}_{\cdot,t}|_{2}^{3}]\max_{1\leq t\leq T}|\bm{V}_{t,\cdot}(\widetilde{\bm{B}}_{i,\cdot})^{\top}|\sum_{t=1}^{T}|\bm{V}_{t,\cdot}(\widetilde{\bm{B}}_{i,\cdot})^{\top}|_{2}^{2}\\
 & =\frac{2\sqrt{2}}{\sigma_{B,i}^{3}T^{3/2}}((\bm{\Sigma}_{\varepsilon})_{i,i})^{3/2}\max_{1\leq t\leq T}|\bm{V}_{t,\cdot}(\widetilde{\bm{B}}_{i,\cdot})^{\top}|\sum_{t=1}^{T}|\bm{V}_{t,\cdot}(\widetilde{\bm{B}}_{i,\cdot})^{\top}|_{2}^{2}\\
 & \lesssim\sqrt{\frac{\log n}{T}},
\end{align*}
where we used the fact that 
\[
\sum_{t=1}^{T}|\bm{V}_{t,\cdot}(\widetilde{\bm{B}}_{i,\cdot})^{\top}|_{2}^{2}=\widetilde{\bm{B}}_{i,\cdot}[\sum_{t=1}^{T}(\bm{V}_{t,\cdot})^{\top}\bm{V}_{t,\cdot}](\widetilde{\bm{B}}_{i,\cdot})^{\top}=\widetilde{\bm{B}}_{i,\cdot}\bm{I}_{r}(\widetilde{\bm{B}}_{i,\cdot})^{\top}=\widetilde{\bm{B}}_{i,\cdot}(\widetilde{\bm{B}}_{i,\cdot})^{\top}=\big\Vert\widetilde{\bm{B}}_{i,\cdot}\big\Vert_{2}^{2},
\]
and 
\[
\mathbb{E}[|(\bm{\Sigma}_{\varepsilon}^{1/2})_{i,\cdot}\bm{Z}_{\cdot,t}|_{2}^{3}]\lesssim\big\Vert(\bm{\Sigma}_{\varepsilon}^{1/2})_{i,\cdot}\big\Vert_{2}^{3}=((\bm{\Sigma}_{\varepsilon})_{i,i})^{3/2},
\]
which follows from the same argument as we proved $\mathbb{E}\left[\big\Vert(\bm{a}\bm{Z}_{\cdot,t})\big\Vert_{2}^{3}\right]\lesssim\left\Vert \bm{a}\right\Vert _{2}^{3}$
in the proof of Lemma \ref{Lemma B two sample prepare}.

Recall that our target is to prove the upper bound of $|\varepsilon_{B,i}|$
and the inequality 
\[
\left\vert \mathbb{P}(|\bm{K}_{B,i}+\varepsilon_{B,i}+\zeta|\leq z_{1-\frac{1}{2}\alpha})-(1-\alpha)\right\vert \lesssim|\varepsilon_{B,i}|+|\zeta|+\tau_{B}^{i}.
\]
Note that 
\[
\mathbb{P}(|\bm{K}_{B,i}|\leq z_{1-\frac{1}{2}\alpha}-|\varepsilon_{B,i}|-|\zeta|)<\mathbb{P}(|\bm{K}_{B,i}+\varepsilon_{B,i}+\zeta|\leq z_{1-\frac{1}{2}\alpha})\leq\mathbb{P}(|\bm{K}_{B,i}|\leq z_{1-\frac{1}{2}\alpha}+|\varepsilon_{B,i}|+|\zeta|).
\]
Then by definition of $\tau_{B}^{i}$, we obtain 
\[
\mathbb{P}(|\mathcal{N}(0,1)|\leq z_{1-\frac{1}{2}\alpha}-|\varepsilon_{B,i}|-|\zeta|)-\tau_{B}^{i}<\mathbb{P}(|\bm{K}_{B,i}+\varepsilon_{B,i}+\zeta|\leq z_{1-\frac{1}{2}\alpha})\leq\mathbb{P}(|\mathcal{N}(0,1)|\leq z_{1-\frac{1}{2}\alpha}+|\varepsilon_{B,i}|+|\zeta|)+\tau_{B}^{i}.
\]
Then using the property of Gaussian density, or by Lemma \ref{Lemma Chi-square two balls},
we obtain that, there exists a constant $c_{G}>0$ such that 
\[
\mathbb{P}(|\mathcal{N}(0,1)|\leq z_{1-\frac{1}{2}\alpha})-c_{G}(|\varepsilon_{B,i}|+|\zeta|)-\tau_{B}^{i}\leq\mathbb{P}(|\bm{K}_{B,i}+\varepsilon_{B,i}+\zeta|\leq z_{1-\frac{1}{2}\alpha})\leq\mathbb{P}(|\mathcal{N}(0,1)|\leq z_{1-\frac{1}{2}\alpha})+c_{G}(|\varepsilon_{B,i}|+|\zeta|)+\tau_{B}^{i}.
\]
Then since $\mathbb{P}(|\mathcal{N}(0,1)|\leq z_{1-\frac{1}{2}\alpha})=1-\alpha$,
we obtain that 
\[
\left\vert \mathbb{P}(|\bm{K}_{B,i}+\varepsilon_{B,i}+\zeta|<\tau_{1-\frac{1}{2}\alpha})-(1-\alpha)\right\vert \lesssim c_{G}(|\varepsilon_{B,i}|+|\zeta|)+\tau_{B}^{i}\lesssim|\varepsilon_{B,i}|+|\zeta|+\tau_{B}^{i}.
\]

\textit{Step 3 -- Prove the upper bound for $|\varepsilon_{B,i}|$.}

By definition, we have that $|\varepsilon_{B,i}|\leq\frac{1}{\sigma_{B,i}}(|\varepsilon_{1}|+|\varepsilon_{2}|+|\varepsilon_{J}|)$.

For $\varepsilon_{1}=(\widehat{\bm{U}}\bm{R}_{U})_{i,\cdot}(\varepsilon_{\bm{\Sigma}}\bm{\Lambda}+\bm{\Lambda}(\varepsilon_{\bm{\Sigma}})^{\top}+\varepsilon_{\bm{\Sigma}}(\varepsilon_{\bm{\Sigma}})^{\top})((\widehat{\bm{U}}\bm{R}_{U})_{i,\cdot})^{\top}$,
since $\big\Vert\varepsilon_{\bm{\Sigma}}\big\Vert_{2}\lesssim\sigma_{r}(\rho^{2}+\rho\sqrt{\frac{r}{n}}\log n)\lesssim\sigma_{r}\lesssim\big\Vert\bm{\Lambda}\big\Vert_{2}$,
we have that 
\begin{align*}
\frac{1}{\sigma_{B,i}}|\varepsilon_{1}| & \lesssim\frac{\sqrt{T}}{\sqrt{(\bm{\Sigma}_{\varepsilon})_{i,i}}\big\Vert\widetilde{\bm{B}}_{i,\cdot}\big\Vert_{2}}\cdot\big\Vert(\widehat{\bm{U}}_{i,\cdot}\bm{R}_{U}-\bm{U}_{i,\cdot})\bm{\Lambda}+\bm{U}_{i,\cdot}\bm{\Lambda}\big\Vert_{2}\cdot\big\Vert\varepsilon_{\bm{\Sigma}}\big\Vert_{2}\cdot\big\Vert(\widehat{\bm{U}}\bm{R}_{U})_{i,\cdot}\big\Vert_{2}\\
 & \overset{\text{(i)}}{\lesssim}\frac{\sqrt{T}}{\sqrt{(\bm{\Sigma}_{\varepsilon})_{i,i}}\big\Vert\bar{\bm{U}}_{i,\cdot}\big\Vert_{2}\sigma_{r}}\cdot(\omega_{i}\sqrt{\frac{r}{n}}\log n\cdot\sigma_{r})\cdot\sigma_{r}(\rho^{2}+\rho\sqrt{\frac{r}{n}}\log n)\cdot\sqrt{r}\\
 & +\frac{\sqrt{T}}{\sqrt{(\bm{\Sigma}_{\varepsilon})_{i,i}}\big\Vert\widetilde{\bm{B}}_{i,\cdot}\big\Vert_{2}}\big\Vert\bm{U}_{i,\cdot}\bm{\Lambda}\big\Vert_{2}\sigma_{r}(\rho^{2}+\rho\sqrt{\frac{r}{n}}\log n)\cdot\big\Vert(\widehat{\bm{U}}_{i,\cdot}\bm{R}_{U}-\bm{U}_{i,\cdot})+\bm{U}_{i,\cdot}\big\Vert_{2}\\
 & \overset{\text{(ii)}}{\lesssim}\frac{\sqrt{T}}{\sqrt{(\bm{\Sigma}_{\varepsilon})_{i,i}}\big\Vert\bar{\bm{U}}_{i,\cdot}\big\Vert_{2}\sigma_{r}}\cdot(\omega_{i}\sqrt{\frac{r}{n}}\log n\cdot\sigma_{r})\cdot\sigma_{r}(\rho^{2}+\rho\sqrt{\frac{r}{n}}\log n)\sqrt{r}\\
 & +\frac{\sqrt{T}}{\sqrt{(\bm{\Sigma}_{\varepsilon})_{i,i}}}\sigma_{r}(\rho^{2}+\rho\sqrt{\frac{r}{n}}\log n)\cdot(\omega_{i}\sqrt{\frac{r}{n}}\log n+\big\Vert\bar{\bm{U}}_{i,\cdot}\big\Vert_{2}),
\end{align*}
where (i) uses the fact that $\big\Vert\widetilde{\bm{B}}_{i,\cdot}\big\Vert_{2}=\big\Vert\bm{U}_{i,\cdot}\bm{\Lambda}\big\Vert_{2}\gtrsim\big\Vert\bm{U}_{i,\cdot}\big\Vert_{2}\sigma_{r}$
and the bound $\big\Vert(\widehat{\bm{U}}_{i,\cdot}\bm{R}_{U}-\bm{U}_{i,\cdot})\bm{\Lambda}\big\Vert_{2}\lesssim\omega_{i}\sqrt{\frac{r}{n}}\log n\cdot\sigma_{r}$
obtained by the decomposition of $(\widehat{\bm{U}}_{i,\cdot}\bm{R}_{U}-\bm{U}_{i,\cdot})$
as we did in proof of Corollary \ref{corollary:error bound B F},
and (ii) uses the fact that $\big\Vert\bm{U}_{i,\cdot}\big\Vert_{2}=\big\Vert\bar{\bm{U}}_{i,\cdot}\big\Vert_{2}$.

For $\varepsilon_{2}=\bm{U}_{i,\cdot}\bm{\Lambda}^{2}((\bm{\bm{\Psi}}_{U})_{i,\cdot})^{\top}+(\bm{\bm{\Psi}}_{U})_{i,\cdot}\bm{\Lambda}^{2}(\bm{U}_{i,\cdot})^{\top}+(\bm{G}_{U}+\bm{\Psi}_{U})_{i,\cdot}(\bm{\Lambda}^{2})((\bm{G}_{U}+\bm{\Psi}_{U})_{i,\cdot})^{\top}$,
we have that 
\begin{align*}
\frac{1}{\sigma_{B,i}}|\varepsilon_{2}| & \lesssim\frac{\sqrt{T}}{2\sqrt{(\bm{\Sigma}_{\varepsilon})_{i,i}}\big\Vert\widetilde{\bm{B}}_{i,\cdot}\big\Vert_{2}}(\big\Vert\widetilde{\bm{B}}_{i,\cdot}\big\Vert_{2}\big\Vert(\bm{\bm{\Psi}}_{U})_{i,\cdot}\bm{\Lambda}\big\Vert_{2}+\big\Vert(\widehat{\bm{U}}_{i,\cdot}\bm{R}_{U}-\bm{U}_{i,\cdot})\bm{\Lambda}\big\Vert_{2}^{2})\\
 & \lesssim\frac{\sqrt{T}\sigma_{r}}{\sqrt{(\bm{\Sigma}_{\varepsilon})_{i,i}}}\left(\omega_{i}\omega\sqrt{\frac{r}{n}}\log^{3/2}n+\left(\rho^{2}+\rho\sqrt{\frac{r}{n}}\log n\right)\big\Vert\bar{\bm{U}}_{i,\cdot}\big\Vert_{2}+\omega_{i}\omega\log n\big\Vert\bar{\bm{U}}\big\Vert_{2,\infty}\right)\\
 & +\frac{\sqrt{T}(\omega_{i}\sqrt{\frac{r}{n}}\log n)^{2}\sigma_{r}}{\sqrt{(\bm{\Sigma}_{\varepsilon})_{i,i}}\big\Vert\bar{\bm{U}}_{i,\cdot}\big\Vert_{2}},
\end{align*}
where we use the decomposition and the upper bound for $(\bm{\bm{\Psi}}_{U})_{i,\cdot}$
in the proof of Theorem \ref{Thm UV 1st approx row-wise error}, and
the bound $\big\Vert(\widehat{\bm{U}}_{i,\cdot}\bm{R}_{U}-\bm{U}_{i,\cdot})\bm{\Lambda}\big\Vert_{2}\lesssim\omega_{i}\sqrt{\frac{r}{n}}\log n\cdot\sigma_{r}$
obtained by the decomposition of $(\widehat{\bm{U}}_{i,\cdot}\bm{R}_{U}-\bm{U}_{i,\cdot})$
as we did in proof of Corollary \ref{corollary:error bound B F}.

For $\varepsilon_{J}=\bm{B}_{i,\cdot}((\bm{J}^{-1})^{\top}\bm{J}^{-1}-\bm{I}_{r})(\bm{B}_{i,\cdot})^{\top}$,
we have that 
\[
\frac{1}{\sigma_{B,i}}|\varepsilon_{J}|\lesssim\frac{\sqrt{T}}{2\sqrt{(\bm{\Sigma}_{\varepsilon})_{i,i}}\big\Vert\bm{B}_{i,\cdot}\big\Vert_{2}}\cdot\big\Vert\bm{B}_{i,\cdot}\big\Vert_{2}^{2}\sqrt{\frac{r+\log n}{T}}\lesssim\frac{\Vert\bm{b}_{i}\Vert_{2}}{\sqrt{(\bm{\Sigma}_{\varepsilon})_{i,i}}}\sqrt{r+\log n}.
\]

\textit{Step 4 -- Prove the upper bound for $|\zeta|$ and assemble all the bounds to get the final result.}

We have that 
\[
\zeta=(((\widehat{\bm{\bm{\Sigma}}}_{\varepsilon}^{\tau})_{i,i})^{-1/2}-((\bm{\Sigma}_{\varepsilon})_{i,i})^{-1/2})\frac{\sqrt{T}}{2\big\Vert\widetilde{\bm{B}}_{i,\cdot}\big\Vert_{2}}(\big\Vert\widetilde{\bm{B}}_{i,\cdot}\big\Vert_{2}^{2}-\big\Vert(\widehat{\bm{U}}\widehat{\bm{\Sigma}})_{i,\cdot}\big\Vert_{2}^{2}).
\]
Note that $|(\widehat{\bm{\bm{\Sigma}}}_{\varepsilon}^{\tau})_{i,i}-(\bm{\Sigma}_{\varepsilon})_{i,i}|\lesssim\epsilon_{N,T}(\bm{\Sigma}_{\varepsilon})_{i,i}$
and $\epsilon_{N,T}\ll1$, so we obtain by the mean value theorem
that 
\[
|((\widehat{\bm{\bm{\Sigma}}}_{\varepsilon}^{\tau})_{i,i})^{-1/2}-((\bm{\Sigma}_{\varepsilon})_{i,i})^{-1/2}|\lesssim((\bm{\Sigma}_{\varepsilon})_{i,i})^{-3/2}\cdotp\epsilon_{N,T}(\bm{\Sigma}_{\varepsilon})_{i,i}=((\bm{\Sigma}_{\varepsilon})_{i,i})^{-1/2}\cdotp\epsilon_{N,T}.
\]
So we obtain that 
\begin{align*}
|\zeta| & \lesssim\epsilon_{N,T}\cdotp\frac{\sqrt{T}}{\sqrt{(\bm{\Sigma}_{\varepsilon})_{i,i}}\big\Vert\widetilde{\bm{B}}_{i,\cdot}\big\Vert_{2}}|\big\Vert\widetilde{\bm{B}}_{i,\cdot}\big\Vert_{2}^{2}-\big\Vert(\widehat{\bm{U}}\widehat{\bm{\Sigma}})_{i,\cdot}\big\Vert_{2}^{2}|\\
 & \lesssim\epsilon_{N,T}\cdotp|\frac{1}{\sigma_{B,i}}\bm{K}_{B,i}|+\epsilon_{N,T}\cdotp\frac{1}{\sigma_{B,i}}(|\varepsilon_{1}|+|\varepsilon_{2}|+|\varepsilon_{J}|).
\end{align*}
Using (G.3) in Lemma 19 of \citet{yan2024entrywise}, the leading
order term is bounded by 
\begin{align*}
\epsilon_{N,T}\cdotp\frac{1}{\sigma_{B,i}}\bm{K}_{B,i} & \lesssim\epsilon_{N,T}\cdotp\frac{\sqrt{T}}{\sqrt{(\bm{\Sigma}_{\varepsilon})_{i,i}}\big\Vert\widetilde{\bm{B}}_{i,\cdot}\big\Vert_{2}}\frac{1}{\sqrt{T}}\big\Vert(\bm{\Sigma}_{\varepsilon}^{1/2})_{i,\cdot}\big\Vert_{2}\big\Vert\widetilde{\bm{B}}_{i,\cdot}\big\Vert_{2}\sqrt{r}\sqrt{1+r+\log n}\\
 & \lesssim\epsilon_{N,T}\cdotp r\sqrt{\log n},
\end{align*}
where we use the fact that $\big\Vert(\bm{\Sigma}_{\varepsilon}^{1/2})_{i,\cdot}\big\Vert_{2}^{2}=(\bm{\Sigma}_{\varepsilon})_{i,i}$.

Combining the above bounds, we obtain that 
\[
\left\vert \mathbb{P}(\frac{1}{\widehat{\sigma}_{B,i}}(\big\Vert\widetilde{\bm{B}}_{i,\cdot}\big\Vert_{2}^{2}-\big\Vert(\widehat{\bm{U}}\widehat{\bm{\Sigma}})_{i,\cdot}\big\Vert_{2}^{2})\in[-z_{1-\frac{1}{2}\alpha},z_{1-\frac{1}{2}\alpha}])-(1-\alpha)\right\vert \lesssim|\varepsilon_{B,i}|+|\zeta|+\tau_{B}^{i},
\]
where $\tau_{B}^{i}\lesssim\sqrt{\frac{\log n}{T}}$, and since $\epsilon_{N,T}\ll1$,
we have that 
\begin{align*}
|\varepsilon_{B,i}|+|\zeta| & \lesssim\frac{1}{\sigma_{B,i}}(|\varepsilon_{1}|+|\varepsilon_{2}|+|\varepsilon_{J}|)+\epsilon_{N,T}\cdotp|\frac{1}{\sigma_{B,i}}\bm{K}_{B,i}|+\epsilon_{N,T}\cdotp\frac{1}{\sigma_{B,i}}(|\varepsilon_{1}|+|\varepsilon_{2}|+|\varepsilon_{J}|)\\
 & \lesssim\frac{1}{\sigma_{B,i}}(|\varepsilon_{1}|+|\varepsilon_{2}|+|\varepsilon_{J}|)+\epsilon_{N,T}\cdotp|\frac{1}{\sigma_{B,i}}\bm{K}_{B,i}|\\
 & \lesssim\frac{\sqrt{T}\sigma_{r}}{\sqrt{(\bm{\Sigma}_{\varepsilon})_{i,i}}}\left(\omega_{i}\omega\sqrt{\frac{r}{n}}\log^{3/2}n+\left(\rho^{2}+\rho\sqrt{\frac{r}{n}}\log n\right)\big\Vert\bar{\bm{U}}_{i,\cdot}\big\Vert_{2}+\omega_{i}\omega\log n\big\Vert\bar{\bm{U}}\big\Vert_{2,\infty}\right)\\
 & +\frac{\sqrt{T}(\omega_{i}\sqrt{\frac{r}{n}}\log n)^{2}\sigma_{r}}{\sqrt{(\bm{\Sigma}_{\varepsilon})_{i,i}}\big\Vert\bar{\bm{U}}_{i,\cdot}\big\Vert_{2}}+\frac{\sqrt{T}}{\sqrt{(\bm{\Sigma}_{\varepsilon})_{i,i}}\big\Vert\bar{\bm{U}}_{i,\cdot}\big\Vert_{2}\sigma_{r}}\cdot\omega_{i}\sqrt{\frac{r}{n}}\log n\cdot\sigma_{r}\cdot\sigma_{r}(\rho^{2}+\rho\sqrt{\frac{r}{n}}\log n)\sqrt{r}\\
 & +\frac{\sqrt{T}}{\sqrt{(\bm{\Sigma}_{\varepsilon})_{i,i}}}\sigma_{r}(\rho^{2}+\rho\sqrt{\frac{r}{n}}\log n)\cdot(\omega_{i}\sqrt{\frac{r}{n}}\log n+\big\Vert\bar{\bm{U}}_{i,\cdot}\big\Vert_{2})+\frac{\Vert\bm{b}_{i}\Vert_{2}}{\sqrt{(\bm{\Sigma}_{\varepsilon})_{i,i}}}\sqrt{r+\log n}+\epsilon_{N,T}\cdotp r\sqrt{\log n}.
\end{align*}
The desired result follows from the above bound.

\section{Technical lemmas}

We collect some technical lemmas in this section. First, we establish
the results on the quantile of Chi-square distribution. Recall that,
we denote $\chi^{2}(n)$ as a random variable follows Chi-square distribution
with degree of freedom $n$, and denote by $\chi_{q}^{2}(n)$ its
$q$-quantile.

\begin{lemma} \label{Lemma Chi-square quantile order}For any fixed
$q\in(0,1)$, the $q$-quantile of Chi-square distribution satisfies
that $\chi_{q}^{2}(n)\asymp n$. \end{lemma}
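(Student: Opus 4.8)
The plan is to sandwich $\chi_q^2(n)$ between two constant multiples of $n$ by bounding the left and right tails of $\chi^2(n)$. Recall that $\chi^2(n)=\sum_{i=1}^{n}X_i^2$ with $X_1,\dots,X_n$ i.i.d.\ standard Gaussian, so $\mathbb{E}[\chi^2(n)]=n$ and $\mathbb{E}[e^{\lambda\chi^2(n)}]=(1-2\lambda)^{-n/2}$ for $\lambda<1/2$. Throughout I will use that $\chi_q^2(n)=\inf\{x:F_n(x)\ge q\}$, where $F_n(x)=\mathbb{P}(\chi^2(n)\le x)$ is nondecreasing, so that $F_n(x)\ge q$ forces $\chi_q^2(n)\le x$ while $F_n(x)<q$ forces $\chi_q^2(n)\ge x$.

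For the upper bound $\chi_q^2(n)\lesssim n$, I would apply Markov's inequality: $\mathbb{P}(\chi^2(n)>Cn)\le\mathbb{E}[\chi^2(n)]/(Cn)=1/C$ for any constant $C>0$. Taking $C=2/(1-q)$ gives $F_n(Cn)\ge 1-\tfrac{1-q}{2}>q$, hence $\chi_q^2(n)\le Cn$; since $q$ is fixed, $C$ is an absolute constant and this holds for every $n$.

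For the lower bound $\chi_q^2(n)\gtrsim n$, I would use a Chernoff bound on the left tail: for any $t>0$,
\[
\mathbb{P}\bigl(\chi^2(n)\le cn\bigr)\le e^{ctn}\,\mathbb{E}\bigl[e^{-t\chi^2(n)}\bigr]=\Bigl(e^{ct}(1+2t)^{-1/2}\Bigr)^{n}.
\]
Choosing $c=1/2$ and $t=1/2$ makes the base $\kappa:=e^{1/4}/\sqrt{2}\in(0,1)$, so $F_n(n/2)\le\kappa^{n}\to 0$; thus there is an integer $N$ (depending only on $q$) with $\kappa^{n}<q$, and hence $\chi_q^2(n)\ge n/2$, for all $n>N$. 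Combining the two displays gives $n/2\le\chi_q^2(n)\le \tfrac{2}{1-q}\,n$ for all $n>N$, i.e.\ $\chi_q^2(n)\asymp n$.

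The argument is essentially routine, so there is no serious obstacle; the only things to get right are the elementary link between a one-sided tail bound and a quantile recorded in the first paragraph, and the observation that the $\asymp$ convention used in this paper requires the two-sided inequality only for all sufficiently large $n$, so the $q$-dependent threshold $N$ appearing in the Chernoff step is immaterial. (Alternatively, one may invoke the weak law of large numbers, $\chi^2(n)/n\to 1$ in probability, together with the continuity of the degenerate limit, to conclude $\chi_q^2(n)/n\to 1$ directly.)
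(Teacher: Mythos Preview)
Your proof is correct, and it takes a genuinely different route from the paper. The paper invokes the Berry--Esseen theorem to show that $(S_n-n)/\sqrt{2n}$ is uniformly close to a standard normal, and from this derives the sharper statement $\bigl|\chi_q^2(n)-(n+z_q\sqrt{2n})\bigr|\le C_0$ for a constant $C_0$ depending only on $q$; the relation $\chi_q^2(n)\asymp n$ then follows as a corollary. You instead sandwich the quantile directly with a Markov bound on the right tail and a Chernoff bound on the left tail, avoiding any normal approximation entirely.

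Your argument is more elementary and fully sufficient for the lemma as stated: the only place the paper uses this result is to assert $\chi_{1-\alpha}^2(|S|-r)\asymp |S|$, for which the crude two-sided bound is exactly what is needed. The paper's approach yields the finer expansion $\chi_q^2(n)=n+z_q\sqrt{2n}+O(1)$, which could be useful if one wanted quantitative control of the quantile at the $\sqrt{n}$ scale, but that refinement is never exploited downstream. Your handling of the $\asymp$ convention (allowing the bound only for $n>N$ with a $q$-dependent $N$) matches the paper's notation in Section~\ref{subsec:Paper-organization-notation}, so there is no issue there.
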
 
\begin{proof}
Consider $n$ i.i.d.~standard Gaussian $\mathcal{N}(0,1)$ random
variables $X_{1},X_{2},\bm{\ldots},X_{n}$. Define $S_{n}:=\sum_{i=1}^{n}X_{i}^{2}$,
which follows $\chi^{2}(n)$ distribution. Using the Berry-Esseen
theorem \citep[Theorem 2.1.3]{vershynin2016high}, we have that 
\[
\sup_{t\in\mathbb{R}}\left|\mathbb{P}(\frac{S_{n}-n}{\sqrt{2n}}\leq t)-\int_{-\infty}^{t}\frac{1}{\sqrt{2\pi}}e^{-\frac{1}{2}x^{2}}dx\right|<c\frac{1}{\sqrt{n}},
\]
where $c$ is a constant independent with $n$. We denote by $z_{q}$
the $q$-quantile of standard Gaussian $\mathcal{N}(0,1)$. Letting
$t=z_{q}$ in the above inequality, we obtain that 
\[
q-c\frac{1}{\sqrt{n}}<\mathbb{P}(S_{n}\leq n+z_{q}\sqrt{2n})<q+c\frac{1}{\sqrt{n}}.
\]
Note that $\chi_{q}^{2}(n)$ is the $q$-quantile of $S_{n}$. By
the monotonicity of cdf, we obtain that 
\[
n+z_{q-c\frac{1}{\sqrt{n}}}\sqrt{2n}\leq\chi_{q}^{2}(n)\leq n+z_{q+c\frac{1}{\sqrt{n}}}\sqrt{2n}.
\]
Then, we prove that $z_{q-c\frac{1}{\sqrt{n}}}$ and $z_{q+c\frac{1}{\sqrt{n}}}$
are close to $z_{q}$.

On the one hand, when $c\frac{1}{\sqrt{n}}<\frac{1-q}{2}$, i.e.,
$n>\frac{4c^{2}}{(1-q)^{2}}$, we have that 
\begin{align*}
c\frac{1}{\sqrt{n}} & =(q+c\frac{1}{\sqrt{n}})-q=\int_{z_{q}}^{z_{q+c\frac{1}{\sqrt{n}}}}\frac{1}{\sqrt{2\pi}}e^{-\frac{1}{2}x^{2}}dx\\
 & \overset{\text{(i)}}{=}(z_{q+c\frac{1}{\sqrt{n}}}-z_{q})\frac{1}{\sqrt{2\pi}}e^{-\frac{1}{2}x^{2}}|_{x=\vartheta z_{q+c\frac{1}{\sqrt{n}}}+(1-\vartheta)z_{q}}\\
 & \geq(z_{q+c\frac{1}{\sqrt{n}}}-z_{q})\frac{1}{\sqrt{2\pi}}e^{-\frac{1}{2}x^{2}}|_{x=z_{(1+q)/2}},
\end{align*}
where (i) uses the mean value theorem for integral. On the other hand,
similarly we have that 
\[
c\frac{1}{\sqrt{n}}\geq(z_{q}-z_{q-c\frac{1}{\sqrt{n}}})\frac{1}{\sqrt{2\pi}}e^{-\frac{1}{2}x^{2}}|_{x=z_{q}}.
\]
So, there exists a constant $c_{1}$ which does not depend on $n$,
such that $\max(z_{q+c\frac{1}{\sqrt{n}}}-z_{q},z_{q}-z_{q-c\frac{1}{\sqrt{n}}})\leq c_{1}\frac{1}{\sqrt{n}}$
holds for sufficiently large $n$. As a result, we obtain 
\[
n+z_{q}\sqrt{2n}-\sqrt{2}c_{1}=n+(z_{q}-c_{1}\frac{1}{\sqrt{n}})\sqrt{2n}\leq\chi_{q}^{2}(n)\leq n+(z_{q}+c_{1}\frac{1}{\sqrt{n}})\sqrt{2n}\leq n+z_{q}\sqrt{2n}+\sqrt{2}c_{1}.
\]
For simplicity of notations, we write the above results for $\chi_{q}^{2}(n)$
as follows: for sufficiently large $n$, it holds 
\[
\left|\chi_{q}^{2}(n)-(n+C_{1}\sqrt{n})\right|<C_{0},
\]
where $C_{1},C_{0}>0$ are constants independent with $n$. So we
obtain that $\chi_{q}^{2}(n)\asymp n$. 
\end{proof}
Next, we consider another results on Chi-square distribution, which
can be regarded as the probability that a Gaussian vector lies in
a spherical shell.

\begin{lemma} \label{Lemma Chi-square two balls}There exists a constant
$c_{B}$ which does not depend on the dimension $m$, such that, for
any $\varepsilon$ satisfying $|\varepsilon|<R$, it holds 
\[
\sup_{R\geq0}\left|\mathbb{P}(\chi^{2}(m)\leq(R+\varepsilon)^{2})-\mathbb{P}(\chi^{2}(m)\leq R^{2})\right|\leq c_{B}|\varepsilon|.
\]

\end{lemma}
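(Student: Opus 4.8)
The plan is to reduce the statement to a uniform (in $m$) bound on the density of the $\chi(m)$ distribution, i.e.\ the law of the square root of a $\chi^{2}(m)$ variable. Writing $Z\sim\mathcal N(0,\bm I_m)$, we have $\mathbb P(\chi^{2}(m)\le t^{2})=\mathbb P(\Vert Z\Vert_2\le t)=\int_{0}^{t}f_m(r)\,\mathrm dr$ for every $t\ge0$, where
\[
f_m(r)=\frac{r^{m-1}e^{-r^{2}/2}}{2^{m/2-1}\Gamma(m/2)},\qquad r\ge0,
\]
is the density of $\Vert Z\Vert_2$. Since the hypothesis $|\varepsilon|<R$ forces $R+\varepsilon>0$, the quantity to be controlled equals $\bigl|\int_{R}^{R+\varepsilon}f_m(r)\,\mathrm dr\bigr|\le|\varepsilon|\cdot\sup_{r\ge0}f_m(r)$, irrespective of the sign of $\varepsilon$. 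Hence it suffices to show $\sup_{r\ge0}f_m(r)\le c_B$ for a constant $c_B$ independent of $m$, and that same $c_B$ will serve in the lemma.

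Next I would locate the peak of $f_m$. Differentiating, $\mathrm d\log f_m(r)/\mathrm dr=(m-1)/r-r$, so for $m\ge2$ the density is unimodal with maximum at $r_\star=\sqrt{m-1}$, while for $m=1$ it is the half-Gaussian density $\sqrt{2/\pi}\,e^{-r^{2}/2}$, decreasing in $r$, with maximum $\sqrt{2/\pi}$. Thus for $m\ge2$,
\[
\sup_{r\ge0}f_m(r)=f_m\bigl(\sqrt{m-1}\bigr)=\frac{(m-1)^{(m-1)/2}e^{-(m-1)/2}}{2^{m/2-1}\Gamma(m/2)}.
\]
To bound this uniformly I would invoke the Stirling lower bound $\Gamma(x)\ge\sqrt{2\pi}\,x^{x-1/2}e^{-x}$ with $x=m/2$, which after collecting the powers of $2$ gives $2^{m/2-1}\Gamma(m/2)\ge 2^{-1/2}\sqrt{2\pi}\,m^{m/2-1/2}e^{-m/2}$. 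Substituting and simplifying the exponentials,
\[
\sup_{r\ge0}f_m(r)\le\frac{(m-1)^{(m-1)/2}e^{1/2}\,2^{1/2}}{\sqrt{2\pi}\,m^{m/2-1/2}}=\frac{e^{1/2}}{\sqrt{\pi}}\Bigl(1-\tfrac1m\Bigr)^{(m-1)/2}\le\frac{e^{1/2}}{\sqrt{\pi}},
\]
where I use $m^{m/2-1/2}=m^{(m-1)/2}$ and $(1-1/m)^{(m-1)/2}\le1$. Since also $\sqrt{2/\pi}\le e^{1/2}/\sqrt{\pi}$, the bound $\sup_{r\ge0}f_m(r)\le e^{1/2}/\sqrt{\pi}$ holds for all $m\ge1$, and one may take $c_B=e^{1/2}/\sqrt{\pi}$ (or, more crudely, $c_B=1$), completing the proof.

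The only step requiring genuine care is the uniform control of the peak of $f_m$: the $m$-dependent powers must be carried through the Stirling estimate so that the final bound is genuinely $m$-free, and the degenerate small-$m$ cases ($m=1$, and to a lesser extent $m=2$ where $r_\star=1$) must be checked separately since the unimodality computation and the Stirling bound are least comfortable there. Everything else — the identification $\mathbb P(\chi^{2}(m)\le t^{2})=\int_0^t f_m$, and the elementary estimate $\bigl|\int_R^{R+\varepsilon}f_m\bigr|\le|\varepsilon|\sup f_m$ — is routine.
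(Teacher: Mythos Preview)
Your proof is correct and takes a genuinely different route from the paper. The paper simply invokes a classical result of Sazonov (1972), which bounds the standard Gaussian measure of a spherical shell $B_m(R+\varepsilon)\setminus B_m(R)$ (for balls with arbitrary center) by a constant times $|\varepsilon|$, uniformly in the dimension; the identity $\mathbb P(\chi^2(m)\le R^2)=\mathbb P(\mathcal N(0,\bm I_m)\in B_m(R))$ then finishes the argument in one line. You instead reprove the centered-ball case of Sazonov's inequality from scratch: you write out the $\chi(m)$ density explicitly, locate its mode at $\sqrt{m-1}$, and control the peak value via the Stirling lower bound $\Gamma(x)\ge\sqrt{2\pi}\,x^{x-1/2}e^{-x}$, arriving at the explicit constant $c_B=e^{1/2}/\sqrt{\pi}<1$. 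Your approach is more hands-on and fully self-contained, and it yields a concrete numerical constant rather than an unspecified one hidden in a citation; the paper's approach is terser and inherits the slightly stronger statement (arbitrary centers) that is actually needed later in Lemma~\ref{Lemma Berry-Esseen balls}. For the present lemma, however, your argument is entirely adequate.
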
 
\begin{proof}
Denote by $B_{m}(R)$ the ball in $\mathbb{R}^{m}$ centered at the
origin with the radius $R$. By (2.3) in \citet{sazonov1972CLTbound}
and the comments after it, we obtain that $|\mathbb{P}(\mathcal{N}(0,\bm{I}_{m})\in B_{m}(R+\varepsilon))-\mathbb{P}(\mathcal{N}(0,\bm{I}_{m})\in B_{m}(R))|\leq c_{B}|\varepsilon|$,
where $c_{B}>0$ is a constant and $c_{B}$ does not depend on $m$,
$R$, and $\varepsilon$. So the desired result follows from the fact
that $\mathbb{P}(\chi^{2}(m)\leq R^{2})=\mathbb{P}(\mathcal{N}(0,\bm{I}_{m})\in B(R))$. 
\end{proof}
Then we consider a special case of multi-dimensional Berry-Esseen
theorem.

\begin{lemma} \label{Lemma Berry-Esseen balls}Let $\xi_{1},\xi_{2},\ldots,\xi_{n}$
be independent $d$-dimensional random vectors with zero means, and
let $W=\sum_{k=1}^{n}\xi_{k}$. Suppose that $cov(W)=\bm{I}_{d}$.
Then we have that 
\[
\sup_{R\geq0}\left\vert \mathbb{P}\left(\left\Vert W\right\Vert _{2}\leq R\right)-\mathbb{P}(\chi^{2}(d)\leq R^{2})\right\vert \leq c_{G}\sum_{k=1}^{n}\mathbb{E}\left[\left\Vert \xi_{k}\right\Vert _{2}^{3}\right],
\]
where $c_{G}$ is a constant and $c_{G}$ does not depend on $n$
and $d$. \end{lemma}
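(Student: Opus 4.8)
The plan is to recognize the event $\{\|W\|_2 \le R\}$ as membership of $W$ in the centered Euclidean ball $B_d(R) := \{x \in \mathbb{R}^d : \|x\|_2 \le R\}$, and, dually, to note that if $Z \sim \mathcal{N}(0,\bm{I}_d)$ then $\|Z\|_2^2 \sim \chi^2(d)$, so that $\mathbb{P}(\chi^2(d) \le R^2) = \mathbb{P}(Z \in B_d(R))$. With these two identifications the quantity to be controlled is exactly
\[
\sup_{R \ge 0} \left| \mathbb{P}(W \in B_d(R)) - \mathbb{P}(\mathcal{N}(0,\bm{I}_d) \in B_d(R)) \right|,
\]
i.e.\ the uniform deviation between the law of the normalized sum $W = \sum_{k=1}^n \xi_k$ and the standard Gaussian over the (small) family of balls centered at the origin. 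Since the $\xi_k$ are independent, mean zero, and $\sum_{k=1}^n \mathrm{cov}(\xi_k) = \mathrm{cov}(W) = \bm{I}_d$, this is precisely the setting of the (non-identically-distributed, triangular-array) multivariate Berry--Esseen theorem.

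First I would invoke the version of the Berry--Esseen bound specialized to Euclidean balls. The crucial point --- and the reason one cannot simply quote the general convex-set estimate of \citet{raivc2019multivariate} (Theorem~1.1) that is used for the generic convex classes $\mathscr{C}^r$ elsewhere in the paper --- is that over the much smaller class of centered balls one obtains a bound with an \emph{absolute} constant, free of any $d^{1/4}$ dimension factor; see \citet{sazonov1972CLTbound} (the same reference used in Lemma~\ref{Lemma Chi-square two balls}) and the classical literature on the central limit theorem for Euclidean balls. Applying that result to $\xi_1,\dots,\xi_n$ with $\sum_k \mathrm{cov}(\xi_k) = \bm{I}_d$ yields directly
\[
\sup_{R \ge 0} \left| \mathbb{P}(\|W\|_2 \le R) - \mathbb{P}(\chi^2(d) \le R^2) \right| \le c_G \sum_{k=1}^n \mathbb{E}\big[\|\xi_k\|_2^3\big],
\]
with $c_G$ independent of $n$ and $d$, which is the claim. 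Only a few routine verifications remain: each $B_d(R)$ is convex (degenerating to the single point $\{0\}$ when $R=0$), so the cited theorem applies uniformly in $R$; the Lyapunov-type sum $\sum_k \mathbb{E}\|\xi_k\|_2^3$ may be assumed finite, else the bound is vacuous; and the endpoint $R=0$ is handled by monotone passage, $\mathbb{P}(\|W\|_2 = 0) = \lim_{\epsilon\downarrow 0}\mathbb{P}(\|W\|_2 \le \epsilon)$, combined with the estimate at $R=\epsilon$.

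The main obstacle is conceptual rather than computational: it lies in securing the correct dimension-free input. A naive route through the general convex-set Berry--Esseen inequality would introduce a spurious factor of $d^{1/4}$ (here $d = |S|-r$ in Lemma~\ref{Lemma true Chi-square stat} and $d=r$ in Lemmas~\ref{Lemma beta true Chi-square stat} and \ref{Lemma B two sample prepare}), which would propagate into and weaken the SNR and sample-size conditions of Theorems~\ref{Thm factor test plug-in Chi-sq}, \ref{Thm beta structure test}, and \ref{Thm two-sample test B}. Hence the only real content is the appeal to the sharper ball-specific bound --- whose own proof rests on smoothing the indicator of a ball and exploiting the special radial (Fourier/Bessel) structure of balls, which I would take as a black box --- after which the reduction described above is immediate.
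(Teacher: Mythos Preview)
Your proposal is correct in identifying the essential content: the lemma is a multivariate Berry--Esseen bound specialized to centered Euclidean balls, and the whole point is to secure a constant $c_G$ free of the dimension $d$, which the general convex-set version (Theorem~1.1 of \citet{raivc2019multivariate}) would not give. Your identification $\mathbb{P}(\chi^2(d)\le R^2)=\mathbb{P}(\mathcal{N}(0,\bm{I}_d)\in B_d(R))$ and your discussion of why the $d^{1/4}$ factor must be avoided are exactly right.

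The paper's route differs slightly from yours in how it obtains the dimension-free constant. You propose to cite a ready-made Berry--Esseen-for-balls theorem (from Sazonov or the classical ball-CLT literature) as a single black box. The paper instead goes through the more general estimate (1.4) in Theorem~1.3 of \citet{raivc2019multivariate}, which for an arbitrary class $\mathscr{A}$ gives a bound of the form $(1+\gamma^*(\mathscr{A}|\rho))\sum_k\mathbb{E}\|\xi_k\|_2^3$, where $\gamma^*$ is a Gaussian-perimeter parameter; it then specializes to the class of all balls (Example~1.2 in the same reference), for which $\gamma^*(\mathscr{A}|\rho)=\gamma^*(\mathscr{A})$, and finally invokes (2.3) of \citet{sazonov1972CLTbound} only to show that the Gaussian perimeter of balls is bounded by an absolute constant independent of $d$. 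So in the paper Sazonov supplies the dimension-free shell bound (the same input underlying Lemma~\ref{Lemma Chi-square two balls}), while the Berry--Esseen machinery itself comes from Rai\v{c}. Your approach is shorter if one is willing to take the ball-CLT result as given; the paper's decomposition is a bit more transparent about \emph{where} the dimension independence enters, and reuses references already in play.
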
 
\begin{proof}
The proof is based on the results in \citet{raivc2019multivariate}
and we will directly adopt the notations therein to conduct the proof.
Note that $\mathbb{P}(\chi^{2}(d)\leq R^{2})=\mathbb{P}(\mathcal{N}(0,\bm{I}_{d})\in B(R))$,
where $B_{d}(R)$ is the ball in $\mathbb{R}^{d}$ centered at the
origin with the radius $R$. According to (1.4) in Theorem 1.3 and
Example 1.2, we have that 
\begin{align}
\sup_{R\geq0}\left\vert \mathbb{P}\left(\left\Vert W\right\Vert _{2}\leq R\right)-\mathbb{P}(\chi^{2}(d)\leq R^{2})\right\vert  & =\sup_{R\geq0}\left\vert \mathbb{P}\left(W\in B_{d}(R)\right)-\mathbb{P}(\mathcal{N}(0,\bm{I}_{d})\in B_{d}(R))\right\vert \nonumber \\
 & \lesssim(1+\gamma^{*}(\mathscr{A}|\rho))\sum_{k=1}^{n}\mathbb{E}\left[\left\Vert \xi_{k}\right\Vert _{2}^{3}\right].\label{Berry-Esseen from Raivc}
\end{align}
Here, for $\gamma^{*}(\mathscr{A}|\rho)$ defined in \citet{raivc2019multivariate},
following Example 1.2 therein for the class of all balls, $\mathscr{A}$
is the set of all balls in $\mathbb{R}^{d}$, and $\rho_{A}=\delta_{A}$.
Then we obtain $A^{t|\rho}=\{x;\rho_{A}(x)\leq t\}=\{x;\delta_{A}(x)\leq t\}=A^{t}$,
and thus by definition we get $\gamma^{*}(\mathscr{A}|\rho)=\gamma^{*}(\mathscr{A})$.
Since $\mathscr{A}$ is the set of all balls in $\mathbb{R}^{d}$,
we obtain by definition of $\gamma^{*}(\mathscr{A})$ that 
\[
\gamma^{*}(\mathscr{A})=\sup_{v\in\mathbb{R}^{d},R\geq0,\varepsilon\neq0}\frac{1}{|\varepsilon|}\left\vert \mathbb{P}(\mathcal{N}(0,\bm{I}_{d})\in B_{d}(R+\varepsilon;v))-\mathbb{P}(\mathcal{N}(0,\bm{I}_{d})\in B_{d}(R;v))\right\vert ,
\]
where $B_{d}(R;v)$ is the ball in $\mathbb{R}^{d}$ centered at $v$
with the radius $R$. By (2.3) in \citet{sazonov1972CLTbound} and
the comments after it, we obtain that, there exists a constant $c_{B}$
which does not depend on the dimension $d$, such that $\gamma^{*}(\mathscr{A})\leq c_{B}$.
Then the desired result follows from (\ref{Berry-Esseen from Raivc})
and the fact that $\gamma^{*}(\mathscr{A}|\rho)=\gamma^{*}(\mathscr{A})\leq c_{B}$. 
\end{proof}

\bibliographystyle{apalike}
\bibliography{newadded_PCA_factor,bibfileNonconvex}

\end{document}